\documentclass[11pt]{article}
\usepackage{lmodern} 
\usepackage[T1]{fontenc}%

\usepackage{color}
\usepackage[colorlinks=true,linkcolor=blue,citecolor=blue]{hyperref}
\usepackage{amsmath, amssymb, amsthm}
\usepackage{mathtools}
\usepackage[margin=1in]{geometry}
\usepackage{graphics}
\usepackage{pifont}
\usepackage{tikz}
\usepackage{bbm, bm}
\DeclareMathOperator*{\argmax}{arg\,max}
\DeclareMathOperator*{\argmin}{arg\,min}
\usetikzlibrary{arrows.meta}
\usepackage{environ}
\usepackage{framed}
\usepackage{url}
\usepackage[linesnumbered,ruled,vlined]{algorithm2e}
\usepackage[noend]{algpseudocode}
\usepackage[labelfont=bf]{caption}
\usepackage{framed}
\usepackage[framemethod=tikz]{mdframed}
\usepackage{appendix}
\usepackage{graphicx}
\usepackage[textsize=tiny]{todonotes}
\usepackage{tcolorbox}
\allowdisplaybreaks[1]
\usepackage{nicefrac}
\usepackage{thm-restate}
\usepackage[noabbrev,capitalize,nameinlink]{cleveref}
\crefname{equation}{}{}

\usepackage{subcaption}

\DontPrintSemicolon
\SetKw{KwAnd}{and}
\SetProcNameSty{textsc}
\SetFuncSty{textsc}

\newcommand\remove[1]{}

\newtheorem{lemma}{Lemma}[section]
\newtheorem*{lemma*}{Lemma}
\newtheorem{theorem}[lemma]{Theorem}
\newtheorem{corollary}[lemma]{Corollary}
\newtheorem*{corollary*}{Corollary}
\newtheorem{claim}[lemma]{Claim}

\newtheorem{itheorem}[lemma]{Informal Theorem}
\newtheorem*{theorem*}{Theorem}
\newtheorem*{inducthyp*}{Inductive Hypothesis}
\newtheorem*{definition*}{Definition}
\newtheorem{definition}[lemma]{Definition}
\newtheorem{rem}[lemma]{Remark}
\newtheorem{assump}[lemma]{Assumption}

\newtheorem*{rem*}{Remark}

\newcommand\R{\mathbb{R}}

\newcommand\E{\mathbb{E}}

\newcommand{\eps}{\varepsilon}
\renewcommand{\O}{\widetilde{O}}
\newcommand{\pe}{\preceq}

\renewcommand{\l}{\langle}
\renewcommand{\r}{\rangle}

\newcommand{\assign}{\leftarrow}

\renewcommand{\forall}{\mathrm{\text{ for all }}}

\newcommand{\g}{\nabla}

\newcommand{\vone}{\mathbf{1}}

\newcommand{\cost}{\mathsf{cost}}
\newcommand{\vol}{\mathrm{vol}}
\newcommand{\econg}{\mathrm{econg}}
\newcommand{\vcong}{\mathrm{vcong}}
\newcommand{\length}{\mathrm{length}}
\newcommand{\proj}{\mathrm{proj}}

\newcommand{\pconcat}{\oplus}

\newcommand{\bc}{\bm{c}}
\newcommand{\bd}{\boldsymbol{d}}
\renewcommand{\bf}{\bm{f}}
\newcommand{\bg}{\boldsymbol{g}}
\newcommand{\bp}{\boldsymbol{p}}

\newcommand{\bw}{\boldsymbol{w}}
\newcommand{\bss}{\boldsymbol{s}}
\newcommand{\bu}{\boldsymbol{u}}
\newcommand{\bv}{\boldsymbol{v}}
\newcommand{\bz}{\boldsymbol{z}}
\newcommand{\bx}{\boldsymbol{x}}
\newcommand{\by}{\boldsymbol{y}}
\newcommand{\bell}{\boldsymbol{\ell}}

\newcommand{\bDelta}{\boldsymbol{\Delta}}
\newcommand{\blambda}{\boldsymbol{\lambda}}
\newcommand{\diag}{\mathrm{diag}}
\renewcommand{\root}{\mathsf{root}}

\newcommand{\Enc}{\textsc{Enc}}

\newcommand{\DynamicBranchingChain}{\textsc{DynamicBranchingChain}}
\newcommand{\DynamicSparseCore}{\textsc{DynamicSparseCore}}
\newcommand{\DynamicCore}{\textsc{DynamicCore}}
\newcommand{\Rebuild}{\textsc{Rebuild}}
\newcommand{\Initialize}{\textsc{Initialize}}

\renewcommand{\hat}{\widehat}
\renewcommand{\tilde}{\widetilde}

\DeclareFontFamily{U}{mathb}{\hyphenchar\font45}
\DeclareFontShape{U}{mathb}{m}{n}{<5> <6> <7> <8> <9> <10> gen * mathb
<10.95> mathb10 <12> <14.4> <17.28> <20.74> <24.88> mathb12}{}
\DeclareSymbolFont{mathb}{U}{mathb}{m}{n}
\DeclareMathSymbol{\rcirclearrow}{\mathbin}{mathb}{'367}

\newcommand{\wt}{\widetilde}
\newcommand{\wh}{\widehat}

\newcommand{\barDelta}{\overline{\bDelta}}
\newcommand{\barbd}{\overline{\bd}}
\renewcommand{\bar}{\overline}

\newcommand{\init}{\mathrm{(init)}}

\foreach \x in {A,...,Z}{%
	\expandafter\xdef\csname m\x\endcsname{\noexpand\mathbf{\x}}
}

\foreach \x in {A,...,Z}{%
	\expandafter\xdef\csname c\x\endcsname{\noexpand\mathcal{\x}}
}

\newif\ifrandom
\randomtrue

\newcommand{\defeq}{\stackrel{\mathrm{\scriptscriptstyle def}}{=}}

\newcommand{\poly}{{\mathrm{poly}}}
\newcommand{\Cong}{{\mathsf{cong}}}

\newcommand{\str}{{\mathsf{str}}}

\newcommand{\wstr}{{\wt{\str}}}

\newcommand{\len}{{\mathrm{len}}}
\newcommand{\wlen}{{\wt{\len}}}

\newcommand{\rev}{\mathsf{rev}}

\newcommand{\lvl}{\mathsf{level}}
\newcommand{\prev}{\mathsf{prev}}

\newcommand{\bb}{\boldsymbol{b}}
\newcommand{\ba}{\boldsymbol{a}}
\newcommand{\bPi}{\boldsymbol{\Pi}}
\newcommand{\last}{\mathsf{last}}
\newcommand{\bzeta}{\bar{\zeta}}

\renewcommand{\l}{\langle}
\renewcommand{\r}{\rangle}

\newcommand{\norm}[1]{\left\lVert#1\right\rVert}
\newcommand{\Abs}[1]{\left|#1\right|}
\newcommand{\Set}[2]{\left\{#1 ~\middle\vert~ #2\right\}}
\newcommand{\setof}[1]{\left\{#1\right\}}

\newcommand{\todolater}[1]{}

\newcommand{\urgent}[1]{\textbf{\color{red}[URGENT: FIX ME: #1]}}
\newcommand{\sushant}[1]{\textbf{\color{orange}[Sushant: #1]}}
\newcommand{\rasmus}[1]{\textbf{\color{orange}[Rasmus: #1]}}
\newcommand{\rklow}[1]{\textbf{\color{orange}[Rasmus: #1]}}
\newcommand{\rp}[1]{\textbf{\color{blue}[Richard: #1]}}
\newcommand{\mprobst}[1]{\textbf{\color{olive}[Max: #1]}}
\newcommand{\yang}[1]{\textbf{\color{red}[Yang: #1]}}
\newcommand{\li}[1]{\textbf{\color{orange}[Li: #1]}}
\renewcommand{\todo}[1]{\textbf{\color{violet}[TODO: #1]}}

\renewcommand{\urgent}[1]{}
\renewcommand{\sushant}[1]{}
\renewcommand{\rasmus}[1]{}
\renewcommand{\rklow}[1]{}
\renewcommand{\rp}[1]{}
\renewcommand{\mprobst}[1]{}
\renewcommand{\yang}[1]{}
\renewcommand{\li}[1]{}
\renewcommand{\todo}[1]{}

\renewcommand{\SS}{\mathcal{S}}
\newcommand{\abs}[1]{\left|#1\right|}

\DeclareUnicodeCharacter{2113}{$\ell$}

\interfootnotelinepenalty=10000

\usepackage[backend=biber, isbn=false, style=alphabetic, backref=true, doi=false, url=false, maxcitenames=10, mincitenames=5, maxalphanames=10, maxbibnames=10, minbibnames=5, minalphanames=5, defernumbers=true, sortlocale=en_US]{biblatex}
\addbibresource{refs.bib}

\begin{document}
\pagenumbering{gobble}

\title{Maximum Flow and Minimum-Cost Flow
in Almost-Linear Time}

\author{
Li Chen\thanks{Li Chen was supported by NSF Grant CCF-2106444.}\\ Georgia Tech\\ lichen@gatech.edu
\and
Rasmus Kyng\thanks{The research leading to these results has received funding from the grant ``Algorithms and complexity for high-accuracy flows and convex optimization'' (no. 200021 204787) of the Swiss National Science Foundation.}\\ ETH Zurich \\ kyng@inf.ethz.ch 
\and
Yang P. Liu\thanks{Yang P. Liu was supported by NSF CAREER Award CCF-1844855
and NSF Grant CCF-1955039.} \\ Stanford University \\ yangpliu@stanford.edu
\and
Richard Peng \\ University of Waterloo
\footnote{Part of this work was done while at the Georgia Institute of Technology.}\\
y5peng@uwaterloo.ca
\and
Maximilian Probst Gutenberg\footnotemark[2]\\ ETH Zurich\\ maxprobst@ethz.ch
\and
Sushant Sachdeva\thanks{Sushant Sachdeva's research is supported by an NSERC (Natural Sciences and Engineering Research Council of Canada) Discovery Grant.
} \\ University of Toronto \\ sachdeva@cs.toronto.edu
}

\maketitle

\begin{abstract}
We give an algorithm that computes exact maximum flows and minimum-cost flows on directed graphs with $m$ edges and polynomially bounded integral demands, costs, and capacities in $m^{1+o(1)}$ time. Our algorithm builds the flow through a sequence of $m^{1+o(1)}$ approximate undirected minimum-ratio cycles, each of which is computed and processed in amortized $m^{o(1)}$ time using a new dynamic graph data structure.

Our framework extends to algorithms running in $m^{1+o(1)}$ time for computing flows that minimize general edge-separable convex functions to high accuracy. This gives almost-linear time algorithms for several problems including entropy-regularized optimal transport, matrix scaling, $p$-norm flows, and $p$-norm isotonic regression on arbitrary directed acyclic graphs.
\end{abstract}

\newpage
\setcounter{tocdepth}{2}
\tableofcontents
\listoffigures
\listoftables

\normalsize
\pagebreak
\pagenumbering{arabic}

\section{Introduction}
\label{sec:intro}

The maximum flow problem and its generalization, the minimum-cost flow problem, are classic combinatorial graph problems that find numerous applications in engineering and scientific computing.
These problems have been studied extensively over the last seven decades, starting from the work of Dantzig and Ford-Fulkerson, and several important algorithmic problems can be reduced to min-cost flows (e.g. max-weight bipartite matching, min-cut, Gomory-Hu trees).
The origin of numerous significant algorithmic developments such as the simplex method, graph sparsification, and link-cut trees, can be traced back to seeking faster algorithms for max-flow and min-cost flow.

Formally, we are given a directed graph $G=(V, E)$ with $|V|=n$ vertices and $|E|=m$ edges,  upper/lower edge capacities $\bu^+, \bu^- \in \R^E,$ edge costs $\bc \in \R^E,$ and vertex demands $\bd \in \R^V$ with $\sum_{v \in V} \bd_v = 0$. Our goal is to find a flow $\bf \in \R^E$ of minimum cost $\bc^{\top} \bf$ that respects edge capacities $\bu^-_e \le \bf_e \le \bu_e^+$ and satisfies vertex demands $\bd.$
The vertex demand constraints are succinctly captured as 
$\mB^\top \bf = \bd,$ where $\mB \in \R^{E \times V}$ is the edge-vertex incidence matrix defined as $\mB_{((a,b),v)}$ is 1 if $v=a,$ $-1$ if $v=b,$ and $0$ otherwise. 
To compare running times, we assume that all $\bu^+_e, \bu^-_e, \bc_e$ and $\bd_v$ are integral, and $\abs{\bu^+_e}, \abs{\bu^-_e} \le U$ and    $\abs{\bc_e} \le C.$

There has been extensive  work on max-flow and min-cost flow.
While we defer a longer discussion of the related works to~\cref{sec:maxflowprevious}, a brief discussion will help place our work in context.
Starting from the first pseudo-polynomial time algorithm by Dantzig~\cite{D51}
that ran in $O(mn^2U)$ time, the approach
to designing faster flow algorithms was primarily combinatorial,
working with various adaptations of augmenting paths, cycle cancelling, blocking flows, and capacity/cost scaling.
A long line of work led to a running time of $\O(m\min\{m^{\nicefrac{1}{2}}, n^{\nicefrac{2}{3}}\}\log U)$~\cite{HK73, K73, ET75, GR98} for max-flow, and $\O(mn\log U)$~\cite{GT87} for min-cost flow. These bounds stood for decades.

In their breakthrough work on solving Laplacian systems and computing electrical flows, Spielman and Teng~\cite{ST04} introduced the idea of combining continuous optimization primitives with graph-theoretic constructions for designing flow algorithms. This is often referred to as the \emph{Laplacian Paradigm}.
Daitch and Spielman~\cite{DS08} demonstrated the power of this paradigm by combining Interior Point methods (IPMs) with fast Laplacian systems solvers to achieve an $\O(m^{1.5}\log^2 U)$ time algorithm for min-cost flow, the first progress in two decades.
A key advantage of IPMs is that they reduce flow problems on directed graphs to problems on undirected graphs, which are
easier to work with.
The Laplacian paradigm achieved several successes, including $\O(m\epsilon^{-1})$ time $(1+\epsilon)$-approximate undirected max-flow and multicommodity flow~\cite{CKMST11, KLOS14, S13, P16, S17}, 
and an $m^{\nicefrac{4}{3}+o(1)}U^{\nicefrac{1}{3}}$ time algorithm for
bipartite matching and
unit capacity max-flow~\cite{M13, M16, LS20, KLS20, AMV20},
and $pm^{1+o(1)}$ time unweighted $p$-norm minimizing flow for large $p$~\cite{KPSW19, AS20}.

Efficient graph data-structures have played a key role in the development of faster algorithms for flow problems, e.g. dynamic trees~\cite{ST83}.
Recently, the development of special-purpose data-structures for efficient implementation of IPM-based algorithms has led to progress on min-cost flow for some cases -- including an $\O(m\log U+n^{1.5}\log^2 U)$ time algorithm~\cite{BLSS20,BLNPSSW20,BLLSSSW21}, an $\O(n \log U)$ time algorithm for planar graphs~\cite{DLY21, DGGLPSY22}, and small improvements for general graphs, resulting in an $\O(m^{3/2 - 1/58}\log^{O(1)}U)$ time algorithm for min-cost flow~\cite{BGS21, GLP21:arxiv, AMV21:arxiv, BGJLLPS21:arxiv}.
Yet, despite this progress, the best running time bounds in general graphs are far from linear.
We give the first almost-linear time algorithm for min-cost flow, achieving the optimal running time up to subpolynomial factors.
\begin{theorem}
\label{thm:main}
There is an algorithm that, on a graph $G = (V, E)$ with $m$ edges, vertex demands, upper/lower edge capacities, and edge costs, all integral with capacities bounded by $U$ and costs bounded by $C$,
computes an exact min-cost flow in $m^{1+o(1)} \log U \log C$ time with high probability.
\end{theorem}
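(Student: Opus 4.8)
The plan is to reduce exact min-cost flow to a sequence of $m^{1+o(1)}$ approximate \emph{undirected minimum-ratio cycle} problems via an interior-point method (IPM), and to answer that sequence with a dynamic data structure that, after each IPM update, returns an $m^{o(1)}$-approximate minimum-ratio circulation in amortized $m^{o(1)}$ time. Multiplying the number of iterations by the per-iteration cost, and threading through the polylogarithmic-in-$UC$ factors from the bit complexity, then gives the stated $m^{1+o(1)}\log U\log C$ bound; the randomness (hence the ``with high probability'') lives entirely in the data structure.

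\emph{Step 1: the interior-point method.} First I would write min-cost flow as the linear program $\min\bc^\top\bf$ over $\mB^\top\bf=\bd$, $\bu^-\le\bf\le\bu^+$, and after standard preprocessing (adding auxiliary arcs to obtain an interior feasible start with initial duality gap $\poly(m)\cdot U\cdot C$, and eliminating degenerate edges) run a potential-reduction IPM: drive a potential $\Phi(\bf)$ --- a suitably regularized combination of the logarithmic barriers of the box constraints and an $\Theta(m)$ multiple of $\log(\bc^\top\bf-\mathrm{OPT})$ --- to its minimum. At an iterate $\bf$ set the gradient $\bg=\nabla\Phi(\bf)$ and ``lengths'' $\bell$ to reciprocals of the residual capacities, so that a step that is $\ell_\infty$-bounded with respect to $\bell$ stays strictly interior. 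The structural lemma to establish is: if a circulation $\bDelta$ (i.e.\ $\mB^\top\bDelta=0$) has ratio $\bg^\top\bDelta/\norm{\mathrm{diag}(\bell)\bDelta}_1$ within a factor $\alpha$ of the most negative achievable ratio, then an appropriately sized update $\bf\leftarrow\bf+\eta\bDelta$ decreases $\Phi$ by at least $m^{-o(1)}/\alpha$. Since $\Phi$ starts $O(m\log(mUC))$ above its minimum, and one may stop and round once the duality gap drops below $1$, this needs only $m^{1+o(1)}\cdot\alpha\cdot\log(UC)$ iterations, and $\alpha=m^{o(1)}$ suffices.

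\emph{Step 2: the dynamic minimum-ratio-cycle data structure.} The heart of the proof is a data structure that processes the above stream of at most $m^{1+o(1)}$ IPM updates --- between answers the IPM reports which coordinates of $\bg,\bell$ changed and by how much --- and, when queried, returns an $m^{o(1)}$-approximate minimum-ratio circulation implicitly (as an off-forest edge together with its tree path, or a short combination thereof) in amortized $m^{o(1)}$ time. I would build it by recursion on a chain of graphs: from $G$, compute a low-stretch spanning forest, route each off-forest edge along its forest path, and use the fact that \emph{some} near-optimal min-ratio cycle consists of a single off-forest edge plus its tree path, up to an $m^{o(1)}$ factor coming from the stretch; contracting the forest components and sparsifying the resulting core (preserving the $\ell_1$ min-ratio objective) yields a graph with an $m^{o(1)}$-factor fewer vertices. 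Iterating $O(\log_k m)$ times with per-level reduction $k=m^{o(1)}$ reaches a bottom graph small enough to solve by brute force, and the answer is lifted back up the chain. To support the IPM's updates I make every level dynamic: coordinate changes propagate upward through the chain, but because each IPM step is short the multiplicative changes to $\bell$ are small, so on average only $m^{o(1)}$ edges per level ``move'' per step; whenever accumulated changes in a subtree cross a threshold that subtree is rebuilt, and a charging argument --- together with a dynamic low-stretch-forest routine tolerating batched insertions and deletions --- bounds the amortized rebuild work by $m^{o(1)}$ per step.

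\emph{Step 3: rounding, adaptivity, and assembly.} When the IPM halts with a fractional flow of duality gap below $1$, round it to an exact optimum by cancelling flow along cycles in the subgraph of edges carrying non-integral flow: this subgraph has minimum degree at least two (flow conservation at an integral vertex forbids a single incident non-integral edge), so it always contains such a cycle, and $\O(m)$ cancellations produce an integral feasible flow of no-larger cost --- hence of cost exactly $\mathrm{OPT}$, since it is integral and lies in $[\mathrm{OPT},\mathrm{OPT}+1)$. Two points need care in assembling the pieces. First, the IPM is adaptive: each iterate depends on the randomized cycle the data structure returned, so the data structure must stay correct under an adaptive sequence of queries; I would lean on the IPM's \emph{robustness} --- it only uses gradients and cycles up to the fixed $m^{o(1)}$ approximation and is stable under small $\ell_\infty$ perturbations of $\bg$ --- which bounds what such an adversary can exploit, and keep each level's guarantees deterministic wherever possible. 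Second, with $m^{1+o(1)}\log(UC)$ iterations at $m^{o(1)}$ amortized cost each (plus $m^{1+o(1)}$ for preprocessing and rounding), the total is $m^{1+o(1)}\log U\log C$. I expect essentially all the difficulty to reside in Step 2 --- making the recursive chain of low-stretch-forest and sparsified-core reductions support the IPM's update stream in amortized $m^{o(1)}$ time, in particular controlling how batched length changes cascade through the levels and bounding the total rebuild work, while preserving the $m^{o(1)}$ approximation against an adaptive IPM; the convergence analysis of Step 1 and the rounding of Step 3 are comparatively routine.
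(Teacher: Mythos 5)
Your proposal captures the paper's high-level skeleton — a potential-reduction IPM reducing to a stream of approximate undirected min-ratio cycle problems, solved by a recursive low-stretch-forest-plus-sparsification data structure — but it has a genuine gap precisely where you predict the difficulty lies (Step 2, adaptivity), and a few smaller errors worth flagging.

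The gap: your treatment of adaptivity is not an argument, and the paper's resolution is not what you describe. You write that you would ``lean on the IPM's robustness'' and ``keep each level's guarantees deterministic wherever possible,'' but robustness to $m^{o(1)}$-approximate answers does not protect a \emph{randomized} tree/spanner sampler from an adaptive update stream, and the paper's data structure is inherently randomized (it samples from a multiplicative-weights distribution over low-stretch forests and subsamples expanders to build spanners). The paper's actual escape hatch is structural, not robustness-based: it identifies a \emph{hidden witness circulation} $\bf^* - \bf^{(t)}$, which the IPM guarantees is always a good solution, and a hidden \emph{width} vector $\bw^{(t)}$ upper-bounding its lengths that is provably stable on edges not explicitly touched (\cref{def:hiddenStableFlowChasing}). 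Each level's forest forces stretch $1$ on updated edges, so the only way the data structure can fail to certify the witness is if $\|\bw^{(\prev_i^{(t)})}\|_1$ at some level $i$ greatly exceeds the current $\|\bw^{(t)}\|_1$; this is handled by an explicit \emph{rebuilding game} (\cref{sec:rebuilding}) whose player strategy bounds total rebuild cost despite the widths being hidden. Related to this, the edge-sparsification step is not merely an ``$\ell_1$-objective-preserving sparsifier'': removing edges can destroy the min-ratio cycle entirely, so the paper must maintain an \emph{explicit embedding} of every removed edge into the spanner (\cref{thm:spanner}), and the returned cycle is allowed to be the removed edge concatenated with its embedding path. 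Your proposal does not mention this dichotomy, and without it the recursion does not go through.

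Two smaller issues. First, you use the standard logarithmic barrier; the paper deliberately uses the power barrier $x^{-\alpha}$ with $\alpha = 1/(1000\log mU)$ because with the log barrier the lengths and gradients need not stay quasipolynomially bounded (there is no dual certificate maintained to pin them down), and the data structure requires this boundedness (\cref{def:hiddenStableFlowChasing}, item 4). Second, your accounting of the $\log U\log C$ factor does not work: $m^{1+o(1)}\log(UC)$ iterations at $m^{o(1)}$ cost each gives $m^{1+o(1)}(\log U+\log C)$, not $m^{1+o(1)}\log U\log C$, and moreover with $U$ or $C$ superpolynomial the widths would be superquasipolynomial and break the data structure. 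The paper instead runs the IPM only on instances with $U, C = \poly(m)$, obtaining $m^{1+o(1)}$ time there, and obtains the general case by classic cost- and capacity-scaling reductions (\cref{appendix:scaling}, \cref{lemma:polyScaling}), which are what produce the $\log U\log C$ product.

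Your Step 3 (rounding a duality gap $<1$ to an exact integral optimum by cancelling cycles in the non-integral-flow subgraph, which has minimum degree $\ge 2$) is correct and is a legitimately more elementary route than the paper's, which invokes an Isolation-Lemma-based rounding (\cref{lemma:finalpoint}). The paper's choice is likely for convenient reuse of a citable lemma; yours avoids the extra randomness at the cost of an $\O(m)$ cycle-cancelling pass, which is harmless.
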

Our algorithm implements a new IPM that solves min-cost flow via a sequence of slowly-changing undirected min-ratio cycle subproblems.
We exploit randomized tree-embeddings to design new data-structures to efficiently maintain approximate solutions to these subproblems.
 
A direct reduction from max-flow to min-cost flow gives us an algorithm for max-flow with only a $\log U$ dependence on the capacity range $U$.\footnote{
$s,t$ max-flow can be reduced to min cost circulation by adding a new edge $t\to s$ with lower capacity 0 and upper capacity $mU.$ Set all demands to be 0. The cost of the $t \to s$ edge is $-1.$ All other edges have zero cost.}
\begin{corollary}
\label{cor:main}
There is an algorithm that on a graph G with $m$ edges with integral  capacities in $[1, U]$ computes a maximum flow between two vertices in time $m^{1+o(1)}\log U$ with high probability.
\end{corollary}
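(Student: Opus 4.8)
The plan is to prove \cref{cor:main} by the classical reduction from $s$-$t$ maximum flow to minimum-cost circulation, and then invoke \cref{thm:main}. Given a graph $G = (V,E)$ with $m$ edges, integral capacities in $[1,U]$, and designated vertices $s,t$, I would build an auxiliary instance $G'$ on the same vertex set: keep every edge of $E$ with its original capacities and assign it cost $0$; add a single new arc $e^\star = (t,s)$ with lower capacity $0$, upper capacity $mU$, and cost $-1$; and set the demand of every vertex to $0$. Then $G'$ has $m+1$ edges, all capacities integral and bounded by $U' \defeq mU$, and all costs integral and bounded by $C' \defeq 1$, so \cref{thm:main} applies to $G'$.

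Next I would verify that a minimum-cost circulation in $G'$ encodes a maximum $s$-$t$ flow in $G$. Any feasible circulation $\bf'$ in $G'$ (one with $(\mB')^{\top}\bf' = \bm{0}$ and $\bf'$ within the capacity bounds of $G'$) restricts on $E$ to a flow obeying the original capacities that, by flow conservation at every vertex other than $s$ and $t$, routes exactly $\bf'_{e^\star}$ units from $s$ to $t$; conversely, any feasible $s$-$t$ flow of value $F$ in $G$ extends to a feasible circulation in $G'$ by setting $\bf'_{e^\star} = F$, which lies in $[0,mU]$ since the value of any $s$-$t$ flow is at most $\sum_{e \text{ leaving } s}\bu^+_e \le mU$. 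The cost of such a circulation equals $(\bc')^{\top}\bf' = -\bf'_{e^\star} = -F$, so minimizing cost over circulations in $G'$ is exactly maximizing the $s$-$t$ flow value in $G$, and the optimal flow is read off as the restriction to $E$ of the optimal circulation in $O(m)$ additional time.

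Finally I would bound the running time. \cref{thm:main} computes the minimum-cost circulation in $G'$ in $(m+1)^{1+o(1)}\log U'\log C'$ time with high probability. Here $\log C' = O(1)$ and $\log U' = \log(mU) = O(\log m + \log U)$, and since $\log m = m^{o(1)}$ the product is $m^{1+o(1)} + m^{1+o(1)}\log U = m^{1+o(1)}\log U$ (reading $\log U$ as $\max\{\log U,1\}$), which is the claimed bound.

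There is essentially no hard step here; the reduction is standard. The only points that need a moment of care are checking that the upper capacity $mU$ on the back arc $e^\star$ is large enough that it never constrains a maximum flow, and observing that inflating the capacity bound from $U$ to $mU$ only changes the running time by an $m^{o(1)}$ factor, which is already absorbed into the $m^{1+o(1)}$ term.
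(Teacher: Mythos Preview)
Your proposal is correct and follows exactly the reduction the paper sketches in its footnote: add a back arc $t\to s$ with capacity $mU$ and cost $-1$, set all demands to zero, and invoke \cref{thm:main}. Your write-up is simply a more detailed version of the same argument, including the explicit verification that circulations in $G'$ correspond to $s$-$t$ flows in $G$ and the observation that $\log(mU)$ is absorbed into $m^{1+o(1)}\log U$.
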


By classic capacity scaling techniques \cite{G85, GT88, AGOT92}, it suffices to work with graphs with $U, C = \poly(m)$ to show \cref{thm:main,cor:main}. For completeness, we include our version of the reductions in \cref{appendix:scaling}, as we could not find a readily citable version.

\subsection{Applications}

Our result in \Cref{thm:main} has a wide range of applications. By standard reductions, it gives the first $m^{1+o(1)}$ time algorithm for the bipartite matching problem and $m^{1+o(1)}\log U\log C$ time algorithms for its generalizations, including the worker assignment and optimal transport problems.

In directed graphs with possibly negative edge weights, assuming integral weights bounded by $W$ in absolute value, we obtain the first almost-linear time algorithm to compute single-source shortest paths and to detect a negative cycle,  running in $m^{1 + o(1)}\log{W}$ time (see \cref{sec:applications} for a reduction).
In an independent work, Bernstein, Nanongkai, and Wulff-Nilsen~\cite{BNW22} claim the first  $m\cdot \poly(\log m) \log{W}$ time algorithm for this problem.

Using recent reductions from various connectivity problems to max-flow, we also obtain the first $m^{1+o(1)}$ time algorithms for various such problems, most prominently to compute vertex connectivity and Gomory-Hu trees in undirected, unweighted graphs, and $(1+\eps)$-approximate Gomory-Hu trees in undirected weighted graphs. We also obtain the fastest current algorithm to find the global min-cut in a directed graph. Finally, we obtain the first almost linear time algorithms to compute approximate sparsest cuts in directed graphs.  
We defer the discussion of these results and precise statements to \Cref{sec:applications}.

Additionally, we extend our algorithm to compute flows that minimize general edge-separable convex objectives. This allows us to solve regularized versions of optimal transport (equivalently, matrix scaling), as well as $p$-norm flow problems and $p$-norm isotonic regression for all $p \in [1,\infty]$. We state an informal version of our main result \cref{thm:general} on general convex flows.
\begin{itheorem}
Consider a graph $G$ with demands $\bd$, and an edge-separable convex cost function $\cost(f) = \sum_e \cost_e(f_e)$ for ``computationally efficient'' edge costs $\cost_e$. Then in $m^{1+o(1)}$ time, we can compute a (fractional) flow $\bf$ that routes demands $\bd$ and $\cost(\bf) \le \cost(\bf^*) + \exp(-\log^C m)$ for any constant $C>0$, where $\bf^*$ minimizes $\cost(\bf^*)$ over flows with demands $\bd$.
\end{itheorem}
We remark that the optimal solution $\bf^*$ to the above convex flow problem can be non-integral, whereas in the case of max-flow and min-cost flow with integral demands/capacities, there exists an integral optimal flow. 

\subsection{Key Technical Contributions}
\label{subsec:keyalgo}

Towards proving our results, we make several algorithmic contributions. We informally describe the key pieces here, and present a more detailed overview in \cref{sec:overview}.

Our first contribution is a new potential reduction IPM for min-cost flow, inspired by~\cite{K84}, that reduces min-cost flow to a sequence of $m^{1+o(1)}$ slowly-changing instances of undirected minimum-ratio cycle. 
Each instance of undirected min-ratio cycle is specified by an undirected graph where every edge $e$ is assigned a positive length $\bell_e$ and a signed gradient $\bg_e,$ and the goal is to find a circulation $\bc \in \R^E,$ i.e.  $\bc$ satisfies $\mB^\top\bc = 0,$ with the smallest ratio $\bg^{\top} \bc / \|\mL\bc\|_1$, where $\mL = \mathrm{diag}(\bell)$ is the diagonal length matrix.
 Note that the graph is undirected in the sense that each edge can be traversed in either direction, and has the same length in either direction, however, the contribution of the edge gradient changes sign depending on the direction that the edge is traversed in.

Below is an informal statement summarizing the IPM guarantees proven in \cref{sec:ipm}.
\begin{itheorem}[$\ell_1$ IPM Algorithm]
\label{thm:ipm:informal}
We give an IPM algorithm that reduces solving min-cost flow exactly to sequentially solving $m^{1+o(1)}$ instances of undirected min-ratio cycle, each up to an $m^{o(1)}$ approximation. Further, the resulting problem instances are ``stable'', i.e. they satisfy,
1) the direction from the current flow to the (unknown) optimal flow  is a good enough solution for each of the instances, and,
2) the length and gradient input parameters to the instances change only for an amortized $m^{o(1)}$ edges every iteration.
\end{itheorem}
The standard IPM approach reduces min-cost flow to solving $\O(\sqrt{m})$ instances of electrical flow, which is an $\ell_2$  minimization problem, to constant accuracy.
At the cost of solving a larger number of resulting subproblems, our algorithm offers several advantages --  undirected min-ratio cycle is an $\ell_1$ minimization problem which is hopefully simpler (e.g. note that the optimal solution must be a simple cycle) and we can afford a large $m^{o(1)}$ approximation factor in the subproblems.
Most analogous to our approach is an early interior point method by \cite{WZ92}\footnote{We thank an attentive reader for making us aware of this connection.} which solved minimum cost flow using (exact) $\ell_1$ min-ratio cycle subproblems.
Their subproblems, however, do not satisfy the stability guarantees that are essential for our approach to quickly solving the subproblems approximately.
Our IPM is robust to updates with much worse approximation factors than those required in the the recent works on \emph{robust interior point methods} (\cite{CLS19} and many later works)
and establishes a different notion of stability w.r.t. gradients, lengths, and solution witnesses. This perspective may be of independent interest.

In contrast to most IPMs that work with the log barrier, our IPM uses a power barrier which aggressively penalizes constraints that are very close to being violated, more so than the usual log barrier. This ensures polylogarithmic bit-complexity throughout our algorithm.

Since a large approximation suffices, one can use a probabilistic low stretch spanning tree $T$ \cite{alon1995graph,AN19:journal} computed with respect to the lengths $\bell$ and use a fundamental tree cycle to find an $\O(1)$ approximate solution in time $\O(m)$ (see \cref{overview:mmcOblivious}). However, the changes to gradient and lengths by the IPM due to the flow updates during the IPM iterations forces us to compute a new probabilistic low stretch spanning tree $T'$ with respect to the new edge lengths. But computing a new tree in time $\Omega(m)$ per iteration results in much too large a runtime.

Our approach instead rebuilds only parts of the probabilistic low-stretch spanning tree after each IPM iteration to adapt to the changes in lengths.
To implement this, we design a data structure which maintains a recursive sequence of instances of the min-ratio cycle problem on graphs with fewer vertices and fewer edges. These smaller instances give worse approximate solutions, but are cheaper to maintain.
We use a $j$-tree style approach \cite{M10} where we interleave vertex reduction by partial embeddings into trees with edge reductions via spanners, and exploit the stability of the IPM. However, using a $j$-tree as in \cite{M10} na\"{i}vely still requires $m^{1+o(1)}$ time per instance.
Our second contribution is to push this approach much further, to give a randomized data structure 
that can return $m^{o(1)}$ approximate solutions to \emph{all} $m^{1+o(1)}$ undirected min-ratio cycle instances generated by the IPM in  $m^{1+o(1)}$ total time.
Our approach leads to a strong form of a dynamic vertex sparsifier (in the spirit of \cite{CGHPS20}).
The stability of the instances generated by our IPM algorithm is essential to achieve low amortized time per instance. 
\begin{itheorem}[Hidden Stable-Flow Chasing. \cref{thm:MMCHiddenStableFlow}]
We design a randomized data structure for approximately solving a sequence of ``stable'' (as defined in \cref{thm:ipm:informal}) undirected min-ratio cycle instances. 
The data structure maintains a collection of $m^{o(1)}$ spanning trees and supports the following operations with high probability in amortized $m^{o(1)}$ time: 1) Return an $m^{o(1)}$-approximate 
min-ratio cycle (implicitly represented as the union of $m^{o(1)}$ off-tree edges and tree paths on one of the maintained trees), 2) route a circulation along such a cycle 3) insert/delete edge $e,$ or update $\bg_e$ and $\bell_e,$ and 4) identify edges that have accumulated significant flow.
\end{itheorem}

To achieve efficient edge reduction over the entire sequence of subproblems, we give an algorithm that can efficiently maintain a spanner of a given graph (a sparse subgraph that can embed the original graph using short paths) with explicit embeddings under edge deletions/insertions and vertex splits.
Removing edges can \emph{completely} destroy the min-ratio cycles in the graph.
However, in that case, we can find a good approximate min-ratio cycle using the removed edges along with their explicit spanner embeddings.
This spanner is our third key contribution.

\begin{itheorem}[Dynamic Spanner w/ Embeddings. \cref{thm:spanner}]
We give a randomized data-structure that for an unweighted, undirected graph $G$ undergoing edge updates (insertions/deletions/vertex splits), maintains a subgraph $H$ with $\O(n)$ edges, along with an explicit path embedding of every $e \in G$ into $H$ of length $m^{o(1)}.$ 
The amortized number of edge changes in $H$ is $m^{o(1)}$ for every edge update.
Moreover, the set of edges that are embed into a fixed edge $e \in H$ is decremental for all edges $e$, except for an amortized set of $m^{o(1)}$ edges per update.

This algorithm can be implemented efficiently.
\end{itheorem}
By designing a spanner which changes very little under input graph modifications including edge insertions/deletions and vertex splits, we make it possible to dynamically combine edge and vertex sparsification very efficiently, even in a recursive construction.

Finally, note that our data-structures for hidden stable-flow chasing and spanner maintenance are utilized to efficiently implementing the $\ell_1$ IPM algorithm. 
Thus, the subsequent undirected min-ratio cycle instances can change depending on the approximately optimal cycles returned by our algorithm. In the terminology of dynamic graph algorithms, the sequence of undirected min-ratio cycle problems we need to solve is not oblivious (to the answers returned by the algorithm).
This adaptivity creates significant additional challenges for the data-structures that need addressing.

\subsection{Paper Organization}
\label{sec:organization}

The remainder of the paper is organized as follows. In \cref{sec:overview} we elaborate on each major piece of our algorithm: the $\ell_1$-IPM based on undirected minimum-ratio cycles, the construction of the data structure for maintaining undirected minimum-ratio cycles for ``stable'' update sequences, and a spanner with explicit path embeddings in dynamic graphs.
In \cref{sec:prelim} we give the preliminaries.

The algorithm to obtain our main result (\cref{thm:main}), the min-cost flow algorithm, is given on pages \pageref{sec:ipm}-\pageref{lastPageOfMaxFlow} in Sections \ref{sec:ipm}-\ref{sec:combine}, with some omitted proofs in \cref{sec:proofs}. The rest of the paper addresses generalization to convex costs, connections to the broader flow literature, and applications.

In \cref{sec:ipm} we give an iterative method which shows that a minimum cost flow can be computed to high accuracy in $m^{1+o(1)}$ iterations, each of which augments by a $m^{o(1)}$-approximate undirected minimum-ratio cycle. In \cref{sec:spanner} we construct our dynamic spanner with path embeddings. The goal of \cref{sec:jtree,sec:routing,sec:rebuilding} is to show our main data structure (\cref{thm:MMCHiddenStableFlow}) for maintaining undirected minimum-ratio cycles. \cref{sec:jtree} sets up the framework for describing ``stable'' update sequences, and describes the main data structure components. \cref{sec:routing} formally constructs the data structure modulo a technical issue, which we resolve by introducing and solving the \emph{rebuilding game} in \cref{sec:rebuilding}. In \cref{sec:combine} we combine all the pieces we have developed to give a min-cost flow algorithm running in time $m^{1+o(1)}$. 

In the last part of the paper, \cref{sec:general}, we extend the IPM analysis to handle general edge-separable convex, nonlinear objectives, such as normed flows, isotonic regression, entropy-regularized optimal transport, and matrix scaling.

The appendix contains an overview of previous max-flow and min-cost flow approaches in \cref{sec:maxflowprevious}, omitted proofs in \cref{sec:proofs}, a proof of capacity scaling for min-cost flows in \cref{appendix:scaling}, and an extensive description of applications of our algorithms in \cref{sec:applications}.

\section{Overview}
\label{sec:overview}
In this section, we give a technical overview of the key pieces developed in this paper.
\cref{overview:ipm} describes an optimization method based on interior point methods that reduces min-cost flow to a sequence of $m^{1+o(1)}$ undirected minimum-ratio cycle computations.
In particular, we reduce the problem to computing approximate min-ratio cycles on a slowly changing graph.
This can be naturally formulated as a data structure problem of maintaining min-ratio cycles approximately on a dynamic graph.

We build a data structure for solving this dynamic min-ratio cycle problem and solve it with $m^{o(1)}$ amortized time per cycle update for our IPM, giving an overall running time of $m^{1+o(1)}$.
\cref{overview:mmcOblivious} gives an overview of our data structure for this dynamic min-ratio cycle problem, with pointers to the rest of the overview which provides a more in-depth picture of the construction.
The data structure creates a recursive hierarchy of graphs with fewer and fewer vertices and edges.
In \cref{overview:core} we describe how to reduce the number of vertices, before describing the overall recursive data structure in \cref{overview:dssetup}.
Na\"{i}vely, the resulting data structure works only against oblivious adversaries where updates and queries to the data structure are fixed beforehand.
We cannot utilize it directly because the optimization routine updates the dynamic graph based on past outputs from the data structure.
Therefore, the cycles output by the data structure may not be good enough to make progress.
\cref{overview:routing} discusses the interaction between the optimization routine and the data structure when we directly apply it.
It turns out one can leverage properties of the interaction and adapt the data structure for the optimization routine.
\cref{overview:game} presents an online algorithm that manipulates the data structure so that it always outputs cycles that are good enough to make progress in the optimization routine.
Finally, the overview ends with \cref{overview:spanner} which gives an outline of our dynamic spanner data structure.
We use this spanner to reduce the number of edges at each level of our recursive hierarchy, one of the main algorithmic elements of our data structure.

\subsection{Computing Min-Cost Flows via Undirected Min-Ratio Cycles}
\label{overview:ipm}
The goal of this section is to describe an optimization method which computes a min-cost flow on a graph $G = (V, E)$ in $m^{1+o(1)}$ computations of $m^{o(1)}$-approximate min-ratio cycles:
\begin{align}
    \min_{\mB^\top\bDelta = 0} \frac{\bg^\top\bDelta}{\|\mL\bDelta\|_1}
    \label{eq:mmc}
\end{align}
for gradient $\bg \in \R^E$ and lengths $\mL = \diag(\bell)$ for $\bell \in \R_{>0}^E$. Note that the value of this objective is negative, as $-\bDelta$ is a circulation if $\bDelta$ is.

Towards this, we work with the linear-algebraic setup of the min-cost flow problem:
\begin{align}
    \bf^* \in \argmin_{\substack{\mB^\top\bf = \bd \\ \bu^-_e \le \bf_e \le \bu^+_e \forall e \in E}} \bc^\top \bf \label{eq:mincostflow}
\end{align}
for demands $\bd \in \R^E$, lower and upper capacities $\bu^-, \bu^+ \in \R^E$, and cost vector $\bc \in \R^E$. Our goal is to compute an optimal flow $\bf^*$. Let $F^* = \bc^\top \bf^*$ be the optimal cost.

Our algorithm is based on a potential reduction interior point method \cite{K84}, where each iteration we reduce the value of the potential function
\begin{align}
    \Phi(\bf) \defeq 20m \log(\bc^\top \bf - F^*) + \sum_{e \in E} \left((\bu^+_e - \bf_e)^{-\alpha} + (\bf_e - \bu^-_e)^{-\alpha} \right)
\end{align}
for $\alpha = 1/(1000 \log mU)$. The reader can think of the barrier $x^{-\alpha}$ as the more standard $-\log x$ for simplicity instead.
We use $x^{-\alpha}$ to ensure that all lengths/gradients encountered during the algorithm can be represented using $\O(1)$ bits, and explain why this holds later in the section.
When $\Phi(\bf) \le -200m\log mU$, we can terminate because then $\bc^\top \bf - F^* \le (mU)^{-10}$, at which point standard techniques let us round to an exact optimal flow \cite{DS08}. Thus if we can reduce the potential by $m^{-o(1)}$ per iteration, the method terminates in $m^{1+o(1)}$ iterations.

Previous analyses of IPMs used $\ell_2$ subproblems, i.e. replacing the $\ell_1$ norm in \eqref{eq:mmc} with an $\ell_2$ norm, which can be solved using a linear system. \cite{K84} shows that using $\ell_2$ subproblems such a method converges in $\O(m)$ iterations. Later analyses of path-following IPMs \cite{R88} showed that a sequence of $\O(\sqrt{m})$ $\ell_2$ subproblems suffice to give a high-accuracy solution. Surprisingly, we are able to argue that a solving sequence of $\O(m)$ $\ell_1$ minimizing subproblems of the form in \eqref{eq:mmc} suffice to give a high accuracy solution to \eqref{eq:mincostflow}. In other words, changing the $\ell_2$ norm to an $\ell_1$ norm does not increase the number of iterations in a potential reduction IPM.
The use of an $\ell_1$-norm-based subproblem gives us a crucial advantage: Problems of this form must have optimal solutions in the form of cycles---and our new algorithm finds approximately optimal cycles vastly more efficiently than any known approaches for $\ell_2$ subproblems.

There are several reasons we choose to use a potential reduction IPM with this specific potential. The most important reason is the flexibility of a potential reduction IPM allows our data structure for maintaining solutions to \eqref{eq:mmc} to have large $m^{o(1)}$ approximation factors. This contrasts with recent works towards solving min-cost flow and linear programs using a \emph{robust IPM}~(see \cite{CLS19} or the tutorial \cite{LV21}), which require $(1+o(1))$--approximate solutions for the iterates.

Finally, we use the barrier $x^{-\alpha}$ as opposed to the more standard logarithmic barrier in order to guarantee that all lengths/gradients encountered during the method are bounded by $\exp(\log^{O(1)} m)$ throughout the method. This follows because if $(\bu^+_e - \bf_e)^{-\alpha} \le \O(m)$, then
\[ \bu^+_e - \bf_e \ge \O(m)^{-1/\alpha} \ge \exp(-O(\log^2 Um)). \] Such a guarantee does not hold for the logarithmic barrier.\footnote{The reason that path-following IPMs for max-flow \cite{DS08} do not encounter this issue is because one can show that primal-dual optimality actually guarantees that the lengths/resistances are polynomially bounded. We do not maintain any dual variables, so such a guarantee does not hold for our algorithm.}

To conclude, we discuss a few specifics of the method, such as how to pick the lengths and gradients, and how to prove that the method makes progress. Given a current flow $\bf$ we define the gradient and lengths we use in \eqref{eq:mmc} as $\bg(\bf) \defeq \g \Phi(\bf)$ and $\bell(\bf)_e \defeq \left(\bu^+_e - \bf_e\right)^{-1-\alpha} + \left(\bf_e - \bu^-_e\right)^{-1-\alpha}$. Now, let $\bDelta$ be a circulation with $\bg(\bf)^\top \bDelta / \left\|\mL\bDelta\right\|_1 \le -\kappa$ for some $\kappa < 1/100$, scaled so that $\left\|\mL\bDelta\right\|_1 = \kappa/50$. A direct Taylor expansion shows that $\Phi(\bf + \bDelta) \le \Phi(\bf) - \kappa^2/500$ (\cref{lemma:phidecrease}).

Hence it suffices to show that such a $\bDelta$ exists with $\kappa = \widetilde{\Omega}(1)$, because then a data structure which returns an $m^{o(1)}$-approximate solution still has $\kappa = m^{-o(1)}$, which suffices. Fortunately, the \emph{witness circulation} $\bDelta(\bf)^* = \bf^* - \bf$ satisfies
$\bg(\bf)^\top \bDelta / \left\|\mL\bDelta\right\|_1 \le -\widetilde{\Omega}(1)$ (\cref{lemma:optCirculationQuality}).

We emphasize that the fact that $\bf^* - \bf$ is a good enough witness circulation for the flow $\bf$ is essential for proving that our randomized data structures suffice, even though the updates seem adaptive. At a high level, this guarantee helps because even though we do not know the witness circulation $\bf^* - \bf$, we know how it changes between iterations, because we can track changes in $\bf$. We are able to leverage such guarantees to make our data structures succeed for the updates coming from the IPM.
To achieve this, we end up carefully designing our adversary model with enough power to capture our IPM, but with enough restrictions that our min-ratio cycle data structure to win against the adversary.
 We elaborate on this point in \cref{overview:mmcOblivious,overview:routing}.

\subsection{High Level Overview of the Data Structure for Dynamic Min-Ratio Cycle}
\label{overview:mmcOblivious}
As discussed in the previous section, our algorithm computes a min-cost flow by solving a sequence of $m^{1+o(1)}$ min-ratio cycle problems $\min_{\mB^\top\bDelta = 0} \bg^\top\bDelta/\|\mL\bDelta\|_1$ to $m^{o(1)}$ multiplicative accuracy.
Because our IPM ensures stability for lengths and gradients (see \cref{lemma:stablebell,lemma:stablebg}), and is even robust to approximations of lengths and gradients,
we can show that over the course of the algorithm we only need to update the entries of the gradients $\bg$ and lengths $\bell$ at most $m^{1+o(1)}$ total times. Efficiency gains based on leveraging stability has appeared in the earliest works on efficiently maintaining IPM iterates \cite{K84,V90} as well as most recent progress on speeding up linear programs.

\paragraph{Warm-Up: A Simple, Static Algorithm.} A simple approach to finding an $\O(1)$-approximate min-ratio cycle is the following: given our graph $G$, we find a probabilistic low stretch spanning tree $T$, i.e., a tree such that for each edge $e = (u,v) \in G$, the stretch of $e$, defined as $\str^{T,\bell}_e \defeq \frac{\sum_{f \in T[u,v]} \bell(f)}{ \bell(e)}$ where $T[u,v]$ is the unique path from $u$ to $v$ along the tree $T$, is $\O(1)$ in expectation. Such a tree can be found in $\O(m)$ time \cite{alon1995graph, AN19:journal}. 

Let $\bDelta^*$ be the witness circulation that minimizes \eqref{eq:mmc}, and assume wlog that $\bDelta^*$ is a cycle that routes one unit of flow along the cycle. We assume for convenience, that edges on $\bDelta^*$ are oriented along the flow direction of $\bDelta^*$, i.e. that $\bDelta^* \in \mathbb{R}^{E}_{\geq 0}$. Then, for each edge $e = (u,v)$ on the cycle $\bDelta^*$, the fundamental tree cycle of $e$ in $T$ denoted $e \oplus T[v,u]$, representing the cycle formed by edge $e$ concatenated with the path in $T$ from its endpoint $v$ to $u$. To work again with vector notation, we denote by $\bp(e \oplus T[v,u]) \in \mathbb{R}^{E}$ the vector that sends one unit of flow along the cycle $e \oplus T[v,u]$ in the direction that aligns with the orientation of $e$. A classic fact from graph theory now states that $\bDelta^* = \sum_{e : \bDelta^*_e > 0} \bDelta^*_e \cdot \bp(e \oplus T[v,u])$ (note that the tree-paths used by adjacent off-tree edges cancel out, see \cref{fig:TreeCycleCancel}). In particular, this implies that $\bg^{\top} \bDelta^* =  \sum_{e : \bDelta^*_e > 0} \bDelta^*_e \cdot \bg^{\top} \bp(e \oplus T[v,u])$.
\begin{figure}
\centering
\begin{subfigure}{.5\textwidth}
  \centering
  \includegraphics[width=0.8\linewidth]{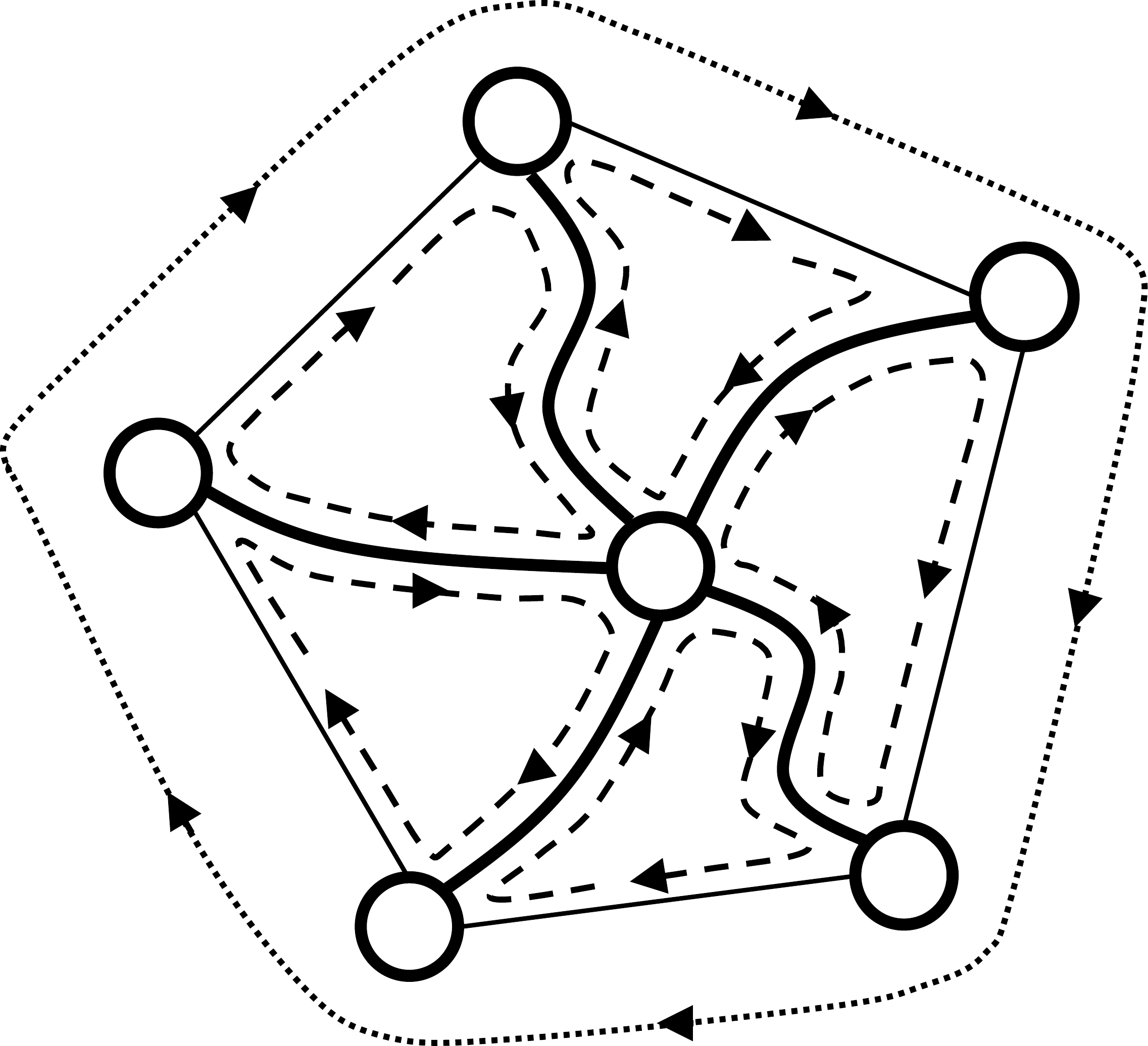}
\end{subfigure}%
\begin{subfigure}{.5\textwidth}
  \centering
  \includegraphics[width=0.8\linewidth]{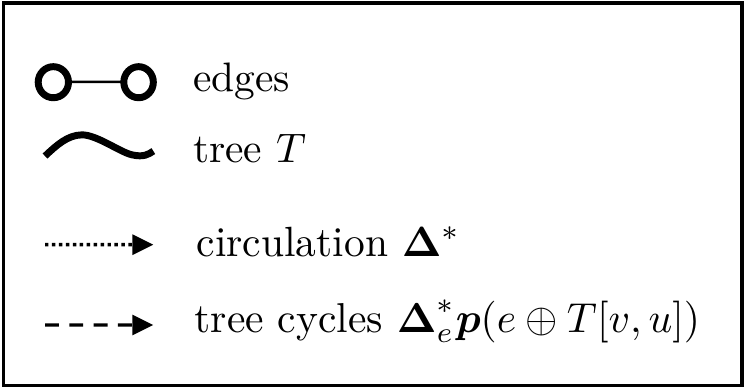}
\end{subfigure}
\caption[Finding good min-ratio cycles using a low-stretch tree]{Illustrating the decomposition $\bDelta^* = \sum_{e : \bDelta^*_e > 0} \bDelta^*_e \cdot \bp(e \oplus T[v,u])$ of a circulation into tree cycles given by off-trees and the corresponding tree paths. }
\label{fig:TreeCycleCancel}
\end{figure}

This fact will allow us to argue that with probability at least $\frac{1}{2}$, one of the tree cycles is an $\O(1)$-approximate solution to \eqref{eq:mmc}. Therefore, since the stretch $\str^{T,\bell}_e$ of edges $e \in E$ is small in expectation, we can, by Markov's inequality, argue that with probability at least $\frac{1}{2}$, the circulation $\bDelta^*$ is not stretched by too much. Formally, we have that $\sum_{e : \bDelta^*_e > 0} \bDelta^*_e \cdot \|\mL \; \bp(e \oplus T[v,u])\|_1 \leq \gamma \|\mL \bDelta^*\|_1$ for $\gamma = \O(1)$. Combining our insights, we can thus derive that
\[
\frac{\bg^{\top} \bDelta^* }{\|\mL \bDelta^*\|_1} \geq \frac{1}{\gamma} \cdot \frac{\sum_{e : \bDelta^*_e > 0} \bDelta^*_e \cdot \bg^{\top} \bp(e \oplus T[v,u])}{\sum_{e : \bDelta^*_e > 0} \bDelta^*_e \cdot \|\mL \; \bp(e \oplus T[v,u])\|_1} \geq \frac{1}{\gamma} \min_{e : \bDelta^*_e > 0}  \frac{ \bg^{\top} \bp(e \oplus T[v,u])}{ \|\mL \; \bp(e \oplus T[v,u])\|_1}
\]
where the last inequality follows from the fact that $ \min_{i\in[n]} \frac{\bx_i}{\by_i} \le \frac{\sum_{i\in[n]} \bx_i}{\sum_{i\in[n]} \by_i}$ (recall also that $\bg^{\top} \bDelta^*$ is negative). But this exactly says that for the edge $e$ minimizing the expression on the right, the tree cycle $e \oplus T[v,u]$ is a $\gamma$-approximate solution to \eqref{eq:mmc}, as desired.

Since the low stretch spanning tree $T$ stretches circulation $\bDelta^*$ reasonably with probability at least $\frac{1}{2}$, we could boost the probability by sampling $\O(1)$ trees $T_1, T_2, \ldots, T_s$ independently at random and conclude that w.h.p. one of the fundamental tree cycles gives an approximate solution to \eqref{eq:mmc}.

Unfortunately, after updating the flow $\bf$ to $\bf'$ along such a fundamental tree cycle, we cannot reuse the set of trees $T_1, T_2, \ldots, T_s$ because the next solution to \eqref{eq:mmc} has to be found with respect to gradients $\bg(\bf')$ and lengths $\bell(\bf')$ depending on $\bf'$ (instead of $\bg = \bg(\bf)$ and $\bell = \bell(\bf)$). But $\bg(\bf')$ and $\bell(\bf')$ depend on the randomness used in trees $T_1, T_2, \ldots, T_s$. Thus, naively, we have to recompute all trees, spending again $\Omega(m)$ time. But this leads to run-time $\Omega(m^2)$ for our overall algorithm which is far from our goal.

\paragraph{A Dynamic Approach.}

Thus we consider the data structure problem of maintaining an $m^{o(1)}$ approximate solution to \eqref{eq:mmc} over a sequence of at most $m^{1+o(1)}$ changes to entries of $\bg, \bell$. To achieve an almost linear time algorithm overall, we want our data structure to have an amortized $m^{o(1)}$ update time. Motivated by the simple construction above, our data structure will ultimately maintain a set of $s = m^{o(1)}$ spanning trees $T_1, \dots, T_s$ of the graph $G$. Each cycle $\bDelta$ that is returned is represented by $m^{o(1)}$ off-tree edges and paths connecting them on some $T_i$. 

To obtain an efficient algorithm to maintain these trees $T_i$, we turn to a recursive approach. 
In each level of our recursion, we first reduce the number of vertices, and then the number of  edges in the  graphs we recurse on.
To reduce the number of vertices, we produce a \emph{core graph} on a subset of the original vertex set, and we then compute a \emph{spanner} of the core graph which reduces the number of edges.
Both of these objects need to be maintained dynamically, and we ensure they are very stable under changes in the graphs at shallower levels in the recursion. 
In both cases, our notion of stability relies on some subtle properties of the interaction between the data structure and the hidden witness circulation.

We maintain a recursive hierarchy of graphs. At the top level of our hierarchy, for the input graph $G$, we produce $B = O(\log n)$ core graphs. To obtain each such core graph, for each $i \in [B]$, we sample a (random) forest $F_i$ with $\O(m/k)$ connected components for some size reduction parameter $k$. The associated core graph is the graph $G / F_i$ which denotes $G$ after contracting the vertices in the same components of $F_i$. 
We can define a map that lifts circulations $\wh{\bDelta}$ in the core graph $G / F_i$, to circulations  $\bDelta$ in the graph $G$ by routing flow along the contracted paths in $F_i$. 
The lengths in the core graph $\wh{\bell}$ (again let $\wh{\mL} = \diag(\wh{\bell})$) and are chosen to upper bound the length of circulations when mapped back into $G$ such that $\|\wh{\mL}\wh{\bDelta}\|_1 \geq \|\mL\bDelta\|_1 $. 
Crucially, we must ensure these new lengths $\wh{\bell}$ do not stretch the witness circulation $\bDelta^*$ when mapped into $G/F_i$ by too much, so we can recover it from $G/F_i$. To achieve this goal, we choose $F_i$ to be a low stretch forest, i.e. a forest with properties similar to those of a low stretch tree.
In \Cref{overview:core}, we summarize the central aspects of our core graph construction.

While each core graph $G/F_i$ now has only $\O(m/k)$ vertices, it still has $m$ edges which is too large for our recursion.  To overcome this issue we build a spanner $\SS(G,F_i)$ on $G/F_i$ to reduce the number of edges to $\O(m/k)$, which guarantees that for every edge $e=(u,v)$ that we remove from $G/F_i$ to obtain $\SS(G,F_i)$, there is a  $u$-to-$v$ path in $\SS(G,F_i)$ of length $m^{o(1)}$.
Ideally, we would now recurse on each spanner $\SS(G,F_i)$, again approximating it with a collection of smaller core graphs and spanners.
However, we face an obstacle: removing edges could destroy the witness circulation, so that possibly no good circulation exists in any $\SS(G,F_i)$.
To solve this problem, we compute an explicit embedding $\Pi_{G /F_i \to \SS(G,F_i)}$ that maps each edge $e = (u,v) \in G/F_i$ to a short $u$-to-$v$ path in $\SS(G,F_i)$. 
We can then show the following dichotomy: Let $\wh{\bDelta}(\bf)^*$ denote the witness circulation when mapped into the core graph $G/F_i$. Then, \emph{either} one of the edges $e \in E_{G /F_i} \setminus E_{\SS(G,F_i)}$ has a spanner cycle consisting of $e$ combined with $ \Pi_{G /F_i \to \SS(G,F_i)}(e)$ which is almost as good as $\wh{\bDelta}(\bf)^*$, \emph{or} re-routing $\wh{\bDelta}(\bf)^*$ into  $\SS(G,F_i)$ roughly preserves its quality. \cref{fig:SpannerCyclesOrProjGood} illustrates this dichotomy.
Thus, either we find a good cycle using the spanner, or we can recursively find a solution on $\SS(G,F_i)$ that almost matches $\wh{\bDelta}(\bf)^*$ in quality.
To construct our dynamic spanner with its strong stability guarantees under changes in the input graph, we use a new approach that diverges from other recent works on dynamic spanners; we give an outline of the key ideas in \Cref{overview:spanner}.
\begin{figure}
  \centering
  \includegraphics[width=\linewidth]{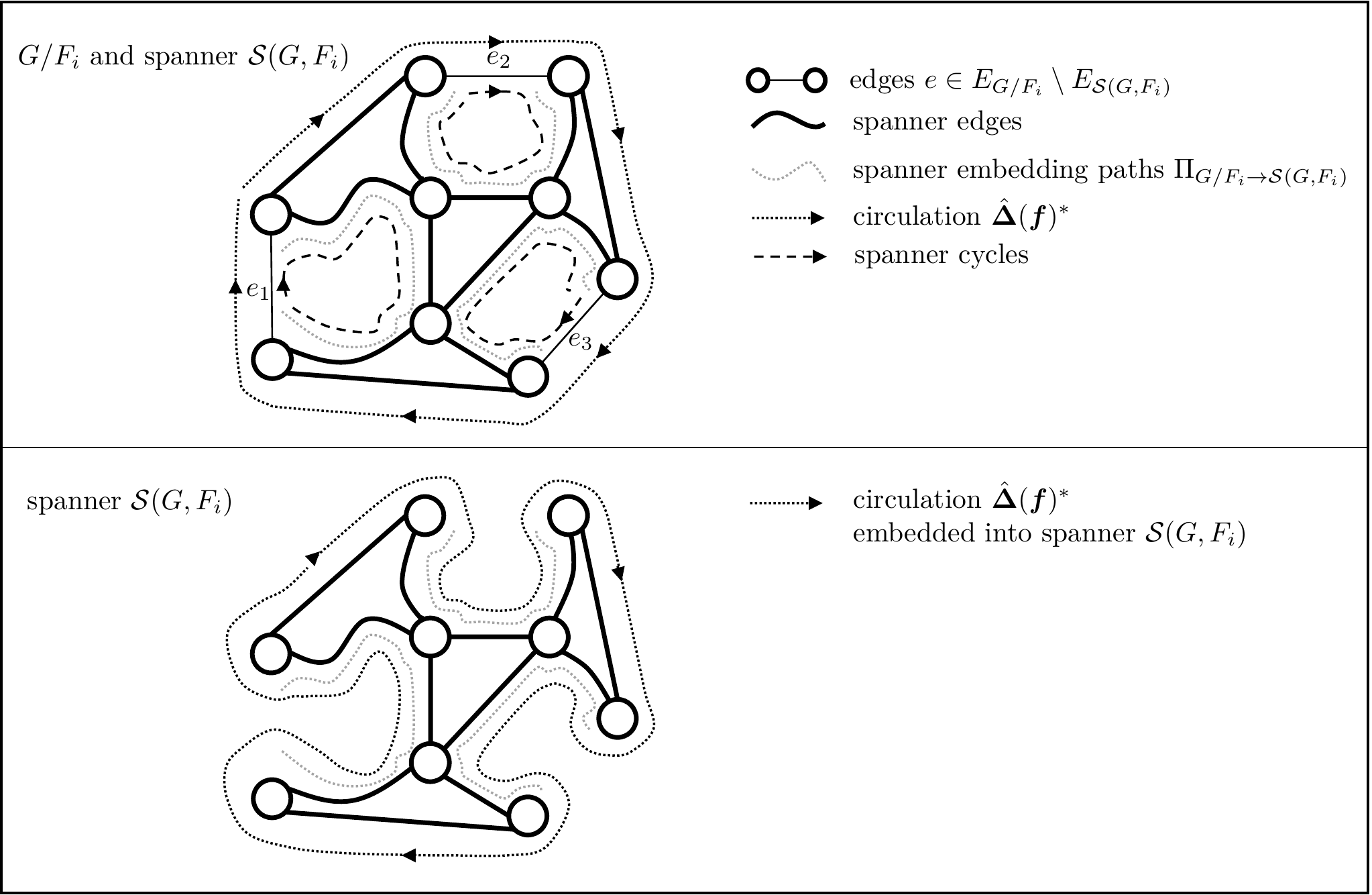}
\caption[A dichotomy: Either the spanner preserves the circulation, or a fundamental spanner cycle is good]{Illustration of a dichotomy: either one of the edges $e \in E_{G /F_i} \setminus E_{\SS(G,F_i)}$ has a spanner cycle consisting of $e$ combined with $ \Pi_{G /F_i \to \SS(G,F_i)}(e)$ which is almost as good as $\wh{\bDelta}(\bf)^*$, \emph{or} re-routing $\wh{\bDelta}(\bf)^*$ into  $\SS(G,F_i)$ roughly preserves its quality.}
\label{fig:SpannerCyclesOrProjGood}
\end{figure}

Our recursion uses $d$ levels, where we choose the size reduction factor $k$ such that $k^d \approx m$ and the bottom level graphs have $m^{o(1)}$ edges.
Note that since we build $B$ trees on $G$ and recurse on the spanners of $G/F_1, G/F_2, \ldots, G/F_B$, our recursive hierarchy has a branching factor of $B = O(\log n)$ at each level of recursion. 
Thus, choosing $d \le \sqrt{\log n}$, we get $B^d = m^{o(1)}$ leaf nodes in our recursive hierarchy.
Now, consider the forests $F_{i_1}, F_{i_2},\ldots, F_{i_d}$ on the path from the top of our recursive hierarchy to a leaf node. We can patch these forests together to form a tree associated with the leaf node.
Each of these trees, we maintain as a link-cut tree data structure. Using this data structure, whenever we find a good cycle, we can route flow along it and detect edges where the flow has changed significantly.
The cycles are either given by an off-tree edge or a collection of $m^{o(1)}$ off-tree edges coming from a spanner cycle.
We call the entire construction a \emph{branching tree chain}, and in \Cref{overview:dssetup}, we elaborate on the overall composition of the data structure.

What have we achieved using this hierarchical construction compared to our simple, static algorithm? 
First, consider the setting of an oblivious adversary, where the gradient and length update sequences and the optimal circulation after each update is fixed in advance.
In this setting, we can show that our spanner-of-core graph construction can survive through $m^{1-o(1)}/k^i$ updates at level $i$.
Meanwhile, we can rebuild these constructions in time $m^{1+o(1)}/k^{i-1}$, leading to an amortized cost per update of $k m^{o(1)} \leq m^{o(1)}$ at each level.
This gives the first dynamic data structure for our undirected min-ratio problem with $m^{o(1)}$ query time against an oblivious adversary.

However, our real problem is harder: the witness circulation in each round is $\bDelta(\bf)^* = \bf^* - \bf$ and depends on the updates we make to $\bf$, making our problem adaptive.
Instead of modelling our IPM as giving rise to a fully-dynamic problem against an adaptive adversary, 
the promise that the witness circulation can always be written as $\bf^* - \bf$ lets us express the IPM with an adversary that is much more restricted.
Our data structure needs to ensure that the flow $\bf^* - \bf$ is stretched by $m^{o(1)}$ on average w.r.t. the lengths $\bell$.
At a high level, we achieve this by forcing the forests at every level to have stretch $1$ on edges where $\bf_e$ changes significantly and could affect the total stretch of our data structure on $\bf^* - \bf$.
\Cref{overview:routing} describes the guarantees we achieve using this strategy.
However, the data structure at this point is not yet guaranteed to succeed. 
Instead, we very carefully characterize the failure condition.
In particular, to induce a failure, the adversary must create a situation where the current value of $\|\mL\bDelta(\bf)^*\|_1$ is significantly less than the value when the levels of our data structure were last rebuilt.
This means we can counteract from this failure by rebuilding the data structure levels.
Due to the high cost of rebuilding the shallowest levels of the data structure, na\"{i}vely rebuilding the entire data structure is much too expensive, and we need a more sophisticated strategy.
We describe this strategy in \Cref{overview:game}, where we design a game that expresses the conflict between our data structure and the adversary, and we show how to win this game without paying too much runtime for rebuilds.

\subsection{Building Core Graphs}
\label{overview:core}
In this section, we describe our core graph construction (\cref{def:coregraph}), which maps our dynamic undirected min-ratio cycle problem on a graph $G$ with at most $m$ edges and vertices into a problem of the same type on a graph with only $\O(m/k)$ vertices and $m$ edges, and handles $\O(m/k)$ updates to the edges before we need to rebuild it.
Our construction is based on constructing low-stretch decompositions using forests and portal routing (\cref{lemma:globalstretch}).
We first describe how our portal routing uses a given forest $F$ to construct a core graph $G/F$.
We then discuss how to use a collection of (random) forests $F_1, \ldots, F_{B}$ to produce a low-stretch decomposition of $G$, which will ensure that one of the core graphs $G/F_i$ preserves the witness circulation well.
Portal routings played a key role in the ultrasparsifiers of \cite{ST04} and has been further developed in many works since.

\paragraph{Forest Routings and Stretches.}
To understand how to define the stretch of an edge $e$ with respect to a forest $F$, it is useful to define how to \emph{route} an edge $e$ in $F$.
Given a spanning forest $F$, every path and cycle in $G$ can be mapped to $G/F$ naturally (where we allow $G/F$ to contain self-loops).
On the other hand if every connected component in $F$ is rooted, where $\root^F_{u}$ denotes the root corresponding to a vertex $u \in V$, we can map every path and cycle in $G/F$ back to $G$ as follows.
Let $P = e_1, \dots, e_k$ be any (not necessarily simple) path in $G/F$ where the preimage of every edge $e_i$ is $e^G_i = (u^G_i, v^G_i) \in G.$
The preimage of $P$, denoted $P^G$, is defined as the following concatenation of paths: 
\begin{align*}
    P^G \defeq \bigoplus_{i=1}^k F[\root^F_{u^G_i}, u^G_i] \oplus e^G_i \oplus F[v^G_i, \root^F_{v^G_i}],
\end{align*}
where we use $A \oplus B$ to denote the concatenation of paths $A$ and $B$, and $F[a, b]$ to denote the unique $ab$-path in the forest $F.$
When $P$ is a circuit (i.e. a not necessarily simple cycle), $P^G$ is a circuit in $G$ as well.
One can extend these maps linearly to all flow vectors and denote the resulting operators as $\bPi_F : \R^{E(G)} \to \R^{E(G / F)}$ and $\bPi_F^{-1}: \R^{E(G / F)} \to \R^{E(G)}$.
Since we let $G/F$ have self-loops, there is a bijection between edges of $G$ and $G/F$ and thus $\bPi_F$ acts like the identity function. 

To make our core graph construction dynamic, the key operation we need to support is the dynamic addition of more root nodes, which results in forest edges being deleted to maintain the invariant each connected component has a root node.
Whenever an edge is changing in $G$, we ensure that $G/F$ approximates the changed edge well by forcing both its endpoints to become root notes, which in turn makes the portal routing of the new edge trivial and this guarantees its stretch is $1$.
An example of this is shown in \cref{fig:CoreGraphMaintenance}.

\begin{figure}
\begin{subfigure}{\textwidth}
  \centering
  \includegraphics[width=\linewidth]{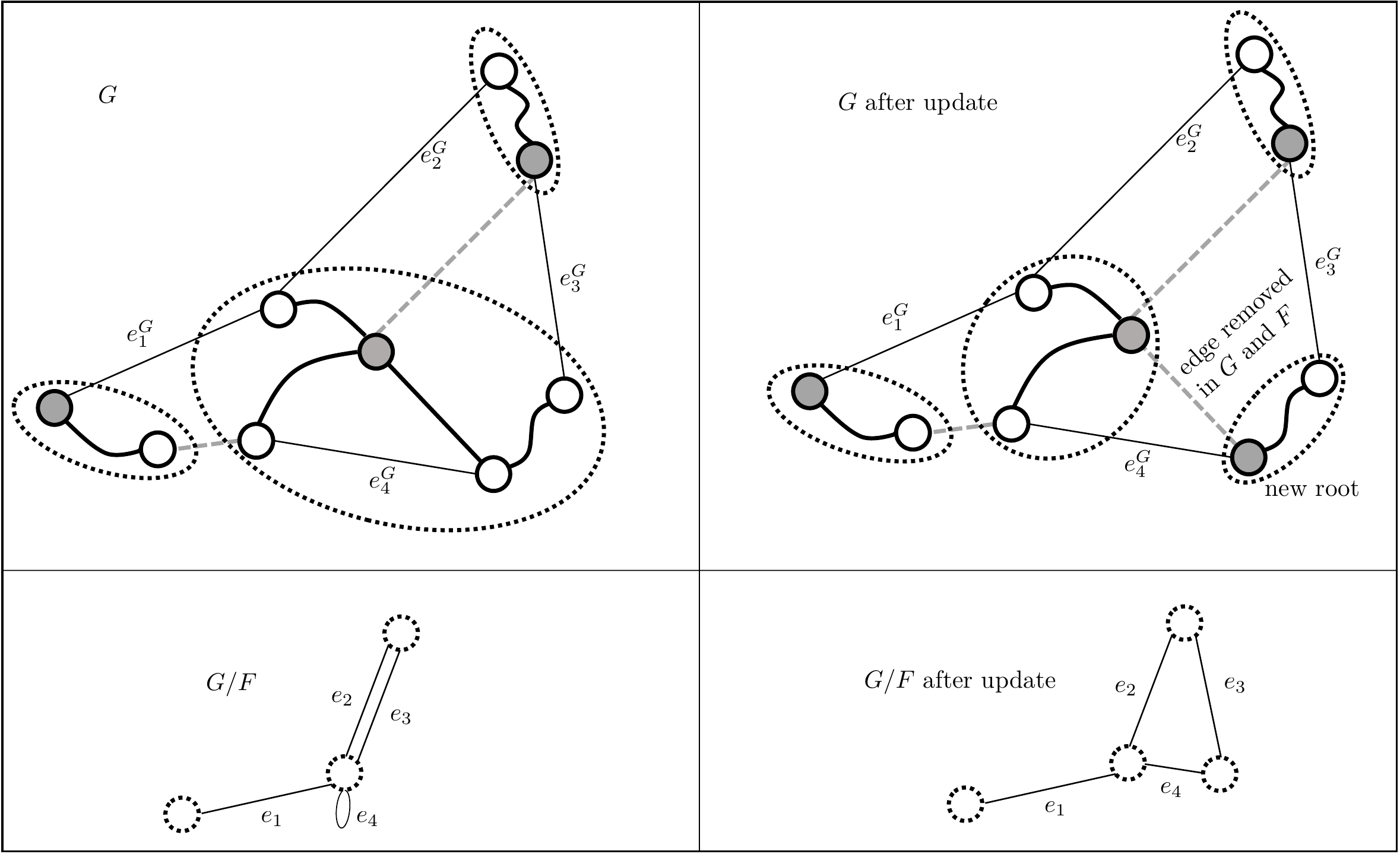}
  \vspace{1em}
\end{subfigure}
\raggedleft
\begin{subfigure}{.4\textwidth}
  \includegraphics[width=\linewidth]{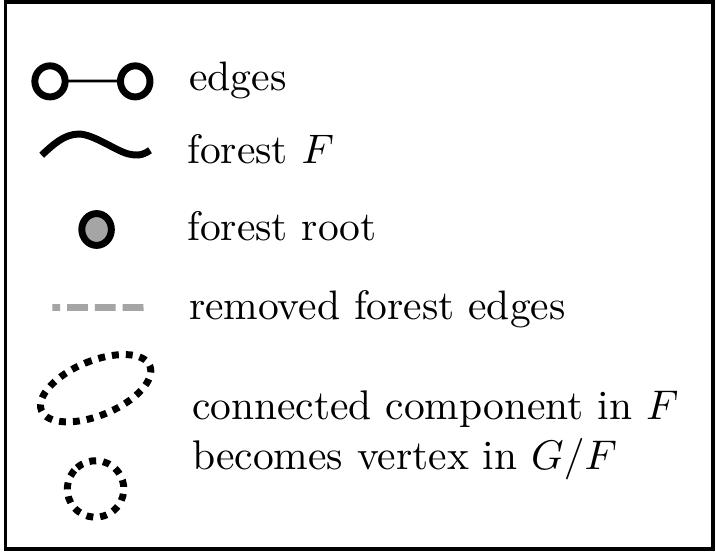}
\end{subfigure}
\caption[Changes in the core graph under edge deletions]{Illustration of the core graph $G/F$ changing as an edge is deleted in $G$ (and in $F$).}
\label{fig:CoreGraphMaintenance}
\end{figure}

For any edge $e^G = (u^G, v^G)$ in $ G$ with image $e$ in $G / F$, we set $\wh{\bell}^F_e$, the edge length of $e$ in $G/F$, to be \emph{an upper bound} on the length of the \emph{forest routing} of $e$, i.e. the path $F[\root^F_{u^G}, u^G] \oplus e^G \oplus F[v^G, \root^F_{v^G}]$.
Meanwhile, we define $\wstr_e \defeq \wh{\bell}^F_e / \bell_e$, as an overestimate on the stretch of $e$ w.r.t. the forest routing.
A priori, it is unclear how to provide a single upper bound on the stretch of every edge, as the root nodes of the endpoints are changing over time. 
Providing such a bound for every edge is important for us as the lengths in $G/F$ could otherwise be changing too often when the forest changes.
We guarantee these bounds by scheme that makes auxiliary edge deletions in the forest in response to external updates, with these additional roots chosen carefully to ensure the length upper bounds.

Now, for any flow $\bf$ in $G/F$, its length in $G/F$ is at least the length of its pre-image in $G$, i.e. $\norm{\mL \bPi_F^{-1} \bf}_1 \le \norm{\wh{\mL}^F \bf}_1$.
Let $\bDelta^*$ be the optimal solution to \eqref{eq:mmc}.
We will show later how to build $F$ such that $\norm{\wh{\mL}^F \bDelta^*}_1 \le \gamma \norm{\mL \bDelta^*}_1$ holds for some $\gamma = m^{o(1)}$, solving \eqref{eq:mmc} on $G/F$ with edge length $\wh{\bell}$ and properly defined gradient $\wh{\bg}$ on $G/F$ yields an $\frac{1}{\gamma}$-approximate solution for $G.$
The gradient $\wh{\bg}$ is defined so that the total gradient of any circulation $\bDelta$ on $G/F$ and its preimage $\bPi_{F}^{-1}\bDelta$ in $G$ is the same, i.e. $\wh{\bg}^\top\bDelta = \bg^\top \bPi_{F}^{-1}\bDelta$.
The idea of incorporating gradients into portal routing was introduced in~\cite{KPSW19}; our version of this construction is somewhat different to allow us to make it dynamic efficiently.

\paragraph{Collections of \emph{Low Stretch Decompositions (LSD)}.}
The first component of the data structure is constructing and maintaining forests of $F$ that form a \emph{Low Stretch Decomposition (LSD)} of $G$.
Variations of which (such as $j$-trees) have been used to construct several recursive graph preconditioners~\cite{M10, S13, KLOS14, CPW21:arxiv} and dynamic algorithms~\cite{CGHPS20}.
Informally, a $k$-LSD is a rooted forest $F \subseteq G$ that decomposes $G$ into $O(m/k)$ vertex disjoint components.
Given some positive edge weights $\bv \in \R^E_{>0}$ and reduction factor $k > 0$, we compute a $k$-LSD $F$ and length upper bounds $\wh{\bell}^F$ of $G/F$ that satisfy two properties:
\begin{enumerate}
    \item $\wstr^F_e = \wh{\bell}^F_{e} / \bell_{e^G} = \O(k)$ for any edge $e^G \in G$ with image $e$ in $G / F$, and \label{item:lsdWidth}
    \item The weighted average of $\wstr^F_e$ w.r.t. $\bv$ is only $\O(1)$, i.e. $\sum_{e^G \in G} \bv_{e^G} \cdot \wstr^F_e \le \O(1) \cdot \norm{\bv}_1.$ \label{item:lsdAvg}
\end{enumerate}
\cref{item:lsdWidth} guarantees that the solution to \eqref{eq:mmc} for $G/F$ yields a $\O(k)$-approximate one for $G.$
However, this guarantee is not sufficient for our data structure, as our $B$-branching tree chain has $d \approx \log_k m$ levels of recursion and the quality of the solution from the deepest level would only be $\O(k)^d \approx m^{1+o(1)}$-approximate.

Instead, like \cite{M10,S13,KLOS14} we compute $k$ different edge weights $\bv_1, \ldots, \bv_k$ via multiplicative weight updates (\cref{lemma:strMWU}) so that the corresponding LSDs $F_1, \ldots, F_k$ have $\O(1)$ average stretch on every edge in $G$:
  $ \sum_{j = 1}^k \wstr^{F_j}_e = \O(k),~\forall e^G \in G$ with image $e$ in $G/F.$

By Markov's inequality, for any fixed flow $\bf$ in $G$, $\norm{\wh{\mL}^{F_j} \bf}_1 \le \O(1) \norm{\mL \bf}_1$ holds for at least half the LSDs corresponding to $F_1, \dots, F_k.$
Taking $\O(1)$ samples uniformly from $F_1, \dots, F_k$, say $F_1, \dots, F_B$ for $B = \O(1)$ we get that
with high probability
\begin{align}
\label{eq:polySizedWorks}
    \min_{j \in [B]} \norm{\wstr^{F_j} \circ \mL\bDelta^*}_1 \le \O(1) \norm{\mL\bDelta^*}_1.
\end{align}
That is, it suffices to solve \eqref{eq:mmc} on $G/F_1, \dots, G/F_B$ to find an $\O(1)$-approximate solution for $G$.

We provide all details including definitions and construction of the core graph in \Cref{sec:jtree}.

\subsection{Maintaining a Branching Tree Chain}
\label{overview:dssetup}
The goal of this section is to elaborate on how we combine core graphs and spanners to produce our overall data structure for our undirected min-ratio cycle problem, the $B$-branching tree chain.
We also describe how the data structure is maintained under dynamic updates, which is more formally shown in \cref{sec:routing}.
A central reason our hierarchical data structure works is that the components, both core graphs and spanners, are designed to remain very stable under dynamic changes to the input graphs they approximate. In the literature on dynamic graph algorithms, this is referred to as having \emph{low recourse}.
\begin{enumerate}
    \item Sample and maintain $B = O(\log n)$ $k$-LSDs $F_1, F_2, \dots, F_B$, and their associated core graphs $G/F_i$. Over the course of $O(m/k)$ updates at the top level, the forests $F_i$ are \emph{decremental}, i.e. only undergo edge deletions (from root insertions), and will have $\O(m/k)$ connected components.
    \item Maintain spanners $\SS(G, F_i)$ of the core graphs $G/F_i$, and embeddings $\Pi_{E(G/F_i) \to \SS(G, F_i)}$, say with length increase $\gamma_{\ell} = m^{o(1)}$.
    \item Recursively process the graphs $\SS(G, F_i)$, i.e. maintains LSDs and core graphs on those, and spanners on the contracted graphs, etc. Go for $d$ total levels, for $k^d = m$.
    \item Whenever a level $i$ accumulates $m/k^i$ total updates, hence doubling the number of edges in the graphs at that level, we rebuild levels $i, i+1, \dots, d$.
\end{enumerate}
Recall that on average, the LSDs stretch lengths by $\O(1)$, and the spanners $\SS(G, F_i)$ stretch lengths by $\gamma_{\ell}$. Hence the overall data structure stretches lengths by $\O(\gamma_{\ell})^d = m^{o(1)}$ (for appropriately chosen $d$).

We now discuss details on how to update the forests $G/F_i$ and spanners $\SS(G, F_i)$. Intuitively, every time an edge $e = (u, v)$ is changed in $G$, we will delete $\O(1)$ additional edges from $F_i$.
This ensures that no edge's total stretch/routing-length increases significantly due to the deletion of $e$ (\cref{lemma:globalstretch}).
As the forest $F_i$ undergoes edge deletions, the graph $G/F_i$ undergoes \emph{vertex splits}, where a vertex has a subset of its edges moved to a newly inserted vertex. Thus, a key component of our data structure is to maintain spanners and embeddings of graphs undergoing vertex splits (as well as edge insertions/deletions). It is important that the amortized recourse (number of changes) to the spanner $\SS(G, F_i)$ is $m^{o(1)}$ independent of $k$, even though the average degree of $G/F_i$ is $\Omega(k)$, and hence on average $\Omega(k)$ edges will move per vertex split in $G/F_i$. We discuss the more precise guarantees in \cref{overview:spanner}.

Overall, let every level have recourse $\gamma_r = m^{o(1)}$ (independent of $k$) per tree. Then each update at the top level induces $O(B\gamma_r)^d$ (as each tree branches into $B$ trees) updates in the data structure overall. Intuitively, for the proper choice of $d = \omega(1)$, both the total recourse $O(B\gamma_r)^d$ and approximation factor $\O(\gamma_{\ell})^d$ are $m^{o(1)}$ as desired.

\subsection{Going Beyond Oblivious Adversaries by using IPM Guarantees}
\label{overview:routing}

The precise data structure in the previous section only works for \emph{oblivious adversaries}, because we used that if we sampled $B = O(\log n)$ LSDs, then whp. there is a tree whose average stretch is $\O(1)$ with respect to a \emph{fixed flow} $\bf$. However, since we are updating the flow along the circulations returned by our data structure, we influence future updates, so the optimal circulations our data structure needs to preserve are not independent of the randomness used to generate the LSDs.
To overcome this issue we leverage the key fact that the flow $\bf^* - \bf$ is a good witness for the min-ratio cycle problem at each iteration.

\cref{lemma:optCirculationQuality} states that for any flow $\bf$, $\bg(\bf)^\top \bDelta(\bf) / (100m + \norm{\mL(\bf)\bDelta(\bf)}_1) \le -\wt{\Omega}(1)$ holds where $\bDelta(\bf) = \bf^* - \bf$.
Then, the best solution to \eqref{eq:mmc} among the LSDs $G/F_1, \dots, G/F_B$ maintains an $\O(1)$-approximation of the quality of the witness $\bDelta(\bf) = \bf^* - \bf$ as long as
\begin{align}
\label{eq:overviewLSDGoal1}
    \min_{j \in [B]} \norm{\wh{\mL}^{F_j} \bDelta(\bf)}_1 \le \O(1) \norm{\mL(\bf) \bDelta(\bf)}_1 + \O(m).
\end{align}
In this case, let $\wh{\bDelta}$ be the best solution obtained from $G/F_1, \dots, G/F_B$.
We have
\begin{align*}
    \frac{\bg(\bf)^\top \wh{\bDelta}}{\norm{\mL(\bf) \wh{\bDelta}}_1} \le \frac{\bg(\bf)^\top \bDelta(\bf)}{\O(1) \norm{\mL(\bf) \bDelta(\bf)}_1 + \O(m)} = -\wt{\Omega}(1).
\end{align*}
The additive $\O(m)$ term is there for a technical reason discussed later.

To formalize this intuition, we define the \emph{width} $\bw(\bf)$ of $\bDelta(\bf)$ as $\bw(\bf) = 100 \cdot \mathbf{1} + \Abs{\mL(\bf) \bDelta(\bf)}.$
The name comes from the fact that $\bw(\bf)_e$ is always at least $\Abs{\bell(\bf)_e(\bf^*_e - \bf_e)}$ for any edge $e.$
We show that the width is also slowly changing (\cref{lemma:setuphidden}) across IPM iterations, in that if the width changed by a lot, then the residual capacity of $e$ must have changed significantly. This gives our data structure a way to predict which edges' contribution to the length of the witness flow $\bf^* - \bf$ could have significantly increased.

Observe that for any forest $F_j$ in the LSD of $G$, we have $\norm{\wh{\mL}^{F_j} \bDelta(\bf)}_1 \le \norm{\wstr^{F_j} \circ \bw(\bf)}_1.$
Thus, we can strengthen \eqref{eq:overviewLSDGoal1} and show that the IPM potential can be decreased by $m^{-o(1)}$ if
\begin{align}
\label{eq:overviewLSDGoal2}
    \min_{j \in [B]} \norm{\wstr^{F_j} \circ \bw(\bf)}_1 \le \O(1) \norm{\bw(\bf)}_1.
\end{align}
\eqref{eq:overviewLSDGoal2} also holds with w.h.p if the collection of LSDs are built after knowing $\bf.$
However, this does not necessarily hold after augmenting with $\bDelta$, an approximate solution to \eqref{eq:mmc}.

Due to stability of $\bw(\bf)$, we have $\bw(\bf + \bDelta)_e \approx \bw(\bf)_e$ for every edge $e$ whose length does not change a lot.
For other edges, we update their edge length and force the stretch to be $1$, i.e. $\wstr^{F_j}_e = 1$ via the dynamic LSD maintenance, by shortcutting the routing of the edge $e$ at its endpoints.
This gives that for any $j \in [B]$, the following holds:
\begin{align*}
    \norm{\wstr^{F_j} \circ \bw(\bf + \bDelta)}_1 \lesssim \norm{\wstr^{F_j} \circ \bw(\bf)}_1 + \norm{\bw(\bf + \bDelta)}_1.
\end{align*}
Using the fact that $\min_{j \in [B]} \norm{\wstr^{F_j} \circ \bw(\bf)}_1 \le \O(1) \norm{\bw(\bf)}_1$, we have the following:
\begin{align*}
    \min_{j \in [B]} \norm{\wstr^{F_j} \circ \bw(\bf + \bDelta)}_1 \lesssim \O(1)\norm{\bw(\bf)}_1 + \norm{\bw(\bf + \bDelta)}_1.
\end{align*}

Thus, solving \eqref{eq:mmc} on the updated $G / F_1, \dots, G / F_B$ yields a good enough solution for reducing IPM potential as long as the width of $\bw(\bf+\bDelta)$ has not increased significantly, i.e. $\norm{\bw(\bf + \bDelta)}_1 \le \O(1)\norm{\bw(\bf)}_1.$

If the solution on the updated graphs $G / F_1, \dots, G / F_B$ does not have a good enough quality, we know by the above discussion that $\norm{\bw(\bf + \bDelta)}_1 \ge 100 \norm{\bw(\bf)}_1$ must hold.
Then, we re-compute the collection of LSDs of $G$ and solve \eqref{eq:mmc} on the new collection of $G / F_1, \dots, G / F_B$ again. Because each recomputation reduces the $\ell_1$ norm of the width by a constant factor, and all the widths are bounded by $\exp(\log^{O(1)}m)$ (as discussed in \cref{overview:ipm}), there can be at most $\O(1)$ such recomputations. At the top level, this only increases our runtime by $\O(1)$ factors.

The real situation is much more complicated since we recursively maintain the solutions on the spanners of each $G/F_1, \dots, G/F_B.$ Hence, it is possible that lower levels in the data structure are the ``reason'' that the quality of the solution is poor. More formally, let $T$ be the total number of IPM iterations.
We use $t \in [T]$ to index each iteration and use superscript $x^{(t)}$ to denote the state of any variable $x$ after $t$-th iteration.
For example, $\bf^{(t)}$ is the flow computed so far after $t$ IPM iterations and we define $\bw^{(t)} \defeq \bw(\bf^{(t)})$ to be the width w.r.t. $\bf^{(t)}$.
Recall that every graph maintained in the dynamic $B$-Branching Tree Chain re-computes its collection of LSDs after certain amount of updates.
When some graph at level $i$ re-computes, we enforce every graph at the same level to re-compute as well.
Since there's only $m^{o(1)}$ such graphs at each level, this scheme results in a $m^{o(1)}$ overhead on the update time which is tolerable.
For every level $i = 0, \dots, d$, we define $\prev_i^{(t)}$ to be the most recent iteration at or before $t$ that a re-computation of LSDs occurs at level $i.$
For graphs at level $d$ which contain only $m^{o(1)}$ vertices, we enforce a rebuild everytime and always have $\prev_d^{(t)} = t$.
We show in \cref{lemma:dynamicchain} that the cycle output by the data structure in the $t$-th IPM iteration has length at most
\begin{align*}
    m^{o(1)} \sum_{i=0}^d \|\bw^{(\prev_i^{(t)})}\|_1.
\end{align*}
This inequality is a natural generalization of the $\O(1)\left(\norm{\bw(\bf)}_1 + \norm{\bw(\bf + \bDelta)}_1\right)$-bound when taking recursive structure into account. 

At this point, we want to emphasize that the fact that we can prove this guarantee depends on certain ``monotonicity'' properties of both our core and spanner graph constructions.
In the core graph construction, it is essential that we can provide a fixed length upper bound for most edges. In the spanner construction, we crucially use that the set of edges routing into any fixed edge in the spanner is \emph{decremental} for most spanner edges. This allows us to produce an initial upper bound on the width for edges in the spanner and continue using this bound as long as the spanner edge routes a decremental set.

The cycle output by the data structure yields enough decrease in the IPM potential if its 1-norm is small enough.
Otherwise, the 1-norm of the output cycle is large and we know that $\sum_{i=0}^d \|\bw^{(\prev_i^{(t)})}\|_1$ is much more than $m^{o(1)}\|\bw^{(t)}\|_1.$ In this way, the data structure can fail because some lower level $i$ has $\|\bw^{(\prev_i^{(t)})}\|_1 \gg \|\bw^{(t)}\|_1$.
A possible fix is to rebuild the entire data structure which sets $\prev_i^{(t)} = t$ at any level $i.$
However, this costs linear time per rebuild, and this may need to happen almost every iteration because there are multiple levels.
In the next section we show how to leverage that lower levels have cheaper rebuilding times (levels $i, i+1, \dots, d$ can be rebuilt in time approximately $m^{1+o(1)}/k^i$) to design a more efficient rebuilding schedule.

\subsection{The Rebuilding Game}
\label{overview:game}
Our goal in \Cref{sec:rebuilding}
is to develop a strategy that finds approximate min-ratio cycles without spending too much time rebuilding our data structure when it fails to do so. 
In the previous overview section, we carefully characterized the conditions under which our data structure can fail against adversarial updates, given the promise that $\bf^* - \bf$ remains a good witness circulation.
In this section, we set up a game which abstracts the properties of the data structure and the adversary.
The player in this game wants to ensure our data structure works correctly by rebuilding levels of it when it fails.
We show that the player can win without spending too much time on rebuilding.

Recall $\bw^{(t)} \defeq \bw(\bf^{(t)})$ is a hidden vector that we use to upper bound the $\ell_1$ cost of the hidden witness circulation $\bDelta(\bf)$.
We will refer to $\|\bw^{(t)} \|_1$ as the total width at time $t$.
We argued in the previous \cref{overview:routing} that our branching-tree data structure can find a good cycle whenever
the total width $\|\bw^{(t)} \|_1$ is not too small compared to the total widths at the times when the levels $0,1,\ldots,d$ of the data structure were last initialized or rebuilt.
We let $\prev_i^{(t)}$ denote the stage when level $i$ was last rebuilt, and refer to $\|\bw^{(\prev_i^{(t)})} \|_1$ as the total width at level $i$.
As we saw in the previous section, the only way our cycle-finding data structure can fail to produce a good enough cycle is if  $\sum_{i=0}^d \|\bw^{(\prev_i^{(t)})}\|_1 \gg m^{o(1)} \| \bw^{(t)}\|_1 $.
We can estimate the quality of the cycles we find, and if we fail to find a good cycle we can conclude this undesired condition holds.
However, even if the condition holds, we might still find a good cycle ``by accident'', so finding a cycle does not prove that the data structure currently estimates the total width well.
Because the total widths $\|\bw^{(t)} \|_1$  are hidden from us, we do not know which level(s) cause the problem when we fail to find a cycle.

We turn this into a game that abstracts the data structure and IPM and supposes that  
total width $\|\bw^{(t)} \|_1$ is an arbitrary positive number chosen by an adversary, while a player (our protagonist) manages the data structure by rebuilding levels of the data structure to set $\prev_i^{(t)} = t$ when necessary.
Now, because of well-behaved numerical properties of our IPM, we are guaranteed that $\log(\| \bw^{(t)}\|_1) \in [-\poly\log(m),\poly\log(m)]$, and we impose this condition on the total width in our game as well.
By developing a strategy that works against any adversary choosing such total widths, we ensure our data structure will work with our IPM as a special case.
In \Cref{def:game} we formally define our rebuilding game.

In our branching tree data structure, level $i$ can be rebuilt at a cost of $m^{1+o(1)}/k^i$ and it can last through roughly $m^{1-o(1)}/k^i$ cycle updates before we have to rebuild it because the core graph has grown too large (we call this a ``winning rebuild'').
But, if we are unable to find a good cycle, we are forced to rebuild sooner (we call this a ``losing rebuild'').
Which level should we rebuild if we are unable to find a good cycle?
The answer is not immediately clear, because any level could have too large total width.
However, by tuning our parameters such that the $m^{o(1)}$ factor in our condition $\sum_{i=0}^d \|\bw^{(\prev_i^{(t)})}\|_1 \gg m^{o(1)} \|\bw^{(t)}\|_1 $ is larger than $2(d+1),$ we can deduce that if a failure occurs, then $\max_{i=0}^d \|\bw^{(\prev_i^{(t)})}\|_1 > 2 \|\bw^{(t)}\|_1.$ 
Thus, \emph{if} the total width at level $i$ is too large, then a losing rebuild at level $i$  (and hence updating $\bw^{(\prev_i^{(t+1)})}$ to $\bw^{(t)}$)  will reduce its total width by at least a factor 2.

This means that for any level $i,$ if we do a losing rebuild of level $i$ $\poly\log(m)$ times before a winning rebuild of level $i,$ we can conclude that the too-large total width is not at level $i.$
This leads to the following strategy: Starting at the lowest level, do a losing rebuild of each level $i$ up to $\poly\log(m)$ times after each winning rebuild,
and then move to rebuilding level $i-1$ in case of more failures.
We state this strategy more formally in \Cref{alg:rebuildStrategy}.
This leads to a cost of $O(m^{o(1)}(m+T))$ to process $T$ cycle updates in the rebuilding game, as we prove in \Cref{lemma:rebuildinggame}.

Finally, at the end of \Cref{sec:rebuilding}, we combine the data structure designed in the previous sections with our strategy for the rebuilding game to create a data structure that handles successfully finds update cycles in our hidden stable-flow chasing setting in amortized $m^{o(1)}$ cost per cycle update, which is encapsulated in \Cref{thm:MMCHiddenStableFlow}.

\subsection{Dynamic Embeddings into Spanners of Decremental Graphs}
\label{overview:spanner}

It remains to describe the algorithm to maintain a spanner $\SS(G, F_i)$ on the graphs $G / F_i$. Let us recall the requirements on the spanner given in \Cref{overview:dssetup}:
\begin{enumerate}
    \item Sparsity: at all times the spanner should be sparse, i.e. consist of at most $\O(|V(\SS(G, F_i))|)$ edges. This is crucial for reducing the problem size and as we ensure that $F_i$ has only $\O(m/k)$ connected components, we have that $\SS(G,F_i)$ consists of $\tilde{O}(m/k)$ edges, reducing the problem size by a factor of almost $k$. 
    
    \item Low Recourse: we further require that for each update to $G / F_i$, there are at most $\gamma_r = m^{o(1)}$ changes to $\SS(G, F_i)$ on average. This is crucial as otherwise the updates to $\SS(G, F_i)$ could trigger even more updates in the $B$-Branching Tree Chain (see \cref{overview:dssetup}). 
    
    \item {Short Paths with Embedding:} we maintain the spannner such that for every edge $e$ in $G$, its endpoints in $\SS(G, F_i)$ are at distance at most $\gamma_l \cdot \ell(e)$ and even maintain witness paths $\Pi_{G \to \SS(G, F_i)}(e)$ between the endpoints consisting of $\gamma_l$ edges. This is crucial as we need an explicit way to check whether $e \oplus \Pi_{G/F_i \to \SS(G, F_i)}(e)$ is a good solution to the min-ratio cycle problem.
    
    \item Small Set of New Edges That We Embed Into: we ensure that after each update, we return a set $D$ consisting of $m^{o(1)}$ edges such that each edge $e$ in $G/F_i$ is embedded into a path $\Pi_{G/F_i \to \SS(G, F_i)}(e)$ consisting of the edges on the path of the old embedding path $\Pi_{G/F_i \to \SS(G, F_i)}(e)$ of $e$ and edges in $D$.
    
    \item {Efficient Update Time:} we show how to maintain $\SS(G, F_i)$ with amortized update time $k m^{o(1)}$. 
\end{enumerate}

We note that additionally, we need our spanner to work against adaptive adversaries since the update sequence is influenced by the output spanner. Although spanners have been studied extensively in the dynamic setting, there is currently only a single result that works against adaptive adversaries. While this spanner given in \cite{bernstein2020fully} appears promising, it does not ensure our desired low recourse property for vertex splits and this seems inherent to the algorithm (additionally, it also  does not maintain an embedding $\Pi_{G/F_i \to \SS(G, F_i)}$). 

While we use similar elements as in \cite{bernstein2020fully} to obtain spanners statically, we arrive at a drastically different algorithm that can deal well with vertex splits. We focus first on obtaining an algorithm with low recourse and discuss afterwards how to implement it efficiently.

\paragraph{A Static Algorithm.} We first consider the static version of the problem on a graph $G/F_i$, i.e. to give a static algorithm that computes a spanner with short path embeddings. By using a simple bucketing scheme over edge lengths, we can assume wlog that all lengths have unit-weight. We partition the graph into edge-disjoint expander graphs $H_1, H_2, \ldots, H_k$ where each $H_i$ has roughly uniform degree, i.e. its average degree is at most a polylogarithmic factor larger than its minimum degree $\Delta_{min}(H_i)$, and each vertex $v$ in $G$ is in at most $\O(1)$ graphs $H_i$. Here, we define an expander to be a graph $H_i$ that has no cut $(X, \overline{X})$ where $\overline{X} = V(H_i) \setminus X$ with $|E_{H_i}(X, \overline{X})| < \Omega\left(\frac{1}{\log^3(m)}\right) \min\{ \vol_{H_i}(X), \vol_{H_i}(\overline{X})\}$ where $E_{H_i}(X, \overline{X})$ is the set of edges in $H_i$ with endpoints in $X$ and $\overline{S}$ and $\vol_{H_i}(Y)$ is the sum of degrees over the vertices $y \in Y$.

Next, consider any such expander $H_i$. It is well-known that sampling edges in expanders with probability $p_i \sim \frac{\log^4(m)}{\Delta_{min}(H_i)}$ gives a cut-sparsifier $\SS_i$ of $H_i$, i.e. a graph such that for each cut $(X, \overline{X})$, we have $|E_{H_i}(X, \overline{X})| \approx |E_{\SS_i}(X, \overline{X})|/p_i$ (see \cite{ST04, bernstein2020fully}). This ensures that also $\SS_i$ is an expander. It is well-known that any two vertices in the same expander are at small distance, i.e. there is a path of length at most $\O(1)$ between them. We use a dynamic shortest paths data structure \cite{CS21} for expander graphs on $\SS_i$ to find such short paths between the endpoints of each edge $e$ in $G / F_i$ and take them to be the embedding paths (here we lose an $m^{o(1)}$ factor in the length of the paths due to the data structure).

It remains to observe that each spanner $\SS_i$ has a nearly linear number of edges because each graph $H_i$ has average degree close to its minimum degree, and edges are sampled independently with probability $p_i$. Thus, letting $\SS(G, F_i)$ be the union of all graphs $\SS_i$ and using that each vertex is in at most $\O(1)$ graphs $H_i$, we conclude the desired sparsity bound on $\SS(G, F_i)$. We take $\Pi_{G/F_i \to \SS(G, F_i)}$ to be the union of the embeddings constructed above and observe that the length of embedding paths is at most $m^{o(1)}$ as desired.

\paragraph{The Dynamic Algorithm.} To make the above algorithm dynamic, let us assume that there is a spanner $\SS(G, F_i)$ with corresponding embedding $\Pi_{G/F_i \to \SS(G, F_i)}$ and after its computation, a batch of updates $U$ is applied to $G/F_i$ (consisting of edge insertions/deletions and vertex splits). Clearly, after forwarding the updates $U$ to the current spanner $\SS(G, F_i)$, by deleting edges that were deleted from $G/F_i$ and splitting vertices, we have that for some edges $e \in G /F_i$, the updated embedding $\Pi_{G/F_i \to \SS(G, F_i)}(e)$ might no longer be a proper path.

We therefore need to add new edges to $\SS(G, F_i)$ and fix the embedding. We start by defining $S$ to be the vertices that are touched by an update in $U$, meaning for the deletion/insertion of edge $(u,v)$ we add $u$ and $v$ to $S$ and for a vertex split of $v$ into $v$ and $v'$, we add $v$ and $v'$ to $S$. Note that $|S| \leq 2|U|$ and that all $\Pi_{G/F_i \to \SS(G, F_i)}(e)$ that are no longer proper paths intersect with $S$. 

We now fix the embedding by constructing a new static spanner on a special graph $J$ over the vertices of $S$. More precisely, for each $e = (a,b)$ in $G / F_i$ where $\Pi_{G/F_i \to \SS(G, F_i)}(e)$ intersects with $S$, we find the vertices $\hat{a},\hat{b}$ in $S$ that are closest to $a$ and $b$ on $\Pi_{G/F_i \to \SS(G, F_i)}(e)$, and then insert an edge $\hat{e} = (\hat{a}, \hat{b})$ into the graph $J$. We say that $e$ is the pre-image of $\hat{e}$ (and $\hat{e}$ the image of $e$ in $J$). 

Finally, we run the static algorithm from the last paragraph to find a sparsifier $\tilde{J}$ of $J$ and let $\Pi_{J \to \tilde{J}}$ be the corresponding embedding. Then, for each edge $\hat{e}$ that was sampled into $\tilde{J}$, we add its pre-image $e$ to the current sparsifier $\SS(G, F_i)$. 

To fix the embedding, for each $\hat{e} = (\hat{a}, \hat{b}) \in \tilde{J}$, we observe that since $e = (a,b)$ was added to $\SS(G, F_i)$, we can simply embed the edge into itself. We define for each such edge $\hat{e}$ the path 
\[
P_{\hat{e}} = \Pi_{G/F_i \to \SS(G, F_i)}(e)[\hat{a}, a] \oplus (a,b) \oplus \Pi_{G/F_i \to \SS(G, F_i)}(e)[b, \hat{b}]
\]
which is a path between the endpoints of $\hat{e}$. This path is in the current graph $\SS(G, F_i)$ since we added $(a,b)$ to the spanner and by definition of $\hat{a}$, we have that $\Pi_{G/F_i \to \SS(G, F_i)}(e)[\hat{a}, a]$ is still a proper path, the same goes for $\hat{b}$.

But this means we can embed each edge $f = (c,d)$ even if its image $\hat{f} = (\hat{c}, \hat{d}) \not\in \tilde{J}$, since we can simply set it to the path
\[
    \Pi_{G/F_i \to \SS(G, F_i)}(f)[c, \hat{c}] \oplus \left( \bigoplus_{\hat{e} \in \Pi_{J \to \tilde{J}}(\;\hat{f}\;)} P_{\hat{e}} \right)  \oplus \Pi_{G/F_i \to \SS(G, F_i)}(f)[\hat{d}, d]. 
\]
By the guarantees from the previous paragraph, we have that the sparsifier $\tilde{J}$ has average degree $\O(1)$, and we only added the pre-images of edges in $\tilde{J}$ to $\SS(G, F_i)$. Since $J$ (and $\tilde{J}$) are taken over the vertex set $S$, we can conclude that we only cause $\O(|S|) = \O(|U|)$ recourse to the spanner. Further, since each new path $\Pi_{G \to \SS(G, F_i)}(e)$ for each $e$ now consists of $\O(1)$ path segments from the old embedding $\Pi_{G \to \SS(G, F_i)}$ (plus $\O(1)$ edges), the maximum length of the the embedding paths has only increased by a factor of $\O(1)$ overall. %
Finally, we take $D$ to be the set of edges on $P_{\hat{e}}$ for all $\hat{e} \in \tilde{J}$. Clearly, each edge $f$ embeds into a subpath of its previous embedding path (to reach the first and last vertex in $S$) and into some paths $P_{\hat{e}}$ all of which now have edges in $D$. To bound the size of $D$, we observe that also each path $P_{\hat{e}}$ is of short length since it is obtained from combining two old embedding paths (which were short) and a single edge. Thus, we have $|D| = |\bigcup_{\hat{e} \in \tilde{J}} P_{\hat{e}}\;| = \O(|\tilde{J}|) = \O(|U|)$ which again is only $\O(1)$ when amortizing over the number of updates. \cref{fig:SpannerMaintenance} gives an example of this spanner maintenance procedure in action.
\begin{figure}
  \centering
  \includegraphics[width=\linewidth]{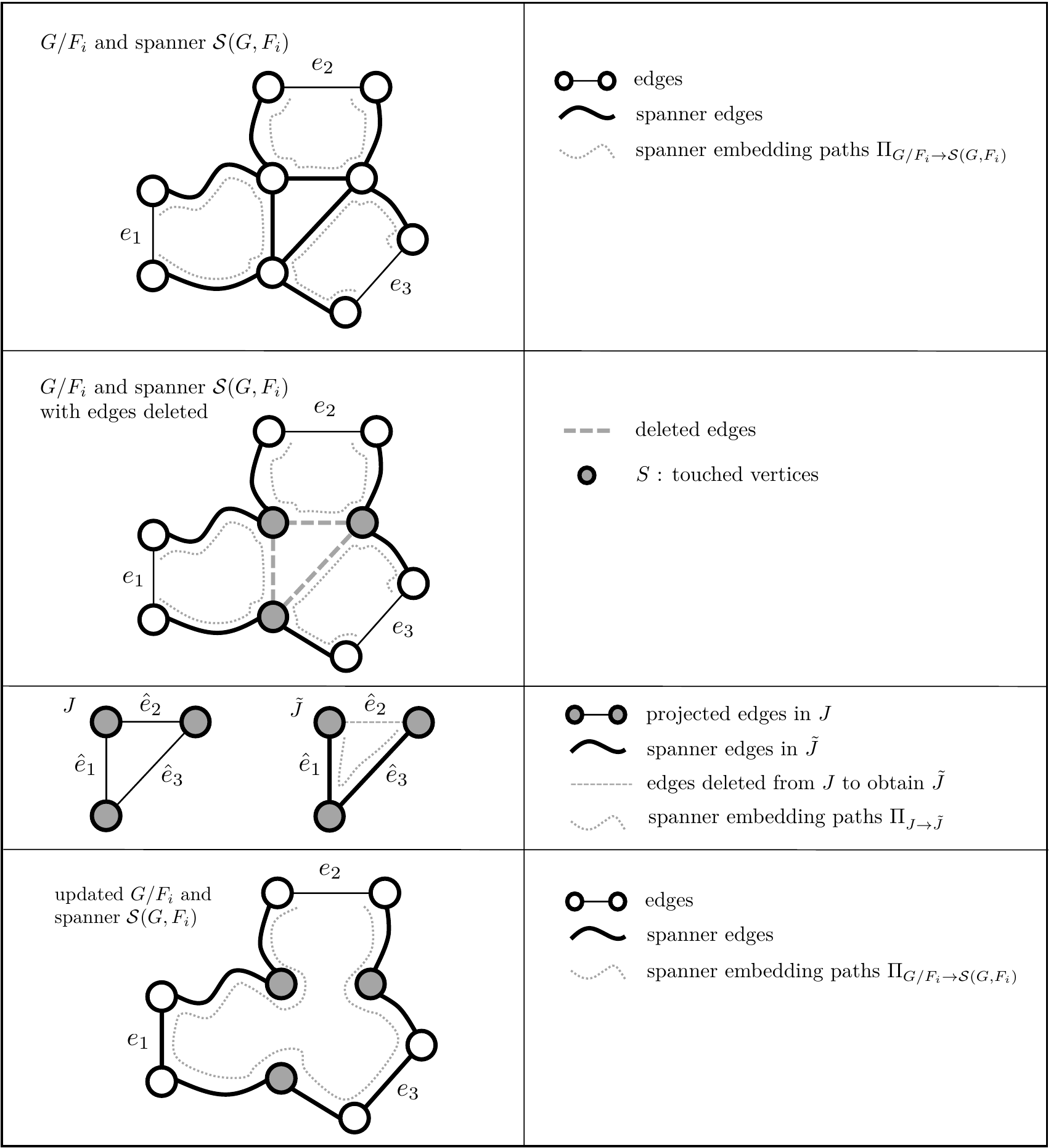}
\caption[Maintaining the Sparsifier under edge deletions]{Illustration of the procedure for maintaining  $\SS(G, F_i)$ under edge deletions.}
\label{fig:SpannerMaintenance}
\end{figure}

By using standard batching techniques, we can also deal with sequences of update batches $U^{(1)}, U^{(2)}, \ldots$ to the spanner and ensure that we cause only $m^{o(1)}$ amortized recourse per update/ size of $D$ to the spanner.

\paragraph{An Efficient Implementation.} While the algorithm above achieves low recourse, so far, we have not reasoned about the run-time. To do so, we enforce low \emph{vertex-congestion} of $\Pi_{G/F_i \to \SS(G, F_i)}$ defined to be the maximum number of paths $\Pi_{G/F_i \to \SS(G, F_i)}(e)$ that any vertex $v \in V(G/F_i)$ occurs on. More precisely, we implement the algorithm above such that the vertex congestion of $\Pi_{G/F_i \to \SS(G, F_i)}(e)$ remains of order $\gamma_c \Delta_{max}(G/F_i)$ for some $\gamma_c = m^{o(1)}$ over the entire course of the algorithm. We note that by a standard transformation, we can assume wlog that $\Delta_{max}(G/F_i) = \O(k)$.

Crucially, using our bound on the vertex congestion, we can argue that the graph $J$ has maximum degree $\gamma_c \Delta_{max}(G/F_i)$. Since we can implement the static spanner algorithm in time near-linear in the number of edges, this implies that the entire algorithm to compute a sparsifier $\tilde{J}$ only takes time $\sim |U| \gamma_c \Delta_{max}(G /F_i) \approx |U| m^{o(1)}k$, and thus in amortized time $k m^{o(1)}$ per update.

It remains to obtain this vertex congestion bound. Let us first discuss the static algorithm. Previously, we exploited that each sparsifier $\SS_i$ is expander since it is a cut-sparsifier of $H_i$ in a rather crude way. But it is not hard to see via the multi-commodity max-flow min-cut theorem \cite{leighton1999multicommodity} that this property can be used to argue the existence of an embedding $\Pi_{H_i \to \SS_i}$ that uses each edge in $\SS_i$ on at most $\O(1/p_i)$ embedding paths and therefore each path has average length $\O(1)$. In fact, using the shortest paths data structures on expanders \cite{CS21}, we can find such an embedding and turn the average length guarantee into a worst-case guarantee.

This ensures that each edge has congestion at most $\O(1/p_i) = \O(\Delta_{max}(G/F_i))$ and because $\SS(G, F_i)$ has average degree $\O(1)$, this also bounds the vertex congestion. We need to refine this argument carefully for the dynamic version but can then argue that due to the batching we only increase the vertex congestion slightly. We refer the reader to \Cref{sec:spanner} for the full implementation and analysis.

\section{Preliminaries}
\label{sec:prelim}

\paragraph{Model of Computation.} In this article, for problem instances encoded with $z$ bits, all algorithms work in fixed-point arithmetic where words have $O(\log^{O(1)} z)$ bits, i.e. we prove that all numbers stored are in $[\exp(-\log^{O(1)}z), \exp(\log^{O(1)}z)]$.

\paragraph{General notions.} We denote vectors by boldface lowercase letters. We use uppercase boldface to denote matrices. Often, we use uppercase matrices to denote the diagonal matrices corresponding to lowercase vectors, such as $\mL = \diag(\bell)$. For vectors $\bx, \by$ we define the vector $\bx \circ \by$ as the entrywise product, i.e. $(\bx \circ \by)_i = \bx_i\by_i$. We also define the entrywise absolute value of a vector $|\bx|$ as $|\bx|_i = |\bx_i|$. We use $\l \cdot, \cdot \r$ as the vector inner product: $\l \bx, \by \r = \bx^\top\by = \sum_i \bx_i\by_i$. We elect to use this notation when $\bx, \by$ have superscripts (such as time indexes) to avoid cluttering. For positive real numbers $a, b$ we write $a \approx_{\alpha} b$ for some $\alpha > 1$ if $\alpha^{-1}b \le a \le \alpha b$. For positive vectors $\bx, \by \in \R^{[n]}_{>0}$, we say $\bx \approx_{\alpha} \by$ if $\bx_i \approx_{\alpha} \by_i$ for all $i \in [n]$. This notion extends naturally to positive diagonal matrices. We will need the standard Chernoff bound.

\begin{theorem}[Chernoff Bound]\label{thm:chernoffBound}
Suppose $X_1, X_2, \ldots, X_k \in [0,W]$ are independent random variables, $X = \sum_i X_i$ and $\mu = \mathbb{E} X$. For any $\delta \ge 1/2$, we have $\mathbb{P}[X \in [(1-\delta)\mu, (1+\delta)\mu]] \ge 1 - 2e^{-\frac{\delta \mu}{6W}}$.
\end{theorem}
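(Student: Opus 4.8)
The plan is the standard exponential-moment (Chernoff) argument, followed by a short check that the clean constant in the exponent is valid throughout the regime $\delta\ge 1/2$. First I would reduce to $W=1$: replacing each $X_i$ by $X_i/W$ replaces $X$ by $X/W$ and $\mu$ by $\mu/W$, leaves the event $\{X\in[(1-\delta)\mu,(1+\delta)\mu]\}$ unchanged, and turns the target bound $e^{-\delta\mu/(6W)}$ into $e^{-\delta\mu/6}$; hence it suffices to assume $X_i\in[0,1]$. We may also assume $\mu>0$, since $\mu=0$ forces $X=0$ almost surely and the claim is trivial.

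For the upper tail, fix $t>0$. By Markov's inequality applied to $e^{tX}$, independence, and the convexity bound $e^{tx}\le 1+(e^t-1)x$ valid for $x\in[0,1]$, we get $\mathbb{E}\,e^{tX_i}\le 1+(e^t-1)\mathbb{E}\,X_i\le \exp\big((e^t-1)\mathbb{E}\,X_i\big)$, hence $\mathbb{E}\,e^{tX}\le \exp\big((e^t-1)\mu\big)$ and therefore $\mathbb{P}[X\ge(1+\delta)\mu]\le \exp\big((e^t-1)\mu-t(1+\delta)\mu\big)$. Taking $t=\ln(1+\delta)$ gives the familiar $\mathbb{P}[X\ge(1+\delta)\mu]\le \big(e^{\delta}/(1+\delta)^{1+\delta}\big)^{\mu}$. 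An elementary estimate, namely $(1+\delta)\ln(1+\delta)-\delta\ge \delta^2/3$ for $0\le\delta\le 1$ and $\ge\delta/3$ for $\delta\ge 1$ (each verified by noting the relevant function and its first derivative vanish at the appropriate endpoint and controlling the second derivative), bounds this by $e^{-\delta^2\mu/3}$ when $1/2\le\delta\le 1$ and by $e^{-\delta\mu/3}$ when $\delta\ge 1$; since $\delta\ge 1/2$, in both cases this is at most $e^{-\delta\mu/6}$.

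For the lower tail: when $\delta\ge 1$ we have $(1-\delta)\mu\le 0\le X$, so $\mathbb{P}[X<(1-\delta)\mu]=0$. When $1/2\le\delta<1$, the symmetric argument with $t=\ln(1-\delta)<0$ yields $\mathbb{P}[X\le(1-\delta)\mu]\le \big(e^{-\delta}/(1-\delta)^{1-\delta}\big)^{\mu}\le e^{-\delta^2\mu/2}$, using $(1-\delta)\ln(1-\delta)+\delta\ge \delta^2/2$ (proved the same way), and $\delta\ge 1/2$ makes this at most $e^{-\delta\mu/6}$. Combining the two tails with a union bound, $\mathbb{P}\big[X\notin[(1-\delta)\mu,(1+\delta)\mu]\big]\le e^{-\delta\mu/6}+e^{-\delta\mu/6}=2e^{-\delta\mu/6}$, which is the claim.

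Since every step is classical, there is no genuine obstacle; the only point needing mild care is verifying that the single constant $6$ in the exponent is implied by the two classical tail estimates ($\delta^2/3$ for the upper tail, $\delta^2/2$ for the lower tail) across the whole range $\delta\ge 1/2$, together with treating $\delta\ge 1$ separately for the lower tail (where the bad event is empty), which the case analysis above handles.
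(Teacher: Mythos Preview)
The paper does not prove this statement; it is stated in the preliminaries as the standard Chernoff bound and used as a black box. Your argument is the classical exponential-moment proof and is correct: the reduction to $W=1$, the convexity bound $e^{tx}\le 1+(e^t-1)x$ on $[0,1]$, the optimized choice of $t$, the elementary inequalities $(1+\delta)\ln(1+\delta)-\delta\ge\min(\delta^2/3,\delta/3)$ and $(1-\delta)\ln(1-\delta)+\delta\ge\delta^2/2$, and the case split on $\delta\ge 1$ for the lower tail all go through, and the constant $6$ is indeed implied across the whole range $\delta\ge 1/2$.
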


\paragraph{Graphs.} In this article, we consider multi-graphs $G$, with edge set $E(G)$ and vertex set $V(G)$. When the graph is clear from context, we use the short-hands $E$ for $E(G)$, $V$ for $V(G)$, $m = |E|, n = |V|$. We assume that each edge $e \in E$ has an implicit direction, used to define its edge-vertex incidence matrix $\mB$. Abusing notation slightly, we often write $e = (u,v) \in E$ where $e$ is an edge in $E$ and $u$ and $v$ are the tail and head of $e$ respectively (note that technically multi-graphs do not allow for edges to be specified by their endpoints). We let $\rev(e)$ be the edge $e$ reversed: if $e = (u, v)$ points from $u$ to $v$, then $\rev(e)$ points from $v$ to $u$.  

We say a flow $\bf \in \R^E$ routes a demand $\bd \in \R^V$ if $\mB^\top\bf=\bd$. For an edge $e = (u, v) \in G$ we let $\bb_e \in \R^V$ denote the demand vector of routing one unit from $u$ to $v$.

We denote by $\deg_G(v)$ the degree of $v$ in $G$, i.e. the number of incident edges. We let $\Delta_{\max}(G)$ and $\Delta_{\min}(G)$ denote the maximum and minimum degree of graph $H$. We define the volume of a set $S \subseteq V$ as $\vol_G(S) \defeq \sum_{v \in S} \deg_G(v)$. 

\paragraph{Dynamic Graphs.} \label{para:dynG} 
We say $G$ is a \emph{dynamic} graph, if it undergoes \emph{batches} $U^{(1)}, U^{(2)}, \ldots$ of updates consisting of edge insertions/ deletions and/or vertex splits that are applied to $G$. We stress that results on dynamic graphs in this article often only consider a subset of the update types and we therefore explicitly state for each dynamic graph which updates are allowed. We say that the graph $G$, after applying the first $t$ update batches $U^{(1)}, U^{(2)}, \ldots, U^{(t)}$, is at \emph{stage} $t$ and denote the graph at this stage by $G^{(t)}$. Additionally, when $G$ is clear, we often denote the value of a variable $x$ at the end of stage $t$ of $G$ by $x^{(t)}$, or a vector $\bx$ at the end of stage $t$ of $G$ by $\bx^{(t)}$. 

For each update batch $U^{(t)}$, we encode edge insertions by a tuple of tail and head of the new edge and deletions by a pointer to the edge that is about to be deleted. We further also encode vertex splits by a sequence of edge insertions and deletions as follows: if a vertex $v$ is about to be split and the vertex that is split off is denoted $v^{\text{NEW}}$, we can delete all edges that are incident to $v$ but should be incident to $v^{\text{NEW}}$ from $v$ and then re-insert each such edge via an insertion (we allow insertions to new vertices, that do not yet exist in the graph). 

For technical reasons, we assume that in an update batch $U^{(t)}$, the updates to implement the vertex splits are last, and that we always encode a vertex split of $v$ into $v$ and $v^{\text{NEW}}$ such that $\deg_{G^{(t+1)}}(v^{\text{NEW}}) \leq \deg_{G^{(t+1)}}(v)$. We let the vertex set of graph $G^{(t)}$ consist of the union of all endpoints of edges in the graph (in particular if a vertex is split, the new vertex $v^{\text{NEW}}$ is added due to having edge insertions incident to this new vertex $v^{\text{NEW}}$ in $U^{(t)}$).

$\Enc(u)$ of an update $u \in U^{(t)}$ be the size of the encoding of the update and note that for edge insertions/ deletions, we have $\Enc(u) = \tilde{O}(1)$ and for a vertex split of $v$ into $v$ and $v^{\text{NEW}}$ as described above we have $\Enc(u) = \tilde{O}(\deg_{G^{(t+1)}}(v^{\text{NEW}}))$. For a batch of updates $U$, we let $\Enc(U) = \sum_{u \in U} \Enc(u)$. In this article, we only consider dynamic graphs where the total size of the encodings of all update batches is polynomially bounded in the size of the initial graph $G^{(0)}$.

We point out in particular that the number of updates $|U|$ in an update batch $U$ can be completely different from the actual encoding size $\Enc(U)$ of the update batch $U$.

\paragraph{Paths, Flows, and Trees.} Given a path $P$ in $G$ with vertices $u,v$ both on $P$, then we let $P[u,v]$ denote the path segment on $P$ from $u$ to $v$. We note that if $v$ precedes $u$ on $P$, then the segment $P[u,v]$ is in the reverse direction of $P$. For forests $F$, we similarly define $F[u,v]$ as the path from $u$ to $v$ along edges in the forest $F$. We ensure that $u, v$ are in the same connected component of $F$ whenever this notation is used.

We let $\bp(F[u,v]) \in \R^{E(G)}$ denote the flow vector routing one unit from $u$ to $v$ along the path in $F$. In this way, $|\bp(F[u,v])|$ is the indicator vector for the path from $u$ to $v$ on $F$. Note that $\bp(F[u,v])+\bp(F[v,w])=\bp(F[u,w])$ for any vertices $u, v, w \in V$.

The \emph{stretch} of $e = (u, v)$ with respect to a tree $T$ is defined as
\[ \str^{T,\bell}_e \defeq 1 + \frac{\langle \bell, |\bp(T[u, v])|\rangle}{\bell_e} = 1 + \frac{\sum_{e' \in T[u,v]} \bell_{e'}}{\bell_e}. \]
This differs slightly from the more common definition of stretch because of the $1+$ term -- we do this to ensure that $\str^{T,\bell}_e \ge 1$ for all $e$. It is known how to efficiently construct trees with polylogarithmic average stretch with respect to underlying weights. These are called low-stretch spanning trees (LSSTs).
\begin{theorem}[Static LSST \cite{AN19:journal}]
\label{thm:an}
Given a graph $G = (V, E)$ with lengths $\bell \in \R^E_{>0}$ and weights $\bv \in \R^E_{>0}$ there is an algorithm that runs in time $\O(m)$ and computes a tree $T$ such that $\sum_{e \in E} \bv_e \str^{T, \bell}_e \le O(\|\bv\|_1 \log n \log \log n).$
\end{theorem}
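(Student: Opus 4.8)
This is the classical low‑stretch spanning tree guarantee, cited here from \cite{AN19:journal}; below is a sketch of how one establishes it, i.e. the petal‑decomposition construction of Abraham--Neiman together with the (essentially free) weighted generalization.

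\textbf{Step 1: Reduce to a recursive metric decomposition with a measure.} A spanning tree $T$ of $G$ is a connected subgraph, and $\langle \bell, |\bp(T[u,v])|\rangle$ is exactly the $T$-distance between $u$ and $v$, which is at least $d_G(u,v)$, the shortest‑path distance of $(G,\bell)$. So it suffices to build $T$ recursively on $G$ using the lengths $\bell$ for all geometric decisions, while the weights $\bv$ enter only as a \emph{measure} $\mu$ on $E$: they are consulted only when choosing ball‑growing radii and in the final accounting $\sum_e \mu_e\,\str^{T,\bell}_e$. The point is that the per‑edge expected/worst‑case stretch bound the construction produces will be measure‑oblivious, so once we prove a bound of the form ``$\sum_e \mu_e\,\str^{T,\bell}_e \le O(\mu(E)\log n\log\log n)$ for an arbitrary measure $\mu$'', instantiating $\mu=\bv$ gives the statement. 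I would also apply the standard aspect‑ratio reduction (bucketing edge lengths into $O(\log n)$ scales and handling each scale's contribution separately), so that each recursive call sees a piece whose aspect ratio is $\poly(n)$ and the recursion has depth $O(\log n)$. The ``$1+$'' in the definition of $\str$ contributes only an additive $\|\bv\|_1$, absorbed into the $O(\cdot)$.

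\textbf{Step 2: The petal (star) decomposition.} The core subroutine takes a connected piece $X$ with a designated center $x$ and radius $\Delta \ge \max_{y\in X} d(x,y)$, and outputs a partition into a central ball $X_0\ni x$ and petals $X_1,\dots,X_t$, where each petal carries a bridge edge $b_j$ connecting it to $X_0$, such that: (a) each $X_j$ has radius $\le \Delta/4$ (say) from a chosen center, and $X_0$ has radius $\le \Delta$; and (b) each $b_j$ is a near‑shortest connection, so that concatenating subtree paths through $b_j$ loses only a constant factor per level. One grows $X_0$, and then carves each petal as a cone around a currently‑farthest point, via truncated Dijkstra; the growth radius is chosen inside a target window $[\rho,2\rho]$ so that the $\mu$‑measure of the edges cut by the selected boundary is small relative to the $\mu$‑measure of the region swept. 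An averaging argument over the window of admissible radii — together with the cone/petal carving rather than a plain star — is exactly what lets the per‑level overhead be $O(\log\log n)$ rather than $O(\log n)$. Then recurse on each $X_j$ (center: an endpoint of $b_j$) and glue the returned subtrees via the $b_j$'s, yielding a spanning subtree of $G$.

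\textbf{Step 3: Stretch accounting and running time.} Fix $e=(u,v)$ with length $\bell_e$. As the recursion proceeds, $e$ contributes nothing while $u,v$ stay in the same piece; let $\ell^*$ be the level where the current diameter first drops to $\Theta(\bell_e)$ (or where $u,v$ first get separated). At each level above $\ell^*$ where the ball‑growing boundary separates $u$ from $v$, $e$ pays a constant‑factor detour relative to the current diameter, and these costs sum geometrically down the levels to $O(\Delta^\ast/\bell_e)$. The radius‑selection rule guarantees that, \emph{summed over all edges}, the $\mu$‑weighted charge at any single level is $O(\log\log n)\,\mu(X)$; multiplying by the $O(\log n)$ recursion depth gives $\sum_e \mu_e\,\str^{T,\bell}_e \le O(\mu(E)\log n\log\log n)$, i.e. the claimed bound. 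For the running time, each ball‑/cone‑growing step is a truncated Dijkstra; with bucketed priority queues (or by charging to edge relaxations) the total work per recursion level is $\O(m)$, maintaining the per‑piece measures incrementally so that radius selection is cheap, for $\O(m)$ total over $O(\log n)$ levels.

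\textbf{Main obstacle.} The delicate part is getting the $\log\log n$ (rather than a second $\log n$) factor while keeping everything deterministic and near‑linear: this is precisely the petal/cone carving with the averaging‑over‑a‑window choice of radii and its amortized implementation, which is the technical heart of \cite{AN19:journal}. The weighted generalization itself is comparatively routine, since the decomposition never needs to know $\bv$ beyond using it as the measure in radius selection, and the per‑edge stretch bound is measure‑oblivious.
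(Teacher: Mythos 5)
The paper does not prove this theorem; it is stated as an external citation to \cite{AN19:journal}, and the weighted form (with arbitrary weights $\bv$ entering only as a measure for radius selection) is the form stated there. Your sketch correctly reproduces the high-level structure of that cited argument: the petal/star decomposition, the averaging-over-a-window radius choice giving an $O(\log\log n)$ per-level overhead, the $O(\log n)$ recursion depth, the observation that the construction is measure-oblivious so weights $\bv$ generalize for free, and the truncated-Dijkstra implementation for $\O(m)$ time. So there is nothing in the paper's "proof" to compare against beyond the citation; your sketch is a faithful outline of the reference's approach, and the one place you hedge (the exact constant on petal radii) is indeed where the real proof needs the more careful cone-carving invariants to make the geometric decay work.
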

We let $\gamma_{LSST} \defeq O(\log n \log \log n)$.

\paragraph{Graph Embeddings.} %
Given graphs $G$ and $H$ with $V(G) \subseteq V(H)$, we say that $\Pi_{G \xrightarrow{} H}$ is an \emph{graph-embedding} from $G$ into $H$ if it maps each edge $e^G = (u,v) \in E(G)$ to a $u$-$v$ path $\Pi_{G \xrightarrow{} H}(e^G)$ in $H$. We define congestion of an edge $e^H$ by $\econg(\Pi_{G \xrightarrow{} H}, e^H) \defeq |\{ e^G \in E(G) \;|\; e^H \in \Pi_{G \xrightarrow{} H}(e^G) \}|$ and of the embedding by $\econg(\Pi_{G \xrightarrow{} H}) \defeq \max_{e^H \in E(H)} \econg(\Pi_{G \xrightarrow{} H}, e^H)$. Analogously, the congestion of a vertex $v^H \in V(H)$ is defined by $\vcong(\Pi_{G \xrightarrow{} H}, v^H) \defeq |\{ e^G \in E(G) \;|\; v^H \in \Pi_{G \xrightarrow{} H}(e^G) \}|$ and the 
vertex-congestion of the embedding by $\vcong(\Pi_{G \xrightarrow{} H}) \defeq \max_{v^H \in V(H)} \vcong(\Pi_{G \xrightarrow{} H}, v^H)$. We define the length by $\length(\Pi_{G \xrightarrow{} H}) \defeq \max_{e^G \in E(G)} |\Pi_{G \xrightarrow{} H}(e^G)|$.
We let (boldface) $\bPi_{G\to H}(e) \in \R^E$ for $e = (u, v)$ denote a vector representing the flow from $u \to v$. Thus $\mB^\top\bPi_{G\to H}(e) = \bb_e$.

For a path $p_1$ with endpoints $u \to v$ and $p_2$ with endpoints $v \to w$, we define $p_1 \oplus p_2$ as the concatenation, which is a path from $u \to w$.

Sometimes we consider the edges that route into an edge $e \in E(H)$.
Given graphs $G, H$ and embedding $\Pi_{G\to H}$, for edge $e \in E(H)$ we define $\Pi_{G\to H}^{-1}(e) \defeq \left\{e' \in G : e \in \Pi(e') \right\}.$ The notation is natural if we think of $\Pi$ as a function from an edge $e$ to the set of edges in its path, and hence $\Pi^{-1}$ is the inverse/preimage of a one-to-many function.

\paragraph{Dynamic trees.} Our algorithms make heavy use of dynamic tree data structures, so we state a lemma describing the variety of operations that can be supported on a dynamic tree.
This includes path updates either of the form adding a directed flow along a tree path, or adding a positive value to each edge on a tree path. Additionally, the data structure can support changing edges in the tree, and querying flow values on edge. Each of these operations can be performed in amortized $\O(1)$ time.
\begin{lemma}[Dynamic trees, see \cite{ST83}]
\label{algo:LCT}
There is a deterministic data structure $\mathcal{D}^{(T)}$ that maintains a dynamic tree $T \subseteq G = (V, E)$ under insertion/deletion of edges with gradients $\bg$ and lengths $\bell$, and supports the following operations:
\begin{enumerate}
    \item Insert/delete edges $e$ to $T$, under the condition that $T$ is always a tree, or update the gradient $\bg_e$ or lengths $\bell_e$. The amortized time is $\O(1)$ per change.
    \item For a path vector $\bDelta = \bp(T[u,v])$ for some $u, v \in V$, return $\l \bg, \bDelta \r$ or $\l \bell, |\bDelta|\r$ in time $\O(1)$.
    \item \label{item:nonpositiveflow} Maintain a flow $\bf \in \R^E$ under operations $\bf \assign \bf + \eta\bDelta$ for $\eta \in \R$ and path vector $\bDelta = \bp(T[u,v])$, or query the value $\bf_e$ in amortized time $\O(1)$.
    \item \label{item:positiveflow} Maintain a positive flow $\bf \in \R^E_{>0}$ under operations $\bf \assign \bf + \eta|\bDelta|$ for $\eta \in \R_{\ge0}$ and path vector $\bDelta = \bp(T[u,v])$, or or query the value $\bf_e$ in amortized time $\O(1)$.
    \item $\textsc{Detect}()$. For a fixed parameter $\eps$, and under positive flow updates (item \ref{item:positiveflow}), where $\bDelta^{(t)}$ is the update vector at time $t$, returns
    \begin{align}
        \label{eq:detect} S^{(t)} \defeq \left\{ e \in E : \bell_e \sum_{t' \in [\last^{(t)}_e+1,t]} |\bDelta_e^{(t')}| \ge \eps \right\}
    \end{align}
    where $\last^{(t)}_e$ is the last time before $t$ that $e$ was returned by $\textsc{Detect}()$. Runs in time $\O(|S^{(t)}|)$.
\end{enumerate}
\end{lemma}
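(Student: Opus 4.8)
The plan is to instantiate the link-cut tree of Sleator and Tarjan~\cite{ST83}, which maintains a forest under \textsc{Link}/\textsc{Cut} and supports aggregate queries and ``lazy'' updates along any root-to-node path---hence, by composition through the two endpoints' lowest common ancestor, along any path $T[u,v]$---in $\O(1)$ amortized time, and then to check that each of the five items reduces to such standard primitives. We store the data $(\bg_e,\bell_e)$ of an edge $e=(u,v)$ at whichever endpoint of $e$ is the child in the current rooting (equivalently, we use the ``edges as nodes'' variant). Item~1 is then \textsc{Link}/\textsc{Cut} together with a point update of that stored pair. For item~2 we keep in every splay node the subtree aggregates $\sum_e\bell_e$ and $\sum_e\sigma_e\bg_e$, where $\sigma_e\in\{\pm1\}$ tracks the orientation of $e$ relative to the current preferred path; then $\l\bell,|\bp(T[u,v])|\r$ is read off directly and $\l\bg,\bp(T[u,v])\r$ is obtained after the usual evert/reroot bookkeeping that makes all signs consistent with the direction $u\to v$. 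Items~3 and~4 are path additions with point queries: $\bf\assign\bf+\eta\bDelta$ (resp.\ $\bf\assign\bf+\eta|\bDelta|$) for $\bDelta=\bp(T[u,v])$ pushes a lazy token ``add $\eta$ with the appropriate per-edge sign (resp.\ unsigned) to every edge of $T[u,v]$'', and querying $\bf_e$ exposes $e$ and reads its node.

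The only operation that is not already a textbook primitive is \textsc{Detect}. The point is that in item~4 the multiplier $\eta$ is nonnegative, so the quantity we must threshold, $A_e\defeq\bell_e\sum_{t'\in(\last_e,t]}|\bDelta^{(t')}_e|$, i.e.\ $\bell_e$ times the total flow pushed onto $e$ since it was last reported, is nondecreasing in $t$ between consecutive reports of $e$. We therefore maintain $A_e$ as an explicit resettable accumulator: whenever we execute $\bf\assign\bf+\eta|\bp(T[u,v])|$ we also add $\eta\,\bell_e$ to $A_e$ for every $e\in T[u,v]$, which is itself a lazy path addition once each splay node stores $\sum_e\bell_e$ over its subtree; a change of $\bell_e$ triggers the point rescaling $A_e\assign A_e\cdot\bell_e^{\mathrm{new}}/\bell_e^{\mathrm{old}}$, and a freshly inserted edge starts with $A_e=0$. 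In addition we augment the dynamic tree so that every node knows the maximum of $A_e$ over its subtree; this is the standard subtree-aggregate augmentation, maintained by combining each splay node's own aggregate with the aggregate over its path-children.

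A call $\textsc{Detect}()$ then loops the following while the subtree-max of $A$ at the root of the relevant tree is $\ge\eps$: descend (through the splay structure and the path-child links, guided by the stored maxima) to a concrete edge $e$ with $A_e\ge\eps$, append $e$ to the output, perform the point reset $A_e\assign0$ --- which is precisely the event ``$e$ is reported'', implicitly recording $\last_e=t$ --- and repair the aggregates on the touched paths; it halts once the global maximum drops below $\eps$. Each reported edge costs one exposure plus $\O(1)$ point/aggregate work, i.e.\ $\O(1)$ amortized, so the call runs in $\O(|S^{(t)}|)$ time; and since $A_e$ at the instant of the call equals precisely the left-hand quantity in~\eqref{eq:detect}, the set returned is exactly $S^{(t)}$.

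I do not expect a deep obstacle---everything rests on~\cite{ST83}---but two points need care. First, the sign handling in items~2 and~3: edges carry fixed implicit orientations while a tree path may traverse an edge in either direction and either endpoint may be the deeper one, so $\l\bg,\bp(T[u,v])\r$ and the signed path addition must be assembled so that each edge contributes with the sign of the path's traversal direction; this is exactly what the evert/reroot machinery of link-cut trees provides, but it must be threaded consistently through the lazy update of item~3 as well. Second, for \textsc{Detect} the reset times $\last_e$ genuinely differ across edges, so there is no single global time window to track; maintaining a per-edge accumulator and resetting it individually is what makes the accounting correct, and this is sound precisely because $\eta\ge0$ in item~4 forces each $A_e$ to be monotone between resets.
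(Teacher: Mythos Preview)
Your treatment of items~1--4 is fine and matches the paper, which simply declares them ``standard.'' The gap is in your implementation of \textsc{Detect}. You propose to maintain $A_e=\bell_e\sum_{t'}|\bDelta_e^{(t')}|$ by, on each update $\bf\assign\bf+\eta|\bp(T[u,v])|$, adding $\eta\bell_e$ to $A_e$ for every $e$ on the path, and you assert this ``is itself a lazy path addition once each splay node stores $\sum_e\bell_e$ over its subtree.'' But this is a \emph{non-uniform} path addition: each edge on the path receives a different increment $\eta\bell_e$. Storing $\sum_e\bell_e$ lets you lazily maintain the \emph{sum} $\sum_e A_e$ under such updates, but not the \emph{max}: when you push a pending tag $\eta$ to a child, the child's new subtree-max is $\max_e(A_e+\eta\bell_e)$, which is not determined by the old max together with any $O(1)$ aggregate---it depends on the full upper envelope of the lines $\eta\mapsto A_e+\eta\bell_e$. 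So your descent ``guided by the stored maxima'' cannot be carried out in $\O(1)$ amortized per operation with a standard augmentation.

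The paper sidesteps this by normalizing: instead of $A_e$ it maintains
\[
B_e \;=\; \sum_{t'\in(\last_e,t]}|\bDelta_e^{(t')}|\;-\;\frac{\eps}{\bell_e},
\]
so that the update becomes the \emph{uniform} path addition $B_e\mathrel{+}=\eta$ (just another instance of item~\ref{item:positiveflow}), the detection condition becomes the uniform threshold $B_e\ge 0$, and subtree-max is maintained in the usual way. Reporting $e$ then resets $B_e\gets -\eps/\bell_e$. This is the one-line fix to your scheme; everything else in your proposal (including the per-edge reset and the ``loop on global max $\ge$ threshold'' extraction) carries over unchanged.
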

\begin{proof}
Every operation described is standard except for \textsc{Detect}, which we now give an algorithm for.
Note that \eqref{eq:detect} is equivalent to the following:
\begin{align*}
    \sum_{t' \in [\last^{(t)}_e+1, t]} |\bDelta_e^{(t')}| -\frac{\eps}{\bell_e} \ge 0.
\end{align*}
This value can be maintained using positive flow updates (item \ref{item:positiveflow}), i.e. $\Abs{\bDelta}$ to a tree path.
We reset the value of an edge $e$ to $-\eps / \bell_e$ once it is detected.
Locating and collecting edges satisfying \eqref{eq:detect} is reduced to finding edges with nonnegative values, which can be done in $\O(|S^{(t)}|)$ time by repeatedly querying the largest value on the tree, and checking whether it is nonnegative.
\end{proof}
The \textsc{Detect} operation allows our algorithm to decide when we need to change the gradients and lengths of an edge $e$ in our IPM.
\section{Potential Reduction Interior Point Method}
\label{sec:ipm}

The goal of this section is to present a primal-only potential reduction IPM \cite{K84} that solves the min-cost flow problem on a graph $G = (V, E)$ with demands $\bd \in \mathbb{Z}^V$, lower and upper capacities $\bu^-, \bu^+ \in \mathbb{Z}^E$, and costs $\bc \in \mathbb{Z}^E$ such that all integers are bounded by $U$:
\begin{align} \bf^* \defeq \argmin_{\substack{\mB^\top\bf=\bd \\ \bu^-_e \le \bf_e \le \bu^+_e \forall e \in E}} \bc^\top \bf. \label{eq:mincostopt} \end{align}

Instead of using the standard logarithmic barrier, we elect to use the barrier $x^{-\alpha}$ for small $\alpha$. This is because we do not know how to prove that the lengths encountered during the algorithms are quasipolynomially bounded for the logarithmic barrier. Precisely, we consider the following potential function, where $F^* \defeq \bc^\top \bf^*$ is the optimal value for \eqref{eq:mincostopt}, and $\alpha \defeq 1/(1000 \log mU)$. We assume that we know $F^*$, as running our algorithm allows us to binary search for $F^*$.
\begin{align}
    \Phi(\bf) \defeq 20m \log(\bc^\top \bf - F^*) + \sum_{e \in E} \left((\bu^+_e - \bf_e)^{-\alpha} + (\bf_e - \bu^-_e)^{-\alpha} \right) \label{eq:karmarkar}
\end{align}
We show in \cref{subsec:initialfinal} that we can initialize a flow $\bf$ on a larger graph (still with $O(m)$ edges) such that the potential $\Phi(\bf)$ is initially $O(m \log mU)$ (\cref{lemma:initialpoint}). Additionally, given a nearly optimal solution, we can recover an exactly optimal solution to the original min-cost flow problem in linear time (\cref{lemma:finalpoint}).
A simple observation is that if the potential is sufficiently small, then the cost of the flow is nearly optimal.
\begin{lemma}
\label{lemma:obvious}
We have $\bc^\top \bf - F^* \le \exp\left(\Phi(\bf)/(20m)\right).$ In particular, if $\Phi(\bf) \le -200m \log mU$ then $\bc^\top \bf - F^* \le (mU)^{-10}$.
\end{lemma}
\begin{proof}
From \eqref{eq:karmarkar} and the fact that $\bu^-_e \le \bf_e \le \bu^+_e$, we get
\[ \Phi(\bf) \ge 20m \log(\bc^\top \bf - F^*). \]
Rearranging this gives the desired result.
\end{proof}
Given a flow $\bf \in \R^E$ we define lengths $\bell \in \R_{>0}^E$ and gradients $\bg \in \R^E$ to capture the next $\ell_1$ problem we solve to decrease the potential.
\begin{definition}[Lengths and gradients]
\label{def:lg}
Given a flow $\bf \in \R^E$ we define lengths $\bell \in \R^E$ as
\begin{align}
    \bell(\bf)_e \defeq \left(\bu^+_e - \bf_e\right)^{-1-\alpha} + \left(\bf_e - \bu^-_e\right)^{-1-\alpha} \label{eq:bell}
\end{align}
and gradients $\bg \in \R^E$ as $\bg(\bf) \defeq \g\Phi(\bf)$. More explicitly,
\begin{align}
    \bg(\bf)_e \defeq \left[\g\Phi(\bf)\right]_e = 20m(\bc^\top\bf-F^*)^{-1}\bc_e + \alpha \left(\bu_e^+ - \bf_e\right)^{-1-\alpha} - \alpha \left(\bf_e - \bu^-_e\right)^{-1-\alpha} \label{eq:bg}
\end{align}
\end{definition}
The remainder of the section is split into three parts. In \cref{subsec:onestep} we show that approximately solving the cycle problem induced by gradients and lengths approximating those in \cref{def:lg} allows us to decrease the potential additively by an almost constant quantity in a single iteration. Then in \cref{subsec:stability} we bound how such iterations affect the lengths and gradients in order to show that approximate versions of them only need to be modified $m^{1 + o(1)}$ times across the entire algorithm, and in \cref{subsec:initialfinal} we discuss how to get an initial flow and extract an exact min-cost flow from a nearly optimal flow.

The following theorem summarizes the results of this section.
\begin{theorem}
\label{thm:IPM}
Suppose we are given a min-cost flow instance given by Equation~\eqref{eq:mincostopt}. Let $\bf^*$ denote an optimal solution to the instance.

For all $\kappa \in (0,1),$ there is a potential reduction interior point method for this problem, that, given an initial flow $\bf^{(0)} \in \R^E$ such that $\Phi(\bf^{(0)}) \le 200m\log mU,$ the algorithm proceeds as follows:

The algorithm runs for $\O(m\kappa^2)$ iterations. At each iteration,  let $\bg(\bf^{(t)}) \in \R^E$ denote that gradient and $\bell(\bf^{(t)}) \in \R^E_{>0}$ denote the lengths given by \cref{def:lg}.
Let $\wt{\bg} \in \R^E$ and $\wt{\bell} \in \R^E_{>0}$ be any vectors such that $\left\|\mL(\bf^{(t)})^{-1}\left(\wt{\bg} - \bg(\bf^{(t)})\right) \right\|_\infty \le \kappa/8$ and  $\wt{\bell} \approx_2 \bell(\bf)$.
\begin{enumerate}
\item At each iteration, the hidden circulation $\bf^* - \bf^{(t)}$ satisfies    
\[ \frac{\wt{\bg}^\top(\bf^* - \bf^{(t)})}{100m + \left\|\wt{\mL}(\bf^* - \bf^{(t)})\right\|_1} \le -\alpha/4. \]
    \item At each iteration, given any $\bDelta$ satisfying $\mB^\top\bDelta = 0$ and $\nicefrac{\wt{\bg}^\top \bDelta}{\left\|\wt{\mL}\bDelta\right\|_1} \le -\kappa,$ it 
updates $\bf^{(t+1)} \assign \bf^{(t)} + \eta \bDelta$ for $\eta \leftarrow \kappa^2/(50\cdot\abs{\wt{\bg}\Delta}).$
    \item At the end of $\O(m\kappa^2)$ iterations, we have $\bc^\top \bf^{(t)} \le \bc^{\top}\bf^* + (mU)^{-10}.$
\end{enumerate}
\end{theorem}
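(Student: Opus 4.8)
The plan is to assemble \cref{thm:IPM} from three ingredients, one for each subsection announced after the statement, and then telescope the potential $\Phi$. The ingredients are: (i) a \emph{witness‑quality} bound, that the unknown circulation $\bf^*-\bf$ already has min‑ratio objective $\le -\alpha/4$ with respect to any approximations $(\wt\bg,\wt\bell)$ of the gradient/length pair of \cref{def:lg} within the stated tolerances (this is item~1); (ii) a \emph{one‑step progress} bound, that feeding the method any circulation $\bDelta$ with $\wt\bg^\top\bDelta/\|\wt\mL\bDelta\|_1 \le -\kappa$ and taking the step of item~2 is well defined (keeps $\bf^{(t+1)}$ strictly feasible) and decreases $\Phi$ by $\Omega(\kappa^2)$; and (iii) an initial flow with $\Phi(\bf^{(0)}) \le 200m\log mU$ on an $O(m)$‑edge graph, constructed in \cref{subsec:initialfinal}. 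Granting these, item~3 is immediate: each of the $N=\O(m\kappa^{-2})$ iterations lowers $\Phi$ by at least $\kappa^2/500$, so $\Phi(\bf^{(N)}) \le 200m\log mU - N\kappa^2/500 \le -200m\log mU$, whence \cref{lemma:obvious} gives $\bc^\top\bf^{(N)} - F^* \le (mU)^{-10}$.

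For ingredient (i) / item~1, the key is to split $\bg(\bf)^\top(\bf^*-\bf)$ using the explicit formula \eqref{eq:bg}. The cost part $20m(\bc^\top\bf-F^*)^{-1}\bc$ contributes exactly $20m(\bc^\top\bf-F^*)^{-1}(F^*-\bc^\top\bf) = -20m$, since moving from $\bf$ to $\bf^*$ changes the cost from $\bc^\top\bf$ to $F^*$. The barrier part $\alpha\big((\bu^+_e-\bf_e)^{-1-\alpha}-(\bf_e-\bu^-_e)^{-1-\alpha}\big)$ I would bound coordinatewise using feasibility $\bu^-_e \le \bf^*_e \le \bu^+_e$ (so that $\bf^*_e-\bf_e$ is dominated by the residual capacity it moves into), which gives $\langle\g(\text{barrier}),\bf^*-\bf\rangle \le -\alpha\|\mL(\bf)(\bf^*-\bf)\|_1 + 2\alpha\cdot(\text{barrier value at }\bf)$; and the barrier value equals $\Phi(\bf)-20m\log(\bc^\top\bf-F^*)$, which stays $O(m\log mU)$ along the run (a consequence of $\Phi$ being nonincreasing together with \cref{lemma:obvious}), so $2\alpha\cdot(\text{barrier value}) \le m/5 \ll 20m$. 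Hence $\bg(\bf)^\top(\bf^*-\bf) \le -\Omega(m) - \alpha\|\mL(\bf)(\bf^*-\bf)\|_1$; dividing by $100m + \|\mL(\bf)(\bf^*-\bf)\|_1$ — the additive $100m$ being exactly the margin that absorbs the barrier‑gradient contribution — yields $\le -\Omega(\alpha)$ uniformly in the denominator. Passing to $(\wt\bg,\wt\bell)$: $\wt\bell\approx_2\bell(\bf)$ moves the denominator by at most a factor $2$, and $\|\mL(\bf)^{-1}(\wt\bg-\bg(\bf))\|_\infty \le \kappa/8$ shifts the numerator by at most $(\kappa/8)\|\mL(\bf)(\bf^*-\bf)\|_1$, which is absorbed by the $-\alpha\|\mL(\bf)(\bf^*-\bf)\|_1$ term of the exact bound in the regime $\kappa = m^{-o(1)} \ll \alpha$ that the method is run at; tuning constants leaves the stated $-\alpha/4$.

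For ingredient (ii), the stated step $\eta = \kappa^2/(50|\wt\bg^\top\bDelta|)$ is chosen so that $\eta|\wt\bg^\top\bDelta| = \kappa^2/50$, and since the ratio hypothesis gives $|\wt\bg^\top\bDelta| \ge \kappa\|\wt\mL\bDelta\|_1$ it also gives $\|\wt\mL(\eta\bDelta)\|_1 \le \kappa/50$; with $\wt\bell\approx_2\bell(\bf)$ the update $\eta\bDelta$ then has $\bell(\bf)$‑length $O(\kappa)\le 1/100$, so $\bf+\eta\bDelta$ is strictly feasible and residual capacities change by at most a constant factor along the step. A second‑order Taylor expansion of $\Phi$ about $\bf$ now applies: the $\log$ term contributes a nonpositive second‑order term, and each barrier summand has second derivative controlled by the square of its first derivative (using $(\bu^+_e-\bf_e)^\alpha,(\bf_e-\bu^-_e)^\alpha = O(1)$, since the residuals are $\le\poly(mU)$ and $\alpha = 1/(1000\log mU)$), so the overall second‑order term is $O(\|\mL(\bf)(\eta\bDelta)\|_1^2)$. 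Combining, $\Phi(\bf+\eta\bDelta) \le \Phi(\bf) + \bg(\bf)^\top(\eta\bDelta) + O(\|\mL(\bf)(\eta\bDelta)\|_1^2) \le \Phi(\bf) + \wt\bg^\top(\eta\bDelta) + O\big(\kappa\|\mL(\bf)(\eta\bDelta)\|_1 + \|\mL(\bf)(\eta\bDelta)\|_1^2\big) = \Phi(\bf) - \Theta(\kappa^2)$, the decrease used above. (That such a $\bDelta$ can in fact be produced each iteration with $\kappa$ as large as $m^{-o(1)}$ is where the later data‑structure sections enter — item~1 makes $\bf^*-\bf$ a ratio‑$(-\alpha/4)$ witness, and an $m^{o(1)}$‑approximate undirected min‑ratio cycle solver loses only an $m^{o(1)}$ factor; within \cref{thm:IPM} the cycle $\bDelta$ is an input, so this is not needed here.)

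The main obstacle is item~1, and inside it the simultaneous control of the barrier‑gradient term and of $\|\mL(\bf)(\bf^*-\bf)\|_1$: one must show the barrier's directional derivative toward $\bf^*$ is not merely bounded but actually cancels part of $\|\mL(\bf)(\bf^*-\bf)\|_1$, so that a fixed additive $100m$ in the denominator — independent of how close the current flow is to the facets — suffices, and then that this margin survives the allowed $\kappa$‑sized perturbation of the gradient. This is where the choice of the $x^{-\alpha}$ barrier over $-\log x$ is essential, since it is what lets one argue that the barrier value (and the lengths, hence the $m^{o(1)}$ bit‑complexity) stay bounded along the whole trajectory. Establishing those boundedness invariants, together with the stability statements that $\bg$, $\bell$, and the residual‑capacity‑based widths change only slowly across iterations — so that only $m^{1+o(1)}$ of their entries ever need updating — is the fiddly but essentially routine content of \cref{subsec:stability}.
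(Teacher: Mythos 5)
Your proposal reproduces the paper's proof structure exactly: item~1 is \cref{lemma:optCirculationQuality} (the $-20m$ from the cost term plus the barrier directional derivative canceling $-\alpha\|\mL(\bf)(\bf^*-\bf)\|_1$, with the barrier value bounded by $O(m\log mU)$ via the potential bound), item~2 is \cref{lemma:phidecrease} (the step scaling $\eta\wt\bg^\top\bDelta=-\kappa^2/50$ forces $\|\mL(\bf)\eta\bDelta\|_1=O(\kappa)$, hence feasibility and a second-order Taylor bound), and item~3 is the telescoping together with \cref{lemma:obvious}. The only cosmetic difference is that you estimate the second-order barrier term by $O(\|\mL(\bf)(\eta\bDelta)\|_1^2)$ while the paper bounds it by $O(\alpha\kappa\|\mL(\bf)(\eta\bDelta)\|_1)$ via $|\eta\bDelta_e|\le(\kappa/10)\min(\bu^+_e-\bf_e,\bf_e-\bu^-_e)$ — both are $O(\kappa^2)$ and are dominated by the $-\kappa^2/50$ first-order decrease.
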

Intuitively, the algorithm will compute a sequence of flows $\bf^{(0)}, \bf^{(1)}$, and maintain approximations $\wt{\bg}, \wt{\bell}$ of $\bg(\bf^{(t)}), \bell(\bf^{(t)})$ respectively. Each iteration, the algorithm will call an oracle for approximating the minimum-ratio cycle, i.e. $\min_{\mB^\top\bDelta = 0} \nicefrac{\wt{\bg}^\top \bDelta}{\left\|\wt{\mL}\bDelta\right\|_1}$. The first item shows that the optimal ratio is at most $-\alpha/4$. Thus if the oracle returns an $m^{o(1)}$ approximation, the returned circulation has $\kappa \ge m^{-o(1)}$. Scaling $\bDelta$ appropriately and adding it to $\bf^{(t)}$ decreases the potential by $\Omega(\kappa^2)$, hence the potential drops to $-O(m \log m)$ within $\O(m\kappa^{-2})$ iterations.

In \cref{sec:combine} we will give a formal description of the interaction of the algorithm of \cref{thm:IPM} and our data structures to implement each step in amortized $m^{o(1)}$ time. As part of this, we argue that we can change $\wt{\bg}$ and $\wt{\bell}$ only $\O(m\kappa^{-2})$ total times. This is encapsulated in \cref{lemma:sizedt}.

\subsection{One Step Analysis}
\label{subsec:onestep}
Consider a current flow $\bf$ and lengths/gradients $\bell(\bf), \bg(\bf)$ defined in \cref{def:lg}, with $\mL = \diag(\bell)$. The problem we will solve approximately in each iteration will be
\begin{align}
    \min_{\mB^\top\bDelta = 0} \frac{\bg(\bf)^\top  \bDelta}{\left\|\mL(\bf)\bDelta\right\|_1}. \label{eq:approxl1}
\end{align}
Alternatively, this can be viewed as constraining $\mB^\top\bDelta = 0$ and $\bg(\bf)^\top \bDelta = -1$, and then minimizing $\left\|\mL(\bf)\bDelta\right\|_1$. Our first goal is to show that an approximate solution to \eqref{eq:approxl1} for approximations of the gradient and lengths allows us to decrease the potential.
\begin{lemma}
\label{lemma:phidecrease}
Let $\wt{\bg} \in \R^E$ satisfy $\left\|\mL(\bf)^{-1}\left(\wt{\bg} - \bg(\bf)\right) \right\|_\infty \le \kappa/8$ for some $\kappa \in (0,1)$, and $\wt{\bell} \in \R^E_{>0}$ satisfy $\wt{\bell} \approx_2 \bell(\bf)$.
Let $\bDelta$ satisfy $\mB^\top\bDelta = 0$ and $\nicefrac{\wt{\bg}^\top \bDelta}{\left\|\wt{\mL}\bDelta\right\|_1} \le -\kappa$.
Let $\eta$ satisfy $\eta \wt{\bg}^\top\bDelta  = -\kappa^2/50.$ Then
\[ \Phi(\bf + \eta\bDelta) \le \Phi(\bf) - \frac{\kappa^2}{500}. \]
\end{lemma}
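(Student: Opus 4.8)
I would prove this by a one-dimensional Taylor expansion of $\Phi$ along the segment from $\bf$ to $\bf + \eta\bDelta$. Set $\psi(t) \defeq \Phi(\bf + t\eta\bDelta)$ for $t \in [0,1]$. After checking $\psi$ is finite and twice differentiable on $[0,1]$ (see below), the exact second-order Taylor formula gives
\[
\Phi(\bf + \eta\bDelta) - \Phi(\bf) = \eta\,\bg(\bf)^\top\bDelta + \int_0^1 (1-t)\,\psi''(t)\,dt,
\]
using $\psi'(0) = \eta\,\g\Phi(\bf)^\top\bDelta = \eta\,\bg(\bf)^\top\bDelta$. The goal is then to show the first term is at most $-\tfrac{3\kappa^2}{200}$ and the integral remainder is a tiny fraction of $\kappa^2$.

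\textbf{Setting the scale and the first-order term.} First I would extract the useful consequences of the three hypotheses. From $\wt{\bell} \approx_2 \bell(\bf)$ we get $\|\mL(\bf)\bDelta\|_1 \le 2\|\wt{\mL}\bDelta\|_1$. Since $\wt{\bg}^\top\bDelta \le -\kappa\|\wt{\mL}\bDelta\|_1$ and $\eta\,\wt{\bg}^\top\bDelta = -\kappa^2/50$ with $\eta > 0$, we obtain $\eta\|\wt{\mL}\bDelta\|_1 \le \kappa/50$, hence the key bound $\eta\|\mL(\bf)\bDelta\|_1 \le \kappa/25$, and in particular $\eta\,\bell(\bf)_e|\bDelta_e| \le \kappa/25$ for every edge $e$. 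For the first-order term, Hölder's inequality gives $|(\bg(\bf) - \wt{\bg})^\top\bDelta| \le \|\mL(\bf)^{-1}(\bg(\bf) - \wt{\bg})\|_\infty \, \|\mL(\bf)\bDelta\|_1 \le \tfrac{\kappa}{8}\|\mL(\bf)\bDelta\|_1$, so $\eta\,\bg(\bf)^\top\bDelta \le \eta\,\wt{\bg}^\top\bDelta + \tfrac{\kappa}{8}\,\eta\|\mL(\bf)\bDelta\|_1 \le -\tfrac{\kappa^2}{50} + \tfrac{\kappa}{8}\cdot\tfrac{\kappa}{25} = -\tfrac{3\kappa^2}{200}$.

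\textbf{Staying in a good region and bounding the remainder.} Write $r^+_e(t) = \bu^+_e - (\bf + t\eta\bDelta)_e$, $r^-_e(t) = (\bf + t\eta\bDelta)_e - \bu^-_e$, and $q(t) = \bc^\top(\bf + t\eta\bDelta) - F^*$. Using $\bell(\bf)_e \ge (r^\pm_e(0))^{-1-\alpha}$, the per-edge bound $\eta\,\bell(\bf)_e|\bDelta_e| \le \kappa/25 < 1$, and the fact that $\alpha = 1/(1000\log mU)$ is chosen small enough that $(2U)^\alpha = O(1)$ (recall residuals are at most $2U$), one gets $\eta|\bDelta_e| \le \tfrac12 r^\pm_e(0)$, hence $r^\pm_e(t) \approx_2 r^\pm_e(0)$ throughout $[0,1]$. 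A parallel estimate, splitting $\bg(\bf) = \tfrac{20m}{\bc^\top\bf - F^*}\bc + \text{(barrier part)}$ and noting the barrier part has $\mL(\bf)^{-1}$-norm at most $\alpha$, shows $\eta|\bc^\top\bDelta|$ is a small fraction of $\bc^\top\bf - F^*$, so $q(t) \ge \tfrac12(\bc^\top\bf - F^*) > 0$; thus $\psi$ is finite and smooth on $[0,1]$. Now
\[
\psi''(t) = -20m\,\frac{\eta^2(\bc^\top\bDelta)^2}{q(t)^2} + \alpha(\alpha+1)\sum_e \eta^2\bDelta_e^2\big(r^+_e(t)^{-\alpha-2} + r^-_e(t)^{-\alpha-2}\big).
\]
The log term is $\le 0$ and can be dropped for an upper bound. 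For each edge, $\eta^2\bDelta_e^2\, r^\pm_e(t)^{-\alpha-2} \le \big(\eta\,\bell(\bf)_e|\bDelta_e|\big)^2\, 2^{\alpha+2}\,(r^\pm_e(0))^{\alpha} = O(1)\big(\eta\,\bell(\bf)_e|\bDelta_e|\big)^2$, since $\bell(\bf)_e^2 \ge (r^\pm_e(0))^{-2-2\alpha}$ and $(r^\pm_e(0))^\alpha = O(1)$. Summing and using $\sum_e a_e^2 \le (\sum_e a_e)^2$ for $a_e \ge 0$ gives $\psi''(t) \le O(\alpha)\sum_e\big(\eta\,\bell(\bf)_e|\bDelta_e|\big)^2 \le O(\alpha)\big(\eta\|\mL(\bf)\bDelta\|_1\big)^2 \le O(\alpha)(\kappa/25)^2$, which with $\alpha < 1/1000$ is at most $\kappa^2/1000$. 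Hence $\int_0^1(1-t)\psi''(t)\,dt \le \tfrac12\cdot\tfrac{\kappa^2}{1000} = \tfrac{\kappa^2}{2000}$, and combining, $\Phi(\bf + \eta\bDelta) - \Phi(\bf) \le -\tfrac{3\kappa^2}{200} + \tfrac{\kappa^2}{2000} < -\tfrac{\kappa^2}{500}$.

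\textbf{Main obstacle.} The delicate part is the second-order estimate. One must not bound $\psi''$ edge by edge (that would lose a factor of $m$) but rather by the \emph{square} of the $\ell_1$-length of the step $\eta\bDelta$. Moreover, the curvature scale of the nonstandard barrier $x^{-\alpha}$ at an edge is $r_e^{-\alpha-2}$, which is not literally $\bell(\bf)_e^2 = (r_e^{-1-\alpha})^2$; the two agree only up to a factor $r_e^{\alpha}$, and this is $O(1)$ precisely because $\alpha$ is taken as small as $1/(1000\log mU)$. The supporting lemmas needed are that the residuals $r^\pm_e(\cdot)$ and the cost gap $q(\cdot)$ stay within a constant factor along the entire step — so that $\psi''$ at any interior point is comparable to its value at the endpoint and $q$ never vanishes — which again rely on the smallness of the step size $\eta\|\mL(\bf)\bDelta\|_1 \le \kappa/25$ derived from the normalization $\eta\,\wt{\bg}^\top\bDelta = -\kappa^2/50$.
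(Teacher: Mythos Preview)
Your proof is correct and follows essentially the same approach as the paper: Taylor-expand $\Phi$ along the step, control the first-order term via H\"older and the normalization $\eta\,\wt{\bg}^\top\bDelta=-\kappa^2/50$, and bound the second-order remainder using $\eta\|\mL(\bf)\bDelta\|_1\le\kappa/25$ together with the residual/cost-gap stability. The only cosmetic differences are that the paper packages the Taylor estimates as separate lemmas (for $x^{-\alpha}$ and $\log x$) and aggregates the barrier second-order term by pulling out one factor $|\bar\bDelta_e|/r_e\le\kappa/10$ to get $\alpha(\kappa/10)\|\mL(\bf)\bar\bDelta\|_1$, whereas you drop the concave $\log$ contribution by sign and use $\sum_e a_e^2\le(\sum_e a_e)^2$; both routes give an $O(\alpha\kappa^2)$ remainder and the same conclusion.
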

Before showing this, we need simple bounds on the Taylor expansion of the logarithmic barrier and $x^{-\alpha}$ in the region where the second derivative is stable.
\begin{lemma}[Taylor expansion for $x^{-\alpha}$]
\label{lemma:taylorxa}
If $|\bDelta_e| \le \frac{1}{10}\min\left(\bu^+_e - \bf_e, \bf_e - \bu^-_e \right)$ for $e \in E$ then
\begin{align}
    &\left((\bu^+_e - \bf_e - \bDelta_e)^{-\alpha} + (\bf_e + \bDelta_e - \bu^-_e)^{-\alpha} \right) \le \left((\bu^+_e - \bf_e)^{-\alpha} + (\bf_e - \bu^-_e)^{-\alpha} \right) \nonumber \\
    &~+ \alpha\left(\left(\bu_e^+ - \bf_e\right)^{-1-\alpha} - \left(\bf_e - \bu^-_e\right)^{-1-\alpha} \right)\bDelta_e + \alpha\left(\left(\bu_e^+ - \bf_e\right)^{-2-\alpha} + \left(\bf_e - \bu^-_e\right)^{-2-\alpha} \right)\bDelta_e^2
    \label{eq:taylorxa}
\end{align}
Also we have that
\begin{align}
    \left| (\bu^+_e - \bf_e - \bDelta_e)^{-1-\alpha} - (\bu^+_e - \bf_e)^{-1-\alpha} \right| \le 2|\bDelta_e|(\bu^+_e - \bf_e)^{-2-\alpha} \label{eq:uppertaylor}
\end{align}
and
\begin{align}
    \left| (\bf_e + \bDelta_e - \bu^-_e)^{-1-\alpha} - (\bf_e - \bu^-_e)^{-1-\alpha} \right| \le 2|\bDelta_e|(\bf_e - \bu^-_e)^{-2-\alpha} \label{eq:lowertaylor}.
\end{align}
\end{lemma}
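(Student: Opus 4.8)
The plan is to prove each of the three displayed inequalities by reducing everything to two standard single-variable estimates and then applying them termwise. Write $h(x) \defeq x^{-\alpha}$, so $h'(x) = -\alpha x^{-1-\alpha}$ and $h''(x) = \alpha(1+\alpha)x^{-2-\alpha}$; also write $g(x) \defeq x^{-1-\alpha}$, so $g'(x) = -(1+\alpha)x^{-2-\alpha}$. The key point is that the hypothesis $|\bDelta_e| \le \tfrac{1}{10}\min(\bu^+_e-\bf_e,\;\bf_e-\bu^-_e)$ confines the argument of each function to an interval of the form $[\tfrac{9}{10}x_0, \tfrac{11}{10}x_0]$ around the base point $x_0$ (which is $\bu^+_e - \bf_e$ or $\bf_e - \bu^-_e$). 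On such an interval, the relevant derivatives stay within a constant factor of their value at $x_0$, so first- and second-order Taylor remainders can be bounded by $h$, $g$, $g'$ evaluated at $x_0$.

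First I would establish \eqref{eq:taylorxa}: apply Taylor's theorem with Lagrange remainder to $h(x_0 - \bDelta_e)$, giving $h(x_0-\bDelta_e) = h(x_0) - h'(x_0)\bDelta_e + \tfrac12 h''(\xi)\bDelta_e^2$ for some $\xi$ between $x_0-\bDelta_e$ and $x_0$ (the first-order term carries a sign $-\bDelta_e$ for the $\bu^+$ term and $+\bDelta_e$ for the $\bu^-$ term, matching the combination $\alpha((\bu^+_e-\bf_e)^{-1-\alpha} - (\bf_e-\bu^-_e)^{-1-\alpha})\bDelta_e$ in the statement). Since $\xi \ge \tfrac{9}{10}x_0$, we get $h''(\xi) = \alpha(1+\alpha)\xi^{-2-\alpha} \le \alpha(1+\alpha)(\tfrac{9}{10})^{-2-\alpha}x_0^{-2-\alpha} \le 2\alpha\, x_0^{-2-\alpha}$ (using $\alpha \le 1$, so $(1+\alpha)(10/9)^{2+\alpha} \le 2\cdot(10/9)^3 < 4$; a slightly more careful constant gives the factor $2$ cleanly, or one can just absorb it — the statement's coefficient is $\alpha$ without the $\tfrac12$, so the bound $\tfrac12 h''(\xi) \le \alpha x_0^{-2-\alpha}$ is what's needed and holds since $\tfrac12(1+\alpha)(10/9)^{2+\alpha} \le \tfrac12 \cdot 2 \cdot (10/9)^3 < 2$, hmm — I would just check that $\tfrac12(1+\alpha)(10/9)^{2+\alpha}\le 1$ fails, so actually the intended constant must come from summing the two terms or from $(10/9)$ being replaced by something tighter; in any case this is a routine constant-chasing step). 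Summing the bound over the two base points $\bu^+_e-\bf_e$ and $\bf_e-\bu^-_e$ yields \eqref{eq:taylorxa}.

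Next I would prove \eqref{eq:uppertaylor} and \eqref{eq:lowertaylor}, which are symmetric, so I only do \eqref{eq:uppertaylor}. Apply the mean value theorem to $g(x) = x^{-1-\alpha}$: $g(x_0 - \bDelta_e) - g(x_0) = -g'(\zeta)(-\bDelta_e) \cdot(-1)$ — more carefully, $g(x_0-\bDelta_e)-g(x_0) = g'(\zeta)(-\bDelta_e)$ for $\zeta$ between $x_0-\bDelta_e$ and $x_0$, so $|g(x_0-\bDelta_e)-g(x_0)| = (1+\alpha)\zeta^{-2-\alpha}|\bDelta_e|$. Since $\zeta \ge \tfrac{9}{10}x_0$ and $(1+\alpha)(\tfrac{10}{9})^{2+\alpha} \le 2$ for $\alpha \le 1$ (again $(1+\alpha)(10/9)^{2+\alpha} \le 2\cdot(10/9)^3$; one can tighten by noting the hypothesis gives $\zeta \ge \tfrac{9}{10}x_0$ and $2\cdot(10/9)^3 \approx 2.74$, so strictly one needs the $\tfrac{1}{10}$ slack to be exploited as $(10/9)$, and $(1+\alpha) \le 2$ with $(10/9)^{2+\alpha} \le (10/9)^3 < 1.372$ gives product $< 2.74$ — so the clean constant $2$ in the statement would actually require either $\alpha$ small, which it is ($\alpha = 1/(1000\log mU)$), or a marginally larger bound; since $\alpha$ is tiny, $(1+\alpha)(10/9)^{2+\alpha} \to (10/9)^2 = 100/81 < 1.24 < 2$, which is comfortable). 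Here $x_0 = \bu^+_e - \bf_e$, giving $|(\bu^+_e-\bf_e-\bDelta_e)^{-1-\alpha} - (\bu^+_e-\bf_e)^{-1-\alpha}| \le 2|\bDelta_e|(\bu^+_e-\bf_e)^{-2-\alpha}$ as claimed, and \eqref{eq:lowertaylor} follows identically with $x_0 = \bf_e - \bu^-_e$ and the argument $x_0 + \bDelta_e$.

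The only real obstacle here is purely bookkeeping: keeping the signs of the first-order terms straight (the $\bu^+$ side contributes with a minus, the $\bu^-$ side with a plus, because $\bf_e$ enters with opposite signs), and verifying the numerical constants ($\tfrac12 \cdot 2\alpha = \alpha$ in \eqref{eq:taylorxa}, the factor $2$ in \eqref{eq:uppertaylor}–\eqref{eq:lowertaylor}) actually close under the interval bound $[\tfrac{9}{10}x_0,\tfrac{11}{10}x_0]$ coming from the $\tfrac1{10}$ hypothesis. Since $\alpha$ is extremely small in our setting, all the $(1+\alpha)$ and $x^{-\alpha}$-type factors are essentially $1$, so the constants are not tight — there is no analytic difficulty, just a short explicit computation.
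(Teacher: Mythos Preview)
Your proposal is correct and follows essentially the same approach as the paper: Taylor's theorem with Lagrange remainder for \eqref{eq:taylorxa} and the mean value theorem for \eqref{eq:uppertaylor}--\eqref{eq:lowertaylor}, with the second derivative controlled via the interval bound $[\tfrac{9}{10}x_0,\tfrac{11}{10}x_0]$. The paper's write-up is slightly cleaner in two respects: it bundles both barrier terms into a single function $\phi(x) = (\bu^+_e - x)^{-\alpha} + (x-\bu^-_e)^{-\alpha}$ and differentiates in $x$ once, which handles the signs automatically; and it records the Hessian stability as the single numerical fact $1.1^{2+\alpha} \le 1.3$, from which all the constants fall out without the back-and-forth you went through (in particular $\tfrac12 \cdot 1.3 \cdot (1+\alpha) \le 1$ gives the coefficient $\alpha$ in \eqref{eq:taylorxa} directly, since $\alpha$ is small).
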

\cref{eq:uppertaylor} and \cref{eq:lowertaylor} are useful for analyzing how a step improves the value of the potential function $\Phi(\bf)$, as well as showing that the gradients $\bg(\bf)$ and lengths $\bell(\bf)$ are stable, i.e. change only $m^{1+o(1)}$ times over $m^{1+o(1)}$ iterations.
\begin{proof}
Define $\phi(x) \defeq (\bu^+_e - x)^{-\alpha} + (x - \bu^-_e)^{-\alpha}$. $\phi$ is a convex function with derivative
\[ \phi'(x) = \alpha\left((\bu^+_e - x)^{-1-\alpha} - (x - \bu^-_e)^{-1-\alpha} \right) \]
and second derivative
\[ \phi''(x) = \alpha(1+\alpha)\left((\bu^+_e - x)^{-2-\alpha} + (x - \bu^-_e)^{-2-\alpha} \right). \]
In particular note that $\phi''(\bf_e + \delta) \approx_{1.3} \phi''(\bf_e)$ for any $|\delta| \le \frac{1}{10}\min\left(\bu^+_e - \bf_e, \bf_e - \bu^-_e \right)$, because $1.1^{2+\alpha} \le 1.3$ by the choice of $\alpha$. Thus by Taylor's theorem we get that
\begin{align*}
    \phi(\bf_e + \bDelta_e) &\le \phi(\bf_e) + \phi'(\bf_e)\bDelta_e + \frac{1}{2}\max_{y \in [\bf_e, \bf_e + \bDelta_e]} \phi''(y) \bDelta_e^2 \le
    \phi(\bf_e) + \phi'(\bf_e)\bDelta_e + 1.3 \phi''(\bf_e)\bDelta_e^2 \\
    &\le \phi(\bf_e) + \phi'(\bf_e)\bDelta_e + \alpha\left(\left(\bu_e^+ - \bf_e\right)^{-2-\alpha} + \left(\bf_e - \bu^-_e\right)^{-2-\alpha} \right)\bDelta_e^2,
\end{align*}
which when expanded yields the desired bound. \eqref{eq:uppertaylor}, \eqref{eq:lowertaylor} follow from a similar application of Taylor's theorem on a first order expansion.
\end{proof}
\begin{lemma}[Taylor expansion for $\log x$]
\label{lemma:taylorlog}
If $|y| \le x/10$ for $x > 0$ then
\begin{align}
    \log(x+y) \le \log(x) + y/x + y^2/x^2. \label{eq:taylorlog}
\end{align}
\end{lemma}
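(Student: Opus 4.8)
The plan is to rescale so that only one variable remains. Setting $t \defeq y/x$, the hypothesis becomes $|t| \le 1/10$, and since $\log(x+y) - \log(x) = \log(1+t)$ while $y/x + y^2/x^2 = t + t^2$, the claim reduces to
\[
\log(1+t) \le t + t^2 \qquad \text{for } |t| \le 1/10.
\]

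For the main step I would invoke concavity of the logarithm: the map $u \mapsto \log(1+u)$ lies below its tangent at $u=0$, i.e. $\log(1+t) \le t$ for every $t > -1$; since $t^2 \ge 0$ this already gives $\log(1+t) \le t \le t + t^2$. To keep the write-up parallel with \cref{lemma:taylorxa}, I would instead phrase it through Taylor's theorem exactly as in the proof of that lemma: for some $\xi$ strictly between $x$ and $x+y$ (both positive, using $|y| \le x/10 < x$) one has $\log(x+y) = \log(x) + \tfrac{y}{x} - \tfrac{y^2}{2\xi^2}$, and the remainder $-\tfrac{y^2}{2\xi^2}$ is nonpositive, so $\log(x+y) \le \log(x) + \tfrac{y}{x} \le \log(x) + \tfrac{y}{x} + \tfrac{y^2}{x^2}$. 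Either route is two lines.

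I do not expect any real obstacle here; the only point worth noting is that the hypothesis $|y| \le x/10$ is not actually used for this one-sided bound — the inequality holds for all $y > -x$. That restriction is what one needs for the companion \emph{lower} bound $\log(x+y) \ge \log(x) + y/x - y^2/x^2$, where the Lagrange remainder $-y^2/(2\xi^2)$ with $\xi$ possibly as small as $9x/10$ must be controlled (giving $y^2/(2\xi^2) \le \tfrac{50}{81}\,y^2/x^2 \le y^2/x^2$). It is presumably stated here with the $|y| \le x/10$ condition only for uniformity with \cref{lemma:taylorxa} and because that is the regime (small step size $\eta$) in which the bound will be applied when analyzing $\Phi(\bf + \eta\bDelta)$.
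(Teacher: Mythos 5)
Your proof is correct and follows the paper's structure: the paper also rescales to $t = y/x$ and reduces to showing $\log(1+t) \le t + t^2$, then invokes the Taylor series of $\log(1+z)$. Your finish is cleaner than the paper's: the concavity argument gives $\log(1+t) \le t$ immediately (tangent line at $0$), and then $t \le t + t^2$ is trivial, whereas the paper's series-based argument requires bounding a tail, which it leaves implicit. Your alternative route via Taylor's theorem with Lagrange remainder, dropping the manifestly nonpositive remainder $-y^2/(2\xi^2)$, is equally valid and is the closest in spirit to how \cref{lemma:taylorxa} is proved. Your observation that the hypothesis $|y| \le x/10$ is not actually needed for this one-sided upper bound (only $y > -x$ is required) is correct; the restriction is inherited from the common setting with \cref{lemma:taylorxa} and the regime in which the bound is applied.
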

\begin{proof}
This is equivalent to $\log(1+y/x) \le y/x + y^2/x^2$ for $|y/x| \le 1/10$, which follows from the Taylor expansion $\log(1+z) = \sum_{k \ge 0} z^k/k$ for $|z| < 1$.
\end{proof}
\begin{proof}[Proof of \cref{lemma:phidecrease}]
We first bound $\bg(\bf)^\top\bDelta$ by
\begin{align*}
    \left|\bg(\bf)^\top\bDelta - \wt{\bg}^\top\bDelta\right| &\overset{(i)}{\le} \left\|\mL(\bf)^{-1}(\wt{\bg}-\bg(\bf))\right\|_\infty \left\|\mL(\bf) \bDelta\right\|_1 \overset{(ii)}{\le} 2\eps/\kappa \cdot |\wt{\bg}^\top\bDelta| \le \wt{\bg}^\top\bDelta/2,
\end{align*} where
$(i)$ follows from H\"{o}lder's inequality with the $\ell_1,\ell_\infty$ norms,
$(ii)$ follows from the lemma hypotheses and $\|\mL(\bf)\bDelta\|_1 \le 2\|\wt{\mL}\bDelta\|_1$, and the final inequality follows from $\eps \le \kappa/8$. Hence
\begin{align} 
2\wt{\bg}^\top\bDelta \le \bg(\bf)^\top\bDelta \le \wt{\bg}^\top\bDelta/2. \label{eq:wtbg}
\end{align}

We can also bound $\bc^\top\bDelta$ by
\begin{align}
    20m(\bc^\top\bf-F^*)^{-1}\left|\bc^\top\bDelta\right| &= \left|\bg(\bf)^\top\bDelta - \alpha \sum_{e \in E} \left(\left(\bu_e^+ - \bf_e\right)^{-1-\alpha} - \left(\bf_e - \bu^-_e\right)^{-1-\alpha}\right)\bDelta_e \right| \label{eq:stablebc1} \\
    &\overset{(i)}{\le} \left|\bg(\bf)^\top\bDelta\right| + \alpha\left\|\mL(\bf) \bDelta\right\|_1 \overset{(ii)}{\le} (2 + 2\alpha/\kappa)|\wt{\bg}^\top \bDelta|. \label{eq:stablebc2}
\end{align}
where $(i)$ follows from the triangle inequality, and $(ii)$ from \eqref{eq:wtbg} plus the problem hypotheses. In particular, we deduce that
\begin{align}
    \left|\bc^\top\bDelta\right| \le \frac{\bc^\top\bf - F^*}{20m} \cdot (2+2\alpha/\kappa)|\wt{\bg}^\top \bDelta|. \label{eq:cdelta}
\end{align}
Let $\barDelta \defeq \eta\bDelta,$ the circulation that we add. From \eqref{eq:cdelta} we get that
\begin{align*} |\bc^\top \barDelta| \le \eta \cdot \frac{\bc^\top\bf - F^*}{20m} \cdot (2+2\alpha/\kappa)|\wt{\bg}^\top \bDelta| \le \frac{\bc^\top\bf - F^*}{20m} \cdot 4\eta/\kappa|\wt{\bg}^\top \bDelta| \le \frac{\kappa}{500m}\left(\bc^\top\bf - F^*\right)
\end{align*}
by the choice of $\eta$ in the problem hypothesis. Additionally, we have
\begin{align} \left\|\mL(\bf)\barDelta\right\|_1 \le 2\left\|\wt{\mL}\barDelta\right\|_1 \le 2/\kappa \cdot \left\|\wt{\bg}\barDelta\right\|_1 \le 2\eta/\kappa \cdot \left\|\wt{\bg}\bDelta\right\|_1 \le \kappa/25 \label{eq:lbardeltabound} \end{align}
by the choice of $\eta$. This implies that
\[ |\barDelta_e| \le \kappa/25 \cdot (\bu^+_e - \bf_e)^{1+\alpha} \le \kappa/25 \cdot (2U)^{\alpha} \cdot (\bu^+_e - \bf_e) \le \kappa/10 \cdot (\bu^+_e - \bf_e), \]
where the last inequality follows from the choice $\alpha = 1/(1000 \log mU)$.
$|\barDelta_e| \le \kappa/10 \cdot (\bf_e - \bu^-_e)$ follows similarly. 

This bound allows us to apply:
\begin{itemize}
    \item \cref{lemma:taylorxa} on the current $\bu^{+}$, $\bu^{-}$,
    $\bf$, and $\bDelta$, and
    \item \cref{lemma:taylorlog} for $x = \bc^\top \bf - F^*$ and $y = \bc^\top \barDelta$ 
\end{itemize}
to get
\begin{align*}
    \Phi(\bf + \barDelta) - \Phi(\bf) &\le \bg(\bf)^\top\barDelta + 20m\left(\frac{\kappa}{500m}\right)^2 + \sum_{e \in E} \alpha\left(\left(\bu_e^+ - \bf_e\right)^{-2-\alpha} + \left(\bf_e - \bu^-_e\right)^{-2-\alpha} \right)\barDelta_e^2 \\
    &\overset{(i)}{\le} \wt{\bg}^\top\barDelta/2 + \frac{\kappa^2}{10000} + \alpha\kappa/10 \cdot \left\|\mL(\bf)\barDelta\right\|_1 \overset{(ii)}{\le} -\frac{\kappa^2}{100} + \frac{\kappa^2}{10000} + \frac{\alpha\kappa}{250} \le -\frac{\kappa^2}{500}.
\end{align*}
Here, $(i)$ follows from \eqref{eq:wtbg}, and the bound $|\barDelta_e| \le \kappa/10 \cdot \min\left(\bu^+_e - \bf_e, \bf_e - \bu^-_e \right)$, and $(ii)$ from $\wt{\bg}^\top \barDelta = \eta \wt{\bg}^\top \bDelta = -\kappa^2/50$ and \eqref{eq:lbardeltabound}.
\end{proof}

Our next goal is to show that a straight line to $\bf^*$, i.e. $\bDelta = \bf^* - \bf$ satisfies the guarantees of \cref{lemma:phidecrease} for some $\kappa \ge \widetilde{\Omega}(1)$. This has two purposes. First, it shows that an $m^{o(1)}$-optimal solution to \cref{eq:approxl1} allows us to decrease the potential by $m^{-o(1)}$ per step, so that the algorithm terminates in $m^{1+o(1)}$ steps. Second, it shows that the problems \eqref{eq:approxl1} encountered during the method are not fully adaptive, and we are able to use this guarantee on a good solution to inform our data structures.

\begin{lemma}[Quality of $\bf^* - \bf$]
\label{lemma:optCirculationQuality}
Let $\wt{\bg} \in \R^E$ satisfy $\left\|\mL(\bg)^{-1}\left(\wt{\bg} - \bg(\bf)\right) \right\|_\infty \le \eps$ for some $\eps < \alpha/2$, and $\wt{\bell} \in \R^E_{>0}$ satisfy $\wt{\bell} \approx_2 \bell(\bf)$. If $\Phi(\bf) \le 200m\log mU$ and $\log(\bc^\top\bf - F^*) \ge -10 \log mU$,
then
\[ \frac{\wt{\bg}^\top(\bf^* - \bf)}{100m + \left\|\wt{\mL}(\bf^* - \bf)\right\|_1} \le -\alpha/4. \]
\end{lemma}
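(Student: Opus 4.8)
**Proof plan for Lemma (Quality of $\bf^* - \bf$).**

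The plan is to work with the exact gradient $\bg(\bf)$ and exact lengths $\bell(\bf)$ first, and only at the very end pass to the approximations $\wt{\bg}, \wt{\bell}$. Set $\bDelta \defeq \bf^* - \bf$; note $\mB^\top \bDelta = 0$ since both $\bf$ and $\bf^*$ route $\bd$. The key quantity to control is $\bg(\bf)^\top \bDelta$, which decomposes (by \cref{def:lg}, equation \eqref{eq:bg}) into a cost term and a barrier term:
\[
\bg(\bf)^\top \bDelta = \frac{20m}{\bc^\top\bf - F^*}\,\bc^\top(\bf^* - \bf) \;+\; \alpha\sum_{e\in E}\Bigl(\left(\bu^+_e - \bf_e\right)^{-1-\alpha} - \left(\bf_e - \bu^-_e\right)^{-1-\alpha}\Bigr)(\bf^*_e - \bf_e).
\]
First I would handle the cost term. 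Since $\bc^\top \bf^* = F^*$, we get $\bc^\top(\bf^* - \bf) = F^* - \bc^\top\bf = -(\bc^\top\bf - F^*)$, so the cost term is exactly $-20m$. This is the dominant negative contribution and is where the ``$100m$'' in the denominator comes from.

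Next I would bound the barrier term. For each edge $e$, $\bf^*_e \in [\bu^-_e, \bu^+_e]$, so $\bf^*_e - \bf_e \le \bu^+_e - \bf_e$ and $\bf_e - \bf^*_e \le \bf_e - \bu^-_e$. Hence
\[
\Bigl(\left(\bu^+_e - \bf_e\right)^{-1-\alpha} - \left(\bf_e - \bu^-_e\right)^{-1-\alpha}\Bigr)(\bf^*_e - \bf_e) \le \left(\bu^+_e - \bf_e\right)^{-1-\alpha}(\bu^+_e - \bf_e) + \left(\bf_e - \bu^-_e\right)^{-1-\alpha}(\bf_e - \bu^-_e),
\]
where I split into the cases $\bf^*_e \ge \bf_e$ and $\bf^*_e < \bf_e$ and drop the favorably-signed term. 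The right side is at most $(\bu^+_e - \bf_e)^{-\alpha} + (\bf_e - \bu^-_e)^{-\alpha}$, which summed over $e$ is exactly the barrier part of $\Phi(\bf)$. Using $\Phi(\bf) \le 200m\log mU$ together with $\log(\bc^\top\bf - F^*) \ge -10\log mU$ (so that $20m\log(\bc^\top\bf - F^*) \ge -200m\log mU$), the barrier part of $\Phi$ is at most $400m\log mU$. Therefore the barrier term of $\bg(\bf)^\top\bDelta$ is at most $\alpha \cdot 400m\log mU \le \tfrac{400m}{1000} < m$ by the choice $\alpha = 1/(1000\log mU)$. Combining, $\bg(\bf)^\top\bDelta \le -20m + m = -19m$, in particular $\bg(\bf)^\top\bDelta \le -10m$, say, with room to spare.

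Finally I would pass to the approximations. The $\ell_1/\ell_\infty$ Hölder bound gives $|\wt{\bg}^\top\bDelta - \bg(\bf)^\top\bDelta| \le \|\mL(\bf)^{-1}(\wt{\bg} - \bg(\bf))\|_\infty \|\mL(\bf)\bDelta\|_1 \le \eps\|\mL(\bf)\bDelta\|_1$. Now $\|\mL(\bf)\bDelta\|_1 = \sum_e \bell(\bf)_e |\bf^*_e - \bf_e| \le \sum_e \bigl((\bu^+_e-\bf_e)^{-1-\alpha} + (\bf_e-\bu^-_e)^{-1-\alpha}\bigr)|\bf^*_e-\bf_e|$, and by the same pointwise argument as above this is at most the barrier part of $\Phi(\bf)$, hence at most $400m\log mU$; in particular $\eps \|\mL(\bf)\bDelta\|_1 \le \tfrac{\alpha}{2}\cdot 400m\log mU \le m/5$. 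So $\wt{\bg}^\top\bDelta \le \bg(\bf)^\top\bDelta + m/5 \le -19m + m/5 < -18m$. For the denominator, $\wt{\bell}\approx_2\bell(\bf)$ gives $\|\wt{\mL}\bDelta\|_1 \le 2\|\mL(\bf)\bDelta\|_1$, so
\[
\frac{\wt{\bg}^\top(\bf^*-\bf)}{100m + \|\wt{\mL}(\bf^*-\bf)\|_1} \le \frac{-18m}{100m + 2\|\mL(\bf)\bDelta\|_1}.
\]
To finish I would split on the size of $\|\mL(\bf)\bDelta\|_1$ relative to $m$: if it is at most $50m$ then the denominator is at most $200m$ and the ratio is at most $-18m/200m < -\alpha/4$ (since $\alpha$ is tiny); if it is larger, I instead need the numerator to scale — here I would redo the barrier-term estimate more carefully to get $\bg(\bf)^\top\bDelta \le -20m - c\,\|\mL(\bf)\bDelta\|_1$ for a small constant $c$ (splitting the edgewise inequality so that the positive contribution of an edge to $\bg(\bf)^\top\bDelta$ is bounded by a constant times its contribution to $\|\mL(\bf)\bDelta\|_1$ actually does \emph{not} help directly; instead one notes the barrier term is bounded by a \emph{fixed} quantity $\le m$ independent of $\|\mL(\bf)\bDelta\|_1$, so in fact $\bg(\bf)^\top\bDelta \le -19m$ always, and then $\wt{\bg}^\top\bDelta \le -19m + \tfrac12\eps\|\wt\mL\bDelta\|_1$, and since $\eps < \alpha/2$ one gets $\wt{\bg}^\top\bDelta \le -19m + \tfrac{\alpha}{4}\|\wt\mL\bDelta\|_1$; rearranging, $-\wt{\bg}^\top\bDelta \ge 19m - \tfrac\alpha4\|\wt\mL\bDelta\|_1 \ge \tfrac\alpha4(100m + \|\wt\mL\bDelta\|_1)$ once we check $19m - \tfrac\alpha4\|\wt\mL\bDelta\|_1 \ge \tfrac\alpha4\cdot 100m + \tfrac\alpha4\|\wt\mL\bDelta\|_1$, i.e. $19m \ge 25\alpha m + \tfrac\alpha2\|\wt\mL\bDelta\|_1$; the second term could be large).

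The main obstacle, then, is exactly this last point: when $\|\mL(\bf)\bDelta\|_1$ is large, a bound of the form $\bg(\bf)^\top\bDelta \le -\Theta(m)$ is not enough to beat a denominator of $\Theta(\|\mL(\bf)\bDelta\|_1)$. The resolution must be that $\|\mL(\bf)\bDelta\|_1$ is in fact \emph{not} large — indeed the pointwise argument above shows $\|\mL(\bf)\bDelta\|_1 \le (\text{barrier part of }\Phi(\bf)) \le 400m\log mU$, but that still leaves a $\log mU$ gap. To close it, I would use the crucial extra factor hidden in the lengths: $\bell(\bf)_e$ carries exponent $-1-\alpha$ whereas the barrier carries exponent $-\alpha$, and the ratio is $(\bu^+_e - \bf_e)^{-1}$, which one relates back to the step via $|\bf^*_e - \bf_e| \le \bu^+_e - \bf_e$ — giving $\bell(\bf)_e |\bf^*_e - \bf_e| \le (\bu^+_e-\bf_e)^{-\alpha} + (\bf_e-\bu^-_e)^{-\alpha}$ with \emph{no} extra $\log$ factor, so in fact $\|\mL(\bf)\bDelta\|_1$ is bounded by the barrier part of $\Phi(\bf)$ directly, i.e. $\le 400m\log mU$ — and then, since $\alpha = 1/(1000\log mU)$, the quantity $\tfrac\alpha4\|\mL(\bf)\bDelta\|_1 \le \tfrac{100m\log mU}{1000\log mU} = m/10$, which is safely dominated by the $-19m$. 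Thus $-\wt{\bg}^\top\bDelta \ge 19m - m/10 \ge \tfrac\alpha4(100m + \|\wt\mL\bDelta\|_1)$ provided also $\tfrac\alpha4\|\wt\mL\bDelta\|_1 \le$ the slack, which holds by the same computation. This is the delicate bookkeeping; everything else is a direct expansion.
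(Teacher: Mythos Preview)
Your argument has a real gap: the claim that $\|\mL(\bf)(\bf^*-\bf)\|_1$ is bounded by the barrier part of $\Phi(\bf)$ is false. The ``same pointwise argument'' you used for the \emph{signed} barrier-gradient term does not transfer to the unsigned quantity $\bell(\bf)_e\,|\bf^*_e-\bf_e|$. Concretely, take $\bu^-_e=0$, $\bu^+_e=U$, $\bf_e=U-1$, $\bf^*_e=0$: then $(\bu^+_e-\bf_e)^{-1-\alpha}|\bf^*_e-\bf_e|=U-1$, while $(\bu^+_e-\bf_e)^{-\alpha}+(\bf_e-\bu^-_e)^{-\alpha}\approx 2$. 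The issue is that $|\bf^*_e-\bf_e|$ can be of order $\bf_e-\bu^-_e$ while $\bu^+_e-\bf_e$ is tiny, so $(\bu^+_e-\bf_e)^{-1-\alpha}|\bf^*_e-\bf_e|$ is not controlled by $(\bu^+_e-\bf_e)^{-\alpha}$. Thus $\|\mL(\bf)\bDelta\|_1$ can be $\Theta(mU)$, not $O(m\log mU)$, and your final step $\tfrac{\alpha}{2}\|\wt\mL\bDelta\|_1 \ll 19m$ collapses.

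The missing idea, which the paper uses, is to extract $\|\mL(\bf)\bDelta\|_1$ \emph{with a negative sign} directly from the barrier-gradient bound, rather than bounding it from above separately. Write
\[
(\bu^+_e-\bf_e)^{-1-\alpha}(\bf^*_e-\bf_e)=(\bu^+_e-\bf_e)^{-\alpha}-(\bu^+_e-\bf_e)^{-1-\alpha}(\bu^+_e-\bf^*_e),
\]
and use $\bu^+_e-\bf^*_e \ge |\bf^*_e-\bf_e|-(\bu^+_e-\bf_e)$ to get $(\bu^+_e-\bf_e)^{-1-\alpha}(\bf^*_e-\bf_e)\le 2(\bu^+_e-\bf_e)^{-\alpha}-(\bu^+_e-\bf_e)^{-1-\alpha}|\bf^*_e-\bf_e|$; the lower-capacity term is analogous. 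Summing yields
\[
\bg(\bf)^\top(\bf^*-\bf)\le -20m-\alpha\|\mL(\bf)(\bf^*-\bf)\|_1+2\alpha\cdot(\text{barrier part of }\Phi)\le -19m-\alpha\|\mL(\bf)(\bf^*-\bf)\|_1.
\]
Now passing to $\wt\bg$ costs only $\eps\|\mL(\bf)\bDelta\|_1<\tfrac{\alpha}{2}\|\mL(\bf)\bDelta\|_1$, leaving $\wt\bg^\top\bDelta\le -19m-\tfrac{\alpha}{2}\|\mL(\bf)\bDelta\|_1$, and the ratio against $100m+2\|\mL(\bf)\bDelta\|_1$ is at most $-\alpha/4$ for any value of $\|\mL(\bf)\bDelta\|_1$. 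The negative length term in the numerator is precisely what absorbs the potentially large denominator; your case-split on the size of $\|\mL(\bf)\bDelta\|_1$ cannot close the argument without it.
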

The additional $100m$ in the denominator is for a technical reason, and intuitively says that the bound is still fine even if we force every edge to pay at least $100$ towards the $\ell_1$ length of the circulation.
\begin{proof}
We can bound $\bg(\bf)^\top(\bf^* - \bf)$ using
\begin{align*} \bg(\bf)^\top(\bf^* - \bf) &= 20m \frac{\bc^\top(\bf^* - \bf)}{\bc^\top \bf - F^*} + \sum_{e \in E} \left(\alpha \left(\bu_e^+ - \bf_e\right)^{-1-\alpha} - \alpha \left(\bf_e - \bu^-_e\right)^{-1-\alpha}\right)(\bf^*_e - \bf_e) \\
&\overset{(i)}{\le} -20m - \alpha \sum_{e \in E} \left(\left(\bu_e^+ - \bf_e\right)^{-1-\alpha} + \left(\bf_e - \bu^-_e\right)^{-1-\alpha}\right)|\bf^*_e - \bf_e| \\ &~+ 2\alpha \sum_{e \in E} \left(\left(\bu_e^+ - \bf_e\right)^{-\alpha} + \left(\bf_e - \bu^-_e\right)^{-\alpha}\right) \\
&= -20m - \alpha\left\|\mL(\bf)(\bf^* - \bf)\right\|_1 + 2\alpha\left(\Phi(\bf) - 20m \log(\bc^\top\bf - F^*) \right) \\
&\overset{(ii)}{\le} -20m - \alpha\left\|\mL(\bf)(\bf^* - \bf)\right\|_1 + 2\alpha \cdot 400 m \log mU \\ &\overset{(iii)}{\le} -19m - \alpha\left\|\mL(\bf)(\bf^* - \bf)\right\|_1.
\end{align*}
where $(i)$ follows from the bound $\bu^-_e - \bf_e \le \bf^*_e - \bf_e \le \bu^+_e - \bf_e \forall e \in E$, so 
\begin{align*} (\bu^+_e - \bf_e)^{-1-\alpha}(\bf^*_e - \bf_e) & = (\bu^+_e - \bf_e)^{-1-\alpha}(\bf^*_e - \bu^+_e + \bu^+_e - \bf_e) \\
&= (\bu^+_e - \bf_e)^{-\alpha} - (\bu^+_e - \bf_e)^{-1-\alpha}|\bf^*_e - \bu^+_e|
\\ &\le 2(\bu^+_e - \bf_e)^{-\alpha} - (\bu^+_e - \bf_e)^{-1-\alpha}\left|\bf^*_e - \bf_e\right|, 
\end{align*}

and similar for the $-(\bu^+_e - \bf_e)^{-1-\alpha}(\bf^*_e - \bf_e)$ term, $(ii)$ follows from the lemma hypotheses, and $(iii)$ from the choice $\alpha = 1/(1000 \log mU)$. We now bound
\begin{align*}
    \wt{\bg}^\top(\bf^* - \bf) &= \bg(\bf)^\top(\bf^* - \bg) + \left(\wt{\bg} - \bg(\bf) \right)^\top(\bf^* - \bf) \\
    &\overset{(i)}{\le} -19m - \alpha\left\|\mL(\bf)(\bf^* - \bf)\right\|_1 + \left\|\mL(\bf)^{-1}\left(\wt{\bg} - \bg(\bf)\right)\right\|_\infty \left\|\mL(\bf)(\bf^* - \bf) \right\|_1 \\
    &\overset{(ii)}{\le} -19m - \alpha\left\|\mL(\bf)(\bf^* - \bf)\right\|_1 + \eps \left\|\mL(\bf)(\bf^* - \bf) \right\|_1 \\
    &\le -19m - \alpha/2 \cdot \left\|\mL(\bf)(\bf^* - \bf)\right\|_1,
\end{align*}
where $(i)$ follows from the above bound on $\bg(\bf)^\top(\bf^* - \bf)$ and H\"{o}lder for the $\ell_1/\ell_\infty$ norms, and $(ii)$ follows from the the conditions on $\wt{\bg}$. Thus we get that
\[ \frac{\wt{\bg}^\top(\bf^* - \bf)}{100m + \left\|\wt{\mL}(\bf^* - \bf) \right\|_1} \le \frac{-19m - \alpha/2 \cdot \left\|\mL(\bf)(\bf^* - \bf)\right\|_1}{100m + 2\left\|\mL(\bf)(\bf^* - \bf)\right\|_1} \le -\alpha/4, \]
where we have used the above bound on $\wt{\bg}^\top(\bf^* - \bf)$ and $\bell \approx_2 \wt{\bell}$.
\end{proof}

\subsection{Stability Bounds}
\label{subsec:stability}
Our algorithm ultimately approximately solves \eqref{eq:approxl1} by using approximations $\wt{\bg}$ of $\bg(\bf)$ and $\wt{\bell}$ of $\bell(\bf)$ satisfying the conditions of \cref{lemma:phidecrease}. The goal of this section is to show that $\bg(\bf)$ and $\bell(\bf)$ are slowly changing relative to the lengths, so that our dynamic data structure can only update their values on $m^{o(1)}$ edges per iteration.

We start by showing that the residual cost is very slowly changing, by about $1/m$ per iteration.
\begin{lemma}[Residual stability]
\label{lemma:stablebc}
Let $\wt{\bg} \in \R^E$ satisfy $\left\|\mL(\bf)^{-1}\left(\wt{\bg} - \bg(\bf)\right) \right\|_\infty \le \eps$ for some $\eps \in (0,1/2]$, and $\wt{\bell} \in \R^E_{>0}$ satisfy $\wt{\bell} \approx_2 \bell(\bf)$.
Let $\bDelta$ satisfy $\mB^\top\bDelta = 0$ and $\nicefrac{\wt{\bg}^\top \bDelta}{\left\|\wt{\mL}\bDelta\right\|_1} \le -\kappa$ for $\kappa \in (0, 1)$. Then
\[ \frac{|\bc^\top \bDelta|}{\bc^\top \bf - F^*} \le |\wt{\bg}^\top\bDelta|/(\kappa m). \]
\end{lemma}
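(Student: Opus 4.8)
The plan is to essentially reuse the computation already carried out in the proof of \cref{lemma:phidecrease}, specifically the chain \eqref{eq:stablebc1}--\eqref{eq:cdelta}, and then be slightly careful with constants. First I would isolate the cost term from the gradient: by the explicit formula for $\bg(\bf)_e$ in \cref{def:lg},
\[
20m(\bc^\top\bf - F^*)^{-1}\bc^\top\bDelta = \bg(\bf)^\top\bDelta - \alpha\sum_{e\in E}\left(\left(\bu^+_e-\bf_e\right)^{-1-\alpha} - \left(\bf_e-\bu^-_e\right)^{-1-\alpha}\right)\bDelta_e .
\]
Applying the triangle inequality to the sum and recognizing that $\bell(\bf)_e = (\bu^+_e-\bf_e)^{-1-\alpha} + (\bf_e-\bu^-_e)^{-1-\alpha}$ (so the absolute-value version of that sum is at most $\|\mL(\bf)\bDelta\|_1$), this gives
\[
20m(\bc^\top\bf - F^*)^{-1}|\bc^\top\bDelta| \le |\bg(\bf)^\top\bDelta| + \alpha\|\mL(\bf)\bDelta\|_1 .
\]

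Next I would pass from $\bg(\bf)$ to $\wt\bg$ using H\"older for the $\ell_1/\ell_\infty$ pair together with the hypothesis $\|\mL(\bf)^{-1}(\wt\bg-\bg(\bf))\|_\infty \le \eps$, which yields $|\bg(\bf)^\top\bDelta| \le |\wt\bg^\top\bDelta| + \eps\|\mL(\bf)\bDelta\|_1$. Then I would control $\|\mL(\bf)\bDelta\|_1$: since $\wt\bell \approx_2 \bell(\bf)$ we have $\|\mL(\bf)\bDelta\|_1 \le 2\|\wt\mL\bDelta\|_1$, and the ratio hypothesis $\wt\bg^\top\bDelta/\|\wt\mL\bDelta\|_1 \le -\kappa$ gives $\|\wt\mL\bDelta\|_1 \le |\wt\bg^\top\bDelta|/\kappa$, hence $\|\mL(\bf)\bDelta\|_1 \le (2/\kappa)|\wt\bg^\top\bDelta|$.

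Combining the three displays gives
\[
20m(\bc^\top\bf - F^*)^{-1}|\bc^\top\bDelta| \le \left(1 + \frac{2\eps}{\kappa} + \frac{2\alpha}{\kappa}\right)|\wt\bg^\top\bDelta| .
\]
Finally, using $\eps \le 1/2$, $\kappa < 1$, and $\alpha$ tiny, the bracketed factor is at most $3/\kappa \le 20/\kappa$, so $|\bc^\top\bDelta|/(\bc^\top\bf - F^*) \le |\wt\bg^\top\bDelta|/(\kappa m)$, as claimed. I do not anticipate a genuine obstacle here: every inequality is a H\"older estimate, a triangle inequality, or a direct substitution of a hypothesis; the only thing requiring attention is keeping the final constant loose enough (any constant times $1/(\kappa m)$ would do, and $1$ is comfortably achievable). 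One minor point to double-check is that the statement uses $\mL(\bf)$ in the hypothesis on $\wt\bg$ (the excerpt's displayed lemma writes $\mL(\bg)^{-1}$ in a nearby lemma, but here it is $\mL(\bf)^{-1}$), which is exactly what the H\"older step needs.
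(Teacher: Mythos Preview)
Your proposal is correct and follows essentially the same route as the paper's proof: isolate the cost term via the explicit formula for $\bg(\bf)$, bound $|\bg(\bf)^\top\bDelta|$ in terms of $|\wt\bg^\top\bDelta|$ via H\"older, replace $\|\mL(\bf)\bDelta\|_1$ by $2\|\wt\mL\bDelta\|_1 \le (2/\kappa)|\wt\bg^\top\bDelta|$, and absorb the constants into the factor $20$. The paper's write-up is slightly terser (it lumps $\eps+\alpha \le 1$ in one step rather than tracking $2\eps/\kappa$ and $2\alpha/\kappa$ separately), but there is no substantive difference.
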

\begin{proof}
We can write
\begin{align*} \frac{|\bc^\top \bDelta|}{\bc^\top \bf - F^*} &\overset{(i)}{\le} \frac{1}{20m}(|\bg^\top\bDelta| + \alpha\|\mL(\bf)\bDelta\|_1) \\
&\overset{(ii)}{\le} \frac{1}{20m}(|\wt{\bg}^\top\bDelta| + \|\mL(\bf)^{-1}(\wt{\bg}-\bg(\bf))\|_\infty\|\mL(\bf)\bDelta\|_1 + \alpha\|\mL(\bf)\bDelta\|_1) \\
&\overset{(iii)}{\le} (|\wt{\bg}^\top\bDelta|+2\|\wt{\mL}\bDelta\|_1)/(20m) \le |\wt{\bg}^\top\bDelta|/(\kappa m),
\end{align*}
where $(i)$ uses the triangle inequality, $(ii)$ uses the triangle inequality and $|\bx^\top\by| \le \|\bx\|_\infty\|\by\|_1$, and $(iii)$ uses the hypotheses and $\wt{\bell} \approx_2 \bell(\bf)$.
\end{proof}
Hence if $|\bg(f)^\top\bDelta| \le O(\kappa^2)$ and $\|\mL(\bf)\bDelta\|_1 = O(\kappa)$ such as in the hypotheses of  \cref{lemma:phidecrease}, the residual cost changes by at most a $1/m$ factor per iteration.

We show that if the residual capacity of an edge does not change much, then its length is stable.
\begin{lemma}[Length stability]
\label{lemma:stablebell}
If $\|\mL(\bf)(\bf-\bar{\bf})\|_\infty \le \eps$ for some $\eps \le 1/100$ then $\bell(\bf) \approx_{1+3\eps} \bell(\bar{\bf})$.
\end{lemma}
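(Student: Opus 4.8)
The plan is to reduce the statement to a per-edge, per-term estimate and then recombine. Fix an edge $e$ and abbreviate $a \defeq \bu^+_e - \bf_e$, $b \defeq \bf_e - \bu^-_e$ (both positive, since $\mL(\bf)$ being finite forces $\bf$ to be strictly feasible), and $\delta \defeq \bar\bf_e - \bf_e$. By \eqref{eq:bell}, $\bell(\bf)_e = a^{-1-\alpha} + b^{-1-\alpha} \ge \max\{a^{-1-\alpha}, b^{-1-\alpha}\}$, so the hypothesis $\|\mL(\bf)(\bf - \bar\bf)\|_\infty \le \eps$ gives both $a^{-1-\alpha}|\delta| \le \eps$ and $b^{-1-\alpha}|\delta| \le \eps$.

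First I would turn these into \emph{relative} bounds on $\delta$. Since $0 < a \le \bu^+_e - \bu^-_e \le 2U$ and likewise $0 < b \le 2U$, the choice $\alpha = 1/(1000\log mU)$ gives $a^{\alpha}, b^{\alpha} \le (2U)^{\alpha} = \exp(\alpha\log 2U) \le e^{1/1000} < 1.01$, using $\log 2U \le \log mU$. Hence $|\delta| \le \eps\, a^{1+\alpha} \le 1.01\,\eps\, a$ and similarly $|\delta| \le 1.01\,\eps\, b$. In particular $|\delta| \le (1.01/100)\min\{a,b\} < \tfrac1{10}\min\{a,b\}$, so $\bar\bf_e$ is still strictly feasible ($\bu^+_e - \bar\bf_e = a - \delta > 0$ and $\bar\bf_e - \bu^-_e = b + \delta > 0$) and $\bell(\bar\bf)_e$ is well defined.

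Next I would compare the two terms of $\bell$ separately. Applying \eqref{eq:uppertaylor} of \cref{lemma:taylorxa} with $\bDelta_e = \delta$ — legitimate at every edge by the bound $|\delta| \le \tfrac1{10}\min\{a,b\}$ just established — yields $\bigl|(a-\delta)^{-1-\alpha} - a^{-1-\alpha}\bigr| \le 2|\delta|\, a^{-2-\alpha} \le 2(1.01\,\eps\, a)\,a^{-2-\alpha} = 2.02\,\eps\, a^{-1-\alpha}$, so $(\bu^+_e - \bar\bf_e)^{-1-\alpha} \approx_{1+3\eps} (\bu^+_e - \bf_e)^{-1-\alpha}$ (since $1 + 2.02\eps \le 1 + 3\eps$ and $(1-2.02\eps)(1+3\eps) \ge 1$ for $\eps \le 1/100$). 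The identical argument using \eqref{eq:lowertaylor} gives $(\bar\bf_e - \bu^-_e)^{-1-\alpha} \approx_{1+3\eps} (\bf_e - \bu^-_e)^{-1-\alpha}$. Finally, $\bell(\bf)_e$ and $\bell(\bar\bf)_e$ are each a sum of two positive terms matching up to a factor $1+3\eps$, and since $x \approx_\lambda x'$ and $y \approx_\lambda y'$ imply $x + y \approx_\lambda x' + y'$ for $\lambda \ge 1$, we get $\bell(\bar\bf)_e \approx_{1+3\eps} \bell(\bf)_e$; as $e$ was arbitrary, $\bell(\bf) \approx_{1+3\eps} \bell(\bar\bf)$.

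The only delicate point — and the step I would be most careful with — is the constant-chasing: one must verify that the $1.01$-type slack from $a^\alpha \le (2U)^\alpha$, compounded with the factor $2$ in \eqref{eq:uppertaylor}, still fits comfortably under $1 + 3\eps$ when $\eps \le 1/100$, which it does with room to spare. (If one prefers to sidestep \cref{lemma:taylorxa}, the same per-term estimate follows directly by writing $(a-\delta)^{-1-\alpha} = a^{-1-\alpha}(1-\delta/a)^{-1-\alpha}$ and bounding $(1-\delta/a)^{-1-\alpha}$ through the Taylor series of $\log(1-t)$ at $|t| = |\delta|/a \le 1.01\eps$, using $1+\alpha \le 1.01$.)
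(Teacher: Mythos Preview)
Your proof is correct and follows essentially the same approach as the paper's: both first convert the hypothesis into the relative bound $|\bar\bf_e - \bf_e| \lesssim \eps \min\{\bu^+_e - \bf_e,\ \bf_e - \bu^-_e\}$ via $(2U)^\alpha \le 2$, and then pass to a per-term multiplicative estimate on the two summands of $\bell$. The only cosmetic difference is that the paper writes the per-term step as $\bu^+_e - \bf_e \approx_{\exp(2\eps)} \bu^+_e - \bar\bf_e$ and raises to the $-(1+\alpha)$ power, whereas you invoke \eqref{eq:uppertaylor}/\eqref{eq:lowertaylor} of \cref{lemma:taylorxa} directly; the arithmetic and constants come out the same.
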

\begin{proof}
Because $\left\|\mL(\bf)(\bf-\bar{\bf})\right\|_\infty \le 1/100$, we have for all $e \in E$
\[ |\bf_e - \bar{\bf}_e| \le \eps(\bu^+_e - \bf_e)^{1+\alpha} = \eps(\bu^+_e - \bf_e)(2U)^\alpha \le 2\eps(\bu^+_e - \bf_e).\]
Similarly, $|\bf_e - \bar{\bf}_e| \le 2\eps(\bf_e - \bu^-_e)$. Hence $\bu^+_e - \bf_e \approx_{\exp(2\eps)} \bu^+_e-\bar{\bf}_e$ and $\bf_e - \bu^-_e \approx_{\exp(2\eps)} \bar{\bf}_e-\bu^-_e$, so we get
\begin{align*}
\bell(\bar{\bf})_e &= (\bu^+_e-\bar{\bf}_e)^{-1-\alpha} + (\bar{\bf}_e-\bu^-_e)^{-1-\alpha} \\
&\approx_{\exp(2\eps(1+\alpha))} (\bu^+_e-\bf_e)^{-1-\alpha} + (\bf_e-\bu^-_e)^{-1-\alpha} = \bell(\bf)_e.
\end{align*}
This completes the proof, as $2\eps(1+\alpha) \le 3\eps$.
\end{proof}
Next we show a similar stability claim for gradients. Here, we scale by the residual cost $\bc^\top\bf-F^*$ to ensure that the leading term is $20m\bc$. Thus, the gradient is stable if the residual capacity of an edge does not change much, and if the residual cost is stable. We know that the residual cost is stable over $\widetilde{O}(m)$ iterations by \cref{lemma:stablebc}.
\begin{lemma}[Gradient stability]
\label{lemma:stablebg}
If $\|\mL(\bf)(\bf-\bar{\bf})\|_\infty \le \eps$ and $r \approx_{1+\eps} \bc^\top\bar{\bf} - F^*$ then $\wt{\bg}$ defined as
\[ \wt{\bg}_e = 20m\bc_e/r + \alpha(\bu^+_e - \bf_e)^{-1-\alpha} - \alpha(\bf_e - \bu^-_e)^{-1-\alpha} \forall e \in E \] satisfies
\[ \left\|\mL(\bf)^{-1}\left(\wt{\bg} - (\bc^\top\bar{\bf}-F^*)/r \cdot \bg(\bar{\bf})\right) \right\| \le 6\alpha\eps.\]
\end{lemma}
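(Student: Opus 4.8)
The plan is to first cancel the leading $20m\bc$-term (this is exactly why the lemma scales $\bg(\bar{\bf})$ by $(\bc^\top\bar{\bf}-F^*)/r$ rather than leaving it alone) and then reduce the remaining barrier-gradient part to the same multiplicative-stability estimates already established in the proof of \cref{lemma:stablebell}. Write $s \defeq \bc^\top\bar{\bf} - F^*$, so that by \cref{def:lg},
\[ \bg(\bar{\bf})_e = 20m\bc_e/s + \alpha(\bu^+_e-\bar{\bf}_e)^{-1-\alpha} - \alpha(\bar{\bf}_e-\bu^-_e)^{-1-\alpha}. \]
Multiplying by $s/r$ and subtracting from $\wt{\bg}_e$, the two copies of $20m\bc_e/r$ cancel exactly, leaving
\[ \wt{\bg}_e - \tfrac{s}{r}\bg(\bar{\bf})_e = \alpha\big[(\bu^+_e-\bf_e)^{-1-\alpha} - \tfrac{s}{r}(\bu^+_e-\bar{\bf}_e)^{-1-\alpha}\big] - \alpha\big[(\bf_e-\bu^-_e)^{-1-\alpha} - \tfrac{s}{r}(\bar{\bf}_e-\bu^-_e)^{-1-\alpha}\big]. \]
So the whole claim reduces to bounding each bracket by $6\eps\,(\bu^+_e-\bf_e)^{-1-\alpha}$ resp. $6\eps\,(\bf_e-\bu^-_e)^{-1-\alpha}$; summing the two, multiplying by $\alpha$, and dividing by $\bell(\bf)_e = (\bu^+_e-\bf_e)^{-1-\alpha}+(\bf_e-\bu^-_e)^{-1-\alpha}$ then yields exactly $6\alpha\eps$ in the $\ell_\infty$ norm.

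For each bracket I would argue that the two factors are multiplicatively close to $1$. Assuming, as in \cref{lemma:stablebell}, that $\eps \le 1/100$, the same computation as in that lemma's proof gives $|\bf_e-\bar{\bf}_e|\le 2\eps(\bu^+_e-\bf_e)$, hence $\bu^+_e-\bf_e\approx_{\exp(2\eps)}\bu^+_e-\bar{\bf}_e$ and therefore $(\bu^+_e-\bar{\bf}_e)^{-1-\alpha}\approx_{\exp(2\eps(1+\alpha))}(\bu^+_e-\bf_e)^{-1-\alpha}$; meanwhile $r\approx_{1+\eps}s$ gives $s/r\approx_{\exp(\eps)}1$. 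Combining the two, $\tfrac{s}{r}(\bu^+_e-\bar{\bf}_e)^{-1-\alpha}\approx_{\beta}(\bu^+_e-\bf_e)^{-1-\alpha}$ with $\beta = \exp(\eps(3+2\alpha))$, and using the elementary fact that $B\approx_{\beta}A$ implies $|A-B|\le(\beta-1)A$, the first bracket is at most $(\beta-1)(\bu^+_e-\bf_e)^{-1-\alpha}$. Finally $\beta-1\le 6\eps$ for $\eps\le 1/100$: since $\alpha=1/(1000\log mU)$ is tiny we have $\eps(3+2\alpha)\le 3.01\eps$, and $e^{3.01\eps}-1\le 6\eps$ on this range. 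The $\bu^-$ bracket is identical with $\bf_e-\bu^-_e$ in place of $\bu^+_e-\bf_e$.

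I do not expect a real obstacle. The only points requiring care are (i) verifying the $20m\bc_e/r$ terms cancel exactly — this is the structural reason the statement multiplies $\bg(\bar{\bf})$ by $s/r$; and (ii) tracking constants so that the combined multiplicative error $(1+\eps)\exp(2\eps(1+\alpha))$ stays below $1+6\eps$, which is where both $\eps\le 1/100$ and the smallness of $\alpha$ are used. Note that, somewhat pleasantly, the Taylor estimates \eqref{eq:uppertaylor}--\eqref{eq:lowertaylor} are not actually needed here: the crude multiplicative bounds already used in the proof of \cref{lemma:stablebell} suffice.
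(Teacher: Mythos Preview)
Your proposal is correct, and it takes a genuinely different (and somewhat cleaner) route than the paper.

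The paper decomposes $\wt{\bg}_e - (s/r)\bg(\bar{\bf})_e$ as
\[
\underbrace{(1 - s/r)\,\alpha\big((\bu^+_e-\bar{\bf}_e)^{-1-\alpha} - (\bar{\bf}_e-\bu^-_e)^{-1-\alpha}\big)}_{\text{scalar error}} \;+\; \underbrace{\alpha\big((\bu^+_e-\bf_e)^{-1-\alpha}-(\bu^+_e-\bar{\bf}_e)^{-1-\alpha}\big) - \alpha\big((\bf_e-\bu^-_e)^{-1-\alpha}-(\bar{\bf}_e-\bu^-_e)^{-1-\alpha}\big)}_{\text{barrier shift}},
\]
bounding the first term by $2\alpha\eps$ via \cref{lemma:stablebell} and the second by $4\alpha\eps$ via the first-order Taylor estimates \eqref{eq:uppertaylor}--\eqref{eq:lowertaylor} from \cref{lemma:taylorxa}. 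You instead group the $\bu^+$ and $\bu^-$ contributions separately, absorbing the $s/r$ factor into each bracket, and bound each bracket in one shot by a single multiplicative-approximation argument $\tfrac{s}{r}(\bu^+_e-\bar{\bf}_e)^{-1-\alpha}\approx_{\exp(\eps(3+2\alpha))}(\bu^+_e-\bf_e)^{-1-\alpha}$. Your observation that the Taylor bounds are unnecessary here --- the multiplicative closeness already established in the proof of \cref{lemma:stablebell} suffices --- is correct and a nice simplification. The paper's decomposition is a bit more modular (it isolates the two error sources), while yours is shorter and uses one fewer lemma. Both implicitly need $\eps$ small (say $\eps\le 1/100$), which is not stated in the lemma but holds everywhere it is invoked.
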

\begin{proof}
We can first compute that
\begin{align} &\left[\wt{\bg} - (\bc^\top\bar{\bf}-F^*)/r \cdot \bg(\bar{\bf})\right]_e \nonumber \\ =~&\left(1 - \frac{\bc^\top\bar{\bf}}{r}\right)\left(\alpha(\bu^+_e-\bar{\bf}_e)^{-1-\alpha} - \alpha(\bar{\bf}_e-\bu^-_e)^{-1-\alpha}\right) \label{eq:term1} \\
+~&\alpha((\bu^+_e-\bf_e)^{-1-\alpha} - (\bu^+_e-\bar{\bf}_e)^{-1-\alpha}) - \alpha((\bf_e-\bu^-_e)^{-1-\alpha}-(\bar{\bf}_e-\bu^-_e)^{-1-\alpha}). \label{eq:term2}
\end{align}
We bound the terms in \eqref{eq:term1} and \eqref{eq:term2} separately. To bound \eqref{eq:term1} we write
\begin{align*}
&\left(1 - \frac{\bc^\top\bar{\bf}}{r}\right)\alpha\left\|\mL(\bf)^{-1}\left((\bu^+ - \bar{\bf})^{-1-\alpha} - (\bar{\bf} - \bu^-)^{-1-\alpha} \right) \right\|_\infty \\
=~& \left(1 - \frac{\bc^\top\bar{\bf}}{r}\right)\alpha\left\|\mL(\bf)^{-1}\bell(\bar{\bf})\right\|_\infty \overset{(i)}{\le} 1.1\eps\alpha \cdot (1+3\eps) \le 2\eps\alpha,
\end{align*}
where $(i)$ follows from the hypothesis and \cref{lemma:stablebell}. To bound \eqref{eq:term2} we write
\begin{align*}
    &\alpha\left\|\mL(\bf)^{-1}\left(((\bu^+-\bf)^{-1-\alpha}-(\bu^+-\bar{\bf})^{-1-\alpha}) - ((\bf-\bu^-)^{-1-\alpha}-(\bar{\bf}-\bu^-)^{-1-\alpha}) \right) \right\|_\infty \\
    \overset{(i)}{\le}~& 2\alpha\left\|\mL(\bf)^{-1}((\bu^+-\bf)^{-2-\alpha}+(\bf-\bu^-)^{-2-\alpha})(\bf - \bar{\bf}) \right\|_\infty \\
    \le~&2\alpha \max_{e \in E}\{(\bu^+_e-\bf_e)^\alpha, (\bf_e-\bu^-_e)^\alpha\} \|\mL(\bf)(\bf - \bar{\bf})\|_\infty
    \le 2\alpha(2U)^\alpha\eps \le 4\alpha\eps
\end{align*}
where $(i)$ follows from \cref{lemma:taylorxa}, specifically \eqref{eq:uppertaylor}, \eqref{eq:lowertaylor}. Summing these gives the desired bound.
\end{proof}
We now show the main result of this section, \cref{thm:IPM}.
\begin{proof}[Proof of \cref{thm:IPM}]
The first item follows from \cref{lemma:optCirculationQuality}. To show the third item, note that the update in the second item exactly corresponds to \cref{lemma:phidecrease}, so $\Phi(\bf^{(t)}) \le \Phi(\bf^{(t-1)}) - \Omega(\kappa^2)$. Once the potential has reduced to $-O(m \log m)$, then $\bc^\top \bf^{(t)} - \bc^\top\bf \le (mU)^{-10}$ (\cref{lemma:obvious}), so the algorithm takes $\O(m\kappa^{-2})$ total iterations.
\end{proof}

\subsection{Initial and Final Point}
\label{subsec:initialfinal}
In this section we discuss how to initialize our method and how to get an exact optimal solution from a nearly optimal solution. For the latter piece, we can directly cite previous work which gives a rounding method using the Isolation Lemma.

\begin{lemma}[{\cite[Lemma 8.10]{BLNPSSW20}}]
\label{lemma:finalpoint}
Consider a min-cost flow instance $\mathcal{I} = (G, \bd, \bc)$ on a graph $G = (V, E)$ with demands $\bd \in \{-U, \dots, U\}^V$ and cost $\bc \in \{-U, \dots, U\}^E.$ Assume that all optimal flows have congestion at most $U$ on every edge.

Consider a perturbed instance $\wt{\mathcal{I}} = (G, \bd, \wt{\bc})$ on the same graph $G = (V, E)$ and demand $\bd$, but with modified cost vector $\wt{\bc} \in \R^E$ defined as $\wt{\bc}_e = \bc_e + \bz_e$ for independent, random $\bz_e \in \left\{\frac{1}{4m^2U^2}, \frac{2}{4m^2U^2}, \dots, \frac{2mU}{4m^2U^2} \right\}$ for all $e \in E$. Let $\wt{\bf}$ be a solution for $\wt{\mathcal{I}}$ whose cost is at most $\frac{1}{12m^3U^3}$ from optimal. Let $\bf$ be obtained by rounding $\wt{\bf}$ to the nearest integer on every edge. Then $\bf$ is an optimal flow for the instance $\mathcal{I}$ with probability at least $1/2$.
\end{lemma}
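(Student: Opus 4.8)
The statement is quoted from prior work, so here is how I would reprove it. Throughout, write $F^\star \defeq \min\{\bc^\top\bf : \mB^\top\bf=\bd\}$ and work inside the polytope $Q \defeq \{\bf : \mB^\top\bf=\bd,\ |\bf_e|\le U\ \text{for all }e\}$; by the hypothesis that every optimal flow has congestion at most $U$, imposing these capacities changes neither $F^\star$ nor the set of optimal flows, so we may treat both $\mathcal I$ and $\wt{\mathcal I}$ as the corresponding capacitated instances. Since $\mB$ is totally unimodular and $\bd$ is integral, every vertex of $Q$ is an integral flow and $\min_{\bf\in Q}\wt\bc^\top\bf$ is attained at a vertex. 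The first step is an isolation argument à la Mulmuley--Vazirani--Vazirani: call an edge $e$ \emph{ambiguous} if some two distinct $\wt\bc$-optimal vertices $\bf,\bf'$ of $Q$ have $\bf_e\ne\bf'_e$; if no edge is ambiguous the perturbed optimum is unique. Fix $e$, condition on $\{\bz_{e'}\}_{e'\ne e}$, and for each value $c$ let $Z_c$ be the minimum of $\bc^\top\bf+\sum_{e'\ne e}\bz_{e'}\bf_{e'}$ over vertices $\bf\in Q$ with $\bf_e=c$ (undefined if this slice is empty). The perturbed optimum equals $\min_c\bigl(c\cdot\bz_e + Z_c\bigr)$, the lower envelope of at most ($\#$distinct values of $\bf_e$) lines, hence has at most that many $-1$ breakpoints; $e$ is ambiguous exactly when $\bz_e$ equals one of these finitely many, $\bz_e$-independent breakpoint values. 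Since $\bz_e$ is uniform over $2mU$ values (and, after the standard reduction making flows nonnegative, $\bf_e\in\{0,\dots,U\}$, giving at most $U$ breakpoints), $\Pr[e\ \text{ambiguous}]\le U/(2mU)=1/(2m)$, and a union bound over the $m$ edges gives that the $\wt\bc$-optimum $\bf^\star$ of $Q$ is unique with probability at least $1/2$; it is integral.

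Next I would show this unique perturbed optimum is optimal for $\mathcal I$. Since $|\bz_e|\le \tfrac1{2mU}$ and every flow in $Q$ has congestion $\le U$, $|\bz^\top\bf|\le\tfrac12$ for any single $\bf\in Q$. Let $\bf^\star_{\mathcal I}\in Q$ be any $\bc$-optimum. From $\wt\bc^\top\bf^\star\le\wt\bc^\top\bf^\star_{\mathcal I}$ and integrality of $\bc$ one gets $\bc^\top\bf^\star\in\{F^\star,F^\star+1\}$, and $F^\star+1$ is excluded: it would force $\wt\bc^\top\bf^\star\ge F^\star+\tfrac12\ge\wt\bc^\top\bf^\star_{\mathcal I}$, contradicting that $\bf^\star$ is the \emph{strict} minimizer over vertices of $Q$ (if instead $\bf^\star=\bf^\star_{\mathcal I}$ we are done directly). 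Hence $\bc^\top\bf^\star=F^\star$, i.e.\ $\bf^\star$ is optimal for $\mathcal I$.

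It remains to show rounding $\wt\bf$ to the nearest integer on every edge recovers $\bf^\star$. Decompose the circulation $\wt\bf-\bf^\star$ into simple cycles, $\wt\bf-\bf^\star=\sum_j\mu_j\bchi_j$ with $\bchi_j\in\{0,\pm1\}^E$, $\mu_j>0$, chosen with no cancellation on any edge, so $\|\wt\bf-\bf^\star\|_\infty\le\sum_j\mu_j$ and $\wt\bc^\top(\wt\bf-\bf^\star)=\sum_j\mu_j\,\wt\bc^\top\bchi_j$. For each $j$: $\bc^\top\bchi_j\ge0$ since $\mathcal I$ is bounded; if $\bc^\top\bchi_j\ge1$ then $\wt\bc^\top\bchi_j\ge\tfrac12$; if $\bc^\top\bchi_j=0$ then $\bf^\star+\bchi_j$ is again a global $\bc$-optimum, hence has congestion $\le U$, hence lies in $Q$, so $\wt\bc^\top\bchi_j=\bz^\top\bchi_j$ is a \emph{positive} integer multiple of $\tfrac1{4m^2U^2}$ by uniqueness of $\bf^\star$ in $Q$. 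Either way $\wt\bc^\top\bchi_j\ge\tfrac1{4m^2U^2}$, so $\sum_j\mu_j\le 4m^2U^2\cdot\bigl(\wt\bc^\top\wt\bf-\wt\bc^\top\bf^\star\bigr)\le 4m^2U^2\cdot\tfrac1{12m^3U^3}=\tfrac1{3mU}<\tfrac12$. Thus $\|\wt\bf-\bf^\star\|_\infty<\tfrac12$, so nearest-integer rounding of $\wt\bf$ yields exactly $\bf^\star$, which by the previous step is optimal for $\mathcal I$; the whole argument succeeds precisely when the isolation step does, i.e.\ with probability at least $1/2$.

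I expect the main obstacle to be the isolation step: one must make the lower-envelope/breakpoint count fully rigorous (treating empty slices $\{\bf\in Q:\bf_e=c\}$, and arguing that away from breakpoints the optimal $\bf_e$ is unique so $e$ is unambiguous there), and, crucially, match the perturbation's granularity $\tfrac1{4m^2U^2}$ and range $2mU$ against that count so the failure probability is genuinely $\le1/2$ rather than merely $O(1)$ — this is what pins down the need for the standard nonnegativity reduction (or a slightly larger range). A secondary, purely bookkeeping subtlety is whether one argues over the uncapacitated instance or over $Q$; the clean route is to do everything in $Q$ and transfer optimality back using the congestion hypothesis, exactly as above, which also sidesteps the fact that $\wt\bc$ could otherwise create a negative-cost cycle and make ``optimal for $\wt{\mathcal I}$'' ill-defined.
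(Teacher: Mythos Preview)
The paper does not prove this lemma; it is cited directly from prior work (\cite[Lemma 8.10]{BLNPSSW20}) with no argument given beyond the remark that it is ``a rounding method using the Isolation Lemma.'' So there is nothing to compare against in this paper.

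Your reproof is the standard one and is essentially correct. The three-step structure --- isolation to make the perturbed optimum a unique integral vertex, a cost-perturbation bound to show that vertex is optimal for the unperturbed instance, and a conformal cycle decomposition of $\wt{\bf}-\bf^\star$ to show $\|\wt{\bf}-\bf^\star\|_\infty<\tfrac12$ --- is exactly how the cited result is proved. Two small points: first, in step 4 you implicitly use that uniqueness of $\bf^\star$ among \emph{vertices} of $Q$ implies strict uniqueness over all of $Q$ (needed when you apply uniqueness to $\bf^\star+\bchi_j$, which need not be a vertex); this is true but worth stating. Second, your own flagged concern about the isolation constant is real as the lemma is literally stated: with $\bf_e\in\{-U,\dots,U\}$ the envelope has up to $2U$ breakpoints and the union bound only gives failure probability $\le 1$, not $\le 1/2$. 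The cited source works with nonnegative flows (or equivalently, one applies the trivial shift $\bf_e\mapsto\bf_e+U$ before perturbing), which recovers the $U$ breakpoints and the $1/(2m)$ per-edge bound you want. This is a bookkeeping issue in the statement rather than a gap in your argument.
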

It is worth noting that scaling up the cost vector $\wt{\bc}$ of the  perturbed instance $\wt{\mathcal{I}}$ in \cref{lemma:finalpoint} by $4m^2U^2$ results in a min-cost flow instance with integral demands and costs again.

Now we describe how to augment our original graph with additional edges without affecting the optimal solution, but allows us to initialize a solution with bounded potential. The proof is deferred to \cref{app:initialpoint}.
\begin{lemma}[Initial Point]
\label{lemma:initialpoint}
There is an algorithm that given a graph $G = (V, E)$ and min-cost flow instance $\mathcal{I} = (G, \bd, \bc, \bu^+, \bu^-)$ with demands $\bd \in \{-U, \dots, U\}^V$, and costs and lower/upper capacities $\bc, \bu^-, \bu^+ \in \{-U, \dots, U\}^V$, constructs a min-cost flow instance $\wt{\mathcal{I}} = (\wt{G}, \wt{\bd}, \wt{\bc}, \wt{\bu}^+, \wt{\bu}^-)$ with $O(m)$ edges and $\wt{\bd} \in \{-2mU, \dots, 2mU\}^{V(\wt{G})}$, $\wt{\bc}, \wt{\bu}^+, \wt{\bu}^- \in \{-4mU^2, \dots, 4mU^2\}^{E(\wt{G})}$, and a flow $\bf^{\init}$ on $\wt{G}$ routing $\wt{\bd}$ such that $\Phi(\bf^{\init}) \le 200m \log mU$.

Also, given an optimal flow $\wt{\bf}$ for $\wt{\mathcal{I}}$, the algorithm can either compute an optimal flow $\bf$ for $\mathcal{I}$ or conclude that $\mathcal{I}$ admits no feasible flow. The algorithm runs in time $O(m)$.
\end{lemma}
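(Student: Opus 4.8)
The plan is to embed $\mathcal{I}$ into an augmented instance $\wt{\mathcal{I}}$ by adding a single new root vertex $r$ joined to every original vertex by a ``star'', using the star edges to absorb a correction flow so that, starting from the \emph{midpoint flow} on the original edges, we land at a feasible flow that is comfortably in the interior of the feasible polytope. First I would preprocess: if $\bu^-_e > \bu^+_e$ for some $e$, or some vertex with no incident edge has nonzero demand, report infeasibility; for every edge $e$ with $\bu^-_e = \bu^+_e$ the flow is forced to $\bu^+_e$, so delete $e$ and shift the endpoints' demands by $\mp\bu^+_e$; finally drop isolated zero-demand vertices. This costs $O(m)$ time, and afterwards we may assume $\bu^+_e - \bu^-_e \ge 1$ for all $e$, $n = O(m)$, and $\|\bd\|_\infty = O(mU)$. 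Restoring the forced flows and undoing the demand shifts turns an optimal/infeasibility certificate for the preprocessed instance into one for the original, so it suffices to treat the preprocessed instance, which I keep calling $\mathcal{I}$.

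\textbf{The construction.} Form $\wt{G}$ by adding $r$ and, for each $v \in V$, two edges $(r,v)$ and $(v,r)$ of cost $4mU^2$ and capacity $[0,4mU^2]$ (minor care with this constant versus the largest possible correction $O(mU)$ is needed, and holds once $mU$ exceeds a fixed constant; $O(1)$-size instances are solved directly). Set $\wt{\bd}_v = \bd_v$ and $\wt{\bd}_r = 0$. Take $\bf^{\init}_e \defeq (\bu^+_e + \bu^-_e)/2$ on the original edges; this routes some demand $\bd' \defeq \mB^\top\bf^{\init}$ with $\|\bd'\|_\infty = O(mU)$ and $\sum_v \bd'_v = 0$. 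On the star, send $\max(\bd_v - \bd'_v, 0) + 1$ units on $(v,r)$ and $\max(\bd'_v - \bd_v, 0) + 1$ units on $(r,v)$; the net surplus at $v$ is then $\bd_v - \bd'_v$ and at $r$ it is $-\sum_v(\bd_v - \bd'_v) = 0$, so $\bf^{\init}$ routes $\wt{\bd}$ on $\wt{G}$, which has $O(m)$ edges, $\wt{\bd} \in \{-2mU,\dots,2mU\}^{V(\wt G)}$, and $\wt{\bc}, \wt{\bu}^\pm$ bounded by $4mU^2$.

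\textbf{Potential bound.} Every residual capacity at $\bf^{\init}$ is $\ge 1/2$: on an original edge both residuals equal $(\bu^+_e - \bu^-_e)/2 \ge 1/2$; on a star edge the lower residual equals its flow $\ge 1$ and the upper residual is $4mU^2$ minus a quantity $O(mU)$, hence $\ge 1$. Since $\alpha = 1/(1000\log mU)$, any $x \ge 1/2$ satisfies $x^{-\alpha} \le 2^{\alpha} \le 2$, so each of the $O(m)$ barrier terms of $\Phi(\bf^{\init})$ is $O(1)$, summing to $O(m)$. For the leading term, all of $\wt{\bc}, \wt{\bu}^\pm$ are $\poly(mU)$, so the optimal cost $\wt{F}^*$ of $\wt{\mathcal{I}}$ is bounded and $0 < \bc^\top\bf^{\init} - \wt{F}^* \le \poly(mU)$ (if equality held, $\bf^{\init}$ is already optimal for $\wt{\mathcal{I}}$ and we simply skip the IPM and go to recovery). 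Hence $20m\log(\bc^\top\bf^{\init} - \wt{F}^*) = O(m\log mU)$, and altogether $\Phi(\bf^{\init}) = O(m\log mU) \le 200m\log mU$.

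\textbf{Recovery, and the main obstacle.} $\wt{\mathcal{I}}$ is feasible and has bounded cost, so it has an integral optimum; let $\wt{\bf}$ be the optimal flow we are handed, which we may assume integral (Lemma~\ref{lemma:finalpoint}'s rounding). If $\mathcal{I}$ is feasible, routing $\bd$ on the original edges and $0$ on the star is feasible for $\wt{\mathcal{I}}$ with cost $\le mU^2$; conversely any flow putting a (necessarily $\ge 1$) integral amount on a star edge pays $\ge 4mU^2$ there while its original part pays $\ge -mU^2$, i.e.\ cost $\ge 3mU^2 > mU^2$. Thus when $\mathcal{I}$ is feasible every optimum of $\wt{\mathcal{I}}$ is zero on the star and restricts to an optimal flow for $\mathcal{I}$; when $\mathcal{I}$ is infeasible every feasible flow of $\wt{\mathcal{I}}$ is nonzero on some star edge. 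So the algorithm inspects $\wt{\bf}$ on the star edges: if they all vanish it outputs the restriction (after restoring the forced flows), otherwise it declares infeasibility; all of this is $O(m)$ time. The construction is essentially routine; the only delicate points are (i) keeping every residual bounded below by a constant while respecting the capacity bound $4mU^2$ — handled by the ``$+1$ baseline'' on the star and, if necessary, treating $O(1)$-size instances directly — and (ii) the correctness of the infeasibility test, which forces the star-edge cost $4mU^2$ to dominate the $O(mU^2)$ cost range of flows supported on the original edges.
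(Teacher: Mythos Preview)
Your proposal is correct and follows essentially the same approach as the paper: add a star vertex, use the midpoint flow on original edges, route the demand correction through high-cost star edges, bound $\Phi$ via a constant lower bound on all residuals, and recover by checking whether the optimal flow uses any star edge. The implementation differences are minor: you use two star edges per vertex with a ``$+1$'' baseline and fixed capacity $4mU^2$, whereas the paper uses a single directed star edge per vertex with capacity $2|\barbd_v-\bd_v|$; your explicit preprocessing of tight edges ($\bu^+_e=\bu^-_e$) also cleanly handles a case the paper's proof leaves implicit.
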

\section{Decremental Spanner and Embedding}
\label{sec:spanner}

The main result of this section is summarized in the following theorem. Intuitively, the theorem states that given a low-degree graph $G$, one can maintain a sparsifier $H$ of $G$ and embed $G$ with short paths and low congestion into $H$.

\begin{theorem}
\label{thm:spanner}
Given an $m$-edge $n$-vertex unweighted, undirected, dynamic graph $G$ undergoing update batches $U^{(1)}, U^{(2)}, \ldots$ consisting only of edge deletions and $\tilde{O}(n)$ vertex splits. There is a randomized algorithm with parameter $1 \le L \le o\left(\frac{\sqrt{\log(m)}}{\log\log m}\right)$, that maintains a spanner $H$ and an embedding $\Pi_{G \to H}$ such that
\begin{enumerate}
    \item \label{prop:sparsitySpanner} \underline{Sparsity and Low Recourse:} initially $H^{(0)}$ has sparsity $\tilde{O}(n)$. At any stage $t \geq 1$, the algorithm outputs a batch of updates $U_H^{(t-1)}$ that when applied to $H^{(t-1)}$ produce $H^{(t)}$ such that $H^{(t)} \subseteq G^{(t)}$, $H^{(t)}$ consists of at most $\tilde{O}(n)$ edges and $\sum_{t' \leq t} \Enc(U_H^{(t')}) = \tilde{O}\left(n + \sum_{t' \leq t} |U^{(t')}| \cdot n^{1/L}\right)$, and 
    
    \item \underline{Low Congestion, Short Paths Embedding:} $\length(\Pi_{G \to H}) \le (\gamma_l)^{O(L)}$ and $\vcong(\Pi_{G \to H}) \le (\gamma_c)^{O(L)} \Delta_{\max}(G)$, for $\gamma_l, \gamma_c = \exp(O(\sqrt{\log m} \cdot \log\log m))$, and
    \item
    \underline{Low Recourse Re-Embedding:} \label{prop:lowRecourseSpanner}
    the algorithm further reports after each update batch $U^{(t)}$ at stage $t$ is processed, a (small) set $D^{(t)} \subseteq E(H^{(t)})$ of edges, such that for all other edges $e \in E(H^{(t)}) \setminus D^{(t)}$, there exists \textbf{no} edge $e' \in E(G^{(t)})$ whose embedding path $\Pi^{(t)}_{G \to H}(e')$ contains $e$ at the current stage but did not before the stage. The algorithm ensures that at any stage $t$, we have $\sum_{t' \leq t} |D^{(t')}| = \tilde{O}\left( \sum_{t' \leq t} |U^{(t)}| \cdot n^{1/L}(\gamma_c\gamma_l)^{O(L)}\right)$, i.e. that the sets $D^{(t)}$ are roughly upper bounded by the size of $U^{(t)}$ on average. \label{item:lowReEmbed} 
\end{enumerate}
The algorithm takes initialization time $\tilde{O}(m \gamma_l)$ and processing the $t$-th update batch $U^{(t)}$ takes amortized update time $\tilde{O}(\textsc{Enc}(U^{(t)}) \cdot n^{1/L}(\gamma_c\gamma_l)^{O(L)}\Delta_{\max}(G))$, and succeeds with probability at least $1-n^{-C}$ for any constant $C > 0$, specified before the procedure is invoked.
\end{theorem}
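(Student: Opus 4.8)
The plan is to build the spanner statically first and then make it dynamic. By bucketing edges into $O(\log m)$ length classes and taking the union of the per-class outputs, I reduce to the unit-length case. Given a unit-length graph I compute an edge-disjoint decomposition into near-regular $\Omega(1/\log^3 m)$-expanders $H_1,\dots,H_k$ in which every vertex lies in only $\tilde O(1)$ of the $H_i$ (repeatedly extract sparse cuts and peel off low-degree vertices to enforce near-regularity). For each $H_i$ I sample every edge independently with probability $p_i\sim\log^4 m/\Delta_{\min}(H_i)$ to get $\SS_i$; by a Chernoff bound (\cref{thm:chernoffBound}) and a union bound over the $\poly$-many cuts, $\SS_i$ is again an expander with $\tilde O(|V(H_i)|)$ edges. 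Running the expander short-path/routing data structure of \cite{CS21} on each $\SS_i$ produces, for every $e=(u,v)\in H_i$, a $u$–$v$ path in $\SS_i$ of length $\gamma_l=m^{o(1)}$ using each vertex on $\le\gamma_c\Delta_{\max}(G)$ of the produced paths. Taking $H=\bigcup_i\SS_i$ and $\Pi_{G\to H}$ the union of these embeddings gives sparsity $\tilde O(n)$, $\length(\Pi_{G\to H})\le\gamma_l$, and $\vcong(\Pi_{G\to H})\le\gamma_c\Delta_{\max}(G)$.

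\textbf{Single-batch repair.} To process an update batch $U$, I forward its deletions and induced vertex splits to $H$; the only broken embedding paths are those meeting the set $S$ of touched endpoints, with $|S|\le 2\,\Enc(U)$. I build a shortcut graph $J$ on vertex set $S$: for each $e=(a,b)$ with $\Pi_{G\to H}(e)$ meeting $S$, add $\hat e=(\hat a,\hat b)$ where $\hat a,\hat b\in S$ are the touched vertices closest to $a,b$ along the surviving prefix/suffix of $\Pi_{G\to H}(e)$. The vertex-congestion bound forces $\Delta_{\max}(J)\le\gamma_c\Delta_{\max}(G)$, so $|E(J)|=\tilde O(\Enc(U)\gamma_c\Delta_{\max}(G))$. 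I run the static construction on $J$ to obtain $\tilde J$ and $\Pi_{J\to\tilde J}$, add to $H$ the $G$-pre-images of the edges sampled into $\tilde J$, and repair each broken path of $f$ as (surviving prefix to the first vertex of $S$) $\oplus$ (the $\Pi_{J\to\tilde J}$-route, each edge $\hat e$ of which is realized by $P_{\hat e}=\Pi_{G\to H}(e)[\hat a,a]\oplus(a,b)\oplus\Pi_{G\to H}(e)[b,\hat b]$ in the updated $H$) $\oplus$ (surviving suffix). Setting $D^{(t)}=\bigcup_{\hat e\in\tilde J}E(P_{\hat e})$ gives the re-embedding property: a rerouted $f$ routes only through edges of its own previous path (a decremental change there) and through $D^{(t)}$, and $|D^{(t)}|=\tilde O(|\tilde J|\gamma_l)=\tilde O(\Enc(U)(\gamma_c\gamma_l)^{O(1)})$; recourse to $H$ is $\tilde O(|S|)$, and each path picks up only $O(1)$ extra segments, so lengths and vertex congestion grow by at most an $O(\gamma_l),O(\gamma_c)$ factor per repair.

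\textbf{Amortization over $L$ levels.} Repeated repair would let the parameters drift, so I organize the construction into $L$ generations with geometrically spaced rebuild thresholds: generation $j$ absorbs $\Theta(n^{j/L})$ updates via the repair procedure (recursing into generation $j-1$ to maintain the shortcut graphs) before being rebuilt by the static algorithm on the generation-$(j-1)$ spanner, and the bottom generation is rebuilt every batch; the output is read from the top. A geometric rebuild schedule amortizes rebuild cost to $\tilde O(n^{1/L})$ per update per generation, and each generation multiplies length by $O(\gamma_l)$, vertex congestion by $O(\gamma_c)$, and recourse/$|D|$ by $(\gamma_c\gamma_l)^{O(1)}$; over $L$ generations this yields $\length\le\gamma_l^{O(L)}$, $\vcong\le\gamma_c^{O(L)}\Delta_{\max}(G)$, the sparsity, re-embedding-recourse ($|D^{(t)}|$), and spanner-recourse ($\Enc(U_H^{(t')})$) bounds, initialization time $\tilde O(m\gamma_l)$, and amortized update time $\tilde O(\Enc(U^{(t)})n^{1/L}(\gamma_c\gamma_l)^{O(L)}\Delta_{\max}(G))$. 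A union bound over the $\poly$-many sampling events (boosting per-event failure to $n^{-C-O(1)}$) gives the high-probability guarantee, and the standard trick of deferring work to geometrically growing super-batches turns worst-case per-batch counts into the stated amortized ones.

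\textbf{Main obstacle.} I expect the crux to be the two monotonicity-flavored points the overview flags. First, bounding the \emph{vertex} congestion through unboundedly many repairs: one must argue that every repair splices only $O(1)$ old segments with fresh short segments, so the bound grows by a constant factor per generation and resets at each rebuild, and that forming $P_{\hat e}$ does not re-pile congestion onto already-congested edges — here a worst-case (not just average-case) length/congestion guarantee from the \cite{CS21} routing is essential. Second, establishing the \emph{decremental} structure of $\Pi^{-1}$ outside the sets $D^{(t)}$, which is what allows \cref{thm:MMCHiddenStableFlow} to fix and reuse an initial width upper bound. Threaded through both is handling vertex splits (not just deletions) inside the expander decomposition and the expander-routing structure, keeping the recursively built shortcut graphs low-degree after many splits, and making the whole thing robust to an adaptive adversary — which is where most of the remaining work lies.
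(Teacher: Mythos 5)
Your proposal follows essentially the same route as the paper: a static expander-decompose-and-sample construction with explicit embeddings via the short-path routing data structure of \cite{CS21}, a single-batch repair step that projects broken embedding paths onto the touched vertex set to form an auxiliary graph $J$, sparsifies $J$, adds preimages of $\tilde J$'s edges back to $H$, and defines $D^{(t)}$ as the union of the splice paths $P_{\hat e}$, and an $L$-level geometric batching scheme that amortizes rebuild cost and controls multiplicative growth of length and vertex congestion per level. The "main obstacle" paragraph correctly flags the two points the paper's analysis devotes the most care to (worst-case vertex congestion through repeated repairs, and the decremental structure of $\Pi^{-1}$ outside $D^{(t)}$), and the mechanisms you sketch for them match the paper's, so this is a faithful reconstruction rather than a different argument.
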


Taking $L = (\log m)^{1/4}$ in \cref{thm:spanner} gives a parameter $\gamma_s = \exp(O(\log^{3/4}m\log\log m))$ such that the amortized runtime, lengths of the embeddings, and amortized size of $D$ are all $O(\gamma_s)$. We emphasize that the guarantees \ref{prop:sparsitySpanner} and \ref{item:lowReEmbed} are with respect to the number of updates in each batch $U^{(t)}$ and \emph{not} with respect to the (possibly much larger) encoding size of $U^{(t)}$. This is of utmost importance for our application.

In this section, we will prove \Cref{thm:spanner} under the assumption that the update sequence is bounded by $n-1$ and that each update batch consists only of a single update. This is without loss of generality as one can restart the algorithm every $n$ updates without affecting any of the bounds.

\subsection{The Algorithm}

\paragraph{Data Structures.} Our algorithm to implement \Cref{thm:spanner} works with a rather standard batching approach with $L+1$ levels. The algorithm therefore maintains graphs $H_0, H_1, \dots, H_{L}$ which form the sparsifier $H \defeq \bigcup H_j$ implicitly. The algorithm recomputes each graph $H_j$ every so often, with shallower levels being recomputed less often than deeper levels. However, each $H_j$ undergoes frequent changes since after each stage we apply the updates to $G$ to each graph $H_j$ (if applicable) such that the graphs remain subgraphs of $G$ with the same vertex set at any stage.

It further maintains embeddings $\Pi_0, \Pi_1, \dots, \Pi_L$ where each embedding $\Pi_j$ maps a subset of $E(G)$ into the graph $H_{\leq j} \defeq \bigcup_{i \leq j} H_i$. In the algorithm, the pre-image of the embeddings $\Pi_0, \Pi_1, \ldots, \Pi_L$ is not disjoint, and in fact, we let $\Pi_0$ always have the full set $E(G)$ in its pre-image. We define the embedding $\Pi_{\leq j}$ which to be the embedding that maps each edge $e \in E(G)$ via the embedding $\Pi_i$ with the largest $i \leq j$ that has $e$ in its pre-image. 

Whenever we recompute a graph $H_j$, we also recompute $\Pi_j$ such that after recomputation $\Pi_{\leq j}$ embeds the current graph $G$ into the current graph $H_{\leq j}$. As for the graphs $H_{j}$, we apply updates to $G$ to the embedding paths in $\Pi_j$ which means that eventually an edge $e \in E(G)$ with endpoints $a,b$ might not be mapped by $\Pi_j(e)$ to an actual $a$-$b$ path; either because edges on $\Pi_j(e)$ are deleted, or vertices are split or both. However, the algorithm ensures that most edges are correctly mapped via $\Pi_{\leq j}$ at all times and the small fraction of edges that is not properly dealt with are then dealt with by embeddings $\Pi_{j+1}, \Pi_{j+2}, \dots, \Pi_{L}$ on deeper levels. The embedding $\Pi_{G \to H}$ is again maintained implicitly and defined $\Pi_{G \to H} \defeq \Pi_{\leq L}$. 

Finally, we maintain sets $S_0, S_1, \ldots, S_L$. Each set $S_j$ consists of the vertices that are touched by the updates to $G$ since the last time that $H_j$ was recomputed. We give a formal definition for touched vertices below.

\begin{definition}\label{def:touchedVertex}
We say that the $t$-th update to $G$ \emph{touches} a vertex $u$ if the update is an edge deletion and $u$ is one of its endpoints, or if the update is a vertex split and $u$ is one of the resulting vertices from the split.
\end{definition}

\paragraph{Initialization.} We start our algorithm by running the sparsification procedure below to initialize $H_0$ and $\Pi_0$. 
\begin{theorem}\label{cor:sparsifier}
Given an unweighted, undirected graph $G$. There is a procedure $\textsc{Sparsify}(G)$ that produces a sparse subgraph %
$H_0 \subseteq G$ and an embedding $\Pi_{0}$ of vertex-congestion at most $2\gamma_c \Delta_{\max}(G)$ and length at most $\gamma_l$, with high probability. The algorithm takes time $\tilde{O}(m\gamma_l)$.
\end{theorem}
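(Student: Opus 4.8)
The plan is to follow the static outline sketched in \cref{overview:spanner}: decompose $G$ into edge-disjoint, almost-regular expanders, sparsify each one by uniform edge sampling, and route the original edges through the sampled sparsifiers using an expander routing primitive. Here and below ``expander'' always means the notion fixed before the statement, i.e.\ no cut $(X,\bar X)$ of sparsity below $\Omega(1/\log^3 m)$ with respect to $\min\{\vol(X),\vol(\bar X)\}$.

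\textbf{Step 1: almost-regular expander decomposition.} Using a standard expander decomposition applied recursively to the set of inter-cluster edges -- each level removes only an $o(1)$ fraction of the remaining edges, so $\tilde O(1)$ levels exhaust $E(G)$ -- together with a degree-bucketing reduction that makes each cluster almost regular, I would produce subgraphs $H_1,\dots,H_r$ such that: (i) $E(G)=\bigsqcup_i E(H_i)$; (ii) each $H_i$ is an expander in the above sense with $\Delta_{\max}(H_i)\le\tilde O(1)\cdot\Delta_{\min}(H_i)$; and (iii) every vertex of $G$ lies in at most $\tilde O(1)$ of the $H_i$. Properties (i) and (iii) give $\sum_i|E(H_i)|=m$ and $\sum_i|V(H_i)|=\tilde O(n)$, and the whole step runs in $\tilde O(m)$ time.

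\textbf{Step 2: sparsify and embed.} For each $i$ put $p_i\defeq\min\{1,\,c\log^4 m/\Delta_{\min}(H_i)\}$ for a large constant $c$, and let $\SS_i$ keep each edge of $H_i$ independently with probability $p_i$. Applying \cref{thm:chernoffBound} with $\delta=1/2$ to each cut $(X,\bar X)$ of $H_i$ -- expansion plus almost-regularity make the expected number of crossing edges $\mu=\Omega(k\log m)$ when $k$ is the number of vertices on the smaller side, while $\binom{|V(H_i)|}{k}\le |V(H_i)|^{k}$ bounds the number of such cuts -- together with an analogous bound on vertex volumes, shows that w.h.p.\ $\SS_i$ is still an expander (expansion down by only a constant factor) with $|E(\SS_i)|=\tilde O(|V(H_i)|)$. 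Since the $\SS_i$ are edge-disjoint, $H_0\defeq\bigcup_i\SS_i\subseteq G$ has $\sum_i\tilde O(|V(H_i)|)=\tilde O(n)$ edges, proving the sparsity claim. The flow/cut gap in expanders (\cite{leighton1999multicommodity}) shows $H_i$ embeds into $\SS_i$ with edge-congestion $\tilde O(1/p_i)=\tilde O(\Delta_{\min}(H_i))=\tilde O(\Delta_{\max}(G))$ and average path length $\tilde O(1)$; to realize this algorithmically with \emph{worst-case} length and congestion I would run the expander shortest-path data structure of \cite{CS21} on each $\SS_i$, querying, for every $e=(a,b)\in E(H_i)$, a short $a$--$b$ path in $\SS_i$, at the price of an $m^{o(1)}$ factor that I absorb into $\gamma_l,\gamma_c$. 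Setting $\Pi_0(e)$ to be this path (well-defined since $e$ lies in a unique $H_i$), we get $\length(\Pi_0)\le\gamma_l$; and for a fixed vertex $v$, within each of the $\tilde O(1)$ clusters $H_i$ containing it the congestion at $v$ is at most $\deg_{\SS_i}(v)\cdot\tilde O(1/p_i)+\deg_{H_i}(v)=\tilde O(\Delta_{\max}(G))$, using $\deg_{\SS_i}(v)=\tilde O(p_i\Delta_{\min}(H_i))=\tilde O(1)$, so $\vcong(\Pi_0)\le 2\gamma_c\Delta_{\max}(G)$. The running time is dominated by the $m$ path queries to \cite{CS21}, totaling $\tilde O(m\gamma_l)$; Step 1 and the sampling are $\tilde O(m)$.

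\textbf{Main obstacle.} I expect the difficulty to lie not in any single ingredient -- expander decomposition, cut sparsification, and expander routing are all by now standard -- but in making them fit together so that the \emph{worst-case} (not merely average) bounds on $\length(\Pi_0)$ and $\vcong(\Pi_0)$ emerge with the stated parameters. In particular one must verify that the $\tilde O(1/p_i)$ edge-congestion of routing $H_i$ into $\SS_i$ becomes vertex-congestion only $\tilde O(\Delta_{\max}(G))$ (crucially using $\Delta(\SS_i)=\tilde O(1)$ and that a vertex lies in $\tilde O(1)$ clusters), that \cite{CS21} can be invoked to give worst-case and not just average length/congestion guarantees, and that the multiplicative losses from the $\tilde O(1)$ recursion depth of the decomposition and from \cite{CS21} together stay within $\gamma_l,\gamma_c=\exp(O(\sqrt{\log m}\,\log\log m))$.
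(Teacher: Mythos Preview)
Your approach matches the paper's: \cref{cor:sparsifier} is obtained as the special case $J=G$, $H'=G$, $\Pi_{J\to H'}=\mathrm{id}$ of the general \cref{lma:staticEmbed}, and the proof of that theorem is exactly the construction you outline---the expander decomposition of \cref{thm:expanderStatement}, uniform sampling in each piece, and routing via the data structure of \cref{thm:apspDataStructure}.

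The obstacle you flag---turning the short-path primitive of \cite{CS21} into worst-case edge congestion $\tilde O(1/p_i)$---is real, and the paper resolves it with an explicit trick you do not describe (\cref{alg:sparsify}, \cref{clm:edgeCongestion}). While embedding the edges of $J_i[X]$ into the sampled graph $\tilde J_{X,i}$, one maintains a congestion counter on every spanner edge; once an edge's counter reaches a threshold $\tau=\Theta(\gamma_{\mathit{ExpPath}}/(\psi p_{X,i}))$ it is deleted from the APSP data structure. When deletions cause the ``forbidden'' set $\hat V$ to swallow endpoints of some not-yet-embedded edges, one re-initializes the data structure on a fresh copy of $\tilde J_{X,i}$ and continues. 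The analysis shows only $O(\log m)$ such rounds are needed: if $t$ deletions occurred in a round, then at least $t\tau/\gamma_{\mathit{ExpPath}}$ edges were embedded (each deletion absorbs $\tau$ units of congestion, each embedded edge contributes at most $\gamma_{\mathit{ExpPath}}$), while at most $\vol_{J_i[X]}(\hat V)\le O(t/(\psi p_{X,i}))$ edges remain unembedded (by the pruning bound of \cref{thm:apspDataStructure} and cut-sparsification of \cref{clm:HisCutSparsifier}); the ratio is a constant. This yields total edge congestion $O(\tau\log m)=\tilde O(1/p_{X,i})$. Your argument for vertex congestion via ``$\deg_{\SS_i}(v)\cdot\tilde O(1/p_i)$'' is fine in this special case (degrees in $\SS_i$ are controlled by the same Chernoff bound as the cuts); the paper instead runs a separate Chernoff bound over the random sampling of edges through each fixed vertex (\cref{clm:staticEmbeddingCongestion}), because that argument extends to the general setting of \cref{lma:staticEmbed} where an incoming embedding $\Pi_{J\to H'}$ is present.
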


For $j > 0$, we initialize $H_j$ to the empty graph and $\Pi_j$ to be an empty map. We set all sets $S_j$ (including $S_0$) to the empty vertex set.

\paragraph{Updates.} At each stage $t = 1, 2, \dots$, we invoke the procedure $\textsc{Update}(t)$ that is given in \Cref{alg:updateSparsifier}. As mentioned earlier, the algorithm works by batching updates made to the graph $G$. In \Cref{lne:phaseNumSparsifier}, the algorithm determines the current batch level $j$ which in turn determines the batch size that has to be handled at the current stage $t$. Once $j$ is determined, we also find the last stage $t_{j-1}$ that a level-$(j-1)$ update occurred.

The algorithm then sets all graphs and embeddings of level $j, j+1, \ldots, L$ to the empty graphs/ embeddings in \Cref{lne:touchSj}.

It then forms the graph $J$ which is the key object in this section. This graph is constructed by finding all edges $e$ whose embedding path in $\Pi_{<j}(e)$ was affected by updates since the last recomputations of $\Pi_0, \Pi_1, \ldots, \Pi_{j-1}$ and adds some projection $\hat{e}$ of $e$ onto the vertex set $S_{j-1}$ to $J$. 

The idea behind this projection is that as $S_{j-1}$ is the set of vertices touched since this stage, we have that an affected edge $e$ has some vertex of $S_{j-1}$ on its path. Assuming for the moment that the edge $e$ itself is not incident to a vertex that was split since the recomputation stage, we essentially project the endpoints $a,b$ of $e$ to the nearest vertices $\hat{a}, \hat{b}$ in $S_{j-1}$ on the old embedding path $\Pi_{<j}(e)$. Note in particular that by definition the path $\Pi_{<j}(e)[a, \hat{a}]$ and $\Pi_{<j}(e)[ \hat{b}, b]$ are then still in $G$. 

We give the following more formal definition that also defines a projection for the slightly more involved case where the edge $e$ is incident to a vertex that splits over time.

\begin{definition}[Edge-Embedding Projection]\label{def:edgeProj}
For any $0 < j \leq L$, embedding $\Pi_{< j}$, set $S_{j-1}$ being the set of all vertices touched by updates to $G$ since the last stage that $\Pi_{< j}$ was modified, and edge $e \in E(G)$ such that $\Pi_{< j}(e) \cap S_{j-1} \neq \emptyset$. Then, we let $\proj_{j-1}(e)$ be a new edge $\hat{e}$ that is associated with $e$ and has endpoints $\hat{a}$ and $\hat{b}$ being the closest vertices in $S_{j-1}$ to the endpoints of $e$ in the current graph $\Pi_{<j}(e)$, respectively. Here $\Pi_{<j}(e)$ refers to the graph over the entire vertex set $V(G)$ obtained from adding the edges that are on $\Pi_{< j}(e)$ and then applying the relevant updates that took place on $G$ since $\Pi_{< j}$ was last modified.
\end{definition}

As previously mentioned, we project these edges $e$ whose embedding path $\Pi_{<j}(e)$ was affected by updates onto $S_{j-1}$ to obtain edge $\hat{e}$ which is added to the graph $J$. Note that as projected edges are associated with edges $e$ in $G$, the graph $J$ can be a multigraph. 

\begin{algorithm}
Update all sparsifiers $H_0, H_1, \ldots, H_L$ with the $t$-th update if it applies.\;
Add the vertices touched by the $t$-th update to each of the sets $S_0, S_1, \ldots, S_L$.\label{lne:touchSl}\;
$j \gets \min \{ j' \in \mathbb{Z}_{\geq 0} \;|\; t \text{ is divisible by } n^{1-j'/L}\}$.\label{lne:phaseNumSparsifier}\tcp*{Determine $j$ level of stage $t$.}
$t_{j-1} \gets \lfloor t / n^{1-(j-1)/L} \rfloor \cdot n^{1-(j-1)/L}$. \tcp*{$t_{j-1}$ is the most recent level-$(j-1)$ stage.}
\For(\tcp*[f]{Re-set level-$\geq j$ sparsifiers.}){$i = j, j+1, , \ldots, L$}{$H_i \gets (V, \emptyset)$; Set $\Pi_i$ to be an empty map; $S_i \gets \emptyset$.\label{lne:touchSj}}
\tcp{Auxiliary Graph and embedding by projecting embedding paths onto $S_{j-1}$.}
$J \gets (S_{j-1}, \emptyset)$; $\Pi_{J \to H_{< j} \cup E_{\mathit{affected}}} \gets \emptyset$.\;
$E_{\mathit{affected}} \gets \{ e \in E(G) \text{ with } \Pi_{<j}(e) \cap S_{j-1} \neq \emptyset\}$.\;
\ForEach(\label{lne:updateFirstForeach}){edge $e \in E_{\mathit{affected}}$}{
  $\hat{e} \gets \proj_{j-1}(e)$.\tcp*{Find Projected Edge of $e$.}
      Let $a$ and $b$ be the endpoints of $e$, and $\hat{a}$ and $\hat{b}$ be the endpoints of $\hat{e}$.\;
    Add $\hat{e}$ to $J$.\label{lne:addProjectedEdgeToJ}\;
    $\Pi_{J \to H_{< j} \cup E_{\mathit{affected}}}(\hat{e}) \gets \Pi_{<j}(e)[\hat{a}, a] \pconcat e \pconcat \Pi_{<j}(e)[b, \hat{b}]$.\label{lne:preImageWillSufficeInFuture}
}
\tcp{Sparsify Auxiliary Graph and translate back to re-build $H$.}
$(\tilde{J}, \Pi_{J \to \tilde{J}}) \gets \textsc{Sparsify}(H_{<j} \cup E_{\mathit{affected}}, J, \Pi_{J \to H_{< j} \cup E_{\mathit{affected}}})$.\label{lne:computeSparsifierJ}\;

\lForEach{edge $e \in E_{\mathit{affected}}$ and $\hat{e} = \proj_{j-1}(e)\in \tilde{J}$}{Add $e$ to $H_j$.\label{lne:addEdgeToSparsifier}}
\ForEach(\label{lne:forEachRembed}){edge $e \in E_{\mathit{affected}}$}{%
    $\hat{e} = \proj_{j-1}(e)$\;
    Let $a$ and $b$ be the endpoints of $e$, and $\hat{a}$ and $\hat{b}$ be the endpoints of $\hat{e}$.\;
    $\Pi_{j}(e) \gets \Pi_{<j}(e)[a, \hat{a}] \pconcat  [\Pi_{J \to H_{< j} \cup E_{\mathit{affected}}} \circ \Pi_{J \to\tilde{J}}](\hat{e}) \pconcat \Pi_{<j}(e)[\hat{b}, b]$.\label{lne:setPijCorrectly}
}
\Return $D = \Pi_{J \to H_{< j} \cup E_{\mathit{affected}}}(\tilde{J})$.
\caption{$\textsc{Update}(t)$}
\label{alg:updateSparsifier}
\end{algorithm}

Along with $J$, there is also a natural embedding of the projected edge $\hat{e}$ into the sparsifier $H_{< j}$: we can simply take the path $\Pi_{<j}(e)[\hat{a}, a] \pconcat e \pconcat \Pi_{<j}(e)[b, \hat{b}]$ which we already argued to exist in the current graph $H_{< j}$. This embedding is constructed in \Cref{lne:preImageWillSufficeInFuture}.

Finally, the procedure $\textsc{Sparsify}$ is invoked on the graph $J$. While we have seen this procedure before in the initialization stage, here, we use a generalized version that incorporates the embedding from $J$ into $H$ constructed above. The guarantees of our generalized procedure $\textsc{Sparsify}$ are summarized below. Note that letting $J = G$; and letting $\Pi_{J \to G}$ be the identity function, we recover \Cref{cor:sparsifier} as a corollary.

\begin{restatable}{theorem}{staticEmbedding}\label{lma:staticEmbed}
Given unweighted, undirected graphs $H'$ and $J$ with $V(J) \subseteq V(H')$ and an embedding $\Pi_{J \to H'}$ from $J$ into $H'$. Then, there
is a randomized algorithm $\textsc{Sparsify}(H', J, \Pi_{J \to H'})$ that returns a sparsifier $\tilde{J} \subseteq J$ with $|E(\tilde{J})| = \tilde{O}(|V(J)|)$ and an embedding $\Pi_{J \to\tilde{J}}$ from $J$ to $\tilde{J}$ such that
\begin{enumerate}
    \item $\vcong(\Pi_{J \to H'} \circ \Pi_{J \to\tilde{J}}) \leq \gamma_c \cdot \left(\vcong(\Pi_{J \to H'}) + \Delta_{\max}(J)\right)$, and
    \item $\length(\Pi_{J \to H'} \circ \Pi_{J \to\tilde{J}}) \leq \gamma_l \cdot \length(\Pi_{J \to H'})$. 
\end{enumerate}
The algorithm runs in time $\tilde{O}(|E(J)| \cdot \gamma_l)$ and succeeds with probability at least $1-n^{-C}$ for any constant $C$, specified before the procedure is invoked.
\end{restatable}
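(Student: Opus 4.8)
The plan is to prove \Cref{lma:staticEmbed} by the standard pipeline of \emph{expander decomposition} $\to$ \emph{random sparsification} $\to$ \emph{expander routing}, while tracking congestion carefully enough that the dependence on $\vcong(\Pi_{J\to H'})$ comes out additively. First, since $J$ is unweighted, I would bucket the vertices of $J$ into $O(\log n)$ degree classes and process the edges incident to each class essentially separately, so that we may assume the working subgraph has near-uniform degree $\Delta\le\Delta_{\max}(J)$. Then I would apply a (recursive) expander decomposition to write $E(J)$ as an edge-disjoint union of subgraphs $H_1,\dots,H_r$, each a $\phi$-expander for $\phi=1/\mathrm{poly}\log m$, each with $\Delta_{\max}(H_i)\le\mathrm{poly}\log(m)\cdot\Delta_{\min}(H_i)$, and with every vertex in $\tilde O(1)$ of the $H_i$; the edges not captured at one level are recursed on, which is where the $\exp(O(\sqrt{\log m}\cdot\log\log m))$ blow-up in $\gamma_l,\gamma_c$ comes from, and this step runs in time near-linear in $|E(J)|$.

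Next, for each $H_i$ with $\Delta_{\min}(H_i)>\mathrm{poly}\log m$, sample every edge independently with probability $p_i=\Theta(\log^4 m/\Delta_{\min}(H_i))$ to form $\SS_i\subseteq H_i$ (keeping $H_i$ as is when its degree is already $\tilde O(1)$), and set $\tilde J\defeq\bigcup_i\SS_i\subseteq J$. A Chernoff bound (\Cref{thm:chernoffBound}), union-bounded over cuts via a suitable net, shows that whp each $\SS_i$ is again an $\Omega(\phi/\log m)$-expander of max-degree $\tilde O(1)$; since each vertex lies in $\tilde O(1)$ pieces and each $\SS_i$ is sparse, $|E(\tilde J)|=\tilde O(|V(J)|)$, giving sparsity. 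To build $\Pi_{J\to\tilde J}$ I would, for each $i$, embed $H_i$ into $\SS_i$ by routing the demand ``one unit per edge of $H_i$'' through $\SS_i$ via expander routing together with the dynamic expander shortest-path data structure \cite{CS21}, and round to integral paths; this yields, for every $e\in H_i$, a path $\Pi_{J\to\tilde J}(e)$ of length $\tilde O(1)$ with edge-congestion $\tilde O(1/p_i)=\tilde O(\Delta_{\min}(H_i))$ and, since $\SS_i$ has degree $\tilde O(1)$, vertex-congestion $\tilde O(\Delta_{\min}(H_i))$. The length claim then follows immediately: $\length(\Pi_{J\to H'}\circ\Pi_{J\to\tilde J})\le\tilde O(1)\cdot\length(\Pi_{J\to H'})$, i.e. a factor $\gamma_l=\exp(O(\sqrt{\log m}\log\log m))$. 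Sparsity and the $\tilde O(|E(J)|\gamma_l)$ runtime follow since expander decomposition, edge sampling, and the \cite{CS21} oracle are each near-linear in $|E(J)|$ up to the $\gamma_l$ overhead, and the high-probability guarantee is obtained by taking the failure parameter polynomially small in each subroutine and union bounding.

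The main obstacle is the vertex-congestion bound for the composition. Fix $v\in V(H')$ and let $B_v=\{e'\in E(\tilde J): v\in\Pi_{J\to H'}(e')\}$; unwinding the definitions gives $\vcong(\Pi_{J\to H'}\circ\Pi_{J\to\tilde J},v)\le\sum_{e'\in B_v}\econg(\Pi_{J\to\tilde J},e')=\sum_i|B_v\cap\SS_i|\cdot\tilde O(1/p_i)$. The crucial point is that $B_v\cap\SS_i$ is a rate-$p_i$ random subsample of the \emph{fixed} set $N_{v,i}=\{e'\in E(H_i): v\in\Pi_{J\to H'}(e')\}$ with $\sum_i|N_{v,i}|\le\vcong(\Pi_{J\to H'},v)$ (using edge-disjointness of the $H_i$), so a Chernoff bound gives $|B_v\cap\SS_i|=\tilde O(p_i|N_{v,i}|+1)$ whp; the main terms then sum to $\tilde O(\vcong(\Pi_{J\to H'},v))$, and one has to show the $+1$ ``sampling slack,'' which only occurs over the pieces $i$ actually met by $v$, can be absorbed into the $\gamma_c\,\Delta_{\max}(J)$ term. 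Getting this last accounting to come out \emph{additive} rather than multiplicative in $\vcong(\Pi_{J\to H'})$ — by exploiting that the edges routed through a fixed edge of $\SS_i$ (hence through a fixed vertex of $H'$) are confined to a bounded-diameter region of the expander, and by choosing the sampling rates and net sizes so the slack per touched piece is controlled — is the delicate heart of the argument and the step I expect to require the most care; everything else is bookkeeping over the $O(\log n)$ degree buckets and the recursion levels of the expander decomposition.
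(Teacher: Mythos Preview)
Your overall pipeline—expander decomposition of $J$, uniform sampling within each expander piece to form $\tilde J$, then routing each piece into its sample via the decremental expander shortest-path oracle of \cite{CS21}—matches the paper's approach, and your treatment of sparsity, length, and runtime is essentially correct. One minor correction: the $\exp(O(\sqrt{\log m}\log\log m))$ factor in $\gamma_l,\gamma_c$ does not come from recursion in the expander decomposition; the decomposition in \Cref{thm:expanderStatement} already has only $O(\log n)$ levels with polylogarithmic overhead. The subpolynomial blowup comes entirely from the path-length parameter $\gamma_{\mathit{ExpPath}}=(\log m/\phi)^{O(\sqrt{\log m})}$ of the \cite{CS21} oracle in \Cref{thm:apspDataStructure}.

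The genuine gap is in the vertex-congestion argument, exactly at the step you flagged as delicate. Applying Chernoff separately to each $|B_v\cap\SS_i|$ and then summing produces an additive slack of order $(\log n)/p_i$ per nonempty piece, and since the edges of $E_v$ can be scattered across as many as $|E_v|=\vcong(\Pi_{J\to H'},v)$ different pieces, the accumulated slack is $\tilde O(\vcong(\Pi_{J\to H'},v)\cdot\Delta_{\max}(J))$—multiplicative, not additive. Your proposed fix via ``bounded-diameter regions of the expander'' does not address this: the problem is not where the routed paths land inside any one expander, but that a single vertex $v\in V(H')$ can see edges of $J$ (via $\Pi_{J\to H'}$) from arbitrarily many pieces simultaneously.

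The paper's resolution is much simpler than what you anticipate: do not split the sum by piece. Write directly $\vcong(\Pi_{J\to H'}\circ\Pi_{J\to\tilde J},v)\le\sum_{e\in E_v}Y_e$ where $Y_e=\gamma_{X,i}\cdot\mathbb{1}[e\in\tilde J]$ and $\gamma_{X,i}=\tilde O(\gamma_{\mathit{ExpPath}}/(\psi p_{X,i}))$ is the per-piece edge-congestion bound. The $Y_e$ are independent across all pieces, each bounded by $W=\tilde O(\gamma_{\mathit{ExpPath}}\cdot\Delta_{\max}(J))$, and—this is the key observation—$\mathbb{E}[Y_e]=p_{X,i}\gamma_{X,i}$ is the \emph{same} constant $\mu_{\mathrm{edge}}=\tilde O(\gamma_{\mathit{ExpPath}}/\psi)$ for every piece, by construction of the sampling rates. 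A \emph{single} application of \Cref{thm:chernoffBound} to $\sum_{e\in E_v}Y_e$ then gives $\sum_{e\in E_v}Y_e\le O(W\log n)+2\mu_{\mathrm{edge}}|E_v|$, which is exactly $\gamma_c\bigl(\Delta_{\max}(J)+\vcong(\Pi_{J\to H'})\bigr)$. The additive structure falls out automatically once you stop partitioning the Chernoff bound.
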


We defer the proof of the theorem to 
\Cref{subsec:sparsify} and finish the description of \Cref{alg:updateSparsifier}. Given the sparsified graph $\tilde{J}$ (along with the embedding map), we find the pre-images of the edges in $\tilde{J}$ and add them to $H$. We then re-embed all edges $(a,b)$ in $G$ that were no longer properly embedded into $H$ by using the embedding $\Pi_{J \to H} \circ \Pi_{J \to\tilde{J}}$ to get from $\hat{a}$ to $\hat{b}$ and then prepend (append) the path $\Pi_{J \to \tilde{J}}(e)$ from $a$ to $\hat{a}$ (from $\hat{b}$ to $b$). To gain better intuition for the resulting embedding, we recommend the reader to follow the analysis in \Cref{lma:dynamicEmbeddingIsRealEmbedding}.

\paragraph{Proof of \Cref{thm:spanner}.} As a first part of the analysis, we establish that the algorithm indeed maintains an actual embedding.

\begin{lemma}\label{lma:dynamicEmbeddingIsRealEmbedding}
For any $0 \leq i \leq L$, and stage $t$ divisible by $n^{1-i/L}$, $\Pi^{(t)}_{\leq i}$ embeds $G^{(t)}$ into $H^{(t)}_{\leq i}$. In particular, at any stage $t$, $\Pi_{G \to H}^{(t)} = \Pi_{\leq L}^{(t)}$ embeds $G^{(t)}$ into $H^{(t)}$.
\end{lemma}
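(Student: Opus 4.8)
The plan is to prove the statement as an invariant maintained by induction on the stage $t$ of \Cref{alg:updateSparsifier}. First I would record the elementary bookkeeping fact that, writing $j(t)$ for the level selected in \cref{lne:phaseNumSparsifier} (and $j(0)=0$ by convention), a stage $t$ is divisible by $n^{1-i/L}$ exactly when $i\ge j(t)$, using that $n^{1/L}$ is an integer so $n^{1-i/L}\mid n^{1-i'/L}$ whenever $i\ge i'$. Also, since WLOG the update sequence has length $\le n-1$, we have $j(t)\ge 1$ for every $t\ge 1$, and $j(t)\le L$ always. Hence the claim is equivalent to the invariant
\[ (\star_t)\colon\quad \Pi^{(t)}_{\le i}\text{ embeds }G^{(t)}\text{ into }H^{(t)}_{\le i}\ \text{ for every }i\ge j(t), \]
and the ``in particular'' clause is the instance $i=L$, which is always valid since $\Pi_{G\to H}=\Pi_{\le L}$, $H=H_{\le L}$. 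The base case $t=0$ is immediate: after initialization $H_0=\textsc{Sparsify}(G^{(0)})\subseteq G^{(0)}$ and $\Pi_0$ embeds $G^{(0)}$ into $H_0$ with pre-image all of $E(G^{(0)})$ by \Cref{cor:sparsifier}, while levels $1,\dots,L$ are empty, so $\Pi^{(0)}_{\le i}=\Pi_0$ works for every $i$.

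For the inductive step I would fix $t\ge 1$, set $j=j(t)$, and assume $(\star_{t'})$ for all $t'<t$. Since \cref{lne:touchSj} empties $H_i,\Pi_i$ for all $i\ge j$ and only level $j$ (through \cref{lne:addEdgeToSparsifier,lne:setPijCorrectly}) is repopulated during the current \textsc{Update} call, we get $\Pi^{(t)}_{\le i}=\Pi^{(t)}_{\le j}$ and $H^{(t)}_{\le i}=H^{(t)}_{\le j}$ for all $i\ge j$, so it suffices to prove $(\star_t)$ for $i=j$. The key bookkeeping step is to identify the stage $t_{j-1}=\lfloor t/n^{1-(j-1)/L}\rfloor\cdot n^{1-(j-1)/L}$ from \Cref{alg:updateSparsifier} as the last moment at which levels $0,\dots,j-1$ were consistent. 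One checks: (i) no level $\le j-1$ is recomputed in $(t_{j-1},t]$, since such a recomputation occurs only at multiples of $n^{1-(j-1)/L}$ and there are none strictly between $t_{j-1}$ and $t$ (with $t$ itself not one of them because $j(t)=j>j-1$); (ii) $S_{j-1}$ was last reset at $t_{j-1}$, hence $S^{(t)}_{j-1}$ equals exactly the set of vertices \emph{touched} (\Cref{def:touchedVertex}) by updates in $(t_{j-1},t]$; and (iii) by $(\star_{t_{j-1}})$ applied with index $j-1\ge j(t_{j-1})$, the map $\Pi^{(t_{j-1})}_{<j}=\Pi^{(t_{j-1})}_{\le j-1}$ is a genuine embedding of $G^{(t_{j-1})}$ into $H^{(t_{j-1})}_{<j}$.

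The heart of the argument -- and the step I expect to be the main obstacle -- is the \emph{survival sub-claim}: forwarding the edge-deletion/vertex-split updates of $(t_{j-1},t]$ to this valid embedding destroys only those embedding paths that meet $S^{(t)}_{j-1}$. Concretely, for $e=(a,b)\in E_{\mathit{affected}}$ with $\proj_{j-1}(e)=(\hat a,\hat b)$, I would show the prefix $\Pi_{<j}(e)[a,\hat a]$ and suffix $\Pi_{<j}(e)[\hat b,b]$ survive intact in $H^{(t)}_{<j}$: as $\hat a$ is the \emph{first} $S^{(t)}_{j-1}$-vertex met walking from $a$ along $\Pi_{<j}(e)$, no interior vertex of the prefix is touched (it would lie in $S^{(t)}_{j-1}$, hence be closer to $a$) and no edge of the prefix is deleted (the other endpoint of a deleted edge is touched, hence in $S^{(t)}_{j-1}$ and closer to $a$ than $\hat a$), with the vertex-split corner cases absorbed by reading $\Pi_{<j}(e)$ as the walk after updates are applied exactly as in \Cref{def:edgeProj}; this is precisely the delicate point. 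Granting the sub-claim I would assemble $(\star_t)$ for $i=j$: for $e\notin E_{\mathit{affected}}$ we have $\Pi^{(t)}_{\le j}(e)=\Pi^{(t)}_{<j}(e)$, a valid $a$-$b$ path in $H^{(t)}_{<j}\subseteq H^{(t)}_{\le j}$; for $e\in E_{\mathit{affected}}$, \Cref{lma:staticEmbed} gives that $\Pi_{J\to\tilde{J}}(\hat e)$ is a genuine $\hat a$-$\hat b$ path in $\tilde{J}$, each of whose edges $\hat f$ has pre-image $f$ inserted into $H_j$ in \cref{lne:addEdgeToSparsifier} and has $\Pi_{J\to H_{<j}\cup E_{\mathit{affected}}}(\hat f)$ (defined in \cref{lne:preImageWillSufficeInFuture}) a valid path in $H^{(t)}_{<j}\cup\{f\}$ by the survival sub-claim applied to $f$; concatenating these over $\hat f\in\Pi_{J\to\tilde{J}}(\hat e)$ and prepending $\Pi_{<j}(e)[a,\hat a]$, appending $\Pi_{<j}(e)[\hat b,b]$ shows that $\Pi^{(t)}_j(e)$ as set in \cref{lne:setPijCorrectly} is a genuine $a$-$b$ path in $H^{(t)}_{\le j}$. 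Finally $H^{(t)}_{\le j}\subseteq G^{(t)}$ because the forwarded edges of $H_{<j}$ remain in $G$ and the pre-images added to $H_j$ lie in $E_{\mathit{affected}}\subseteq E(G^{(t)})$ with $\tilde{J}\subseteq J$, completing the induction.
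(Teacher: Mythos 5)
Your proof is correct and follows essentially the same induction on the stage $t$ as the paper's own argument: it reduces to the level $j = j(t)$, invokes the induction hypothesis at stage $t_{j-1}$, splits edges by whether their current embedding path meets $S_{j-1}$, and for affected edges pieces together the surviving prefix/suffix segments with the paths produced by $\textsc{Sparsify}$ via $\Pi_{J\to\tilde{J}}$. The paper states the ``survival'' of $\Pi_{<j}(e)[a,\hat a]$ and $\Pi_{<j}(e)[\hat b, b]$ more tersely (simply citing \cref{def:edgeProj}), whereas you spell out why no interior vertex or edge of these segments can be touched; this is a more detailed rendering of the same step rather than a different route.
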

\begin{proof}
We prove by induction on the stage $t$. For the base case ($t=0$), we observe that in the initialization phase $\Pi_0$ embeds $G^{(0)}$ into $H_0$ via the algorithm in \Cref{cor:sparsifier}. For all other $j > 0$, $\Pi_j$ is an empty embedding and therefore, the claim follows.

Let us next consider the inductive step $t-1 \mapsto t$: We let $j = j^{(t)}$ and $t_{j-1} = t_{j-1}^{(t)}$. Consider any edge $e \in E(G^{(t)})$. We first note that the path $\Pi^{(t_{j-1})}_{< j}(e)$ in $H^{(t_{j-1})}_{< j}$ exists since we can invoke the induction hypothesis by the fact that $t_{j-1} < t$ which follows from the minimality of $j$ (see \Cref{lne:phaseNumSparsifier}). By definition of $S_{j-1}$ being the vertices touched by all updates to $G$ since $t_{j-1}$, we have that if $\Pi^{(t)}_{< j}(e) \cap S^{(t)}_{j-1} = \emptyset$, that $\Pi^{(t)}_{<j}(e)$ still embeds $e$ properly into $H^{(t)}_{< j}$. Since the foreach-loop in \Cref{lne:forEachRembed} is not entered for such edges $e$, we further have that $\Pi^{(t)}_{\leq j}(e) = \Pi^{(t)}_{<j}(e)$.
    
It remains to analyze the case $\Pi^{(t)}_{< j}(e) \cap S^{(t)}_{j-1} \neq \emptyset$ where we note that $\hat{e}=\proj_{j-1}(e)$ is well-defined. We start with the observation that $\Pi_{J \to H_{< j} \cup E_{\mathit{affected}}}(\hat{e})$ restricted to the edges in $\tilde{J}$ only maps to the edges in $H^{(t)}_{<j}$ and the pre-images of edges $\hat{e} \in \tilde{J}$ under the $\proj_{j-1}(\cdot)$ map as can be seen easily from its construction in \Cref{lne:preImageWillSufficeInFuture}. But $H^{(t)}_j$ consists exactly of the pre-images of $\tilde{J}$ as can be seen from  \Cref{lne:addEdgeToSparsifier}, so each such embedding path is in $H^{(t)}_{\leq j}$. Since $\Pi_{J \to \tilde{J}}$ maps all edges $\hat{e}$ in $J$ to paths in $\tilde{J}$, we thus have $[\Pi_{J \to H_{< j} \cup E_{\mathit{affected}}} \circ \Pi_{J \to\tilde{J}}](\hat{e})$ in $H^{(t)}_{\leq j}$.
    
By the way the algorithm sets the new embedding path $\Pi^{(t)}_j(e)$ in \Cref{lne:setPijCorrectly}, we thus only have to argue about the path segments  $\Pi^{(t)}_{<j}(e)[a, \hat{a}]$ and $\Pi^{(t)}_{<j}(e)[\hat{b}, b]$ where $a,b$ are the endpoints of $e$ and $\hat{a},\hat{b}$ are the endpoints of $\hat{e}$. But again, by definition of $\hat{e}$ (see \Cref{def:edgeProj}), we have that both of these path segments are contained in $H^{(t)}_{< j}$.

Finally, for all $i > j$, $\Pi^{(t)}_i$ is set to be empty and therefore $\Pi^{(t)}_{\leq i} = \Pi_{\leq j}^{(t)}$.
\end{proof}

It turns out that the proof of \Cref{lma:dynamicEmbeddingIsRealEmbedding} is already the most complicated part of the analysis. We next bound congestion and length of the embedding.

\begin{claim}\label{clm:vertexCongSparsifier}
For any $0 \leq i \leq L$, we have $\vcong(\Pi^{(t)}_{\leq i}) \le 4^i \gamma_c^{i+1} \Delta_{\max}(G)$.
\end{claim}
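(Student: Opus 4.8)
The plan is to prove the bound by (strong) induction on the stage $t$, quantified over all levels $i$ with $n^{1-i/L}\mid t$. Write $j:=j^{(t)}$ for the level rebuilt at stage $t$ and $\Pi_{<j}:=\Pi_{\le j-1}$. Since by definition $j=j^{(t)}\le i$ for every valid pair $(i,t)$, and since right after stage $t$ the embeddings $\Pi_{j+1},\dots,\Pi_L$ are empty (so $\Pi^{(t)}_{\le i}=\Pi^{(t)}_{\le j}$ for all $i\ge j$), it suffices to show $\vcong(\Pi^{(t)}_{\le j})\le 4^j\gamma_c^{j+1}\Delta_{\max}(G)$. The base case is $t=0$: only $\Pi_0$ has a non-empty pre-image, so $\Pi^{(0)}_{\le i}=\Pi_0$, and \Cref{cor:sparsifier} gives $\vcong(\Pi_0)\le 2\gamma_c\Delta_{\max}(G)\le 4^i\gamma_c^{i+1}\Delta_{\max}(G)$ (the leading constant is harmless and can be folded into $\gamma_c$).

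For the inductive step ($t>0$, hence $j\ge 1$) I would first record the elementary fact that propagating an update to the stored embedding paths never increases any vertex's congestion: an edge deletion only removes edges from paths, and a split of a vertex $v$ into $v,v^{\mathrm{NEW}}$ only relabels some occurrences of $v$ as $v^{\mathrm{NEW}}$ (possibly breaking a path), so $\vcong$ of each individual vertex cannot grow. Consequently $\vcong(\Pi^{(t)}_{<j})\le\vcong(\Pi^{(t_{j-1})}_{\le j-1})$, where $t_{j-1}$ is the most recent stage (as computed in \textsc{Update}) at which level $j-1$ was rebuilt: by minimality of $j=j^{(t)}$ no multiple of $n^{1-(j-1)/L}$ lies strictly between $t_{j-1}$ and $t$, so in that window levels $<j$ undergo only update-propagation. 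Since $n^{1-(j-1)/L}\mid t_{j-1}$ and $t_{j-1}<t$, the induction hypothesis gives $\vcong(\Pi^{(t_{j-1})}_{\le j-1})\le 4^{j-1}\gamma_c^{j}\Delta_{\max}(G)=:C$.

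It then remains to prove $\vcong(\Pi^{(t)}_{\le j})\le 4\gamma_c C$. Fix a vertex $v$. Edges $e\notin E_{\mathit{affected}}$ keep $\Pi^{(t)}_{\le j}(e)=\Pi^{(t)}_{<j}(e)$, while for $e\in E_{\mathit{affected}}$ the new path $\Pi^{(t)}_j(e)$ is the concatenation of two sub-paths of $\Pi^{(t)}_{<j}(e)$ with $M(\hat e):=[\Pi_{J\to H_{<j}\cup E_{\mathit{affected}}}\circ\Pi_{J\to\tilde J}](\hat e)$, where $\hat e=\proj_{j-1}(e)$. Using the bijection between $E_{\mathit{affected}}$ and $E(J)$,
\[
\vcong(\Pi^{(t)}_{\le j},v)\ \le\ \vcong(\Pi^{(t)}_{<j},v)+\vcong(M,v).
\]
By \Cref{lma:staticEmbed}, $\vcong(M)\le\gamma_c\big(\vcong(\Pi_{J\to H_{<j}\cup E_{\mathit{affected}}})+\Delta_{\max}(J)\big)$, and the crux is that both terms are controlled by $\vcong(\Pi^{(t)}_{<j})$: every vertex of $\Pi_{J\to H_{<j}\cup E_{\mathit{affected}}}(\hat e)=\Pi^{(t)}_{<j}(e)[\hat a,a]\oplus e\oplus\Pi^{(t)}_{<j}(e)[b,\hat b]$ lies on $\Pi^{(t)}_{<j}(e)$ (the two segments are sub-paths of it and $a,b$ are its endpoints), giving $\vcong(\Pi_{J\to H_{<j}\cup E_{\mathit{affected}}})\le\vcong(\Pi^{(t)}_{<j})$; and each endpoint $\hat a,\hat b$ of $\hat e$ lies on $\Pi^{(t)}_{<j}(e)$ by \Cref{def:edgeProj}, giving $\Delta_{\max}(J)\le 2\vcong(\Pi^{(t)}_{<j})$. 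Combining, $\vcong(\Pi^{(t)}_{\le j})\le(1+3\gamma_c)\vcong(\Pi^{(t)}_{<j})\le 4\gamma_c C=4^j\gamma_c^{j+1}\Delta_{\max}(G)\le 4^i\gamma_c^{i+1}\Delta_{\max}(G)$, closing the induction.

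The argument is mostly bookkeeping; the one step that genuinely needs care is the second paragraph — identifying the embedding $\Pi_{<j}$ that $\textsc{Update}(t)$ reads from with the state $\Pi^{(t_{j-1})}_{\le j-1}$ carried forward under (congestion-non-increasing) update propagation, so that the induction hypothesis applies to it with the correct exponent $j-1$. Everything else reduces, via the projection definition, to the single structural fact that every new path piece manufactured at level $j$ lives on a level-$(<j)$ embedding path, which is what collapses all contributions to one $\vcong(\Pi^{(t)}_{<j})$ term.
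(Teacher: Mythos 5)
Your proof is correct and follows essentially the same approach as the paper's: induction on $t$, observing that update propagation is congestion-nonincreasing so that the induction hypothesis applies to $\Pi^{(t)}_{<j}$ via $\Pi^{(t_{j-1})}_{\le j-1}$, bounding $\vcong(\Pi_{J\to H_{<j}\cup E_{\mathit{affected}}})$ and $\Delta_{\max}(J)$ both by $\vcong(\Pi^{(t)}_{<j})$ using the fact that every vertex of the auxiliary embedding and every endpoint of a projected edge $\hat e$ lies on $\Pi^{(t)}_{<j}(e)$, and then applying \Cref{lma:staticEmbed} and combining.

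A few cosmetic remarks. Your factor of $2$ in $\Delta_{\max}(J)\le 2\vcong(\Pi^{(t)}_{<j})$ is unnecessary: the map $\hat e\mapsto e$ is a bijection, and for each $\hat e$ incident to $v$ the preimage $e$ has $v$ on $\Pi^{(t)}_{<j}(e)$, so $\deg_J(v)\le\vcong(\Pi^{(t)}_{<j},v)$ directly (the paper uses the cleaner bound $\Delta(J)\le\vcong(\Pi^{(t)}_{<j})$). Your decomposition $\vcong(\Pi^{(t)}_{\le j},v)\le\vcong(\Pi^{(t)}_{<j},v)+\vcong(M,v)$ is slightly tighter than the paper's split into $\vcong(\Pi^{(t)}_{<j})+\vcong(\Pi^{(t)}_j)$, since it avoids double-counting the sub-path segments of $\Pi^{(t)}_{<j}(e)$ reused in $\Pi_j(e)$, but of course both lead to the same final bound. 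Finally, you state the induction only over pairs $(i,t)$ with $n^{1-i/L}\mid t$; for the unconditional claim you should note explicitly that the remaining pairs follow from your recorded monotonicity fact (congestion only drops between rebuilds of level $i$), which you implicitly rely on but don't spell out as a closing step — the paper elides this in much the same way.
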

\begin{proof}
Again, we prove by induction on stage $t$. We have for $i = 0$, that $\Pi^{(0)}_0$ as computed in the initialization stage has vertex-congestion at most  $\gamma_c\Delta_{\max}(G)$ by \Cref{cor:sparsifier}. For $i > 0$, we have that $\Pi^{(0)}_i$ is empty; therefore its congestion is $0$.

For $t-1 \mapsto t$, we define $j = j^{(t)}$ and $t_{j-1} = t_{j-1}^{(t)}$. Observe that for each edge $e$ considered in the first foreach-loop starting in \Cref{lne:updateFirstForeach}, $\Pi_{J \to H_{< j} \cup E_{\mathit{affected}}}(e)$ consists only of the edges in $\Pi^{(t)}_{< j}(e)$ and the edge $e$ itself, it follows that every embedding path that contributes to vertex congestion of a vertex $v$ in $\vcong(\Pi_{J \to H_{< j} \cup E_{\mathit{affected}}})$ also contributes to the vertex congestion of $v$ in $\vcong(\Pi^{(t)}_{<j})$, and hence
$\vcong(\Pi_{J \to H_{< j} \cup E_{\mathit{affected}}}) \leq \vcong(\Pi^{(t)}_{<j})$. Further, we can see from the construction of the graph $J$ that $\Delta(J) \leq \vcong(\Pi^{(t)}_{<j})$. 

Let us next analyze $\vcong(\Pi^{(t)}_{<j})$. By minimality of $j$ (see \Cref{lne:phaseNumSparsifier}), we have $t_{j-1} < t$ and we can use the induction hypothesis to get $\vcong(\Pi^{(t_{j-1})}_{<j}) \le 4^{j-1}\gamma_c^{j} \Delta_{\max}(G)$. It is further immediate to see that since the embedding $\Pi_{<j}$ was not affected by any recomputations since stage $t_{j-1}$ that the vertex congestion can only have dropped ever since.

Thus, when the graph $J$ is sparsified in \Cref{lne:computeSparsifierJ}, by \Cref{lma:staticEmbed}, we can conclude
\[
\vcong\left(
\Pi_{J \to H_{< j} \cup E_{\mathit{affected}}}
\circ \Pi_{J \to\tilde{J}}\right)
\leq
2 \cdot 4^{j-1}\gamma_c^{j+1} \Delta_{\max}\left(G\right).
\]

Finally, when we construct the embedding $\Pi_j$ in \Cref{lne:setPijCorrectly}, the path segments $[\Pi_{J \to H_{< j} \cup E_{\mathit{affected}}} \circ \Pi_{J \to\tilde{J}}](\hat{e})$ incur vertex congestion at most $\vcong(
\Pi_{J \to H_{< j} \cup E_{\mathit{affected}}}\circ \Pi_{J \to\tilde{J}})$, and the path segments
$\Pi_{<j}(e)[a, \hat{a}]$ and $\Pi_{<j}(e)[\hat{b}, b]$ incur total vertex congestion at most $\vcong(\Pi^{(t_{j})}_{<j})$. 

As congestion is additive, we can upper bound the total congestion of $\Pi_{j}^{(t)}$ by $( 4^{j-1}\gamma_c^{j} + 2 \cdot 4^{j-1}\gamma_c^{j+1})\Delta_{\max}(G)$ and can finally use $\vcong(\Pi_{\leq j}^{(t)}) \leq \vcong(\Pi_{j}^{(t)}) + \vcong(\Pi_{< j}^{(t)}) \leq 4^j \gamma_c^{j+1} \Delta_{\max}(G)$. For all $i > j$, we note that $\Pi^{(t)}_i$ is empty and therefore, $\vcong(\Pi^{(t)}_{\leq i}) \leq \vcong(\Pi^{(t)}_{\leq j})$.
\end{proof}

\begin{claim}\label{clm:lengthsEmbedding}
For any $0 \leq i \leq L$ and stage $t$ divisible by $n^{1-i/L}$, we have $\length(\Pi^{(t)}_{\leq i}) \le 2^i \gamma_l^{i+1}$.
\end{claim}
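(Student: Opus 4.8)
The plan is to mirror the inductive argument used for \cref{clm:vertexCongSparsifier}, now tracking path lengths instead of vertex congestion. We induct on the stage $t$. For the base case $t = 0$, the embedding $\Pi^{(0)}_0$ produced by \cref{cor:sparsifier} has length at most $\gamma_l = 2^0\gamma_l^{0+1}$, and for every $i > 0$ the embedding $\Pi^{(0)}_i$ is empty, so $\length(\Pi^{(0)}_{\le i}) = \length(\Pi^{(0)}_{\le 0}) \le \gamma_l \le 2^i \gamma_l^{i+1}$.

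For the inductive step $t-1 \mapsto t$, set $j = j^{(t)}$ and $t_{j-1} = t^{(t)}_{j-1}$, and consider an edge $e = (a,b) \in E(G^{(t)})$. If $\Pi^{(t)}_{<j}(e) \cap S^{(t)}_{j-1} = \emptyset$, then (exactly as in \cref{lma:dynamicEmbeddingIsRealEmbedding}) $\Pi^{(t)}_{\le j}(e) = \Pi^{(t)}_{<j}(e)$, and since the stored sequence of edges of $\Pi_{<j}(e)$ only shrinks under the $G$-updates applied between $t_{j-1}$ and $t$ (edge deletions and vertex splits never lengthen a stored path, and $\Pi_{<j} = \Pi_{\le j-1}$ is not touched by any recomputation in that window), we get $\length(\Pi^{(t)}_{\le j}(e)) \le \length(\Pi^{(t_{j-1})}_{<j}(e)) \le \length(\Pi^{(t_{j-1})}_{\le j-1}) \le 2^{j-1}\gamma_l^{j}$ by the induction hypothesis at stage $t_{j-1} < t$ (valid since $t_{j-1}$ is divisible by $n^{1-(j-1)/L}$). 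Otherwise $\hat e = \proj_{j-1}(e)$ is well-defined with endpoints $\hat a, \hat b$, and by \cref{lne:setPijCorrectly} the new path is
\[
\Pi^{(t)}_j(e) = \Pi_{<j}(e)[a,\hat a] \pconcat \big[\Pi_{J \to H_{<j}\cup E_{\mathit{affected}}} \circ \Pi_{J \to \tilde J}\big](\hat e) \pconcat \Pi_{<j}(e)[\hat b, b].
\]
Because $\hat a$ is the first vertex of $S_{j-1}$ encountered from the $a$-end of $\Pi_{<j}(e)$ and $\hat b$ the first from the $b$-end, the segments $\Pi_{<j}(e)[a,\hat a]$ and $\Pi_{<j}(e)[\hat b,b]$ are edge-disjoint subpaths of $\Pi^{(t)}_{<j}(e)$, so their total length is at most $\length(\Pi^{(t)}_{<j}(e)) \le 2^{j-1}\gamma_l^j$. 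For the middle piece, the embedding built in \cref{lne:preImageWillSufficeInFuture} satisfies $\length(\Pi_{J \to H_{<j}\cup E_{\mathit{affected}}}(\hat e)) = |\Pi_{<j}(e)[\hat a,a]| + 1 + |\Pi_{<j}(e)[b,\hat b]| \le 2^{j-1}\gamma_l^j + 1$, and then \cref{lma:staticEmbed}\,(2) gives $\length\big([\Pi_{J \to H_{<j}\cup E_{\mathit{affected}}} \circ \Pi_{J \to \tilde J}](\hat e)\big) \le \gamma_l(2^{j-1}\gamma_l^j + 1)$.

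Adding the three contributions, $\length(\Pi^{(t)}_j(e)) \le 2^{j-1}\gamma_l^j + \gamma_l(2^{j-1}\gamma_l^j + 1) = 2^{j-1}\gamma_l^{j+1} + 2^{j-1}\gamma_l^j + \gamma_l$, and since $j \ge 1$ and $\gamma_l \ge 2$ we have $2^{j-1}\gamma_l^j + \gamma_l \le 2^j\gamma_l^j \le 2^{j-1}\gamma_l^{j+1}$, so $\length(\Pi^{(t)}_j(e)) \le 2^j\gamma_l^{j+1}$. Combining the two cases, $\length(\Pi^{(t)}_{\le j}) \le 2^j\gamma_l^{j+1}$; and for any $i > j$, $\Pi^{(t)}_i$ is empty so $\length(\Pi^{(t)}_{\le i}) = \length(\Pi^{(t)}_{\le j}) \le 2^i\gamma_l^{i+1}$ (using $j \le i$, which holds whenever $t$ is divisible by $n^{1-i/L}$). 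This completes the induction, and taking $i = L$ bounds $\length(\Pi_{G\to H}) = \length(\Pi_{\le L})$ as claimed.

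The only real care-points, as in \cref{clm:vertexCongSparsifier}, are (i) that the path segments reused from $\Pi_{<j}(e)$ are edge-disjoint so their lengths are not double-counted, (ii) that $\Pi_{<j}$ is untouched by recomputations between $t_{j-1}$ and $t$ so the induction hypothesis applies to it, and (iii) the mild arithmetic slack needed to absorb the extra $+1$ edge and the lower-level tail into the factor-$2$ growth per level. None of these presents a genuine obstacle; the statement has enough slack built in.
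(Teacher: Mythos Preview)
Your proof is correct and follows essentially the same approach as the paper's: induction on the stage $t$, applying the inductive bound $\length(\Pi^{(t_{j-1})}_{<j}) \le 2^{j-1}\gamma_l^j$ to both the reused tail segments and (via \cref{lne:preImageWillSufficeInFuture} and \cref{lma:staticEmbed}) the middle segment, then summing and absorbing the slack into the factor-$2$ growth. Your write-up is in fact a bit more careful than the paper's in separating the two cases and checking the arithmetic inequality explicitly.
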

\begin{proof}
We again take induction over time $t$. For $t=0$, we note that $\Pi_0$ has length $\gamma_l$ by \Cref{cor:sparsifier}. For $t-1 \to t$, for $j = j^{(t)}$ and $t_{j-1} = t_{j-1}^{(t)}$, we have by induction hypothesis that $\Pi^{(t_{j-1})}_{< j}(e) \leq 2^{j-1} \gamma_l^{j}$. But note that when we set the path $\Pi^{(t)}_{j}(e)$ in \Cref{lne:setPijCorrectly}, then the segments $\Pi^{(t)}_{<j}(e)[a, \hat{a}]$ and $ \Pi^{(t)}_{<j}(e)[\hat{b}, b]$ (combined) are of length at most $2^{j-1} \gamma_l^{j}$ because they survived from $\Pi^{(t_{j-1})}_{< j}(e)$ by definition of $S_{j-1}$. Further, the segment $[\Pi_{J \to H_{< j} \cup E_{\mathit{affected}}} \circ \Pi_{J \to\tilde{J}}](\hat{e})$ has length at most $\gamma_l \cdot \length(\Pi_{J \to H_{< j} \cup E_{\mathit{affected}}})$ by \Cref{lma:staticEmbed}. But by construction in \Cref{lne:preImageWillSufficeInFuture}, the embedding $\Pi_{J \to H_{< j} \cup E_{\mathit{affected}}}$ has length at most $\length(\Pi^{(t_{j-1})}_{< j}(e)) + 1$. 

Combining these insights, we have $\length(\Pi_{\leq j}^{(t)}) \leq 2^{j-1} \gamma_l^{j} + \gamma_l (\length(\Pi^{(t_{j-1})}_{< j}(e)) + 1) \leq 2^{j}\gamma_l^{j+1}$. For $i > j$, we have $\length(\Pi_{\leq j}) = \length(\Pi_{\leq i})$.
\end{proof}

Now that we established all properties of the embedding, it remains to analyze the sparsifier $H$.

\begin{lemma}\label{clm:runningTimeSparsifier}
At any stage, $H$ consists of at most $\tilde{O}(n)$ edges and the amortized number of changes to the edge set of $H$ per update is $\tilde{O}(n^{1/L})$. $D^{(t)}$ is of amortized size $\tilde{O}(n^{1/L}(\gamma_c\gamma_l)^{O(L)})$. Initialization time of the algorithm is $\tilde{O}(m \gamma_l)$ and it has  amortized update time $\tilde{O}(n^{1/L}(\gamma_c\gamma_l)^{O(L)}\Delta_{\max}(G))$.
\end{lemma}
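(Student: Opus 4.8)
The plan is to charge all cost to the \emph{batching schedule}. For $j \in \{1,\dots,L\}$ a level-$j$ recomputation occurs exactly at stages $t$ with $j^{(t)}=j$, i.e.\ $n^{1-j/L}\mid t$ but $n^{1-(j-1)/L}\nmid t$; since the update sequence has length at most $n-1$, there are at most $n^{j/L}$ such stages, and at each one \cref{alg:updateSparsifier} rebuilds only level $j$ while merely clearing the (already small) levels $j+1,\dots,L$. Level $0$ is built only once, at initialization, by $\textsc{Sparsify}(G)$, which costs $\tilde{O}(m\gamma_l)$ by \cref{cor:sparsifier}; this gives the claimed initialization time and the base fact $|E(H_0)|=\tilde{O}(n)$ (it is never rebuilt and only loses edges to deletions in $G$).

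First I would bound the per-recomputation sizes. At a level-$j$ recomputation at stage $t$, the set $S_{j-1}$ holds the at most two vertices touched by each update since the most recent level-$(j-1)$ stage $t_{j-1}$, and $t-t_{j-1}<n^{1-(j-1)/L}$, so $|V(J)|=|S_{j-1}|\le 2n^{1-(j-1)/L}$. Every $e\in E_{\mathit{affected}}$ has a vertex of $S_{j-1}$ on its current path $\Pi_{<j}(e)$, so $|E(J)|=|E_{\mathit{affected}}|\le\sum_{v\in S_{j-1}}\vcong(\Pi_{<j},v)\le |S_{j-1}|\cdot\vcong(\Pi_{<j})$, and by \cref{clm:vertexCongSparsifier} (with $i=j-1$) this is $\le 2\cdot 4^{j-1}\gamma_c^{j}\,n^{1-(j-1)/L}\Delta_{\max}(G)$; the same count of paths through a vertex gives $\Delta_{\max}(J)\le\vcong(\Pi_{<j})$. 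By \cref{lma:staticEmbed}, $|E(\tilde J)|=\tilde{O}(|V(J)|)=\tilde{O}(n^{1-(j-1)/L})$, so a level-$j$ recomputation adds only the $|E(\tilde J)|=\tilde{O}(n^{1-(j-1)/L})$ pre-images of $\tilde J$ to $H_j$. Hence at all times $|E(H_j)|\le\tilde{O}(n^{1-(j-1)/L})\le\tilde{O}(n)$ (it only shrinks between recomputations), and summing the $L+1$ levels gives $|E(H)|=\tilde{O}(n)$. For recourse: forwarding a $G$-edge deletion changes at most $L+1=\tilde{O}(1)$ edges of $H$; each level-$j$ recomputation clears $\le\sum_{i\ge j}|E(H_i)|=\tilde{O}(n^{1-(j-1)/L})$ and adds $\tilde{O}(n^{1-(j-1)/L})$ edges, which over the $\le n^{j/L}$ level-$j$ stages totals $\tilde{O}(n^{1+1/L})$; summing over $j$ and amortizing over $\ge n-1$ updates yields $\tilde{O}(n^{1/L})$ amortized edge changes.

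Next I would handle $D^{(t)}$ and the running time. At a level-$j$ recomputation $D^{(t)}=\Pi_{J\to H_{<j}\cup E_{\mathit{affected}}}(\tilde J)$ is a union of $|E(\tilde J)|$ paths, each of the form $\Pi_{<j}(e)[\hat a,a]\oplus e\oplus\Pi_{<j}(e)[b,\hat b]$ and hence of length at most $\length(\Pi_{<j})+1$; by \cref{clm:lengthsEmbedding} applied at $t_{j-1}$ (with $i=j-1$) and the fact that forwarding $G$-updates does not lengthen stored paths, $\length(\Pi_{<j})\le 2^{j-1}\gamma_l^{j}$, so $|D^{(t)}|\le\tilde{O}(n^{1-(j-1)/L})\cdot 2^{j}\gamma_l^{j}$. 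Summing, $\sum_t|D^{(t)}|\le\sum_{j=1}^L n^{j/L}\cdot\tilde{O}(n^{1-(j-1)/L}2^j\gamma_l^j)=\tilde{O}(n^{1+1/L}(\gamma_c\gamma_l)^{O(L)})$, i.e.\ $\tilde{O}(n^{1/L}(\gamma_c\gamma_l)^{O(L)})$ amortized. For the update time, the work of a level-$j$ recomputation is: identifying $E_{\mathit{affected}}$ and building $J$ with its embedding, by walking for each affected edge its stored path $\Pi_{<j}(e)$ (length $\le\length(\Pi_{<j})$) to locate the nearest $S_{j-1}$-vertices and form $\proj_{j-1}(e)$; running $\textsc{Sparsify}(H_{<j}\cup E_{\mathit{affected}},J,\cdot)$ in time $\tilde{O}(|E(J)|\gamma_l)$ by \cref{lma:staticEmbed}; and re-embedding each affected edge via a new path of length $\le\length(\Pi_{\le j})\le 2^j\gamma_l^{j+1}$. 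Each of these is $\tilde{O}(|E_{\mathit{affected}}|\cdot(\gamma_c\gamma_l)^{O(L)})$; using $|E_{\mathit{affected}}|\le\tilde{O}(n^{1-(j-1)/L}(\gamma_c)^{O(L)}\Delta_{\max}(G))$, $\le n^{j/L}$ level-$j$ stages, and $L+1$ levels, the total is $\tilde{O}(n^{1+1/L}(\gamma_c\gamma_l)^{O(L)}\Delta_{\max}(G))$, i.e.\ $\tilde{O}(n^{1/L}(\gamma_c\gamma_l)^{O(L)}\Delta_{\max}(G))$ amortized; the remaining per-update bookkeeping (forwarding each deletion/split to every $H_i$ and $S_i$ and keeping a vertex-to-paths index) costs only $\tilde{O}(L\cdot\Enc(U^{(t)}))$ plus $\tilde{O}(L\cdot\vcong(\Pi_{\le L}))=\tilde{O}((\gamma_c)^{O(L)}\Delta_{\max}(G))$ per update and is subsumed.

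The main obstacle is the data-structural part glossed in the last sentence: maintaining for each level $j$ the embedding $\Pi_j$ together with an index from vertices to the paths through them, so that (i) after an edge deletion or vertex split one can add the touched vertices to $S_j$ in $\tilde{O}(1)$-per-level (resp.\ $\tilde{O}(\deg)$-per-level) time without eagerly repairing paths, and (ii) at a recomputation one can, within the stated time, reconstruct $\Pi^{(t)}_{<j}(e)$ (the recomputation-time path with all intervening $G$-updates applied) and extract the still-valid prefixes/suffixes $\Pi_{<j}(e)[\hat a,a],\Pi_{<j}(e)[\hat b,b]$. Vertex splits are the delicate case: one must check that every path genuinely broken by a split has an $S_{j-1}$-vertex on it (so its edge is caught in $E_{\mathit{affected}}$) and that the surviving segments are correctly identified, which is exactly what the vertex-split encoding conventions of \cref{sec:prelim} and the definition of $\proj_{j-1}$ in \cref{def:edgeProj} are designed to support. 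Since this introduces no new asymptotic cost, combining it with the size, recourse, and time bounds above establishes \cref{clm:runningTimeSparsifier}, and hence (after restarting every $n$ updates and passing to batches of size $|U^{(t)}|$) the corresponding parts of \cref{thm:spanner}.
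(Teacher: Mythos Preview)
Your argument follows the paper's approach closely and the size, $|D^{(t)}|$, and running-time analyses are essentially the same. There is, however, one genuine gap in the recourse bound.

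Your recourse accounting reads: ``forwarding a $G$-edge deletion changes at most $L+1=\tilde{O}(1)$ edges of $H$,'' and then you handle recomputations. But the update sequence also contains vertex splits, and the recourse to $H$ is measured as $\Enc(U_H^{(t)})$ in \cref{thm:spanner}. When a vertex $v$ in $G$ is split into $v$ and $v^{\text{NEW}}$, this split must be forwarded to every $H_{j'}$ (for $j' < j^{(t)}$), and the encoding cost there is $\deg_{H_{j'}}(v^{\text{NEW}})$, which is not $\tilde O(1)$. Over $\tilde{O}(n)$ vertex splits this could a priori sum to something much larger than $\tilde{O}(n^{1+1/L})$. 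The paper flags this explicitly as ``a rather subtle detail'' and resolves it with a charging argument that uses the convention from \cref{sec:prelim} that $v^{\text{NEW}}$ always has the smaller degree: each edge of $H$ is charged $O(\log n)$ upon insertion, and this credit pays for all future moves of that edge during splits (since an edge can be on the small side at most $O(\log n)$ times before the incident vertex's degree drops to $O(1)$, up to recompensation when the degree grows again). Because $H$ has $\tilde{O}(n)$ initial edges and gains $\tilde{O}(n^{1/L})$ new edges per update, the total vertex-split recourse is $\tilde{O}(n^{1+1/L})$, which amortizes as claimed.

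Your final paragraph discusses vertex splits only in the context of path correctness (ensuring broken paths are caught by $E_{\mathit{affected}}$), not recourse; that is a separate issue. Once you add the degree-halving charging argument for the recourse, your proof matches the paper's.
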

\begin{proof}
The graph $H_0$ is computed during initialization and remains fixed and therefore consists of $\tilde{O}(n)$ edges by \Cref{cor:sparsifier} and contributes no recourse. For each $j > 0$, $H_j$ is initially empty and only has edges added in stages $t$ divisible by $n^{1-j/L}$ (but not by $n^{1-(j-1)/L}$) in \Cref{lne:addEdgeToSparsifier}. In each such stage $t$, the graph $J$ is formed over the vertices $S_{j-1}$. It is straight-forward to see by \Cref{def:touchedVertex} and  \Cref{lne:touchSj} 
that $S_{j-1}$ is of size at most $n^{1-(j-1)/L}$ at any stage. Thus, when the graph $\tilde{J}$ is computed, it consists of at most $\tilde{O}(n^{1-(j-1)/L})$ edges by \Cref{lma:staticEmbed}. The bounds on overall sparsity of $H$ follow. 

For the claim on the recourse, we note that in stages $t$ divisible by $n^{1-j/L}$ (but not by $n^{1-(j-1)/L}$), we recompute a spanner on the vertices $S_{j-1}$ which are a subset of the vertices in $G^{(t)}$ and add $\tilde{O}(n^{1-(j-1)/L})$ edges. For the graphs $H^{(t)}_{j'}$ for $j' > j$, the graphs are empty after the algorithm finishes. Using an inductive argument, we can argue that the number of edge deletions at stage $t$ can also be upper bound by $\tilde{O}(n^{1-(j-1)/L})$. Thus, it is not hard to see that at most $\tilde{O}(n^{1/L})$ amortized changes to the edge set of $H$ are made. It remains to argue about a rather subtle detail: if the update is a vertex split applied to $G^{(t-1)}$ to obtain $G^{(t)}$, then we also need to account for the recourse caused by the vertex split to the graphs $H_{j'}^{(t-1)}$ for $j' < j$. But note that we only pay in recourse cost for edges that are moved from a vertex $v$ to a vertex $v'$ if the degree of $v'$ after the vertex split is at most half the degree of $v$'s degree. Thus, we can charge each edge that is moved this way. Further, if $v'$'s degree is then again increased by a factor of $3/2$, we can further re-pay that cost of moving by charging the newly inserted edges. Following this charging scheme, we can argue that each edge can be charged to pay $O(\log(n))$ on insertion and an additional $O(\log(n))$ in recourse for the halving of degrees (after being recompensated if the degree goes up again). Since there are $\tilde{O}(n)$ edges initially in $H$ and at most $\tilde{O}(n^{1/L})$ new edges after each update appear, our recourse bound follows.

To obtain the bound on $D^{(t)}$,
we first observe that for each path $\Pi_j(e')$ constructed in \Cref{lne:setPijCorrectly}, by induction over time, we can straight-forwardly establish that $\Pi_{<j}(e')[a, \hat{a}]$ and $\Pi_{<j}(e')[\hat{b},b]$ are subpaths of $\Pi_{G \to H}^{(t-1)}(e')$ by the properties of set $S_{j-1}$. Thus, the only edges $e$ on any such $\Pi_j(e')$ not already on the path $\Pi_{G \to H}^{(t-1)}(e')$ are the edges in the subpath $[\Pi_{J \to H_{< j} \cup E_{\mathit{affected}}} \circ \Pi_{J \to\tilde{J}}](\hat{e'})$. But clearly, $[\Pi_{J \to H_{< j} \cup E_{\mathit{affected}}} \circ \Pi_{J \to\tilde{J}}](\hat{e'}) \subseteq \Pi_{J \to H_{< j} \cup E_{\mathit{affected}}}(\tilde{J})$. It remains to use our bound on the number of edges in $\tilde{J}$ and the fact that the map $\Pi_{J \to H_{< j} \cup E_{\mathit{affected}}}$ maps edges to paths
of length $\gamma_c^{O(L)}$ in $H$ by \Cref{clm:lengthsEmbedding}. 

For the running time, we use \Cref{cor:sparsifier} for the initialization, and observe that each vertex in $J$ as analyzed above has degree at most $O(\gamma_c^{O(L)} \Delta_{\max}(G))$ as discussed in the proof of \Cref{clm:vertexCongSparsifier}. Thus, using \Cref{lma:staticEmbed} computing each sparsifier $\tilde{J}$ of $J$ only takes time $\tilde{O}(|V(J)| \gamma_c^{O(L)} \Delta_{\max}(G))$. By standard amortization arguments and the fact that the time to  compute the sparsifier dominates the update time of $\textsc{Update}(t)$ asymptotically, the lemma follows. 
\end{proof}

To complete the proof of \Cref{thm:spanner}, we only have to analyze the success probability, which is straight-forward as the only random event at each stage is the invocation of the procedure $\textsc{Sparsify}$. Thus, taking a simple union bound over these events at all stages gives the desired result.

\subsection{Implementing the Sparsification Procedure}
\label{subsec:sparsify}

It remains to prove the procedure that statically sparsifies graphs.

\staticEmbedding*

\paragraph{Additional Tools.} At a high level, the proof of \Cref{lma:staticEmbed} follows by performing an expander decomposition, uniformly subsampling each expander to produce a sparsifier, and then embedding each expander into its sparsifier by using a data structure for outputting short paths between vertices in decremental expanders. To formalize this, we start by surveying some tools on expander graphs. Recall the definiton of expanders.

\begin{definition}[Expander]\label{def:expander}
Let $G$ be an unweighted, undirected graph and $\phi \in (0,1]$, then we say that $G$ is a $\phi$-expander if for all $\emptyset \neq S \subsetneq V$, $|E_G(S, V \setminus S)| \geq \phi \cdot \min\{\vol_G(S), \vol_G(V \setminus S)\}$.
\end{definition}

We can further get a collection of expander decomposition with near uniform degrees in the expanders. The proof of this statement follows almost immediately from \cite{SW19} and is therefore deferred to \Cref{app:expanderStatement}.

\begin{restatable}{theorem}{decompose}\label{thm:expanderStatement}
Given an unweighted, undirected graph $G$, there is an algorithm $\textsc{Decompose}(G)$ that computes an edge-disjoint partition of $G$ into graphs $G_0, G_1, \dots, G_{\ell}$ for $\ell = O(\log n)$ such that for each $0 \leq i \leq \ell$, $|E(G_{i})| \leq 2^{i} n$ and for each nontrivial connected component $X$ of $G_i$, $G_i[X]$ is a $\psi$-expander for $\psi = \Omega(1/\log^3(m))$, and each $x \in X$ has $\deg_{G_i}(x) \geq \psi 2^{i}$. The algorithm runs correctly in time $O(m \log^7(m))$, and succeeds with probability at least $1-n^{-C}$ for any constant $C$, specified before the procedure is invoked.
\end{restatable}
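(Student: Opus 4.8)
The plan is to obtain \Cref{thm:expanderStatement} as a ``near-uniform degree'' refinement of the expander decomposition of Saranurak and Wang~\cite{SW19}, invoking both their decomposition routine and their pruning (``trimming'') lemma as black boxes. Throughout, fix $\psi = c/\log^3 m$ for a small absolute constant $c$; the two facts I would use are (i) given any graph on $m'$ edges, in $O(m'\log^4 m' \cdot \psi^{-1}) = O(m'\log^7 m')$ time one can w.h.p.\ partition its vertices into clusters each inducing a $\psi$-expander, with at most $O(\psi m'\log^3 m') \le m'/2$ inter-cluster edges (choosing $c$ small); and (ii) if $H$ is a $\psi$-expander and $D \subseteq E(H)$ has $|D|$ a small enough $O(\psi)$-fraction of $\vol_H(V(H))$, then one can efficiently find $P \subseteq V(H)$ with $\vol_H(P) = O(|D|/\psi)$ so that $H-D$ restricted to $V(H)\setminus P$ is an $\Omega(\psi)$-expander.

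First I would reduce to near-uniform degree by \emph{peeling over degree scales}. Maintain a leftover graph $R$, initialized to $G$, and process scales $i = \lceil \log_2 \Delta_{\max}(G)\rceil, \dots, 1, 0$ in decreasing order. At scale $i$: compute the $2^i$-core $R_i$ of $R$ (iteratively delete vertices of current degree $<2^i$); run the decomposition routine (i) on $R_i$ to get $\psi$-expander clusters with at most $|E(R_i)|/2$ inter-cluster edges; for each cluster $X$, let $D_X$ be the inter-cluster edges at $X$ together with the $R$-edges at $X$ discarded during the $2^i$-core step, and apply the pruning lemma (ii) to $R[X]$ with $D_X$ to obtain a surviving set $X'$ inducing an $\Omega(\psi)$-expander; add all these induced graphs $R[X']$ to $G_i$, and set the new $R$ to consist of exactly the edges of the old $R$ not placed into $G_i$. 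At the very bottom scale $i=0$ one skips pruning (every non-isolated vertex already has degree $\ge 1 = 2^0 \ge \psi 2^0$) and simply repeats ``decompose and put clusters into $G_0$'' a geometrically decreasing number of times until $R$ is empty.

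For correctness I would check the three conditions. Each $G_i$ is a disjoint union of induced $\Omega(\psi)$-expanders, so the expander property holds after renaming $\psi$ by a constant. For the degree lower bound, the $2^i$-core step ensures every vertex of $R_i$ has degree $\ge 2^i$ there, and I would arrange --- by tuning the thresholds in (ii), and if needed following pruning by one more degeneracy peel at threshold $\psi 2^i$ inside each cluster and re-invoking (ii) --- that every surviving vertex retains degree $\ge \psi 2^i$ in $R[X']$, hence in $G_i$. For $|E(G_i)| \le 2^i n$: each edge lands in exactly one $G_i$, and the point is that an edge of $G_i$ has \emph{both} endpoints of degree $\Theta(2^i)$ within $G_i$ --- the lower bound from pruning, and an upper bound because such a vertex was placed into $G_i$ the first time it survived a $2^i$-core, so its $R$-degree (an upper bound on its $G_i$-degree) was $\le 2\cdot 2^i$; charging each edge weight $1/2^i$ to each endpoint and noting each vertex is charged at $O(1)$ scales gives $|E(G_i)| = O(2^i n)$, which we absorb into $\psi$ (and into replacing $n$ by $n$ up to constants). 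Finally, since the inter-cluster edges and the core/pruning discards are always incident only to few or low-degree vertices relative to $E(R_i)$, I expect $|E(R)|$ to shrink by a constant factor per scale, so $\ell = O(\log n)$ scales suffice and the running times telescope to $O(m\log^7 m)$.

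The main obstacle is the interaction between pruning and the per-vertex degree bound: I need each pruned cluster to simultaneously be an $\Omega(\psi)$-expander, have every surviving vertex of degree $\ge \psi 2^i$, \emph{and} have the pruning discard only an $O(\psi)$-fraction of its volume (so the geometric shrinkage of $|E(R)|$, hence the $O(m\log^7 m)$ total time, is preserved). Securing all three requires carefully threading the parameters of the \cite{SW19} pruning lemma --- in particular exploiting the $2^i$-core preprocessing to guarantee each cluster is ``dense enough'' ($\vol(X) \gg \psi^{-1}|D_X|$) before pruning, and possibly alternating pruning with a secondary degeneracy peel as above while re-certifying expansion --- and this is where essentially all of the real work lies; everything else is bookkeeping on top of the cited machinery.
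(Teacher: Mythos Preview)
Your plan differs substantially from the paper's, and the step you yourself flag as ``where essentially all of the real work lies'' is completely avoided there. The paper never computes $k$-cores or invokes expander pruning. Instead it processes scales from high to low with $G_\ell = G$, and at each scale $i$ forms $G_i^{\rcirclearrow}$ by adding $2^i$ self-loops to every vertex of $G_i$, runs the \cite{SW19} decomposition on $G_i^{\rcirclearrow}$, lets the inter-cluster edges become $G_{i-1}$, and keeps the intra-cluster edges (without the self-loops) as the final $G_i$. The minimum-degree bound $\deg_{G_i}(x)\ge \psi 2^i$ is then immediate from the expander condition applied to the singleton cut $\{x\}$ in $G_i^{\rcirclearrow}[V_j]$, since the self-loops force $\vol(\{x\})\ge 2^i$. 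The edge bound $|E(G_i)|\le 2^i n$ is a two-line induction: $G_i^{\rcirclearrow}$ has at most $2^i n$ original edges plus $2^i n$ self-loop edges, and \cite{SW19} leaves at most a quarter of these inter-cluster, giving $|E(G_{i-1})|\le 2^{i-1}n$. No pruning parameters to thread, no secondary degeneracy peels.

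Beyond the obstacle you identify, your argument for $|E(G_i)|\le O(2^i n)$ has an independent gap. You claim that a vertex $v$ placed in $G_i$ has $R$-degree at most $2\cdot 2^i$ because ``it was placed into $G_i$ the first time it survived a $2^i$-core.'' But $v$ could have been in the $2^{i+1}$-core at scale $i+1$ and then been pruned out of its cluster; in that case none of $v$'s edges are placed into $G_{i+1}$, so $v$ arrives at scale $i$ with its full $R$-degree, which can be $\gg 2^{i+1}$. Even without pruning, being removed during the $2^{i+1}$-core computation only says $v$ had degree $<2^{i+1}$ in the \emph{residual} subgraph at the moment of removal, not in $R$ itself. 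Your alternative route via geometric shrinkage of $|E(R)|$ likewise hinges on pruning discarding only an $O(\psi)$-fraction of volume, which is exactly the unresolved obstacle; so as written neither argument closes the edge-count bound.
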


Further, we use the following result from \cite{CS21}. Given a $\phi$-expander undergoing edge deletions the data structure below implicitly maintains a subset of the expander that still has large conductance using standard expander pruning techniques (see for example \cite{NSW17,SW19}). Further on the subset of the graph that still has good conductance, it can output a path of length $m^{o(1)}$ between any pair of queried vertices.

\begin{theorem}[see Theorem 3.9 in arXiv v1 in \cite{CS21}]\label{thm:apspDataStructure}
Given an unweighted, undirected graph $G$ that is $\phi$-expander for some $\phi > 0$.
There is a deterministic data structure $\mathcal{DS}_{\mathit{ExpPath}}$
that explicitly grows a monotonically increasing 
``forbidden'' vertex subset $\hat{V} \subseteq V(G)$
while handling the following operations:
\begin{itemize}
    \item $\textsc{Delete}(e)$: Deletes edge $e$ from $E(G)$ and then explicitly outputs a set of vertices that were added to $\hat{V}$ due to the edge deletion. 
    \item $\textsc{GetPath}(u,v)$: for any $u,v \in V(G) \setminus \hat{V}$ returns a path consisting of at most $\gamma_{\mathit{ExpPath}}$ edges
    between $u$ and $v$ in the graph $G[V(G) \setminus \hat{V}]$. Each path query can be implemented in time $\gamma_{\mathit{ExpPath}}$, where $\gamma_{\mathit{ExpPath}} = (\log(m)/\phi)^{O(\sqrt{\log(m)})}$. The operation does not change the set $\hat{V}$.
\end{itemize}
The data structure ensures that after $t$ edge deletions $\vol_{G^{(0)}}(\hat{V}) \leq \gamma_{\mathit{del}} t/\phi$ for some constant $\gamma_{\mathit{del}} = O(1)$. The total update time taken by the data structure for initialization and over all deletions is $O(|E(G^{(0)})| \gamma_{\mathit{ExpPath}})$.
\end{theorem}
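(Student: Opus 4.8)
The plan is to construct $\mathcal{DS}_{\mathit{ExpPath}}$ by a recursion of depth $r = \lceil \sqrt{\log m}\,\rceil$ that shrinks the vertex set by a factor $\Delta = 2^{\Theta(\sqrt{\log m})}$ per level, so the recursion terminates at graphs of $O(1)$ size while only accumulating a $\bigl((\log m)/\phi\bigr)^{O(\sqrt{\log m})}$ factor in path length. The static building block is: for a $\phi$-expander $G$ on $n$ vertices, deterministically compute in $\tilde O(m/\phi)$ time (i) a $\Theta(1)$-expander $W$ of maximum degree $\mathrm{polylog}(n)$ on a vertex set $V_W \subseteq V(G)$ with $|V_W| \le n/\Delta$; (ii) an embedding $\Pi_{W\to G}$ with $\econg(\Pi_{W\to G})$ and $\length(\Pi_{W\to G})$ bounded by $\mathrm{polylog}(n)\cdot(\log n)/\phi$; and (iii) a ``home'' map $h\colon V(G)\to V_W$ with stored paths $P_u$ from $u$ to $h(u)$ in $G$ of length $O((\log n)/\phi)$ whose union has vertex-congestion $O(\Delta\cdot\mathrm{polylog}(n)/\phi)$. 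Items (i)--(iii) follow from the standard fact that one can route a low-congestion, low-dilation multicommodity flow realizing a bounded-degree expander inside an expander; one convenient derivation uses \cref{thm:expanderStatement} together with the flow/cut-matching machinery, or one can quote the corresponding construction of \cite{CS21} directly. The data structure recursively maintains such a structure on $W$.

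\textbf{Queries.} To answer $\textsc{GetPath}(u,v)$ I would retrieve $P_u$ and $P_v$ (valid because their internal vertices lie outside the current forbidden set), recursively compute a path $Q$ between $h(u)$ and $h(v)$ in the data structure for $W$, substitute each edge of $Q$ by its $\Pi_{W\to G}$-image, and concatenate everything. This yields the recursion $\gamma^{(i)} \le \bigl((\log m)/\phi\bigr)^{O(1)}\gamma^{(i+1)}$ with base case $\gamma^{(r)}=O(1)$, hence $\gamma_{\mathit{ExpPath}} = \bigl((\log m)/\phi\bigr)^{O(\sqrt{\log m})}$; since the query touches $O(1)$ stored short paths per level and there are $r$ levels, the query time is $O(\gamma_{\mathit{ExpPath}})$.

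\textbf{Deletions and pruning.} On $\textsc{Delete}(e)$ I would run deterministic Saranurak--Wang expander pruning \cite{SW19} on $G$: after $t$ deletions it maintains a monotone set $\hat{V}_0$ with $\vol_{G^{(0)}}(\hat{V}_0)\le O(t/\phi)$ and keeps $G[V\setminus\hat{V}_0]$ a $(\phi/6)$-expander. The deleted edge and the newly pruned vertices destroy at most $\econg(\Pi_{W\to G})$ embedding paths per affected edge; each destroyed path is forwarded as a deletion of the corresponding edge of $W$, and any $V_W$-vertex entering $\hat{V}_0$ is deleted from $W$. Thus $W$ sees $t_1 \le t\cdot\bigl((\log m)/\phi\bigr)^{O(1)}$ deletions, and recursively level $i$ sees $t_i \le t\cdot\bigl((\log m)/\phi\bigr)^{O(i)}$ deletions --- still far below the pruning threshold of the level-$i$ graph provided $t \le \phi n/\bigl((\log m)/\phi\bigr)^{\Theta(\sqrt{\log m})}$, and otherwise the whole structure is rebuilt from scratch with the cost charged to the deletions. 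A vertex $u$ is added to $\hat{V}$ precisely when $P_u$ is destroyed and no replacement home reachable inside the current core of $W$ exists; locating a replacement (or certifying none exists) is done with a short bounded-radius search in the pruned region, or by the recursion itself. The advertised volume bound follows by lifting: the set of vertices forced into $\hat{V}$ by pruning at level $i$ has $G$-volume at most $(\text{home-congestion})\cdot\vol_{W_i}(\hat{V}_i) = O(\Delta\cdot\mathrm{polylog}(m)/\phi)\cdot O(t_i)$, and since each level's contribution decays relative to the budget, these sum to $\gamma_{\mathit{del}}\,t/\phi$ after absorbing constants. Update time is dominated by the pruning calls and the re-forwarding of $\sum_i t_i$ destroyed paths, which totals $O(|E(G^{(0)})|\,\gamma_{\mathit{ExpPath}})$ once rebuilds are amortized.

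\textbf{Main obstacle.} The delicate point is the decremental maintenance of the home assignment coupled with the cascade of pruned sets: one must show (a) that a vertex whose home got pruned can locally find a still-good replacement home without a global recomputation, or is legitimately dead, and (b) that the per-level blow-up $\bigl((\log m)/\phi\bigr)^{O(1)}$ in forwarded deletions, combined with the pruning guarantee at each level, telescopes to keep $\vol_{G^{(0)}}(\hat{V})$ at $O(t/\phi)$ rather than $O(t/\phi)$ times a polylog factor \emph{per} level. Executing all of this deterministically --- deterministic expander decomposition via \cref{thm:expanderStatement}, deterministic pruning, and a deterministic expander embedding in place of randomized cut-matching --- is the remaining technical burden; this is exactly what \cite{CS21} carries out, and I would follow that route for the details while treating the recursion above as the conceptual skeleton.
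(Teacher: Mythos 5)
The paper does not prove this statement at all: it is imported verbatim as a black box from \cite{CS21} (Theorem 3.9 of the arXiv version), and the surrounding text in \cref{subsec:sparsify} only \emph{uses} the data structure $\mathcal{DS}_{\mathit{ExpPath}}$. So there is no internal proof to compare against; the honest baseline is the citation itself, which your proposal also ultimately falls back on.

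As a reconstruction of the external argument, your recursive skeleton (embed a smaller bounded-degree expander $W$ into $G$ with low congestion and dilation, recurse to depth $\Theta(\sqrt{\log m})$, answer queries by lifting through the embeddings, and handle deletions via deterministic expander pruning) is the right architecture. But as a standalone proof it has a genuine gap that you yourself flag, and it is worth being precise about why it is fatal as written: the theorem promises $\vol_{G^{(0)}}(\hat{V}) \leq \gamma_{\mathit{del}}\, t/\phi$ with $\gamma_{\mathit{del}} = O(1)$, whereas your accounting adds a contribution of order $(\Delta \cdot \mathrm{polylog}(m)/\phi)\cdot t_i$ from each level with $t_i = t\cdot((\log m)/\phi)^{O(i)}$, which sums to $t\cdot((\log m)/\phi)^{O(\sqrt{\log m})}$ rather than $O(t/\phi)$ --- the claimed ``decay relative to the budget'' is asserted, not shown, and nothing in your setup forces it. The resolution in the actual construction is that the forbidden set $\hat{V}$ reported to the user is governed by pruning at the top level only; the deeper recursion levels maintain their own pruned sets to keep path queries answerable, but those sets are not lifted back into $\hat{V}$, so no telescoping product appears in the volume bound. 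Relatedly, your mechanism for reassigning a vertex whose home path $P_u$ is destroyed (``a short bounded-radius search, or the recursion itself'') is not specified enough to verify either correctness or the deterministic running-time bound. If the intent is merely to justify quoting \cite{CS21}, the proposal is fine; if the intent is a self-contained proof, these two points are where it currently fails.
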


\paragraph{The Algorithm.} We can now use these tools to give \Cref{alg:sparsify} that implements the procedure $\textsc{Sparsify}(H', J, \Pi_{J \to H'})$.

\begin{algorithm}[!ht]
$J_0, J_1, \ldots, J_{\ell} \gets \textsc{Decompose}(J)$.\;
$\tilde{J} \gets (V, \emptyset)$.\;
\lForEach{$e \in E(J)$}{$\Pi_{J \rightarrow \tilde{J}}(e) \gets \emptyset$.}
\ForEach{$i \in [0,\ell]$ and connected component $X$ in $J_i$}{
    \tcc{Sample the edges that are added to the sparsifier $\tilde{J}$.}
    $p_{X,i} \defeq \min\left\{ \frac{96C \log(m)}{\psi \Delta_{min}(J_i[X])}, 1\right\}$.
     \label{lne:pXiDefinition}\;
    Construct graph $\tilde{J}_{X,i}$ by sampling each edge $e \in E(J_i[X])$ independently with probability $p_{X,i}$.\label{lne:edgesAreSampledIid}\;
    
    Add all edges in $\tilde{J}_{X,i}$ to $\tilde{J}$.\;

    \tcc{Embed all edges in $J_i[X]$ into the sampled local graph $\tilde{J}_{X,i}$.}
    \lForEach{$e \in \tilde{J}_{X,i}$}{
        $\Pi_{J \rightarrow \tilde{J}}(e) \gets e$.\label{lne:embedByItselfIfSampled}
    }
    \While(\label{lne:outerWhileStaticEmbed}){there exists an edge $e \in E(J[X_i])$ with $\Pi_{J \rightarrow \tilde{J}}(e) = \emptyset$}{
    $\tilde{J}_{APSP} \gets \text{a copy of } \tilde{J}_{X,i}$.\; 
    Initialize $\mathcal{DS}_{\mathit{ExpPath}}$ on graph $\tilde{J}_{APSP}$ with parameter $\phi \defeq \psi/4$ maintaining set $\hat{V}$.\label{lne:initAPSP}\;
        \lForEach{$e \in E(\tilde{J}_{X,i})$}{$cong(e) \gets 0$.}
        \While(\label{lne:innerWhileStaticEmbed}){there exists an edge $e \in E(J[V \setminus \hat{V}])$ with $\Pi_{J \rightarrow \tilde{J}}(e) = \emptyset$}{
            Let $u$ and $v$ be the endpoints of edge $e$.\;
            $\Pi_{J \rightarrow \tilde{J}}(e) \gets \mathcal{DS}_{\mathit{ExpPath}}.\textsc{GetPath}(u,v)$.\label{lne:computeMCFlowGetPath1}\;
            \ForEach{$e \in \Pi_{J \rightarrow \tilde{J}}(e)$}{
                $cong(e) \gets cong(e) + 1$\;
                    \If{$cong(e) \geq \tau \defeq \frac{\gamma_{\mathit{ExpPath}}\gamma_{\mathit{del}}}{\psi p_{X,i}}$
                    }{ 
                    Remove edge $e$ from $\tilde{J}_{APSP}$ via $\mathcal{DS}_{\mathit{ExpPath}}.\textsc{Delete}(e)$.\label{lne:removeCOngestedEdge1} }
            }
        }
    }
}
\Return $(\tilde{J}, \Pi_{J \rightarrow \tilde{J}})$
\caption{$\textsc{Sparsify}(H', J, \Pi_{J \to H'})$}
\label{alg:sparsify}
\end{algorithm}

The algorithm has two key steps:
\begin{enumerate}
    \item \underline{Sampling:} The graph $J$ is first decomposed via \Cref{thm:expanderStatement}. Then, the algorithm iterates over $\psi$-expanders $J_i[X]$ with near-uniform degrees. It is well-known that to obtain a sparsifier $\tilde{J}_{X,i}$ of such graphs, one can simply sample each edge with probability roughly $\frac{\log(m)}{\psi \Delta_{min}(J_i[X])}$. To obtain the final sparsifier $\tilde{J}$, we only have to take the union over all samples $\tilde{J}_{X,i}$.
    \item \underline{Embedding:} We then proceed to find an embedding for edges in $J_i[X]$ into $\tilde{J}_{X,i}$. The sampled edges can be handled trivially by embedding them into themselves. To embed the remaining edges $e \in E(J)$ into $\tilde{J}_{X,i}$, we exploit that $\tilde{J}_{X,i}$ is an expander graph which allows us to employ the data structure from \Cref{thm:apspDataStructure} on $\tilde{J}_{X,i}$ to query for a path between the endpoints of $e$ in $\tilde{J}_{X,i}$ efficiently. We further keep track of the congestion of each edge in $\tilde{J}_{X,i}$ by our embedding and remove edges that are too congested (at least until we cannot embed anymore in any other way).
\end{enumerate}
We point out that  \Cref{alg:sparsify} in no way uses the embedding $\Pi_{J \rightarrow H'}$. Still, we show that due to the structure given, we can tightly upper bound the congestion and length of the embedding given by the composition $\Pi_{J \rightarrow H'} \circ \Pi_{J \rightarrow \tilde{J}}$.

\paragraph{Proof of \Cref{lma:staticEmbed}.} We start by proving the following structural claim. For the rest of the section, we condition on the event that it holds for each relevant $i$ and $X$.

\begin{claim}\label{clm:HisCutSparsifier}
For each $i$, and connected component $X$ in $J_i$, the corresponding sample $\tilde{J}_{X,i}$ satisfies for each $S \subseteq X$ that $\frac{1}{2}|E_{\tilde{J}_{X,i}}(S, X \setminus S)|/p_{X,i} \leq |E_{J_i}(S, X \setminus S)| \leq 2 |E_{\tilde{J}_{X,i}}(S, X \setminus S)|/p_{X,i}$ with probability at least $1- n^{-2C}$.
\end{claim}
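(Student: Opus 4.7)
The plan is to prove this via a standard Chernoff plus union-bound argument, exploiting the expander property to control the number of cuts of each size. Since the edges within $J_i[X]$ are sampled independently with probability $p_{X,i}$ (\cref{lne:edgesAreSampledIid}), for any fixed $S \subseteq X$ the random variable $|E_{\tilde{J}_{X,i}}(S, X\setminus S)|$ is a sum of $|E_{J_i}(S, X\setminus S)|$ i.i.d.\ Bernoulli($p_{X,i}$) random variables, with mean $\mu \defeq p_{X,i} \cdot |E_{J_i}(S, X\setminus S)|$. The desired two-sided inequality $\tfrac12 |E_{\tilde{J}_{X,i}}(S,X\setminus S)|/p_{X,i} \le |E_{J_i}(S, X\setminus S)| \le 2|E_{\tilde{J}_{X,i}}(S,X\setminus S)|/p_{X,i}$ is equivalent to $|E_{\tilde{J}_{X,i}}(S,X\setminus S)| \in [\mu/2, 2\mu]$, which follows from Chernoff (\cref{thm:chernoffBound}) with $\delta=1/2$ and $W=1$, giving failure probability at most $2e^{-\mu/12}$.

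We may assume $p_{X,i} < 1$ (otherwise $\tilde{J}_{X,i} = J_i[X]$ and the claim is trivial), so $p_{X,i} = \tfrac{96C\log m}{\psi \Delta_{\min}(J_i[X])}$. To lower bound $\mu$, we use the expander guarantee from \cref{thm:expanderStatement}: for any nontrivial $S \subsetneq X$, assuming without loss of generality that $\vol_{J_i[X]}(S) \le \vol_{J_i[X]}(X\setminus S)$, \cref{def:expander} gives
\[ |E_{J_i}(S, X\setminus S)| \;\ge\; \psi \cdot \vol_{J_i[X]}(S) \;\ge\; \psi \Delta_{\min}(J_i[X]) \cdot |S|. \]
Writing $k = |S|$, we obtain $\mu \ge 96 C \log(m) \cdot k$, so the failure probability for any single cut of this shape is at most $2\exp(-8C k\log m) = 2 m^{-8Ck}$.

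Finally, we take a union bound over all cuts. Parametrizing by the size of the smaller side, the number of cuts $(S, X\setminus S)$ with $\min\{|S|, |X\setminus S|\} = k$ is at most $\binom{|X|}{k} \le n^k \le m^k$. Hence the total failure probability is bounded by
\[ \sum_{k=1}^{\lfloor |X|/2 \rfloor} m^k \cdot 2 m^{-8Ck} \;\le\; \sum_{k \ge 1} 2 m^{-7Ck} \;\le\; 4 m^{-7C} \;\le\; n^{-2C}, \]
using $m \ge n$ and adjusting constants. This finishes the claim.

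The only subtlety is ensuring the sampling rate $p_{X,i}$ absorbs the log-many-cuts cost, which is exactly what the $\psi \Delta_{\min}(J_i[X])$ factor in the denominator is chosen to accomplish; the expander property is used both to control the edge-count of small cuts (so that they are concentrated) and implicitly to justify that parametrizing cuts by $|S|$ rather than by volume is fine. There is no real technical obstacle—this is a direct adaptation of uniform cut sparsification in regular expanders, with $\Delta_{\min}(J_i[X])$ playing the role of a degree bound thanks to the near-uniform-degree guarantee of \cref{thm:expanderStatement}.
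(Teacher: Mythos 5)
Your proof is correct and takes essentially the same route as the paper: handle $p_{X,i}=1$ trivially, use the $\psi$-expansion of $J_i[X]$ together with the minimum-degree bound to lower-bound the expected sampled cut size by $\Omega(C\log m)\cdot k$ for a cut whose smaller side has $k$ vertices, apply the Chernoff bound, and union-bound over the at most $|X|^{O(k)}$ cuts of each size. The only cosmetic wrinkle is your ``without loss of generality'' on volume rather than vertex count, which momentarily conflates the $k$ in the Chernoff step with the $k$ in the union bound; this is harmless because the smaller-volume side has at least as many vertices as the smaller side, but stating WLOG $|S|\le|X\setminus S|$ (and noting both $\vol(S)$ and $\vol(X\setminus S)$ are at least $\Delta_{\min}|S|$) as the paper does avoids the ambiguity.
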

\begin{proof}
Since for $i = 0$, $J_i[X] = \tilde{J}_{X,i}$ and $p_{X,i} = 1$, the claim is vacuously true. For $i > 0$, consider any cut $(S, X \setminus S)$ in $J_i[X]$ and assume wlog $k = |S| \leq |X \setminus S|$. Since $J_i[X]$ is a $\psi$-expander, we have that $|E_{J_i}(S, X \setminus S)| \geq \psi \Delta_{min}(J_i[X]) |S|$ by \Cref{def:expander}. The algorithm samples each such edge $e$ into the sample $\tilde{J}_{X,i}$ independently with probability $p_{X,i}$. Thus, $\mathbb{E}|E_{\tilde{J}_{X,i}}(S, X \setminus S)| = |E_G(S, X \setminus S)| \cdot p_{X,i} \geq 48C \log(m)|S|$.

Using a Chernoff bound as in \Cref{thm:chernoffBound} on the random variable $|E_{\tilde{J}_{X,i}}(S, X \setminus S)|$, we can thus conclude that our claim is correct on the cut $(S, X \setminus S)$ with probability at least $1- 2m^{-4Ck}$. Since there are at most $\binom{|X|}{k} \leq |X|^{2k}$ cuts where the smaller side has exactly $k$ vertices, we can finally use a union bound over all cuts to complete the proof. 
\end{proof}

By the claim above, and the fact that each graph $J_i[X]$ (for $i > 0$) is a $\psi$-expander by \Cref{thm:expanderStatement}, we can conclude that each $\tilde{J}_{X,i}$ is a $\psi/4$-expander.

\begin{corollary}\label{cor:sparsifierIsExpander}
For any $i > 0$ and $X$ as used in \Cref{alg:sparsify}, $\tilde{J}_{X,i}$ is a $\psi/4$-expander.
\end{corollary}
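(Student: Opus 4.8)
The plan is to read off the expander property of $\tilde{J}_{X,i}$ directly from the two-sided cut bound in \Cref{clm:HisCutSparsifier} together with the fact that $J_i[X]$ is a $\psi$-expander, which is part of the output guarantee of \Cref{thm:expanderStatement}. We condition throughout on the (high-probability) event that \Cref{clm:HisCutSparsifier} holds for the relevant $i$ and $X$; this is why the corollary is stated under the same conditioning. The case $i = 0$ is excluded by hypothesis anyway (there $\tilde{J}_{X,0} = J_0[X]$ and $p_{X,0} = 1$, so the statement reduces to the $\psi$-expansion of $J_0[X]$), so it suffices to treat $i > 0$.

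First I would convert the cut-sparsification bound into a volume comparison. Applying the upper bound of \Cref{clm:HisCutSparsifier} to singleton sets $S = \{v\}$, $v \in X$, gives $\deg_{\tilde{J}_{X,i}}(v) \le 2 p_{X,i}\,\deg_{J_i}(v)$; summing over $v \in S$ yields $\vol_{\tilde{J}_{X,i}}(S) \le 2 p_{X,i}\,\vol_{J_i}(S)$ for every $S \subseteq X$. Equivalently $\vol_{J_i}(S) \ge \frac{1}{2 p_{X,i}}\vol_{\tilde{J}_{X,i}}(S)$, and likewise for $X \setminus S$, so $\min\{\vol_{J_i}(S), \vol_{J_i}(X \setminus S)\} \ge \frac{1}{2 p_{X,i}}\min\{\vol_{\tilde{J}_{X,i}}(S), \vol_{\tilde{J}_{X,i}}(X \setminus S)\}$ (using that $a \ge c$ and $b \ge d$ imply $\min\{a,b\} \ge \min\{c,d\}$).

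Then I would chain the inequalities: for any $\emptyset \neq S \subsetneq X$,
\begin{align*}
|E_{\tilde{J}_{X,i}}(S, X \setminus S)|
&\ge \tfrac{p_{X,i}}{2}\,|E_{J_i}(S, X \setminus S)|
\ge \tfrac{p_{X,i}\psi}{2}\min\{\vol_{J_i}(S), \vol_{J_i}(X \setminus S)\} \\
&\ge \tfrac{p_{X,i}\psi}{2}\cdot\tfrac{1}{2 p_{X,i}}\min\{\vol_{\tilde{J}_{X,i}}(S), \vol_{\tilde{J}_{X,i}}(X \setminus S)\}
= \tfrac{\psi}{4}\min\{\vol_{\tilde{J}_{X,i}}(S), \vol_{\tilde{J}_{X,i}}(X \setminus S)\},
\end{align*}
where the first inequality is the lower bound of \Cref{clm:HisCutSparsifier}, the second is the $\psi$-expansion of $J_i[X]$ from \Cref{thm:expanderStatement} (\Cref{def:expander}), and the third is the volume bound just derived. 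This is exactly the defining condition of a $\psi/4$-expander (\Cref{def:expander}), proving the corollary.

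There is essentially no obstacle here — it is a short deduction, and the only points requiring any care are purely bookkeeping: \Cref{clm:HisCutSparsifier} must be invoked in both directions (lower bound on the cut of interest, upper bound on singleton cuts for the volume comparison), its universal quantification over $S \subseteq X$ legitimizes the singleton instantiation, and the conditioning on the sparsification event should be carried over into the statement, exactly as the surrounding text already arranges.
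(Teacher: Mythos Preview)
Your proof is correct and is exactly the deduction the paper intends: the paper does not spell out a proof at all, merely stating that the corollary follows from \Cref{clm:HisCutSparsifier} together with the $\psi$-expansion of $J_i[X]$ from \Cref{thm:expanderStatement}. You have filled in precisely these details—using the cut lower bound for the numerator and the singleton-cut upper bound for the volume comparison—so there is nothing to add.
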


This implies that our initializations of the data structure $\mathcal{DS}_{\mathit{ExpPath}}$ in \Cref{lne:initAPSP} are legal according to \Cref{thm:apspDataStructure}. Next, let us give an upper bound on the congestion of the embedding $\Pi_{J \rightarrow \tilde{J}}$.

\begin{claim}\label{clm:edgeCongestion}
For any $i \in [0, \ell]$ and $X$ as used in \Cref{alg:sparsify}, we have $\econg(\Pi_{J \rightarrow \tilde{J}}|_{ E(J_i[X])}) \leq \gamma_{X,i} = O\left(\frac{\gamma_{\mathit{ExpPath}}\gamma_{\mathit{del}}\log(m)}{\psi p_{X,i}}\right)$ where $\Pi_{J \rightarrow \tilde{J}}|_{ E(J_i[X])}$ denotes the embedding $\Pi_{J \rightarrow \tilde{J}}$ restricted to edges in $J_i[X]$.
\end{claim}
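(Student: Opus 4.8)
The plan is to track the two sources of congestion in $\Pi_{J \to \tilde{J}}|_{E(J_i[X])}$ separately: edges that were sampled into $\tilde{J}_{X,i}$ (which are embedded into themselves in \Cref{lne:embedByItselfIfSampled}), and edges embedded via calls to $\mathcal{DS}_{\mathit{ExpPath}}.\textsc{GetPath}$ in \Cref{lne:computeMCFlowGetPath1}. For the first source, an edge $e \in \tilde{J}_{X,i}$ receives one unit of congestion from itself; this is clearly $O(1) \le \gamma_{X,i}$. The main work is bounding the congestion contributed by the \textsc{GetPath} calls, and the point of the congestion-threshold bookkeeping (the counter $cong(e)$ and the deletion in \Cref{lne:removeCOngestedEdge1}) is precisely to enforce that no edge is used by more than $\tau = \gamma_{\mathit{ExpPath}}\gamma_{\mathit{del}}/(\psi p_{X,i})$ such paths before it is removed from $\tilde{J}_{APSP}$.

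The key steps, in order, are as follows. First I would observe that the \textbf{inner} while-loop (\Cref{lne:innerWhileStaticEmbed}) maintains the invariant that every edge currently in $\tilde{J}_{APSP}$ has $cong(e) < \tau$, since as soon as $cong(e)$ reaches $\tau$ the edge is deleted via $\mathcal{DS}_{\mathit{ExpPath}}.\textsc{Delete}$; hence each edge contributes at most $\tau$ to the congestion within a single execution of the outer-while body. Second, I would bound the number of executions of the \textbf{outer} while-loop (\Cref{lne:outerWhileStaticEmbed}). Each outer iteration initializes a fresh copy $\tilde{J}_{APSP}$ of $\tilde{J}_{X,i}$ and runs the inner loop until no further edge of $J[V \setminus \hat V]$ can be embedded — i.e. until some endpoint of every remaining un-embedded edge has fallen into the forbidden set $\hat V$. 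By \Cref{thm:apspDataStructure}, after $t$ deletions we have $\vol_{\tilde{J}_{X,i}}(\hat V) \le \gamma_{\mathit{del}} t/\phi = 4\gamma_{\mathit{del}} t/\psi$. The number of deletions performed in one outer iteration is at most $|E(\tilde{J}_{X,i})|/\tau$ (each deletion "uses up" $\tau$ units of congestion, and total congestion produced by \textsc{GetPath} paths of length $\le \gamma_{\mathit{ExpPath}}$ is at most $\gamma_{\mathit{ExpPath}} \cdot |E(J_i[X])|$; combined with the edge-count/degree bounds from \Cref{thm:expanderStatement} and the definition of $p_{X,i}$ this is $\le \mathrm{poly}\log(m) \cdot |V(X)|$). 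Plugging in, $\vol(\hat V)$ after one outer iteration is at most a $O(1/\log^{O(1)} m)$ fraction of $\vol(\tilde{J}_{X,i})$, so a constant fraction (in fact a $1 - O(1/\log^{O(1)} m)$ fraction, but $\Omega(1)$ suffices) of the un-embedded edges get embedded in each outer iteration. Therefore after $O(\log m)$ outer iterations every edge of $J_i[X]$ is embedded, and the inner-while termination condition (no un-embedded edge has both endpoints outside $\hat V$) is what guarantees progress. Third, I would combine: total congestion on any fixed edge $e \in \tilde{J}_{X,i}$ is at most (number of outer iterations) $\times \tau$ plus the $O(1)$ self-contribution, i.e. $O(\log m) \cdot \tau + O(1) = O\!\left(\frac{\gamma_{\mathit{ExpPath}}\gamma_{\mathit{del}}\log m}{\psi p_{X,i}}\right) = \gamma_{X,i}$, as claimed.

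The step I expect to be the main obstacle is the second one: rigorously bounding the number of outer-while iterations by $O(\log m)$. This requires carefully accounting for how many edges can remain un-embedded when the inner loop terminates, which in turn requires relating the volume of $\hat V$ (controlled by the number of deletions, which is controlled by $\tau$ and the total length of embedding paths) back to the number of edges of $J_i[X]$ incident to $\hat V$. One must use both the near-uniform degree guarantee $\deg_{\tilde{J}_{X,i}} \ge \psi p_{X,i} \cdot (\text{something}) $ — more precisely that $\tilde{J}_{X,i}$ has min-degree $\Omega(\log m)$ after sampling, by the Chernoff argument underlying \Cref{clm:HisCutSparsifier} — and the expander guarantee $\tilde{J}_{X,i}$ is a $\psi/4$-expander (\Cref{cor:sparsifierIsExpander}), so that a small-volume $\hat V$ can only "block" a correspondingly small number of edges. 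Once the deletion budget $|E(\tilde{J}_{X,i})|/\tau$ per outer round is shown to keep $\vol(\hat V)$ below, say, $\frac{1}{2}\vol(\tilde{J}_{X,i})$ per round, the geometric-decrease argument closes the bound. The first and third steps are routine bookkeeping given the threshold $\tau$ and the path-length bound $\gamma_{\mathit{ExpPath}}$ from \Cref{thm:apspDataStructure}.
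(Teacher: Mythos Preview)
Your overall structure (self-embedding contributes $O(1)$; each outer iteration adds at most $\tau$ per edge; bound outer iterations by $O(\log m)$; multiply) matches the paper, but your argument for the second step has a genuine gap.

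Your plan is to \emph{upper-bound} the number of deletions $t$ in one outer iteration, plug this into $\vol_{\tilde{J}_{X,i}}(\hat V)\le 4\gamma_{\mathit{del}}t/\psi$, and conclude $\hat V$ has small volume. But the arithmetic does not close. The correct bound on $t$ (each deletion needs $\tau$ units of congestion, and the total congestion produced is at most $\gamma_{\mathit{ExpPath}}$ times the number $N$ of edges embedded this round) gives $t\le \gamma_{\mathit{ExpPath}}N/\tau$, hence
\[
\vol_{\tilde{J}_{X,i}}(\hat V)\;\le\;\frac{4\gamma_{\mathit{del}}}{\psi}\cdot\frac{\gamma_{\mathit{ExpPath}}N}{\tau}\;=\;4p_{X,i}N,
\]
and after translating via \Cref{clm:HisCutSparsifier}, $\vol_{J_i[X]}(\hat V)\le 8N$. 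This is vacuous: it only says the number of edges remaining un-embedded is at most $8$ times the number you just embedded, not that it shrank. Your claimed ``$O(1/\log^{O(1)}m)$ fraction of $\vol(\tilde J_{X,i})$'' does not follow; the ratio comes out to a constant on the wrong side.

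The fix, which is what the paper does, is to avoid bounding $t$ absolutely and instead compare the two quantities that both scale linearly in $t$. On one hand, $t$ deletions means at least $t\tau$ total congestion was produced, so at least $t\tau/\gamma_{\mathit{ExpPath}}$ edges were embedded this round. On the other hand, the edges left un-embedded are exactly those touching $\hat V$, and $\vol_{J_i[X]}(\hat V)\le (2/p_{X,i})\vol_{\tilde{J}_{X,i}}(\hat V)\le 8\gamma_{\mathit{del}}t/(p_{X,i}\psi)=8t\tau/\gamma_{\mathit{ExpPath}}$. Dividing, the number embedded is at least $1/8$ of the number left un-embedded, hence at least a constant fraction of the edges un-embedded at the start of the round got embedded. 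This gives the $O(\log m)$ bound on outer iterations directly, without ever needing $\hat V$ to be ``small''.
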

\begin{proof}
We first observe that only in the foreach loop iteration on $i$ and $X$, can any congestion be added to edges in $\tilde{J}_{X,i}$ by the disjointness of the graphs $J_j$ (see \Cref{thm:expanderStatement}). Further, in the particular iteration on $i$ and $X$, up to the while-loop starting in \Cref{lne:outerWhileStaticEmbed}, the congestion of $\Pi_{J \rightarrow \tilde{J}} | E(J_i[X])$ is at most $1$ since only edges sampled into $\tilde{J}_{X,i}$ are embedded into themselves. 

It is straight-forward to observe that the congestion of the partial embedding $\Pi_{J \rightarrow \tilde{J}}$ (restricted to $E(J_i[X])$) throughout each iteration of the outer-while loop starting in \Cref{lne:outerWhileStaticEmbed} is increased by at most $\tau$ as the algorithm track congestion of the current iteration explicitly and removes edges that are too congested in \Cref{lne:removeCOngestedEdge1}.
It thus remains to bound the number of iterations of the outer-while loop starting in \Cref{lne:outerWhileStaticEmbed} by by $O(\log(m))$. We can then conclude that the total congestion on the edges is at most $O(\tau \log(m))$.

To bound this number of iterations, let us analyze a single outer while-loop iteration (starting at \Cref{lne:outerWhileStaticEmbed}), and fix the end of such an iteration. Let $t$ be the number of deletions processed by the data structure $\mathcal{DS}_{\mathit{ExpPath}}$ throughout the iteration and $\hat{V}^{\mathit{FINAL}}$, the set $\hat{V}$ at the end of the while-loop iteration. 
Using that the while-loop terminates, we can further conclude that the only edges not embedded after the current iteration are  
those outside $E(J_i[X \setminus \hat{V}^{\mathit{FINAL}}])$.
By \Cref{thm:apspDataStructure}, $\vol_{\tilde{J}_{X,i}}(\hat{V}^{\mathit{FINAL}}) \leq 4\gamma_{\mathit{del}} t/\psi$ and therefore by \Cref{clm:HisCutSparsifier}, we have $|E(J_i[X]) \setminus E(J_i[X \setminus \hat{V}^{\mathit{FINAL}}])| \leq \vol_{J_i[X]}(\hat{V}^{\mathit{FINAL}}) \leq \frac{8\gamma_{\mathit{del}} t}{p_{X,i}\psi}$. 
But at the same time, we know that at least $t \cdot \tau/ \gamma_{\mathit{ExpPath}}$ edges have been embedded in the current while-loop iteration, since each edge embedding adds at most $\gamma_{\mathit{ExpPath}}$ units to the total congestion. 
We conclude that each iteration, we embed at least a $\frac{1}{16}$-fraction of the edges in $J_i[X]$ that where not embedded before the current while-loop iteration. It follows that there are at most  $O(\log(m))$ iterations, which establishes our claim.
\end{proof}

\begin{claim}\label{clm:staticEmbeddingCongestion}
$\vcong(\Pi_{J \to H'} \circ \Pi_{J \to \tilde{J}}) \leq \gamma_c \cdot \left(\vcong(\Pi_{J \to H'}) + \Delta_{\max}(J)\right)$ with probability at least $1 - n^{-2c}$.
\end{claim}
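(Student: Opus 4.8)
The plan is to fix a vertex $v^{H'}\in V(H')$ and count the edges $e\in E(J)$ whose composed embedding path contains $v^{H'}$. Since $[\Pi_{J \to H'}\circ\Pi_{J \to \tilde{J}}](e)=\bigoplus_{f\in\Pi_{J \to \tilde{J}}(e)}\Pi_{J \to H'}(f)$, the vertex $v^{H'}$ lies on this path if and only if $\Pi_{J \to \tilde{J}}(e)$ traverses some edge of
\[ F_v\defeq\{f\in E(\tilde{J})\;:\;v^{H'}\in\Pi_{J \to H'}(f)\}. \]
Thus $\vcong(\Pi_{J \to H'}\circ\Pi_{J \to \tilde{J}},v^{H'})\le\sum_{f\in F_v}\econg(\Pi_{J \to \tilde{J}},f)$. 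The sampled expanders $\tilde{J}_{X,i}$ built in \Cref{alg:sparsify} edge-partition $E(\tilde{J})$ (by the edge-disjointness of the $J_i$ from \Cref{thm:expanderStatement}), so I would split $F_v=\bigsqcup_{i,X}F_v^{X,i}$ with $F_v^{X,i}\defeq F_v\cap E(\tilde{J}_{X,i})$ and bound each piece.

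For a fixed $(i,X)$ I would combine two estimates. First, \Cref{clm:edgeCongestion} gives $\econg(\Pi_{J \to \tilde{J}},f)\le\gamma_{X,i}=O(\gamma_{\mathit{ExpPath}}\gamma_{\mathit{del}}\log(m)/(\psi p_{X,i}))$ for every $f\in E(\tilde{J}_{X,i})$ (this uses \Cref{cor:sparsifierIsExpander}, so $\tilde{J}_{X,i}$ really is a $\psi/4$-expander). Second, since the edges of $\tilde{J}_{X,i}$ are obtained by sampling each edge of $J_i[X]$ independently with probability $p_{X,i}$ (\Cref{lne:edgesAreSampledIid}), $|F_v^{X,i}|$ is a sum of independent indicators with mean $p_{X,i}\cdot n_{v,X,i}$, where $n_{v,X,i}\defeq|\{f\in E(J_i[X]):v^{H'}\in\Pi_{J \to H'}(f)\}|$. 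By the Chernoff bound \Cref{thm:chernoffBound}, with probability $1-n^{-\Omega(c)}$ we have $|F_v^{X,i}|\le 2p_{X,i}n_{v,X,i}+O(c\log n)$; a union bound over all $v^{H'}$, all $O(\log n)$ levels $i$, and all components $X$ (together with the failure events of \Cref{alg:sparsify} on each $J_i[X]$) makes this hold simultaneously with probability $1-n^{-2c}$.

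Multiplying the two estimates, each piece contributes
\[ \sum_{f\in F_v^{X,i}}\econg(\Pi_{J \to \tilde{J}},f)\;\le\;\gamma_{X,i}\,|F_v^{X,i}|\;\le\;O\!\left(\gamma_{\mathit{ExpPath}}\gamma_{\mathit{del}}\tfrac{\log m}{\psi}\right)\Big(2\,n_{v,X,i}+\tfrac{O(c\log n)}{p_{X,i}}\Big), \]
and $1/p_{X,i}=\max\{\psi\Delta_{\min}(J_i[X])/(96c\log m),1\}$ turns the additive term into $O(\Delta_{\min}(J_i[X])+\log n)\le O(\Delta_{\max}(J)+\log n)$. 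Summing over all $(i,X)$, the ``sampling'' part telescopes: since the $J_i[X]$ partition $E(J)$ and $\Pi_{J \to H'}$ is defined on all of $E(J)$, $\sum_{i,X}n_{v,X,i}=\vcong(\Pi_{J \to H'},v^{H'})\le\vcong(\Pi_{J \to H'})$. The residual additive terms I would absorb into $\gamma_c\cdot\Delta_{\max}(J)$, using that only levels $i$ and components $X$ with $n_{v,X,i}\ge 1$ or $p_{X,i}=1$ contribute, that near-uniform degree of $\tilde{J}_{X,i}$ keeps $\Delta_{\min}(J_i[X])$ comparable to the degrees of $J_i[X]$, and that the small-overlap components can be grouped so their total cost stays $O(\gamma_{\mathit{ExpPath}}\gamma_{\mathit{del}}\,\mathrm{polylog}(m))\cdot(\vcong(\Pi_{J \to H'})+\Delta_{\max}(J))$. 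Choosing $\gamma_c=\exp(O(\sqrt{\log m}\log\log m))$, which dominates $\gamma_{\mathit{ExpPath}}\gamma_{\mathit{del}}\cdot\mathrm{polylog}(m)$ since $\gamma_{\mathit{ExpPath}}=(\log m/\psi)^{O(\sqrt{\log m})}$ and $\psi=\Omega(1/\log^3 m)$, then yields the claimed inequality.

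The step I expect to be the main obstacle is exactly this last accounting of the additive Chernoff contributions. The crude bound $\sum_{f\in F_v}\econg(\Pi_{J \to \tilde{J}},f)\le|F_v|\max_{i,X}\gamma_{X,i}$ only gives the \emph{product} $\vcong(\Pi_{J \to H'})\cdot\Delta_{\max}(J)$, which is far too weak; one has to genuinely exploit the independent sampling to replace $|F_v^{X,i}|$ by roughly $p_{X,i}n_{v,X,i}$ so that the factor $\gamma_{X,i}\approx 1/p_{X,i}$ cancels and produces the $\vcong(\Pi_{J \to H'})$ summand, and then show the leftover additive pieces — including the unsampled $p_{X,i}=1$ components — do not accumulate past $\gamma_c\Delta_{\max}(J)$.
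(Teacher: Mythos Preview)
Your setup is exactly the paper's: fix $v\in V(H')$, let $E_v=\{e\in E(J):v\in\Pi_{J\to H'}(e)\}$, and bound
\[
\vcong(\Pi_{J\to H'}\circ\Pi_{J\to\tilde J},v)\;\le\;\sum_{f\in E_v\cap E(\tilde J)}\econg(\Pi_{J\to\tilde J},f)\;\le\;\sum_{f\in E_v\cap E(\tilde J)}\gamma_{X_f,i_f},
\]
using \Cref{clm:edgeCongestion}. The gap is precisely where you place it, and your proposed fix does not close it. If you apply Chernoff to $|F_v^{X,i}|$ separately for each pair $(i,X)$, each application leaves an additive $O(\log n)$ which, after multiplying by $\gamma_{X,i}=O(\gamma_{\mathit{ExpPath}}\log m/(\psi p_{X,i}))=O(\gamma_{\mathit{ExpPath}}\,\Delta_{\max}(J)/\psi)$, contributes a term of order $\Delta_{\max}(J)$. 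The number of pairs $(i,X)$ with $n_{v,X,i}\ge 1$ is in general only bounded by $|E_v|\le\vcong(\Pi_{J\to H'})$; the near-uniform-degree or overlap-grouping arguments you sketch do not reduce this count. So the residual additive pieces really can sum to $\Theta\big(\vcong(\Pi_{J\to H'})\cdot\Delta_{\max}(J)\big)$, and you are back to the product you wanted to avoid.

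The paper sidesteps this entirely by applying a \emph{single} Chernoff bound to the weighted sum. For $e\in E_v$ in component $(i,X)$ set $Y_e=\gamma_{X,i}\cdot\mathbf{1}[e\in\tilde J]$; these are independent (edges are sampled independently in \Cref{lne:edgesAreSampledIid}) and satisfy $\sum_{e\in E_v}Y_e\ge\vcong(\Pi_{J\to H'}\circ\Pi_{J\to\tilde J},v)$. The crucial observation is that $\E[Y_e]=p_{X,i}\gamma_{X,i}=:\mu_{\text{edge}}$ is the \emph{same} constant $O(\gamma_{\mathit{ExpPath}}\gamma_{\mathit{del}}\log m/\psi)$ for every edge, because $\gamma_{X,i}\propto 1/p_{X,i}$, while each $Y_e\in[0,W]$ with $W=O(\gamma_{\mathit{ExpPath}}\,\Delta_{\max}(J))$ uniformly. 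One application of \Cref{thm:chernoffBound} then yields $\sum_{e\in E_v}Y_e\le 2\mu_{\text{edge}}\,|E_v|+O(W\log n)$ with probability $1-n^{-\Omega(C)}$, i.e.\ a single additive $O(\Delta_{\max}(J))$ term rather than one per component. This gives the additive bound $\gamma_c\big(\vcong(\Pi_{J\to H'})+\Delta_{\max}(J)\big)$ directly. The fix is thus not a grouping of your per-component additive errors but a change in where the randomness is aggregated: bound the weighted sum, not the counts.
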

\begin{proof}
Let us fix any vertex $v \in V(H')$. We define $E_{v} = \{ e \in E(J) \;|\; v \in \Pi_{J \to H'}(e) \}$ to be the edges in $J$ whose embedding path contains $v$. By definition of vertex congestion for embeddings, $|E_v| \leq \vcong(\Pi_{J \to H'})$.

Next, for each edge $e \in E_v$, let $e$ be in $J_i$ after the decomposition in \Cref{lne:decompFirstLine} in the component $X$, we define the random variable
\[
    Y_{e} = \begin{cases} 
        \gamma_{X,i} & \text{if } e \in E(\tilde{J})\\
        0 & \text{otherwise}
    \end{cases}
\]
Note that the random variables $Y_e$ are independent as edges are sampled independently at random into $\tilde{J}$ in \Cref{lne:edgesAreSampledIid}. Further, by \Cref{clm:edgeCongestion}, we have for each edge $e$, $\econg(\Pi_{J \rightarrow \tilde{J}}|_{ E(J_i[X])}, e) \leq Y_e$ and thus $\vcong(\Pi_{J \to H'} \circ \Pi_{J \to \tilde{J}}, v) \leq \sum_{e \in E_v} Y_{e}$.

We will bound this sum using a Chernoff bound. We first observe that every variable $Y_e \in [0,W]$ for $W = \gamma_{\mathit{var}}\gamma_{\mathit{ExpPath}} \Delta_{\max}(J)$ for some scalar $\gamma_{\mathit{var}}$ which follows from the definitions of $\gamma_{X,i}$ in \Cref{clm:edgeCongestion} and $p_{X,i}$ in \Cref{lne:pXiDefinition}.
Across all the edge congestion variables $Y_e$, we have a uniform bound $\mu_{\text{edge}} $ on the expectation given by $\mathbb{E}[Y_{e}] = p_{X,i} \cdot \gamma_{X,i} = \mu_{\text{edge}}$.
Therefore $\mathbb{E}[\sum_{e \in E_v} Y_e] \leq \mu_{\text{edge}} \cdot \vcong(\Pi_{J \to H'})$. We can conclude by \Cref{thm:chernoffBound} that
\[
\mathbb{P}\left[\sum_{e \in E_v} Y_e \leq 24C \log(n) \cdot W + 2\mu_{\text{edge}} \vcong(\Pi_{J \to H'}) \right] > 1 - 2n^{-4C}. 
\]
We can now set $\gamma_c = 24C \log(n) \cdot \gamma_{\mathit{ExpPath}} \cdot \gamma_{\mathit{var}} + 2\mu_{\text{edge}} = (\gamma_{\mathit{ExpPath}})^{O(1)}$ which is consistent with our requirements on $\gamma_c$. Finally, it remains to take a simple union bound over all vertices $v \in V(H')$. 
\end{proof}

\begin{claim}\label{clm:lengthsEmbeddingJ}
$\length(\Pi_{J \to H'} \circ \Pi_{J \to\tilde{J}}) \leq \gamma_l \cdot \length(\Pi_{J \to H'})$. 
\end{claim}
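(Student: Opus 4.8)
The bound is immediate once we observe that the embedding $\Pi_{J\to\tilde J}$ produced by \Cref{alg:sparsify} has length at most $\gamma_{\mathit{ExpPath}}$, and that composing with $\Pi_{J\to H'}$ multiplies lengths. So the plan is: (1) bound $\length(\Pi_{J\to\tilde J})$; (2) bound the composition edge-by-edge; (3) absorb $\gamma_{\mathit{ExpPath}}$ into $\gamma_l$.

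\textbf{Step 1: bounding $\length(\Pi_{J\to\tilde J})$.} Fix an edge $e\in E(J)$; say $e$ lands in the component $X$ of $J_i$ under $\textsc{Decompose}$. The algorithm sets $\Pi_{J\to\tilde J}(e)$ in exactly one of two places. If $e$ is sampled into $\tilde J_{X,i}$ (\Cref{lne:edgesAreSampledIid}), then $\Pi_{J\to\tilde J}(e)=e$ itself (\Cref{lne:embedByItselfIfSampled}), a single edge, which lies in $\tilde J_{X,i}\subseteq\tilde J$. Otherwise $\Pi_{J\to\tilde J}(e)$ is assigned the output of $\mathcal{DS}_{\mathit{ExpPath}}.\textsc{GetPath}(u,v)$ in \Cref{lne:computeMCFlowGetPath1} for the endpoints $u,v$ of $e$. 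By \Cref{thm:apspDataStructure}, this is a path of at most $\gamma_{\mathit{ExpPath}}$ edges lying in $\tilde J_{APSP}[V\setminus\hat V]$, which is a subgraph of the copy $\tilde J_{X,i}\subseteq\tilde J$ (removing a congested edge from $\tilde J_{APSP}$ in \Cref{lne:removeCOngestedEdge1} never removes it from $\tilde J$, nor does it invalidate a path that was already recorded). The proof of \Cref{clm:edgeCongestion} already establishes that the (outer and inner) while-loops terminate with every edge of $J$ assigned a nonempty embedding path, so this case analysis is exhaustive. Hence $\length(\Pi_{J\to\tilde J})\le\gamma_{\mathit{ExpPath}}$, and every embedding path is a genuine path in $\tilde J$.

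\textbf{Step 2: composing.} For $e\in E(J)$, the path $[\Pi_{J\to H'}\circ\Pi_{J\to\tilde J}](e)$ is obtained by replacing each edge $f$ appearing on $\Pi_{J\to\tilde J}(e)$ by the path $\Pi_{J\to H'}(f)$ and concatenating. Each such $f$ is an edge of $\tilde J\subseteq J$, so $\lvert\Pi_{J\to H'}(f)\rvert\le\length(\Pi_{J\to H'})$. Therefore
\[
\bigl\lvert[\Pi_{J\to H'}\circ\Pi_{J\to\tilde J}](e)\bigr\rvert\;\le\;\lvert\Pi_{J\to\tilde J}(e)\rvert\cdot\length(\Pi_{J\to H'})\;\le\;\gamma_{\mathit{ExpPath}}\cdot\length(\Pi_{J\to H'}).
\]
Taking the maximum over $e$ gives $\length(\Pi_{J\to H'}\circ\Pi_{J\to\tilde J})\le\gamma_{\mathit{ExpPath}}\cdot\length(\Pi_{J\to H'})$.

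\textbf{Step 3: parameters.} It remains to note that $\gamma_{\mathit{ExpPath}}\le\gamma_l$: since the data structures are initialized with $\phi=\psi/4=\Omega(1/\log^3 m)$ (\Cref{lne:initAPSP}), \Cref{thm:apspDataStructure} gives $\gamma_{\mathit{ExpPath}}=(\log m/\phi)^{O(\sqrt{\log m})}=\exp(O(\sqrt{\log m}\log\log m))$, which is within the budget $\gamma_l=\exp(O(\sqrt{\log m}\cdot\log\log m))$ fixed in \Cref{thm:spanner}. This yields the claimed $\length(\Pi_{J\to H'}\circ\Pi_{J\to\tilde J})\le\gamma_l\cdot\length(\Pi_{J\to H'})$.

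\textbf{Main obstacle.} There is essentially no analytic difficulty here; the only points requiring care are bookkeeping ones, namely confirming that the recorded embedding paths genuinely lie in $\tilde J$ (not merely in the pruned working copy $\tilde J_{APSP}$) and that the while-loops of \Cref{alg:sparsify} terminate so that $\Pi_{J\to\tilde J}$ is total — both of which are handled within the setup for \Cref{cor:sparsifierIsExpander} and \Cref{clm:edgeCongestion}.
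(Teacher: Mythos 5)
Your proposal is correct and takes essentially the same approach as the paper: both arguments observe that $\Pi_{J\to\tilde J}(e)$ is either $e$ itself (if sampled) or a $\gamma_{\mathit{ExpPath}}$-length path from $\mathcal{DS}_{\mathit{ExpPath}}.\textsc{GetPath}$, and then set $\gamma_l = \gamma_{\mathit{ExpPath}}$ (the paper equates them directly rather than bounding). Your extra bookkeeping about paths surviving the pruning of $\tilde J_{APSP}$ and loop termination is sound but not needed to establish the length bound itself.
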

\begin{proof}
Consider any edge $e \in E(J)$. If $e$ is sampled into $\tilde{J}$, then $\Pi_{J \to\tilde{J}}(e) = e$, as can be seen in \Cref{lne:embedByItselfIfSampled}. Otherwise, $\Pi_{J \to\tilde{J}}(e)$ is of length at most $\gamma_{\mathit{ExpPath}}$ as can be seen from  \Cref{lne:computeMCFlowGetPath1} and \Cref{thm:apspDataStructure}. Setting $\gamma_l = \gamma_{\mathit{ExpPath}}$ thus ensures our claim.
\end{proof}

Combining \Cref{clm:staticEmbeddingCongestion} and \Cref{clm:lengthsEmbeddingJ}, we have established the properties claimed in \Cref{lma:staticEmbed}. The success probability follows by taking a straight-forward union bound over the events used in the analysis above and the success of \Cref{thm:expanderStatement}. The run-time of the algorithm can be seen from inspecting \Cref{alg:sparsify}, \Cref{thm:expanderStatement}, the fact that the while-loop in \Cref{lne:outerWhileStaticEmbed} runs at most $O(\log(n))$ times for each iteration of the outer foreach-loop (established in the proof of \Cref{clm:edgeCongestion}) and finally the run-time guarantees on the data structure in \Cref{thm:apspDataStructure}.

\section{Data Structure Chain}
\label{sec:jtree}

The goal of \cref{sec:jtree,sec:routing,sec:rebuilding} is to build a data structure to dynamically maintain $m^{o(1)}$-approximate undirected minimum-ratio cycles under changing costs and lengths, i.e. for gradients $\bg \in \R^E$ and lengths $\bell \in \R_{>0}^E$ return a (compactly represented) cycle $\bDelta$ satisfying $\mB^\top\bDelta=0$ and
\begin{align} \frac{\l \bg, \bDelta \r}{\|\mL\bDelta\|_1} \le m^{-o(1)} \min_{\mB^\top\bf=0} \frac{\l \bg, \bf \r}{\|\mL\bf\|_1}. \label{eq:mmc2} \end{align}
Our data structure does not work against fully adaptive adversaries. However, it works for updates coming from the IPM. We capture this notion with the following definition.

\begin{restatable}[Hidden Stable-Flow Chasing Updates]{definition}{defHiddenStableFlowChasing}
  \label{def:hiddenStableFlowChasing}
  Consider a dynamic graph $G^{(t)}$ undergoing batches of updates $U^{(1)}, \dots, U^{(t)}, \dots$ consisting of edge insertions/deletions and vertex splits.
  We say the sequences $\bg^{(t)}, \bell^{(t)},$ and $U^{(t)}$ satisfy the \emph{hidden stable-flow chasing} property if there are hidden dynamic circulations $\bc^{(t)}$ and hidden dynamic upper bounds $\bw^{(t)}$ such that the following holds at all stages $t$:
  \begin{enumerate}
  \item $\bc^{(t)}$ is a circulation: $\mB_{G^{(t)}}^\top \bc^{(t)} = 0.$ \label{item:circulation}
  \item
  $\bw^{(t)}$ upper bounds the length of $\bc^{(t)}$: $|\bell^{(t)}_e\bc^{(t)}_e| \le \bw^{(t)}_e$ for all $e \in E(G^{(t)})$.
  \label{item:width}
  
  \item
  For any edge $e$ in the current graph $G^{(t)}$, and any stage $t' \leq t$, if the edge $e$ was already present in $G^{(t')}$, i.e. $e \in G^{(t)} \setminus \bigcup_{s=t' + 1}^{t} U^{(s)}$, then $\bw^{(t)}_e \le 2\bw^{(t')}_e$.
  
  \label{item:widthstable}
  \item \label{item:quasipoly} Each entry of $\bw^{(t)}$ and $\bell^{(t)}$ is quasipolynomially lower and upper-bounded:
    \[ \log \bw^{(t)}_e \in [-\log^{O(1)} m, \log^{O(1)} m] \text{   and   } \log \bell^{(t)}_e \in [-\log^{O(1)} m, \log^{O(1)} m] \forall e \in E(G^{(t)}). \]
  \end{enumerate}
\end{restatable}

Intuitively \cref{def:hiddenStableFlowChasing} says that even while $\bg^{(t)}$ and $\bell^{(t)}$ change, there is a witness circulation $\bc^{(t)}$ that is fairly stable.
More precisely, there is some upper bound $\bw^{(t)}$ on the coordinate-wise lengths of $\bc^{(t)}$ that increases by at most a factor of $2$, except on edges that are explicitly updated.
Interestingly, even though both $\bc^{(t)}$ and $\bw^{(t)}$ are hidden from the data structure, their existence is sufficient.

The IPM guarantees in \cref{sec:ipm} can be connected to \cref{def:hiddenStableFlowChasing} by setting $\bc^{(t)} = \bf^* - \bf^{(t)}$ and $\bw^{(t)} = 10 + |\bell(\bf^{(t)}) \circ \bc^{(t)}|$, where $\bf^{(t)}$ is the current flow maintained by our algorithm. The guarantees of \cref{def:hiddenStableFlowChasing} then hold by a combination of \cref{lemma:optCirculationQuality,lemma:stablebell,lemma:stablebg}. This is formalized in \cref{lemma:setuphidden} in \cref{sec:combine}.

Our main data structure dynamically maintains min-ratio cycles under hidden stable-flow chasing updates.
\begin{theorem}[Dynamic Min-Ratio Cycle with Hidden Stable-Flow Chasing Updates]
  \label{thm:MMCHiddenStableFlow}
  There is a data structure that on a dynamic graph $G^{(t)}$ maintains a collection of $s = O(\log n)^d$ spanning trees $T_1, T_2, \dots, T_s \subseteq G^{(t)}$ for $d = O(\log^{1/8}m)$, and supports the following operations:
  \begin{itemize}
  \item $\textsc{Update}(U^{(t)}, \bg^{(t)}, \bell^{(t)}):$ Update the gradients and lengths to $\bg^{(t)}$ and $\bell^{(t)}$.
  For the update to be supported, we require that $U^{(t)}$ contains only edge insertions/deletions and $\bg^{(t)}, \bell^{(t)}$ and $U^{(t)}$ satisfy the hidden stable-flow chasing property (\cref{def:hiddenStableFlowChasing}) with hidden circulation $\bc^{(t)}$ and upper bounds $\bw^{(t)}$, and for a parameter $\alpha$,
    \[ \frac{\langle \bg^{(t)}, \bc^{(t)}\rangle}{\|\bw^{(t)}\|_1} \le -\alpha. \]
  \item $\textsc{Query}()$: Return a tree $T_i$ for $i \in [s]$ and a cycle $\bDelta$ represented as $m^{o(1)}$ paths on $T_i$ (specified by their endpoints and the tree index) and $m^{o(1)}$ explicitly given off-tree edges such that for $\kappa = \exp(-O(\log^{7/8}m \cdot \log\log m))$,
    \[ \frac{\langle \bg^{(t)}, \bDelta\rangle}{\|\mL^{(t)}\bDelta\|_1} \le -\kappa\alpha. \]
  \end{itemize}
  Over $\tau$ stages the algorithm succeeds whp. with total runtime $m^{o(1)}(m+Q)$ for $Q = \sum_{t \in [\tau]} |U^{(t)}|$.
\end{theorem}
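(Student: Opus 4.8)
\textbf{Overall strategy.} The plan is to build the data structure of \cref{thm:MMCHiddenStableFlow} as a \emph{branching tree chain}: a recursive hierarchy of depth $d = O(\log^{1/8} m)$ where each level first reduces the vertex count by a factor $k$ (with $k^d \approx m$) via a collection of $B = O(\log n)$ low-stretch core graphs $G/F_1,\dots,G/F_B$, and then reduces the edge count back down by applying the dynamic spanner with embeddings from \cref{thm:spanner} to each core graph. Patching together one forest per level along a root-to-leaf path of the recursion yields a spanning tree of $G$; there are $s = B^d = m^{o(1)}$ such trees, maintained as link-cut trees (\cref{algo:LCT}). A \textsc{Query} returns the best fundamental-tree-cycle or fundamental-spanner-cycle found across the hierarchy; a \textsc{Update} forwards the edge/length/gradient change down the chain, triggering auxiliary forest-edge deletions (hence vertex splits in the core graphs) to keep stretch bounds valid, and the spanner absorbs these with $m^{o(1)}$ amortized recourse by \cref{thm:spanner}.

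\textbf{Key steps, in order.} First I would make precise the core graph construction: given weights $\bv$ and reduction factor $k$, compute a $k$-LSD forest $F$ with $\tilde O(m/k)$ components and length upper bounds $\wh\bell^F$ satisfying \cref{item:lsdWidth} and \cref{item:lsdAvg} of \cref{overview:core}, and define the lift/project operators $\bPi_F^{\pm 1}$ together with the core gradient $\wh\bg$ (so $\wh\bg^\top\bDelta = \bg^\top\bPi_F^{-1}\bDelta$) and lengths $\wh\bell^F$ (so $\|\wh\mL^F\bDelta\|_1 \ge \|\mL\bPi_F^{-1}\bDelta\|_1$). Using multiplicative weights over $k$ choices of $\bv$ and sampling $B$ of the resulting LSDs, establish \eqref{eq:polySizedWorks}, i.e.\ one sampled core graph preserves the witness width up to $\tilde O(1)$. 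Second, I would show the dynamic maintenance: an edge update to $G$ forces both endpoints to become roots in each $F_j$, deleting $O(1)$ forest edges and thus causing $O(1)$ vertex splits in $G/F_j$, with \cref{lemma:globalstretch}-type guarantees that no edge's routing length blows up. Third, compose with \cref{thm:spanner} on each $G/F_j$ (noting its vertex-split and decremental-embedding guarantees are exactly what we need), and recurse $d$ levels on the spanners. Fourth — the crucial correctness argument — prove the cycle-quality bound: via the hidden-stable-flow-chasing promise, the witness $\bc^{(t)}$ with width $\bw^{(t)}$ survives the chain with $\ell_1$-length blowup at most $m^{o(1)}\sum_{i=0}^d \|\bw^{(\prev_i^{(t)})}\|_1$ (this is \cref{lemma:dynamicchain}), combining the LSD average-stretch bound, the forced-stretch-$1$ trick on recently-changed edges, and the spanner dichotomy (\cref{fig:SpannerCyclesOrProjGood}: either a fundamental spanner cycle is nearly as good as $\wh\bc^{(t)}$, or re-routing $\wh\bc^{(t)}$ into the spanner preserves quality). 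Fifth, handle the failure mode where $\sum_i \|\bw^{(\prev_i^{(t)})}\|_1 \gg m^{o(1)}\|\bw^{(t)}\|_1$ by invoking the rebuilding game of \cref{overview:game}: since the $m^{o(1)}$ slack exceeds $2(d+1)$, some level has width $> 2\|\bw^{(t)}\|_1$, so a losing rebuild there halves that level's width; the strategy ``rebuild from the bottom, at most $\poly\log m$ losing rebuilds per level between winning rebuilds'' yields total rebuild cost $m^{o(1)}(m+T)$ by \cref{lemma:rebuildinggame}. Finally, aggregate the runtime: level-$i$ rebuild costs $m^{1+o(1)}/k^i$ and lasts $\approx m^{1-o(1)}/k^i$ updates (or fewer, bounded by the game), spanner maintenance at level $i$ costs $km^{o(1)}$ amortized per core-graph update, and the branching factor $(B\gamma_r)^d = m^{o(1)}$ with approximation $(\gamma_\ell)^d = \kappa^{-1} = \exp(O(\log^{7/8}m\log\log m))$ for the stated choices of $d,k$.

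\textbf{Main obstacle.} The hard part is \cref{lemma:dynamicchain} — proving that the hidden witness is preserved through $d$ recursive levels of core-graph-then-spanner with only the additive $\sum_i \|\bw^{(\prev_i^{(t)})}\|_1$ loss, \emph{against an adaptive adversary}. The subtlety is that the witness $\bc^{(t)} = \bf^* - \bf^{(t)}$ depends on past cycles we returned, so randomness of the LSDs is not independent of it; the fix is to never rely on a fixed flow but instead on the slowly-changing width $\bw^{(t)}$ (\cref{lemma:setuphidden}), force stretch $1$ on edges whose width grew, and crucially use ``monotonicity'': the core-graph length upper bounds are fixed for most edges, and the spanner's preimage sets $\Pi^{-1}(e)$ are decremental for most spanner edges $e$ (property \ref{item:lowReEmbed} of \cref{thm:spanner}), so an initial width upper bound for each core/spanner edge stays valid as long as that edge routes a decremental set. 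Threading this monotonicity through the recursion — and matching it with a rebuild schedule cheap enough that the shallow, expensive levels are rebuilt rarely — is the technical heart of the construction, and is what forces the sub-polynomial (rather than polylogarithmic) parameters.
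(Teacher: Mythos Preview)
Your proposal is correct and follows essentially the same approach as the paper: the paper modularizes the argument by first establishing \cref{thm:norebuild} (the branching tree-chain with the $\sum_i \|\bw^{(\prev_i^{(t)})}\|_1$ quality bound, proved via exactly the LSD/core-graph/spanner recursion and monotonicity reasoning you outline) and then wrapping it with the rebuilding-game strategy of \cref{lemma:rebuildinggame} to obtain \cref{thm:MMCHiddenStableFlow}, checking at each step whether the returned cycle meets the $-\kappa\alpha$ threshold and forcing a fix if not. Your description of the ``main obstacle'' --- threading the decremental-embedding monotonicity of \cref{thm:spanner} through the recursion so that width upper bounds at each level remain valid against the adaptive IPM adversary --- is precisely what the paper isolates as the hidden stable-flow chasing property at each level (\cref{lemma:hintedsparsecore}, item \ref{item:SCGHSFC}) and is indeed the technical heart.
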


To interpret \cref{thm:MMCHiddenStableFlow}, note that $\bDelta = \bc^{(t)}$ would be a valid output by the guarantees in \cref{def:hiddenStableFlowChasing}, i.e. $\norm{\mL^{(t)} \bDelta}_1 \le \norm{\bw^{(t)}}_1$ from \cref{item:width}. Thus the data structure guarantee can be interpreted as efficiently representing and returning a cycle whose quality is within a $m^{o(1)}$ factor of $\bc^{(t)}$. Eventually, we will add $\bDelta$ to our flow efficiently by using link-cut trees.

\cref{sec:jtree} focuses on introducing the general layout of the data structure, and is definition-heavy. \cref{sec:routing} explains how to plug in the circulations $\bc$ and upper bounds $\bw$ in our data structure, and shows a weaker version of \cref{thm:MMCHiddenStableFlow} in \cref{thm:norebuild}. We use the weaker \cref{thm:norebuild} to show the full cycle-finding result \cref{thm:MMCHiddenStableFlow} by defining a rebuilding game in \cref{sec:rebuilding}.

\subsection{Dynamic Low-Stretch Decompositions (LSD)}
\label{sec:lsd}
\begin{table}[!ht]
  \centering
  \begin{tabular}{|c|c|}
    \hline
    \textbf{Variable} & \textbf{Definition} \\
    \hline
    $\bell^{(t)}, \bg^{(t)}$ & Lengths and gradients on a dynamic graph $G^{(t)}$ \\
    \hline
    $\bc^{(t)}, \bw^{(t)}$ & Hidden circulation \& upper bounds with $|\bell^{(t)} \circ \bc^{(t)}| \le \bw^{(t)}$ (\cref{def:hiddenStableFlowChasing})  \\
    \hline
    $F$ & Rooted spanning forest of $G$ (\cref{def:spanningforest}). \\
    \hline
    $\bp(F[u, v])$ & Path vector from $u \to v$ in a forest $F$ \\
    \hline
    $\str^{F,\bell}_e$ & Stretch of edge $e$ with respect to spanning forest $F$ and lengths $\bell$ (\cref{def:stretchf}) \\
    \hline
    $\wstr_e$ & Stretch overestimates stable under edge deletions (\cref{lemma:globalstretch}) \\
    \hline
    $\mathcal{C}(G, F)$ & Core graph from a spanning forest $F$ (\cref{def:coregraph}) \\
    \hline
    $\hat{e}$ & Image of edge $e \in E(G)$ into the core graph $\mathcal{C}(G, F)$ \\
    \hline
    $\SS(G, F)$ & Sparsified core graph $\SS(G, F) \subseteq \mathcal{C}(G, F)$ (\cref{def:sparsecore}) \\
    \hline
    $\mathcal{G}_0,\dots,\mathcal{G}_d$ & $B$-branching tree chain (\cref{def:chain}) \\
    \hline
    $G_0,\dots,G_d$ & Tree chain (\cref{def:chain}) \\
    \hline
    $T^{G_0,\dots,G_d}$ & Tree in $G$ corresponding to tree chain $G_0,\dots,G_d$ (\cref{def:treesfromchain}) \\
    \hline
    $\mathcal{T}^G$ & Collection of $B^d$ trees on $G$ from $B$-branching tree chain (\cref{def:treesfromchain}) \\
    \hline
    $\prev^{(t)}_i$ & Previous rebuild times of branching tree chain (\cref{def:rebuildtimes}) \\
    \hline
  \end{tabular}
  \caption[Important definitions and notation to describe the data structure.]{Important definitions and notation to describe the data structure. In general a $(t)$ superscript is the corresponding object at time $t$ of a sequence of updates.}
  \label{tab:glossaryDataStructureChain}
\end{table}
In the following subsections we describe the components of the data structure we maintain to show \cref{thm:MMCHiddenStableFlow}.
At a high level, our data structure consists of $d$ levels, each of which has approximately a factor of $k = m^{1/d}$ fewer edges than the previous level.
The edge reduction is achieved in two parts.
First, we reduce the number of vertices to $\O(m/k)$ by maintaining a spanning forest $F$ of $G$ with $\O(m/k)$ connected components, and then recurse on $G/F$, the graph where each connected component of $F$ in $G$ is contracted to a single vertex.
While $G/F$ now has $\O(m/k)$ vertices, it still potentially has up to $m$ edges, so we need to employ the dynamic sparsification procedure in \cref{thm:spanner} to reduce the number of edges to $\widehat{O}(m/k)$.

We start by defining a rooted spanning forest and its induced stretch.
\begin{definition}[Rooted Spanning Forest]
  \label{def:spanningforest}
  A \emph{rooted spanning forest} of a graph $G = (V, E)$ is a forest $F$ on $V$ such that each connected component of $F$ has a unique distinguished vertex known as the \emph{root}. We denote the root of the connected component of a vertex $v \in V$ as $\root^F_v$.
\end{definition}
\begin{definition}[Stretches of $F$]
  \label{def:stretchf}
  Given a rooted spanning forest $F$ of a graph $G = (V, E)$ with lengths $\bell \in \R_{>0}^E$, the stretch of an edge $e = (u, v) \in E$ is given by
  \begin{align*}
    \str^{F,\bell}_e \defeq
    \begin{cases}
      1 + \left\langle \bell, |\bp(F[u,v])|\right\rangle/\bell_e &~\text{ if } \root^F_u = \root^F_v \\
      1 + \left\langle \bell, |\bp(F[u, \root^F_u])| + |\bp(F[v, \root^F_v])| \right\rangle/\bell_e &~\text{ if } \root^F_u \neq \root^F_v,
    \end{cases}
  \end{align*}
  where $\bp(F[\cdot,\cdot]),$ as defined in \cref{sec:prelim}, maps a path to its signed indicator vector.
\end{definition}
When $F$ is a spanning tree \cref{def:stretchf} coincides with the definition of stretch for a LSST.

The goal of the remainder of this section is to give an algorithm to maintain a \emph{Low Stretch Decomposition (LSD)} of a dynamic graph $G.$
As a spanning forest decomposes a graph into vertex disjoint connected subgraphs, a LSD consists of a spanning forest $F$ of low stretch.
The algorithm produces stretch upper bounds that hold throughout all operations, and the number of connected components of $F$ grows by amortized $\O(1)$ per update.
At a high level, for any edge insertion or deletion, the algorithm will force both endpoints to become roots of some component of $F$.
This way, any inserted edge will actually have stretch $1$ because both endpoints are roots.

\begin{lemma}[Dynamic Low Stretch Decomposition]
  \label{lemma:globalstretch}
  There is a deterministic algorithm with total runtime $\O(m)$ that on a graph $G = (V, E)$ with lengths $\bell \in \R^E_{>0}$, weights $\bv \in \R^E_{>0}$, and parameter $k,$ initializes a tree $T$ spanning $V$, and a rooted spanning forest $F \subseteq T$, a edge-disjoint partition $\cW$ of $F$ into $O(m/k)$ sub trees and stretch overestimates $\wstr_e$.
  The algorithm maintains $F$ decrementally against $\tau$ batches of updates to $G$, say $U^{(1)}, U^{(2)}, \dots, U^{(\tau)}$, such that $\wstr_e \defeq 1$ for any new edge $e$ added by either edge insertions or vertex splits, and:
  \begin{enumerate}
  \item $F$ has initially $O(m/k)$ connected components and $O(q \log^2 n)$ more after $t$ update batches of total encoding size $q \defeq \sum_{t=1}^{\tau} \Enc(U^{(i)})$ satisfying $q \le \O(m)$.  \label{item:cccount}
  \item $\str^{F,\bell}_e \le \wstr_e \le O(k \gamma_{LSST}\log^4 n)$ for all $e \in E$ at all times, including inserted edges $e$. \label{item:stretchbound}
  \item $\sum_{e \in E^{(0)}} \bv_e \wstr_e \le O(\|\bv\|_1 \gamma_{LSST}\log^2 n)$, where $E^{(0)}$ is the initial edge set of $G$. \label{item:avgstretchbound}
  \item
  Initially, $\cW$ contains $O(m/k)$ subtrees.
  For any piece $W \in \cW, W \subseteq V$, $\Abs{\partial W} \le 1$ and $\vol_G(W \setminus R) \le O(k\log^2n)$ at all times, where $R \supseteq \partial \cW$ is the set of roots in $F$.
  Here, $\partial W$ denotes the set of \emph{boundary vertices} that are in multiple partition pieces.
    \label{item:degbound}
  \end{enumerate}
\end{lemma}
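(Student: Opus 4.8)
\textbf{Proof plan for \cref{lemma:globalstretch}.}

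The plan is to reduce the dynamic LSD construction to a single static low-stretch spanning tree computation, followed by a careful recursive partitioning of that tree into low-volume pieces, and then a bookkeeping scheme that converts edge updates into root insertions. First I would invoke \cref{thm:an} on $G$ with lengths $\bell$ and weights $\bv$ to obtain a tree $T$ spanning $V$ with $\sum_e \bv_e \str^{T,\bell}_e \le O(\|\bv\|_1 \gamma_{LSST})$; this already gives the $O(1)$-average-stretch guarantee of \cref{item:avgstretchbound} up to the extra $\log^2 n$ slack we will accumulate. The forest $F$ will always be a subforest of this fixed $T$: all we ever do to $F$ is delete tree edges (to create new roots), which is what makes the construction decremental and explains why $F \subseteq T$ throughout.

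Next I would build the partition $\cW$ and the initial set of roots. The idea is a standard recursive ``top-tree''/heavy-path style decomposition of $T$: repeatedly cut $T$ at carefully chosen edges so that each resulting piece $W$ has $\vol_G(W) \le O(k \log^2 n)$ while the total number of pieces is $O(m/k)$, and each piece shares at most one vertex (its ``top'' boundary vertex) with the rest — this is the $|\partial W| \le 1$ condition in \cref{item:degbound}. One clean way: process $T$ bottom-up, maintaining subtree volumes, and whenever an accumulated subtree exceeds volume $k$ we carve it off as a piece and declare its root a boundary vertex; a charging argument shows only $O(m/k)$ such cuts happen, so $F$ (the union of pieces, i.e. $T$ with the cut edges removed) has $O(m/k)$ components initially, giving \cref{item:cccount}. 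For the stretch overestimates $\wstr_e$ of \cref{item:stretchbound}, I would set $\wstr_e$ to be $1 + (\text{sum of }\bell\text{ along the tree path from }u\text{ and }v\text{ up to their respective piece boundaries, plus an extra budget})$, deliberately overestimating by a $\log^2 n$ factor so that the bound remains valid even after future root insertions shorten the routing paths — crucially, $\wstr_e$ is defined once and never increased, which is what the downstream core-graph construction needs. The worst-case $O(k\gamma_{LSST}\log^4 n)$ bound on $\wstr_e$ follows because any piece has volume $O(k\log^2 n)$, hence path length within a piece is $O(k \log^2 n)$ times the max $\bell/\bell_e$ ratio, which is controlled by the LSST stretch bound $\gamma_{LSST}\log^2 n$.

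To handle updates I would maintain the invariant that both endpoints of every edge touched by an update are roots of $F$. When an edge $e=(u,v)$ is inserted (including the insertions encoding a vertex split), I cut the tree edges from $u$ and from $v$ up to their current roots — or rather, cut at the points needed to promote $u$ and $v$ to roots — and set $\wstr_e \defeq 1$; since both endpoints are roots the path in \cref{def:stretchf} is empty and $\str^{F,\bell}_e = 1$, consistent with the overestimate. When an edge is deleted, promoting its endpoints to roots likewise only removes forest edges. Each such promotion cuts $O(\log n)$ tree edges if $T$ is kept balanced via the recursive decomposition (each root-to-boundary path crosses $O(\log n)$ levels of the decomposition), creating $O(\log n)$ new components per update-endpoint, and one checks the pieces of $\cW$ only ever shrink, so the volume and single-boundary-vertex invariants in \cref{item:degbound} are preserved; amortizing $\Enc(U^{(i)})$ over these gives the $O(q\log^2 n)$ bound in \cref{item:cccount}. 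The total runtime is $\O(m)$ for the LSST plus $\O(1)$ amortized per unit of encoding for the cuts, all implementable with a link-cut tree or the static recursive structure.

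The main obstacle, and where the real care is needed, is \cref{item:degbound} together with maintaining valid fixed $\wstr_e$ overestimates simultaneously: we need the piece volumes to stay $O(k\log^2 n)$ forever even as vertex splits change degrees and as we keep cutting, while never revising $\wstr_e$ upward. Getting the decomposition's depth to be $O(\log n)$ (so each promotion is cheap) while also keeping piece volumes bounded requires the right recursive split rule — essentially a weight-balanced separator decomposition of $T$ — and then arguing that vertex splits, which only ever move edges off a high-degree vertex into a fresh vertex of no-larger degree (by the encoding convention in \cref{para:dynG}), cannot increase any $\vol_G(W)$. I expect this interaction between the static balanced decomposition, the decremental-only nature of $F$, and the one-time commitment to $\wstr_e$ to be the technically delicate heart of the proof; the stretch and runtime bounds are then comparatively routine consequences of \cref{thm:an} and standard amortization.
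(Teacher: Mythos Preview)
Your plan has the right skeleton (LSST, decremental forest, roots at updated endpoints) but is missing the two central technical devices the paper uses, and without them the argument does not go through.

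First, you never specify \emph{which} tree edge to delete when separating newly added roots, and this matters: deleting an arbitrary edge on the path between adjacent roots can blow up the stretch of many other edges whose tree paths cross it. The paper handles this with a fixed total ordering $\pi$ on $E(T)$ sorted by congestion $\Cong^{T,\bell}_e = \sum_{e'=(u',v'):\,e\in T[u',v']} 1/\bell_{e'}$, always deleting the minimum-congestion edge between adjacent roots; a short calculation (their \cref{lemma:validpi}) then shows the total forest stretch is at most twice the tree stretch regardless of the root set. Your proposal has no analogue of $\pi$.

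Second, and more seriously, your overestimate ``path to piece boundaries plus a $\log^2 n$ budget'' is not a valid fixed upper bound. Adding roots does \emph{not} simply shorten routing paths: once the endpoints of $e$ land in different components, the stretch switches to the sum of two root-paths (\cref{def:stretchf}), and the identity of each endpoint's root can change in ways not controlled by the initial piece structure. The paper builds a separate auxiliary rooted tree $T_H$ of height $O(\log^2 n)$ by layering balanced BSTs on top of a heavy-light decomposition of $T$ (\cref{lemma:HLDBranchFree}), and always inserts as roots the full $T_H$-ancestor set of any touched vertex. This guarantees (i) only $O(\log^2 n)$ new roots per update, and (ii) the root of any vertex $u$ is always one of its $O(\log^2 n)$ $T_H$-ancestors. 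The overestimate is then set to $\wstr_e = 2\sum_{i=0}^{O(\log^2 n)} \str^{F_T(B_i,\pi),\bell}_e$, summed over all depth-$i$ cutoff forests, which is what makes $\wstr_e$ a once-and-for-all bound. Your volume-based partition $\cW$ plays no role in this; it is a separate object used only for \cref{item:degbound}, and you are conflating the two structures when you say ``each root-to-boundary path crosses $O(\log n)$ levels of the decomposition''.

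Finally, your argument for the worst-case bound $\wstr_e \le O(k\gamma_{LSST}\log^4 n)$ via piece volume is incorrect: volume counts $G$-degrees, not $\bell$-weighted tree-path lengths. The paper obtains this bound by Markov: since $\sum_e \wstr_e = O(m\gamma_{LSST}\log^2 n)$, at most $O(m/(k\log^2 n))$ edges exceed the threshold, and both endpoints of each such edge are added to the initial root set so that its stretch drops to $1$.
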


Intuitively, the first property says that $F$ has $O(m/k)$ roots initially and each update $x$ adds $\O(\Enc(x))$ roots to it.
For example, each edge update adds $\O(1)$ roots to $F.$
This allows us to satisfy the second property, which is that the stretch of $e$ with respect to $F$ (\cref{def:stretchf}) is upper bounded by some global upper bound $\wstr_e$.
Note that $\wstr_e$ stays the same for any edge $e$ across the execution of the algorithm.
The third property says that these global upper bounds are still good on average with respect to the weights $\bv$ up to $\O(1)$ factors.
The final property is useful for interacting with our sparsifier in \cref{thm:spanner} whose runtime and congestion depend on the maximum degree of the input graphs.

We defer the proof of \cref{lemma:globalstretch} to \cref{app:globalstretch}.

\subsection{Worst-Case Average Stretch via Multiplicative Weights}
\label{sec:mwu}

By doing a multiplicative weights update procedure (MWU) on top of \cref{lemma:globalstretch}, we can build a distribution over partial spanning tree routings whose average stretch on every edge is $\O(1)$. This is very similar to MWUs done in works of \cite{R08,KLOS14} for building $\ell_\infty$ oblivious routings, and cut approximators~\cite{M10, S13}.

\begin{lemma}[MWU]
  \label{lemma:strMWU}
  There is a deterministic algorithm that on a graph $G = (V, E)$ with lengths $\bell$ and a positive integer $k$ computes $t$ spanning trees, rooted spanning forests, and stretch overestimates $\{(T_i, F_i \subseteq T_i, \wstr^i_e)\}_{i=1}^t$ (\cref{lemma:globalstretch}) for some $t = \O(k)$ such that
  \begin{align}
    \label{eq:strMWU}
    \sum_{i=1}^t \blambda_i \wstr^{i}_e \le O(\gamma_{LSST}\log^2 n) \forall e \in E,
  \end{align}
  where $\blambda \in \R_{>0}^{[t]}$ is the uniform distribution over the set $[t]$, i.e. $\blambda = \vec{1} / t.$
  
  The algorithm runs in $\O(mk)$-time.
\end{lemma}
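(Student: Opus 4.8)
The plan is to run a standard multiplicative weights update (MWU) scheme on top of the dynamic low-stretch decomposition of \cref{lemma:globalstretch}, treating edges as "experts" whose losses are their stretches. I will maintain a weight vector $\bv^{(i)} \in \R^E_{>0}$, initialized to $\bv^{(1)} = \vec{1}$. In round $i$, I feed the current weights $\bv^{(i)}$ into the initialization of \cref{lemma:globalstretch} (with the same lengths $\bell$ and reduction parameter $k$, and with no subsequent update batches, so only the static guarantees \cref{item:stretchbound,item:avgstretchbound} are used), obtaining a tree $T_i$, rooted spanning forest $F_i \subseteq T_i$, and stretch overestimates $\wstr^i_e$. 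These satisfy $\wstr^i_e \le O(k\gamma_{LSST}\log^4 n) =: \rho$ for every $e$ (the per-coordinate width bound) and $\sum_{e \in E} \bv^{(i)}_e \wstr^i_e \le O(\|\bv^{(i)}\|_1 \gamma_{LSST}\log^2 n) =: \sigma \|\bv^{(i)}\|_1$ (the weighted-average bound). I then perform the usual multiplicative update $\bv^{(i+1)}_e \assign \bv^{(i)}_e \cdot \exp(\varepsilon \wstr^i_e / \rho)$ for a suitable step size $\varepsilon = \Theta(1)$, and repeat for $t = \Theta(\rho \log n / \sigma) \cdot \poly(1)$ rounds — concretely $t = \O(k)$ since $\rho/\sigma = \O(1)$ after absorbing $\gamma_{LSST}$ and polylog factors into $\O(\cdot)$.

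The key steps, in order, are: (1) set up the weights and invoke \cref{lemma:globalstretch} once per round, recording that each call runs in $\O(m)$ time, so $t = \O(k)$ rounds cost $\O(mk)$ total, matching the claimed runtime; (2) invoke the standard MWU regret/potential argument: tracking the potential $\Phi_i = \|\bv^{(i)}\|_1 = \sum_e \bv^{(i)}_e$, one shows $\Phi_{i+1} \le \Phi_i \exp(\varepsilon \sigma + O(\varepsilon^2))$ using the bound $\sum_e \bv^{(i)}_e \wstr^i_e \le \sigma \Phi_i$ together with $\wstr^i_e/\rho \le 1$ and $e^x \le 1 + x + x^2$ for $x \in [0,1]$; hence after $t$ rounds $\Phi_{t+1} \le n \exp(t(\varepsilon\sigma + O(\varepsilon^2)))$; (3) lower-bound $\Phi_{t+1}$ coordinate-wise by $\bv^{(t+1)}_e = \exp\!\big((\varepsilon/\rho)\sum_{i=1}^t \wstr^i_e\big)$, so for each fixed $e$, $(\varepsilon/\rho)\sum_{i=1}^t \wstr^i_e \le \log n + t(\varepsilon\sigma + O(\varepsilon^2))$; (4) rearrange and divide by $t$: with $\blambda = \vec 1/t$, $\sum_{i=1}^t \blambda_i \wstr^i_e \le \frac{\rho\log n}{\varepsilon t} + \rho\sigma/\varepsilon \cdot \varepsilon + O(\varepsilon\rho)$; choosing $t \ge \rho\log n/(\varepsilon \cdot O(\gamma_{LSST}\log^2 n))$ — which is $\O(k)$ — and $\varepsilon$ a small enough constly constant makes every term $O(\gamma_{LSST}\log^2 n)$, giving exactly \eqref{eq:strMWU}.

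The main technical point to get right — and the only place any care is needed — is the normalization in the MWU step: the losses $\wstr^i_e$ live in $[1,\rho]$ with $\rho = O(k\gamma_{LSST}\log^4 n)$ potentially large, so one must scale by $1/\rho$ inside the exponential to keep the step in the regime where $e^x \le 1+x+x^2$ applies, and this is precisely why the number of rounds scales with $\rho$ (hence with $k$), which is affordable since the target accuracy $\sigma = O(\gamma_{LSST}\log^2 n)$ and $t = O(\rho/\sigma) \cdot O(\log n / \varepsilon) = \O(k)$. No obstacle beyond this bookkeeping arises: \cref{lemma:globalstretch} is a black box giving both the uniform width bound and the weighted-average bound on each call, which are exactly the two ingredients the MWU analysis consumes. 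I would state the result with $\blambda$ being the uniform distribution as in the statement, and remark that the proof in fact produces a convex combination that certifies $\sum_i \blambda_i \wstr^i_e = \O(1)$ simultaneously for all edges $e$, which is the property used downstream to bound the recursion depth of the branching tree chain.
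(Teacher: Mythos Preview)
Your approach is exactly the paper's: feed weights $\bv^{(i)}$ into \cref{lemma:globalstretch}, multiplicatively update, and run the standard MWU potential argument. One arithmetic slip to fix: in step~(2) you bound the quadratic term by $(\varepsilon\wstr^i_e/\rho)^2 \le \varepsilon^2$, which produces the stray $O(\varepsilon\rho)$ term in step~(4); for constant $\varepsilon$ this term is $\O(k)$, not $O(\sigma)$, and shrinking $\varepsilon$ to kill it would force $t = \O(k^2)$. The fix is to use $e^x \le 1+2x$ on $[0,1]$ (equivalently $x^2 \le x$), giving $\Phi_{i+1} \le \Phi_i(1 + 2\varepsilon\sigma/\rho)$ and hence $\tfrac{1}{t}\sum_i \wstr^i_e \le \tfrac{\rho\log m}{\varepsilon t} + 2\sigma$, after which $t = \Theta(\rho\log m/\sigma) = \O(k)$ works. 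The paper does precisely this, taking the step size to be $1/t$ with $t = 10\rho$.
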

The proof is standard and deferred to \cref{app:mwu}.

If we sample a single tree/index from the distribution $\blambda$, then any fixed flow will be stretched by $O(\gamma_{LSST}\log^2 n)$ on average.
Hence any fixed flow will be stretched by $O(\gamma_{LSST}\log^2 n)$ by at least one of $O(\log n)$ trees sampled from $\blambda$ with high probability.
We will leverage this fact to analyze how the witness circulation $\bc^{(t)}$ in \cref{def:hiddenStableFlowChasing} and \cref{thm:MMCHiddenStableFlow} is stretched by a random forest.

\subsection{Sparsified Core Graphs and Path Embeddings}
\label{sec:core}

Given a rooted spanning forest $F$, we will recursively process the graph $G/F$ where each connected component of $F$ is contracted to a single vertex represented by the root. We call this the \emph{core graph}, and define the lengths and gradients on it as follows. Below, we should think of $G$ as the result of edge insertions/deletions to an earlier graph $G^{(0)}$, so $\wstr_e = 1$ for edge inserted to get from $G^{(0)}$ to $G$, as enforced in \cref{lemma:globalstretch}.

\begin{definition}[Core graph]
  \label{def:coregraph}
  Consider a tree $T$ and a rooted spanning forest $E(F) \subseteq E(T)$ on a graph $G$ equipped with stretch overestimates $\wstr_e$ satisfying the guarantees of \cref{lemma:globalstretch}. We define the \emph{core graph} $\mathcal{C}(G, F)$ as a graph with the same edge and vertex set as $G/F$. For $e = (u, v) \in E(G)$ with image $\hat{e} \in E(G/F)$ we define its length as $\bell^{\mathcal{C}(G,F)}_{\hat{e}} \defeq \wstr_e\bell_e$ and gradient as $\bg^{\mathcal{C}(G,F)}_{\hat{e}} \defeq \bg_e + \langle \bg, \bp(T[v, u]) \rangle$.
\end{definition}
\begin{rem}
  \label{rem:coregraph}
  In our usage, we maintain $\cC(G, F)$ where $G$ is a dynamic graph and $F$ is a decremental rooted spanning forest.
  In particular, $T$, $F$, and $\wstr$ are initialized and maintained via \cref{lemma:globalstretch}.
  As $G$ undergoes dynamic updates such as edge deletions or vertex splits which adds new vertices to $G$, $T$ won't be a spanning tree of $G$ anymore.
  \cref{def:coregraph} responds to such situation by allowing $T$ not being a spanning tree nor a subgraph of $G.$
  
  Thus, for $e = (u, v) \in E(G)$, $u$ and $v$ may not be connected in $T$.
  In this case, the value of $\bg^{\mathcal{C}(G,F)}_{\hat{e}}$ is simply $\bg_e$.
  Also, the support of the gradient vector $\bg$ is $E(G) \cup E(T).$
  This corresponds to the case when some edge in $T$ is removed from $G$, we keep the gradient on that edge as it is.
\end{rem}

Note that the length and gradient of the image of an edge $e \in E(G)$ in \cref{def:coregraph} do not change under edge deletions to $F$, because they are defined with respect to the tree $T$. This important property will be useful later in efficiently maintaining a sparsifier of the core graph, which we require to reduce the number of edges in the sparsified core graph to $\widehat{O}(m/k)$.

\begin{definition}[Sparsified core graph]
  \label{def:sparsecore}
  Given a graph $G$, forest $F$, and parameter $k$, define a $(\gamma_s, \gamma_c, \gamma_{\ell})$-\emph{sparsified core graph with embedding} as a subgraph $\SS(G, F) \subseteq \mathcal{C}(G, F)$ and embedding $\Pi_{\mathcal{C}(G, F) \to \SS(G, F)}$ satisfying
  \begin{enumerate}
  \item For any $\hat{e} \in E(\mathcal{C}(G, F))$, all edges $\hat{e}' \in \Pi_{\mathcal{C}(G, F) \to \SS(G, F)}(\hat{e})$ satisfy $\bell_{\hat{e}}^{\mathcal{C}(G,F)} \approx_2 \bell_{\hat{e}'}^{\mathcal{C}(G,F)}$. \label{item:samelen}
  \item $\length(\Pi_{\mathcal{C}(G, F) \to \SS(G, F)}) \le \gamma_l$ and $\econg(\Pi_{\mathcal{C}(G, F) \to \SS(G, F)}) \le k\gamma_c$.
  \item $\SS(G, F)$ has at most $m\gamma_s/k$ edges.
  \item The lengths and gradients of edges in $\SS(G, F)$ are the same as in $\mathcal{C}(G, F)$ (\cref{def:coregraph}).
  \end{enumerate}
\end{definition}
In \cref{sec:routing} we give a dynamic algorithm for maintaining a sparsified core graph of a graph $G$ undergoing edge insertions and deletions. We defer the formal statement to \cref{lemma:hintedsparsecore} where we not only maintain a sparsified core graph but also show that the witness circulation $\bc^{(t)}$ and upper bounds $\bw^{(t)}$ from \cref{def:hiddenStableFlowChasing} are preserved approximately.

\subsection{Full Data Structure Chain}
\label{sec:chain}

Our data structure has $d$ levels. The graphs at the $i$-th level have about $m/k^i$ edges, and each such graph branches into $O(\log n)$ graphs sampled from the distribution $\blambda$ from \cref{lemma:strMWU}.
\begin{definition}[Branching Tree-Chain]
  \label{def:chain}
  For a graph $G$, parameter $k$, and branching factor $B$, a $B$-\emph{branching tree-chain} consists of collections of graphs $\{\cG_i\}_{0 \le i \le d}$,  such that $\cG_0 \defeq \{G\}$, and we define $\cG_i$ inductively as follows,
  \begin{enumerate}
  \item For each $G_i \in \cG_i$,  $i < d,$  we have a collection of $B$ trees $\cT^{G_i} = \left\{T_1, T_2, \dots, T_B\right\}$ and a collection of $B$ forests $\cF^{G_i} = \left\{F_1, F_2, \dots, F_B \right\}$ such that $E(F_j) \subseteq E(T_j)$ satisfy the conditions of \cref{lemma:globalstretch}.
  \item For each $G_i \in \cG_i$, and $F \in \mathcal{F}^{G_i},$ 
  we maintain $(\gamma_s, \gamma_c, \gamma_l)$-sparsified core graphs and embeddings $\SS(G_i, F)$ and $\Pi_{\mathcal{C}(G_i, F) \to \SS(G_i, F)}$.
  \item We let $\cG_{i+1} \defeq \{\SS(G_i, F) : G_i \in \cG_i, F \in F^{G_i}\}$.
  \end{enumerate}
  Finally, for each $G_d \in \cG_d,$ we maintain a low-stretch tree $F$.

  We let a \emph{tree-chain} be a single sequence of graphs $G_0, G_1, \dots, G_d$ such that $G_{i+1}$ is the $(\gamma_s, \gamma_c, \gamma_l)$-sparsified core graph $\SS(G_i, F_i)$ with embedding $\Pi_{\mathcal{C}(G_i, F_i) \to \SS(G_i, F_i)}$ for some $F_i \in \mathcal{F}^{G_i}$ for $0 \le i < d$, and a low-stretch tree $F_d$ on $G_d$.
\end{definition}
In general, we will have $B = O(\log n)$ throughout, and we will omit $B$ when discussing branching tree-chains. Note that level $i$ of a branching tree-chain, i.e. the collection of graphs in $\mathcal{G}_i$, has at most $B^i = O(\log n)^i$ graphs for $B = O(\log n)$. A branching tree-chain can alternatively be viewed as a set of $O(\log n)^d$ tree-chains, each of which naturally corresponds to a spanning tree of the top level graph $G$.
\begin{definition}[Trees from Tree-Chains]
  \label{def:treesfromchain}
  Given a graph $G$ and tree-chain $G_0, G_1, \dots, G_d$ where $G_0 = G$, define the corresponding spanning tree $T^{G_0, G_1, \dots, G_d} \defeq \bigcup_{i=0}^d F_i$ of $G$ as the union of preimages of edges of $F_i$ in $G = G_0$.

  Define the set of trees corresponding to a branching tree-chain of graph $G$ as the union of $T^{G_0, G_1, \dots, G_d}$ over all tree-chains $G_0, G_1, \dots, G_d$ where $G_0 = G$: \[ \mathcal{T}^G \defeq \{ T^{G_0, G_1, \dots, G_d} : G_0, G_1, \dots, G_d \text{ s.t. } G_{i+1} = \SS(G_i, F_i) \forall 0 \le i < d \} \]
\end{definition}

We can dynamically maintain a branching tree-chain such that we rebuild $\mathcal{G}_{i+1}$ from $\mathcal{G}_i$ every approximately $m/k^i$ updates. Between rebuilds, the trees $\mathcal{T}^{G'}$ of graphs $G' \in \mathcal{G}_i$ stay the same, while the forests in $\mathcal{F}^{G'}$ are decremental as guaranteed in \cref{lemma:globalstretch}.
\begin{definition}[Previous Rebuild Times]
  \label{def:rebuildtimes}
  Given a dynamic graph $G^{(t)}$ with updates indexed by times $t = 0, 1, \dots$ and corresponding dynamic branching tree-chain (\cref{def:chain}), we say that nonnegative integers $\prev^{(t)}_0 \le \prev^{(t)}_1 \le \dots \le \prev^{(t)}_d = t$ are \emph{previous rebuild times} if $\prev^{(t)}_i$ was the most recent time at or before $t$ that $\mathcal{G}_i$ was rebuilt, i.e. for $G \in \mathcal{G}_i$ the set of trees $\mathcal{T}^G$ was reinitialized and sampled.
\end{definition}
We will assume that our algorithm rebuilds all $G_i \in \mathcal{G}_i$ at the same time: if we recompute a set of trees $\mathcal{T}^G$ for some $G_i \in \mathcal{G}_i$, then we also recompute the trees $\mathcal{T}^{G_i'}$ for all other $G_i' \in \mathcal{G}_i$. In the following \cref{sec:routing} we show \cref{thm:norebuild}, we gives a data structure whose guarantee is weaker than \cref{thm:MMCHiddenStableFlow}. Precisely, the quality of the cycle returned depends on the previous rebuild times. We later boost this to an algorithm for \cref{thm:MMCHiddenStableFlow} by solving a \emph{rebuilding game} in \cref{sec:rebuilding}.

\section{Routings and Cycle Quality Bounds}
\label{sec:routing}
The goal of this section is to explain how to route the witness circulations $\bc^{(t)}$ and length upper bounds $\bw^{(t)}$ through the branching tree-chain, and eventually recover an approximately optimal flow $\bDelta$. The main theorem we show in this section is the following.
\begin{theorem}
  \label{thm:norebuild}
  Let $G = (V, E)$ be a dynamic graph undergoing $\tau$ batches of updates $U^{(1)}, \dots, U^{(\tau)}$ containing \emph{only} edge insertions/deletions with edge gradient $\bg^{(t)}$ and length $\bell^{(t)}$ such that the update sequence satisfies the hidden stable-flow chasing property (\cref{def:hiddenStableFlowChasing}) with hidden dynamic circulation $\bc^{(t)}$ and width $\bw^{(t)}.$
  There is an algorithm on $G$ that maintains a $O(\log n)$-branching tree chain corresponding to $s = O(\log n)^d$ trees $T_1, T_2, \dots, T_s$ (\cref{def:treesfromchain}), and at stage $t$ outputs a circulation $\bDelta$ represented by $\exp(O(\log^{7/8}m\log\log m))$ off-tree edges and paths on some $T_i, i \in [s]$.
  
  The output circulation $\bDelta$ satisfies $\mB^\top \bDelta = 0$ and for some $\kappa = \exp(-O(\log^{7/8}m\log\log m))$
  \begin{align*}
      \frac{\l\bg^{(t)}, \bDelta\r}{\norm{\bell^{(t)} \circ \bDelta}_1} \le \kappa \frac{\l\bg^{(t)}, \bc^{(t)}\r}{\sum_{i=0}^d \|\bw^{(\prev_i^{(t)})}\|_1},
  \end{align*}
  where $\prev_i^{(t)}, i \in [d]$ are the previous rebuild times (\cref{def:rebuildtimes}) for the branching tree chain.
  
  The algorithm succeeds w.h.p. with total runtime $(m + Q)m^{o(1)}$ for $Q \defeq \sum_{t=1}^{\tau} \Abs{U^{(t)}} \le \poly(n)$.
  Also, levels $i, i+1, \dots, d$ of the branching tree chain can be rebuilt at any point in $m^{1+o(1)} / k^i$ time.
  
\end{theorem}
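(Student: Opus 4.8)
# Proof Proposal for Theorem~\ref{thm:norebuild}

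\textbf{Overall approach.} The plan is to reduce solving the min-ratio cycle problem on $G$ to solving it recursively on the branching tree chain, losing only an $m^{o(1)}$ factor per level. At each level, I would establish two ingredients: (i) a \emph{routing} ingredient showing that the hidden circulation $\bc^{(t)}$, when pushed through a core graph $\cC(G_i, F_i)$ or its sparsifier $\SS(G_i, F_i)$, still has good ratio (this is where I leverage that $\bw^{(t)}$ upper-bounds the coordinate lengths of $\bc^{(t)}$, and that at least one of the $B=O(\log n)$ sampled forests $F_j$ has average stretch $O(\gamma_{LSST}\log^2 n)$ on $\bw^{(\prev_i^{(t)})}$ with high probability via \cref{lemma:strMWU}); and (ii) a \emph{recovery} ingredient showing that any good cycle in a child graph lifts back to a comparably good cycle in the parent, and that the dichotomy from the spanner (either a fundamental spanner cycle $e \oplus \Pi_{\cC(G_i,F_i)\to\SS(G_i,F_i)}(e)$ is good, or re-routing $\bc^{(t)}$ through the spanner preserves quality) covers all cases. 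Composing these over $d$ levels gives the claimed $\kappa = \exp(-O(\log^{7/8} m \log\log m))$, since $k = m^{1/d}$, $d = O(\log^{1/8} m)$, and each of the $\gamma_s, \gamma_c, \gamma_l, \gamma_{LSST}$ factors is quasi-polynomial.

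\textbf{Key steps in order.} First, I would state and prove the single-level routing lemma: given the dynamic core graph construction (\cref{def:coregraph}) with lengths $\wstr_e \bell_e$, and the hidden circulation $\bc^{(t)}$, the image $\bPi_{F}(\bc^{(t)})$ in $\cC(G_i, F)$ has $\l \wh{\bg}, \bPi_F \bc^{(t)}\r = \l \bg, \bc^{(t)}\r$ (gradients are designed to be preserved) and $\|\wh{\mL} \bPi_F \bc^{(t)}\|_1 \le \|\wstr^F \circ \bw^{(t)}\|_1$. Here I crucially need the \emph{monotonicity} property: because the forest $F$ is decremental and we force $\wstr_e = 1$ on inserted edges (\cref{lemma:globalstretch}, and the width-stability \cref{item:widthstable} of \cref{def:hiddenStableFlowChasing}), the stretch overestimate $\wstr_e$ computed at rebuild time $\prev_i^{(t)}$ remains valid, so $\|\wstr^F \circ \bw^{(t)}\|_1 \lesssim \|\wstr^F \circ \bw^{(\prev_i^{(t)})}\|_1 + \|\bw^{(t)}\|_1$, and the MWU guarantee gives $\min_{j\in[B]} \|\wstr^{F_j} \circ \bw^{(\prev_i^{(t)})}\|_1 = O(\gamma_{LSST}\log^2 n)\|\bw^{(\prev_i^{(t)})}\|_1$ w.h.p. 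Second, I would prove the spanner step: invoking \cref{thm:spanner} to maintain $\SS(G_i, F)$ and $\Pi_{\cC(G_i,F)\to\SS(G_i,F)}$, establish the dichotomy — either some off-spanner edge $\hat e$ with its embedding path forms a cycle of ratio within $\gamma_l$ of $\bPi_F \bc^{(t)}$'s ratio (directly queryable), or the projection $\sum_{\hat e \notin \SS} (\bPi_F\bc^{(t)})_{\hat e} \cdot \bp(\hat e \oplus \Pi(\hat e))$ added to the restriction of $\bPi_F \bc^{(t)}$ to $\SS$ gives a circulation in $\SS(G_i, F)$ with ratio degraded by at most $\gamma_l$, and whose new width is controlled by $\econg \le k\gamma_c$ so the spanner's decremental-preimage guarantee (\cref{item:lowReEmbed}) keeps it a valid hidden width for the recursive call. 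Third, I would unwind the recursion: at the bottom level $d$, graphs have $m^{o(1)}$ edges and we maintain a single low-stretch tree, giving an $O(\gamma_{LSST}\log^2 n)$-approximate answer directly via a fundamental tree cycle. Fourth, lift the answer back up: a cycle represented on $T^{G_0,\ldots,G_d}$ unpacks via \cref{def:treesfromchain} into off-tree edges plus tree paths on the top-level $T_i$, and the ratio degrades by $\prod$ of per-level factors $= \exp(O(d \cdot \log^{3/4} m \log\log m)) = \exp(O(\log^{7/8} m \log\log m))$. Finally, the runtime: each level-$i$ rebuild costs the LSD/MWU time $\wt O(m_i k)$ plus spanner init $\wt O(m_i \gamma_l)$ where $m_i = m/k^{i-1}$; since level $i$ is rebuilt every $\approx m/k^i$ updates, amortized cost is $k m^{o(1)}$ per level per update, and summing over $d$ levels and using $k^d = m$ gives $(m+Q)m^{o(1)}$; the per-level rebuild bound $m^{1+o(1)}/k^i$ for rebuilding levels $i,\ldots,d$ follows since those levels together have $O((B\gamma_r)^{d-i}) = m^{o(1)}$ graphs each of size $O(m/k^i)$.

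\textbf{Main obstacle.} The hardest part will be step two — the spanner dichotomy combined with ensuring the \emph{recursive} hidden stable-flow chasing property is preserved. The subtlety is that when we re-route $\bc^{(t)}$ into $\SS(G_i,F)$, the new circulation's coordinate lengths must admit an upper bound $\bw'$ that is itself stable (increases by at most a factor $2$ on non-updated edges across stages) so that the recursive invocation of the whole machinery is legal. This requires that the set of edges routing into any fixed spanner edge is decremental except for $m^{o(1)}$ edges per update (exactly guarantee~\ref{item:lowReEmbed} of \cref{thm:spanner}), and that vertex splits in $G_i/F$ (induced by forest-edge deletions from root insertions) don't blow up either the width bound or the recourse. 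Tracking these interacting monotonicity invariants through $d$ levels — especially reconciling the "$+\O(m)$" additive slack in \eqref{eq:overviewLSDGoal1} with the telescoping $\sum_{i=0}^d \|\bw^{(\prev_i^{(t)})}\|_1$ in the theorem statement — is the delicate bookkeeping that the formal proof must carry out carefully, and is precisely why the weaker \cref{thm:norebuild} (with the $\prev_i^{(t)}$-dependent bound) is separated from the final \cref{thm:MMCHiddenStableFlow}.
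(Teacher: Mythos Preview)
Your proposal is correct and follows essentially the same approach as the paper. The paper organizes the argument slightly differently: rather than framing each level as a dichotomy (``either a spanner cycle is good or re-routing preserves quality''), it defines \emph{fundamental chain cycles} $a^G(e^G)$ for every non-tree edge $e^G$ (each edge has a level $\lvl_{e^G}$, and its fundamental chain cycle is the lift of $e \oplus \Pi_{\cC(G_i,F_i)\to\SS(G_i,F_i)}(e)$), proves a decomposition $\bc = \sum_{i=0}^d \sum_{e^G:\lvl_{e^G}=i} \bc^{G_i}_e \ba^G(e^G)$, and then uses a single averaging argument (\cref{lemma:goodenough}) to conclude that the best such cycle achieves ratio $\frac{1}{\O(k)} |\l\bg,\bc\r| / \sum_i \|\bw^{G_i}\|_1$. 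This is combined with \cref{lemma:dynamicchain}, which supplies exactly the bound $\|\bw^{(t),G_i}\|_1 \le \O(\gamma_l)^i \sum_{j\le i} \|\bw^{(\prev_j^{(t)})}\|_1$ that your step one targets. Your per-level dichotomy and the paper's global decomposition are equivalent formulations of the same idea; the paper's version has the advantage that the algorithm simply maintains $\l\bg,\ba^G(e^G)\r$ and the length overestimate $\wlen_{e^G}$ for every fundamental chain cycle and returns the maximizer, avoiding any explicit branching on the dichotomy. Your ``main obstacle'' (preserving the hidden stable-flow chasing property recursively) is exactly what the paper isolates as \cref{lemma:hintedsparsecore}, and your identification of the decremental-preimage guarantee of the spanner as the key ingredient is on target.
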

The final sentence about rebuilding levels $i, i+1, \dots, d$ allows us to force $\prev^{(t)}_i = \prev^{(t)}_{i+1} = \dots = \prev^{(t)}_d = t$. This is necessary because it is possible that $\|\bw^{(\prev^{(t)}_i)}\|_1$ is much larger than $\|\bw^{(t)}\|_1$ for some $0 \le i \le d$.
This could result in $\frac{\l \bg^{(t)}, \bc^{(t)} \r}{\sum_{i=0}^d \|\bw^{(\prev^{(t)}_i)}\|_1}$ being much more than $\frac{\l \bg^{(t)}, \bc^{(t)} \r}{\|\bw^{(t)}\|_1}$.
This is not sufficient to show \cref{def:hiddenStableFlowChasing}, which only guarantees that the latter quality is at most $-\alpha$, but does not assume a bound on the former.
We will resolve this issue in \cref{sec:rebuilding} by carefully using our ability to rebuild levels $i, i+1, \dots, d$ periodically whenever the cycle $\bDelta$ returned by \cref{thm:norebuild} is not good enough, and we deduce that $\|\bw^{(\prev^{(t)}_i)}\|_1$ is much larger than  $\|\bw^{(t)}\|_1$ for some $0 \le i \le d$.

\subsection{Passing Circulations and Length Upper Bounds Through a Tree-Chain}
\label{sec:passtreechain}
Towards proving \cref{thm:norebuild} we define how to pass the witness circulation $\bc$ and length upper bounds $\bw$ downwards in a tree-chain. It is convenient to define a \emph{valid pair} of $\bc,\bw$ with respect to a graph $G$ with lengths $\bell$. Essentially, this means that $\bc$ is indeed a circulation and $\bw$ are valid length upper bounds, i.e. items \ref{item:circulation} and \ref{item:width} of the hidden stable-flow chasing property \cref{def:hiddenStableFlowChasing}.
\begin{definition}[Valid pair]
  \label{def:validpair}
  For a graph $G = (V, E)$ with lengths $\bell \in \R^E_{>0}$, we say that $\bc, \bw \in \R^E$ are a \emph{valid pair} if $\bc$ is a circulation and $|\bell_e\bc_e| \le \bw_e$ for all $e \in E$.
\end{definition}

\subsubsection{Passing Circulations and Length Upper Bounds to the Core Graph}
We first describe how to pass $\bc, \bw$ from $G$ to a core graph $\cC(G, F)$ (\cref{def:coregraph}).
\begin{definition}[Passing $\bc, \bw$ to core graph]
  \label{def:passcore}
  Given a graph $G = (V, E)$ with a tree $T$, arooted spanning forest $E(F) \subseteq E(T)$, and a stretch overestimates $\wstr_e$ as in \cref{lemma:globalstretch}, circulation $\bc \in \R^E$ and length upper bounds $\bw \in \R^E_{>0}$, we define vectors $\bc^{\cC(G, F)} \in \R^{E(\cC(G, F))}$ and $\bw^{\cC(G, F)} \in \R^{E(\cC(G, F))}_{>0}$ as follows. For $\hat{e} \in E(\cC(G, F))$ with preimage $e \in E$, define $\bc^{\cC(G, F)}_{\hat{e}} \defeq \bc_e$ and $\bw^{\cC(G, F)}_{\hat{e}} \defeq \wstr_e\bw_e$.
\end{definition}
We verify that $\bc^{\cC(G, F)}$ is a circulation on $\cC(G, F)$ and that $\bw^{\cC(G, F)}$ are length upper bounds.
\begin{lemma}[Validity of \cref{def:passcore}]
  \label{lemma:passcore}
  Let $\bc,\bw$ be a valid pair (\cref{def:validpair}) on a graph $G$ with lengths $\bell$.
  As defined in \cref{def:passcore},
  $\bc^{\cC(G, F)}, \bw^{\cC(G, F)}$ are a valid pair on $\cC(G, F)$ with lengths $\bell^{\cC(G,F)}$ (\cref{def:coregraph}),
  and
  \[ \left\|\bw^{\cC(G, F)}\right\|_1 \le \sum_{e \in E(G)} \wstr_e \bw_e. \]
\end{lemma}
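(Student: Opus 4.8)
The plan is to verify the three assertions of \cref{lemma:passcore} essentially by unwinding \cref{def:passcore,def:coregraph} and using the structure of the contraction map $G \mapsto G/F$. Recall that, as noted after \cref{def:coregraph}, there is a bijection between $E(G)$ and $E(\cC(G,F))$; write $\hat e$ for the image of $e$. Under \cref{def:passcore} we simply transport $\bc$ edge-by-edge along this bijection, so any combinatorial property of $\bc$ that is insensitive to contracting forest edges should survive.

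First I would prove $\bc^{\cC(G,F)}$ is a circulation. Since $\bc$ is a circulation on $G$, $\mB_G^\top \bc = 0$, i.e. flow conservation holds at every vertex of $G$. Contracting a connected set of vertices of $F$ to a single vertex (the root) adds up the conservation constraints of the contracted vertices, so conservation still holds at each contracted super-vertex; formally $\mB_{G/F}^\top \bPi_F \bc = 0$ where $\bPi_F$ acts as the identity under the edge bijection, hence $\mB_{\cC(G,F)}^\top \bc^{\cC(G,F)} = 0$. (One should note that edges of $G$ whose endpoints both lie in one component of $F$ become self-loops in $G/F$; a self-loop contributes $0$ to the incidence matrix, which is consistent with conservation, so nothing breaks.) Alternatively one can cite that the linear map $\bPi_F$ sends circulations to circulations, which is exactly the statement below \cref{def:coregraph} that preimages of circuits are circuits and the map extends linearly.

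Next I would check the length upper bound. For each $e \in E(G)$ with image $\hat e$, \cref{def:coregraph} sets $\bell^{\cC(G,F)}_{\hat e} = \wstr_e \bell_e$, and \cref{def:passcore} sets $\bc^{\cC(G,F)}_{\hat e} = \bc_e$ and $\bw^{\cC(G,F)}_{\hat e} = \wstr_e \bw_e$. Therefore
\[
\left| \bell^{\cC(G,F)}_{\hat e}\, \bc^{\cC(G,F)}_{\hat e}\right| = \wstr_e \left| \bell_e \bc_e \right| \le \wstr_e \bw_e = \bw^{\cC(G,F)}_{\hat e},
\]
using $\wstr_e > 0$ and the validity of $(\bc,\bw)$ on $G$ (\cref{def:validpair}), namely $|\bell_e \bc_e| \le \bw_e$. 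This gives item \ref{item:width} for the core graph, so $\bc^{\cC(G,F)}, \bw^{\cC(G,F)}$ form a valid pair. Finally, the norm bound is immediate from the edge bijection: $\left\| \bw^{\cC(G,F)}\right\|_1 = \sum_{\hat e \in E(\cC(G,F))} \bw^{\cC(G,F)}_{\hat e} = \sum_{e \in E(G)} \wstr_e \bw_e$, which is the claimed inequality (in fact an equality here).

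I do not anticipate a genuine obstacle; the only point requiring a little care is the bookkeeping around self-loops and around the remark (\cref{rem:coregraph}) that $T$ need not be a subgraph of the current $G$ after vertex splits/deletions — but since \cref{def:passcore} transports $\bc$ and $\bw$ purely through the edge bijection with $G/F$ and never invokes tree paths, this subtlety does not enter, and the argument above is complete. The mild care needed is just to state explicitly that $\bPi_F$ acts as the identity on the edge coordinates (as the paper already observed) so that "circulation on $G$" translates to "circulation on $\cC(G,F)$" without any rerouting.
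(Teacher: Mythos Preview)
Your proposal is correct and follows essentially the same approach as the paper: verify that contraction preserves circulations, then check the width bound edge-by-edge via the identity $\bell^{\cC(G,F)}_{\hat e}=\wstr_e\bell_e$, and read off the $\ell_1$ bound from the definition. The only minor divergence is your parenthetical claim that the $\ell_1$ bound is ``in fact an equality''; the paper phrases it as an inequality to allow for edges that are contracted and disappear, though under the self-loop convention you invoke (which the paper also adopts) the edge bijection indeed makes it an equality.
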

\begin{proof}
  The proof is primarily checking the definitions.
  Recall that the edge set of $\cC(G,F)$ is $G/F$. Contracting vertices preserves circulations, hence $\bc^{\cC(G,F)}$ is a circulation (as $\bc$ is).

  In \cref{def:coregraph} we define $\bell^{\cC(G, F)}_{\hat{e}} = \wstr_e\bell_e$. So
  \begin{align*}
  |\bell^{\cC(G, F)}_{\hat{e}} \bc^{\cC(G, F)}_{\hat{e}}| &= \wstr_e|\bell_e\bc_{\hat{e}}| = \wstr_e|\bell_e\bc_e| \le \wstr_e\bw_e = \bw^{\cC(G, F)}_{\hat{e}},
  \end{align*}
  where the inequality holds because $\bc,\bw$ are a valid pair.

  The bound on $\|\bw^{\cC(G,F)}\|_1$ follows trivially by definition. The reason for the inequality (instead of equality) is that some edges may be contracted and disappear.
\end{proof}
Finally we state an algorithm which takes hidden stable-flow chasing updates on a dynamic graph $G^{(t)}$ and produces a dynamic core graph.
Below, we let $\bc^{(t),\cC(G,F)},\bw^{(t),\cC(G,F)}$ denote the result of using \cref{def:passcore} for $\bc = \bc^{(t)}$ and $\bw = \bw^{(t)}$, and similar definitions for $\bg^{(t),\cC(G,F)},\bell^{(t),\cC(G,F)}$ used later in the section.
\begin{lemma}[Dynamic Core Graphs]
  \label{lemma:hintedcore}
  \cref{algo:dynacore} takes as input a parameter $k$, a dynamic graph $G^{(t)}$ undergoes $\tau$ batches of updates $U^{(1)}, \dots U^{(\tau)}$ with gradients $\bg^{(t)}$, and lengths $\bell^{(t)}$ at stage $t = 0, \ldots, \tau$ that satisfies $\sum_{t=1}^{\tau} \Enc(U^{(t)}) \le m/(k\log^2 n)$ and the hidden stable-flow chasing property with the hidden circulation $\bc^{(t)}$ and width $\bw^{(t)}$.
  
  For each $j \in [B]$ with $B = O(\log n)$, the algorithm maintains a static tree $T_j$, a decremental rooted forest $F^{(t)}_j$ with $O(m/k)$ components satisfying the conditions of \cref{lemma:globalstretch}, and a core graph $\cC(G^{(t)}, F^{(t)}_j)$:
  \begin{enumerate}
      \item \label{item:CGRec} \underline{Core Graphs have Bounded Recourse:}
      the algorithm outputs update batches $U^{(t)}_j$ that produce $\cC(G^{(t)}, F^{(t)}_j)$ from $\cC(G^{(t-1)}, F^{(t-1)}_j)$ such that $\sum_{t' \le t} |U^{(t')}_j| = O\left(\sum_{t' \le t} \Enc(U^{(t')}) \cdot \log^2 n\right).$
      
      \item \label{item:CGWidth} \underline{The Widths on the Core Graphs are Small:}
      whp. there is an $j^* \in [B]$ only depending on $\bw^{(0)}$ such that for $\bw^{(t),\cC(G^{(t)}, F^{(t)}_j)}$ as defined in \cref{def:passcore} for $E(F_j) \subseteq E(T_j)$, and all stages $t \in \{0, \dots, \tau\}$,
      \begin{align}
        \|\bw^{(t),\cC(G^{(t)}, F^{(t)}_{j^*})}\|_1
        \le O(\gamma_{LSST}\log^2 n)\|\bw^{(0)}\|_1 + \|\bw^{(t)}\|_1.
        \label{eq:corewbound}
      \end{align}
  \end{enumerate}
  
  The algorithm runs $\O(mk)$-time.
\end{lemma}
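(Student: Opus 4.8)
The plan is to realise \DynamicCore\ as a thin layer over the dynamic LSD machinery of \cref{lemma:globalstretch,lemma:strMWU}, proving the claims in the order algorithm $\to$ recourse $\to$ width $\to$ runtime. \textbf{The algorithm.} At initialization, run the multiplicative-weights routine of \cref{lemma:strMWU} on $G^{(0)}$ with lengths $\bell^{(0)}$ and reduction parameter $k$, obtaining $t = \O(k)$ triples $(T_i, F_i \subseteq T_i, \wstr^i)$ with $\sum_{i=1}^{t}\blambda_i\wstr^i_e \le O(\gamma_{LSST}\log^2 n)$ for every edge $e$, where $\blambda = \vec{1}/t$ is uniform. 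Sample $B = O(\log n)$ indices from $[t]$ independently and uniformly, keeping the corresponding triples (relabelled $j \in [B]$). For each $j$, run the decremental maintenance of \cref{lemma:globalstretch} on $F_j$ while $G$ undergoes $U^{(1)},\dots,U^{(\tau)}$: $T_j$ stays static, $F_j^{(t)}$ only loses edges, the overestimates $\wstr^j_e$ are frozen for edges present at stage $0$ and set to $1$ for edges born from an insertion or a vertex split, and — since $\sum_t\Enc(U^{(t)}) \le m/(k\log^2 n)$ — $F_j^{(t)}$ keeps $O(m/k)$ components by the first property of \cref{lemma:globalstretch}. Maintain $\cC(G^{(t)}, F_j^{(t)})$ alongside: an edge update to $G$ is mirrored one-to-one (the core graph's edge set is in bijection with $G^{(t)}$'s, deleted edges internal to one component becoming self-loops), and each forest-edge deletion emitted by \cref{lemma:globalstretch} is a single vertex split of the core graph. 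By \cref{def:coregraph} the length $\wstr^j_e\bell^{(t)}_e$ and gradient $\bg^{(t)}_e + \langle\bg^{(t)},\bp(T_j[v,u])\rangle$ of an image edge are defined against the static tree $T_j$, hence unaffected by forest deletions; gradients are stored implicitly via the dynamic-tree structure of \cref{algo:LCT} on $T_j$ (each query $\O(1)$ amortized) and recomputed only when some $\bg^{(t)}_e$ or $\bell^{(t)}_e$ actually changes in $G$. Validity of the passed pair $(\bc^{(t),\cC}, \bw^{(t),\cC})$ on each core graph then follows from \cref{lemma:passcore}.

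\textbf{Recourse (\cref{item:CGRec}).} This is bookkeeping: an update $x \in U^{(t)}$ contributes $\O(\Enc(x))$ mirrored edge operations to $U_j^{(t)}$ and, by the first property of \cref{lemma:globalstretch}, triggers $O(\Enc(x)\log^2 n)$ new roots, i.e.\ that many single-vertex-split operations; summing gives $\sum_{t'\le t}|U_j^{(t')}| = O\!\left(\sum_{t'\le t}\Enc(U^{(t')})\log^2 n\right)$.

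\textbf{Width (\cref{item:CGWidth}).} Multiplying the per-edge MWU inequality by $\bw^{(0)}_e$ and summing yields $\sum_{i=1}^t\blambda_i\bigl(\sum_e\wstr^i_e\bw^{(0)}_e\bigr) \le O(\gamma_{LSST}\log^2 n)\|\bw^{(0)}\|_1$, so the expectation over a uniformly random $i$ of $\sum_e\wstr^i_e\bw^{(0)}_e$ is at most $O(\gamma_{LSST}\log^2 n)\|\bw^{(0)}\|_1$; Markov gives probability $\ge\tfrac12$ (up to a factor $2$), and with $B = O(\log n)$ independent samples there is w.h.p.\ an index $j^*$ with $\sum_{e\in E(G^{(0)})}\wstr^{j^*}_e\bw^{(0)}_e \le O(\gamma_{LSST}\log^2 n)\|\bw^{(0)}\|_1$. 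Crucially this event, and the value $j^*$, depend only on $\bw^{(0)}$, $\bell^{(0)}$ and the initial coin flips, not on the later (possibly adaptively chosen) updates. To transport the bound to stage $t$, use \cref{lemma:passcore} to get $\|\bw^{(t),\cC(G^{(t)},F^{(t)}_{j^*})}\|_1 \le \sum_{e\in E(G^{(t)})}\wstr^{j^*}_e\bw^{(t)}_e$ and split the sum at the stage-$0$ boundary: for $e$ present since stage $0$, $\wstr^{j^*}_e$ is frozen and the width-stability clause of \cref{def:hiddenStableFlowChasing} gives $\bw^{(t)}_e\le 2\bw^{(0)}_e$, so these terms total at most $2\sum_{e\in E(G^{(0)})}\wstr^{j^*}_e\bw^{(0)}_e = O(\gamma_{LSST}\log^2 n)\|\bw^{(0)}\|_1$; for $e$ born after stage $0$ we have $\wstr^{j^*}_e = 1$, so these terms total at most $\sum_{e\in E(G^{(t)})}\bw^{(t)}_e = \|\bw^{(t)}\|_1$. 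Adding the two pieces gives \eqref{eq:corewbound}.

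I expect the width step to be the main obstacle — not because any single inequality is hard, but because this is where the hidden-stable-flow-chasing abstraction has to pay off: once the adversary's updates can react to the cycles we output we cannot resample, so we must commit to $j^*$ at stage $0$ and carry the $\ell_1$ estimate through all $\tau$ stages using only the deterministic promises (width-stability of $\bw^{(t)}$, and that $\wstr^{j^*}$ is frozen on old edges and equal to $1$ on new ones), with no union bound over stages. \textbf{Runtime.} The $\O(mk)$ cost of the single \cref{lemma:strMWU} call dominates; for each of the $B = O(\log n)$ forests the decremental LSD maintenance is $\O(m)$ by \cref{lemma:globalstretch}, the mirrored updates and vertex splits of the core graphs are $\O(1)$ amortized each and number $\O(m) + \sum_t|U^{(t)}| = \O(m)$ in total (using $\sum_t\Enc(U^{(t)}) \le m/(k\log^2 n)$), and gradient queries via \cref{algo:LCT} are $\O(1)$ amortized — so the overall running time is $\O(mk)$.
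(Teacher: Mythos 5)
Your proposal is correct and follows essentially the same route as the paper's proof: initialize via the MWU of \cref{lemma:strMWU}, sample $B = O(\log n)$ forests, maintain each decrementally via \cref{lemma:globalstretch}, account recourse through the new-root count, and prove the width bound by a Markov-plus-sampling argument to fix $j^*$ at stage $0$, then split $\sum_e \wstr^{j^*}_e \bw^{(t)}_e$ into original edges (handled by the frozen stretch and the factor-$2$ width stability of \cref{def:hiddenStableFlowChasing}) and newly inserted edges (handled by $\wstr^{j^*}_e = 1$). The framing of why $j^*$ can be committed to at stage $0$ despite adaptive updates is a helpful gloss but not a mathematical departure from the paper's argument.
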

\begin{algorithm}[!ht]
  \caption{Dynamically maintains a core graph (\cref{def:coregraph}). Procedure $\Initialize$ initializes all variables, and $\DynamicCore$ takes updates to $G^{(t)}$. \label{algo:dynacore}}
  \SetKwProg{Globals}{global variables}{}{}
  \SetKwProg{Proc}{procedure}{}{}
  \Globals{}{
    $B \assign O(\log n)$: number of instantiations of \cref{lemma:globalstretch} \\
    $\cA^{(LSD)}_j$ for $j \in [B]$: algorithms implementing \cref{lemma:globalstretch} \\
    $T_j$ for $j \in [B]$: trees initialized by $\cA^{(LSD)}_j$ \\
    $\wstr^{j}$ for $j \in [B]$: stretch overestimates initialized by $\cA^{(LSD)}_j$.
  }
  \Proc{$\Initialize(G = G^{(0)}, \bell, k)$}{
    Let $\blambda$ and $\{T_1', \dots, T_t'\}$ be returned by \cref{lemma:strMWU} on $G$ with lengths $\bell$ and $t = \O(k)$ \\
    For $j \in [B]$ sample $i_j \in [t]$ proportional to $\blambda$, and $T_j \assign T_{i_j}'$ \\
    Initialize $\cA^{(LSD)}_j$ on $T_j$ for $j \in [B]$
  }
  \Proc{$\DynamicCore(G^{(t)}, U^{(t)}, \bg^{(t)}, \bell^{(t)})$}{
    \For{$j \in [B]$}{
      Pass $U^{(t)}$ to $\cA^{(LSD)}_j$ which updates $F^{(t-1)}_j$ to $F^{(t)}_j$. \\
      \tcp{All edges $e \in G^{(t)} \cap \left(\bigcup_{i\le t} U^{(i)}\right)$ have $\wstr^{j}_e = 1$ by \cref{lemma:globalstretch}}
      Let $U^{(t)}_j$ be the batch of vertex splits that updates $\cC(G^{(t-1)}, F^{(t-1)}_j)$ to $\cC(G^{(t-1)}, F^{(t)}_j)$. \\
      Append $U^{(t)}_j$ with $U^{(t)}$ which updates $\cC(G^{(t-1)}, F^{(t)}_j)$ to $\cC(G^{(t)}, F^{(t)}_j)$.
    }
  }
\end{algorithm}
\cref{algo:dynacore} initializes $B$ trees $T_1, \dots, T_B$ from the MWU distribution output by \cref{lemma:strMWU}. For each of these $B$ trees, we maintain a forest $E(F^{(t)}_j) \subseteq E(T_j)$ satisfying the conditions of \cref{lemma:globalstretch} with the goal of forcing the stretch of every newly appeared edge $e$ in $G^{(t)}$ to be $1$, i.e. $\wstr^{j}_e = 1 \forall j \in [B]$.

Given an update batch $U^{(t)}$, \cref{algo:dynacore} first updates the forest $F^{(t - 1)}_j$ to $F^{(t)}_j$ for any $j \in [B]$ using the algorithm of \cref{lemma:globalstretch}.
For any update $x \in U^{(t)}$, if $x$ updates some edge $e$, both endpoints are roots in the forest $F^{(t)}_j$ and they appear in the core graph.
If $x$ splits some vertex $u$, $u$ is made a root in the forest and it appears in the core graph as well.
In both cases, the update $x$ can be performed in the core graph $\cC(G^{(t-1)}, F^{(t)}_j)$ (notice that it is not $\cC(G^{(t)}, F^{(t)}_j)$).
Thus, we can apply the entire batch $U^{(t)}$ to produce $\cC(G^{(t)}, F^{(t)}_j)$ from $\cC(G^{(t-1)}, F^{(t)}_j)$.

When a vertex $u \in V(G^{(t-1)})$ is split, the algorithm of \cref{lemma:globalstretch} treats it as a sequence of one isolated vertex insertion and $O(\deg_{G^{(t)}}(u^{NEW}))=O(\Enc(x))$ edge insertions/deletions.
The newly added isolated vertex stays isolated in the forests $F_j, j \in [B]$ as they are maintained decrementally edge-wise.

We adapt the reduction when applying $U^{(t)}$ to produce $\cC(G^{(t)}, F^{(t)}_j)$ from $\cC(G^{(t-1)}, F^{(t)}_j)$.
Thus, the number of updates in the core graph is at least the \emph{total encoding size} of updates in the original graph.
As we will show, it is upper-bounded by the total encoding size as well.

\begin{proof}[Proof of \cref{lemma:hintedcore}]
Note that $\sum_{t \in [\tau]}\Enc(U^{(t)}) = m/(k\log^2 n)$, we can take $q = O(m/(k\log^2 n))$ in \cref{lemma:globalstretch}.
Thus by item \ref{item:cccount} of \cref{lemma:globalstretch}, $F_j^{(t)}$ has $O(m/k)$ connected components.

Next, we prove \cref{item:CGRec} that bounds the number of updates to the core graph.
After $t$ batches of updates $U^{(1)}, \dots, U^{(t)}$, $\cA^{(LSD)}_j$ increases the number of components in $F_j$ by $O\left(\sum_{t' \le t} \Enc(U^{(t')}) \cdot \log^2 n\right)$ according to \cref{item:cccount} of  \cref{lemma:globalstretch}.
Every new components appeared in $F_j$ splits a vertex in the core graph $\cC(G, F_j).$
Thus, there will be $O\left(\sum_{t' \le t} \Enc(U^{(t')}) \cdot \log^2 n\right)$ vertex splits happened in the core graph.
After updating $F_j$, every update batch $U^{(t')}$ updates $\cC(G, F_j)$ as it updates $G.$
Thus, we can bound the number of updates to $\cC(G, F_j)$ up to first $t$ stages by $O\left(\sum_{t' \le t} \Enc(U^{(t')}) \cdot \log^2 n\right)$.

To show \cref{item:CGWidth}, by \cref{lemma:passcore} we first get that
  \begin{align}
    \left\|\bw^{(t),\cC\left(G^{(t)}, F^{(t)}_j\right)}\right\|_1
    &\le \sum_{e \in E(G^{(t)})} \wstr^{j}_e\bw^{(t)}_e
    \overset{(i)}{\le} \sum_{e \in G^{(t)} \setminus \left(\bigcup_{s=1}^t U^{(s)}\right)} \wstr^{j}_e\bw^{(t)}_e + \sum_{e \in G^{(t)} \cap \left(\bigcup_{s=1}^t U^{(s)}\right)} \bw^{(t)}_e \nonumber \\
    & \overset{(ii)}{\le} 2\sum_{e \in G^{(0)}}\wstr^{j}_e\bw^{(0)}_e + \left\|\bw^{(t)}\right\|_1, \label{eq:firstboundcore}
  \end{align}
  where $(i)$ follows because every edge $e$ appeared in $G^{(t)}$ due to some update in some $U^{(i)}$ has $\wstr^{j}_e = 1$ by a condition of \cref{lemma:globalstretch}.
  $(ii)$ follows because the hidden stable-flow chasing property (\cref{def:hiddenStableFlowChasing} item \ref{item:widthstable}) gives that any edge $e \in G^{(t)} \setminus \bigcup_{s=1}^t U^{(s)} \subseteq G^{(0)}$ has $\bw^{(t)}_e \le 2\bw^{(0)}_e$.

Now recall that $T_j$ is sampled from the collection $\{T'_1, \ldots, T'_t\}$ of trees given by \cref{lemma:strMWU}, with probabilities proportional to $\blambda$.
Hence
\[ \E_{T_j}\Big[\sum_{e \in G^{(0)}}\wstr^{j}_e\bw^{(0)}_e\Big] = \sum_{e \in G^{(0)}} \bw^{(0)}_e \sum_{i=j}^t \blambda_j\wstr^{j}_e \le O(\gamma_{LSST}\log^2 n)\|\bw^{(0)}\|_1, \]
by the guarantees of \cref{lemma:strMWU}, so by Markov's inequality
\[ \Pr_{T_j}\Big[\sum_{e \in G^{(0)}}\wstr^{j}_e\bw^{(0)}_e \le O(\gamma_{LSST}\log^2 n)\|\bw^{(0)}\|_1 \Big] \ge 1/2. \]
Since we sample $B$ independent trees $T,$ for $B = \Theta(\log n),$ we get that there exists an $i^*$ satisfying \eqref{eq:corewbound} with probability at least $1 - 2^{-B} \ge 1 - n^{-\Theta(1)}.$

  Finally, the algorithm runs in total time $\O(mk)$ by \cref{lemma:strMWU} and \cref{lemma:globalstretch}.
\end{proof}

\subsubsection{Passing Circulations and Length Upper Bounds to the Sparsified Core Graph}
We describe how to pass $\bc^{\mathcal{C}(G, F)}, \bw^{\mathcal{C}(G, F)}$ on a core graph to a sparsified core graph $\SS(G, F)$.
\begin{definition}[Passing $\bc,\bw$ to sparsified core graph]
  \label{def:passsparsecore}
  Consider a graph $G$ with spanning forest $F$, and circulation $\bc^{\mathcal{C}(G, F)} \in \R^{E(\mathcal{C}(G, F))}$ and upper bound $\bw^{\mathcal{C}(G, F)} \in \R^{E(\mathcal{C}(G, F))}_{>0}$, and embedding $\Pi_{\mathcal{C}(G, F) \to \SS(G, F)}$ for a $(\gamma_s,\gamma_c,\gamma_l)$-sparsified core graph $\SS(G, F) \subseteq \mathcal{C}(G, F)$. Define
  \begin{align} 
    \bc^{\SS(G, F)} &= \sum_{\hat{e} \in E(\mathcal{C}(G, F))} \bc^{\mathcal{C}(G, F)}_{\hat{e}}\bPi_{\mathcal{C}(G, F) \to \SS(G, F)}(\hat{e}) \label{eq:bcsparsecore} \\
    \bw^{\SS(G, F)} &= 2\sum_{\hat{e} \in E(\mathcal{C}(G, F))} \bw^{\mathcal{C}(G, F)}_{\hat{e}}\left|\bPi_{\mathcal{C}(G, F) \to \SS(G, F)}(\hat{e})\right| \label{eq:bwsparsecore}.
  \end{align}
\end{definition}
We check that $\bc^{\SS(G, F)}$ is a circulation on $\SS(G, F)$ and $\bw^{\SS(G, F)}$ are length upper bounds.
\begin{lemma}[Validity of \cref{def:passsparsecore}]
  \label{lemma:passsparsecore}
  Let $\bc^{\mathcal{C}(G,F)}, \bw^{\mathcal{C}(G,F)}$ be a valid pair on graph $\mathcal{C}(G,F)$ with lengths $\bell^{\mathcal{C}(G,F)}$. As defined in \cref{def:passsparsecore}, $\bc^{\SS(G, F)}, \bw^{\SS(G,F)}$ is a valid pair on $\mathcal{S}(G,F)$ with lengths $\bell^{\mathcal{S}(G,F)}$ (\cref{def:sparsecore}). Also,
  \[ \|\bw^{\mathcal{C}(G,F)}\|_1 \le \|\bw^{\SS(G, F)}\|_1 \le O(\gamma_l)\|\bw^{\mathcal{C}(G,F)}\|_1. \]
\end{lemma}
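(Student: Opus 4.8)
The statement has three parts: (a) $\bc^{\SS(G,F)}$ is a circulation on $\SS(G,F)$; (b) $\bw^{\SS(G,F)}$ is a valid length upper bound, i.e. $|\bell^{\SS(G,F)}_{\hat e'} \bc^{\SS(G,F)}_{\hat e'}| \le \bw^{\SS(G,F)}_{\hat e'}$ for every $\hat e' \in E(\SS(G,F))$; and (c) the two-sided bound $\|\bw^{\cC(G,F)}\|_1 \le \|\bw^{\SS(G,F)}\|_1 \le O(\gamma_l)\|\bw^{\cC(G,F)}\|_1$. The plan is to verify each in turn, all of them being essentially bookkeeping against the definitions.

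For (a): by \cref{def:passsparsecore}, $\bc^{\SS(G,F)} = \sum_{\hat e} \bc^{\cC(G,F)}_{\hat e} \bPi_{\cC(G,F)\to\SS(G,F)}(\hat e)$. Each $\bPi_{\cC(G,F)\to\SS(G,F)}(\hat e)$ is (by the definition of a graph-embedding in the preliminaries) a flow vector routing one unit of demand $\bb_{\hat e}$ through $\SS(G,F)$, so $\mB^\top \bPi_{\cC(G,F)\to\SS(G,F)}(\hat e) = \bb_{\hat e}$. Hence $\mB^\top \bc^{\SS(G,F)} = \sum_{\hat e} \bc^{\cC(G,F)}_{\hat e} \bb_{\hat e} = \mB^\top \bc^{\cC(G,F)}$, which is $0$ since $\bc^{\cC(G,F)}$ is a circulation by hypothesis (the valid-pair assumption). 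For (b): fix $\hat e' \in E(\SS(G,F))$. From \eqref{eq:bcsparsecore}, $|\bc^{\SS(G,F)}_{\hat e'}| \le \sum_{\hat e : \hat e' \in \Pi(\hat e)} |\bc^{\cC(G,F)}_{\hat e}|$. Multiply by $\bell^{\SS(G,F)}_{\hat e'}$ and use \cref{item:samelen} of \cref{def:sparsecore}, which says that whenever $\hat e' \in \Pi_{\cC(G,F)\to\SS(G,F)}(\hat e)$ we have $\bell^{\cC(G,F)}_{\hat e'} \approx_2 \bell^{\cC(G,F)}_{\hat e}$, together with the fact (\cref{item:samelen} of \cref{def:sparsecore}) that lengths in $\SS(G,F)$ agree with those in $\cC(G,F)$. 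Thus $\bell^{\SS(G,F)}_{\hat e'} |\bc^{\cC(G,F)}_{\hat e}| \le 2 \bell^{\cC(G,F)}_{\hat e} |\bc^{\cC(G,F)}_{\hat e}| \le 2 \bw^{\cC(G,F)}_{\hat e}$, where the last inequality is the valid-pair hypothesis on $\cC(G,F)$. Summing over all $\hat e$ with $\hat e' \in \Pi(\hat e)$ gives $\bell^{\SS(G,F)}_{\hat e'} |\bc^{\SS(G,F)}_{\hat e'}| \le 2 \sum_{\hat e : \hat e' \in \Pi(\hat e)} \bw^{\cC(G,F)}_{\hat e} = \bw^{\SS(G,F)}_{\hat e'}$, matching \eqref{eq:bwsparsecore} read coordinatewise (the coefficient of $\hat e'$ in $|\bPi(\hat e)|$ is $1$ exactly when $\hat e' \in \Pi(\hat e)$).

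For (c): the lower bound comes from the fact that $\SS(G,F) \subseteq \cC(G,F)$ and that every edge $\hat e' \in \SS(G,F)$ lies on its own trivial embedding path $\Pi_{\cC(G,F)\to\SS(G,F)}(\hat e') = \hat e'$ (this is part of what it means for $\SS$ to be a subgraph with an embedding), so $\bw^{\SS(G,F)}_{\hat e'} \ge 2\bw^{\cC(G,F)}_{\hat e'} \ge \bw^{\cC(G,F)}_{\hat e'}$, and summing over $\hat e' \in E(\SS(G,F)) \subseteq E(\cC(G,F))$ gives $\|\bw^{\SS(G,F)}\|_1 \ge \|\bw^{\cC(G,F)}\|_1$. (If the embedding convention does not make the on-edge path trivial, one instead notes that for each $\hat e' \in \SS(G,F)$ the term $\hat e'$ itself contributes $2\bw^{\cC(G,F)}_{\hat e'}$ through the edge's own routing, since $\hat e' \in \Pi(\hat e')$; this is exactly \cref{item:samelen}'s setup where $\hat e = \hat e'$.) For the upper bound, sum \eqref{eq:bwsparsecore} over all edges of $\SS(G,F)$ and swap the order of summation: $\|\bw^{\SS(G,F)}\|_1 = 2\sum_{\hat e \in E(\cC(G,F))} \bw^{\cC(G,F)}_{\hat e} \, \|\,|\bPi_{\cC(G,F)\to\SS(G,F)}(\hat e)|\,\|_1 = 2\sum_{\hat e} \bw^{\cC(G,F)}_{\hat e} \, |\Pi_{\cC(G,F)\to\SS(G,F)}(\hat e)|$, and each path length $|\Pi_{\cC(G,F)\to\SS(G,F)}(\hat e)| \le \length(\Pi_{\cC(G,F)\to\SS(G,F)}) \le \gamma_l$ by \cref{item:samelen}/\cref{def:sparsecore}, yielding $\|\bw^{\SS(G,F)}\|_1 \le 2\gamma_l \|\bw^{\cC(G,F)}\|_1 = O(\gamma_l)\|\bw^{\cC(G,F)}\|_1$.

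There is no real obstacle here; the only point requiring a little care is part (b), where one must be careful to apply the $\approx_2$ length comparison from \cref{item:samelen} of \cref{def:sparsecore} in the correct direction (from the spanner edge $\hat e'$ to the original core edge $\hat e$ that routes through it), since the length upper bound $\bw^{\cC(G,F)}_{\hat e}$ is associated with $\hat e$, not $\hat e'$ — this is precisely why the factor $2$ appears in the definition \eqref{eq:bwsparsecore}. A secondary subtlety is ensuring the edge-vs-coordinate matching is consistent: $|\bPi_{\cC(G,F)\to\SS(G,F)}(\hat e)|$ is the indicator vector of the path (since embedding paths are simple, or at worst each edge appears with coefficient $\pm1$), so its $\hat e'$-th coordinate is $\mathbbm 1[\hat e' \in \Pi(\hat e)]$, which is what makes the coordinatewise reading of \eqref{eq:bwsparsecore} line up with the sum over $\hat e$ routing through $\hat e'$. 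I would state this correspondence explicitly once and then the computation is immediate.
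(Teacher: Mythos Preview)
Your arguments for (a), (b), and the upper bound in (c) are correct and essentially identical to the paper's proof.

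There is a small gap in your lower bound argument for (c). You establish $\bw^{\SS(G,F)}_{\hat e'} \ge \bw^{\cC(G,F)}_{\hat e'}$ for each $\hat e' \in E(\SS(G,F))$ and then sum over $\hat e' \in E(\SS(G,F))$. But this only gives $\|\bw^{\SS(G,F)}\|_1 \ge \sum_{\hat e' \in E(\SS(G,F))} \bw^{\cC(G,F)}_{\hat e'}$, which is a sum over the edges of the sparsifier, not over all of $E(\cC(G,F))$; since $\SS(G,F)$ is a strict subgraph this does not yield $\|\bw^{\cC(G,F)}\|_1$. The correct argument is the one you already have set up for the upper bound: from the identity
\[
\|\bw^{\SS(G,F)}\|_1 = 2\sum_{\hat e \in E(\cC(G,F))} \bw^{\cC(G,F)}_{\hat e}\,\bigl|\Pi_{\cC(G,F)\to\SS(G,F)}(\hat e)\bigr|,
\]
observe that every embedding path has at least one edge (it connects the endpoints of $\hat e$), so $|\Pi(\hat e)| \ge 1$ and hence $\|\bw^{\SS(G,F)}\|_1 \ge 2\|\bw^{\cC(G,F)}\|_1 \ge \|\bw^{\cC(G,F)}\|_1$. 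The paper simply says this ``follows directly from the definition,'' and this is what is meant.
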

\begin{proof}
  Let $\mB_{\SS}, \mB_{\mathcal{C}}$ be the incidence matrices of $\SS(G,F), \mathcal{C}(G,F)$ respectively. To see that $\bc^{\SS(G,F)}$ is a circulation, we write
  \begin{align*}
    \mB_{\SS}^{\top}\bc^{\SS(G,F)} &= \sum_{\hat{e} \in E(\mathcal{C}(G,F))} \bc^{\SS(G,F)}_{\hat{e}} \mB_{\SS}^\top\bPi_{\mathcal{C}(G, F) \to \SS(G, F)}(\hat{e}) = \sum_{\hat{e} \in E(\mathcal{C}(G,F))} \bc^{\SS(G,F)}_{\hat{e}} \bb_{\hat{e}} = \mB_{\mathcal{C}}^\top \bc^{\mathcal{C}(G,F)} = 0.
  \end{align*}
  To see that $\bw^{\SS(G,F)}$ are valid upper bounds, for all $\hat{e}' \in E(\SS(G,F))$
  \begin{align*}
    |\bell^{\SS(G,F)}_{\hat{e}'}\bc^{\SS(G,F)}_{\hat{e}'}| &= \bell^{\SS(G,F)}_{\hat{e}'}\Big|\sum_{\hat{e} : \hat{e}' \in \Pi_{\mathcal{C}(G,F)\to\SS(G,F)}(\hat{e})} \bc^{\mathcal{C}(G,F)}_{\hat{e}}\Big| \overset{(i)}{\le} 2\sum_{\hat{e} : \hat{e}' \in \Pi_{\mathcal{C}(G,F)\to\SS(G,F)}(\hat{e})} |\bell^{\SS(G,F)}_{\hat{e}}\bc^{\mathcal{C}(G,F)}_{\hat{e}}| \\
                                                           &\le 2\sum_{\hat{e} : \hat{e}' \in \Pi_{\mathcal{C}(G,F)\to\SS(G,F)}(\hat{e})} \bw^{\mathcal{C}(G,F)}_{\hat{e}} = 2\bw^{\SS(G,F)}_{\hat{e}'}.
  \end{align*}
  Throughout, we used several properties guaranteed in \cref{def:sparsecore}, and $(i)$ specifically follows by item \ref{item:samelen}. The final equality follows by the definition of $\bw^{\SS(G,F)}$ in \eqref{eq:bwsparsecore}.

  Finally we upper-bound $\|\bw^{\SS(G,F)}\|_1$ by
  \begin{align*}
    \|\bw^{\SS(G,F)}\|_1 &\le 2\sum_{\hat{e} \in E(\mathcal{C}(G,F))} \bw^{\mathcal{C}(G,F)}_{\hat{e}} \|\bPi_{\mathcal{C}(G,F)\to\SS(G,F)}(\hat{e}) \|_1 \\ &\le 2\|\bw^{\mathcal{C}(G,F)}\|_1 \length(\Pi_{\mathcal{C}(G,F)\to\SS(G,F)}) \le O(\gamma_l)\|\bw^{\mathcal{C}(G,F)}\|_1,
  \end{align*}
  because $\SS(G,F)$ is a $(\gamma_s,\gamma_c,\gamma_l)$ sparsified core graph.
  $\|\bw^{\cC(G,F)}\|_1 \le \|\bw^{\SS(G,F)}\|_1$ follows directly from the definition.
\end{proof}
We can now give an algorithm that takes a dynamic graph $G^{(t)}$ undergoing hidden stable-flow chasing updates, and maintain a sparsified core graph also undergoing hidden stable-flow chasing updates, such that the total size of updates increases by a factor of at most $m^{o(1)}$. This shows how to pass from level $i$ to $i+1$ in a tree-chain (\cref{def:chain}).

\begin{lemma}[Dynamic Sparsified Core Graphs]
  \label{lemma:hintedsparsecore}
  
  \cref{algo:dynasparsecore} takes as input a parameter $k$, a dynamic graph $G^{(t)}$ undergoes $\tau$ batches of updates $U^{(1)}, \dots, U^{(\tau)}$ with gradients $\bg^{(t)}$, lengths $\bell^{(t)}$ at stage $t = 0, \dots, \tau$ that satisfies $\sum_{t=1}^{\tau} \Enc(U^{(t)}) \le m/(k\log^2 n)$ and the hidden stable-flow chasing property with the hidden circulation $\bc^{(t)}$ and width $\bw^{(t)}$.

  The algorithm maintains for each $j \in [B]$ (for $B = O(\log n)$), a decremental forest $F^{(t)}_j$, a static tree $T_j$ satisfying the conditions of \cref{lemma:globalstretch}, and a $(\gamma_s,\gamma_l,\gamma_c)$-sparsified core graph $\cS(G^{(t)}, F^{(t)}_j)$ for parameters $\gamma_s = \gamma_c = \gamma_l = \exp(O(\log^{3/4}m\log\log m))$ with embedding $\Pi_{\mathcal{C}(G^{(t)}, F^{(t)}_j)\to\SS(G^{(t)}, F^{(t)}_j)}$:
  \begin{enumerate}
      \item \label{item:SCGLowRec} \underline{Sparsified Core Graphs have Low Recourse:}
      the algorithm outputs update batches $U_{\SS, j}^{(t)}$ that produce $\SS(G^{(t)}, F^{(t)}_j)$ from $\SS(G^{(t-1)}, F^{(t-1)}_j)$ such that $\sum_{t' \le t} \Enc(U_{\SS, j}^{(t')}) = \gamma_r \cdot \sum_{t' \le t} \Enc(U^{(t')})$ for some $\gamma_r = \exp(O(\log^{3/4}m\log\log m))$,
      
      \item \label{item:SCGHSFC} \underline{Sparsified Core Graphs undergo Hidden Stable-Flow Chasing Updates:}
      for each $j \in [B]$, the update batches $U_{\SS, j}^{(t)}$ to the sparsified core graph along with the associated gradients $\bg^{(t),\SS(G^{(t)} F^{(t)}_j)}$, and lengths $\bell^{(t),\SS(G^{(t)}, F^{(t)}_j)}$ as defined in \Cref{def:sparsecore} satisfy the hidden stable-flow chasing property (see \cref{def:hiddenStableFlowChasing}) with the hidden circulation $\bc^{(t),\SS(G^{(t)}, F^{(t)}_j)}$, and width $\bw^{(t),\SS(G^{(t)}, F^{(t)}_j)}$ as defined in \Cref{def:passsparsecore}, and
      
      \item \label{item:SCGWidth} \underline{The Widths on the Sparsified Core Graphs are Small:}
      for each $j \in [B]$, the width on the sparsified core graph $\cS(G^{(t)} F^{(t)}_j)$ is bounded as follows:
      \begin{align*}
        \norm{\bw^{(t),\cS(G^{(t)}, F^{(t)}_j)}}_1
        \le \O(\gamma_l)\left(\|\bw^{(0),\cC(G^{(0)}, F^{(0)}_j)}\|_1 + \norm{\bw^{(t)}}_1\right).
      \end{align*}
      Also, whp. there is an $j^* \in [B]$ only depending on $\bw^{(0)}$ such that
      \begin{align}
        \left\|\bw^{(t),\SS(G^{(t)}, F^{(t)}_{j^*})}\right\|_1
        \le \O(\gamma_l)\left(\left\|\bw^{(0)}\right\|_1 + \left\|\bw^{(t)}\right\|_1\right).
        \label{eq:hintedsparsecore} 
      \end{align}
  \end{enumerate}
The algorithm runs in total time $\O(mk \cdot \gamma_r)$.
\end{lemma}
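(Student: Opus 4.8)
The plan is to build \cref{algo:dynasparsecore} by composing the dynamic core graph algorithm from \cref{lemma:hintedcore} with the decremental spanner with embedding from \cref{thm:spanner}, one instantiation for each $j \in [B]$. Concretely, \cref{algo:dynasparsecore} calls $\Initialize$ and $\DynamicCore$ of \cref{algo:dynacore} to maintain the $B$ forests $F_j^{(t)}$, trees $T_j$, and core graphs $\cC(G^{(t)}, F_j^{(t)})$, receiving from \cref{lemma:hintedcore} the update batches $U_j^{(t)}$ that transform $\cC(G^{(t-1)}, F_j^{(t-1)})$ into $\cC(G^{(t)}, F_j^{(t)})$. These batches consist of vertex splits (from root insertions in the LSD, which split a vertex in the core graph) followed by the edge insertions/deletions from $U^{(t)}$ itself -- exactly the update types that \cref{thm:spanner} accepts. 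So for each $j$ we feed the stream $U_j^{(t)}$ into an instance of \cref{thm:spanner} with $L = (\log m)^{1/4}$, producing the subgraph $H = \SS(G^{(t)}, F_j^{(t)})$ and the embedding $\Pi_{\cC(G^{(t)}, F_j^{(t)}) \to \SS(G^{(t)}, F_j^{(t)})}$. The lengths and gradients on $\SS(G^{(t)}, F_j^{(t)})$ are simply inherited from $\cC(G^{(t)}, F_j^{(t)})$ as required by \cref{def:sparsecore} item 4, and we must verify that the output of \cref{thm:spanner} indeed satisfies the four conditions of \cref{def:sparsecore}: the ``same length up to factor $2$ on embedding paths'' condition (item 1 of \cref{def:sparsecore}) will be arranged by pre-bucketing the core graph edges by length into $O(\log(mU))$ weight classes, running \cref{thm:spanner} on each unit-weight bucket separately, and taking the union; the length bound $\gamma_l$ and congestion bound $k\gamma_c$ follow from the $\length$ and $\vcong$ guarantees of \cref{thm:spanner} together with $\Delta_{\max}(\cC(G, F)) = O(k \log^2 n)$ from item 4 of \cref{lemma:globalstretch}; the sparsity $m\gamma_s/k$ follows because $\cC(G,F)$ has $O(m/k)$ vertices and \cref{thm:spanner} outputs $\tilde O(n)$ edges.

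For item \ref{item:SCGLowRec}, the recourse bound: \cref{lemma:hintedcore} item \ref{item:CGRec} gives $\sum_{t' \le t} |U_j^{(t')}| = O(\sum_{t' \le t}\Enc(U^{(t')})\log^2 n)$, and actually I need the stronger statement that $\sum_{t'} \Enc(U_j^{(t')})$ is bounded similarly -- a vertex split in the core graph has encoding size equal to the degree of the split-off vertex, which by the volume bound in item \ref{item:degbound} of \cref{lemma:globalstretch} is $\tilde O(k)$, and the charging argument in \cref{lemma:globalstretch} controls the total. Then \cref{thm:spanner} item \ref{prop:sparsitySpanner} converts each unit of $\Enc(U_j^{(t')})$ into $\tilde O(n^{1/L}(\gamma_c\gamma_l)^{O(L)}) = \gamma_r$ units of recourse in $\SS$; summing over the $O(\log(mU))$ buckets and setting $\gamma_r = \exp(O(\log^{3/4}m\log\log m))$ absorbs all the polylog and $m^{o(1)}$ factors. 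Here I must be careful that the encoding size $\Enc(U_{\SS,j}^{(t)})$ (not just the count $|U_{\SS,j}^{(t)}|$) is controlled, since the recursion in \cref{def:chain} passes $\Enc$ down; fortunately \cref{thm:spanner} item \ref{prop:sparsitySpanner} is stated in terms of $\Enc(U_H^{(t')})$.

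For item \ref{item:SCGHSFC}: I must show the triple $(U_{\SS,j}^{(t)}, \bg^{(t),\SS}, \bell^{(t),\SS})$ satisfies \cref{def:hiddenStableFlowChasing} with hidden circulation $\bc^{(t),\SS(G^{(t)},F_j^{(t)})}$ and width $\bw^{(t),\SS(G^{(t)},F_j^{(t)})}$ from \cref{def:passsparsecore}. Item \ref{item:circulation} (circulation) and item \ref{item:width} (width upper bound) are exactly \cref{lemma:passcore} composed with \cref{lemma:passsparsecore}. Item \ref{item:quasipoly} (quasipolynomial bounds) follows because $\wstr_e = O(k\gamma_{LSST}\log^4 n)$ (item \ref{item:stretchbound} of \cref{lemma:globalstretch}) and $\length(\Pi) \le \gamma_l$ are both quasipolynomially bounded, so multiplying the original quasipoly-bounded $\bw^{(t)}, \bell^{(t)}$ by these factors keeps everything in $[\exp(-\log^{O(1)}m), \exp(\log^{O(1)}m)]$. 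The genuinely delicate one is item \ref{item:widthstable} -- width stability: for an edge $\hat e' \in \SS(G^{(t)}, F_j^{(t)})$ present since stage $t'$, I need $\bw^{(t),\SS}_{\hat e'} \le 2\bw^{(t'),\SS}_{\hat e'}$. The reason it works is the \emph{decrementality} clause of \cref{thm:spanner} item \ref{item:lowReEmbed}: apart from an amortized $m^{o(1)}$ set $D^{(t)}$ of edges, the preimage set $\Pi^{-1}(\hat e')$ is decremental, so $\bw^{(t),\SS}_{\hat e'} = 2\sum_{\hat e : \hat e' \in \Pi(\hat e)} \bw^{(t),\cC}_{\hat e}$ only shrinks from preimage-set decrease while each surviving term $\bw^{(t),\cC}_{\hat e} = \wstr_e \bw^{(t)}_e$ at most doubles (by item \ref{item:widthstable} applied at the top level, since $\wstr_e$ is frozen over the life of $F_j$). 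For the exceptional edges in $D^{(t)}$ we reset $t'$ to $t$ -- i.e. treat them as freshly inserted spanner edges -- and since $\sum_{t'}|D^{(t')}| = m^{o(1)}\sum_{t'}|U^{(t')}|$ this only inflates $\gamma_r$, which we have already budgeted for. This interplay between ``decremental preimage sets'' and ``frozen stretch overestimates'' -- both of which are \emph{designed into} \cref{thm:spanner} and \cref{lemma:globalstretch} respectively precisely to make this go through -- is the crux, and I expect assembling it cleanly to be the main obstacle.

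Finally, item \ref{item:SCGWidth}: the first inequality is immediate by composing the $\|\bw^{\SS}\|_1 \le O(\gamma_l)\|\bw^{\cC}\|_1$ bound of \cref{lemma:passsparsecore} with the $\|\bw^{(t),\cC}\|_1 \le 2\sum_{e \in G^{(0)}}\wstr_e\bw^{(0)}_e + \|\bw^{(t)}\|_1 \le O(\|\bw^{(0),\cC(G^{(0)},F_j^{(0)})}\|_1) + \|\bw^{(t)}\|_1$ bound established inside the proof of \cref{lemma:hintedcore} (\cref{eq:firstboundcore}). For \cref{eq:hintedsparsecore}, take the same $j^* \in [B]$ guaranteed by \cref{item:CGWidth} of \cref{lemma:hintedcore}; then $\|\bw^{(0),\cC(G^{(0)},F_{j^*}^{(0)})}\|_1 = \sum_{e \in G^{(0)}} \wstr^{j^*}_e \bw^{(0)}_e \le O(\gamma_{LSST}\log^2 n)\|\bw^{(0)}\|_1$ whp., so chaining gives $\|\bw^{(t),\SS(G^{(t)},F_{j^*}^{(t)})}\|_1 \le \O(\gamma_l)(\|\bw^{(0)}\|_1 + \|\bw^{(t)}\|_1)$ as claimed. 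The total runtime $\O(mk\gamma_r)$ is the $\O(mk)$ cost of \cref{lemma:hintedcore} plus the amortized update time of \cref{thm:spanner} summed over the $Q$ core-graph updates, where $Q = \tilde O(mk/k) = \tilde O(m)$ times the per-update cost $\tilde O(n^{1/L}(\gamma_c\gamma_l)^{O(L)}\Delta_{\max}) = \gamma_r \cdot \tilde O(k)$, all of which is $\O(mk\gamma_r)$; I should double-check the $\Delta_{\max}$ factor here is the $O(k\log^2 n)$ from \cref{lemma:globalstretch} and that it has been folded into $\gamma_r$ and the explicit $k$. I would present \cref{algo:dynasparsecore} as pseudocode mirroring \cref{algo:dynacore} -- global variables for the $B$ core-graph maintainers and $B$ spanner instances, an $\Initialize$ that calls both, and a $\DynamicSparseCore$ that forwards each batch -- and then prove the four items in order as above.
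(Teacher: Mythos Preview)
Your plan follows the paper's approach and your analyses of items~\ref{item:SCGHSFC} and~\ref{item:SCGWidth} match the paper's. Two genuine gaps remain in how you interface with \cref{thm:spanner}.

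First, \cref{thm:spanner} does \emph{not} accept edge insertions: its hypothesis restricts the input batches to ``edge deletions and $\tilde O(n)$ vertex splits'' only, so your claim that the core-graph updates are ``exactly the update types that \cref{thm:spanner} accepts'' is false. The paper's \cref{algo:dynasparsecore} therefore separates each core-graph batch $U_j^{(t)}$ into the insertions $U_j^{(t)+}$ and the rest $U_j^{(t)-}$; only $U_j^{(t)-}$ is fed to the spanner, and every inserted edge is added directly to $\SS(G, F_j)$ with the trivial self-embedding. Since the total number of insertions over all stages is at most $q \le m/(k\log^2 n)$, this preserves the sparsity bound.

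Second, your claim ``$\Delta_{\max}(\cC(G,F)) = O(k\log^2 n)$ from item~\ref{item:degbound}'' is not what item~\ref{item:degbound} says. That item bounds $\vol_G(W \setminus R)$ for each partition piece $W$, but a vertex of $\cC(G,F) = G/F$ is an entire component of $F$, which may span several pieces glued at a root of possibly large $G$-degree; its degree in $G/F$ is therefore not controlled. The paper does not feed $\cC(G, F_j)$ to the spanner directly. Instead (line~\ref{line:splitBeforeSpanner} of \cref{algo:dynasparsecore}) it builds a refined graph $\wt{\cC}_j$ by splitting each core-graph vertex into one vertex $u_r$ per root $r \in R$ and one vertex $u_W$ per piece $W \in \cW$; the $u_W$ vertices satisfy $\deg(u_W) \le \vol_G(W \setminus R) = \O(k)$, and crucially every subsequent vertex split in $\wt{\cC}_j$ (caused by a forest-edge deletion) affects only some $u_X$ contained in a single piece, hence of degree $\O(k)$ --- the root vertices $u_r$ never split. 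This is what pins the $\Delta_{\max}$ factor in the spanner's runtime and congestion bounds to $\O(k)$, and it also dissolves your worry about $\Enc(U_j^{(t')})$: the spanner's recourse guarantee in \cref{thm:spanner} is stated in terms of the \emph{count} $|U_j^{(t')}|$, which \cref{lemma:hintedcore} item~\ref{item:CGRec} bounds directly. Without the $\wt{\cC}_j$ construction, neither the congestion bound $k\gamma_c$ of \cref{def:sparsecore} nor the $\O(mk\gamma_r)$ runtime follows.
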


\begin{algorithm}[!ht]
  \caption{Dynamically maintains a sparsified core graph (\cref{def:coregraph}). Procedure $\Initialize$ initializes all variables, and $\DynamicSparseCore$ takes updates to $G^{(t)}$.\label{algo:dynasparsecore}
  }
  \SetKwProg{Globals}{global variables}{}{}
  \SetKwProg{Proc}{procedure}{}{}
  \Globals{}{
    $B \assign O(\log n)$: number of instantiations of \cref{lemma:globalstretch} in \cref{lemma:hintedcore} \\
    $\cA^{(Core)}$: algorithm implementing \cref{lemma:hintedcore} \\
    $\cA^{(Spanner)}_j$ for $j \in [B]$: algorithms implementing \cref{thm:spanner}
  }
  \Proc{$\Initialize(G = G^{(0)}, \bell, k)$}{
    $\cA^{(Core)}.\Initialize(G, \bell, k)$ \\
    \For{$j \in [B]$}{
        Let $\cW$ be the partition in \cref{lemma:globalstretch} item \ref{item:degbound}, and $R \supseteq \partial \cW$ an initial set of roots obtained from running the algorithm in \cref{lemma:globalstretch}. \\
        Create graph $\wt{\cC}_j$ by splitting vertices of $\cC(G,F_j)$ into vertices $u_r$ for each $r \in R$, and a vertex $u_W$ for the set of vertices $W \setminus R$ for each $W \in \cW$.
        \label{line:splitBeforeSpanner} \tcp{The vertices $u_r$ will not be split further, and $u_W$ all have degree at most $\O(k)$. Also, a deletion to $F_j$ will only split a single vertex.} 
        Let $\Lambda_{j} \defeq \Lambda_{\wt{\cC}_j \to \cC(G,F_j)}$ be the bijection between $E\left(\wt{\cC}_j\right)$ and $E\left(\cC(G,F_j)\right)$.\\
        Initialize $\cA^{(Spanner)}_j$ on $\wt{\cC}_j$, the split version of $\cC(G, F_j).$ \\
        Let $\wt{\cS}_j$ be the spanner maintained by $\cA^{(Spanner)}_j$ and $\cS(G, F_j) \defeq \Lambda_j(\wt{\cS}_j).$
    }
  }
  \Proc{$\DynamicSparseCore(G^{(t)}, U^{(t)}, \bg^{(t)}, \bell^{(t)})$}{
    $\cA^{(Core)}.\DynamicCore(G^{(t)}, U^{(t)}, \bg^{(t)}, \bell^{(t)})$ \\
    \For{$j \in [B]$}{
        Let $U^{(t)}_j$ be the update batch that produce $\cC(G^{(t)}, F^{(t)}_j)$ from $\cC(G^{(t-1)}, F^{(t-1)}_j)$. \\
        Let $U^{(t)+}_j \subseteq U^{(t)}_j$ contain all edge insertions. \\
        Let $U^{(t)-}_j \subseteq U^{(t)}_j$ contain the rest. \\
        Update $\wt{\cS}^{(t-1)}_j$ to $\wt{\cS}^{(t-0.5)}_j$ with $\Lambda_{j}^{-1}(U^{(t)-}_j)$ using $\cA^{(Spanner)}_j$. \\
        Update $\wt{\cS}^{(t-0.5)}_j$ to $\wt{\cS}^{(t)}_j$ via inserting edges of $\Lambda_{j}^{-1}(U^{(t)+}_j)$ directly. \\
        Let $R^{(t)}_j \subseteq E(\wt{\cS}^{(t)}_j)$ be the re-embedded set output by $\cA^{(Spanner)}_j$. \\
        Let $U^{(t)}_{\cS, j}$ be the corresponding update batch that produce $\cS(G^{(t)}, F^{(t)}_j)$ from $\cS(G^{(t-1)}, F^{(t-1)}_j)$. \\
        Append $U^{(t)}_{\cS, j}$ with $\Lambda_j(R_j)$ and output $U^{(t)}_{\cS, j}$. \tcp{Despite edges in $\Lambda_j(R_j)$ remain unchanged in $\cS(G^{(t)}, F^{(t)}_j)$, we force re-insertions on them in the output batch of updates.}
    }
  }
\end{algorithm}

\cref{algo:dynasparsecore} essentially maintains the sparsified core graphs $\SS(G^{(t)}, F^{(t)}_j)$ by passing the core graphs $\mathcal{C}(G^{(t)}, F^{(t)}_j)$ into the dynamic spanner \cref{thm:spanner}. Intuitively, because $F^{(t)}_j$ is decremental, the graph $\mathcal{C}(G^{(t)}, F^{(t)}_j)$ changes by undergoing vertex splits, plus additional edge insertions and deletions induced by the update batch $U^{(t)}$.

Similar to \cref{lemma:globalstretch}, \cref{algo:dynasparsecore} treats each update batch $U^{(t)}$ to $G$ as $O(\Enc(U^{(t)}))$ edge insertions/deletions and isolated vertex insertions.
In particular, for any update $x \in U^{(t)}$ that splits a vertex $u \in G^{(t-1)}$, it is treated as an update sequence of inserting one isolated vertex $u^{NEW}$ and then deleting/inserting $\deg_{G^{(t)}}(u^{NEW})$ edges.

However, each edge insertion/deletion causes $\O(1)$ vertex splits in the core graph $\cC(G^{(t)}, F^{(t)}_j)$.
As vertices in $\cC(G^{(t)}, F^{(t)}_j)$ could have degree $\Omega(k)$, we cannot afford treating vertex splits in the core graph as a sequence of edge insertions/deletions.
This would represent $\cS(G^{(t)}, F^{(t)}_j)$ using updates of total encoding size $O(k \cdot \sum_t \Enc(U^{(t)})) = O(m)$ instead of $O(m^{1+o(1)} / k).$
Using the dynamic spanner of \cref{thm:spanner} resolves the issue as it handles vertex splits with low recourse.
In particular, $\cS(G^{(t)}, F^{(t)}_j)$ can be represented using a sequence of updates with total encoding size $O(m^{o(1)} \cdot \sum_t \Enc(U^{(t)})).$

Formalizing this approach requires discussion of several technical points.
First, we cannot simply maintain the spanner of $\cC(G^{(t)}, F^{(t)}_j)$ using \cref{thm:spanner} which does not support edge insertions.
Instead of modifying the dynamic spanner algorithm, we deal with edge insertions na\"ively by inserting each of them to $\cS(G^{(t)}, F^{(t)}_j)$.
As the total number of edge insertion is at most $\sum_{t \in [\tau]}\Enc(U^{(t)}) = o(m / k)$, $\cS(G^{(t)}, F^{(t)}_j)$ is still sparse enough.

Second, vertices in core graphs $\cC(G^{(t)}, F^{(t)}_j), j \in [B]$ might have degree $\Omega(k).$
To ensure a maximum degree bound of $\O(k)$ which is required by \cref{thm:spanner}, we artificially split vertices in $\cC(G^{(t)}, F^{(t)}_j)$ to create a graph $\wt{\cC}^{(t)}_j$ on which we maintain the spanner.
Precisely, we create a new vertex $u_W$ in $\wt{\cC}^{(0)}_j$ for each piece $W$ in the partition $\cW$ of the forest $F_j^{(0)}$, and a vertex $u_r$ for each root in the initial forest $F_j^{(0)}$.
Throughout the execution, we ensure that every vertex of $\wt{\cC}^{(t)}_j$ is either $u_r$ for some $r$ being a root in the current forest, or $u_X$ for some connected component $X \subseteq W$ of an initial piece $W \in \cW.$
In the former case, $u_r$ corresponds to a single vertex in the original graph $G^{(t)}$ and thus it is never split due to edge removals from the forest $F^{(t)}_j.$
In the later case, $u_X$ corresponds to the set of vertices $X \setminus R$ and thus its degree is bounded by $\O(k)$ due to \ref{item:degbound} of \cref{lemma:globalstretch}.

\begin{proof}[Proof of \Cref{lemma:hintedsparsecore}]

We first argue that the graph $\wt{\cC}_j$ for all $j \in [B]$ has maximum degree $\O(k)$ and $O(m/k)$ vertices, and undergoes a total of $O(m/k)$ vertex splits, and edge insertions/deletions.
This shows that the application of the dynamic spanner algorithm in \cref{thm:spanner} is efficient.

For any $j \in [B]$, each vertex of $\wt{\cC}_j$ is either $u_r$ for some root $r \in R$ or $u_{X}$ for some connected component $X$ of an initial piece $W \in \cW.$
In the case of $u_r$, it will not be split further.
In the case of $u_{X}$, it corresponds to the set of vertices $X \setminus R$.
Since $X$ is a connected component in $F$ of an initial piece $W \in \cW$, the degree of $u_{X}$ is at most $\deg_G(W \setminus R)$ which is $\O(k)$ due to \ref{item:degbound} of \cref{lemma:globalstretch}.

The data structure implementing \cref{lemma:globalstretch} inside $\cA^{(Core)}$ ensures that $F_j$ is decremental.
Edge deletions in $F_j$ does not affect either gradients nor lengths of edges in $\cC(G, F_j)$ (\cref{def:coregraph}).
Thus, one edge deletion in $F_j$ corresponds to only a single vertex split in $\cC(G, F_j).$
The total number of vertex splits happened to $\cC(G, F_j)$ can be bounded by the number of edge removals in $F_j.$
The number is $O(m/k + q \log^2 n) = O(m/ k)$ for $q \defeq \sum_{t=1}^{\tau} \Enc(U^{(t)}) \le m / (k\log^2 n)$ by item \ref{item:cccount} of \cref{lemma:globalstretch}.
Similarly, each edge deletion to $F_j$ causes one vertex split in $\wt{\cC}_j$.
To see this, first note that no root vertices $u_r \in \wt{\cC}_j$ are ever split.
For the deletion of an edge $e$ to $F_j^{(t)}$, let $W \in \cW$ be the partition piece containing $e$.
The vertex $u_W$ may have been split further already, so let $e$ be currently inside the connected component $X \subseteq W$.
Now, because $\partial W \subseteq R$ at all times, we get that only $u_{X}$ was split in $\wt{\cC}_j$, as desired.

After updating $F_j^{(t)}$ and the enlarged vertex set of $\cC(G^{(t)}, F_j^{(t)})$, we process every update of $\bigcup_t U^{(t)}$ naively as $O(q)$ edge updates.
As each edge update to $\cC(G^{(t)}, F_j)$ corresponds to one edge update to $\wt{\cC}_j$, the number of edge updates happened to $\wt{\cC}_j$ is also $O(q) = O(m/(k \log^2 n)).$
It remains to bound the initial number of vertices in $\wt{\cC}_j$ by $O(m/k).$
As noted in \cref{algo:dynasparsecore}, there is one vertex per root of the initial forest $F_j$ and one vertex per cluster of the partition $\cW.$
The number of roots initially is $O(m/k)$ (\cref{lemma:globalstretch}).
The number of clusters in $\cW$ is also $O(m/k)$ (\cref{lemma:globalstretch}).
Thus, $\wt{\cC}_j$ has $O(m/k)$ vertices initially.

As noted in \cref{algo:dynasparsecore}, it is at most twice the initial number of roots in $F_j$ which is $O(m/k).$

\paragraph{Bounding the total size of $U^{(t)}_{\cS, j}$ (\cref{item:SCGLowRec}):}
  Fix some $j \in [B].$
  As discussed above, processing all updates in the data structure of \Cref{lemma:globalstretch} causes at most $O(m / k +  \sum_{t} \Enc(U^{(t)})\log^2 m) = O(m / k)$ vertex splits to $\wt{\cC}_j$.
  So, the graph $\wt{\mathcal{C}}_j$ undergoes at most $O(m/k)$ vertex splits and edge insertions/deletions.
  By item \ref{item:lowReEmbed} of \cref{thm:spanner}, the data structure $\cA^{(Spanner)}_i$ outputs the re-embedded set $R_j^{(t)}$ of amortized size at most $\gamma_r$, by taking $L = (\log m)^{1/4}$ in \cref{thm:spanner}.
  Thus, the total size of re-embedded edges $\sum_t |R_j^{(t)}|$ is bounded by $O(m\gamma_r/k)$. Similarly, \cref{thm:spanner} also shows that $\SS(G^{(t)},F_j^{(t)})$ are $(\gamma_s,\gamma_c,\gamma_l)$ sparsified core graphs with the embeddings $\Pi^{(t)}_j$.

Now, we move towards checking the remaining conditions: showing the hidden stable-flow chasing property of the outputs on $\SS(G^{(t)},F^{(t)}_j)$ for all $j \in [B]$, and \eqref{eq:hintedsparsecore}.
For simplicity, we use $\Pi^{(t)}_j$ to denote the embedding $\Pi_{\cC(G^{(t)}, F^{(t)}_j) \to \cS(G^{(t)}, F^{(t)}_j)}$ throughout the remainder of this proof.
\paragraph{Showing hidden stable-flow chasing property (\cref{item:SCGHSFC}):}

$\bc^{(t),\SS(G^{(t)}, F^{(t)}_j)}$ and $\bw^{(t),\cS(G^{(t)}, F^{(t)}_j)}$ form a valid pair by \cref{lemma:passsparsecore}. Therefore, items \ref{item:circulation} and \ref{item:width} of \cref{def:hiddenStableFlowChasing} are satisfied.

Next, we prove item \ref{item:widthstable} of \cref{def:hiddenStableFlowChasing}.
At any stage $t \in [\tau]$ and any edge $e \in \cS(G^{(t)}, F_j^{(t)})$ for some $j \in [B]$, suppose $e$ also appears in an earlier stage $t'$, i.e. $e \in \cS(G^{(t')}, F_j^{(t')})$ for some $t' < t.$
$e$ is not included in any of $U^{(s)}_{\cS, j}, s \in (t', t].$
Thus, we have $(\Pi^{(t)}_j)^{-1}(e) \subseteq (\Pi^{(t')}_j)^{-1}(e)$ otherwise $e$ is included in some $U^{(s)}_{\cS, j}, s \in (t', t]$ due to the definition of re-embedded set (\cref{prop:lowRecourseSpanner} of \cref{thm:spanner}).

For any edge $e' \in (\Pi^{(t)}_j)^{-1}(e)$, it exists in the core graph at both stage $t$ and $t'$, i.e. $e' \in \cC(G^{(t)}, F_j^{(t)})$ and $\cC(G^{(t')}, F_j^{(t')}).$
Let ${e'}^G$ be its pre-image in $G.$
${e'}^G$ also exists in $G$ at both stage $t$ and $t'$.
Since $G$ is undergoing hidden stable-flow chasing updates, by item \ref{item:widthstable} of \cref{def:hiddenStableFlowChasing} we have
\begin{align*}
    \bw^{(t), G^{(t)}}_{{e'}^G} \le 2 \cdot \bw^{(t'), G^{(t')}}_{{e'}^G}.
\end{align*}
\cref{def:passcore} and the immutable nature of $\wstr$ from \cref{lemma:globalstretch} yields
\begin{align}
\label{eq:coreWidthStable}
    \bw^{(t), \cC(G^{(t)}, F^{(t)}_j)}_{e'} = \wstr^{T_j, \bell}_{{e'}^G}  \bw^{(t), G^{(t)}}_{{e'}^G} \le 2 \cdot \wstr^{T_j, \bell}_{{e'}^G}  \bw^{(t'), G^{(t')}}_{{e'}^G} = 2 \cdot \bw^{(t'), \cC(G^{(t')}, F^{(t')}_j)}_{e'}.
\end{align}

Combining with the fact that $(\Pi^{(t)}_j)^{-1}(e) \subseteq (\Pi^{(t')}_j)^{-1}(e)$ and \cref{def:passsparsecore} yields the following and proves item \ref{item:widthstable} of \cref{def:hiddenStableFlowChasing}:
\begin{align*}
    \bw^{(t),\cS(G^{(t)}, F^{(t)}_j)}_{e} 
    &= 2 \cdot \sum_{e' \in \left(\Pi^{(t)}_j\right)^{-1}(e)} \bw^{(t), \cC(G^{(t)}, F^{(t)}_j)}_{e'} \\
    &\le 2 \cdot 2 \cdot \sum_{e' \in \left(\Pi^{(t)}_j\right)^{-1}(e)} \bw^{(t'), \cC(G^{(t')}, F^{(t')}_i)}_{e'} \\
    &\le 2 \cdot 2 \cdot \sum_{e' \in \left(\Pi^{(t')}_j\right)^{-1}(e)} \bw^{(t'), \cC(G^{(t')}, F^{(t')}_j)}_{e} = 2 \cdot \bw^{(t'),\cS(G^{(t')}, F^{(t')}_i)}_{e}.
\end{align*}

  Item~\ref{item:quasipoly} follows directly from the definition of $\bell^{(t),\cS(G^{(t)}, F^{(t)}_j)}$ and $\bw^{(t),\cS(G^{(t)}, F^{(t)}_j)}.$

\paragraph{Upper-bounding $\|\bw^{(t),\cS(G^{(t)}, F^{(t)}_{i^*})}\|_1$ (\cref{item:SCGWidth}):}
  For any $i$, \Cref{lemma:passsparsecore} yields
  \begin{align*}
    \norm{\bw^{(t),\SS(G^{(t)}, F^{(t)}_i)}}_1 \le O(\gamma_l) \norm{\bw^{(t),\cC(G^{(t)}, F^{(t)}_i)}}_1.
  \end{align*}
  \Cref{lemma:hintedcore} gives that there is an $i^* \in [B]$ such that for all $t$,
  \begin{align*}
    \norm{\bw^{(t),\cC(G^{(t)}, F^{(t)}_{i^*})}}_1 \le \O(\gamma_{LSST}\log^2 n)\|\bw^{(0)}\|_1 + \|\bw^{(t)}\|_1.
  \end{align*}
  Combining these gives the desired bound.

\paragraph{Runtime:}
  Time spent on the data structure implementing \Cref{lemma:globalstretch} is $\O(mk).$
  Before using the dynamic spanner of \Cref{thm:spanner}, we split each $\cC(G, F_j), j \in [B]$ in  Line~\ref{line:splitBeforeSpanner}. 
  This makes the max degree of the input graph to each of $\cA^{(Spanner)}_j, j \in [B]$ being $\Theta(k).$ By \cref{lemma:hintedcore} none of these vertices is split in an update to $\cC(G, F_j)$, so we may still apply \Cref{thm:spanner}.
  Thus, the time spent on every dynamic spanner is $O(mk\gamma_r)$.
\end{proof}

\subsubsection{Maintaining a Branching Tree-Chain}
Note that definitions \cref{def:passcore,def:passsparsecore} give a way to pass $\bc^{(t)},\bw^{(t)}$ from the top level graph $G$ downwards through a tree-chain (\cref{def:chain}). We formalize this by proving that we can dynamically maintain a branching tree-chain (\cref{def:chain}).
\begin{lemma}[Dynamic Branching Tree-Chain]
  \label{lemma:dynamicchain}
  \cref{algo:dynamicchain} takes as input a parameter $d$, a dynamic graph $G^{(t)}$ undergoes $\tau$ batches of updates $U^{(1)}, \dots, U^{(\tau)}$ with gradients $\bg^{(t)}$, length $\bell^{(t)}$ at stage $t=0, \dots, \tau$ that satisfies the hidden stable-flow chasing property (\cref{def:hiddenStableFlowChasing}) with hidden circulation $\bc^{(t)}$, and width $\bw^{(t)}.$
  The algorithm explicitly maintains a $B$-branching tree-chain (\cref{def:chain}) with previous rebuild times $\prev^{(t)}_0,\dots,\prev^{(t)}_d$ (\cref{def:rebuildtimes}). If $\bc^{(t),G},\bw^{(t),G}$ for $G \in \mathcal{G}^{(t)}_i$ for all $0 \le i \le d$ are recursively defined via \cref{def:passcore}, \ref{def:passsparsecore} then there is a tree-chain $G_0,\dots, G_d$ with
  \begin{align}
    \label{eq:dynamicchain}
    \|\bw^{(t),G_i}\|_1 \le \O(\gamma_l)^i\left(\sum_{j=0}^i\|\bw^{(\prev^{(t)}_j)}\|_1 + \|\bw^{(t)}\|_1\right) \forall i \in \{0,1,\dots,d\}.
  \end{align}
  The algorithm succeeds with high probability and runs in total time  $m^{1/d}\O(\gamma_s\gamma_r)^{O(d)}(m+Q)$ for $Q \defeq \sum_t \Enc(U^{(t)}) \le \poly(n)$.
\end{lemma}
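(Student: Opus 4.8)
The plan is to build \cref{algo:dynamicchain} recursively, mirroring the recursive definition of a $B$-branching tree-chain (\cref{def:chain}), and to analyze it by induction on the level $i$. At the top level $i=0$ we have $\mathcal{G}_0 = \{G\}$ with $\bw^{(t),G_0} = \bw^{(t)}$, so \eqref{eq:dynamicchain} holds trivially. For the inductive step, I would maintain at each graph $G_i \in \mathcal{G}_i$ an instance of \cref{algo:dynasparsecore} (i.e.\ \cref{lemma:hintedsparsecore}) with reduction parameter $k \defeq m^{1/d}$, producing the $B$ sparsified core graphs $\SS(G_i, F)$ for $F \in \mathcal{F}^{G_i}$, which become the graphs of level $i+1$; at the bottom level $d$ we just store a single low-stretch tree per graph via \cref{thm:an}. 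The recursion is invoked so that each level-$i$ graph feeds its output update batches $U^{(t)}_{\SS,j}$, gradients $\bg^{(t),\SS(G_i,F_j)}$, and lengths $\bell^{(t),\SS(G_i,F_j)}$ as the input stream to the level-$(i+1)$ instance; \cref{item:SCGHSFC} of \cref{lemma:hintedsparsecore} guarantees that this input stream again satisfies the hidden stable-flow chasing property (\cref{def:hiddenStableFlowChasing}), so the recursive call is legal. The rebuild schedule is as described in \cref{overview:dssetup}: whenever a level-$i$ graph accumulates $\approx m/k^i$ encoded updates we rebuild levels $i, i+1, \dots, d$ and reset $\prev^{(t)}_i, \dots, \prev^{(t)}_d$ to $t$; this keeps every instance of \cref{lemma:hintedsparsecore} within its $\sum \Enc(U^{(t)}) \le m/(k\log^2 n)$ precondition at its own level.

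To get \eqref{eq:dynamicchain}, I would argue inductively. Fix a level $i$ and suppose the bound holds for level $i$ along some tree-chain $G_0,\dots,G_i$. Apply \cref{item:SCGWidth} of \cref{lemma:hintedsparsecore} to the level-$i$ instance running on $G_i$: there is a branch $j^*$ (the one guaranteed by the Markov/union-bound argument in \cref{lemma:hintedcore}) such that
\begin{align*}
\|\bw^{(t),\SS(G_i, F^{(t)}_{j^*})}\|_1 \le \O(\gamma_l)\left(\|\bw^{(\prev^{(t)}_{i+1})}_{G_i}\|_1 + \|\bw^{(t),G_i}\|_1\right),
\end{align*}
where $\bw^{(\prev^{(t)}_{i+1})}_{G_i}$ plays the role of ``$\bw^{(0)}$'' in \cref{lemma:hintedsparsecore} because level $i+1$ was last rebuilt at time $\prev^{(t)}_{i+1}$, and where we have used the hidden-stable-flow-chasing stability (\cref{item:widthstable}) to relate the width at the rebuild time to the width at the earlier rebuild time of level $i$. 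Setting $G_{i+1} = \SS(G_i, F^{(t)}_{j^*})$ and plugging in the inductive hypothesis $\|\bw^{(t),G_i}\|_1 \le \O(\gamma_l)^i(\sum_{j=0}^i \|\bw^{(\prev^{(t)}_j)}\|_1 + \|\bw^{(t)}\|_1)$, together with $\|\bw^{(\prev^{(t)}_{i+1})}_{G_i}\|_1 \le \O(\gamma_l)^i(\sum_{j=0}^{i}\|\bw^{(\prev^{(t)}_j)}\|_1 + \|\bw^{(\prev^{(t)}_{i+1})}\|_1)$ (the inductive bound re-applied at time $\prev^{(t)}_{i+1}$, using that $\prev^{(t)}_j$ for $j \le i$ is unchanged across the interval since outer levels only rebuild when inner ones do), I would collect terms to obtain $\|\bw^{(t),G_{i+1})}\|_1 \le \O(\gamma_l)^{i+1}(\sum_{j=0}^{i+1}\|\bw^{(\prev^{(t)}_j)}\|_1 + \|\bw^{(t)}\|_1)$, which is exactly \eqref{eq:dynamicchain} at level $i+1$. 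The failure probability is controlled by a union bound over the $B^d = O(\log n)^d = m^{o(1)}$ graphs in the chain and over the $\poly(n)$ stages, each application of \cref{lemma:hintedsparsecore} and \cref{thm:spanner} succeeding with high probability.

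For the runtime, each level-$i$ instance of \cref{lemma:hintedsparsecore} runs in $\O(m_i k \gamma_r)$ time per full rebuild where $m_i \approx m/k^{i-1}$ is the edge count at that level (one factor of $k$ saved from vertex reduction, one $\gamma_s$ from the spanner's sparsity, which I absorb into $\gamma_r$-type factors), and a level-$i$ rebuild happens every $\approx m/k^i$ updates, so amortized cost per top-level update at level $i$ is $k \cdot \O(\gamma_s\gamma_r)^{O(1)}$; there are $d$ levels and a branching factor $B = O(\log n)$ per level, giving total time $k \cdot \O(\gamma_s \gamma_r)^{O(d)} \cdot (m + Q)$ with $k = m^{1/d}$, as claimed. \textbf{The main obstacle} I anticipate is the bookkeeping in the inductive width bound: because the $\prev^{(t)}_j$ are interleaved (level $i+1$ can be rebuilt many times between two rebuilds of level $i$, but never the reverse), one must be careful that re-applying the inductive hypothesis "at the rebuild time of the next level down" does not introduce uncontrolled blowup, and that the stability condition \cref{item:widthstable} is being invoked on edges that genuinely persist across the relevant interval — this is where the \emph{decremental} structure of the sparsifier's preimages (\cref{prop:lowRecourseSpanner} of \cref{thm:spanner}) and the immutability of $\wstr$ (\cref{lemma:globalstretch}), already exploited inside the proof of \cref{lemma:hintedsparsecore}, do the real work, and I would make sure the recursion threads those guarantees through cleanly rather than re-deriving them.
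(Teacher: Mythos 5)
Your proposal is correct and follows essentially the same approach as the paper: run an instance of \cref{lemma:hintedsparsecore} at each level with parameter $k = m^{1/d}$, thread the hidden stable-flow chasing property through the recursion via \cref{item:SCGHSFC}, prove \eqref{eq:dynamicchain} by induction on level and time using \cref{item:SCGWidth}, and amortize rebuild costs. One small observation: you identify the "$\bw^{(0)}$" of the level-$i$ instance of \cref{lemma:hintedsparsecore} with the width at time $\prev^{(t)}_{i+1}$, which matches the pseudocode in \cref{algo:dynamicchain} (where $\Rebuild(i,\cdot)$ re-initializes $\cA^{(\textsc{SparseCore})}_{G_i}$ and sets $\prev_{i+1}$); the paper's displayed inequality uses $\prev^{(t)}_i$ instead, which is a minor notational inconsistency with the surrounding text and the code — both conventions yield the claimed bound after applying the inductive hypothesis at the relevant rebuild time, but your reading is the more careful one. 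The remark about \cref{item:widthstable} in your inductive step is superfluous — that stability is already consumed inside the proof of \cref{lemma:hintedsparsecore}, so at this level you only need to re-apply the inductive hypothesis at time $\prev^{(t)}_{i+1}$ and use monotonicity of the rebuild times to conclude $\prev^{(\prev^{(t)}_{i+1})}_j = \prev^{(t)}_j$ for $j \le i$ — but it does no harm.
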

\begin{rem}
  \label{rem:dynamicchain}
  \cref{thm:norebuild} maintains the data structure implementing \cref{lemma:dynamicchain} on dynamic graphs undergoing only edge insertions/deletions.
  However, it can be modified to also support vertex splits since it is built using \cref{lemma:hintedsparsecore,lemma:globalstretch,thm:spanner}, all which support vertex splits.
\end{rem}

\begin{algorithm}[t]
  \caption{Dynamically maintains a $B$-branching tree chain (\cref{def:chain}). Procedure $\Initialize$ initializes all variables, $\Rebuild$ rebuilds the data structure of level at least $d_s$ at stage $t_0$, and $\DynamicBranchingChain$ takes updates to $G^{(t)}$. \label{algo:dynamicchain}}
  \SetKwProg{Globals}{global variables}{}{}
  \SetKwProg{Proc}{procedure}{}{}
  \Globals{}{
    $d \gets \log^{1/8}n$: number of levels in the maintained branching tree chain. \\
    $k \assign m^{1 / d}$: reduction factor used in \cref{lemma:globalstretch}. \\
    $B \assign O(\log n)$: number of sparsified core graphs maintained in \cref{lemma:hintedsparsecore}. \\
  }
  \Proc{$\Initialize(G^{(0)}, \bell)$}{
    Initialize $\cG_0 = \{G^{(0)}\}.$ \\
    $\Rebuild(0, 0)$
  }
  \Proc{$\Rebuild(i_0, t_0)$}{
    \For{$i = i_0, \dots, d - 1$}{
      $\prev_{i+1} \assign t_0.$ \\
      $\cG_{i+1} \assign \{\}$ \\
      \For{$G \in \cG_i$}{
        $\cA^{(SparseCore)}_G.\Initialize(G, \bell_G)$ \\
        For $j \in [B]$, add $\cS(G, F_j)$ to $\cG_{i + 1}$.
      }
    }
  }
  \Proc{$\DynamicBranchingChain(G^{(t)}, U^{(t)}, \bg^{(t)}, \bell^{(t)})$}{
    $U^{(t)}_{G^{(t)}} \assign U^{(t)}$ \\
    \For{$i = 0, \dots, d - 1$}{
      \If{The accumulated encoding size of updates of any $G \in \cG_i$ exceeds $m (\gamma_s / k)^{i+1} / \log^2 n$}{
        $\Rebuild(i, t)$
      }
      \For{$G \in \cG_i$}{
        $\Set{U^{(t)}_{\cS(G, F_j)}}{j \in [B]} \assign \cA^{(SparseCore)}_G.\DynamicSparseCore(G, U^{(t)}_G)$
      }
    }
  }
\end{algorithm}

\cref{algo:dynamicchain} initializes a $B$-branching tree chain as in \cref{def:chain}.
For every graph $G \in \cG_i$ for some level $i$, it maintains a collection of forests, trees, and sparsified core graph using the dynamic data structure from \cref{lemma:hintedsparsecore}.

However, the data structure of \cref{lemma:hintedsparsecore} can only take up to $m /(k\log^2 n)$ updates if the input graph has at most $m$ edges at all time.
This forces us to rebuild the data structure every once in a while.
In particular, we rebuild everything at every level $i \ge i_0$ if any of the data structures of \cref{lemma:hintedsparsecore} on some level $i_0$ graph $G \in \cG_{i_0}$ has accumulated too many updates (approximately $m/k^{i_0}$).
We will show that the cost for rebuilding amortizes well across dynamic updates.

\begin{proof}[Proof of \cref{lemma:dynamicchain}]
  At any level $i = 0, \dots, d$, there are at most $O(\log n)^i$ graphs maintained at $i$-th level at any given stage $t$ due to \cref{lemma:hintedsparsecore}.
  That is, we have $|\cG^{(t)}_i| \le O(\log n)^i$ for any $t$ and $i.$
  At any stage $t$ and level $i > 0$, every graph $G \in \cG^{(t)}_i$ has at most $m \gamma_s^{i-1} / k^{i}$ vertices and $m (\gamma_s / k)^{i}$ edges.
  This is again due to \cref{lemma:hintedsparsecore}.

  To analyze the runtime, note that every $m \gamma_s^i / \left(k^{i+1} \log^2 n\right)$ updates to some graph $G \in \cG_i$ create $O(m\gamma_s^i\gamma_r/k^{i+1})$ updates to every $\cS(G, F_j)$ for $j \in [B]$ by \cref{lemma:hintedsparsecore}.
  Therefore, over the course of $Q$ updates to the top level graph $G$, the total number of updates to the $B = O(\log n)$ graphs at level $i$ is
  \begin{align*}
    Q \cdot O\left(\gamma_r \log^3 n\right)^i.
  \end{align*}

  Next we analyze the runtime cost of rebuilding level $i_0$.
  By \cref{lemma:hintedsparsecore}, it takes $O(m(\gamma_s/k)^ik\gamma_r)$ time to initialize $\cA^{(\ref{lemma:hintedsparsecore})}$ for any graph $G \in \cG_i$ at any level $i.$
  Therefore, the cost of rebuilding the graphs of every level $i \ge i_0$ is
  \begin{align*}
    \sum_{i = i_0}^d O(\log n)^i \cdot m (\gamma_s / k)^{i}k\gamma_r = \frac{m}{k^{i_0}} \cdot \O(\gamma_s)^dk\gamma_r.
  \end{align*}
  However, the rebuild happens at most every $m (\gamma_s / k)^{i_0} / \log^2 n$ total updates to graphs at level $i_0.$
  Thus, over the course of at most $Q \cdot O\left(\gamma_r \log^3 n\right)^{i_0}$ updates to every graph at level $i_0$, the total runtime cost spent on rebuilding level $i_0$ is at most
  \begin{align*}
    \left(m + Q\right)k\gamma_r\O(\gamma_s\gamma_r)^{O(d)}.
  \end{align*}
  The overall runtime bound follows because $k = m^{1/d}$.

  We now show \eqref{eq:dynamicchain} by induction on $t$ and the the level $i$, and prove the result for level $i+1$ at a time $t$ given a partial tree chain $G_0, \dots, G_i$ satisfying \eqref{eq:dynamicchain}. Let $G^{(\prev^{(t)}_i)}_i$ be the version of graph $G_i$ when it was rebuilt at time $\prev^{(t)}_i$.

  If $\prev^{(t)}_{i+1} < t$, we can use the same chain $G_0, \dots, G_{i+1}$ as the change from stage $\prev^{(t)}_{i+1}$ because \cref{lemma:hintedsparsecore} guarantees that there is an index $j^* \in [B]$ which satisfies \eqref{eq:hintedsparsecore} for all stages in $[\prev^{(t)}_{i+1}, t]$, where $G_{i+1} = \SS(G_i, F_{j^*})$. By induction and \cref{lemma:hintedsparsecore}, we deduce that
    \begin{align*}
\norm{\bw^{\left(t\right),G_{i+1}}}_1
&
\overset{\left(i\right)}{\le}
\O\left(\gamma_l\right)
\left(\norm{
  \bw^{\left(\prev^{\left(t\right)}_i\right),
  G^{\left(\prev^{\left(t\right)}_i\right)}_i}
}_1
+
\norm{\bw^{\left(t\right),G_i}}_1\right) \\
                            &\overset{(ii)}{\le} \O(\gamma_l)\left[\O(\gamma_l)^i\sum_{j=0}^i \norm{\bw^{\left(\prev^{(t)}_j\right)}}_1 + \O(\gamma_l)^i\left(\sum_{j=0}^i \norm{\bw^{\left(\prev^{(t)}_j\right)}}_1 + \norm{\bw^{(t)}}_1\right) \right] \\
                            &\le \O(\gamma_l)^{i+1}\left(\sum_{j=0}^i \norm{\bw^{\left(\prev^{(t)}_j\right)}}_1 + \norm{\bw^{(t)}}_1\right).
  \end{align*}
  $(i)$ is because the vector $\bw^{(0)}$ in \cref{lemma:hintedsparsecore} corresponds to $\bw^{(\prev^{(t)}_i),G^{(\prev^{(t)}_i)}_i}$, as $\prev^{(t)}_i$ is the initialization time of level $i$. $(ii)$ is by induction on the stage $t$ and level $i$. Thus we have shown \eqref{eq:dynamicchain} by induction, which completes the proof.
\end{proof}

\subsection{Finding Approximate Min-Ratio Cycles in a Tree-Chain}
\label{sec:findcycles}
In this section we explain how to extract a cycle $\bDelta$ from a branching tree-chain (such as the one maintained in \cref{lemma:dynamicchain}) with large quality $|\bg^\top\bDelta|/\|\mL\Delta\|_1$, satisfying the guarantees of \cref{thm:norebuild}. As a branching tree-chain consists of $O(\log n)^d$ tree-chains, we focus on getting a cycle $\bDelta$ out of a single tree-chain. More formally, our setting for much of this section will be a tree-chain $G_0, G_1, \dots, G_d$ (\cref{def:chain}), with a corresponding tree $T \defeq T^{G_0,\dots,G_d}$ as defined in \cref{def:treesfromchain}.
For $\bg,\bell$ and a valid pair $\bc,\bw$ (\cref{def:validpair}), we can define $\bc^{G_0} \defeq \bc$ and $\bw^{G_0} \defeq \bw$, and $\bc^{G_i}$ and $\bw^{G_i}$ recursively for $1 \le i \le d$ via \cref{def:passcore,def:passsparsecore}. Let $\bell^{G_i},\bg^{G_i}$ be the lengths and gradients on the graphs $G_i$, and $\bell^{\mathcal{C}(G_i,F_i)},\bg^{\mathcal{C}(G_i,F_i)}$ be the lengths and gradients on the core graphs.

Note that every edge $e^G \in E(G) \setminus E(T)$ has a ``lowest'' level that the image of it (which we call $e$) exists in a tree chain, after which it is not in the next sparsified core graph. In this case, the edge plus its path embedding induce a cycle, which we call the \emph{sparsifier cycle} associated to $e$. In the below definition we assume that the path embedding of a self-loop $e$ in $\mathcal{C}(G_i,F_i)$ is empty.
\begin{definition}
\label{def:sparsifiercycle}
Consider a tree-chain $G_0 = G,\dots,G_d$ (\cref{def:chain}) with corresponding tree $T \defeq T^{G_0,\dots,G_d}$ where for every $0 \le i \le d$, we have a core graph $\mathcal{C}(G_i, F_i)$ and sparsified core graph $\SS(G_i, F_i) \subseteq \mathcal{C}(G_i, F_i)$, with embedding $\Pi_{\mathcal{C}(G_i, F_i)\to\SS(G_i, F_i)}$.

We say an edge $e^G \in E(G)$ is at level $\lvl_{e^G} = i$ if its image $e$ is in $E(\mathcal{C}(G_i, F_i)) \backslash E(\SS(G_i, F_i))$. Define the \emph{sparsifier cycle} $a(e)$ of such an edge $e = e_0 \in \mathcal{C}(G_i, F_i)$ to be the cycle $a(e) = e_0  \oplus \rev(\Pi_{\mathcal{C}(G_i, F_i)\to\SS(G_i, F_i)}(e_0)) = e_0  \oplus e_1 \oplus \cdots \oplus e_L$. We define the preimage of this sparsifier cycle in $G$ to be the \emph{fundamental chain cycle}
     \[ 
     a^G(e^G) = e_0^G \oplus T[v^G_0, u^G_1] \oplus e_1^G \oplus T[v^G_1, u^G_2] \oplus \dots \oplus e_L^G \oplus T[v^G_L, u^G_{L+1}],
     \]
    where $e^G_i = (u_i^G, v_i^G)$ is the preimage of edge $e_i$ in $G$ for each $i \in [L]$ and where we define $u^G_{L+1} = u^G_0$.
\end{definition}
We let $\ba(e)$ and $\ba^G(e^G)$ be the associated flow vectors for the sparsifier cycle $a(e)$ and fundamental chain cycle $a^G(e^G)$.

At a high level, our algorithm will maintain the total gradient of every fundamental chain cycle explicitly. Note that this implies that the gradient of at most $m^{o(1)}$ fundamental chain cycles change per iteration on average. Also, the algorithm maintains length \emph{overestimates} of each fundamental chain cycle, as maintaining the true length dynamically is potentially expensive. The algorithm will return the overall best quality fundamental chain cycle.
\begin{definition}
  \label{def:fundChainCycleOverest}
  Consider a tree-chain $G=G_0,\dots,G_d$ with corresponding spanning tree $T \defeq T^{G_0, \dots, G_d}$.
  For any edge $e^G \in E(G) \setminus E(T)$ at level $i$ with image $e$ in $\cC(G_i,F_i) \setminus \SS(G_i,F_i)$ we define $\wlen_{e^G}^G$, an overestimate on the length of $e^G$'s fundamental chain cycle, as $\wlen_{e^G} \defeq \langle \bell^{\cC(G_i,F_i)}, |\ba(e)| \rangle.$
\end{definition}
Because the lengths and gradients on all edges in all the $G_i$, and embeddings $\Pi_{\cC(G_i,F_i)\to\SS(G_i,F_i)}$ are maintained explicitly in \cref{lemma:dynamicchain}, we can store length overestimates $\wlen_{e^G}$ for all fundamental chain cycles, and their total gradients with a constant overhead.

There are two more important pieces to check. First, we need to check that the gradients defined on the core graphs \cref{def:coregraph} indeed given the correct total gradient for each cycle, and that the values $\wlen_{e^G}$ are indeed overestimates for the lengths of all the fundamental chain cycles $\ba^G(e^G)$. Then we will show that using the length overestimates $\wlen_{e^G}$ still allows us to return a sufficiently good fundamental chain cycle.
\begin{lemma}[Gradient correctness]
\label{lemma:gradcorrect}
Let $e^G \notin E(T)$ be an edge with $\lvl_{e^G} = i$ and let $e$ be its image in $G_i$. Then the total gradient of the cycle $a(e)$ and its preimage $a^G(e^G)$ are the same, i.e. $\langle \bg^{\cC(G_i,F_i)}, \ba(e) \rangle = \langle \bg, \ba^G(e^G) \rangle$.
\end{lemma}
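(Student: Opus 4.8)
\textbf{Proof plan for \cref{lemma:gradcorrect}.}

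The plan is to unwind the definitions of the core-graph gradient (\cref{def:coregraph}) and the preimage operator $\cdot^G$ (the portal routing from \cref{sec:overview}, formalized in \cref{def:sparsifiercycle}), and verify that the ``extra'' tree-path contributions telescope to zero. Recall from \cref{def:sparsifiercycle} that $a(e) = e_0 \oplus e_1 \oplus \cdots \oplus e_L$ where $e_0 = e$ and $e_1,\dots,e_L$ are the edges of $\rev(\Pi_{\cC(G_i,F_i)\to\SS(G_i,F_i)}(e_0))$, all edges of $\cC(G_i,F_i)$; and its preimage in $G$ is
\[
  a^G(e^G) = e_0^G \oplus T[v^G_0, u^G_1] \oplus e_1^G \oplus T[v^G_1, u^G_2] \oplus \dots \oplus e_L^G \oplus T[v^G_L, u^G_{L+1}],
\]
with $u^G_{L+1} = u^G_0$. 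So in vector form $\ba^G(e^G) = \sum_{j=0}^L \bp(e_j^G) + \sum_{j=0}^L \bp(T[v^G_j, u^G_{j+1}])$, where I write $\bp(e^G_j)$ for the unit flow along the edge $e^G_j$.

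First I would compute $\langle \bg, \ba^G(e^G)\rangle = \sum_{j=0}^L \bg_{e^G_j} + \sum_{j=0}^L \langle \bg, \bp(T[v^G_j, u^G_{j+1}])\rangle$. Next, using \cref{def:coregraph}, $\bg^{\cC(G_i,F_i)}_{e_j} = \bg_{e^G_j} + \langle \bg, \bp(T[v^G_j, u^G_j])\rangle$ for each $j$ (here $T$ is the tree of the tree-chain, i.e.\ the composite object $T^{G_0,\dots,G_d}$ restricted appropriately; strictly, \cref{def:coregraph} is stated for a single forest, but since the gradient on $G_i$ is itself defined recursively via preimages, expanding once more reduces to the same telescoping pattern — I would note this and handle it by induction on $i$, with the base case $i=0$ where the two sides are literally equal). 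Hence
\[
  \langle \bg^{\cC(G_i,F_i)}, \ba(e)\rangle = \sum_{j=0}^L \bg_{e^G_j} + \sum_{j=0}^L \langle \bg, \bp(T[v^G_j, u^G_j])\rangle.
\]
It then remains to check that $\sum_{j=0}^L \bp(T[v^G_j, u^G_{j+1}]) = \sum_{j=0}^L \bp(T[v^G_j, u^G_j])$ as flow vectors, equivalently that $\sum_{j=0}^L \bp(T[u^G_{j+1}, u^G_j]) = \sum_{j=0}^L \bp(T[v^G_j, u^G_j]) - \sum_{j=0}^L \bp(T[v^G_j, u^G_{j+1}])$ vanishes appropriately. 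The cleanest way: the closed walk $a(e)$ in $\cC(G_i,F_i)$ routes the demand $\sum_j \bb_{e_j} = 0$; mapping each $e_j$ to $F[\root,u_j^G]\oplus e_j^G \oplus F[v_j^G,\root]$ (the portal routing of \cref{sec:overview}) and summing, all root-to-root path segments appear in canceling pairs because consecutive edges $e_j,e_{j+1}$ share an endpoint in $\cC(G_i,F_i)$, so $v_j$ and $u_{j+1}$ contract to the same vertex and route to the same root; since the preimages $F[v_j^G,\root^F_{v_j^G}]$ and $F[\root^F_{u_{j+1}^G},u_{j+1}^G]$ together form exactly $F[v^G_j, u^G_{j+1}]$ up to the shared root, telescoping the chain $u_0 \to v_0 \to u_1 \to \cdots \to u_{L+1} = u_0$ around the cycle makes the discrepancy collapse to $0$.

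The main obstacle is bookkeeping the recursion: the tree $T = T^{G_0,\dots,G_d}$ is the union of preimages of the forests $F_i$ across all levels, and the gradient $\bg^{\cC(G_i,F_i)}$ is defined relative to the tree $T_i$ used at level $i$, not $G$ directly, so the single-level identity of \cref{def:coregraph} must be composed down the chain. I expect the bulk of the argument to be setting up the right induction hypothesis — something like ``$\langle \bg^{G_i}, \bDelta\rangle = \langle \bg, \bPi_{\le i}^{-1}\bDelta\rangle$ for every circulation $\bDelta$ on $G_i$'' (the analogue of the gradient-preservation property from \cref{def:coregraph} and the remark that $\bPi_F$ preserves total gradient, already used implicitly in \cref{sec:overview} and \cref{def:passcore}) — and then the lemma is an immediate instance of that hypothesis applied to the single-level sparsifier cycle $\ba(e)$, since $\ba^G(e^G)$ is exactly its preimage under the composed lift. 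The actual path-cancellation is the standard fundamental-cycle telescoping argument (\cref{fig:TreeCycleCancel}) and is routine once the recursion is organized.
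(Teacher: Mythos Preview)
Your proposal is correct and follows essentially the same approach as the paper: prove that a single-level lift from $\cC(G_i,F_i)$ to $G_i$ preserves the gradient of any cycle, then recurse down to $G_0=G$. The paper's writeup is slightly cleaner in that it states the one-level claim up front and justifies the key telescoping step $\sum_j \bp(T_i[v_j,u_j]) = \sum_j \bp(T_i[v_j,u_{j+1}])$ directly by ``both sides route the same demand on the tree $T_i$'' rather than via explicit root-path cancellation, but your induction hypothesis and your recognition that the $T$ in $a^G(e^G)$ must be decomposed level-by-level are exactly the right organization.
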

\begin{lemma}[Length overestimates]
\label{lemma:lencorrect}
Let $e^G \notin E(T)$ be an edge with $\lvl_{e^G} = i$ and let $e$ be its image in $G_i$. Then the values $\wlen_{e^G}$ overestimate the length of the preimage cycle $a^G(e^G)$, i.e. $\wlen_{e^G} \ge \langle \bell, |\ba^G(e^G)| \rangle$.
\end{lemma}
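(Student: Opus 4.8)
The plan is to prove Lemma~\ref{lemma:lencorrect} by unwinding the definitions of the fundamental chain cycle (\cref{def:sparsifiercycle}) and the overestimate $\wlen_{e^G}$ (\cref{def:fundChainCycleOverest}), and then inductively bounding the length of each piece of the preimage cycle by the corresponding core-graph length. Recall that for $e^G$ at level $i$ with image $e = e_0$ in $\cC(G_i, F_i) \setminus \SS(G_i, F_i)$, the sparsifier cycle is $a(e) = e_0 \oplus e_1 \oplus \cdots \oplus e_L$ where $e_1 \oplus \cdots \oplus e_L = \rev(\Pi_{\cC(G_i,F_i)\to\SS(G_i,F_i)}(e_0))$, and
\[
a^G(e^G) = e_0^G \oplus T[v^G_0, u^G_1] \oplus e_1^G \oplus T[v^G_1, u^G_2] \oplus \dots \oplus e_L^G \oplus T[v^G_L, u^G_{L+1}],
\]
with $u^G_{L+1} = u^G_0$. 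So $\langle \bell, |\ba^G(e^G)| \rangle = \sum_{j=0}^L \bell_{e_j^G} + \sum_{j=0}^L \langle \bell, |\bp(T[v^G_j, u^G_{j+1}])| \rangle$, and the goal is to show this is at most $\wlen_{e^G} = \langle \bell^{\cC(G_i,F_i)}, |\ba(e)| \rangle = \sum_{j=0}^L \bell^{\cC(G_i,F_i)}_{e_j}$.

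First I would handle a single edge $e_j$ appearing in the sparsifier cycle $a(e)$, which lives in $\cC(G_i, F_i)$. By \cref{def:coregraph}, an edge $\hat{f}$ of $\cC(G_i, F_i)$ with preimage $f^G = (x^G, y^G)$ in $G$ has length $\bell^{\cC(G_i,F_i)}_{\hat{f}} = \wstr_{f^G} \bell_{f^G}$, where $\wstr$ is the stretch overestimate from \cref{lemma:globalstretch}. But the preimage of $\hat{f}$ in $G$ inside the fundamental chain cycle construction is the \emph{forest routing} $F_i[\root^{F_i}_{x^G}, x^G] \oplus f^G \oplus F_i[y^G, \root^{F_i}_{y^G}]$ (this is exactly how vertices contracted in $F_i$ are re-expanded when passing down the tree-chain, matching the $T[v^G_j, u^G_{j+1}]$ segments that connect consecutive off-tree edges in $a^G(e^G)$). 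By \cref{lemma:globalstretch} item~\ref{item:stretchbound}, $\str^{F_i,\bell}_{f^G} \le \wstr_{f^G}$, and by \cref{def:stretchf} this means $\bell_{f^G} + \langle \bell, |\bp(F_i[\root_{x^G},x^G])| + |\bp(F_i[y^G,\root_{y^G}])| \rangle \le \wstr_{f^G} \bell_{f^G} = \bell^{\cC(G_i,F_i)}_{\hat{f}}$. Thus the total $\bell$-length of the preimage of each single core-graph edge $e_j$ (the off-tree edge $e_j^G$ plus its two attached tree/forest segments) is bounded by $\bell^{\cC(G_i,F_i)}_{e_j}$.

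Next I would do the induction over levels: at level $i$ the ``tree'' $T^{G_0,\dots,G_d}$ restricted to the relevant portion is $\bigcup_{i'=0}^{d} F_{i'}$ (\cref{def:treesfromchain}), so the segments $T[v^G_j, u^G_{j+1}]$ consist of forest-path pieces at various levels. The key observation is that the length bound $\|\mL \bPi_F^{-1}\bf\|_1 \le \|\wh{\mL}^F \bf\|_1$ from the core-graph discussion (equivalently \cref{lemma:passcore}, which states $\|\bw^{\cC(G,F)}\|_1 \le \sum_e \wstr_e \bw_e$ with the analogous inequality for any flow) lifts pointwise: any flow in the core graph, when pulled back through the forest routing, has $\bell$-length no more than its $\bell^{\cC}$-length. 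Applying this at each of the $d$ levels — or more directly, applying the single-edge bound above to each $e_j$ and summing, noting that the tree-path segments $T[v^G_j,u^G_{j+1}]$ in $a^G(e^G)$ are precisely the concatenation of the forest-routing tails of consecutive off-tree edges — yields $\langle \bell, |\ba^G(e^G)| \rangle \le \sum_{j=0}^L \bell^{\cC(G_i,F_i)}_{e_j} = \langle \bell^{\cC(G_i,F_i)}, |\ba(e)|\rangle = \wlen_{e^G}$, as desired.

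The main obstacle I anticipate is bookkeeping the correspondence between the tree-path segments $T[v^G_j, u^G_{j+1}]$ in the preimage cycle and the forest-routing tails at the correct recursion level, since an edge $e_j$ of $\SS(G_i,F_i)$ (for $j \ge 1$) is itself an edge of a deeper core graph whose own preimage in $G$ already involves forest routings $F_0,\dots,F_{i-1}$ above it. One must verify that the recursive definition of $\bell^{\SS(G_i,F_i)}$ (which equals $\bell^{\cC(G_i,F_i)}$ on surviving edges, by \cref{def:sparsecore} item~4) together with \cref{def:passcore} applied level-by-level telescopes correctly, so that the length bound proved at the top level accounts for \emph{all} tree segments appearing in $a^G(e^G)$. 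This is essentially the statement that $\|\bw^{G_i}\|_1 \le \O(\gamma_l)^i(\cdots)$ in \cref{lemma:dynamicchain} but tracked exactly (with constant $1$) for a single cycle rather than in aggregate; since lengths compose multiplicatively through $\wstr$ and the sparsifier embedding uses edges of $2$-approximately equal length (\cref{def:sparsecore} item~\ref{item:samelen}), the inequality in the direction we need — overestimate — goes through cleanly, but care is needed that we never accidentally \emph{under}count a tree segment.
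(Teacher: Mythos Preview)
Your high-level plan---show that passing from the core-graph length to the $G$-length can only decrease, and do so level by level---is the same as the paper's. But your execution has a genuine gap in how you handle the levels.

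When you write ``an edge $\hat f$ of $\cC(G_i,F_i)$ with preimage $f^G=(x^G,y^G)$ in $G$ has length $\bell^{\cC(G_i,F_i)}_{\hat f}=\wstr_{f^G}\bell_{f^G}$'', this is not what \cref{def:coregraph} says. The core graph $\cC(G_i,F_i)$ is built from $G_i$, not from $G$: the length is $\wstr^i_f\,\bell^{G_i}_f$ where $f$ is the preimage \emph{in $G_i$}, $\wstr^i$ is the stretch overestimate with respect to $F_i\subseteq G_i$, and $\bell^{G_i}$ is the length in $G_i$ (which already encodes products of stretch factors from levels $0,\dots,i-1$). Consequently your single-edge bound only shows
\[
\bell^{\cC(G_i,F_i)}_{e_j}\ \ge\ \bell^{G_i}_{e_j}+\langle\bell^{G_i},|\bp(F_i[v_j,\root])|+|\bp(F_i[u_{j+1},\root])|\rangle,
\]
i.e.\ it pays for the $F_i$-segments \emph{measured in $G_i$-lengths}, not for all tree segments measured in $G$-lengths. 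The ``or more directly'' shortcut of summing your single-edge bound over $j$ therefore does not yield $\langle\bell,|\ba^G(e^G)|\rangle$; it yields the $G_i$-length of the lifted cycle, which is strictly weaker.

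The paper closes this gap exactly as you gesture toward in your ``induction over levels'' sentence: it introduces the \emph{lift} of a cycle from $\cC(G_j,F_j)$ to $G_j$ (\cref{def:lifted}) and proves that lifting can only decrease length, $\langle\bell^{\cC(G_j,F_j)},|\hat\bc|\rangle\ge\langle\bell^{G_j},|\bc|\rangle$, via the same stretch/triangle-inequality computation you describe---but applied at level $j$ with $G_j$-lengths. Since $G_j=\SS(G_{j-1},F_{j-1})\subseteq\cC(G_{j-1},F_{j-1})$ with identical lengths (\cref{def:sparsecore} item~4), the lifted cycle is again a core-graph cycle one level lower, and one iterates $i$ times down to $G_0=G$. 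Your proposal needs this explicit iteration; the direct summation does not suffice.
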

It is useful to define the concept of \emph{lifting} a cycle back from $\cC(G_i,F_i)$ to $G_i$ in order to show \cref{lemma:gradcorrect,lemma:lencorrect}.
\begin{definition}[Lifted cycle]
\label{def:lifted}
Consider a cycle $\hat{C}$ in $\mathcal{C}(G_i,F_i)$ with edges $\hat{e_1} \oplus \hat{e_2} \oplus \dots \oplus \hat{e_L}$, such that $e_i = (u_i, v_i)$ is the preimage of $\hat{e_i}$ in $G_i$. We define the lift of $\hat{C}$ into $G_i$ as the cycle
\[ e_1 \oplus F_i[v_1,u_2] \oplus e_2 \oplus \dots \oplus e_L \oplus F_i[v_L,u_1]. \]
\end{definition}
Now we can show \cref{lemma:gradcorrect,lemma:lencorrect} by repeatedly lifting cycles until we are back in the top level graph $G$.
\begin{proof}[Proof of \cref{lemma:gradcorrect}]
It suffices to show that any cycle $\hat{C}$ in $\mathcal{C}(G_i,F_i)$ and its lift $C$ in $G_i$ have the same gradient. Precisely, if we let $\hat{\bc}$ and $\bc$ denote the flow vectors of $\hat{C}$ and $C$ respectively, we wish to show $\langle \bg^{\cC(G_i,F_i)}, \hat{\bc} \rangle = \langle \bg^{G_i}, \bc \rangle$. To see this, recall that by the definition of $\bg^{\mathcal{C}(G_i,F_i)}$ in \cref{def:coregraph}, for $\hat{C} = \hat{e_1} \oplus \dots \oplus \hat{e_L}$ for $e_j = (u_j,v_j)$ and $E(F_i) \subseteq E(T_i)$ for some tree $T_i$ (namely the tree used to initialize the forest $F_i$ of $\cC(G_i,F_i)$),
\begin{align*}
    \langle \bg^{\mathcal{C}(G_i,F_i)}, \hat{\bc} \rangle
    &= \sum_{j=1}^L \bg^{\mathcal{C}(G_i,F_i)}_{\hat{e_j}} = \sum_{j=1}^L \bg^{G_i}_{e_j} + \langle \bg^{G_i}, \bp(T_i[v_j, u_j]) \rangle \\
    &\overset{(i)}{=} \sum_{j=1}^L \bg^{G_i}_{e_j} + \langle \bg^{G_i}, \bp(T_i[v_j, u_{j+1}]) \rangle \\ &\overset{(ii)}{=} \sum_{j=1}^L \bg^{G_i}_{e_j} + \langle \bg^{G_i}, \bp(F_i[v_j, u_{j+1}]) = \langle \bg^{G_i}, \bc \rangle,
\end{align*}
where $(i)$ follows because $\sum_{j=1}^L \bp(T_i[v_j, u_j]) = \sum_{j=1}^L \bp(T_i[v_j, u_{j+1}])$ (for $u_{L+1} \defeq u_1$) because both sides route the same demand on a tree $T_i$. $(ii)$ follows because $F_i \subseteq T$, and $v_j, u_{j+1}$ are in the same connected component of $F_i$. The last equality follows by the definition of $C$.
\end{proof}
\begin{proof}[Proof of \cref{lemma:lencorrect}]
Similar to the above proof of \cref{lemma:gradcorrect}, by repeatedly lifting until we get to $G$, it suffices to show that the length of a cycle $\hat{C}$ in $\mathcal{C}(G_i,F_i)$ is larger than that of its lift $C$.
Formally, if $\hat{\bc}$ and $\bc$ denote the flow vectors of $\hat{C}$ and $C$ respectively, we wish to show $\langle \bell^{\mathcal{C}(G_i,F_i)}, |\hat{\bc}| \rangle \ge \langle \bell^{G_i}, |\bc| \rangle$. For $\hat{C} = \hat{e_1} \oplus \dots \oplus \hat{e_L}$ for $e_j = (u_j, v_j)$ and forest $F_i$, we have
\begin{align*}
    \langle \bell^{\mathcal{C}(G_i,F_i)}, |\hat{\bc}| \rangle
    &= \sum_{j=1}^L \bell^{\mathcal{C}(G_i,F_i)}_{\hat{e_j}}
    \overset{(i)}{=} \sum_{j=1}^L \wstr^{i}_{e_j} \bell^{G_i}_{e_j} 
    \overset{(ii)}{\ge} \sum_{j=1}^L \str^{F_i,\bell^{G_i}}_{e_j} \bell^{G_i}_{e_j}
    \\ &= \sum_{j=1}^L \bell^{G_i}_{e_j} + \langle \bell^{G_i}, |\bp(F_i[u_j, \root^{F_i}_{u_j}])| \rangle + \langle \bell^{G_i}, |\bp(F_i[v_j, \root^{F_i}_{v_j}])| \rangle \\
    &= \sum_{j=1}^L \bell^{G_i}_{e_j} + \langle \bell^{G_i}, |\bp(F_i[v_j, \root^{F_i}_{v_j}])| \rangle + \langle \bell^{G_i}, |\bp(F_i[u_{j+1}, \root^{F_i}_{u_{j+1}}])| \rangle \\
    &\overset{(iii)}{\ge} \sum_{j=1}^L \bell^{G_i}_{e_j} + \langle \bell^{G_i}, |\bp(F_i[v_j, u_{j+1}])| \rangle = \langle \bell^{G_i}, |\bc| \rangle,
\end{align*}
where $(i)$ follows from the definition of $\bell^{\mathcal{C}(G_i,F_i)}$ in \cref{def:coregraph}, $(ii)$ follows from \cref{lemma:globalstretch} item \ref{item:stretchbound}, and $(iii)$ follows from $\root^{F_i}_{v_j} = \root^{F_i}_{u_{j+1}}$ and the triangle inequality. The final equality is from the definition of $C$ as the lift of $\hat{C}$.
\end{proof}
We now show that it suffices to maintain the ``best quality'' fundamental chain cycle, i.e. $\max_{e^G \in E(G) \setminus E(T)} |\l \bg^G, \ba^G(e^G) \r|/\wlen_{e^G}$. To show this, we first explain how to express a cycle $\bc$ as the combination of fundamental chain cycles.
\begin{lemma}
\label{lemma:chaindecomp}
Given a circulation $\bc$ in graph $G$, recursively define $\bc^{G_i}$ for all $i = 0, \dots, d$ via \cref{def:passcore,def:passsparsecore}. Then
\[ \bc = \sum_{i=0}^d \sum_{e^G : \lvl_{e^G} = i} \bc^{G_i}_{e^i} \ba^G(e^G). \]
\end{lemma}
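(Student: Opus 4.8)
The statement is a ``telescoping'' decomposition: a circulation $\bc$ in $G$ can be written as a linear combination of fundamental chain cycles, one per off-tree edge, with the scalar on $\ba^G(e^G)$ being the coordinate $\bc^{G_i}_{e^i}$ of the pushed-down circulation at the level $i = \lvl_{e^G}$ where $e^G$ falls out of the chain. The natural strategy is induction on the depth $d$ of the tree-chain, mimicking the classical fact (used in \cref{fig:TreeCycleCancel}) that on a single tree $T$ we have $\bc = \sum_{e^G \notin T} \bc_{e^G} \bp(e^G \oplus T[v^G,u^G])$: the tree-path contributions of adjacent off-tree edges cancel, leaving exactly $\bc$. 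Here we must iterate this one-tree identity through the $d+1$ levels, being careful that at each level we split the edges of $G_i$ into the tree edges $E(F_i)$, the off-tree edges that are \emph{not} in $\SS(G_i,F_i)$ (these are the level-$i$ edges, contributing their fundamental chain cycle now), and the off-tree edges that survive into $\SS(G_i,F_i) = G_{i+1}$ (these get pushed down and handled recursively).

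Concretely, I would first prove the \textbf{base case / single-level identity}: for a graph $G_i$ with forest $F_i$ and core graph $\cC(G_i,F_i)$, any circulation $\bc^{G_i}$ satisfies
\begin{align*}
\bc^{G_i} = \Big(\text{lift to } G_i \text{ of } \bc^{\cC(G_i,F_i)}\Big) + \sum_{e^{G_i} : e^{G_i} \notin E(F_i)} \text{(tree-cycle contributions canceling along } F_i\text{)}.
\end{align*}
Since $G/F_i$ and $\cC(G_i,F_i)$ share the same edge set (with $\bPi_{F_i}$ acting as the identity by the remark after \cref{def:coregraph}), contracting $F_i$ sends $\bc^{G_i}$ to a circulation on $\cC(G_i,F_i)$ which is exactly $\bc^{\cC(G_i,F_i)} = \bPi_{F_i}\bc^{G_i}$ (by \cref{def:passcore}, $\bc^{\cC(G,F)}_{\hat e} = \bc_e$), and lifting it back via \cref{def:lifted} recovers $\bc^{G_i}$ up to the forest-path corrections, which telescope exactly as in the classical one-tree case because $F_i$ is a forest and $\bc^{G_i}$ routes a zero demand. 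Then for the \textbf{inductive step through the sparsifier}, I split $\bc^{\cC(G_i,F_i)}$ over $E(\cC(G_i,F_i)) = E(\SS(G_i,F_i)) \sqcup \big(E(\cC(G_i,F_i))\setminus E(\SS(G_i,F_i))\big)$. For each edge $e$ in the second set (level exactly $i$), its coordinate $\bc^{\cC(G_i,F_i)}_e = \bc^{G_i}_{e}$ is precisely the scalar multiplying its sparsifier cycle $\ba(e)$, and summing these sparsifier cycles with these weights, plus the pushed-down circulation $\bc^{\SS(G_i,F_i)}$ defined in \cref{def:passsparsecore} (equation \eqref{eq:bcsparsecore}), reconstitutes $\bc^{\cC(G_i,F_i)}$ exactly — this is just the definition \eqref{eq:bcsparsecore} rearranged, since $\bc^{\SS(G,F)} = \sum_{\hat e} \bc^{\cC(G,F)}_{\hat e}\bPi(\hat e)$ and each $\hat e \in \SS(G,F)$ embeds to itself while each $\hat e \notin \SS(G,F)$ embeds to a path in $\SS$, so moving the latter terms to the other side turns $\bc^{\cC}$ into $\bc^{\SS}$ plus a sum of (edge $\oplus$ embedding-path) cycles, which are the sparsifier cycles $\ba(e)$. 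Applying the induction hypothesis to $\bc^{\SS(G_i,F_i)}$ on the shorter tree-chain $G_{i+1},\dots,G_d$ handles all edges of level $> i$, and tracking the lifts back up to $G$ (each sparsifier cycle $a(e)$ at level $i$ lifts to the fundamental chain cycle $a^G(e^G)$ by \cref{def:sparsifiercycle}, and the lift operation is linear and composes correctly through the levels) finishes the decomposition.

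The \textbf{main obstacle} is bookkeeping the two interleaved lifting operations cleanly: the ``lift from $\cC(G_i,F_i)$ to $G_i$'' of \cref{def:lifted} (which inserts forest-$F_i$ paths) and the implicit ``lift from $G_{i+1} = \SS(G_i,F_i)$ up to $G$'' coming from the induction hypothesis. I need to verify that lifting a sparsifier cycle $a(e) = e_0 \oplus \rev(\Pi(e_0))$ from $\cC(G_i,F_i)$ all the way to $G$ produces exactly the fundamental chain cycle $a^G(e^G)$ of \cref{def:sparsifiercycle} — this is essentially the content of \cref{lemma:gradcorrect} and \cref{lemma:lencorrect}'s preliminary lifting arguments, so I can reuse that machinery — and that the scalar weights are preserved under lifting (they are, since lifting is linear and the preimage map on edges is a bijection in the relevant sense). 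A secondary subtlety is the edge case of self-loops and edges that are themselves tree edges: a self-loop $e$ in $\cC(G_i,F_i)$ has empty embedding path (as stipulated before \cref{def:sparsifiercycle}) so its ``cycle'' degenerates, and edges in $E(F_i)$ never appear as level-$i$ off-tree edges, so both are correctly excluded from the sum; I would state this explicitly to avoid ambiguity. Modulo this bookkeeping, the proof is a routine induction: each step is just ``contract a forest, split the edge set, rearrange a definition, recurse.''
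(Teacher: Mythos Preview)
Your approach is correct and takes a genuinely different route from the paper's proof. You argue by induction on the chain depth, peeling off the level-$i$ sparsifier cycles via the rearranged definition $\bc^{\cC(G_i,F_i)} = \bc^{\SS(G_i,F_i)} + \sum_{\hat e \notin \SS} \bc^{\cC}_{\hat e}\,\ba(\hat e)$ (with $\bc^{\SS}$ extended by zero), recursing on $\bc^{G_{i+1}}$, and then lifting everything back to $G$ --- relying on the fact that a circulation is determined by its values off any fixed forest, and that the composed lift of $a(e)$ through all levels is exactly $a^G(e^G)$. The paper instead gives a direct coordinate check entirely in $G$: it fixes an off-tree edge $e^G$ at level $i$, unrolls the recursive definitions to obtain the explicit formula $\bc^{G_i}_{e^i} = \bc_{e^G} + \sum_{j<i}\sum_{f^G:\lvl_{f^G}=j} \bc^{G_j}_{f^j}\,\bPi_j(\hat f^j)_{\hat e^j}$, then evaluates the right-hand side of the lemma at the coordinate $e^G$ (observing that only fundamental chain cycles of edges at levels $j \le i$ touch $e^G$, contributing $-\bPi_j(\hat f^j)_{\hat e^j}$ for $j<i$ and $+1$ for $j=i$), and rearranges to get $\bc_{e^G}$; equality on all off-tree edges then forces equality of circulations. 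The paper's argument is shorter and completely sidesteps the lift-composition bookkeeping you correctly flag as the main obstacle, while your inductive formulation is more structural and makes the telescoping mechanism explicit at each layer. Both proofs rely on the (implicit in the definitions, explicit in the spanner construction) fact that edges already in $\SS(G_i,F_i)$ embed to themselves under $\Pi_{\cC \to \SS}$, which the paper also uses without further comment.
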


\begin{proof}
Define $\by \defeq \sum_{i=0}^d \sum_{e^G : \lvl_{e^G} = i} \bc^{G_i}_{e^i} \ba^G(e^G).$
We will show that $\bc_{e^G} = \by_{e^G}$ for any edge $e^G \in G \setminus T.$

First, define for any $i = 0, \dots, d$, $\Pi_i$ as the embedding $\Pi_{\cC(G_i, F_i) \to \cS(G_i, F_i)}$ and $e^i$ as the image in $G_i$ for any edge $e^G \in G.$
Clearly, $e^i$ is well-defined if $i \le \lvl_{e^G}.$
We also denote the image of $e^i$ in the core graph $\cC(G_i, F_i)$ as $\hat{e}^i.$

At any level $i$, observe that if $e^{i+1} \in G_{i+1}$, we have $\hat{e}^i = e^{i+1}$, $\Pi_i(\hat{e}^i) = \{e^{i+1}\}$ and therefore $\bc^{G_{i+1}}_{e^{i+1}} = \bc^{G_i}_{e^i}.$
Otherwise, $\bc^{G_i}_{e^i}$ is added to $\bc^{G_{i+1}}_f, f \in \Pi_i(\hat{e}^i)$.
Let $e^G$ be any edge in $G \setminus T$ at level $i.$
Following from \cref{def:passsparsecore}, we can express $\bc^{G_i}_{e^i}$ as
\begin{align}
\label{eq:expandCGi}
\bc^{G_i}_{e^i} = \bc_{e^G} + \sum_{j=0}^{i-1} \sum_{f^G: \lvl_{f^G} = j} \bc^{G_j}_{f^j} \cdot \bPi_j(\hat{f}^j)_{\hat{e}^j}.
\end{align}

On the other hand, we know that $e^G$ does not appear in any of the sparsifier cycle of $f^G$ at level $j \ge i.$
Thus, $[\ba^G(f^G)]_{e^G} = 0.$
If $\lvl_{f^G} = j < i$, $[\ba^G(f^G)]_{e^G} = -\bPi_j(\hat{f}^j)_{\hat{e}^j}$ where the $-1$ term comes from that the sparsifier cycle takes $\hat{f}^j$ and the reverse of the path $\Pi_j(\hat{f}^j).$
This yields that
\begin{align*}
\by_{e^G} 
&= \bc^{G_i}_{e^i} - \sum_{j=0}^{i-1} \sum_{f^G: \lvl_{f^G} = j} \bc^{G_j}_{f^j} \bPi_j(\hat{f}^j)_{\hat{e}^j}
  \overset{(i)}{=} \bc_{e^G},
\end{align*}
where $(i)$ follows by rearranging \eqref{eq:expandCGi}.

The lemma follows via the fact that a circulation is uniquely determined by the amount of flows on non-tree edges.
\end{proof}
\begin{lemma}
\label{lemma:goodenough}
Let $\bc, \bw$ be a valid pair. Let $T = T^{G_0,\dots,G_d}$ for a tree-chain $G_0,\dots,G_d$. Then
\[ \max_{e^G \in E(G) \setminus E(T)} \frac{|\l \bg, \ba^G(e^G) \r|}{\wlen_{e^G}} \ge \frac{1}{\O(k)} \frac{|\l \bg, \bc \r|}{\sum_{i=0}^d \|\bw^{G_i}\|_1}. \]
\end{lemma}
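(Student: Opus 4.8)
The plan is to mimic the warm-up low-stretch-tree argument from \cref{overview:mmcOblivious}, but with fundamental tree cycles replaced by fundamental \emph{chain} cycles, and using \cref{lemma:chaindecomp} as the exact-decomposition fact in place of the classical ``circulation equals sum of fundamental tree cycles'' statement. First I would invoke \cref{lemma:chaindecomp} to write
\[ \bc = \sum_{i=0}^d \sum_{e^G : \lvl_{e^G} = i} \bc^{G_i}_{e^i}\, \ba^G(e^G), \]
so that, pairing with $\bg$ and using \cref{lemma:gradcorrect} (which identifies $\l\bg,\ba^G(e^G)\r$ with $\l\bg^{\cC(G_i,F_i)},\ba(e)\r$), we get
\[ \l\bg,\bc\r = \sum_{i=0}^d \sum_{e^G:\lvl_{e^G}=i} \bc^{G_i}_{e^i}\,\l\bg,\ba^G(e^G)\r. \]
Then I would use the elementary inequality $\min_j \frac{x_j}{y_j} \le \frac{\sum_j x_j}{\sum_j y_j}$ for positive $y_j$ (exactly as in the overview), applied with $x$-terms $|\bc^{G_i}_{e^i}|\cdot|\l\bg,\ba^G(e^G)\r|$ and $y$-terms $|\bc^{G_i}_{e^i}|\cdot\wlen_{e^G}$, after first noting $|\l\bg,\bc\r| \le \sum_{i,e^G} |\bc^{G_i}_{e^i}|\,|\l\bg,\ba^G(e^G)\r|$ by the triangle inequality. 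This yields
\[ \max_{e^G\notin E(T)} \frac{|\l\bg,\ba^G(e^G)\r|}{\wlen_{e^G}} \ge \frac{|\l\bg,\bc\r|}{\sum_{i=0}^d \sum_{e^G:\lvl_{e^G}=i} |\bc^{G_i}_{e^i}|\,\wlen_{e^G}}. \]

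The remaining work is to bound the denominator $\sum_{i,e^G} |\bc^{G_i}_{e^i}|\,\wlen_{e^G}$ by $\O(k)\sum_{i=0}^d \|\bw^{G_i}\|_1$. Fix a level $i$ and an edge $e^G$ with image $e$ in $\cC(G_i,F_i)\setminus\SS(G_i,F_i)$. By \cref{def:fundChainCycleOverest}, $\wlen_{e^G} = \l\bell^{\cC(G_i,F_i)}, |\ba(e)|\r$, and the cycle $\ba(e) = e \oplus \rev(\Pi_{\cC(G_i,F_i)\to\SS(G_i,F_i)}(e))$ consists of $e$ itself plus the embedding path. By the length-increase guarantee of \cref{def:sparsecore} (item \ref{item:samelen}, all edges on the embedding path are $\approx_2 \bell^{\cC(G_i,F_i)}_e$, and item 2, $\length \le \gamma_l$), we get $\wlen_{e^G} \le O(\gamma_l)\,\bell^{\cC(G_i,F_i)}_e$. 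Hence
\[ \sum_{e^G:\lvl_{e^G}=i} |\bc^{G_i}_{e^i}|\,\wlen_{e^G} \le O(\gamma_l) \sum_{e^G:\lvl_{e^G}=i} |\bc^{G_i}_{e^i}|\,\bell^{\cC(G_i,F_i)}_{e}. \]
Now $\bc^{G_i},\bw^{G_i}$ is a valid pair on $G_i$ (by \cref{lemma:passcore,lemma:passsparsecore} applied recursively), and passing into the core graph via \cref{def:passcore} gives $\bc^{\cC(G_i,F_i)}_{\hat e} = \bc^{G_i}_{e^i}$ and $\bw^{\cC(G_i,F_i)}_{\hat e} = \wstr_{e^i}\bw^{G_i}_{e^i}$, with $\bell^{\cC(G_i,F_i)}_{\hat e}=\wstr_{e^i}\bell^{G_i}_{e^i}$, so $|\bc^{G_i}_{e^i}|\,\bell^{\cC(G_i,F_i)}_{\hat e} = |\bell^{\cC(G_i,F_i)}_{\hat e}\bc^{\cC(G_i,F_i)}_{\hat e}| \le \bw^{\cC(G_i,F_i)}_{\hat e} = \wstr_{e^i}\bw^{G_i}_{e^i}$. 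Using \cref{lemma:globalstretch} item \ref{item:stretchbound}, $\wstr_{e^i} \le O(k\gamma_{LSST}\log^4 n) = \O(k)$, so $\sum_{e^G:\lvl_{e^G}=i} |\bc^{G_i}_{e^i}|\,\bell^{\cC(G_i,F_i)}_{\hat e} \le \O(k)\|\bw^{G_i}\|_1$. Summing over the $d+1$ levels and absorbing $O(\gamma_l)$ into the $\O(\cdot)$ gives the denominator bound $\O(k)\sum_{i=0}^d\|\bw^{G_i}\|_1$, and combining with the $\max \ge \ldots$ bound above completes the proof.

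The main obstacle I expect is bookkeeping the chain of identifications cleanly: the edge $e^G$ in $G$, its image $e^i$ in $G_i$, its image $\hat e$ in the core graph $\cC(G_i,F_i)$, and the distinction between ``length $\wlen_{e^G}$ of the whole fundamental chain cycle'' and ``single-edge length $\bell^{\cC(G_i,F_i)}_{\hat e}$'' — and making sure the factor-of-$2$ slack from $\approx_2$ in \cref{def:sparsecore} item \ref{item:samelen} and the path-length factor $\gamma_l$ are both folded in correctly, rather than double-counted. A secondary subtlety is that $\wlen_{e^G}$ is defined only for $e^G\notin E(T)$ at a well-defined level; for edges whose image lands in a sparsified core graph at every level (i.e. edges of the tree $T$) the term $\bc^{G_i}_{e^i}\ba^G(e^G)$ simply does not appear in the \cref{lemma:chaindecomp} decomposition, so the sum genuinely ranges only over $e^G \in E(G)\setminus E(T)$ partitioned by level, which is exactly what the statement asks for — I would state this explicitly to avoid confusion. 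Everything else is the standard $\min$-of-ratios manipulation and is routine.
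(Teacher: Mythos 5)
Your structure matches the paper's: you invoke \cref{lemma:chaindecomp} to decompose $\bc$ into fundamental chain cycles, take the triangle inequality, and then apply the min-of-ratios fact with the weighted numerator and denominator. The gap is in your bound on the denominator. You bound each cycle length individually as $\wlen_{e^G} \le O(\gamma_l)\,\bell^{\cC(G_i,F_i)}_{\hat e}$ (using that the embedding path has at most $\gamma_l$ edges each of length $\approx_2 \bell^{\cC(G_i,F_i)}_{\hat e}$), then apply $|\bc^{\cC(G_i,F_i)}_{\hat e}|\bell^{\cC(G_i,F_i)}_{\hat e} \le \bw^{\cC(G_i,F_i)}_{\hat e} = \wstr_{e^i}\bw^{G_i}_{e^i} \le \O(k)\bw^{G_i}_{e^i}$. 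This yields a denominator bound of $\O(\gamma_l k)\sum_i \|\bw^{G_i}\|_1$, which is strictly weaker than the claimed $\O(k)\sum_i \|\bw^{G_i}\|_1$, since $\gamma_l = \exp(O(\log^{3/4}m\log\log m))$ is $m^{o(1)}$ but far from polylogarithmic, and $\O(\cdot)$ in this paper means $\widetilde{O}(\cdot)$.

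The paper avoids the $\gamma_l$ loss by not bounding $\wlen_{e^G}$ in isolation. Instead it splits $\wlen_{e^G}$ into the off-spanner edge term $\wstr^i_{e^i}\bell^{G_i}_{e^i}$ and the embedding-path term $\sum_{e'\in\Pi_{\cC(G_i,F_i)\to\SS(G_i,F_i)}(\hat e)}\bell^{G_{i+1}}_{e'}$. When summed against $|\bc^{G_i}_{e^i}|$ over all level-$i$ edges $e^G$, the first gives $\O(k)\|\bw^{G_i}\|_1$ (via $\wstr \le \O(k)$ and the valid-pair inequality on $G_i$). For the second, one swaps the order of summation so the inner sum is over the preimages $\hat e$ mapping through a fixed spanner edge $e'$, and recognizes this as exactly the quantity $\bw^{G_{i+1}}_{e'}$ was defined (\cref{def:passsparsecore}, with its built-in factor of $2$ and the $\approx_2$ length match in \cref{def:sparsecore}) to dominate; this gives $\|\bw^{G_{i+1}}\|_1$ with no $\gamma_l$ factor. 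The point you missed is that $\bw^{G_{i+1}}$ is engineered so the path contributions telescope into the next level's width norm exactly. Your extra $\gamma_l$ happens to wash out downstream (it is absorbed into the $\O(\gamma_l)^{O(d)}$ factor in $\kappa$ in the proof of \cref{thm:norebuild}), so the overall algorithm survives, but as written your argument does not prove the lemma at the stated $\O(k)$ tightness.
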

\begin{proof}%
Recall that for an edge $\hat{e} \in \mathcal{C}(G_i,F_i)$ with preimage $e$ in $G_i$, its length is $\bell^{\mathcal{C}(G_i,F_i)}_{\hat{e}} \defeq \wstr^{i}_e \bell^{G_i}_e$ defined in \cref{def:coregraph}. Thus by the definition of $\wlen_{e^G}$ in \cref{def:fundChainCycleOverest},
\begin{align*}
    \sum_{i=0}^d \sum_{e^G:\lvl_{e^G}=i} |\bc^{G_i}_e| \wlen_{e^G} &= \sum_{i=0}^d \sum_{e^G:\lvl_{e^G}=i} \left(\wstr^{i} \bell^{G_i}_e |\bc^{G_i}_e| + \sum_{e' \in \Pi_{\mathcal{C}(G_i,F_i)\to\SS(G_i,F_i)}(\hat{e})} \bell^{G_{i+1}}_{\hat{e}} |\bc^{G_i}_e| \right) \\
    &\overset{(i)}{\le} \sum_{i=0}^d \sum_{e^G:\lvl_{e^G}=i} \left(\O(k) \bw^{G_i}_e + \bw^{G_{i+1}}_{\hat{e}} \right) \le \O(k) \sum_{i=0}^d \|\bw^{G_i}_e\|_1,
\end{align*}
where $(i)$ follows $\wstr^{i} \le \O(k)$ from \cref{lemma:globalstretch} item \ref{item:stretchbound}, and the fact that $\bc^{G_i}, \bw^{G_i}$ are all valid pairs (see \cref{lemma:passcore,lemma:passsparsecore}) so $\bell^{G_i}_e |\bc^{G_i}_e| \le |\bw^{G_i}_e|$, and the definition of $\bw^{G_{i+1}}_e$ in \cref{def:passsparsecore}.
Additionally, note by the triangle inequality and \cref{lemma:chaindecomp} that
\[ |\l \bg, \bc\r| \le \sum_{i=0}^d \sum_{e^G:\lvl_{e^G}=i} |\bc^{G_i}_e| |\l \bg, \ba^G(e^G)\r|. \]
Hence, we get by the fact that \[ \max_{i\in[n]} \frac{\bx_i}{\by_i} \ge \frac{\sum_{i\in[n]} \bx_i}{\sum_{i\in[n]} \by_i} \] for $\bx,\by \in \R^n_{\ge0}$ that
\[ \max_{e^G \in E(G) \setminus E(T)} \frac{|\l \bg, \ba^G(e^G) \r|}{\wlen_{e^G}} \ge \frac{\sum_{i=0}^d\sum_{e^G:\lvl_{e^G}=i}|\bc^{G_i}_e||\l \bg, \ba^G(e^G) \r|}{\sum_{i=0}^d\sum_{e^G:\lvl_{e^G}=i}|\bc^{G_i}_e|\wlen_{e^G}} \ge \frac{1}{\O(k)} \frac{|\l \bg, \bc \r|}{\sum_{i=0}^d \|\bw^{G_i}\|_1}. \]
\end{proof}
\begin{rem*}
We can adapt the statement and proof of \cref{lemma:goodenough} to remove the $\O(k)$ by being more careful. However, this further complicates the statements of \cref{lemma:goodenough} and its interaction with \cref{sec:jtree}, and the extra $\O(k)$ does not meaningfully affect our runtimes.
\end{rem*}
We can now complete the proof of \cref{thm:norebuild}.
\begin{proof}[Proof of \cref{thm:norebuild}]
  The first part is to dynamically maintain an explicit $O(\log n)$ branching tree chain with path embeddings using \cref{lemma:dynamicchain}. With $O(1)$ overhead, the algorithm can also maintain the values of $\l \bg, \ba^G(e^G) \r$, $\wlen_{e^G}$ for all fundamental chain cycles of the $O(\log n)^d$ trees in the branching tree chain, because the branching tree chain maintains all edge gradients/lengths explicitly, and \cref{def:fundChainCycleOverest,lemma:gradcorrect}. Hence in $\O(1)$ overhead it can maintain the maximizer $\argmax_{e^G \in E(G) \setminus E(T)} \frac{|\l \bg, \ba^G(e^G) \r|}{\wlen_{e^G}}$ as desired in \cref{lemma:goodenough} for each tree. To show that the best out of these works, note that by \cref{lemma:dynamicchain} with high probability there is a tree-chain with
  \[ \frac{1}{\O(k)} \frac{|\l \bg, \bc \r|}{\sum_{i=0}^d \|\bw^{G_i}\|_1} \ge \frac{1}{\O(k)\O(\gamma_l)^{O(d)}} \frac{|\l \bg, \bc^{(t)} \r|}{\sum_{i=0}^d \|\bw^{(\prev^{(t)}_i)}\|_1} \ge \kappa \frac{|\l \bg, \bc^{(t)} \r|}{\sum_{i=0}^d \|\bw^{(\prev^{(t)}_i)}\|_1}, \] for $\kappa = 1/(\O(k)\O(\gamma_l)^{O(d)})$ as desired.
  
  The total runtime is $(m+Q)m^{o(1)}$ for $Q = \sum_{t\in[\tau]} \Enc(U^{(t)})$ by \cref{lemma:dynamicchain} for the choice $\gamma_s = \gamma_r = \exp(\log^{3/4}\log\log m), d = \log^{1/8}m$ and $k = m^{1/d}$.
  Thus $\kappa = \exp(-O(\log^{7/8}m\log\log m))$.
  Also, $Q$ approximates $\sum_{t\in[\tau]} |U^{(t)}|$ up to a polylog factor since $U^{(t)}$ contains only edge insertions/deletions.
  
  Finally, we can rebuild levels $i, i+1, \dots, d$ in time $m^{1+o(1)}/k^i$ time because the graphs on level $i$ have $m(\gamma_s/k)^i$ edges, there are $O(\log n)^d$ such graphs, and the initialization time is almost linear.
\end{proof}

\section{Rebuilding Data Structure Levels}
\label{sec:rebuilding}

The goal of this section is to use \cref{thm:norebuild} to get \cref{thm:MMCHiddenStableFlow} through a \emph{rebuilding game} to handle the cases where $\|\bw^{(\prev^{(t)}_i)}\|_1$ is much larger than  $\|\bw^{(t)}\|_1$ for some $0 \le i \le d$. The surprising aspect is that this is doable despite the fact that the $\bw^{(t)}$ are all hidden.
We now introduce the \emph{rebuilding game} that captures these notions.
Each round of the rebuilding game corresponds to our algorithm successfully returning a good enough cycle.
When is this does not happen, we instead have to rebuild part of our data structure.
The rebuilding game is designed to let us formally reason about strategies for rebuilding the data structure when it fails to find a good cycle.
\newcommand{\Wrange}{K} %

\paragraph{Rebuilding game parameters and definition.}
  The rebuilding game has several parameters: integers parameters size $m > 0$ and depth $d > 0$, 
  update frequency $0 < \gamma_g < 1$,
  a rebuilding cost $C_r \geq 1$, a weight range $\Wrange \geq 1$, and a recursive size reduction parameter $k \defeq m^{1/d} \geq 2$, and finally an integer round count $T > 0$. 
\begin{definition}[Rebuilding Game]
  \label{def:game}
  The rebuilding game is played between a player and an advesary and proceeds in rounds $t=1,2,\ldots,T$.
  Additionally, the steps (moves) taken by the player are indexed as $s = 1,2,\ldots$.
  Every step $s$ is associated with a tuple $\prev^{(s)} := (\prev^{(s)}_0, \dots, \prev^{(s)}_d) \in [T]^{d+1}$.
  Both the player and adversary know $\prev^{(s)}$.
  At the beginning of the game, at round $t = 1$ and step $s = 1$, we initially set $\prev^{(1)}_i = 1$ for all levels $i \in \setof{0,1,\ldots,d}$.
  
  At the beginning each round $t \ge 1$,
  \begin{enumerate}
  \item The adversary first chooses a positive real weight $W^{(t)}$ satisfying $\log W^{(t)} \in (-\Wrange, \Wrange).$
    This weight is \textbf{hidden} from the player.
    \label{enu:gameStart}
  \item 
    \label{enu:fixingTest}
    Then, \textbf{while} either of the following conditions hold,
    \begin{align}
      \label{eq:gameLossSumCond}
      &\sum_{i=0}^d W^{(\prev^{(s)}_i)} > 2(d+1) W^{(t)}
      \\
      \nonumber
      \text{or} \qquad &
      \\
      &\text{For some level $l$, at least $\gamma_g m/k^l$ rounds}
        \nonumber
      \\
      &\text{have passed since the last rebuild of level $l$.}
        \label{eq:gameProgCountCond}
    \end{align}
    the adversary can (but does not have to) force the player to perform a \emph{fixing step}.
    The player may also choose to perform a fixing step, regardless of whether the adversary forces it or not.
    In a fixing step, the player picks a level $i \in \setof{0,1,\ldots,d}$, and we then set $\prev^{(s+1)}_j \gets t$ for $j \in \{i,i+1, \dots, d\}$, and $\prev^{(s+1)}_j \gets \prev^{(s)}_j$ for $j \in \{0, \dots, i-1\}$. We call this a \emph{fix} at level $i$, and we say the levels $j \geq i$ have been \emph{rebuilt}. 
    This move costs $C_rm/k^i$ time.
  \item When the player is no longer performing fixing steps, the round finishes.
  \end{enumerate}
\end{definition}

  The goal of the player is to complete all $T$ rounds in total time cost
  $O(\frac{C_r  K d}{\gamma_g} (m + T))$.
  
\begin{rem}
  We emphasize an important point about our terminology in the rebuilding game: A \emph{fix} at level $i$ causes a \emph{rebuild} of all levels $j \geq i$.
  The adversary can force a rebuild of a level $l$ if it has participated in  $\gamma_g m/k^l$ rounds since it was last rebuilt
  -- and the latest rebuild may have been triggered by a fix at level $l$ or by a fix at some level $i < l$.
\end{rem}
To translate the rebuilding game to the setting of \cref{thm:MMCHiddenStableFlow,thm:norebuild} we can set $W^{(t)} \defeq \|\bw^{(t)}\|_1$. We give an algorithm where the player completes all $T$ rounds with total time cost $O(\frac{C_r  K d}{\gamma_g} (m + T))$. For our choice of parameters, this will be almost linear.
Note that it is trivial for the player to finish all $T$ rounds in time $O(C_rmT)$, as they could just always do a fix at level $i = 0$, as this sets all $\prev^{(s)}_j \assign t$.
\newcommand{\fixc}{\texttt{fix}}
\newcommand{\progc}{\texttt{round}}

\begin{algorithm}[!ht]
  \ForEach{$i = 0,\ldots, d$.}{
    We maintain a "fixing count", $\fixc_i$, initialized to zero.\;
    And we maintain a "round count", $\progc_i$, also initialized to zero.\;
  }
  \ForEach{round $t = 1,2, \ldots, T$ of the game}{
    
    \If{there is a level $l$ with $\progc_l \geq \gamma_g m/k^l$}{
      Find the smallest level $i$ such that $\progc_i \geq \gamma_g m/k^i$\;
      \label{lne:WINfix} Fix level $i$, thus rebuilding levels $j \geq i$.\;
      For levels $j = i,i+1,\ldots,d$
      set $\fixc_j \gets 0$ and $\progc_j \gets 0$. \;
      \tcp{ We call this a \emph{WIN at level i}.}
    }
    \While{the adversary continues to force a fixing step}{
      Let $i$ be the smallest level in $0,\ldots,d$ s.t. for all $j > i$,
      $\fixc_j = 2\Wrange$\;
      \label{lne:LOSSfix} Fix level $i$, thus rebuilding levels $j \geq i$.\;
      Set $\fixc_i \gets \fixc_i + 1$  \;
      \tcp{We call this a \emph{LOSS at level i}.}
    }
    For all levels $j = 0,1,\ldots,d$, set
    $\progc_j \gets \progc_j + 1.$\;
  }
  \caption{Strategy for the rebuilding game.}
  \label{alg:rebuildStrategy}
\end{algorithm}

The following lemma tells us that a fairly simple strategy can deterministically\footnote{Note that
  when we employ the rebuilding game strategy in our overall data structure, the data structure uses randomization. 
  However, the randomized steps succeed with high probability union-bounded across the entire algorithm.
  The rebuilding game strategy corresponds to the behavior of the data structure assuming all these data structure randomization steps are successful. 
  We address this formally in the proof of
  Theorem~\ref{thm:MMCHiddenStableFlow}.} ensure that the player always
wins the rebuilding game.

\begin{lemma}[Strategy for Rebuilding Game]
  \label{lemma:rebuildinggame}
  There is a deterministic strategy given by \Cref{alg:rebuildStrategy} for the player to finish $T$ rounds of the rebuilding game in time $O(\frac{C_r  K d}{\gamma_g} (m + T))$.
\end{lemma}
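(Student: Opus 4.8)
The plan is to analyze \Cref{alg:rebuildStrategy} by tracking, for each level $i$, two counters — $\fixc_i$ (the number of LOSS fixes at level $i$ since the last rebuild of level $i$) and $\progc_i$ (the number of rounds since the last rebuild of level $i$) — and showing three things: (1) the strategy is well-defined, i.e.\ whenever the adversary forces a fixing step there is always a legal smallest level $i$ to fix; (2) every round terminates, i.e.\ the adversary can only force finitely many fixing steps per round; and (3) the total time cost is $O(\frac{C_r K d}{\gamma_g}(m+T))$. First I would establish the key invariant that $\fixc_j \le 2\Wrange$ for all $j$ at all times, and that $\fixc_j = 2\Wrange$ for all $j > i$ precisely characterizes when a LOSS fix happens at level $i$; since level $d$ can always be fixed (there are no levels above it), the strategy never gets stuck, giving (1).

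For termination of a round (2), the crux is to argue that after a bounded number of LOSS fixes the condition \eqref{eq:gameLossSumCond} must fail. The mechanism: a LOSS fix at level $i$ sets $\prev_j \gets t$ for $j \ge i$, hence replaces $W^{(\prev_j)}$ by $W^{(t)}$ for those levels. If at the moment of a forced fixing step \eqref{eq:gameLossSumCond} holds — $\sum_{i=0}^d W^{(\prev^{(s)}_i)} > 2(d+1)W^{(t)}$ — then by averaging there is some level with $W^{(\prev^{(s)}_i)} > 2W^{(t)}$; I want to show the level $i$ chosen by the algorithm is (after enough repeated LOSS fixes at the levels above) forced to be a level whose $W^{(\prev_i)}$ is genuinely too large, so the fix strictly decreases $\sum_j W^{(\prev_j)}$. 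The counting argument is: each level $j$ can absorb at most $2\Wrange$ LOSS fixes before $\fixc_j$ saturates; once $\fixc_j$ saturates for all $j>i$, the next LOSS fixes level $i$. Because $\log W^{(t)} \in (-\Wrange,\Wrange)$, the ratio between any two weights the adversary ever picks is at most $e^{2\Wrange}$ — actually I should be careful and use that $W^{(\prev_i)}/W^{(t)} \le $ something bounded so that "too large" stabilizes after $O(\Wrange)$ doublings are ruled out; the clean way is: the number of LOSS fixes at level $0$ across the whole game is $O(\Wrange)$ per "epoch" between WINs at level $0$, and each LOSS fix at level $i$ is preceded by $2\Wrange$ LOSS fixes at each level $j>i$, bounding the per-round count by $(2\Wrange)^{O(1)}$... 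I expect the cleanest bookkeeping instead bounds the \emph{total} number of LOSS fixes over all $T$ rounds directly, so I would prove (2) and (3) together.

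The heart of the argument (3): I would charge LOSS fixes to WIN fixes. A WIN at level $i$ happens only after $\gamma_g m/k^i$ rounds have elapsed since level $i$ was last rebuilt, and it resets $\fixc_j, \progc_j$ for $j\ge i$ to $0$; each WIN at level $i$ costs $\le C_r m/k^i$ and there are at most $T/(\gamma_g m/k^i)$ WINs at level $i$ triggered by $\progc_i$ saturating, so the total WIN cost is $\sum_{i=0}^d \frac{T}{\gamma_g m/k^i} \cdot C_r m/k^i = O(\frac{C_r d}{\gamma_g} T)$ — wait, each WIN at level $i$ also counts as progress-reset for levels $>i$, but the bound on the \emph{number} of WINs at level $i$ still holds since a WIN at level $i$ requires $\progc_i$ to have reached $\gamma_g m/k^i$, and $\progc_i$ increases by one per round and is only reset by a fix at level $\le i$; so the number of fixes at levels $\le i$ is at least one per $\gamma_g m/k^i$ rounds minus one, bounding WINs appropriately. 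For LOSS fixes: between two consecutive rebuilds of level $i$ (whether by WIN or by a fix at a lower level), $\fixc_i$ increments by at most $2\Wrange$, so the number of LOSS fixes at level $i$ is at most $2\Wrange$ times the number of rebuilds of level $i$, which is at most $2\Wrange$ times (number of fixes at levels $\le i$). Unrolling this recursion from $i=0$ upward gives that the number of LOSS fixes at level $i$ is at most $(2\Wrange+1)^{i+1}$ times the number of WIN fixes at levels $\le i$... this blows up exponentially in $d$, which is why the correct statement has the $\Wrange^{O(1)}$ only through $K$ — hmm. The resolution is that the bound in \Cref{lemma:rebuildinggame} is $O(\frac{C_r K d}{\gamma_g}(m+T))$ with $K = \Wrange$, so I must avoid the exponential-in-$d$ blowup: the right accounting is that a LOSS at level $i$ requires $\fixc_j = 2\Wrange$ for \emph{all} $j>i$, and crucially a WIN or LOSS at level $\le i$ resets all those $\fixc_j$ to $0$; so between consecutive rebuilds of level $i+1$ there are at most $2\Wrange$ LOSS-fixes at level $i$, and rebuilds of level $i+1$ are themselves bounded. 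I would set this up as a clean amortized potential $\Phi = \sum_i \fixc_i \cdot (C_r m/k^i) \cdot (\text{suitable weight})$ and show each forced LOSS is paid by $\Phi$ while each WIN at level $i$ pays $O(\Wrange C_r m/k^i)$ into $\Phi$, then bound $\sum$ WIN costs as above.

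The main obstacle I anticipate is exactly this last amortization: getting the LOSS-fix cost bounded by $O(\Wrange)$ times the WIN-fix cost \emph{without} an exponential dependence on $d$, which requires carefully exploiting that the algorithm always fixes the \emph{smallest} eligible level and that a low-level fix resets all higher counters. Once the potential-function setup is right, plugging in the parameters ($C_r$, $K=\Wrange$, $\gamma_g$, $k=m^{1/d}$) and summing the geometric series $\sum_{i=0}^d C_r m/k^i \cdot \frac{k^i}{\gamma_g m} \cdot (\cdots)$ is routine and yields the claimed $O(\frac{C_r K d}{\gamma_g}(m+T))$ bound. I would also need the short observation that \eqref{eq:gameLossSumCond} failing is guaranteed once enough LOSS-fixes have driven every $\prev_j$ to a recent round, which follows from $\log W^{(t)} \in (-\Wrange,\Wrange)$ bounding all weight ratios by $e^{2\Wrange}$, so the "$2(d+1)$" slack is eventually satisfied.
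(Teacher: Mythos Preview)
Your overall plan is sound, and the WIN-cost amortization is essentially the paper's. But there is a real gap at the heart of the argument: you assert $\fixc_j \le 2\Wrange$ for all $j$ as an invariant without proof, and for $j=0$ this is not automatic. For $j>0$ it holds by construction (once $\fixc_j = 2\Wrange$, level $j-1$ becomes the smallest eligible level, so $\fixc_j$ is never incremented further). But nothing in the algorithm stops $\fixc_0$ from growing past $2\Wrange$. The paper devotes a separate lemma (\Cref{lem:fixingCountBound}) to proving $\fixc_0 < 2\Wrange$, via an inductive invariant you do not identify: for every \emph{prefix-maximizing} level $i$ (one whose weight $W^{(\prev^{(s)}_i)}$ strictly exceeds all shallower levels' weights) one maintains $\log_2 W^{(\prev^{(s)}_i)} < \Wrange - \fixc_i$. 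The crucial case is a LOSS at a prefix-maximizing $i$: then condition~\eqref{eq:gameLossSumCond} forces $\max_j W^{(\prev^{(s)}_j)} \ge 2W^{(t)}$, and because no level $>i$ is prefix-maximizing this maximum is attained at $i$, so rebuilding halves the weight at level $i$ and the invariant survives the increment of $\fixc_i$. Since level $0$ is always prefix-maximizing, $\fixc_0 < 2\Wrange$ then follows from the weight-range assumption. Your sketch only gestures at this (``$2\Wrange$ doublings are ruled out'') without the invariant, and your proposed potential $\Phi = \sum_i \fixc_i \cdot C_r m/k^i \cdot(\cdots)$ moves in the wrong direction: a LOSS \emph{increases} $\fixc_i$, so $\Phi$ cannot pay for it.

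Second, the recursion you write down that blows up as $(2\Wrange+1)^{d}$ stems from a misreading of the algorithm: a LOSS at level $j < i$ rebuilds level $i$ but does \emph{not} reset $\fixc_i$ --- only WINs reset the $\fixc$ counters. Once $\fixc_j \le 2\Wrange$ is established for every $j$, the paper's accounting is direct with no recursion: each LOSS increment of $\fixc_j$ either has its counter later reset by some WIN at a level $i \le j$ (and the total such cost absorbed by that WIN is $\sum_{j \ge i} \fixc_j \cdot C_r m/k^j \le 4\Wrange C_r m/k^i$, amortized over that WIN's $\ge \gamma_g m/k^i$ rounds), or is never reset (contributing at most $\sum_j 2\Wrange \cdot C_r m/k^j \le 4\Wrange C_r m$ in total). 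Summing gives $O\big(\tfrac{C_r \Wrange d}{\gamma_g}(m+T)\big)$ with no exponential factor.
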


Before we state the proof, we first introduce some important
terminology for understanding \Cref{alg:rebuildStrategy} and its
analysis.

\paragraph{The rebuilding game algorithm.} Overall, our goal is the following: we (as the player) want to ensure that our vector
of weights $W^{(\prev^{(s)}_i)}$ at different levels $i \in
\setof{0,1,\ldots,d}$ is such that we must frequently succeed in completing a round without making too many fixing steps,
and we need to ensure we do not spend too much time on fixing steps.
To implement our strategy in \Cref{alg:rebuildStrategy}, we
maintain two counters $\progc_i$ and $\fixc_i$ for each level $i$.
These two counters are used to decide which level to rebuild in each
step $s$ of the game.

The first counter, $\progc_i$, is very simple. It counts the number of
rounds that have occurred since level $i$ was last rebuilt.
Ideally, we would like to complete as many rounds as each level can handle, before we reset it.
The adversary can force a fixing step if it has been more than $\gamma_g m/k^i$ rounds since level $i$ was last rebuilt. In the setting of \cref{thm:norebuild}, this corresponds to a level of the branching tree chain accumulating enough updates that it should be rebuilt.
We preempt the adversary by always rebuilding a level if it has been through this many rounds, regardless of whether the adversary forces us to or not.
When this occurs, the level can ``pay for itself'', since the cost of fixing is low when amortized across the rounds since the last rebuild.
Thus we declare a ``WIN'' at level
$i$ and rebuild levels $j \geq i$.

The second counter, $\fixc_i$, is the more interesting one.
When a fixing step occurs and we decide to fix level $i$ (thus rebuiling levels $j \geq i$)
we say a ``LOSS'' occurred at level $i$.
The $\fixc_i$ counter tracks how many times a fixing step occured and
we had a LOSS at level $i$, counted since the last
time we rebuilt level $i$ due to a WIN at some level $l \le i$.
In a fixing step, we always decide to let the LOSS occur at the
largest level index $i$ where $\fixc_i < 2K$. 

\subsection{Analyzing the rebuilding game algorithm}

Before we start our formal proof of \Cref{lemma:rebuildinggame}, we will outline the main
the main elements of the analyses of the time cost of using \Cref{alg:rebuildStrategy} to play the rebuilding game.

Ideally, we would like to say that ``if a LOSS occurs
at level $i$, then the weight  $W^{(\prev^{(s)}_i)}$ must be
large compared to the current round $t$ weight $W^{(t)}$'', because
then rebuilding would reduce $W^{(\prev^{(s+1)}_i)}$.
However, this is not true, as it may be that some other
level's weight $W^{(\prev^{(s)}_j)}$ for $j < i$ is large enough to make
Equation~\eqref{eq:gameLossSumCond} hold, allowing the adversary to
force a fixing step.
Instead, the invariant we maintain is this: We ensure that when a
fixing step occurs, if we choose to rebuild level $i$, it must be that
either.
\begin{description}
\item[Case (A)] either some level $l < i$ has an even larger weight than
all levels $j \geq i$
and thus can be ``blamed'' for the fixing step, or
\item[Case (B)]
no level $j > i$ has a weight large enough to force a
fixing step. 
\end{description}

\paragraph{Handling Case (B).}
In case (B), we significantly reduce the
weight $W^{(\prev^{(s+1)}_i)}$ compared to $W^{(\prev^{(s)}_i)}$ at level $i$ for the next step $s+1$.
Once we have $\fixc_j = 2K$ for all $j \geq i$, there must be level $l < i$ with weight
larger than level $i$, as repeated occurrence of (B) ensures this.
\paragraph{Handling Case (A).}
Now, the remaining key point is to make sure that in case (A), we do
not waste too much time rebuilding at level $i$ before moving to
rebuilding at level $i-1$, so that we eventually start rebuilding the
most problematic level $l < i$ with larger weight.
Fortunately, our threshold of $2K$
fixes at level $i$ is low enough to ensure this.
Finally, to help us formalize that we make progress
on reducing the weight at level $i$ specifically in Case (A), we introduce a notion of ``prefix maximizing''
levels. At any step $s$, we say a level $i$ is ``prefix maximizing'' if its weight $W^{(\prev^{(s)}_i)}$ is strictly
larger than the weight $W^{(\prev^{(s)}_j)}$ at all levels $j < i$
This leads to the following definition.
\newcommand{\badLevelIndices}{\mathcal{I}}
\begin{definition}
  In the Rebuilding Game, at the start of each step $s$, we define a
  set $\badLevelIndices^{(s)}$ of ``prefix maximizing'' levels,
 given by
  \[
    \badLevelIndices^{(s)} = %
    \Set{ i \in \setof{0,1,\ldots,d} }{ W^{(\prev^{(s)}_i)} >
      W^{(\prev^{(s)}_j)} \forall j < i }.
  \]
\end{definition}
Note that $0 \in \badLevelIndices^{(s)}$ for all $s,$ and the player does not know which levels are prefix maximizing. 

This next lemma shows formally that strategy of
\Cref{alg:rebuildStrategy} successfully implements the kind of weight
tracking we described above.
Concretely, the lemma tells us that when a level $i$ is prefix maximizing, the weight of the level must be pushed down as
$\fixc_i$ increases.
\begin{lemma}[Bound on fixing step count]
  In the rebuilding game, suppose the player uses the strategy of
  \Cref{alg:rebuildStrategy}, then we always have $\fixc_0 < 2\Wrange$.
    \label{lem:fixingCountBound}
\end{lemma}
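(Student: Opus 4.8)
The statement to prove is that $\fixc_0 < 2\Wrange$ always holds when the player uses \Cref{alg:rebuildStrategy}. The key intuition is that $\fixc_0$ can only be incremented during a LOSS at level $0$, and by the rule in \Cref{lne:LOSSfix}, a LOSS at level $0$ only occurs when $\fixc_j = 2\Wrange$ for \emph{all} levels $j > 0$. So the plan is to show that before we ever reach a configuration where all higher levels have saturated their fixing counters, level $0$ must already have become ``responsible'' for the problem, in the sense that the weight at level $0$ is larger than the weight at every other level --- which is impossible to sustain, because rebuilding level $0$ sets $W^{(\prev^{(s+1)}_0)} = W^{(t)}$, and then $0$ is no longer prefix-maximizing-with-strict-inequality above it, contradiction with the weights being forced large. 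More precisely, I would track the invariant across steps that whenever $\fixc_j = 2\Wrange$ for all $j > i$, then level $i$ is prefix-maximizing, i.e. $i \in \badLevelIndices^{(s)}$, \emph{and} each increment of $\fixc_i$ in Case (B) strictly decreases $\log W^{(\prev^{(s)}_i)}$ by at least $1$.

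First I would establish the core weight-tracking claim: whenever the strategy performs a LOSS at level $i$ because $\fixc_j = 2\Wrange$ for all $j > i$ (Case (B), i.e. no higher level's weight is large enough to be blamed), the weight $W^{(\prev^{(s)}_i)}$ must exceed $W^{(t)}$. Indeed, if the fixing step was forced via \eqref{eq:gameLossSumCond}, then $\sum_{i'=0}^d W^{(\prev^{(s)}_{i'})} > 2(d+1)W^{(t)}$, so \emph{some} level $i'$ has $W^{(\prev^{(s)}_{i'})} > 2W^{(t)}$; if in fact no level $j > i$ has $W^{(\prev^{(s)}_j)} > 2W^{(t)}$ (which is what ``Case (B)'' should be formalized to mean at the chosen $i$, using that $\fixc_j$ saturated at $2\Wrange$), then the offending level $i'$ must satisfy $i' \le i$, and by the prefix-maximizing invariant level $i$ dominates all $j < i$, giving $W^{(\prev^{(s)}_i)} \ge W^{(\prev^{(s)}_{i'})} > 2W^{(t)}$. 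Rebuilding level $i$ then sets $\prev^{(s+1)}_i = t$, so $W^{(\prev^{(s+1)}_i)} = W^{(t)} < \tfrac12 W^{(\prev^{(s)}_i)}$, i.e. $\log W^{(\prev^{(s+1)}_i)} < \log W^{(\prev^{(s)}_i)} - 1$. I would also need to verify that subsequent steps before level $i$ is rebuilt by a WIN or a lower-level fix do not increase $W^{(\prev^{(\cdot)}_i)}$ --- this holds because $\prev_i$ only changes when level $i$ is fixed, and a fix at level $i' \le i$ sets $\prev_i = t$, handled by the same argument; a fix at level $i' > i$ leaves $\prev_i$ untouched; and $\fixc_i$ resets only on a WIN at some level $\le i$, which also resets $\prev_i$.

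Next I would close the induction on $\fixc_0$. Since $\log W^{(\prev^{(s)}_0)} \in (-\Wrange, \Wrange)$ always (the adversary's constraint in Step~\ref{enu:gameStart}, combined with the fact that $\prev^{(s)}_0$ is always some round index), and each time $\fixc_0$ is incremented the quantity $\log W^{(\prev^{(\cdot)}_0)}$ strictly decreases by more than $1$, while between successive increments (and without a reset via WIN) it never increases, it can be decremented at most $2\Wrange - 1 < 2\Wrange$ times before it would exit the allowed range $(-\Wrange, \Wrange)$ --- a contradiction. Hence $\fixc_0$ is incremented fewer than $2\Wrange$ times between resets, and since resets only lower it, $\fixc_0 < 2\Wrange$ at all times. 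A small subtlety to nail down carefully: I must argue that whenever \Cref{lne:LOSSfix} actually picks $i = 0$, it is genuinely Case (B) at level $0$, i.e. that $\fixc_j = 2\Wrange$ for all $j > 0$ (immediate from the choice rule ``smallest level such that for all $j>i$, $\fixc_j = 2\Wrange$''), \emph{and} that this forces $W^{(\prev^{(s)}_0)} > 2W^{(t)}$. The latter requires knowing that none of the higher levels $j>0$ has a large weight --- which is exactly where I need the inductively maintained statement that once $\fixc_j$ has been pushed to $2\Wrange$ (via Case~(B) losses at level $j$), the weight $W^{(\prev^{(s)}_j)}$ has been driven below the threshold, so it can't be the one violating \eqref{eq:gameLossSumCond}.

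\textbf{Main obstacle.} The delicate part is making the Case~(A)/Case~(B) dichotomy precise and threading the invariant ``$\fixc_j = 2\Wrange$ for all $j>i$ $\Rightarrow$ level $i$ is prefix-maximizing and its weight has dropped enough'' through \emph{all} intervening steps --- including WINs at various levels (which reset some counters and some $\prev$ values) and forced fixes triggered by \eqref{eq:gameProgCountCond} rather than \eqref{eq:gameLossSumCond}. In particular I need to check that a forced fix coming from the round-count condition \eqref{eq:gameProgCountCond} is always answered by \Cref{alg:rebuildStrategy} with the WIN branch (\Cref{lne:WINfix}) \emph{before} the while-loop on forced fixing steps is reached, so that round-count-triggered fixes never increment any $\fixc_i$ and never interfere with the weight-descent accounting for $\fixc_0$. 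Once that bookkeeping is set up, each individual implication is a short calculation; the work is in choosing the right invariant so that the induction goes through cleanly.
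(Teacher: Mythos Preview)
Your approach is essentially the paper's, and the main beats are right: you correctly isolate that only the weight-sum condition \eqref{eq:gameLossSumCond} can trigger LOSS branches (the round-count condition \eqref{eq:gameProgCountCond} is preemptively neutralized by the WIN in \Cref{lne:WINfix}), and that each LOSS at level $0$ should force $W^{(\prev^{(\cdot)}_0)}$ to halve, bounding $\fixc_0$ by the $\log$-range $2K$.

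There is, however, a genuine imprecision in the invariant you propose that would block the induction if taken literally. You write that ``whenever $\fixc_j = 2K$ for all $j>i$, then level $i$ is prefix-maximizing.'' This is false: take $i=d$ (the premise is vacuous but $d$ need not lie in $\badLevelIndices^{(s)}$), or take $i=1$ immediately after a LOSS at level $0$ (then all $\prev_j$ point to the same round and only level $0$ is in $\badLevelIndices$ while $\fixc_j$ for $j>1$ is still $2K$). What the argument actually needs is the reverse implication for the \emph{higher} levels: $\fixc_j = 2K \Rightarrow j \notin \badLevelIndices^{(s)}$. Similarly, your phrase ``driven below the threshold'' cannot be an \emph{absolute} bound on $W^{(\prev^{(s)}_j)}$: a LOSS at any level $l<j$ rebuilds level $j$ and resets $\prev_j$ to the current round without touching $\fixc_j$, so $W^{(\prev_j)}$ can become arbitrary while $\fixc_j$ stays at $2K$. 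The bound has to be relative.

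The paper packages exactly this as the single invariant
\[
  i\in\badLevelIndices^{(s)} \;\Longrightarrow\; \log_2 W^{(\prev^{(s)}_i)} < K - \fixc_i,
\]
from which $\fixc_j = 2K$ and $j\in\badLevelIndices^{(s)}$ together force $\log_2 W^{(\prev^{(s)}_j)} < -K$, a contradiction; hence at any LOSS at level $i$ the maximum weight sits at some level $\le i$. The Case~(A)/(B) split is then precisely whether the LOSS level $i$ is itself in $\badLevelIndices^{(s)}$: in Case~(B) the weight at $i$ halves and the invariant propagates; in Case~(A) it need not, but one checks that $i$ \emph{leaves} $\badLevelIndices$ after the rebuild (and stays out until a WIN resets $\fixc_i$), so the invariant becomes vacuous for $i$. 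You will need both cases for levels $j>0$; your parenthetical ``via Case~(B) losses at level $j$'' suggests you are not yet accounting for Case~(A) increments of $\fixc_j$. With the corrected invariant, the level-$0$ conclusion follows exactly as you outline, since $0\in\badLevelIndices^{(s)}$ always.
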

This lemma tells use that we never have $2K$ LOSSes at level $0$ before the next WIN at level $0$.
Since \Cref{alg:rebuildStrategy} trivially ensures $\fixc_j \le 2\Wrange$ for all levels $j > 0$,
this gives us tight control over the number of fixing steps that can be forced by the adversary.

\begin{proof}[Proof of \Cref{lem:fixingCountBound}]
  We consider an instance of the rebuilding game and suppose the player uses the strategy given by \Cref{alg:rebuildStrategy}.
  To prove our lemma, we first introduce a condition which must be satisfied in each step by each level $i$ which is prefix maximizing.
  This condition essentially states that the $\fixc_i$ counter is correctly tracking an upper bound on the level weight $W^{(\prev^{(s)}_i)}$.
  For convenience of our analysis, we define the condition for levels regardless of whether they are prefix maximizing, although we only need to show that it holds for such levels.
\begin{definition*}%
  At the start of step $s$,
    if for some level $i$ we have,
  \begin{align}
    \log_2\left(W^{(\prev^{(s)}_i)} \right)
    < \Wrange -\fixc_i
    &&
       (\text{$\fixc_i$ correctness condition})
    \label{eq:LevelFailBoundsWeight}
  \end{align}
  we say that level $i$ satisfies \emph{the $\fixc_i$ correctness condition at step} $s$.
\end{definition*}
Given this notion, we can now state the induction hypothesis.
  \begin{inducthyp*}%
    At the start of step $s$, Condition~\eqref{eq:LevelFailBoundsWeight} holds for each $i \in \badLevelIndices^{(s)}$.
\end{inducthyp*}

  We will prove this claim by induction on the step count $s$.
  First, we establish that proving this claim is sufficient to prove the lemma.
  By assumption, we have $\log_2\left(W^{(\prev^{(s)}_i)} \right) > -K,$ and hence the above claim would imply $\fixc_i < 2K$ for all $i \in \badLevelIndices^{(s)},$ for all $s.$ Since $0 \in \badLevelIndices^{(s)}$ for all $s,$ we get that $\fixc_0 < 2K$ always.

  We first establish the base case $s = 1.$  Note that trivially for all $i \in
  \setof{0,1,2,\ldots,d}$, we have $\fixc_i = 0$, and by
  assumption we have
  $\log_2(W^{(\prev^{(s)}_i)}) < \Wrange =  \Wrange-\fixc_i$.
  This implies Condition~\eqref{eq:LevelFailBoundsWeight} holds for every level
  $i$, and hence it holds for each level $i \in \badLevelIndices^{(s)}$.
  This establishes the base case.

  Now, we now assume the induction hypothesis at the \emph{start} of step $s$ and prove it for the \emph{start} of step $s+1$, i.e. we want to show
  Condition~\eqref{eq:LevelFailBoundsWeight} holds for each $i \in
  \badLevelIndices^{(s+1)}$ at the start of step $s+1$.
  We break the analysis into two main cases, depending on what
  happens in step $s$.
  The first case (1) is when a WIN occurs.
  The second case (2) is when a 
  LOSS occurs at some level $i$.
  We further break the second case into two sub-cases, separately handling 
  when (2A) $i$ is not a prefix maximizing level at step $s$
  and when (2B) $i$ is a prefix maximizing level at step $s$.
  Conceptually, the key case is (2B), when we have a LOSS and $i \in
  \badLevelIndices^{(s)}$, which means we have to
  ensure that we make progress by reducing $W^{(\prev^{(s+1)}_i)}$
  compared to the earlier value $W^{(\prev^{(s)}_i)}$.

  \paragraph{Case 1: a WIN occurs.}
  In this case, a WIN must occur at some level $i \in 
  \setof{0,1,2,\ldots,d}$ (since $\gamma_g m/k^d\le 1$).
  Let $i$ denote the level at which the WIN occurs.
  In this case, for all levels $j \geq i$, we rebuild and set
  $\fixc_i = 0$, and by the
  definition of $\Wrange$, we thus have (for the updated value of $\fixc_i$) that
  $\log_2(W^{(\prev^{(s+1)}_i)}) < \Wrange =  \Wrange-\fixc_i$.
  Thus, for each $j \geq i$, we have that
  Condition~\eqref{eq:LevelFailBoundsWeight} holds,
  and thus, in particular, it must hold for each $j \in \badLevelIndices^{(s+1)}$.

  For each level $l < i$,
  $\fixc_l$ does not change and $W^{(\prev^{(s+1)}_l)} = W^{(\prev^{(s)}_l)}$.
  The latter implies that for each $l < i$,
  $l \in \badLevelIndices^{(s+1)}$ if and only if $l \in \badLevelIndices^{(s)}$.
  We also conclude that Condition~\eqref{eq:LevelFailBoundsWeight} holds at the beginning of step $s+1$ if it held for level $l$ at the beginning of step $s$.
  Hence, by the induction hypothesis at step $s$,
  Condition~\eqref{eq:LevelFailBoundsWeight} holds for all  $l < i$
  with $l \in \badLevelIndices^{(s+1)}$.

  This proves the induction hypothesis for step $s+1$ in the case where a WIN occurs.
  
  \paragraph{Case 2: a LOSS occurs.}
  We next consider the case when a \emph{LOSS} occurs in step $s$, and we let the current round be denoted by $t$.
  The \emph{LOSS} occurs at some level $i$.
  In order to analyze this case, we are going to split it further into two subcases 2A and 2B,
  depending on whether the LOSS occurs at level $i$ which is in the prefix maximizing set or not (2B and 2A respectively).
  However, first we make some observations that are common to both cases 2A and 2B. 

  To start, we deal with levels $l < i$. As in the case of a WIN, we again have that, for each $l < i$,
  $\fixc_l$ does not change and $W^{(\prev^{(s+1)}_l)} = W^{(\prev^{(s)}_l)}$.
  The latter implies that for each $l < i$,
  $l \in \badLevelIndices^{(s+1)}$ if and only if $l \in \badLevelIndices^{(s)}$.
  Hence, by the induction hypothesis,   Condition~\eqref{eq:LevelFailBoundsWeight} holds
 for each $l < i$ with
  $l \in \badLevelIndices^{(s+1)}.$
  
  Next we need to deal with levels  $j \geq i$.
  As a LOSS occurs in step $s$, we must have that
    $\sum_{j = 0}^d  W^{(\prev^{(s)}_j)} \geq 2(d+1) W^{(t)}$.
  This implies
  \begin{equation}
    \max_{j = 0}^d  W^{(\prev^{(s)}_j)} \geq \frac{2(d+1)}{d+1}
    W^{(t)} \geq 2 W^{(t)}
    .\label{eq:12}
  \end{equation}

  We claim that in this case, we must have
  \begin{equation}
  \badLevelIndices^{(s)} \cap
  \setof{i+1, \ldots, d} = \emptyset
  .
  \label{eq:noBadLevelsStrictlyBelowLOSSLevel}
\end{equation}

  Suppose for a contradiction that for some $j > i$ we have $j \in \badLevelIndices^{(s)}$.
  As $\fixc_j = 2\Wrange$, we conclude by the induction hypothesis, that at the start of 
  step $s$, we have $\log_2 (W^{(\prev^{(s)}_j)})
  < \Wrange-\fixc_j = -\Wrange$, and hence $\log_2(W^{(\prev^{(s)}_j)}) < -\Wrange$.
  But, this is impossible, as $\log_2(W^{(\prev^{(s)}_j)}) > -\Wrange$
  by the game definitions.

  \paragraph{Subcase 2A: LOSS at level $i \not\in \badLevelIndices^{(s)}$.}
  We now further restrict to the case when at the start of step $s$, we have $i
  \not\in \badLevelIndices^{(s)}$.
  We thus have at the start of step $s$, by the condition observed in Equation~\eqref{eq:noBadLevelsStrictlyBelowLOSSLevel}, that
  $\badLevelIndices^{(s)} \cap \setof{i,i+1, \ldots, d} = \emptyset$.

  This allows us to conclude that
  \begin{equation}
        \text{ for all }  j \in \setof{i,i+1,
          \ldots, d} \text{ there exists } l < j \text{ with } W^{(\prev^{(s)}_l)} >
        W^{(\prev^{(s)}_j)}.
        \label{eq:levelsAboveFixAreGood}
      \end{equation}
      By Equation~\eqref{eq:levelsAboveFixAreGood}, we conclude that there exists $l < i$
      with $W^{(\prev^{(s)}_l)} > \max_{j \in \setof{i,i+1,\ldots,d}}
      W^{(\prev^{(s)}_j)}$.
      Furthermore, we can conclude that $\max_{h \in \setof{0,\ldots,i-1} }
      W^{(\prev^{(s)}_h)} \geq 2 W^{(t)}$, since the maximum in
      Equation~\eqref{eq:12} is not achieved by an index $\geq i$.
      Consequently, when the LOSS at level $i$ occurs and we rebuild,
      for all levels $j \in \setof{i,i+1,\ldots,d}$,
      we set $\prev^{(s+1)}_j \gets t$, and hence at the start of step $s+1$, we have for all $j \geq i$ that
      \[
        W^{(\prev^{(s+1)}_j)} = W^{(t)}
        \le \frac{1}{2} \max_{h \in \setof{0,\ldots,i-1} } W^{(\prev^{(s)}_h)}
        = \frac{1}{2} \max_{h \in \setof{0,\ldots,i-1 } } W^{(\prev^{(s+1)}_h)}
      \]
      and hence for all $j \geq i$ we conclude that $j \not\in \badLevelIndices^{(s+1)}$.
      Altogether, this proves the induction hypothesis for step $s+1$ in the case
      where a LOSS occurs at level $i$ and  $i \not\in \badLevelIndices^{(s)}$.

\paragraph{Subcase 2B: LOSS at level $i \in \badLevelIndices^{(s)}$.}
  We now consider the case when at the start of step $s$, we have
    $i \in \badLevelIndices^{(s)}$. This is the most important case,
    where we ensure a reduction in the weight $W^{(\prev^{(s+1)}_i)}$
    compared to $W^{(\prev^{(s)}_i)}$.
    By Equation~\eqref{eq:noBadLevelsStrictlyBelowLOSSLevel}, we have $\badLevelIndices^{(s)} \cap
    \setof{i+1, \ldots, d} = \emptyset$, and hence we conclude that $W^{(\prev^{(s)}_i)}
    = \max_{j = 0}^d  W^{(\prev^{(s)}_j)}\geq 2W^{(t)}$.
    By the induction hypothesis, we have (labelling $\fixc_i$
    explicitly by step for clarity) 
    \[
      \log_2(W^{(\prev^{(s)}_i)}) < K - \fixc_i^{(s)}
    \]
    hence, 
    \begin{equation}
     \log_2(W^{(\prev^{(s+1)}_i)}) =  \log_2(W^{(t)}) \le \log_2(W^{(\prev^{(s)}_i)}) -1 <
      K - \fixc_i^{(s)} -1 = K - \fixc_i^{(s+1)}.\label{eq:NewLevelFailBoundsWeight}
    \end{equation}
    Thus, Condition~\eqref{eq:LevelFailBoundsWeight} holds for $i$ at the
    end of step $s+1$, regardless of whether $i \in \badLevelIndices^{(s+1)}$.

    Furthermore, for all $j > i$, we set $W^{(\prev^{(s+1)}_j)} = W^{(t)} =
    W^{(\prev^{(s+1)}_i)}$, and hence $j \not\in
    \badLevelIndices^{(s+1)}$.
    
     Again, recall that we already dealt with established Condition~\eqref{eq:LevelFailBoundsWeight} for  $l < i$ above in Equation~\eqref{eq:levelsAboveFixAreGood}.
      Thus the induction hypothesis holds
    for step $s+1$ when $i \in \badLevelIndices^{(s)}$,

    \medskip
    \noindent
    This completes
    our case analysis, establishing the inductive hypothesis, and hence the lemma.
\end{proof} %

At this point, armed with the conclusion of
\Cref{lem:fixingCountBound}, we are ready to analyze the
running time of the rebuilding game strategy given by
\Cref{alg:rebuildStrategy}, to prove the main lemma of this section,
\Cref{lemma:rebuildinggame}.

Before starting the proof, we will briefly outline its main elements.
The costs of the rebuilding game occur during fixes as part of either a WIN or a LOSS
in \Cref{lne:WINfix} and \Cref{lne:LOSSfix} respectively.

We use a standard amortization argument to account for the cost of fixes that occur during WINs.
We can count the cost occurred during WINs separately  at each level, and finally add it up across these.
In each level, the cost per round can be bounded by $C_r/\gamma_g$.

Next, we have to account for the cost of fixes carried out during LOSSes.
These fixes are all accounted for by increases in some $\fixc_j$ counter.
We then bound the total cost of fixes that later have their $\fixc_j$ counter
reset by amortizing the cost toward the rounds that cause the reset
of the $\fixc_j$ counter through a WIN at some level $i \le j$.
Because \Cref{lem:fixingCountBound} guarantees the $\fixc_j$ counters are bounded by $2K$,
we can bound the additional cost amortized toward the rounds during the WIN at level $i$ by 
$\frac{4K C_r}{\gamma_g}$ per step.
Finally the bound on the $\fixc_j$ counters from \Cref{lem:fixingCountBound} also tells us that
the leftover cost unaccounted for by amortization through resets is also bounded, this time by $4KC_r m$ in total.

\begin{proof}[Proof of \Cref{lemma:rebuildinggame}.]
  To bound the running time, we use a simple amortized analysis across the steps of the rebuilding game.

  The round counter at level $i$, i.e., $\progc_i$ increases by $1$ in each round and hence the sum of the increases is $T$.
  If a WIN occurs at level $i$, we incur a time cost through a fix of level $i$, with a cost of $C_r m/k^i$ (\Cref{lne:WINfix}).
  At the same time, we reduce the round counter at $i$ and all deeper levels $j > i$, and in particular, we reduce $\progc_i$ by $\gamma_g m/k^i$.
  We will amortize the cost of this fix toward the rounds that increased $\progc_i$ from zero to the threshold  $\gamma_g m/k^i$, and thus the amortized cost from fixes during WINs at level $i$ per round is at most $C_r/\gamma_g$.
  When we add this up across $T$ rounds, the total cost from fixes during  WINs at level $i$ is $T\cdot C_r/\gamma_g$.
  Thus the total cost added across our $d+1$ levels from fixes during WINs is
  \begin{equation}
    \label{eq:costWINfix}
    \text{cost from fixes during WINs } \le T(d+1)C_r/\gamma_g
    .
  \end{equation}

  All the cost incurred during a LOSS at some level $i$ (\Cref{lne:LOSSfix}) leads to an increase of the fixing step counter $\fixc_i$ by 1, and has an associated time cost of $C_r m/k^i$.
  We will break the cost form LOSSes into two parts:
  \begin{enumerate}
  \item Cost accounted for by a $\fixc_i$ counter increase where the fix counter is later reset to 0.
    \label{enu:costLOSSReset}
  \item Cost accounted for by a $\fixc_i$ counter increase where the fix counter is not reset.
    \label{enu:costLOSSNoReset}
  \end{enumerate}
  We can bound the cost arising from Part~\ref{enu:costLOSSNoReset} very easily: By \Cref{lem:fixingCountBound}, $\fixc_i \le 2K$,
  and so the cost from fixing of level $i$ without a reset of the counter following is bounded by $\fixc_i \cdot C_r m/k^i \le 2K \cdot C_r m/k^i$.
  Adding this cost across all levels we get that the total cost from Part~\ref{enu:costLOSSNoReset}
  is upper bounded by
  \begin{equation}
    \label{eq:costLOSSNoResetFix}
    \text{cost from fixes during LOSSes with no $\fixc$ counter reset } \le \sum_{i = 0}^d 2K \cdot C_r m/k^i \le 4KC_r m
    .
  \end{equation}

  Finally, we bound the cost from Part~\ref{enu:costLOSSReset}.
  Consider the resetting of some counter $\fixc_j$ associated with a level $j$.
  Any such counter is reset during a WIN at some level $i < j$.
  We will bound the cost part by amortizing it toward the rounds that caused this WIN at level $i$.
  In particular, note that the level $i$ experienced least $\gamma_g m/k^i$ rounds since it was last rebuilt
  and at this point $\fixc_j$ was reset (though it may also have been reset again since).
  This means we can count the cost associated with the increases in $\fixc_j$ toward the WIN at level $i$.
  The total cost we need to account for in this way toward the WIN at level $i$ is then
  \[
    \sum_{j = i}^d \fixc_j \cdot C_r m/k^j \le 2K \sum_{j = i}^d C_r m/k^j \le 4K C_r m/k^i
    .
  \]
  Thus, the amortized cost per round associated with WINs at level $i$ through these $\fixc_j$ resets is at most
  \[
    \frac{4K C_r m/k^i}{\gamma_g m/k^i} = \frac{4K C_r}{\gamma_g}.
  \]
  As we have $T$ rounds, the total cost associated with WINs at level $i$ through $\fixc_j$ resets is then  $\frac{4K C_r}{\gamma_g} T$.
  Since a round can contribute toward a WIN at each of our $d+1$ levels,
  this means the cost amortized to toward a round across all levels is .
    \begin{equation}
    \label{eq:costLOSSResetFix}
    \text{cost from fixes during LOSSes with $\fixc$ counter reset }
    \le \sum_{i = 0}^d T\cdot \frac{4K C_r}{\gamma_g}
    = \frac{4K C_r(d+1)}{\gamma_g} T
    .
  \end{equation}

  Finally, adding together the costs accounted for in Equations~\eqref{eq:costWINfix}, \eqref{eq:costLOSSNoResetFix}, and \eqref{eq:costLOSSResetFix}, and the cost of executing rounds,
  we get a bound on the total cost of $O(\frac{KC_rd}{\gamma_g} (T+m))$ as desired.
\end{proof}

\subsection{Dynamic Min-Ratio Cycle Using the Rebuilding Game}

In this section we combine \cref{thm:norebuild} and \cref{lemma:rebuildinggame} to show \cref{thm:MMCHiddenStableFlow} which gives a data structure for returning min-ratio cycles in dynamic graphs with hidden stable-flow chasing updates.
\begin{proof}[Proof of \cref{thm:MMCHiddenStableFlow}]
Let $W^{(t)} \defeq \|\bw^{(t)}\|_1$. The adversary plays the following strategy. They feed the inputs $\bg^{(t)},\bell^{(t)},U^{(t)}$ to \cref{thm:norebuild} and get a cycle $\bDelta$. Let $\kappa^{(\ref{thm:norebuild})}$ be the approximate parameter from \cref{thm:norebuild}. Let $\kappa \defeq \kappa^{(\ref{thm:norebuild})}/(2d+2)$. The adversary checks whether \begin{align} \l \bg^{(t)}, \bDelta\r/\|\bell^{(t)} \circ \bDelta\|_1 \le -\kappa\alpha. \label{eq:checkit} \end{align} Because $\bDelta$ is represented using $m^{o(1)}$ edges on a tree $T$, this can be performed in amortized $m^{o(1)}$ time by using a dynamic tree (\cref{algo:LCT}). If \eqref{eq:checkit} holds, the adversary allow a progress step, and this completes the \textsc{Query}() operation of \cref{thm:MMCHiddenStableFlow}. Otherwise, they force the player to perform a fixing step. This is valid because by \cref{thm:norebuild} we must have
\begin{align*}
    \frac{\kappa^{(\ref{thm:norebuild})}}{2d} \frac{\l \bg^{(t)}, \bc^{(t)} \r}{\|\bw^{(t)}\|_1} \le -\kappa\alpha \le \frac{\l \bg^{(t)}, \bDelta \r}{\|\bell^{(t)} \circ \bDelta\|_1} \le \kappa^{(\ref{thm:norebuild})} \cdot \frac{\l \bg^{(t)}, \bc^{(t)} \r}{\sum_{i=0}^d \|\bw^{(\prev^{(t)}_i)}\|_1},
\end{align*}
so $\sum_{i=0}^d \|\bw^{(\prev^{(t)}_i)}\|_1 \ge 2(d+1) \|\bw^{(t)}\|_1$. Our algorithm for \cref{thm:MMCHiddenStableFlow} is then to implement the player's strategy in \cref{lemma:rebuildinggame} on top of \cref{thm:norebuild}.

Correctness of \textsc{Query}() follows by definition. To bound the runtime, by \cref{thm:norebuild} we can take the constants $d = \log^{1/8}m$, $k = \exp(O(\log^{7/8}m))$, $\gamma_g = \exp(-O(\log^{7/8}m\log\log m))$, $C_r = \exp(O(\log^{7/8}m\log\log m))$, $T = (m+Q) \exp(O(\log^{7/8}m\log\log m))$. Thus by \cref{lemma:rebuildinggame} the total runtime to execute the player's algorithm is $(m+Q) \exp(O(\log^{7/8}m\log\log m))$ as desired.
\end{proof}

\section{Computing the Min-Cost Flow via Min-Ratio Cycles}
\label{sec:combine}
In this section we given the full pseudocode for proving \cref{thm:main}, modulo getting an initial point and final point, which are explained in \cref{lemma:initialpoint,lemma:finalpoint}.

\begin{algorithm}
\caption{$\textsc{MinCostFlow}(G, \bd, \bc, \bu^+, \bu^-, \bf^{(0)}, F^*)$. Takes graph $G$, demands $\bd$, costs $\bc$, upper/lower capacities $\bu^+,\bu^-$, initial feasible flow $\bf^{(0)}$ (\cref{lemma:initialpoint}), and guess of the optimal flow $F^*$ \label{algo:mincost}}
\SetKwProg{Globals}{global variables}{}{}
\SetKwProg{Proc}{procedure}{}{}
\Globals{}{
    $\alpha \gets 1 / (1000 \log m U)$ \\
    $\kappa \gets \exp(-O(\log^{7/8} m \log \log m))$ \tcp*{Approximation quality in \cref{thm:MMCHiddenStableFlow}}
    $d \assign O(\log^{1/8} n)$ \tcp*{Data structure depth}
    $\mathcal{D}^{(HSFC)}$ \tcp*{Hidden Stable-Flow Chasing (HSFC) data structure in \cref{thm:MMCHiddenStableFlow}}
    $T_1, T_2, \dots, T_s$ for $s \gets O(\log n)^d$ \tcp*{Trees maintained by data structure $\mathcal{D}^{(HSFC)}$}
    $\eps \gets \kappa\alpha/(1000s)$. \tcp*{Error tolerated within each tree.}
    $\mathcal{D}^{(T_i)}$ \tcp*{Dynamic tree data structure for trees $T_i$}
    $\bf_1, \dots, \bf_s \assign \vec{0} \in \R^E$ and $\bf \defeq \bf^{(0)} + \sum_{i\in[s]} \bf_i$ \tcp*{Flows on trees $T_i$}
    $\wt{\bf}^{(t)}$ \tcp*{Approximate flow at stage $t$, remembers which edges have been updated.}
    $\bf^{(t)} \assign \bf^{(0)}$ \tcp*{Total flow at stage $t$, implicitly stored}
    $r \assign \infty$ \tcp*{Estimate of cost difference from optimal.}
}
\Proc{$\textsc{MinCostFlow}(G, \bd, \bc, \bu^+, \bu^-, \bf^{(0)}, F^*)$}{
    \While{$\bc^\top\bf^{(t)} - F^* \ge (mU)^{-10}$}{
        \If{$t$ is a multiple of $\lfloor\eps m\rfloor$\label{line:reinit}}{
            Explicitly compute $\bf^{(t)} \assign \bf^{(0)} + \sum_{i\in[s]} \bf_i$, $\wt{\bf}^{(t)} \assign \bf^{(t)}$. \label{line:initf} \\
            $r \assign \bc^\top\wt{\bf}^{(t)} - F^*$.\label{line:initr} \tcp*{Cost difference from optimal.}
            $\bg^{(t)} \assign \bg(\wt{\bf}^{(t)}), \bell^{(t)} \assign \bell(\wt{\bf}^{(t)})$ \label{line:initlg}\tcp*{\cref{def:lg}}
            Rebuild $\mathcal{D}^{(HSFC)}$ and update the $T_i$. \label{line:initdhsfc}
            \tcp*{Because $r$ may have changed by a $1+\eps$ factor.}
        }
        $U^{(t)} \assign \bigcup_{i\in[s]} \mathcal{D}^{(T_i)}.\textsc{Detect}()$ \label{line:detect} \tcp*{\cref{algo:LCT}}
        \ForEach{  $e \in U^{(t)}$}{
        Set $\wt{\bf}^{(t)}_e \assign \bf^{(t)}_e = \bf^{(0)}_e + \sum_{i\in[s]}(\bf_i)_e$, $\bell^{(t)}_e \assign \bell(\wt{\bf}^{(t)})_e$ \label{line:deffl}\tcp*{\cref{def:lg}}
        $\bg^{(t)}_e \assign 20m\bc_e/r + \alpha(\bu^+_e - \wt{\bf}^{(t)}_e)^{-1-\alpha} - \alpha(\wt{\bf}^{(t)}_e - \bu^-_e)^{-1-\alpha}$ \label{line:defg} 
        }
        \tcc{No change to $e \notin U^{(t)}$}
        \lForEach{ $e \notin U^{(t)}$}{     $\bg^{(t)}_e \assign \bg^{(t-1)}_e, \bell^{(t)}_e \assign \bell^{(t-1)}_e$, $\wt{\bf}^{(t)}_e \assign \wt{\bf}^{(t-1)}_e$.}
        $\mathcal{D}^{(HSFC)}.\textsc{Update}(U^{(t)}, \bg^{(t)}, \bell^{(t)})$, and update the $T_i$ for $i \in [s]$ \tcp*{\cref{thm:MMCHiddenStableFlow}}
        $(i, \bDelta) \assign \mathcal{D}^{(HSFC)}.\textsc{Query}()$, where $i \in [s]$ and $\bDelta = (u_1,v_1) \oplus T_i[v_1, u_2] \oplus (u_2,v_2) \oplus \dots \oplus (u_l,v_l) \oplus T_i[v_l, u_1]$ for edges $(u_i,v_i)$ and $l \le m^{o(1)}$. \label{line:query} \tcp*{$\bDelta$ represented via $m^{o(1)}$ off-tree edges and paths on $T_i$}
        $\bDelta \assign \eta\bDelta$ for $\eta \gets -\kappa^2\alpha^2/(800\langle \bg^{(t)},\bDelta\rangle)$ \label{line:scale} \tcp*{Scale $\bDelta$ so $\langle \bg^{(t)},\bDelta\rangle = -\kappa^2\alpha^2/800$}
        $\bf_i \assign \bf_i + \bDelta$ using $\mathcal{D}^{(T_i)}$, \cref{algo:LCT} item \ref{item:nonpositiveflow}
        \tcp*{Implicitly set $\bf^{(t)} \defeq \bf^{(t-1)} + \bDelta$} \label{line:defft}
        $t \assign t+1$.
    }
    \Return{$\bf^{(t)}$}
}
\end{algorithm}

We explain the implementation in procedure \textsc{MinCostFlow} given in \Cref{algo:mincost}.
The algorithm maintains approximate lengths $\bell^{(t)}$ and gradients $\bg^{(t)}$, updating them when the dynamic tree data structures $\mathcal{D}^{(T_i)}$ report that some edge has accumulated many changes.
It updates these lengths and gradients, and passes the result to a data structure $\mathcal{D}^{(HSFC)}$ which dynamically maintains the trees $T_1, \dots, T_s$ and a min-ratio cycle on them under hidden stable-flow chasing updates.
We check that the updates to $\bell^{(t)},\bg^{(t)}$ are indeed hidden stable-flow chasing in \cref{lemma:setuphidden}.
Finally, the $\mathcal{D}^{(HSFC)}$ rebuilds itself every $\eps m$ iterations, after which the residual cost $\bc^\top\bf - F^*$ might have changed by a $1+\eps$ factor. It terminates when $\bc^\top\bf - F^* \le (mU)^{-10}$, which happens within $m^{1+o(1)}$ iterations.

To analyze the progress of the algorithm, we will show that \textsc{MinCostFlow} (\cref{algo:mincost}) satisfies the hypotheses of our main IPM result \cref{thm:IPM}. Thus, applying \cref{thm:IPM} shows that \textsc{MinCostFlow} (\cref{algo:mincost}) computes a mincost flow to high accuracy in $\O(m\kappa^{-2})$ iterations for some $\kappa = m^{-o(1)}$.

We first note that $\bell^{(t)}$ and $\bg^{(t)}$ are approximately correct lengths and gradients at all times.
\begin{lemma}[Stability in \textsc{MinCostFlow}]
\label{lemma:gl}
During a call to \textsc{MinCostFlow} (\cref{algo:mincost}), for $\bf^{(t)} \gets \bf^{(0)} + \sum_{i\in[s]}\bf_i$, we have $\bell^{(t)} \approx_{1.1} \bell(\bf^{(t)})$, for $r$ defined in line \ref{line:initr}, $r \approx_{1+\eps} \bc^\top\bf^{(t)} - F^*$, and
\[ \left\|\mL(\bf^{(t)})^{-1}(\bg^{(t)} - (\bc^\top\bf^{(t)}-F^*)/r \cdot \bg(\bf^{(t)})) \right\|_\infty \le 10s\eps = \alpha\kappa/100. \]
\end{lemma}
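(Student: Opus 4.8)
The plan is to track how much the flow on each edge can drift between the moments at which \textsc{Detect} reports that edge, and then feed this bound into the stability lemmas \cref{lemma:stablebell,lemma:stablebc,lemma:stablebg}. The key observation is that between the last time an edge $e$ was returned by \textsc{Detect} (or the last reinitialization) and the current stage $t$, the quantity $\bell^{(t)}_e \sum_{t'} |\bDelta^{(t')}_e|$ is below the detection threshold $\eps$; this is exactly the invariant maintained by \textsc{Detect} in \cref{algo:LCT}. Since on each edge the approximate value $\wt{\bf}^{(t)}_e$ and the approximate length/gradient were last refreshed at one of these reporting times, we get $|\bell^{(t)}_e(\bf^{(t)}_e - \wt{\bf}^{(t)}_e)| \le \eps$ for every edge, after accounting for the fact that the per-step updates $\bDelta$ are scaled (line \ref{line:scale}) so that $\|\mL^{(t)}\bDelta\|_1$ is small.

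First I would establish the length claim $\bell^{(t)} \approx_{1.1} \bell(\bf^{(t)})$. By the discussion above, $\|\mL(\wt{\bf}^{(t)})(\bf^{(t)} - \wt{\bf}^{(t)})\|_\infty$ is at most roughly $\eps$ (using that $\bell^{(t)} \approx_2 \bell(\wt{\bf}^{(t)})$ by induction, so the detected quantity controls the right thing up to a constant). Applying \cref{lemma:stablebell} with this $\eps$ gives $\bell(\wt{\bf}^{(t)}) \approx_{1+3\eps} \bell(\bf^{(t)})$, and since $\bell^{(t)}_e = \bell(\wt{\bf}^{(t)})_e$ exactly on the edges that were refreshed (and unchanged otherwise), a short induction on $t$ closes the loop to get $\bell^{(t)} \approx_{1.1} \bell(\bf^{(t)})$ for our choice $\eps = \kappa\alpha/(1000s)$, which is far smaller than what the constant $1.1$ needs. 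Next, the residual-cost claim $r \approx_{1+\eps} \bc^\top\bf^{(t)} - F^*$: $r$ is reset to $\bc^\top\wt{\bf}^{(t)} - F^*$ every $\lfloor \eps m\rfloor$ iterations, and by \cref{lemma:stablebc} each iteration changes the residual cost by at most a $(1 + O(\kappa^2/m)) = (1 + o(1/m))$ multiplicative factor (using the scaling $\langle \bg^{(t)},\bDelta\rangle = -\kappa^2\alpha^2/800$ and $\|\mL^{(t)}\bDelta\|_1 = O(\kappa\alpha)$), so over $\eps m$ iterations the drift is at most $(1 + o(1/m))^{\eps m} \le 1 + \eps$; I also need that $\bc^\top\wt{\bf}^{(t)}$ and $\bc^\top\bf^{(t)}$ agree closely at a reinitialization point, which holds because $\wt{\bf}^{(t)} = \bf^{(t)}$ is set explicitly in line \ref{line:initf}.

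For the gradient claim, I would invoke \cref{lemma:stablebg} directly. That lemma takes a flow $\bf$ (here $\wt{\bf}^{(t)}$, on which $\bg^{(t)}$ is defined edgewise exactly as in line \ref{line:defg}), a reference flow $\bar{\bf} = \bf^{(t)}$ with $\|\mL(\wt{\bf}^{(t)})(\wt{\bf}^{(t)} - \bf^{(t)})\|_\infty \le \eps'$ for the same $\eps'$ as in the length argument, and an estimate $r$ with $r \approx_{1+\eps} \bc^\top\bf^{(t)} - F^*$ (just established), and concludes $\|\mL(\wt{\bf}^{(t)})^{-1}(\bg^{(t)} - (\bc^\top\bf^{(t)}-F^*)/r \cdot \bg(\bf^{(t)}))\|_\infty \le 6\alpha\eps'$. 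Converting the preconditioner from $\mL(\wt{\bf}^{(t)})$ to $\mL(\bf^{(t)})$ costs only a constant factor by the length claim. Collecting constants, the bound is $O(\alpha) \cdot O(s\eps)$; tracking the factor-of-$s$ carefully (each of the $s$ trees contributes its own tolerance $\eps$ to the total deviation $\bf^{(t)} - \wt{\bf}^{(t)}$, since $\bf^{(t)} - \wt{\bf}^{(t)} = \sum_i ((\bf_i) - (\wt{\bf}_i))$ with each summand bounded via the corresponding dynamic tree's \textsc{Detect}) yields exactly $10s\eps = \alpha\kappa/100$ by the choice $\eps = \kappa\alpha/(1000s)$.

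The main obstacle I anticipate is the bookkeeping around the $s$ separate dynamic trees: the flow $\bf = \bf^{(0)} + \sum_{i\in[s]}\bf_i$ is a sum of $s$ contributions, each maintained by its own link-cut tree $\mathcal{D}^{(T_i)}$ running its own \textsc{Detect} with threshold $\eps$, and I need to argue that the total discrepancy $\|\mL^{(t)}(\bf^{(t)} - \wt{\bf}^{(t)})\|_\infty$ is at most about $s\eps$ rather than $\eps$ — this is where the factor $s$ enters, and it must be done carefully because $U^{(t)}$ in line \ref{line:detect} is the \emph{union} of the detected sets, so an edge refreshed in $\wt{\bf}^{(t)}$ resets the accumulated-change counter in \emph{all} trees simultaneously, while an un-refreshed edge may have accumulated up to $\eps$ worth of change in each of the $s$ trees. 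A secondary subtlety is the interaction with the every-$\lfloor\eps m\rfloor$-iteration reinitialization: one must check the claimed bounds hold both immediately after a reinit (trivially, since everything is recomputed) and just before the next one, using the $(1+\eps)$-drift bound on $r$; this is routine once the per-iteration residual-cost stability from \cref{lemma:stablebc} is in hand.
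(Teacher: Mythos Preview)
Your proposal is correct and follows essentially the same route as the paper: bound $\|\mL(\cdot)(\bf^{(t)}-\wt{\bf}^{(t)})\|_\infty \le s\eps$ via the \textsc{Detect} guarantee summed over the $s$ trees, then invoke \cref{lemma:stablebell} for the lengths, \cref{lemma:stablebc} iterated over $\lfloor\eps m\rfloor$ steps for the residual-cost drift, and \cref{lemma:stablebg} for the gradient. One minor correction: refreshing $\wt{\bf}^{(t)}_e$ does \emph{not} reset the per-tree \textsc{Detect} counters in the other trees, but this is irrelevant to the bound---for each tree $i$, the accumulated change on $e$ since the \emph{approximation} was last refreshed is a sub-sum of the change since tree $i$ last detected $e$, hence still $\le \eps/\bell_e$, and summing over $i$ gives $s\eps$ as you claim.
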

\begin{proof}
To show $\bell^{(t)} \approx_{1.1} \bell(\bf^{(t)})$ it suffices to check that $\bf \assign \wt{\bf}^{(t)}$ and $\bar{\bf} \assign \bf^{(t)}$ satisfy the hypotheses of \cref{lemma:stablebell},
precisely $\|\mL(\bf^{(t)})(\bf^{(t)} - \wt{\bf}^{(t)})\|_\infty \le s\eps$. Indeed, this follows directly by the guarantees of \textsc{Detect} in \cref{algo:LCT} and the fact there are $s$ trees, because if no tree returned $e$, then the total error is at most $s\eps$.

To show the bound on the gradient, we use \cref{lemma:stablebg}. Because we have argue above that $\|\mL(\bf^{(t)})^{-1}(\bf^{(t)} - \wt{\bf}^{(t)})\|_\infty \le s\eps$, it suffices to check that $r \approx_{1+\eps} \bc^\top\bf^{(t)} - F^*$. Recall that $r$ is reset every $\lfloor \eps m\rfloor$ iterations in line \ref{line:initr} of \cref{algo:mincost}. For the scaled circulation $\bDelta$ in line \ref{line:scale}, \cref{lemma:stablebc}, for $\wt{\bg} = \bg^{(t)}$ and $\wt{\bell} = \bell^{(t)}$ in \cref{algo:mincost}, tells us
\[ \frac{|\bc^\top\bDelta|}{\bc^\top\bf^{(t)} - F^*} \le |\bg^{(t)}\bDelta|/(\kappa m) \le \alpha^2\kappa/(800m) \le 1/(800m) ,\]
where the hypotheses of \cref{lemma:stablebc} are satisfied because of the guarantee of $\mathcal{D}^{(HSFC)}.\textsc{Query}()$ (\cref{thm:MMCHiddenStableFlow}), and we used the bound on $|\bg^{(t)}\bDelta|$ from line \ref{line:scale} of \cref{algo:mincost}. Hence over $\eps m$ iterations, $\bc^\top\bf^{(t)} - F^*$ can change by at most a $(1+1/(800m))^{\eps m} \le 1+\eps$ factor, as desired.
\end{proof}

Our next goal is to define circulations $\bc^{(t)}$ and upper bounds $\bw^{(t)}$ to make $\bg^{(t)},\bell^{(t)},\bc^{(t)},\bw^{(t)}$ as defined in \textsc{MinCostFlow} (\cref{algo:mincost}) satisfy the hidden stable-flow chasing property. This shows that the solutions $\bDelta$ returned by the data structure have a good ratio.
\begin{lemma}
\label{lemma:setuphidden}
Let $\bg^{(t)},\bell^{(t)},U^{(t)}$ be defined as in an execution of \textsc{MinCostFlow} (\cref{algo:mincost}). For $\bf^* \defeq \argmin_{\mB^\top\bf=\bd} \bc^\top\bf$, let $\bc^{(t)} \defeq \bf^* - \bf$ and $\bw^{(t)} = 50 + |\bell^{(t)} \circ \bc^{(t)}|$. Then $\bg^{(t)},\bell^{(t)},U^{(t)}$ satisfy the hidden stable-flow chasing (\cref{def:hiddenStableFlowChasing}) with circulations $\bc^{(t)}$ and upper bounds $\bw^{(t)}$.
\end{lemma}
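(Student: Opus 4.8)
The plan is to verify the four conditions of \cref{def:hiddenStableFlowChasing} directly for $\bc^{(t)} = \bf^* - \bf^{(t)}$ and $\bw^{(t)} = 50 + |\bell^{(t)} \circ \bc^{(t)}|$, using the stability lemmas from \cref{subsec:stability} together with the \textsc{Detect} guarantees of the dynamic tree data structure (\cref{algo:LCT}). Condition \ref{item:circulation} is immediate: since $\mB^\top\bf^* = \mB^\top\bf^{(t)} = \bd$, the difference $\bc^{(t)} = \bf^* - \bf^{(t)}$ satisfies $\mB^\top\bc^{(t)} = 0$. Condition \ref{item:width} is also immediate from the definition of $\bw^{(t)}$, since $|\bell^{(t)}_e\bc^{(t)}_e| \le 50 + |\bell^{(t)}_e\bc^{(t)}_e| = \bw^{(t)}_e$. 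Condition \ref{item:quasipoly} follows because each $\bell^{(t)}_e \in \bell(\bf^{(t)})$ up to a $1.1$ factor by \cref{lemma:gl}, and $\bell(\bf^{(t)})_e = (\bu^+_e - \bf^{(t)}_e)^{-1-\alpha} + (\bf^{(t)}_e - \bu^-_e)^{-1-\alpha}$, where the residual capacities are quasipolynomially bounded: bounded above by $2U$ trivially and bounded below by $\exp(-O(\log^2 Um))$ since the potential $\Phi(\bf^{(t)}) = O(m\log mU)$ throughout (this is the key reason for the power barrier, as discussed in \cref{overview:ipm}). Combined with the fact that $|\bc^{(t)}_e| \le 2U$, this gives the quasipolynomial bounds on $\bw^{(t)}_e$ as well.

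The main work is Condition \ref{item:widthstable}: for an edge $e$ present in both $G^{(t')}$ and $G^{(t)}$ with $e \notin \bigcup_{s=t'+1}^t U^{(s)}$, we need $\bw^{(t)}_e \le 2\bw^{(t')}_e$. Here I would use that $e \notin \bigcup_{s=t'+1}^t U^{(s)}$ means that \textsc{Detect} did not return $e$ between stages $t'$ and $t$ on any of the $s$ trees, so by the guarantee \eqref{eq:detect} of \cref{algo:LCT}, the accumulated flow change satisfies $\bell^{(t')}_e \sum_{s \in (t', t]} |\bDelta_e^{(s)}| \le s\eps$ for the flow increments $\bDelta^{(s)}$, hence $|\bf^{(t)}_e - \bf^{(t')}_e| \le s\eps/\bell^{(t')}_e$, i.e. $\|\mL^{(t')}(\bf^{(t)} - \bf^{(t')})\|_\infty$ restricted to this edge is at most $s\eps \le \alpha\kappa/100 \le 1/100$. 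I would also note that since $e$ was not updated, the stored values $\bg^{(t)}_e = \bg^{(t')}_e$ and $\bell^{(t)}_e = \bell^{(t')}_e$ (by the ``no change to $e \notin U^{(t)}$'' lines of \cref{algo:mincost}). Now by \cref{lemma:stablebell} applied with $\bf = \bf^{(t')}$ and $\bar{\bf} = \bf^{(t)}$ — more precisely, its single-coordinate version — we get $\bell(\bf^{(t)})_e \approx_{1+3s\eps} \bell(\bf^{(t')})_e$; combined with $\bell^{(t)}_e \approx_{1.1} \bell(\bf^{(t)})_e$ and $\bell^{(t')}_e \approx_{1.1} \bell(\bf^{(t')})_e$ from \cref{lemma:gl}, and the stability of residual capacities $\bu^+_e - \bf^{(t)}_e \approx_{1+2s\eps} \bu^+_e - \bf^{(t')}_e$ (and similarly for $\bf - \bu^-$), we control $|\bell^{(t)}_e \bc^{(t)}_e|$ versus $|\bell^{(t')}_e \bc^{(t')}_e|$. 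The key point is that $\bc^{(t)}_e = \bf^*_e - \bf^{(t)}_e$ changes from $\bc^{(t')}_e = \bf^*_e - \bf^{(t')}_e$ only by the small amount $\bf^{(t')}_e - \bf^{(t)}_e$, whose length $\bell^{(t')}_e|\bf^{(t')}_e - \bf^{(t)}_e|$ is at most $s\eps \le 1$. Therefore $\bell^{(t)}_e|\bc^{(t)}_e| \le (1 + O(s\eps))(\bell^{(t')}_e|\bc^{(t')}_e| + s\eps) \le (1+O(s\eps))(\bw^{(t')}_e + s\eps)$, and since $\bw^{(t')}_e \ge 50$, this is at most $2\bw^{(t')}_e$ once $s\eps$ is small enough (recall $s\eps = \alpha\kappa/100 \ll 1$). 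The additive constant $50$ in the definition of $\bw^{(t)}$ is exactly what provides the slack to absorb the $s\eps$ additive term while still fitting inside the factor-$2$ budget.

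The main obstacle I anticipate is bookkeeping the interaction between the three distinct ``flow'' objects — the true flow $\bf^{(t)} = \bf^{(0)} + \sum_i \bf_i$, the approximate flow $\wt{\bf}^{(t)}$ stored by the algorithm (which lags behind on edges not recently detected), and the lengths/gradients $\bell^{(t)}, \bg^{(t)}$ computed from $\wt{\bf}^{(t)}$ rather than $\bf^{(t)}$ — and making sure the \textsc{Detect} accumulation bound is applied to the right pair of stages. One subtlety is that $\bell^{(t')}_e$ in the \textsc{Detect} guarantee \eqref{eq:detect} is the \emph{current} stored length at the time $e$ was last detected, which by design equals $\bell(\wt{\bf})_e$ at that moment; I would need to trace through that this is within a $1.1$ factor of $\bell(\bf^{(t')})_e$ via \cref{lemma:gl} so the chain of approximations closes. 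A second subtlety is handling edges that are newly inserted between $t'$ and $t$, or that were reinitialized at a rebuild stage (line \ref{line:reinit}) — but the condition's hypothesis $e \notin \bigcup_{s=t'+1}^t U^{(s)}$ excludes insertions, and at a rebuild the data structure $\mathcal{D}^{(HSFC)}$ is rebuilt but the flow $\bf^{(t)}$ and hence $\bc^{(t)}$ is unchanged, so $\bw^{(t)}$ does not jump. Once these are pinned down, the estimate is a routine composition of the $\approx$-bounds already established in \cref{lemma:stablebell,lemma:stablebc,lemma:stablebg,lemma:gl}.
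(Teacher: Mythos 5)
Your proposal is correct and follows essentially the same route as the paper: verify the four conditions of \cref{def:hiddenStableFlowChasing} directly, with items \ref{item:circulation} and \ref{item:width} immediate, item \ref{item:quasipoly} following from the potential bound $\Phi(\bf^{(t)}) \le \O(m)$, and item \ref{item:widthstable} reduced to the \textsc{Detect} guarantee of \cref{algo:LCT}. You correctly identify the two load-bearing observations for item \ref{item:widthstable}: that the \emph{stored} length $\bell^{(t)}_e$ is unchanged from $\bell^{(t')}_e$ when $e \notin \bigcup_{s=t'+1}^t U^{(s)}$, and that \textsc{Detect} bounds $\bell^{(t)}_e|\bf^{(t)}_e - \bf^{(t')}_e|$ by a small quantity which the additive constant $50$ in $\bw^{(t)}_e$ then absorbs.

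One place you make the argument harder than it needs to be: once you have $\bell^{(t)}_e = \bell^{(t')}_e$, you can skip \cref{lemma:stablebell} entirely and the whole chain of $(1+O(s\eps))$ multiplicative factors. The bound is then purely additive:
\begin{align*}
\bw^{(t)}_e - \bw^{(t')}_e
  = \left|\bell^{(t)}_e\bc^{(t)}_e\right| - \left|\bell^{(t')}_e\bc^{(t')}_e\right|
  \le \bell^{(t')}_e\left|\bc^{(t)}_e - \bc^{(t')}_e\right|
  = \bell^{(t')}_e\left|\bf^{(t)}_e - \bf^{(t')}_e\right|
  \le s\eps \le \tfrac{1}{100}\bw^{(t')}_e,
\end{align*}
which is exactly the paper's computation. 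Invoking \cref{lemma:stablebell} here is comparing the \emph{true} lengths $\bell(\bf^{(t)})$ and $\bell(\bf^{(t')})$, which is irrelevant to $\bw$ since $\bw$ is defined via the stored lengths; it happens to still close within the factor-$2$ budget, but it is a detour. Your observation about rebuild stages (line \ref{line:reinit}) is the right thing to worry about, though the clean resolution is not that $\bw^{(t)}$ does not jump (it can shift by up to a $\approx_{1.2}$ factor since $\bell^{(t)}_e$ is recomputed from the refreshed $\wt{\bf}^{(t)}$) but that $\mathcal{D}^{(HSFC)}$ is re-initialized at line \ref{line:initdhsfc}, so the hidden stable-flow chasing condition only needs to hold within a single rebuild window, where the stored-length-equality argument applies cleanly.
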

\begin{proof}
We check each item of \cref{def:hiddenStableFlowChasing} carefully. For the circulation condition in item \ref{item:circulation}, note that $\mB^\top\bc^{(t)} = \mB^\top\bf^* - \mB^\top\bf^{(t)} = \bd - \bd = 0$ because $\bf^*$ and $\bf^{(t)}$ both route the demand $\bd$. For the width condition in item \ref{item:width}, by the definition of $\bw^{(t)} = 50 + |\bell^{(t)} \circ \bc^{(t)}|$ we trivially have $|\bell^{(t)} \circ \bc^{(t)}| \le \bw^{(t)}$ coordinate-wise. 

To check that the upper bounds $\bw^{(t)}$ are stable (item \ref{item:widthstable}), for an edge $e$ let $t' \in [\last^{(t)}_e,t]$ be so that $e$ was not updated by any $U^{(t)}$ since stage $t'$.
By the guarantees of \textsc{Detect} we know that \[ \bell^{(t)}_e|\bf^{(t)}_e - \bf^{(t')}_e| \le \eps. \]
Hence we get that
\begin{align*}
    |\bw^{(t)}_e - \bw^{(t')}_e| \overset{(i)}{=} \bell^{(t)}_e|\bf^{(t)}_e - \bf^{(t')}_e| \le \eps \overset{(ii)}{\le} 1/100|\bw^{(t')}_e|,
\end{align*}
Here, $(i)$ follows because $\bell^{(t)}_e = \bell^{(\last^{(t)}_e)}_e$ because $e$ was not updated in any $U^{(t)}$ since stage $\last^{(t)}_e$ and $(ii)$ is because $\bw^{(t')}_e \ge 50$ for all $t'$. Hence $|\bw_e^{(t)}| \le 1.1|\bw^{(t')}_e|$ as desired.

To check that the lengths and widths are quasipolynomially bounded for item \ref{item:quasipoly}, note that
\[ \min_{e \in E}\{\bu^+_e - \bf_e, \bf_e - \bu^-_e\} \ge \Phi(\bf^{(t)})^{-1/\alpha} \ge \exp(-O(\log^2 mU)), \]
by our assumption that $\Phi(\bf^{(t)}) \le \O(m)$ always.
Also, $|\bc^{(t)}_e| \le O(U)$ for all $e \in E$. This shows that $\log \bell^{(t)}_e, \log \bw^{(t)}_e \le O(\log^2 mU)$ for all $e \in E$. The lower bound $\bw^{(t)}_e \ge 50$ is by definition. Also, the lower bound $\bell^{(t)}_e \ge 1/U^{1+\alpha} \ge 1/U^2$ is trivial by the definition of $\bell^{(t)}_e$.
\end{proof}
As a result, we deduce that $\mathcal{D}^{(HSFC)}$ succeeds whp.
This allows us to prove that \textsc{MinCostFlow} satisfies the hypotheses of \cref{algo:mincost}, and allows us to bound the total number of iterations.
\begin{lemma}
\label{cor:iteration}
An execution of \textsc{MinCostFlow} (\cref{algo:mincost}) runs for $\O(m\kappa^{-2}\alpha^{-2})$ iterations.
\end{lemma}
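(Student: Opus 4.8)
The plan is to show that \textsc{MinCostFlow} (\cref{algo:mincost}) is a faithful implementation of the potential reduction interior point method of \cref{thm:IPM} run with accuracy parameter $\kappa_{\mathrm{IPM}} = \Theta(\kappa\alpha)$, and then to count the potential-decreasing steps it performs. First I would invoke \cref{lemma:setuphidden} to certify that the sequence $\bg^{(t)},\bell^{(t)},U^{(t)}$ passed to $\mathcal{D}^{(HSFC)}$ is a hidden stable-flow chasing instance with hidden circulation $\bc^{(t)} = \bf^* - \bf^{(t)}$ and widths $\bw^{(t)} = 50 + |\bell^{(t)}\circ\bc^{(t)}|$. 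Next, using \cref{lemma:gl} — which gives $\bell^{(t)}\approx_{1.1}\bell(\bf^{(t)})$ and places $\bg^{(t)}$ within $\alpha\kappa/100$, in the $\|\mL(\bf^{(t)})^{-1}(\cdot)\|_\infty$ norm, of the gradient $\bg(\bf^{(t)})$ rescaled by the factor $(\bc^\top\bf^{(t)}-F^*)/r\in[1-\eps,1+\eps]$ — together with \cref{lemma:optCirculationQuality}, I would conclude that the witness circulation has good ratio,
\[
  \frac{\l\bg^{(t)},\bc^{(t)}\r}{\|\bw^{(t)}\|_1}\le-\Omega(\alpha),
\]
where I pass from the $100m$-in-the-denominator form of \cref{lemma:optCirculationQuality} to the $\|\bw^{(t)}\|_1$ form using $\|\bw^{(t)}\|_1 = 50m + \|\bell^{(t)}\circ\bc^{(t)}\|_1 \ge \tfrac12(100m + \|\bell^{(t)}\circ\bc^{(t)}\|_1)$. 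This is exactly the precondition for $\mathcal{D}^{(HSFC)}.\textsc{Update}$ in \cref{thm:MMCHiddenStableFlow} with its parameter set to $\Theta(\alpha)$.

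Given this, \cref{thm:MMCHiddenStableFlow} guarantees (with high probability, union-bounded over all iterations) that $\textsc{Query}()$ returns $(i,\bDelta)$ with $\mB^\top\bDelta=0$ and $\l\bg^{(t)},\bDelta\r/\|\mL^{(t)}\bDelta\|_1\le-\Omega(\kappa\alpha)$, where $\mL^{(t)}=\diag(\bell^{(t)})$. \Cref{algo:mincost} then rescales $\bDelta$ in \cref{line:scale} so that $\l\bg^{(t)},\bDelta\r=-\kappa^2\alpha^2/800$ and sets $\bf^{(t+1)}\assign\bf^{(t)}+\bDelta$; this step is precisely the one analyzed in \cref{lemma:phidecrease} with $\wt{\bg}=\bg^{(t)}$, $\wt{\bell}=\bell^{(t)}$ and $\kappa_{\mathrm{IPM}}=\Theta(\kappa\alpha)$, and it yields $\Phi(\bf^{(t+1)})\le\Phi(\bf^{(t)})-\Omega((\kappa\alpha)^2)$. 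In particular $\Phi$ is nonincreasing along the run, which preserves the invariant $\Phi(\bf^{(t)})\le200m\log mU$ that is used throughout (for instance in \cref{lemma:setuphidden} and in the $\O(1)$-bit-complexity bounds).

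To finish, I would combine this per-step decrease with the endpoints. By \cref{lemma:initialpoint} the initial flow satisfies $\Phi(\bf^{(0)})\le200m\log mU$, so after $T_0=O\!\big(m\log mU\,(\kappa\alpha)^{-2}\big)$ iterations we would reach $\Phi(\bf^{(T_0)})\le-300m\log mU$, and then \cref{lemma:obvious} gives $\bc^\top\bf^{(T_0)}-F^*\le(mU)^{-15}<(mU)^{-10}$, so the \textbf{while}-loop terminates at or before iteration $T_0$. Since $U=\poly(m)$ makes $\log mU=\O(1)$, the number of iterations is $O\!\big(m\log mU\,(\kappa\alpha)^{-2}\big)=\O(m\kappa^{-2}\alpha^{-2})$, as claimed.

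I expect the one genuinely delicate point to be the robustness of the IPM step and witness-quality arguments to the approximations the algorithm actually maintains. The vectors $\bg^{(t)},\bell^{(t)}$ only track slightly stale and — in the case of $\bg^{(t)}$ — rescaled versions of the true gradient and lengths (stale because the residual estimate $r$ is refreshed only every $\lfloor\eps m\rfloor$ iterations and edges are re-synced only when $\textsc{Detect}$ fires, rescaled by the $(1\pm\eps)$ factor appearing in \cref{lemma:gl}), whereas \cref{lemma:optCirculationQuality,lemma:phidecrease} are phrased for $\wt{\bg}$ close to $\bg(\bf^{(t)})$ itself. The resolution is to observe that those two lemmas' proofs are stable under replacing $\bg(\bf^{(t)})$ by a positive $(1\pm\eps)$-multiple of it: the constant slack already present (e.g.\ the factor-$2$ gap in \eqref{eq:wtbg} and the comfortable margins in the Taylor bounds of \cref{lemma:taylorxa,lemma:taylorlog}) absorbs an $\eps$-scaling, so the $\Omega((\kappa\alpha)^2)$ potential decrease survives. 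All the stability needed to control the staleness is exactly what \cref{lemma:stablebc,lemma:stablebell,lemma:stablebg} provide, packaged into \cref{lemma:gl}; the remainder of the argument is bookkeeping.
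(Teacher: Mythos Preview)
Your proposal follows essentially the same route as the paper: verify the hidden stable-flow chasing precondition via \cref{lemma:setuphidden}, use \cref{lemma:gl} together with \cref{lemma:optCirculationQuality} to get the witness ratio $\le -\Omega(\alpha)$, invoke \cref{thm:MMCHiddenStableFlow} for the cycle quality, apply \cref{lemma:phidecrease} with parameter $\Theta(\kappa\alpha)$, and count iterations.

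The one place you and the paper diverge is the ``delicate point'' you flag at the end. You propose to take $\wt{\bg}=\bg^{(t)}$ and then argue that the proofs of \cref{lemma:phidecrease,lemma:optCirculationQuality} are robust to replacing $\bg(\bf^{(t)})$ by a $(1\pm\eps)$-scaling of itself, re-examining the slack in \eqref{eq:wtbg} and the Taylor bounds. The paper avoids this re-opening entirely by a one-line trick: it \emph{defines} $\wt{\bg}\defeq \frac{r}{\bc^\top\bf^{(t)}-F^*}\,\bg^{(t)}$. Then \cref{lemma:gl} gives directly $\|\mL(\bf^{(t)})^{-1}(\wt{\bg}-\bg(\bf^{(t)}))\|_\infty\le(1+\eps)\alpha\kappa/100\le\alpha\kappa/50$, so \cref{lemma:optCirculationQuality,lemma:phidecrease} apply verbatim. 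Because the min-ratio objective and the rescaling in \cref{line:scale} are invariant under multiplying the gradient by a positive scalar (up to the same $(1\pm\eps)$ factor, which is absorbed), nothing else changes. Your approach would work, but the paper's substitution is cleaner and avoids touching any internal estimates.
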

\begin{proof}
We will define
$\wt{\bg}, \wt{\bell}, \bDelta, \eta$ and flows $\bf^{(t)}$ to show that an execution of \textsc{MinCostFlow} (\cref{algo:mincost}) satisfies the hypotheses of \cref{thm:IPM}, which implies that \textsc{MinCostFlow} (\cref{algo:mincost}) terminates in $\O(m\kappa^{-2}\alpha^{-2})$ iterations.

For $\bf^{(t)}$ as defined in \textsc{MinCostFlow} (\cref{algo:mincost}), note that $\Phi(\bf^{(t)}) \le 200m \log(mU)$ at all times. This is because it holds at the initial point $\bf^{(0)}$ (\cref{lemma:initialpoint}) and the potential is decreasing.

Next we define $\wt{\bg}, \wt{\bell}$, the approximate gradients and lengths.
Let $\wt{\bg} = r/(\bc^\top\bf^{(t)}-F^*) \cdot \bg^{(t)}$ and $\wt{\bell} = \bell^{(t)}$ for $\bg^{(t)}, \bell^{(t)}$ as defined in \textsc{MinCostFlow} (\cref{algo:mincost}). By \cref{lemma:gl} we know that $\wt{\bell} \approx_{1.1} \bell(\bf^{(t)})$ and
\begin{align*} 
\|\mL(\bf^{(t)})^{-1}(\wt{\bg} - \bg(\bf^{(t)}))\|_\infty =~ & r/(\bc^\top\bf^{(t)}-F^*) \left\|\mL(\bf^{(t)})^{-1}(\bg^{(t)} - (\bc^\top\bf^{(t)}-F^*)/r \cdot \bg(\bf^{(t)})) \right\|_\infty \\ \overset{(i)}{\le}~&(1+\eps)\alpha\kappa/100 \le \alpha\kappa/50,
\end{align*}
where $(i)$ uses the bounds on $r$ and $\bg^{(t)}$ in \cref{lemma:gl}. Thus for $\bc^{(t)} = \bf^* - \bf^{(t)}$ as in \cref{lemma:setuphidden},
\begin{align*}
    \frac{\bg^{(t)\top}\bc^{(t)}}{\|\bw^{(t)}\|_1} = (\bc^\top\bf^{(t)}-F^*)/r \cdot \frac{\wt{\bg}^\top(\bf^* - \bf^{(t)})}{50m + \|\mL^{(t)}\bc^{(t)}\|_1} \overset{(i)}{\le} -(1-\eps)\alpha/4 \le -\alpha/8,
\end{align*}
where $(i)$ follows from the first item of \cref{thm:IPM} (\cref{lemma:optCirculationQuality}) and $r \approx_{1+\eps} \bc^\top\bf^{(t)}-F^*$ from \cref{lemma:gl}.
Hence, by the guarantees of $\mathcal{D}^{(HSFC)}.\textsc{Query}()$ (\cref{thm:MMCHiddenStableFlow}) as called in line \ref{line:query} of \textsc{MinCostFlow} (\cref{algo:mincost}), we know that whp.
\begin{align} \frac{\bg^{(t)\top}\bDelta}{\|\mL^{(t)}\bDelta\|_1} \le -\kappa\alpha/8. \label{eq:qualitybound} \end{align}
Thus, for the scaling $\eta$ as in \textsc{MinCostFlow} (\cref{algo:mincost}), \cref{thm:IPM} (where we change $\kappa$ to $\alpha\kappa/8$ for this setting) shows that the algorithm computes a high-accuracy flow in $\O(m\kappa^{-2}\alpha^{-2})$ iterations.
\end{proof}
The final piece is to analyze the runtime of \textsc{MinCostFlow} (\cref{algo:mincost}) by bounding the total size of the update batches $U^{(t)}$ as defined in line \ref{line:detect}.
\begin{lemma}
\label{lemma:sizedt}
Consider a call to \textsc{MinCostFlow} (\cref{algo:mincost}) and let $U^{(t)}$ be as in line \ref{line:detect}. Then $\sum_t |U^{(t)}| \le \O(m\kappa^{-2}\alpha^{-2}\eps^{-1}) \le m^{1+o(1)}$.
\end{lemma}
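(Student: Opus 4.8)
The plan is to bound $\sum_t |U^{(t)}|$ by accounting separately for the two sources of edges in $U^{(t)}$: the edges reported by $\textsc{Detect}()$ on each of the $s$ dynamic trees, and the edges forced in by the periodic reinitialization in line \ref{line:reinit}. The periodic reinitializations contribute at most $m$ edges every $\lfloor \eps m \rfloor$ iterations, and since by \cref{cor:iteration} there are $\O(m\kappa^{-2}\alpha^{-2})$ iterations total, this contributes at most $\O(m\kappa^{-2}\alpha^{-2}\eps^{-1})$ to the sum, which is within the claimed bound. So the heart of the argument is bounding $\sum_t \sum_{i\in[s]} |\mathcal{D}^{(T_i)}.\textsc{Detect}()|$.

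First I would recall the guarantee of $\textsc{Detect}()$ from \cref{algo:LCT}: with the fixed threshold $\eps$, an edge $e$ is returned at stage $t$ only when $\bell^{(t)}_e \sum_{t' \in [\last^{(t)}_e+1,t]} |\bDelta^{(t')}_e| \ge \eps$, i.e. the edge has accumulated $\ell_1$-length-weighted flow change at least $\eps$ since it was last reported. Since each report "resets" this accumulator, the total number of reports from tree $T_i$ over all stages is at most $\eps^{-1}$ times the total length-weighted flow that ever passed through edges of $T_i$, namely $\eps^{-1}\sum_t \|\mL^{(t)} |\bDelta^{(t)}_i|\|_1$ where $\bDelta^{(t)}_i$ is the (scaled) increment added to $\bf_i$ at stage $t$ — but here I must be careful because the $\bell^{(t)}_e$ used inside $\textsc{Detect}$ is the one stored at the last report time, not necessarily the current one; the point (already used in \cref{lemma:gl,lemma:setuphidden}) is that the stored length stays within a constant factor of the true one precisely because of this detect mechanism, so up to constants these coincide. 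Then I would use that at stage $t$ the scaled cycle $\bDelta^{(t)}$ output in line \ref{line:scale} satisfies $\langle \bg^{(t)}, \bDelta^{(t)}\rangle = -\kappa^2\alpha^2/800$, and combined with the query quality bound \eqref{eq:qualitybound}, namely $\langle \bg^{(t)},\bDelta^{(t)}\rangle / \|\mL^{(t)}\bDelta^{(t)}\|_1 \le -\kappa\alpha/8$, this gives $\|\mL^{(t)}\bDelta^{(t)}\|_1 \le \frac{800}{8}\cdot\frac{\kappa^2\alpha^2}{\kappa\alpha} = O(\kappa\alpha)$. Since $\bDelta^{(t)}$ is supported on a single tree $T_i$ plus $m^{o(1)}$ off-tree edges, we get $\sum_{i\in[s]}\|\mL^{(t)}|\bDelta^{(t)}_i|\|_1 \le O(\kappa\alpha)$ as well (the off-tree edge contributions are subsumed). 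Summing over the $\O(m\kappa^{-2}\alpha^{-2})$ iterations from \cref{cor:iteration}, the total length-weighted flow is $\O(m\kappa^{-2}\alpha^{-2})\cdot O(\kappa\alpha) = \O(m\kappa^{-1}\alpha^{-1})$, and dividing by the detect threshold $\eps$ gives $\sum_t\sum_i|\mathcal{D}^{(T_i)}.\textsc{Detect}()| \le \O(m\kappa^{-1}\alpha^{-1}\eps^{-1})$, which is dominated by the $\O(m\kappa^{-2}\alpha^{-2}\eps^{-1})$ bound. Adding the two contributions and recalling $\eps = \kappa\alpha/(1000s)$ with $s = O(\log n)^d = m^{o(1)}$, $\kappa = m^{-o(1)}$, $\alpha = 1/(1000\log mU)$, all factors are $m^{o(1)}$, so $\sum_t|U^{(t)}| \le m^{1+o(1)}$ as claimed.

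The main obstacle I expect is the bookkeeping around the stale length $\bell^{(t)}_e$ versus the true length inside the $\textsc{Detect}$ accumulator: one needs to argue that reporting never "wastes" a detect event because the stored length has drifted, and conversely that flow is never under-counted. This is exactly the self-consistency already exploited in \cref{lemma:gl} (where $\|\mL(\bf^{(t)})(\bf^{(t)}-\wt\bf^{(t)})\|_\infty \le s\eps$ is derived from the detect guarantees), so the resolution is to invoke that $\bell^{(t)} \approx_{1.1} \bell(\bf^{(t)})$ throughout and absorb the factor into the $\O(\cdot)$. The only other mild subtlety is correctly folding in the $m^{o(1)}$ off-tree edges of each query cycle, but since each such edge contributes length at most $\|\mL^{(t)}\bDelta^{(t)}\|_1$ it changes nothing beyond an $m^{o(1)}$ factor.
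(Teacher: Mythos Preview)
Your proposal is correct and takes essentially the same approach as the paper: bound $\|\mL^{(t)}\bDelta^{(t)}\|_1 \le O(\kappa\alpha)$ from the scaling in line~\ref{line:scale} together with the quality bound \eqref{eq:qualitybound}, sum over the $\O(m\kappa^{-2}\alpha^{-2})$ iterations from \cref{cor:iteration}, and divide by the \textsc{Detect} threshold $\eps$. Two small remarks: (i) the ``reinitialization edges'' you separate out are not actually part of $U^{(t)}$ as defined in line~\ref{line:detect} (at reinit the data structure is rebuilt wholesale rather than updated through $U^{(t)}$), so that case is extraneous but harmless; (ii) your stale-length worry is also unnecessary, because $\mL^{(t)}$ in $\|\mL^{(t)}\bDelta^{(t)}\|_1$ \emph{is} the maintained approximate length, which coincides with the length stored in the tree between detect events, so no approximation factor is needed there.
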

\begin{proof}
Because of the guarantee in line \ref{line:scale} of \cref{algo:mincost}, we know that $|\bg^{(t)\top}\bDelta| = \kappa^2\alpha^2/800$. Additionally by the guarantees of $\mathcal{D}^{(HSFC)}.\textsc{Query}()$ (\cref{thm:MMCHiddenStableFlow}) as called in line \ref{line:query} of \textsc{MinCostFlow} (\cref{algo:mincost}), we get that (see \eqref{eq:qualitybound}) \[ \|\mL^{(t)}\bDelta\|_1 \le 8/(\kappa\alpha)|\bg^{(t)\top}\bDelta| \le 1. \]
Hence the sum of $\|\mL^{(t)}\bDelta\|_1$ over all iterations is $\O(m\kappa^{-2}\alpha^{-2})$ by our bound on the number of iterations in \cref{cor:iteration}. Each time an update on edge $e$ in $U^{(t)}$ it contributes $\Omega(\eps)$ to this sum by the guarantees of \textsc{Detect} in \cref{algo:LCT}. Hence $\sum_t |U^{(t)}| \le \O(m\kappa^{-2}\alpha^{-2}\eps^{-1})$.
\end{proof}
Combining these pieces shows our main result \cref{thm:main} on computing min-cost flows.
\begin{proof}[Proof of \cref{thm:main}]
Given a min-cost flow instance, we will first use \cref{lemma:initialpoint,lemma:finalpoint} to compute the initial flow $\bf^{(0)}$. 
We then run $\textsc{MinCostFlow}$ with initial flow $\bf^{(0)}$, and use \cref{lemma:finalpoint} to round to an exact min-cost flow.

The only remaining piece to analyze is the runtime.
The main component of the runtime is the data structure $\mathcal{D}^{(HSFC)}$ (\cref{thm:MMCHiddenStableFlow}).
The inputs $\bg^{(t)},\bell^{(t)},U^{(t)}$ to $\mathcal{D}^{(HSFC)}$ satisfy the hidden stable-flow chasing property by \cref{lemma:setuphidden}.
Hence the data structure $\mathcal{D}^{(HSFC)}$ runs in total time $\eps^{-1}(m+Q)m^{o(1)} = m^{1+o(1)}$ time by \cref{thm:MMCHiddenStableFlow}, because the data structure reinitializes $\O(\eps^{-1})$ times (in line \ref{line:reinit}), and $Q = \sum_t |U^{(t)}| \le m^{1+o(1)}$ by \cref{lemma:sizedt}.

The remaining runtime components can be handled in $sm^{o(1)} = m^{o(1)}$ time per operation by using dynamic trees (\cref{algo:LCT}), as there are $s$ trees, and the fact that the cycle in line \ref{line:query} of \textsc{MinCostFlow} (\cref{algo:mincost}) is represented by $\exp(O(\log^{7/8}m\log\log m)) \le m^{o(1)}$ paths, so the total runtime is $m^{1+o(1)}$ as desired.
\end{proof}

\label{lastPageOfMaxFlow}
\section{General Convex Objectives}
\label{sec:general}
The goal of this section is to extend our algorithms to the setting of optimizing single commodity flows for general decomposable convex objectives.
\subsection{General Setup for General Convex Objectives}
Formally, for a graph $G = (V, E)$ let $h_e: \R \to \R \cup \{+\infty\}$ be convex functions. For a flow $\bf$ let $h(\bf) \defeq \sum_{e \in E} h_e(\bf_e)$. Our goal is to minimize $h(\bf)$ over all flows $\bf$ routing a demand $\bd$, i.e. $\mB^\top\bf = \bd$.

We cast this in the setting of empirical risk minimization (see~\cite{LSZ19}) by introducing new variables $\by \in \R^E$ and convex sets $\mathcal{X}_e \defeq \{(f, y) : y < h_e(f)\}$:
\begin{align}
    \min_{\mB^\top\bf=\bd} h(\bf) = \min_{\substack{\mB^\top\bf=\bd \\ \by \in \R^E : \by_e \le h(\bf_e) \forall e \in E}} {\vone}^\top\by = \min_{\substack{\mB^\top\bf=\bd \\ (\bf_e,\by_e) \in \mathcal{X}_e \forall e \in E}} {\vone}^\top\by. \label{eq:erm}
\end{align}
Let $F^* \defeq \min_{\mB^\top\bf=\bd} h(\bf)$. We will assume that we have access to gradients and Hessians of $\nu$-self-concordant barriers for $\mathcal{X}_e,$ $\psi_e: \mathcal{X}_e \to \R$.
Explicit self-concordant barriers are known for several natural objectives $h_e$ (see e.g. Chapter 9.6. of~\cite{BV04:book}, or Section 4 of~\cite{N98:notes}),
and it is known that every subset $\mathcal{X} \subseteq \R^n$ admits an $n$-self-concordant barrier~\cite{N98:notes,N04:book,Chewi21:arxiv,LY21}.

We now formally introduce the definition of self-concordance.
\begin{definition}[$\nu$-self-concordance {\cite[Definition 4.2.2]{N04:book}}]
\label{def:selfcon}
We say that a function $\psi: \mathcal{X} \to \R$ on an open set $\mathcal{X} \subseteq \R^n$ is a self-concordant barrier if $\psi$ is convex, $\psi(\bx) \to \infty$ as $\bx$ approaches the boundary of $\mathcal{X}$, and for all $\bx \in \mathcal{X}$ and $\bv \in \R^n$
\[ \left|\g^3 \psi(\bx)\left[\bv, \bv, \bv\right]\right|
\le
2\left(\bv^\top\g^2\psi\left(\bx\right)\bv\right)^{3/2}.
\]
We say that $f$ is $\nu$-self-concordant for some $\nu > 0$ if $f$ is self-concordant and for all $\bx \in \mathcal{X}$ and $\bv \in \R^n$ we have
\[
\l \g \psi\left(\bx \right), \bv\r^2
\le
\nu \bv^\top\g^2\psi\left(\bx\right)\bv.
\]
\end{definition}

Analyzing the runtime of our algorithm requires assuming that various quantities are quasipolynomially bounded such as the starting flow, demands, and convex objectives, and the underlying self-concordant barriers.
\begin{assump}
\label{assump}
We make the following assumptions for our method, for a parameter $K = \O(1)$.
\begin{enumerate}
    \item We have access in $\O(1)$ time to gradients/Hessians of the self-concordant barriers $\psi_e(\bf_e,\by_e)$.
    \item All capacities, demands, and costs are polynomially bounded, i.e. $|\bf_e| \le m^K$ for all $e$, $\|\bd\|_\infty \le m^K$, and $|h_e(x)| \le O(m^K+|x|^K)$ for all $x \in \R$.
    \item We shift the barriers $\Psi(\bf_e, \by_e)$ such that $\inf_{|\bf_e|,|\by_e|\le m^K} \Psi(\bf_e, \by_e) = 0.$ We can shift the barriers because that does not affect self-concordance. This implies that $\bzeta_e(\bf_e) \ge 1$ on the whole domain.
    \item There is a feasible flow $\bf^{(0)}$ and variables $|\by^{(0)}_e| \le m^K$ such that $m^{-K}\mI \pe \g^2 \psi_e(\bf^{(0)}_e, \by{(0)}_e) \pe m^K \mI$ for all $e$, and $\psi_e(\bf^{(0)}_e, \by^{(0)}_e) \le K$.
    \item The parameters $\alpha, \eps, \kappa$ used throughout are all less than $1/(1000\nu)$.
    \item \label{item:polybounded} The Hessian is quasipolynomially bounded as long as the function value is $\O(1)$ bounded, i.e. for all points $|\bf_e|, |\by_e| \le m^K$ with $\psi_e(\bf_e,\by_e) \le \O(1)$, we have $\g^2\psi_e(\bf_e,\by_e) \pe \exp(\log^{O(1)}m)\mI$.
\end{enumerate}
\end{assump}
We assume everything stated above for the remainder of the section. The final assumption in item~\ref{item:polybounded} is to ensure that all lengths/gradients encountered in the algorithm are bounded by $\exp(\log^{O(1)}m)$. This holds for all explicit $O(1)$-self-concordant barriers we have encountered, such as those for entropy-regularized optimal transport, matrix scaling, and normed flows.

We make direct use of the following lemmas from~\cite{N04:book}.
\begin{lemma}[{\cite[Theorem 4.2.4]{N04:book}}, first part]
\label{lem:GradDotBounded}
For a self concordant function $f$,
and any $\bx$ and $\by$ in its domain, we have
\[
\l \nabla f\left( \bx \right), \by - \bx\r
<
\nu.
\]
\end{lemma}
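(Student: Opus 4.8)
The statement to prove is \cref{lem:GradDotBounded}: for a $\nu$-self-concordant function $f$ with domain $\mathcal{X}$, and any $\bx, \by \in \mathcal{X}$, we have $\langle \nabla f(\bx), \by - \bx \rangle < \nu$. The plan is to reduce to a one-dimensional statement along the segment from $\bx$ to $\by$, exploiting the fact that self-concordance is preserved under restriction to lines. Define $\varphi(t) \defeq f(\bx + t(\by - \bx))$ for $t \in [0, 1]$, which is a $\nu$-self-concordant barrier on a subinterval of $\R$ containing $[0,1)$ (with possibly $t=1$ on the boundary or interior). The quantity we want to bound is $\varphi'(0) = \langle \nabla f(\bx), \by - \bx \rangle$.

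First I would record the two consequences of \cref{def:selfcon} for $\varphi$: self-concordance gives $|\varphi'''(t)| \le 2 \varphi''(t)^{3/2}$, and the $\nu$-barrier property gives $\varphi'(t)^2 \le \nu \varphi''(t)$ for all $t$ in the domain. The key analytic step is to show that $\varphi'(t)$ cannot stay too negative (or, if $\varphi'(0) \ge 0$, we need an upper bound). The standard argument: consider the function $t \mapsto \varphi'(t)$. If $\varphi'(0) \le 0$ there is nothing to prove since $\nu > 0$, so assume $\varphi'(0) > 0$. Since $\varphi$ is convex, $\varphi'$ is nondecreasing, so $\varphi'(t) \ge \varphi'(0) > 0$ for all $t \in [0,1)$. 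Now I would analyze the quantity $g(t) \defeq 1/\varphi'(t)$ (well-defined and positive on $[0,1)$), compute $g'(t) = -\varphi''(t)/\varphi'(t)^2$, and use the $\nu$-barrier inequality $\varphi''(t) \ge \varphi'(t)^2/\nu$ to get $g'(t) \le -1/\nu$. Integrating from $0$ to any $s < 1$ yields $g(s) \le g(0) - s/\nu$, i.e. $1/\varphi'(s) \le 1/\varphi'(0) - s/\nu$. Since $g(s) = 1/\varphi'(s) > 0$ for all $s \in [0,1)$, letting $s \to 1^-$ forces $1/\varphi'(0) - 1/\nu \ge 0$ cannot fail strictly — more precisely, $1/\varphi'(0) > s/\nu$ for all $s<1$ gives $1/\varphi'(0) \ge 1/\nu$, hence $\varphi'(0) \le \nu$. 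A careful treatment of the boundary case (whether the domain of $\varphi$ extends past $1$ or $\by$ lies on the boundary) tightens this to the strict inequality $\varphi'(0) < \nu$ claimed; if $\by$ is interior one can push $s$ slightly past $1$, and if $\by$ is on the boundary then $\varphi''$ blows up, again yielding strictness.

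The main obstacle — or rather the point requiring care rather than genuine difficulty — is handling the endpoint $t=1$ cleanly and getting the \emph{strict} inequality. The domain of $\varphi$ is an open interval; if $\by \in \mathcal{X}$ (interior), then $\varphi$ is defined on an open interval strictly containing $[0,1]$, and we can run the integration argument up to some $s > 1$, which gives $1/\varphi'(0) > 1/\nu$ and hence $\varphi'(0) < \nu$ strictly. If instead one only knows $\by$ is in the closure, the barrier property $\varphi'(t)^2 \le \nu\varphi''(t)$ combined with $\varphi''(t) \to \infty$ as $t \to 1^-$ (a consequence of $\varphi \to \infty$ at the boundary) still forces the strict bound after integrating. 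Since the lemma as cited from \cite[Theorem 4.2.4]{N04:book} is stated for $\bx, \by$ in the (open) domain, I would take $\by$ interior, which is the clean case, and the integration argument above delivers $\langle \nabla f(\bx), \by - \bx\rangle = \varphi'(0) < \nu$ directly. The whole proof is three or four lines of one-dimensional calculus once the restriction $\varphi$ is set up; no multivariate self-concordance manipulations beyond the defining inequalities are needed.
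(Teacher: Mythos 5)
Your proof is correct and is the standard textbook argument. The paper does not actually re-prove this lemma — it is only cited as \cite[Theorem 4.2.4]{N04:book} — and your one-dimensional restriction $\varphi(t) = f(\bx + t(\by - \bx))$ combined with showing $g(t) = 1/\varphi'(t)$ decreases at rate at least $1/\nu$ (so that positivity of $g$ on an interval strictly containing $[0,1]$, which holds since $\by$ is interior to the open domain, forces $\varphi'(0) < \nu$) is precisely how Nesterov establishes it; note only the $\nu$-barrier inequality $\varphi'^2 \le \nu\varphi''$ is needed, the third-derivative bound you recorded plays no role.
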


\begin{lemma}[{\cite[Theorem 4.1.7]{N04:book}}, first part]
\label{lem:GradStable}
For a self concordant function $f$,
and any $\bx$ and $\by$ in its domain, we have
\[
\l \nabla f\left(\by \right) - \nabla f\left(\bx \right),
\by - \bx\r
\geq
\frac{\norm{\bx - \by}_{\nabla^2 f\left( \bx \right)}^2}
{1 + \norm{\bx - \by}_{\nabla^2 f\left( \bx \right)}}.
\]
\end{lemma}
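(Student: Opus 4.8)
This is the standard first-order stability estimate for self-concordant functions, and my plan is to reduce it to a one-dimensional differential inequality along the segment joining $\bx$ and $\by$. I would set $\bu \defeq \by - \bx$, use convexity of the domain of $f$ to conclude that $\bx_t \defeq \bx + t\bu$ lies in the domain for every $t \in [0,1]$, and define $\phi(t) \defeq \l \g f(\bx_t), \bu \r$. Differentiating gives $\phi'(t) = \bu^\top \g^2 f(\bx_t) \bu = \psi(t)^2$, where $\psi(t) \defeq \norm{\bu}_{\g^2 f(\bx_t)}$; writing $r \defeq \psi(0) = \norm{\bx - \by}_{\g^2 f(\bx)}$, the quantity to lower bound is exactly $\l \g f(\by) - \g f(\bx), \by - \bx\r = \phi(1) - \phi(0) = \int_0^1 \psi(t)^2\,dt$.

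The heart of the argument is controlling how fast $\psi$ can shrink. Differentiating $\psi(t)^2$ along the segment and using $\tfrac{d}{dt}\bx_t = \bu$ gives $\tfrac{d}{dt}\psi(t)^2 = \g^3 f(\bx_t)[\bu,\bu,\bu]$, so the self-concordance bound of \cref{def:selfcon} yields $\abs{\tfrac{d}{dt}\psi(t)^2} \le 2\bigl(\bu^\top \g^2 f(\bx_t)\bu\bigr)^{3/2} = 2\psi(t)^3$, hence $\abs{\psi'(t)} \le \psi(t)^2$ and therefore $\abs{\tfrac{d}{dt}\psi(t)^{-1}} = \psi(t)^{-2}\abs{\psi'(t)} \le 1$ at every point where $\psi(t) > 0$. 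If $r = 0$ the claimed inequality is trivial since its left-hand side is nonnegative by convexity, so I would assume $r > 0$; then a short continuity argument shows $\psi$ never vanishes on $[0,1]$ (otherwise $\psi^{-1}$ would blow up near the vanishing point despite having derivative bounded by $1$ on the interval where $\psi$ is positive). Integrating $\tfrac{d}{dt}\psi(t)^{-1} \le 1$ from $0$ to $t$ gives $\psi(t)^{-1} \le r^{-1} + t$, i.e. $\psi(t) \ge \tfrac{r}{1+tr}$.

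Finally I would substitute this lower bound into the integral:
\[
\l \g f(\by) - \g f(\bx), \by - \bx \r = \int_0^1 \psi(t)^2\,dt \;\ge\; \int_0^1 \frac{r^2}{(1+tr)^2}\,dt \;=\; \frac{r^2}{1+r},
\]
which is precisely the claimed bound with $r = \norm{\bx - \by}_{\g^2 f(\bx)}$. I expect the only mildly delicate point to be the verification that $\psi$ stays strictly positive along the segment, so that the manipulations with $\psi^{-1}$ are legitimate; but this follows immediately from the derivative bound $\abs{\tfrac{d}{dt}\psi^{-1}} \le 1$ together with continuity of $\psi$, and everything else is a direct computation and one application of the third-derivative inequality.
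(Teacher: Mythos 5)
Your proof is correct and is the standard argument from Nesterov's book (Theorem 4.1.7), which the paper cites without reproducing a proof. The reduction to the one-dimensional function $\phi(t) = \langle \nabla f(\bx + t\bu), \bu\rangle$, the derivative bound $|\psi'| \le \psi^2$ obtained from the third-derivative inequality, the resulting estimate $\psi(t) \ge r/(1+tr)$, and the final integration all match the reference verbatim, and your care about $\psi$ remaining strictly positive along the segment is exactly the point that must be checked.
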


\begin{lemma}[{\cite[Theorem 4.1.6]{N04:book}}]
\label{lem:HessianStable}
For a self concordant function $f$,
and any $\bx$ and $\by$ in its domain such that
\[
\norm{\bx - \by}_{\nabla^2 f\left( \bx \right)} < 1
\]
we have
\[
\left( 1 - \norm{\bx - \by}_{\nabla^2 f\left( \bx \right)} \right)^2
\nabla^{2} f\left( \bx\right)
\preceq
\nabla^{2} f\left( \by\right)
\preceq
\frac{1}{\left( 1 - \norm{\bx - \by}_{\nabla^2 f\left( \bx \right)} \right)^2}
\nabla^{2} f\left( \bx\right)
\]
\end{lemma}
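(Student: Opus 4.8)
The plan is to reduce the two‑sided Loewner inequality to a one–dimensional statement along the segment from $\bx$ to $\by$, and then run a Gr\"onwall–type estimate driven by the self‑concordance inequality of \cref{def:selfcon}. First I would set $\bu \defeq \by - \bx$ and $r \defeq \norm{\bu}_{\g^2 f(\bx)} < 1$; since the domain of $f$ is open and convex, $\bx + t\bu$ lies in it for all $t \in [0,1]$. It then suffices to prove, for every direction $\bv \neq \vec 0$, that $(1-r)^2 \le \bv^\top\g^2 f(\by)\bv \,/\, (\bv^\top \g^2 f(\bx)\bv) \le (1-r)^{-2}$, since applying this to all $\bv$ yields the claimed ordering of $\g^2 f(\by)$ between $(1-r)^2\g^2 f(\bx)$ and $(1-r)^{-2}\g^2 f(\bx)$.

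The one genuinely nonroutine ingredient — and the step I expect to be the main obstacle — is upgrading the ``diagonal'' self‑concordance bound $|\g^3 f(\bz)[\bv,\bv,\bv]| \le 2\norm{\bv}_{\g^2 f(\bz)}^3$ to the mixed trilinear estimate $|\g^3 f(\bz)[\bv_1,\bv_2,\bv_3]| \le 2\prod_{i=1}^3 \norm{\bv_i}_{\g^2 f(\bz)}$. This is a polarization statement: after the change of variables $\bv \mapsto \g^2 f(\bz)^{-1/2}\bv$ (assuming $\g^2 f(\bz) \succ 0$, with the degenerate case handled by a limiting argument) it reduces to the classical fact that a symmetric trilinear form whose values on the diagonal of the unit sphere are bounded by $M$ is bounded by $M$ on arbitrary triples of unit vectors. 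I would either invoke this directly or reprove it by the standard symmetrization/averaging argument for symmetric multilinear forms.

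Granting the trilinear estimate, the remainder is a short ODE computation. Define $g(t) \defeq \norm{\bu}_{\g^2 f(\bx + t\bu)}$; then $\tfrac{d}{dt} g(t)^2 = \g^3 f(\bx + t\bu)[\bu,\bu,\bu]$, so the diagonal bound gives $\bigl|\tfrac{d}{dt} g(t)^2\bigr| \le 2 g(t)^3$, hence $\bigl|\tfrac{d}{dt} g(t)^{-1}\bigr| \le 1$; integrating from $0$ and using $g(0) = r$ gives $g(t)^{-1} \ge 1/r - t$, i.e. $g(t) \le r/(1 - rt)$ on $[0,1]$ (finite since $r<1$). Next, for a fixed $\bv$ set $\psi(t) \defeq \norm{\bv}^2_{\g^2 f(\bx + t\bu)}$, so $\psi'(t) = \g^3 f(\bx + t\bu)[\bu,\bv,\bv]$, and the trilinear estimate yields $|\psi'(t)| \le 2 g(t)\psi(t)$, that is $\bigl|\tfrac{d}{dt}\ln\psi(t)\bigr| \le 2 g(t) \le 2r/(1-rt)$.

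Finally I would integrate this last bound over $t \in [0,1]$: $\bigl|\ln(\psi(1)/\psi(0))\bigr| \le \int_0^1 \tfrac{2r}{1-rt}\,dt = -2\ln(1-r)$. Exponentiating gives $(1-r)^2 \le \psi(1)/\psi(0) \le (1-r)^{-2}$, which is exactly $(1-r)^2\,\bv^\top\g^2 f(\bx)\bv \le \bv^\top\g^2 f(\by)\bv \le (1-r)^{-2}\,\bv^\top\g^2 f(\bx)\bv$; since $\bv$ was arbitrary, the Loewner inequality follows. Everything except the polarization lemma is mechanical, so the proof essentially rests on that single combinatorial/analytic fact about symmetric trilinear forms.
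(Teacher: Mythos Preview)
The paper does not prove this lemma; it is quoted verbatim from \cite[Theorem 4.1.6]{N04:book} and used as a black box. Your argument is correct and is essentially the standard textbook proof found in Nesterov's book: restrict to the segment, use the polarization fact that for a symmetric trilinear form the diagonal bound $|\g^3 f(\bz)[\bv,\bv,\bv]|\le 2\norm{\bv}_{\g^2 f(\bz)}^3$ implies the mixed bound $|\g^3 f(\bz)[\bu,\bv,\bv]|\le 2\norm{\bu}_{\g^2 f(\bz)}\norm{\bv}_{\g^2 f(\bz)}^2$, and then integrate the resulting differential inequality for $\ln\psi(t)$.
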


Fix some $\alpha \in (0, \nicefrac{1}{10}),$ set a path parameter $t$ and minimize the following objective
\[ \Psi_t(\bf, \by) \defeq t \cdot \vone^\top\by + \sum_{e \in E} \exp(\alpha\psi_e(\bf_e, \by_e)) = \sum_{e \in E} \left(t\by_e + \exp(\alpha\psi_e(\bf_e, \by_e))\right), \] over $\mB^\top\bf = \bd$.
This is analogous to our $\alpha$-power potential in 
Equation~\ref{eq:karmarkar} at the start of Section~\ref{sec:ipm}.

Note that for a fixed flow $\bf$, we can eliminate the variables $\by$ in the following way. We should set $\by_e = y_e(\bf_e)$ for $y_e(\bf_e) \defeq \argmin_y ty + \exp(\alpha\psi_e(\bf_e, y))$.
Thus we can write
\begin{align*} \min_{\mB^\top\bf=\bd, \by} \Psi_t(\bf, \by) = \min_{\mB^\top\bf=\bd} \sum_{e \in E} \left(ty_e(\bf_e) + \exp(\alpha\psi_e(\bf_e,y_e(\bf_e)))\right), \end{align*}
Let $\bzeta_e(\bf_e) \defeq \exp(\alpha\psi_e(\bf_e, y(\bf_e)))$, and $\zeta_e(\bf_e) = ty_e(\bf_e) + \bzeta_e(\bf_e)$ and define the potential
\begin{align}
    Z_t(\bf) \defeq \sum_{e \in E} \zeta_e(\bf_e). \label{eq:pot}
\end{align}
Our first main lemma (\cref{lemma:zetasc}) will be that up to scaling, the function $\zeta_e$ is self-concordant.
To show this, we start by studying the derivatives of the function $ty_e(\bf_e) + \zeta_e(\bf_e)$.
\begin{definition}
\label{def:partial}
For a function $\psi: \R^n \to \R$, and a sequence $(i_1,i_2,\dots,i_k) \in [n]^k$, define the mixed partials
\[ \psi_{x_{i_1},\dots,x_{i_k}} = \frac{\partial}{\partial x_{i_1}}\dots\frac{\partial}{\partial x_{i_k}}\psi. \]
\end{definition}

\begin{lemma}
\label{lemma:deriv}
Let $f: \mathcal{X} \to \R$ be a convex function on an open set $\mathcal{X} \subseteq \R^2$. For $x \in \R$ let $y(x) \defeq \mathrm{argmin}_y \psi(x, y)$. Let $\zeta(x) \defeq \psi(x, y(x))$. Then for $\bv \defeq \begin{bmatrix} 1 \\ y'(x) \end{bmatrix}$,
\begin{align}
    \zeta'(x) &= \l \g \psi(x, y(x)), \bv \r = f_x(x,y(x)) \label{eq:firstder}, \\
    \zeta''(x) &= \bv^\top \g^2 \psi(x, y(x)) \bv \label{eq:secondder}, \\
    \zeta'''(x) &= \g^3 \psi(x, y(x))[\bv, \bv, \bv] \label{eq:thirdder}.
    \\
    y'(x) &= -\psi_{xy}(x,y(x))/\psi_{yy}(x,y(x)). \label{eq:yder}
\end{align}
\end{lemma}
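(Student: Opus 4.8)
The plan is to prove the four identities by the chain rule, using the first-order optimality condition defining $y(x)$. Since $y(x) = \argmin_y \psi(x,y)$ and $\psi$ is convex (and smooth) in the interior, $y(x)$ is characterized by the stationarity condition
\begin{equation}
\label{eq:stationarity}
\psi_y(x, y(x)) = 0 \quad \text{for all } x.
\end{equation}
First I would differentiate \eqref{eq:stationarity} with respect to $x$: by the chain rule, $\psi_{xy}(x,y(x)) + \psi_{yy}(x,y(x)) y'(x) = 0$, which rearranges to \eqref{eq:yder}, namely $y'(x) = -\psi_{xy}(x,y(x))/\psi_{yy}(x,y(x))$ (using that $\psi_{yy} > 0$ by strict convexity in $y$, or at least nonzero where the minimizer is nondegenerate; I would note this technical proviso, which is harmless in the application since the self-concordant barriers are strictly convex).

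Next, with $\zeta(x) = \psi(x, y(x))$ and $\bv = \begin{bmatrix} 1 \\ y'(x) \end{bmatrix}$, the chain rule gives $\zeta'(x) = \psi_x(x,y(x)) + \psi_y(x,y(x)) y'(x)$. By the stationarity condition \eqref{eq:stationarity} the second term vanishes, so $\zeta'(x) = \psi_x(x,y(x)) = \langle \nabla\psi(x,y(x)), \bv\rangle$ — the last equality holding because the $y$-component of $\nabla\psi$ is zero at $(x,y(x))$, which is exactly \eqref{eq:firstder}. (Here $f_x = \psi_x$ in the lemma's notation; I would just match up the names.) For \eqref{eq:secondder}, I would differentiate $\zeta'(x) = \psi_x(x,y(x))$ again: $\zeta''(x) = \psi_{xx}(x,y(x)) + \psi_{xy}(x,y(x)) y'(x)$. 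To see this equals $\bv^\top \nabla^2\psi(x,y(x))\bv = \psi_{xx} + 2\psi_{xy} y'(x) + \psi_{yy} y'(x)^2$, I use \eqref{eq:yder} in the form $\psi_{yy} y'(x) = -\psi_{xy}$, so $2\psi_{xy} y'(x) + \psi_{yy} y'(x)^2 = 2\psi_{xy}y'(x) - \psi_{xy}y'(x) = \psi_{xy}y'(x)$, and the two expressions coincide.

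For \eqref{eq:thirdder} I would differentiate $\zeta''(x) = \bv(x)^\top \nabla^2\psi(x,y(x))\bv(x)$ once more, treating $\bv(x)$ as a function of $x$. This produces three groups of terms: the "frozen-$\bv$" term $\nabla^3\psi[\bv,\bv,\bv]$ (from differentiating the argument $(x,y(x))$ of $\nabla^2\psi$ in the direction $\bv$), plus two terms of the form $2\, \bv'(x)^\top \nabla^2\psi\, \bv(x)$ coming from differentiating the two outer copies of $\bv$. I claim these last two terms cancel / vanish: differentiating the optimality-derived identity once more shows $\nabla^2\psi(x,y(x))\,\bv(x)$ has vanishing second coordinate (indeed its second coordinate is $\psi_{yx} + \psi_{yy}y'(x) = 0$ by \eqref{eq:yder}), while $\bv'(x) = \begin{bmatrix} 0 \\ y''(x)\end{bmatrix}$ has vanishing first coordinate, so their inner product is $0$. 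Hence only $\nabla^3\psi[\bv,\bv,\bv]$ survives, giving \eqref{eq:thirdder}.

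I expect the main obstacle — really just a bookkeeping obstacle rather than a conceptual one — to be organizing the third-derivative computation cleanly so that the cancellation $\bv'(x)^\top\nabla^2\psi\,\bv(x) = 0$ is transparent; the cleanest route is to first record the lemma "$\nabla^2\psi(x,y(x))\bv(x)$ has zero $y$-coordinate" as an immediate consequence of \eqref{eq:yder}, and then everything else follows by repeated application of the single-variable chain rule together with \eqref{eq:stationarity}. No deeper idea is needed; self-concordance itself is not used in this lemma (it is invoked later, via Definition~\ref{def:selfcon}, to turn these identities into the self-concordance of $\zeta$).
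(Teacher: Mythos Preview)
Your proposal is correct and follows essentially the same approach as the paper: both derive everything from the stationarity condition $\psi_y(x,y(x))=0$ and repeated application of the chain rule. The only cosmetic difference is in packaging the third-derivative computation---the paper expands everything in coordinates and then invokes $\psi_{xy}+\psi_{yy}y'(x)=0$ to kill the $y''$ terms, while you phrase the same cancellation as the orthogonality $\bv'(x)^\top\nabla^2\psi\,\bv(x)=0$; your version is arguably tidier but the content is identical.
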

\begin{proof}
Note that $\psi_y(x, y(x)) = 0$ by the optimality of $y(x)$.
By the chain rule for total derivatives
\[ \zeta'(x) = f_x(x, y(x)) + \psi_y(x, y(x))y'(x) = \l \g \psi(x, y(x)), \bv \r \] which shows the first equality \eqref{eq:firstder}.

Taking the derivative of the first equality of \eqref{eq:firstder} gives us
\begin{align*}
\zeta''(x) &= \psi_{xx}(x, y(x)) + 2\psi_{xy}(x, y(x))y'(x) + \psi_{yy}(x, y(x))y'(x)^2 + f_y(x, y(x))y''(x) \\
&= \psi_{xx}(x, y(x)) + 2\psi_{xy}(x, y(x))y'(x) + \psi_{yy}(x, y(x))y'(x)^2 = \bv^\top \g^2 \psi(x, y(x)) \bv,
\end{align*}
where we have used that $\psi_y(x, y(x)) = 0$. This shows \eqref{eq:secondder}.

Taking the derivative of \eqref{eq:secondder} gives
\begin{align*} \zeta'''(x) =~&\psi_{xxx}(x, y(x)) + 3\psi_{xxy}(x, y(x))y'(x) + 3\psi_{xyy}(x, y(x))y'(x)^2 + \psi_{yyy}(x, y(x))y'(x)^3 \\ +~&2\left(\psi_{xy}(x, y(x)) + \psi_{yy}(x, y(x))y'(x)\right)y''(x).
\end{align*}
However, note that taking the derivative of the identity $\psi_y(x, y(x)) = 0$
gives us
\[
\psi_{xy}\left(x, y\left(x\right)\right)
+
\psi_{yy}\left(x, y\left(x\right)\right)y'\left(x\right)
=
0.
\]
Plugging this into the above gives us
\begin{align*} 
\zeta'''(x) & =\psi_{xxx}(x, y(x)) + 3\psi_{xxy}(x, y(x))y'(x) + 3\psi_{xyy}(x, y(x))y'(x)^2 + \psi_{yyy}(x, y(x))y'(x)^3 \\ 
&= \g^3 \psi(x, y(x))[\bv, \bv, \bv]
\end{align*}
as desired.

To show \eqref{eq:yder}, recall that $f_y(x,y(x)) = 0$. Taking a derivative of this in $x$ gives
\[ \psi_{xy}(x,y(x)) + \psi_{yy}(x,y(x))y'(x) = 0, \]
which rearranges to \eqref{eq:yder} as desired.
\end{proof}

Now we show that the $\zeta_e$ functions are self-concordant. Note that we do not claim that $\zeta_e$ is $\nu$-self-concordant, just self-concordant.
\begin{lemma}
\label{lemma:zetasc}
For all $e \in E$, $\alpha^{-1}\zeta_e/4$ is a self-concordant function.
\end{lemma}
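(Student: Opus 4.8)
The plan is to use the formulas from \cref{lemma:deriv} to reduce self-concordance of $\zeta_e$ to self-concordance of $\psi_e$ evaluated along the specific direction $\bv = (1, y_e'(\bf_e))^\top$. Recall that by \eqref{eq:secondder} and \eqref{eq:thirdder},
\[
\zeta_e''(x) = \bv^\top \g^2\psi_e(x, y_e(x))\, \bv
\quad\text{and}\quad
\zeta_e'''(x) = \g^3\psi_e(x, y_e(x))[\bv, \bv, \bv].
\]
Since $\psi_e$ is self-concordant (in the sense of the first inequality in \cref{def:selfcon}, which does \emph{not} involve the parameter $\nu$), applying that inequality with the vector $\bv$ gives
\[
|\zeta_e'''(x)| = \left|\g^3\psi_e(x,y_e(x))[\bv,\bv,\bv]\right|
\le 2\left(\bv^\top\g^2\psi_e(x,y_e(x))\bv\right)^{3/2}
= 2\left(\zeta_e''(x)\right)^{3/2}.
\]
This is exactly the one-dimensional self-concordance inequality $|g'''| \le 2 (g'')^{3/2}$ for $g = \zeta_e$, so $\zeta_e$ itself is self-concordant. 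The constant $\alpha^{-1}/4$ in the statement should then come from tracking the scaling: the function actually being studied is built from $\exp(\alpha \psi_e)$ rather than $\psi_e$ directly, so I would first need to check that $\zeta_e$ as defined via $\bzeta_e(\bf_e) = \exp(\alpha\psi_e(\bf_e, y_e(\bf_e)))$ and $\zeta_e = t y_e + \bzeta_e$ has derivatives that can be bounded in the required form, and self-concordance is preserved under the rescaling $g \mapsto c\,g$ precisely when $c \ge 1$; here one checks $|(cg)'''| = c|g'''| \le 2c(g'')^{3/2} = 2 c^{-1/2}((cg)'')^{3/2} \le 2((cg)'')^{3/2}$ for $c \ge 1$.

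Concretely, the steps I would carry out are: (1) Reduce to the two-variable setting: fix an edge $e$, write $\psi = \psi_e$, $x = \bf_e$, and $y(x) = \argmin_y (t y + \exp(\alpha\psi(x,y)))$, noting that the inner objective is convex in $y$ so the minimizer is well-defined and $y'(x)$ is given by the analogue of \eqref{eq:yder} applied to the function $\Psi_{t,e}(x,y) := ty + \exp(\alpha\psi(x,y))$. (2) Observe that $t y$ is linear, hence contributes nothing to the second or third derivatives, so $\zeta_e'', \zeta_e'''$ equal the second and third total derivatives of $x \mapsto \exp(\alpha\psi(x, y(x)))$ along the curve $(x, y(x))$. (3) Let $h(x) := \exp(\alpha\psi(x,y(x)))$ and compute, via the composition $\exp \circ (\alpha\,(\text{restriction of }\psi))$, that if we set $u(x) := \alpha \psi(x, y(x))$ then $h = e^u$, $h'' = e^u(u'' + (u')^2)$, $h''' = e^u(u''' + 3u'u'' + (u')^3)$. (4) Apply \cref{lemma:deriv} to the function $\psi$ (not $h$) along the optimal curve $y(x)$: since $y(x)$ minimizes $\exp(\alpha\psi(x,\cdot))$ iff it minimizes $\psi(x,\cdot)$ (as $\exp$ is increasing), the same $y(x)$ and the same $\bv = (1, y'(x))^\top$ work, and we get $u'(x) = \alpha\langle\g\psi,\bv\rangle$, $u''(x) = \alpha\,\bv^\top\g^2\psi\,\bv$, $u'''(x) = \alpha\,\g^3\psi[\bv,\bv,\bv]$. (5) Use self-concordance of $\psi$ in direction $\bv$: writing $\sigma := \bv^\top\g^2\psi\,\bv \ge 0$ and $\gamma := \langle\g\psi,\bv\rangle$, we have $|u'''| \le 2\alpha\sigma^{3/2}$, $u'' = \alpha\sigma$, $|u'| = \alpha|\gamma|$. (6) Combine the pieces to bound $|h'''|$ in terms of $h''$: this is a one-variable self-concordance-style computation with the exponential, and it should yield $|h'''| \le C(h'')^{3/2}$ for an absolute-ish constant $C$ depending on $\alpha$, from which $\alpha^{-1}\zeta_e/4$ being self-concordant follows by the rescaling argument above.

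The main obstacle I expect is step (6): the $(u')$ and $(u')^2, (u')^3$ terms introduce a factor of $\gamma = \langle\g\psi,\bv\rangle$ which is \emph{not} controlled purely by $\sigma = \|\bv\|^2_{\g^2\psi}$ unless we use $\nu$-self-concordance — but the statement only claims self-concordance of $\zeta_e$, not $\nu$-self-concordance, so using the $\nu$-bound would be the wrong move. The resolution is that we must exploit the \emph{exponential}: by \cref{lem:GradDotBounded}-type reasoning or, more directly, by noting that $u = \alpha\psi$ and $h = e^u$, the contribution of $(u')^2$-type terms to $h''$ is $e^u(u')^2 \ge 0$, so it only \emph{helps} lower-bound $h''$; and the key inequality to nail down is something like $|u''' + 3u'u'' + (u')^3| \le 2(u'' + (u')^2)^{3/2}$ given $|u'''| \le 2(u'')^{3/2}$ (after absorbing $\alpha$ correctly), which is a clean scalar inequality. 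The cleanest path is probably: set $a = u''\ge 0$, $b = (u')^2 \ge 0$, and check that $|u'''| \le 2a^{3/2}$, $|3u'u''| = 3|u'|\,u'' \le 3\sqrt{b}\,a \le 2(a+b)^{3/2}\cdot(\text{something} \le 1)$, and $|(u')^3| = b^{3/2} \le (a+b)^{3/2}$; summing gives $|h'''|/e^u \le 6(a+b)^{3/2} = 6(h''/e^u \cdot e^{-0} )^{3/2}$ roughly, and then factoring out powers of $e^u$ correctly (using $e^u = \bzeta_e \ge 1$ by \cref{assump}) produces $|h'''| \le 6 (h'')^{3/2}$ — wait, the powers of $e^u$ need care: $h'' = e^u(a+b)$ so $(h'')^{3/2} = e^{3u/2}(a+b)^{3/2}$ while $|h'''| \le 6 e^u (a+b)^{3/2}$, and since $e^u \ge 1$ we get $e^u \le e^{3u/2}$, hence $|h'''| \le 6(h'')^{3/2}$. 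Then $\tfrac14\alpha^{-1}\zeta_e = \tfrac14\alpha^{-1}(t y_e + h)$ has third derivative $\tfrac14\alpha^{-1}h'''$ with $|\tfrac14\alpha^{-1}h'''| \le \tfrac{6}{4\alpha}(h'')^{3/2} = \tfrac{6}{4\alpha}\big(\alpha\cdot\text{(second deriv of }\tfrac14\alpha^{-1}\zeta_e)\cdot 4\big)^{3/2}\cdot$... — the exact constant bookkeeping is the fiddly part, but the structure guarantees a bound of the form $2\big((\tfrac14\alpha^{-1}\zeta_e)''\big)^{3/2}$ for the stated normalization, completing the proof.
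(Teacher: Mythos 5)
Your plan is close to the paper's — push the chain rule through $\zeta_e'''=\g^3(e^{\alpha\psi_e})[\bv,\bv,\bv]$, bound the cubic term $\tau:=\g^3\psi_e[\bv,\bv,\bv]$ via self-concordance of $\psi_e$, and absorb the extra $\bzeta_e$ factor via $\bzeta_e\ge 1$ — but several steps as written are wrong, and one is a genuine obstruction. First, your dismissal of the $\nu$-bound is misplaced: the lemma only claims ordinary self-concordance of $\zeta_e$, but \Cref{assump} gives $\nu$-self-concordance of $\psi_e$ as a \emph{hypothesis}, which is exactly what the paper uses — it writes $\gamma^2\le\nu\sigma$ (with $\gamma=\langle\g\psi_e,\bv\rangle$, $\sigma=\bv^\top\g^2\psi_e\bv$) and then exploits $\alpha\nu\le 10^{-3}$ to make the $\gamma$-dependent cross terms negligible. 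Using a strong hypothesis to prove a weaker conclusion is not "the wrong move." Second, your chain-rule justification in steps (2) and (4) is incorrect: $y_e(x)=\argmin_y\,(ty+e^{\alpha\psi(x,y)})$, so $\psi_y(x,y_e(x))=-t/(\alpha e^{\alpha\psi})\neq 0$; hence \Cref{lemma:deriv} does \emph{not} apply to $\psi$ along this curve, the total derivative of $x\mapsto e^{\alpha\psi(x,y_e(x))}$ carries a $\psi_y y_e''$ term, and $ty_e(x)$ has nonzero second derivative $t\,y_e''$. Your formulas still come out right only because these two omitted terms combine to $y_e''\cdot F_y(x,y_e(x))=0$ — a cancellation you never acknowledge. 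The clean route is to apply \Cref{lemma:deriv} once to $F=ty+e^{\alpha\psi}$, then use $\g^2 F=\g^2(e^{\alpha\psi})$ and $\g^3 F=\g^3(e^{\alpha\psi})$ since $ty$ is affine in $(x,y)$.

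Third, and this is the real gap: the scalar inequality you posit in step (6), namely $|u'''+3u'u''+(u')^3|\le 2(u''+(u')^2)^{3/2}$ given only $|u'''|\le 2(u'')^{3/2}$, is false — take $u'''=2$, $u''=u'=1$: the left side is $6$ and the right side is $2\cdot 2^{3/2}<6$. Your own rough estimate produces constant $6$, which would force the final scaling $c\ge(6/2)^2=9$ (recall: if $|g'''|\le C(g'')^{3/2}$ then $cg$ is self-concordant iff $c\ge(C/2)^2$); it cannot "guarantee a bound of the form $2((\tfrac14\alpha^{-1}\zeta_e)'')^{3/2}$" by structural handwaving. The saving ingredient, which you name ("after absorbing $\alpha$ correctly") but never execute, is that the cross terms carry extra powers of $\alpha$. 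Writing $a=\sigma$, $b=\alpha\gamma^2$, one has $3\alpha|\gamma|\sigma=3\sqrt{\alpha}\sqrt{b}\,a$ and $\alpha^2|\gamma|^3=\sqrt{\alpha}\,b^{3/2}$, so $|\tau|+3\alpha|\gamma|\sigma+\alpha^2|\gamma|^3\le 2a^{3/2}+3\sqrt{\alpha}\sqrt{b}\,a+\sqrt{\alpha}\,b^{3/2}\le(2+2\sqrt{\alpha})\bigl(a^{3/2}+b^{3/2}\bigr)\le 4(a+b)^{3/2}$. Then $|\alpha^{-1}\zeta_e'''|\le 4\bzeta_e(a+b)^{3/2}\le 4\bigl(\bzeta_e(a+b)\bigr)^{3/2}=4\bigl(\alpha^{-1}\zeta_e''\bigr)^{3/2}$ using $\bzeta_e\ge 1$, which is exactly the paper's bound. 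Without spelling out these $\sqrt{\alpha}$ factors, the proposal does not close; with them, your $\nu$-free variant is actually a valid alternative to the paper's argument.
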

\begin{proof}
We calculate
\begin{align*}
    \alpha^{-1}\zeta_e'''(\bf_e) &= \alpha^{-1}\g^3(\exp(\alpha\psi_e(\bf_e,y_e(\bf_e))))[\bv,\bv,\bv] \\ &= \Big( \g^3\psi_e(\bf_e,y_e(\bf_e))[\bv,\bv,\bv] \\ ~&+ 3\alpha \g^2\psi_e(\bf_e,y_e(\bf_e))[\bv,\bv] + \alpha^2 \l \g\psi_e(\bf_e,y_e(\bf_e)), \bv\r^3 \Big)\bzeta_e(\bf_e),
\end{align*}
where $\bv = \begin{bmatrix} 1 \\ y_e'(\bf_e) \end{bmatrix}$. By $\nu$-self-concordance of $\psi_e$, we can bound the previous expression by
\[ \alpha^{-1}\zeta_e'''(\bf_e) \le 4(\g^2\psi_e(\bf_e,y_e(\bf_e))[\bv,\bv]\bzeta_e(\bf_e))^{3/2} \le 4(\alpha^{-1}\zeta_e''(\bf_e))^{3/2}, \]
where the last inequality follows by the formula for $\bell(\bf_e)$. Scaling by a factor of $4$ completes the proof.
\end{proof}

Our algorithm for solving \eqref{eq:erm} will fix a value of $t$ and reduce the value of the potential \eqref{eq:pot} until $1^\top \by \le F^* + 50\nu m/t$. Once this holds, we will double the value of $t$ and start a new phase. Each phase will requires approximately $m^{1+o(1)}\alpha^{-2}$ iterations.

We now formally define the gradients and lengths. The gradient is
\begin{align}
    \bg(\bf)_e \defeq [\g Z_t(\bf)]_e = \zeta_e'(\bf_e), \label{eq:generalg}
\end{align}
and the lengths we define as
\begin{align}
    \bell(\bf)_e &\defeq \sqrt{\alpha^{-1}\zeta_e''(\bf_e)} \nonumber \\
    &= \sqrt{(\bv^\top \g^2\psi_e(\bf_e,y_e(\bf_e)) \bv + \alpha \l \g\psi_e(\bf_e,y_e(\bf_e)), \bv\r^2)\bzeta_e(\bf_e)} \enspace \text{ for } \enspace \bv \defeq \begin{bmatrix} 1 \\ y_e'(\bf_e) \end{bmatrix}. \label{eq:generall}
\end{align}
Here, the equality starting line 2 follows from \cref{lemma:deriv}, \eqref{eq:secondder} applied to the function $\zeta_e(\bf_e)$.

Define $\bf_t^* \defeq \argmin_{\mB^\top\bf=\bd} Z_t(\bf)$. We now bound the optimality gap of $\bf_t^*$. This will ultimately show that if $Z_t(\bf_t^*) - F^*$ is much larger than $4m$, then we can reduce the potential by $m^{-o(1)}$ in a single step.
\begin{lemma}[Optimality gap]
\label{lemma:dualgap}
For sufficiently small $\alpha = \wt{\Omega}(1/\log\max(t,2))$, we have that \[ Z_t(\bf_t^*) - F^* \le 4m. \]
\end{lemma}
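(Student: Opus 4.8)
The goal is to bound the optimality gap $Z_t(\bf_t^*) - F^*$, where $Z_t(\bf) = \sum_e \zeta_e(\bf_e) = \sum_e (t y_e(\bf_e) + \bzeta_e(\bf_e))$ and $F^* = \min_{\mB^\top\bf=\bd} h(\bf)$. The natural approach is to compare $\bf_t^*$ against a true optimal flow $\bf^*$ for the original problem \eqref{eq:erm} (with its associated $\by^*$ chosen to be the minimizer of $\psi_e$ in the second coordinate, or, since the barrier domain is open, a point $(\bf^*_e, \by^*_e)$ slightly in the interior with $\by^*_e$ close to $h_e(\bf^*_e)$). I would first write
\[
Z_t(\bf_t^*) - F^* \le Z_t(\bf^*) - F^* = \sum_{e\in E}\big( t\, y_e(\bf^*_e) - h_e(\bf^*_e)\big) + \sum_{e\in E} \bzeta_e(\bf^*_e),
\]
using that $\bf_t^*$ minimizes $Z_t$ over flows routing $\bd$ and that $\bf^*$ is feasible. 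Now there are two terms to control: the $\bzeta_e$ term (which is the barrier contribution) and the ``$t y_e - h_e$'' term (which measures how far the barrier-shifted optimal $y$-coordinate is above the true cost).

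\textbf{Step 1: bounding the barrier term $\sum_e \bzeta_e(\bf^*_e)$.} Recall $\bzeta_e(\bf_e) = \exp(\alpha \psi_e(\bf_e, y_e(\bf_e)))$. By \Cref{assump}, at the initial point $\psi_e(\bf^{(0)}_e, \by^{(0)}_e) \le K = \O(1)$, but we need a bound at $\bf^*_e$, which may be near the boundary of $\mathcal{X}_e$. The key is that $y_e(\bf^*_e)$ is chosen to minimize $t y + \exp(\alpha \psi_e(\bf^*_e, y))$, so it trades off being small against not blowing up the barrier; in particular $y_e(\bf^*_e)$ will sit a little above $h_e(\bf^*_e)$, and at that point $\psi_e$ is finite. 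Using self-concordance of $\psi_e$ together with \Cref{lem:GradDotBounded} (which gives $\langle \nabla \psi_e(\bx), \by - \bx\rangle < \nu$ for $\bx, \by$ in the domain) and the standard fact that along the ray from an interior reference point towards the boundary the barrier grows only logarithmically, one shows $\psi_e(\bf^*_e, y_e(\bf^*_e)) \le O(\nu \log(\text{something polynomially bounded}))$. Since all quantities are polynomially bounded by \Cref{assump}, this is $O(\nu \log m)$, and with $\alpha = \widetilde\Omega(1/\log\max(t,2))$ chosen sufficiently small we get $\alpha \psi_e(\bf^*_e, y_e(\bf^*_e)) \le \log 2$, hence $\bzeta_e(\bf^*_e) \le 2$, so $\sum_e \bzeta_e(\bf^*_e) \le 2m$. (Here one must be careful that the choice of $\alpha$ can depend on $t$, which is exactly why the lemma statement allows $\alpha = \widetilde\Omega(1/\log\max(t,2))$; the $\log t$ appears because $y_e(\bf^*_e)$ is pushed closer to $h_e(\bf^*_e)$ as $t$ grows, making $\psi_e$ larger, roughly like $\log t$.)

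\textbf{Step 2: bounding $\sum_e (t y_e(\bf^*_e) - h_e(\bf^*_e))$.} Since $\by^*$ with $\by^*_e = h_e(\bf^*_e)$ achieves $\sum_e h_e(\bf^*_e) = F^*$ but lies on the boundary of $\mathcal{X}_e$, write $y_e(\bf^*_e) = h_e(\bf^*_e) + \delta_e$ where $\delta_e > 0$. By optimality of $y_e(\bf^*_e)$ for the one-dimensional problem $\min_y (ty + \exp(\alpha\psi_e(\bf^*_e, y)))$, we have $t + \alpha \psi_{e,y}(\bf^*_e, y_e(\bf^*_e)) \exp(\alpha\psi_e(\cdot)) = 0$, so the derivative of the barrier term in $y$ exactly cancels $t$; from this and self-concordance one extracts that $\delta_e$ is of order $1/(t \cdot \text{(barrier curvature)})$ up to $\exp(\log^{O(1)}m)$ factors, and hence $t\delta_e$ is $\O(1)$ per edge; summing, $\sum_e t\delta_e \le 2m$ after adjusting $\alpha$. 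Combining the two steps yields $Z_t(\bf_t^*) - F^* \le 2m + 2m = 4m$.

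\textbf{Main obstacle.} The delicate part is Step 1 (and the intertwined Step 2): making precise, via self-concordance, the claim that the barrier value $\psi_e(\bf^*_e, y_e(\bf^*_e))$ at the $t$-dependent ``near-optimal'' point is only $O(\nu \log(tm^{O(1)}))$, so that multiplying by $\alpha = \widetilde\Omega(1/\log\max(t,2))$ keeps $\exp(\alpha\psi_e)$ bounded by a constant. This requires carefully invoking the logarithmic boundary-growth property of self-concordant barriers (e.g. via \Cref{lem:GradDotBounded} and integrating $\langle\nabla\psi_e, \bv\rangle$ along a segment, or the standard bound $\psi(\bx) \le \psi(\bx_0) + \nu\log\frac{1}{1-s}$ where $s$ parametrizes the segment from an interior point $\bx_0$), and combining it with the polynomial bounds from \Cref{assump} to control how close to the boundary the point $(\bf^*_e, y_e(\bf^*_e))$ sits as a function of $t$. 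Everything else is bookkeeping: the feasibility comparison $Z_t(\bf_t^*) \le Z_t(\bf^*)$ is immediate, and the per-edge constant bounds sum to $O(m)$.
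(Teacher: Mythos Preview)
Your approach has a genuine gap: you evaluate $Z_t$ at the true optimizer $\bf^*$, but $\bf^*$ need not lie in the interior of the barrier domain. Concretely, the barrier $\psi_e$ is defined on the open epigraph $\mathcal{X}_e$, and for many of the applications in this section (e.g.\ isotonic regression with nonnegativity constraints, or any $h_e$ that jumps to $+\infty$ outside an interval) the optimal $\bf^*_e$ sits on the boundary of the projection of $\mathcal{X}_e$ onto the $f$-coordinate. When that happens, there is \emph{no} choice of $y$ making $(\bf^*_e,y)$ interior, so $y_e(\bf^*_e)$ is undefined and $\zeta_e(\bf^*_e)=+\infty$. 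Your Step~1 addresses only the $y$-direction boundary (picking $y_e(\bf^*_e)>h_e(\bf^*_e)$), but this does nothing about the $f$-direction boundary; the segment bound $\psi(\bx)\le \psi(\bx_0)+\nu\log\frac{1}{1-s}$ you cite gives $+\infty$ at $s=1$.

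The paper's proof fixes exactly this by evaluating not at $\bf^*$ but at a convex combination $\bf=\beta\bf^{(0)}+(1-\beta)\bf^*$ with the strictly interior initial point $(\bf^{(0)},\by^{(0)})$ from \cref{assump}. The parameter $\beta=\min(1,m/(tQ))$, where $Q=h(\bf^{(0)})-F^*$, is chosen to balance two competing effects: large $\beta$ keeps the point away from the boundary (so the barrier stays $O(\nu\log(1/\beta))=O(\nu\log t)+\O(1)$, controllable by taking $\alpha=\wt\Omega(1/\log t)$), while small $\beta$ keeps the linear cost $t\cdot\mathbf{1}^\top\by$ close to $tF^*$ (the excess is $t\beta Q\le m$). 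Your Steps~1--2 attempt to achieve this same balance directly at $\bf^*$ via the internal optimization over $y$, but that mechanism cannot compensate for $\bf^*_e$ being on the $f$-boundary. Even in cases where $\bf^*$ happens to be interior, your Step~2 derivation that $\sum_e t\delta_e\le 2m$ is not substantiated: you claim $t\delta_e$ is ``$\O(1)$ per edge,'' which would give only $\O(m)$, and the dependence on $\alpha$ needed to get a true $O(1)$ constant is not worked out.
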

\begin{proof}
Recall that $\bf^{(0)}, \by^{(0)}$ are the initially feasible points. Let $\bf^*$ be the optimal flow and $\by^*_e = h_e(\bf^*_e)$.
We will upper bound $Z_t(\bf)$ for a flow $\bf = \beta \bf^{(0)} + (1-\beta)\bf^*$ and $\by = \beta \by^{(0)} + (1-\beta)\by^*$ for a parameter $\beta \in [0,1]$ chosen later. 
Define $Q = h(\bf^{(0)}) - F^*$, the optimality gap of the original flow. By our assumptions, we know that $\log Q = \O(1)$. We set $\beta = \min(1, m/(tQ))$. For $s \in [0, 1]$ let $\bf^{(s)} \defeq \bf^{(0)} + s(\bf^* - \bf^{(0)})$ and $\by^{(s)} \defeq \by^{(0)} + s(\by^* - \by^{(0)})$.

Define the function $\bar{\psi}_e(s) \defeq \psi_e(\bf^{(s)}_e, \by^{(s)}_e)$, which is $\nu$-self-concordant as it is the restriction of $\psi_e$ onto a line. By self-concordance, $\bar{\psi}_e''(s) \ge \bar{\psi}_e'''(s)/(2\sqrt{\bar{\psi}_e''(s)})$, so integrating both sides gives
\[ \bar{\psi}_e'(1-\beta) - \bar{\psi}_e'(0) \ge \sqrt{\bar{\psi}_e''(1-\beta)} - \sqrt{\bar{\psi}_e''(0)}. \]
By $\nu$-self-concordance and \cite[Theorem 4.2.4]{N04:book}
(Lemma~\ref{lem:GradDotBounded}), we know $\beta\bar{\psi}'_e(1-\beta) \le \nu$, and $ \bar{\psi}_e'(0) \ge -\sqrt{\nu \bar{\psi}_e''(0)}$.
Rearranging this gives us
\begin{align} \sqrt{\bar{\psi}_e''(1-\beta)} \le 2\sqrt{\nu \bar{\psi}_e''(0)} + \nu/\beta. \label{eq:firstpsibound} \end{align}
We use this to bound $\bar{\psi}_e(\bf_e, \by_e) = \bar{\psi}_e(1-\beta)$.
We can assume that $\bar{\psi}_e'(0) \ge 0$ because $\bar{\psi}_e'(s)$ is an increasing function, so we might as well start our integration at the minimizer on the line.

Now, by rearranging the $\nu$-self-concordance condition we get
\[ \bar{\psi}_e'(s) \le \frac{\nu \bar{\psi}''_e(s)}{\bar{\psi}_e'(s) + 1} + 1. \] Integrating both sides gives us
\[ \bar{\psi}_e(1-\beta) - \bar{\psi}_e(0) \le \nu\log\left(\frac{\bar{\psi}_e'(1-\beta) + 1}{\bar{\psi}_e'(0) + 1}\right) + 1 \le \nu \log\left(\sqrt{\nu\bar{\psi}_e''(1-\beta)}+1\right) + 1. \] 
Recall by our assumption that $\bar{\psi}_e(0) = \psi_e(\bf^{(0)}_e,\by^{(0)}_e) \le K = \O(1)$. Additionally by \eqref{eq:firstpsibound} the RHS of the above expression is also bounded by $\O(1) + \log (\nu/\beta) \le \O(1) + \max(0, O(\log t))$ because $\log Q = \O(1)$. Thus, we get that $\bar{\psi}_e(1-\beta) = \O(1) + \max(0, O(\log t))$,
and in turn for $\alpha = \wt{\Omega}(1/\log\max(t,2))$, 
\[ Z_t(\bf) \le t \cdot 1^\top \by + \sum_{e \in E} \exp(\alpha\psi_e(\bf_e,\by_e)) \le t\beta Q + 2m \le 4m. \]
\end{proof}

Using this, we will bound the quality of the solution $\bf_t^* - \bf$, i.e. how negative its gradient is compared to its total length.
\begin{lemma}
\label{lemma:qualggeneral}
Let $\alpha$ be set as in \cref{lemma:dualgap}.
If $1^\top \by(\bf) - F^* \ge 10m/t$ and $\|\mL(\bf)^{-1}(\wt{\bg} - \bg(\bf)\|_\infty \le \eps$ for $\eps \le \alpha/100$ then
\[ \wt{\bg}^\top(\bf_t^* - \bf) \le -\alpha/4 \cdot \|\mL(\bf)(\bf_t^* - \bf)\|_1 - m/4. \]
\end{lemma}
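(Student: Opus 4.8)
The plan is to mimic the proof of \cref{lemma:optCirculationQuality} from the linear case, but replacing the explicit barrier computations with the self-concordance lemmas of Nesterov and the derivative identities of \cref{lemma:deriv}. The key object is $\bg(\bf)^\top(\bf_t^* - \bf)$, which by \eqref{eq:generalg} and \cref{lemma:deriv} equals $\sum_{e} \zeta_e'(\bf_e)(\bf^*_{t,e} - \bf_e)$. Writing $\zeta_e(\bf_e) = t\,y_e(\bf_e) + \bzeta_e(\bf_e)$, the gradient splits into a ``cost part'' $t \sum_e y_e'(\bf_e)(\bf^*_{t,e} - \bf_e)$ and a ``barrier part'' $\sum_e \bzeta_e'(\bf_e)(\bf^*_{t,e} - \bf_e)$. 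For the cost part, I would first control it using convexity of $y_e$ (which follows since $y_e(\bf_e) = \min_y \psi_e(\bf_e,y)$ is a partial minimization of a convex function, hence convex): this gives $t\sum_e y_e'(\bf_e)(\bf^*_{t,e} - \bf_e) \le t(1^\top \by(\bf^*_t) - 1^\top \by(\bf))$. Since $\bf^*_t$ minimizes $Z_t$ and $Z_t(\bf^*_t) - F^* \le 4m$ by \cref{lemma:dualgap}, while $1^\top\by(\bf) - F^* \ge 10m/t$ by hypothesis, we get $t(1^\top\by(\bf^*_t) - 1^\top\by(\bf)) \le t(4m/t - 10m/t) \le -6m$, using $1^\top\by(\bf^*_t) \le Z_t(\bf^*_t)$ since $\bzeta_e \ge 1 > 0$.

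For the barrier part, I would apply \cref{lem:GradStable} (Nesterov Theorem 4.1.7) to the self-concordant function $\alpha^{-1}\zeta_e/4$ (self-concordant by \cref{lemma:zetasc}) restricted to the line from $\bf_e$ to $\bf^*_{t,e}$. Concretely, monotonicity of the gradient of a convex function gives $\zeta_e'(\bf_e)(\bf^*_{t,e} - \bf_e) \le \zeta_e'(\bf^*_{t,e})(\bf^*_{t,e} - \bf_e)$; but it is cleaner to bound directly: by \cref{lem:GradStable} applied at $\bx = \bf_e$ (in one dimension, so the Hessian-norm is $\bell(\bf)_e |\bf^*_{t,e} - \bf_e| / \sqrt{\alpha} \cdot \sqrt{4}$ up to the scaling constant), we obtain a lower bound on $(\zeta_e'(\bf^*_{t,e}) - \zeta_e'(\bf_e))(\bf^*_{t,e} - \bf_e)$ in terms of $\bell(\bf)_e^2(\bf^*_{t,e}-\bf_e)^2$. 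Combined with the fact that $\bg(\bf^*_t)$ is orthogonal to the circulation $\bf^*_t - \bf$ (since $\bf^*_t$ is the constrained minimizer of $Z_t$, so $\bg(\bf^*_t) \perp \ker \mB^\top$), this pins down $\bg(\bf)^\top(\bf^*_t - \bf) \le -c\sum_e \min(\bell(\bf)_e|\bf^*_{t,e}-\bf_e|, \bell(\bf)_e^2(\bf^*_{t,e}-\bf_e)^2)$. The main technical nuisance is handling the regime where $\bell(\bf)_e|\bf^*_{t,e} - \bf_e|$ is large: there the quadratic term in \cref{lem:GradStable} degrades to linear, which is exactly the $-\alpha/4 \cdot \|\mL(\bf)(\bf^*_t-\bf)\|_1$ term we want, so I would split the sum by whether $\bell(\bf)_e|\bf^*_{t,e}-\bf_e| \le 1$ or not, and in the small regime absorb the contribution into the $-m/4$ slack (using that $1^\top\by(\bf) - F^*$ is comfortably larger than $m/t$, there is a $\Theta(m)$ of room).

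Finally I would pass from $\bg(\bf)$ to $\wt\bg$: by Hölder, $|\wt\bg^\top(\bf^*_t-\bf) - \bg(\bf)^\top(\bf^*_t-\bf)| \le \|\mL(\bf)^{-1}(\wt\bg - \bg(\bf))\|_\infty \|\mL(\bf)(\bf^*_t-\bf)\|_1 \le \eps\|\mL(\bf)(\bf^*_t-\bf)\|_1$, and since $\eps \le \alpha/100$ this perturbs the leading $-\alpha/2$-type coefficient to $-\alpha/4$ while leaving the $-m$-scale additive term essentially intact (it only shrinks from $-6m$-ish to $-m/4$). Putting the two parts together: $\wt\bg^\top(\bf^*_t - \bf) \le -\alpha/2 \cdot \|\mL(\bf)(\bf^*_t-\bf)\|_1 - 6m + \eps\|\mL(\bf)(\bf^*_t-\bf)\|_1 + (\text{small-regime slack}) \le -\alpha/4\cdot\|\mL(\bf)(\bf^*_t-\bf)\|_1 - m/4$, which is the claim. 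I expect the main obstacle to be getting the constants to line up in the three-way accounting between the cost part's $-\Theta(m)$, the barrier part's two regimes, and the $\eps$-perturbation — in particular making sure the ``small-regime'' edges (where the barrier contributes only a linear-in-length amount rather than quadratic) are dominated by the $-6m$ from the cost term, which requires the hypothesis $1^\top\by(\bf) - F^* \ge 10m/t$ with a large enough constant (here $10$, matching the slack needed after \cref{lemma:dualgap}'s bound of $4m$).
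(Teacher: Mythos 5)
The central flaw is your decomposition of the gradient $\zeta_e'$ into a ``cost part'' $ty_e'$ and a ``barrier part'' $\bzeta_e'$, which does not go through with the tools at hand. First, you write $y_e(\bf_e) = \min_y \psi_e(\bf_e,y)$, but by its actual definition $y_e(\bf_e) = \argmin_y\left[ty + \exp(\alpha\psi_e(\bf_e,y))\right]$ is an \emph{argmin}, not a min; partial minimization yields a convex \emph{value} function, not a convex \emph{argmin} map, so convexity of $y_e$ is not established and the per-edge inequality $y_e'(\bf_e)(\bf^*_{t,e}-\bf_e) \le y_e(\bf^*_{t,e}) - y_e(\bf_e)$ driving your cost-part bound is unjustified. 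Second, applying \cref{lem:GradStable} to $\bzeta_e(x) = \exp(\alpha\psi_e(x,y_e(x)))$ alone is not licensed: the clean derivative identities of \cref{lemma:deriv} (namely $\zeta_e'' = \bv^\top\g^2\psi\,\bv$ and $\zeta_e''' = \g^3\psi[\bv,\bv,\bv]$, with $\psi(x,y)=ty+\exp(\alpha\psi_e(x,y))$) hold only for the \emph{combined} $\zeta_e$, because they rely on the stationarity $\psi_y(x,y_e(x))=0$ of the combined objective. Along the same curve $(x,y_e(x))$, the barrier $\exp(\alpha\psi_e)$ is not stationary in $y$, so $\bzeta_e''$ picks up extra terms, does not equal $\alpha\bell(\bf)_e^2$, and \cref{lemma:zetasc} does not transfer to it.

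The paper sidesteps both issues by never decomposing: $\alpha^{-1}\zeta_e/4$ is self-concordant as a whole (\cref{lemma:zetasc}) precisely because $ty$ is linear in $y$ and so drops from the Hessian and third derivative that \cref{lemma:deriv} evaluates, and $\bell(\bf)_e$ is \emph{defined} as that combined Hessian. It then does a \emph{global} case split on $\alpha\|\mL(\bf)(\bf_t^*-\bf)\|_1$ rather than your per-edge regime split. When it is $\le 10m$, plain convexity of $Z_t$ with \cref{lemma:dualgap} and the hypothesis $1^\top\by(\bf)-F^*\ge10m/t$ already give $\bg(\bf)^\top(\bf_t^*-\bf)\le-6m$, no self-concordance needed. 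When it is $\ge10m$, \cref{lem:GradStable} is applied to the whole $\hat\zeta_e = \alpha^{-1}\zeta_e/4$, the optimality of $\bf_t^*$ (i.e. $\bg(\bf_t^*)=\mB\bz$, orthogonal to circulations) kills the $\sum_e\hat\zeta_e'([\bf_t^*]_e)(\bf_t^*-\bf)_e$ term, and the resulting $-\alpha/2\cdot\|\mL(\bf)(\bf_t^*-\bf)\|_1 + 4\alpha m$ is absorbed by the large-length hypothesis. Your high-level instincts — self-concordance, a $\Theta(m)$ absolute slack, $\eps$-error via H\"older — are all correct; the fix is to carry out the regime split globally rather than trying to peel $\zeta_e$ apart into cost and barrier pieces.
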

\begin{proof}
We first handle the case where $\alpha\|\mL(\bf)(\bf_t^* - \bf)\|_1 \le 10m$. In this case, by the convexity of $Z_t(\bf)$, we get
\[ \bg(\bf)^\top(\bf_t^* - \bf)
\le Z_t(\bf_t^*) - Z_t(\bf)
=
\left(Z_t(\bf_t^*)  - F^{*} \right) - 
\left( Z_t(\bf) - F^{*}\right)
\]
which upon applying Lemma~\ref{lemma:dualgap} to the first term,
and the assumption of $1^\top \by(\bf) - F^* \ge 10m/t$ to the second gives
\[
\le 4m - t \cdot 10m/t \le -6m.
\]
Thus, we get
\begin{align*}
\bg(\bf)^\top(\bf_t^* - \bf)
&\le \bg(\bf)^\top(\bf_t^* - \bf) + \|\mL(\bf)^{-1}(\wt{\bg} - \bg(\bf))\|_\infty\|\mL(\bf)(\bf_t^* - \bf)\|_1 \\ &\le -6m + \eps \cdot 10m/\alpha \le -5m \le -\alpha/4 \cdot \|\mL(\bf)(\bf_t^* - \bf)\|_1 - m/4.
\end{align*}
Now, we handle the case where $\alpha\|\mL(\bf)(\bf_t^* - \bf)\|_1 \ge 10m$.
Consider the function
\[
\hat{\zeta}_e(\bf_e) = \alpha^{-1}\zeta_e(\bf_e)/4,
\]
which by \cref{lemma:zetasc} is self-concordant.
Invoking \cite[Theorem 4.1.7]{N04:book} (Lemma~\ref{lem:GradStable})
on this function, we get for all $e$,
\begin{align*} (\hat{\zeta}_e'([\bf_t^*]_e) - \hat{\zeta}_e'(\bf_e))([\bf_t^*]_e - \bf_e) &\ge \frac{\hat{\zeta}_e(\bf_e)''|[\bf_t^*]_e - \bf_e|^2}{1 + \sqrt{\hat{\zeta}_e(\bf_e)''}|[\bf_t^*]_e - \bf_e|}
\\ &\ge \sqrt{\hat{\zeta}_e(\bf_e)''}|[\bf_t^*]_e - \bf_e| - 1.
\end{align*}
Rearranging the above equation gives us
\begin{align*}
    \zeta_e'(\bf_e)([\bf_t^*]_e - \bf_e) &\le 4\alpha\left(\hat{\zeta}_e'([\bf_t^*]_e)([\bf_t^*]_e - \bf_e)-\sqrt{\hat{\zeta}_e''(\bf_e)}|[\bf_t^*]_e - \bf_e| + 1\right)\\
    &= 4\alpha\hat{\zeta}_e'([\bf_t^*]_e)([\bf_t^*]_e - \bf_e) - \alpha/2 \cdot \bell(\bf_e)|[\bf_t^*]_e - \bf_e| + 4\alpha.
\end{align*}
By the optimality of $\bf_t^*$ we know that $\bg(\bf_t^*) = \mB \bz$ for some $\bz$,
so the first term is $0$ because the difference
$\bf_t^* - \bf$ is a circulation.
Hence
\[
\bg(\bf)^\top(\bf_t^* - \bf_e)
\le
-\alpha/2 \cdot \|\mL(\bf)(\bf_t^* - \bf)\|_1 + 4\alpha m.
\]
which upon incorporating errors from the approximate gradient gives
\begin{align*} \wt{\bg}^\top(\bf_t^* - \bf) &\le -\alpha/2 \cdot \|\mL(\bf)(\bf_t^* - \bf)\|_1 + 4\alpha m + \|\mL(\bf)^{-1}(\bf_t^* - \bf)\|_1 \|\mL(\bf)(\wt{\bg} - \bg(\bf))\|_\infty \\ &\le (-\alpha/2+\eps) \cdot \|\mL(\bf)(\bf_t^* - \bf)\|_1 + 4\alpha m \le -\alpha/4 \cdot \|\mL(\bf)(\bf_t^* - \bf)\|_1 - m/4
\end{align*}
as long as $\alpha\|\mL(\bf)(\bf_t^* - \bf)\|_1 \ge 10m$, for the choice of $\alpha$.
\end{proof}

We move towards analyzing how a step $\bDelta$ decreases the potential $Z_t$. We start by showing that the gradients and lengths are stable in a Hessian ball.
\begin{lemma}[Stability bounds]
\label{lemma:generalstable}
For a flow $\bf$ and vector $\bf \in \R^E$ satisfying $\|\mL(\bf)(\bf - \bar{\bf})\|_\infty \le \eps$ for $\eps < 1/1000$, then $\bell(\bf) \approx_{1+5\eps} \bell(\bar{\bf})$, $\|\mL(\bf)^{-1}(\bg(\bf) - \bg(\bar{\bf}))\|_\infty \le \eps$.
\end{lemma}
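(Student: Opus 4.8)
The plan is to reduce everything to the one-dimensional self-concordance of $\alpha^{-1}\zeta_e/4$ established in \cref{lemma:zetasc}, and then invoke \cref{lem:HessianStable} and \cref{lem:GradStable} on each edge separately. Concretely, fix an edge $e$ and write $\hat\zeta_e(x) \defeq \alpha^{-1}\zeta_e(x)/4$, which is a self-concordant function of one variable by \cref{lemma:zetasc}. By definition \eqref{eq:generall} we have $\bell(\bf)_e = \sqrt{\alpha^{-1}\zeta_e''(\bf_e)} = 2\sqrt{\hat\zeta_e''(\bf_e)}$, and similarly $\bell(\bar\bf)_e = 2\sqrt{\hat\zeta_e''(\bar\bf_e)}$. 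So the first claim $\bell(\bf) \approx_{1+5\eps} \bell(\bar\bf)$ amounts to showing $\hat\zeta_e''(\bar\bf_e) \approx_{(1+5\eps)^2} \hat\zeta_e''(\bf_e)$. The hypothesis $\|\mL(\bf)(\bf-\bar\bf)\|_\infty \le \eps$ says exactly that $\bell(\bf)_e |\bf_e - \bar\bf_e| \le \eps$, i.e. $2\sqrt{\hat\zeta_e''(\bf_e)}\,|\bf_e - \bar\bf_e| \le \eps$, which is the statement that the displacement from $\bf_e$ to $\bar\bf_e$ has Hessian-norm (w.r.t.\ $\hat\zeta_e$) at most $\eps/2 < 1/2000 < 1$. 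Now \cref{lem:HessianStable} applied to $\hat\zeta_e$ gives
\[
(1-\eps/2)^2 \hat\zeta_e''(\bf_e) \preceq \hat\zeta_e''(\bar\bf_e) \preceq (1-\eps/2)^{-2}\hat\zeta_e''(\bf_e),
\]
and since $(1-\eps/2)^{-2} \le 1+2\eps$ and $(1-\eps/2)^2 \ge 1-\eps \ge (1+2\eps)^{-1}$ for $\eps$ small, we get $\hat\zeta_e''(\bar\bf_e) \approx_{1+2\eps} \hat\zeta_e''(\bf_e)$. Taking square roots (and absorbing the factor into $1+5\eps$ with room to spare, since $\sqrt{1+2\eps} \le 1+\eps$) yields $\bell(\bf)_e \approx_{1+5\eps} \bell(\bar\bf)_e$. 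Doing this for every edge $e$ gives $\bell(\bf) \approx_{1+5\eps} \bell(\bar\bf)$.

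For the gradient bound, recall from \eqref{eq:generalg} that $\bg(\bf)_e = \zeta_e'(\bf_e) = 4\alpha\,\hat\zeta_e'(\bf_e)$. Thus $\mL(\bf)^{-1}_{ee}(\bg(\bf)_e - \bg(\bar\bf)_e) = \bell(\bf)_e^{-1} \cdot 4\alpha(\hat\zeta_e'(\bf_e) - \hat\zeta_e'(\bar\bf_e))$. The natural tool here is \cref{lem:GradStable} — but that lemma lower-bounds $\langle \nabla f(\by) - \nabla f(\bx), \by - \bx\rangle$, whereas we want an upper bound on $|\hat\zeta_e'(\bf_e) - \hat\zeta_e'(\bar\bf_e)|$ by itself. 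The cleaner route for one variable is to integrate the Hessian: $\hat\zeta_e'(\bar\bf_e) - \hat\zeta_e'(\bf_e) = \int_{\bf_e}^{\bar\bf_e} \hat\zeta_e''(s)\,ds$, and on this short interval $\hat\zeta_e''(s) \approx_{1+2\eps} \hat\zeta_e''(\bf_e)$ by the same application of \cref{lem:HessianStable} as above (every intermediate point $s$ is at Hessian-distance $< \eps/2$ from $\bf_e$). Hence
\[
|\hat\zeta_e'(\bar\bf_e) - \hat\zeta_e'(\bf_e)| \le (1+2\eps)\,\hat\zeta_e''(\bf_e)\,|\bar\bf_e - \bf_e|.
\]
Multiplying through, $\bell(\bf)_e^{-1} \cdot 4\alpha \cdot (1+2\eps)\hat\zeta_e''(\bf_e)|\bar\bf_e - \bf_e| = \bell(\bf)_e^{-1}(1+2\eps)\,\bell(\bf)_e^2\,|\bar\bf_e - \bf_e| = (1+2\eps)\bell(\bf)_e|\bar\bf_e - \bf_e| \le (1+2\eps)\eps$, using $4\alpha\hat\zeta_e''(\bf_e) = \zeta_e''(\bf_e) = \bell(\bf)_e^2$. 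Wait — this gives $(1+2\eps)\eps$, slightly more than $\eps$; to land exactly at $\eps$ one should either be slightly more careful (the constant in the statement is surely meant to be read loosely, and $(1+2\eps)\eps \le 1.01\eps$ for $\eps < 1/1000$, which is fine for every downstream use), or tighten the Hessian comparison, or simply state the bound with the harmless factor. I would present it with the observation that the bound holds with $1.01\eps$ and hence certainly with the stated $\eps$ after shrinking the admissible range of $\eps$ by a universal constant, which does not affect any application.

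The main (and really only) obstacle is bookkeeping: making sure the one-dimensional reductions are stated correctly — in particular that $\bell(\bf)_e = 2\sqrt{\hat\zeta_e''(\bf_e)}$ and $\bg(\bf)_e = 4\alpha\hat\zeta_e'(\bf_e)$ follow from \eqref{eq:generall}, \eqref{eq:generalg} together with $\hat\zeta_e = \alpha^{-1}\zeta_e/4$ — and tracking that $\|\mL(\bf)(\bf-\bar\bf)\|_\infty \le \eps$ translates to Hessian-distance $\eps/2 < 1$ so that \cref{lem:HessianStable} is applicable with the claimed multiplicative error $(1\pm\eps/2)^{-2} \approx 1\pm 2\eps$. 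No genuinely hard analytic step is needed; everything is a per-coordinate application of the cited self-concordance lemmas, exactly parallel to \cref{lemma:stablebell,lemma:stablebg} in the linear-cost case, with $x^{-\alpha}$ replaced by $\exp(\alpha\psi_e)$ and the explicit barrier arithmetic replaced by the abstract self-concordance inequalities. I would close by noting that the constant $5$ in $1+5\eps$ is not tight and is chosen for convenience in later sections.
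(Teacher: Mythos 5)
Your approach is essentially the same as the paper's: both treat each coordinate separately, invoke the self-concordance of $\alpha^{-1}\zeta_e/4$ (\cref{lemma:zetasc}) together with Hessian stability (\cref{lem:HessianStable}) for the length bound, and for the gradient bound integrate the second derivative along the segment from $\bf_e$ to $\bar\bf_e$, controlling $\zeta_e''$ along the way by the already-established length stability. The paper does this directly in terms of $\zeta_e$ (defining $\phi(s)=\zeta_e'(\bf^{(s)})$ and bounding $|\phi'(s)|$) whereas you route through $\hat\zeta_e$, but these are identical up to scaling.

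The only genuine issue is an algebra slip in your final chain. You write $4\alpha\hat\zeta_e''(\bf_e)=\zeta_e''(\bf_e)=\bell(\bf)_e^2$, but from \eqref{eq:generall} we have $\bell(\bf)_e=\sqrt{\alpha^{-1}\zeta_e''(\bf_e)}$, i.e.\ $\zeta_e''(\bf_e)=\alpha\,\bell(\bf)_e^2$ --- you dropped a factor of $\alpha$. Restoring it, the last line reads
\[
\bell(\bf)_e^{-1}\cdot 4\alpha(1+2\eps)\hat\zeta_e''(\bf_e)\,|\bar\bf_e-\bf_e|
= (1+2\eps)\,\alpha\,\bell(\bf)_e\,|\bar\bf_e-\bf_e|
\le (1+2\eps)\,\alpha\,\eps \le \eps,
\]
which holds with a wide margin because $\alpha < 1/(1000\nu) \le 1/1000$. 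This is exactly what the paper exploits: its bound is $2\alpha\eps\le\eps$. So your concern about the residual $(1+2\eps)$ factor pushing the estimate above $\eps$ is an artifact of the dropped $\alpha$; no shrinking of the admissible $\eps$-range or loosening of the constant is needed.
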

\begin{proof}
The stability of lengths follows from self-concordance of $\alpha^{-1}\zeta_e/4$ shown in \cref{lemma:zetasc},
plus the Hessian stability of such functions shown
in \cite[Theorem 4.1.6]{N98:notes} (Lemma~\ref{lem:HessianStable}).
To analyze gradient stability, let $\phi(s) \defeq \zeta_e'(\bf^{(s)})$. Now
\begin{align*}
    |\phi(s)'| = |\bar{\bf}_e - \bf_e|\zeta_e''(\bf^{(s)}) = \alpha|\bar{\bf}_e - \bf_e|\bell(\bf^{(s)})^2 \le 2\alpha\bell(\bf_e)^2|\bar{\bf}_e - \bf_e| \le 2\alpha\eps\bell_e(\bf).
\end{align*}
Hence $|\bell_e(\bf)^{-1}(\phi(1) - \phi(0))| \le 2\alpha\eps \le \eps$ as desired.
\end{proof}
We can use \cref{lemma:generalstable} to show that a good quality circlation $\bDelta$ decreases the potential $Z_t$.
\begin{lemma}
\label{lemma:zdecrease}
Let $\wt{\mL} \approx_2 \mL(\bf)$ and $\|\mL(\bf)^{-1}(\wt{\bg} - \bg(\bf))\|_\infty \le \eps$ for $\eps < \kappa/100$. If circulation $\bDelta$ satisfies $\wt{\bg}^\top\bDelta/\|\wt{\mL}\bDelta\|_1 \le -\kappa$, then for $\eta > 0$ chosen so that $\eta\wt{\bg}^\top\bDelta = -\kappa^2/50$ satisfies
\[ Z_t(\bf + \eta\bDelta) \le Z_t(\bf) - \kappa^2/100. \]
\end{lemma}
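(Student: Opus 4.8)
\textbf{Proof plan for \cref{lemma:zdecrease}.}
The plan is to mirror the structure of the proof of \cref{lemma:phidecrease} from the linear setting, replacing the explicit Taylor expansions of $x^{-\alpha}$ and $\log x$ with the self-concordance machinery for the functions $\alpha^{-1}\zeta_e/4$ established in \cref{lemma:zetasc}. First I would bound the step size: the hypothesis $\eta \wt{\bg}^\top\bDelta = -\kappa^2/50$ together with $\wt{\bg}^\top\bDelta/\|\wt{\mL}\bDelta\|_1 \le -\kappa$ gives $\|\wt{\mL}(\eta\bDelta)\|_1 \le \kappa/50$, and hence $\|\mL(\bf)(\eta\bDelta)\|_1 \le \kappa/25$ using $\wt{\mL} \approx_2 \mL(\bf)$. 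In particular $\|\mL(\bf)(\eta\bDelta)\|_\infty \le \kappa/25$, which is well below $1/1000$, so $\barDelta \defeq \eta\bDelta$ stays inside the Hessian ball where \cref{lemma:generalstable} applies and the lengths $\bell(\bf + s\barDelta)$ are stable up to a $(1+5\eps)$-type factor along the whole segment $s \in [0,1]$.

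Next I would carry out a second-order Taylor expansion of $Z_t$ along $\barDelta$. Writing $Z_t(\bf + \barDelta) - Z_t(\bf) = \bg(\bf)^\top\barDelta + \frac12 \sum_e \zeta_e''(\xi_e)\barDelta_e^2$ for intermediate points $\xi_e$, I use $\zeta_e'' = \alpha\bell(\cdot)^2$ (from \eqref{eq:generall}) and the length-stability from \cref{lemma:generalstable} to replace $\zeta_e''(\xi_e)$ by $O(\alpha\bell(\bf)_e^2)$, so the quadratic term is $O(\alpha \|\mL(\bf)\barDelta\|_1 \cdot \|\mL(\bf)\barDelta\|_\infty) = O(\alpha \kappa^2/625) = O(\kappa^2)$ with a small absolute constant. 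For the first-order term, I convert $\bg(\bf)^\top\barDelta$ to $\wt{\bg}^\top\barDelta$ via Hölder: $|\bg(\bf)^\top\barDelta - \wt{\bg}^\top\barDelta| \le \|\mL(\bf)^{-1}(\wt{\bg}-\bg(\bf))\|_\infty \|\mL(\bf)\barDelta\|_1 \le \eps \cdot 2\|\wt{\mL}\barDelta\|_1 \le (2\eps/\kappa)|\wt{\bg}^\top\barDelta| \le |\wt{\bg}^\top\barDelta|/2$ since $\eps < \kappa/100$. Thus $\bg(\bf)^\top\barDelta \le \frac12 \wt{\bg}^\top\barDelta = -\kappa^2/100$. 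Combining, $Z_t(\bf + \barDelta) - Z_t(\bf) \le -\kappa^2/100 + O(\kappa^2)$; after tracking constants carefully (exactly as in \cref{lemma:phidecrease}, where the analogous bookkeeping yields $-\kappa^2/500$ from $-\kappa^2/100$) this gives the claimed $-\kappa^2/100$, possibly after relabeling the constant $50$ inside the choice of $\eta$ or absorbing slack.

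The main obstacle I expect is getting the constants to line up exactly as stated rather than merely up to constant factors. The quadratic remainder must be genuinely dominated by half the (negative) linear term, which requires the Hessian ball to be small enough that $\zeta_e''$ changes by at most a $(1+o(1))$ factor — not just a bounded factor — across the step, so I would need the precise form of \cref{lem:HessianStable} (the $(1-\|\cdot\|)^{-2}$ bound) applied with $\|\mL(\bf)\barDelta\|_\infty \le \kappa/25$ small, giving a factor like $1.2$, and then verify that $-\kappa^2/100 + (\text{something like } \alpha\kappa^2/200) \le -\kappa^2/100$ fails unless I either sharpen the linear term or weaken the target constant. The cleanest fix, which I would adopt, is to note $\alpha < 1/(1000\nu) \le 1/1000$ from \cref{assump}, so the quadratic term carries an extra factor of $\alpha$ and is truly negligible compared to $\kappa^2/100$; this is exactly why the lengths were defined with the $\sqrt{\alpha^{-1}\zeta_e''}$ normalization in \eqref{eq:generall}. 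With that observation the inequality $Z_t(\bf+\barDelta) \le Z_t(\bf) - \kappa^2/100$ follows, completing the proof.
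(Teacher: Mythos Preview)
Your plan is essentially the same as the paper's proof: set $\barDelta=\eta\bDelta$, bound $\|\mL(\bf)\barDelta\|_1\le\kappa/25$, Taylor expand $Z_t$ to second order, and control the Hessian term via length stability (\cref{lemma:generalstable}) together with $\zeta_e''=\alpha\bell(\bf)_e^2$.

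The one place where you make your life harder than necessary is the linear term. You pass through the inequality $\bg(\bf)^\top\barDelta\le\tfrac12\wt{\bg}^\top\barDelta=-\kappa^2/100$, which throws away a factor of $2$ and then forces you to argue that the quadratic remainder is $o(\kappa^2)$ via the extra $\alpha$ factor. The paper instead writes $\bg(\bf)^\top\barDelta=\wt{\bg}^\top\barDelta+(\bg(\bf)-\wt{\bg})^\top\barDelta$ and keeps the full $-\kappa^2/50$ from the first piece; the second piece is bounded by $\eps\|\mL(\bf)\barDelta\|_1\le(\kappa/100)(\kappa/25)$, and the quadratic term by $\|\mL(\bf)\barDelta\|_1^2\le(\kappa/25)^2$ (the $\alpha$ is simply dropped using $\alpha\le1$). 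Summing gives $-\kappa^2/50+\kappa^2/2500+\kappa^2/625\le-\kappa^2/100$ with room to spare, so no appeal to $\alpha\ll1$ and no relabeling of constants is needed.
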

\begin{proof}
Let $\bar{\bDelta} = \eta\bDelta$. Define $\bf^{(s)} = \bf + s\bar{\bDelta}$, and $\phi(s) = Z_t(\bf^{(s)})$. By Taylor's theorem, we know
\begin{align*} Z_t(\bf + \eta\bDelta) - Z_t(\bf) &= \phi(1) - \phi(0) \le \phi'(0) + \max_{s \in [0,1]} \phi''(s)^2/2 \\ 
&\overset{(i)}{\le} \eta \bg(\bf)^\top \bDelta + \bar{\bDelta}^\top \g^2 Z_t(\bf) \bar{\bDelta} \\
&\le \eta\bar{\bg}^\top \bDelta + (\bg(\bf)-\bar{\bg})^\top\bar{\bDelta} + \alpha\|\mL(\bf)\bar{\bDelta}\|_2^2 \\
&\le -\kappa^2/50 + \|\mL(\bf)^{-1}(\bg(\bf)-\bar{\bg})\|_\infty\|\mL(\bf)\bar{\bDelta}\|_1 + \|\mL(\bf)\bar{\bDelta}\|_1^2 \\ &\le -\kappa^2/50 + \eps\kappa/50 + (\kappa/50)^2 \le -\kappa^2/100,
\end{align*}
where $(i)$ follows from length stability in \cref{lemma:generalstable}.
\end{proof}

We can now state and show our main result on optimizing flows under general convex objectives.
\begin{theorem}[General convex flows]
\label{thm:general}
Let $G$ be a graph with $m$ edges, and let $\bd$ be a demand. Given convex functions $h_e: \R \to \R \cup \{+\infty\}$ and $\nu$-self-concordant barriers $\psi_e(f,y)$ on the domain $\{(f,y) : y \le h_e(f)\}$ satisfying the guarantees of \cref{assump}, there is an algorithm that runs in $m^{1+o(1)}$ time and outputs a flow $\bf$ with $\mB^\top\bf=\bd$ and for any fixed constant $C > 0$, \[ h(\bf) \le \min_{\mB^\top\bf^*=\bd} h(\bf^*) + \exp(-\log^Cm). \]
\end{theorem}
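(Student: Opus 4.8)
\textbf{Proof proposal for \cref{thm:general}.}
The plan is to mirror the structure of the min-cost flow algorithm (\cref{sec:combine}) with the potential $Z_t(\bf)$ from \eqref{eq:pot} playing the role of the Karmarkar-style potential $\Phi$, and the gradient/length pair $(\bg(\bf), \bell(\bf))$ from \eqref{eq:generalg}, \eqref{eq:generall} playing the role of \cref{def:lg}. We already have the three key analytic ingredients in place: \cref{lemma:qualggeneral} shows that the witness circulation $\bf_t^* - \bf$ has ratio $\wt{\bg}^\top(\bf_t^*-\bf)/(m + \|\mL(\bf)(\bf_t^*-\bf)\|_1) \le -\widetilde{\Omega}(\alpha)$ whenever $\vone^\top\by(\bf) - F^* \ge 10m/t$; \cref{lemma:zdecrease} shows that augmenting along any $\kappa$-approximate min-ratio circulation decreases $Z_t$ by $\Omega(\kappa^2)$; and \cref{lemma:generalstable} gives the stability of $\bg(\bf), \bell(\bf)$ in a Hessian ball, which is exactly what \cref{def:hiddenStableFlowChasing} and the \textsc{Detect} mechanism of \cref{algo:LCT} need. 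So the first step is to run, for a fixed path parameter $t$, the analogue of \textsc{MinCostFlow} (\cref{algo:mincost}): maintain approximate lengths/gradients, feed them to $\mathcal{D}^{(HSFC)}$ from \cref{thm:MMCHiddenStableFlow}, repeatedly query a $\kappa$-approximate min-ratio cycle with $\kappa = \exp(-O(\log^{7/8}m\log\log m))$, scale it so that $\langle \bg, \bDelta\rangle = -\Theta(\kappa^2\alpha^2)$, and augment. By \cref{lemma:qualggeneral,lemma:zdecrease}, each iteration decreases $Z_t$ by $\widetilde{\Omega}(\kappa^2\alpha^2)$ as long as $\vone^\top\by(\bf) - F^* \ge 10m/t$, and since $Z_t$ ranges over an interval of quasipolynomial width (by \cref{assump} item \ref{item:polybounded} and \cref{lemma:dualgap}, $Z_t(\bf) - F^*$ stays between $-O(m)$-ish and $\exp(\log^{O(1)}m)$), a phase terminates in $m^{1+o(1)}$ iterations, leaving a flow with $\vone^\top\by(\bf) \le F^* + 10m/t$.

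The second step is the outer doubling loop on $t$: start with $t_0 = \widetilde{\Theta}(1)$ large enough that the initial feasible point $\bf^{(0)}$ from \cref{assump} has bounded potential, and double $t$ each phase. When $\vone^\top\by(\bf^*) - F^* \le 10m/t$ and $t \ge 10m\exp(\log^C m)$ we obtain $h(\bf) \le \vone^\top\by(\bf) \le F^* + \exp(-\log^C m)$, so $O(\log(m\exp(\log^C m))/\log 2) = m^{o(1)}$ phases suffice (one must also check $\by_e \ge h_e(\bf_e)$ holds with only $\exp(-\log^C m)$ slack using the definition $\bzeta_e \ge 1$ and self-concordance, so that the final flow is genuinely near-feasible-in-cost; alternatively one argues directly that $\vone^\top\by(\bf)$ upper bounds $h(\bf)$ since $\by_e = y_e(\bf_e) \ge h_e(\bf_e) - O(1/(t\alpha))$ by optimality of $y_e(\bf_e)$ for $t y + \exp(\alpha\psi_e(\bf_e,y))$, and then push $t$ a constant factor larger). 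Between phases, $t$ changes by a factor of $2$, so $\bg(\bf)$ changes; but since the barrier term in $\bg$ is unchanged and only the $t\cdot\vone$ part scales, we can rebuild $\mathcal{D}^{(HSFC)}$ at the start of each phase (as \cref{algo:mincost} already does every $\eps m$ iterations) at cost $m^{1+o(1)}$, absorbed since there are only $m^{o(1)}$ phases. Also $\alpha = \widetilde{\Theta}(1/\log t)$ changes mildly between phases; this only affects the definitions of $\bell, \bg$ by a $1\pm o(1)$ reparametrization and is handled by the rebuild.

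The third step is the runtime accounting, identical in form to \cref{lemma:sizedt} and the proof of \cref{thm:main}: the scaling in the augmentation step forces $\|\mL(\bf)\bDelta\|_1 = O(\kappa\alpha)$ per iteration, so summed over $m^{1+o(1)}$ iterations per phase and $m^{o(1)}$ phases the total $\ell_1$ movement is $m^{1+o(1)}$, hence by the \textsc{Detect} guarantee (threshold $\eps$) the total number of gradient/length updates $Q = \sum_t |U^{(t)}|$ is $m^{1+o(1)}$. Then \cref{thm:MMCHiddenStableFlow} runs in $m^{o(1)}(m+Q) = m^{1+o(1)}$ total time, and the dynamic-tree bookkeeping across the $s = O(\log n)^d$ trees costs $m^{o(1)}$ per operation. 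We must verify the hidden stable-flow chasing property (\cref{def:hiddenStableFlowChasing}) for the update sequence, taking $\bc^{(t)} = \bf_t^* - \bf^{(t)}$ (or $\bf^* - \bf^{(t)}$) and $\bw^{(t)} = c + |\bell^{(t)}\circ\bc^{(t)}|$ for a constant $c$: items \ref{item:circulation},\ref{item:width} are immediate, item \ref{item:widthstable} follows from \textsc{Detect} plus $\bw^{(t)}_e \ge c$ exactly as in \cref{lemma:setuphidden}, and item \ref{item:quasipoly} follows from \cref{assump} item \ref{item:polybounded} which guarantees all Hessians, hence all lengths, stay quasipolynomially bounded as long as the potential stays $O(m)$-bounded, which it does by construction.

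The main obstacle I anticipate is the interface between the \emph{phase boundaries} and the hidden-stable-flow-chasing guarantee: within a phase the witness $\bc^{(t)} = \bf_t^* - \bf^{(t)}$ depends on $t$, and $\bf_t^*$ itself moves when $t$ doubles, so one must argue that the witness is only swapped out at phase boundaries (where we rebuild anyway) and is stable within a phase — this requires checking that $\bf_t^*$ is well-defined and that \cref{lemma:qualggeneral}'s bound on $\wt{\bg}^\top(\bf_t^* - \bf)$ is uniform over the phase, which it is since that lemma only uses convexity of $Z_t$ and \cref{lemma:dualgap}. A secondary subtlety is that $\alpha$ is not an absolute constant but $\widetilde{\Theta}(1/\log t)$, so the per-iteration progress $\widetilde{\Omega}(\kappa^2\alpha^2)$ degrades by polylog factors across phases; since $t \le m^{O(\log^{C-1}m)}$-ish, $\alpha^{-1} = m^{o(1)}$ and this is absorbed into the $m^{1+o(1)}$ bound, but it must be tracked carefully to confirm the iteration count per phase stays $m^{1+o(1)}$. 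Finally, one should double-check the near-feasibility claim: the algorithm produces $(\bf, \by)$ with $\by_e \ge h_e(\bf_e)$ only approximately, so the statement $h(\bf) \le F^* + \exp(-\log^C m)$ needs the elementary observation that $h_e(\bf_e) \le y_e(\bf_e) + O(\exp(-\log^{C'}m))$ from the optimality condition $t + \exp(\alpha\psi_e)\cdot\alpha\psi_{e,y} = 0$ and the lower bound $\bzeta_e \ge 1$, choosing the final $t$ large enough.
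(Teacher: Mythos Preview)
Your proposal is correct and follows essentially the same approach as the paper's proof: run the analogue of \cref{algo:mincost} for each fixed $t$ using \cref{lemma:qualggeneral,lemma:zdecrease,lemma:generalstable} in place of the min-cost-flow lemmas, verify hidden stable-flow chasing with $\bc^{(t)} = \bf_t^* - \bf^{(t)}$ and $\bw^{(t)} = 50 + |\bell^{(t)}\circ\bc^{(t)}|$ exactly as in \cref{lemma:setuphidden}, double $t$ at phase boundaries (rebuilding $\mathcal{D}^{(HSFC)}$), and stop after $\log^{O(1)} m$ phases.

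Two minor corrections. First, the initialization of $t$ should be quasipolynomially \emph{small} (the paper takes $t = m^{-\O(1)}$), not $\widetilde{\Theta}(1)$: to get $Z_t(\bf^{(0)}) = O(m)$ you need $t|\vone^\top\by^{(0)}| = O(m)$, and $|\by^{(0)}_e| \le m^K$ by \cref{assump}, so your ``large enough'' should read ``small enough''. The paper then bounds the total iteration count across phases by the clean observation that when the phase-stopping condition $\vone^\top\by(\bf) - F^* \le 20m/t$ triggers, doubling $t$ increases the (appropriately shifted) potential by at most $20m$; hence over all $\O(1)$ phases the total potential to burn is $\O(m)$ and the global iteration count stays $m^{1+o(1)}$. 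You allude to this but do not state it, and it is what makes the accounting work without separately bounding the starting potential of each phase. Second, your near-feasibility worry at the end is unnecessary: the barriers $\psi_e$ are on the epigraph $\{(f,y) : y \ge h_e(f)\}$ (the $\le$ in the theorem statement and in the definition of $\mathcal{X}_e$ is a typo, as the explicit barriers in \cref{thm:pnorm,thm:entropyot} confirm), so $y_e(\bf_e) > h_e(\bf_e)$ holds throughout and $h(\bf) < \vone^\top\by(\bf)$ directly.
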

\begin{proof}
Initialize $t = m^{-\O(1)}$, and set $\alpha = \wt{\Omega}(1)$ as in \cref{lemma:dualgap}. For this fixed value of $t$ run the analogue of \cref{algo:mincost}, and we repeat the same analysis as in \cref{sec:combine}. We will store the approximate values of $\bf, \by(\bf)$. Every $\wt{\Omega}(m)$ iterations, we recompute $\bf, \by(\bf)$ exactly and check whether $1^\top\by(\bf) - F^* \le 20m/t$. If so, we double $t$ and proceed to the next phase. We stop when $t = m^{\O(1)}$, so there are at most $\O(1)$ phases.

By \cref{lemma:qualggeneral,lemma:zdecrease}, the value of $Z_t$ decreases by $\kappa^{-2}\alpha^{-2} = m^{-o(1)}$ per iteration. When $t$ doubles, because we know that $1^\top\by(\bf) - F^* \le 20m/t$ by the stopping condition, $Z_{2t}(\bf) \le 20m + Z_t(\bf)$, i.e. the potential increases by at most $20m$. Hence over all $\O(1)$ phases, the total potential increase is $\O(m)$. So the algorithm runs in at most $m^{1+o(1)}$ iterations. The number of gradient/length changes is bounded by $m^{1+o(1)}$ if they are updated lazily by \cref{lemma:generalstable}.

Because $Z_t(\bf) \le \O(m)$ always, by the choice of $\alpha$ we know that $\psi_e(\bf_e,\by_e) \le \O(1)$ at all times. Thus, by item \ref{item:polybounded} of \cref{assump}, all lengths are quasipolynomially bounded during the algorithm. Additionally, \cref{lemma:qualggeneral} and an identical analysis to  \cref{lemma:setuphidden} for $\bc \defeq \bf_t^* - \bf$ and $\bw \defeq 50 + \|\bell(\bf) \circ \bc\|_1$ shows that the updates to $\bg,\bell$ satisfy the hidden stable-flow chasing property (\cref{def:hiddenStableFlowChasing}). Hence our min-ratio cycle data structure \cref{thm:MMCHiddenStableFlow} succeeds whp. in total time $m^{1+o(1)}$ as desired.
\end{proof}

\subsection{Applications: \texorpdfstring{$p$}{pnorm}-Norms, Entropy-Regularized Optimal Transport, and Matrix Scaling}
\label{sec:applicationsgeneral}
Using our main result \cref{thm:general} we can give algorithms for the problems of normed flows, isotonic regression, entropy-regularized optimal transport, and matrix scaling. We start by discussing $p$-norm flows. In this case, we allow the convex functions on our edges to be the sum of arbitrarily weighted power functions where the power is at most $\O(1)$.
\begin{theorem}
\label{thm:pnorm}
Consider a graph $G = (V, E)$ and demand $\bd$ whose entries are bounded by $\exp(\log^{O(1)}m)$, and convex functions $h_e$ which are the sum of $\O(1)$ $p$-norm terms, i.e. $h_e(x) = \sum_{i=1}^{c_e} w_i|x|^{p_i}$ for $c_e \le \O(1)$ and $p_i \le \O(1)$, and $w_i \in [0,\exp(\log^{O(1)}m)]$ for all $i \in [c_e]$. Let $h(\bf) \defeq \sum_{e \in E} h_e(\bf_e)$. Then in $m^{1+o(1)}$ time we can compute a flow $\bf$ satisfying $\mB^\top\bf=\bd$ and for any constant $C>0$
\[ h(\bf) \le \min_{\mB^\top\bf^*=\bd} h(\bf^*) + \exp(-\log^Cm). \]
\end{theorem}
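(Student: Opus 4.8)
The plan is to realize \cref{thm:pnorm} as a direct instantiation of \cref{thm:general}. The main task is to verify that the objective functions $h_e(x) = \sum_{i=1}^{c_e} w_i |x|^{p_i}$ (with $c_e, p_i = \O(1)$ and weights $w_i$ quasipolynomially bounded) admit $\nu$-self-concordant barriers $\psi_e(f, y)$ on the epigraph $\mathcal{X}_e = \{(f,y) : y \le h_e(f)\}$ for $\nu = \O(1)$, together with the quantitative conditions of \cref{assump}. Given such barriers, \cref{thm:general} immediately yields a flow $\bf$ with $\mB^\top \bf = \bd$ and $h(\bf) \le \min_{\mB^\top\bf^*=\bd} h(\bf^*) + \exp(-\log^C m)$ in $m^{1+o(1)}$ time, which is exactly the claim. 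So the proof reduces to constructing the barriers and checking \cref{assump}.

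\textbf{Constructing the barriers.} First I would handle a single power term $w|x|^p$. For $p \ge 1$, the set $\{(x,z) : z \ge w|x|^p\}$ is a power cone, for which explicit $O(1)$-self-concordant barriers are classical (e.g. \cite{N98:notes,N04:book}); one convenient choice is a barrier of the form $-\log(z^{2/p} - x^2) - \log z$ (up to constants depending on $p$) with self-concordance parameter $O(1)$ since $p = \O(1)$. Since $h_e$ is a sum of $c_e = \O(1)$ such terms, I would introduce auxiliary epigraph variables $z_{e,1},\dots,z_{e,c_e}$ with $z_{e,i} \ge w_i|f|^{p_i}$ and $y \ge \sum_i z_{e,i}$; taking the sum of the power-cone barriers plus a log barrier for the linear constraint $y \ge \sum_i z_{e,i}$ gives an $O(c_e)$-self-concordant barrier on the lifted domain, and projecting out the $z_{e,i}$ (as in the $\zeta$-reduction of \cref{lemma:deriv}) preserves self-concordance with parameter $\nu = O(c_e) = \O(1)$. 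Strictly speaking the lifted formulation introduces a constant number of extra coordinates per edge, which only changes $m$ by an $\O(1)$ factor and so is harmless for the $m^{1+o(1)}$ bound; alternatively one can work directly with the epigraph of the full $h_e$ using a composition rule for self-concordant barriers of sums of convex functions. The parameters $\alpha, \eps, \kappa$ can be taken $\le 1/(1000\nu)$ as required since $\nu = \O(1)$.

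\textbf{Checking \cref{assump}.} The gradient/Hessian access in $\O(1)$ time is immediate from the closed forms of the power-cone barriers. Polynomial boundedness of demands is assumed; polynomial boundedness of flows follows since, after the standard capacity-scaling / bounding-box reduction (adding an explicit box $|f_e| \le m^K$ via auxiliary high-cost edges as in \cref{sec:ipm}), the optimal flow values are polynomially bounded, and $|h_e(x)| \le O(m^K + |x|^K)$ for $|x| \le m^{O(1)}$ because $p_i, w_i$ are quasipolynomially bounded. Shifting each barrier so its infimum over the box is $0$ is allowed since it does not affect self-concordance and makes $\bzeta_e \ge 1$. For the initial feasible point: route any flow satisfying $\mB^\top \bf^{(0)} = \bd$ inside the box (again via the standard reduction), set $\by^{(0)}_e$ slightly above $h_e(\bf^{(0)}_e)$, and note the Hessian of the power-cone barrier at an interior point bounded away from the boundary is $m^{\pm O(1)} \mI$, with barrier value $\le K = \O(1)$. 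Finally, the crucial item~\ref{item:polybounded}: whenever $\psi_e(f,y) \le \O(1)$, the point $(f,y)$ is at ``distance'' $\O(1)$ from the analytic center in the local Hessian norm, and for the explicit power-cone barriers this forces all coordinates to stay quasipolynomially bounded away from the boundary, hence $\g^2\psi_e \preceq \exp(\log^{O(1)}m)\mI$; this is the place where one uses the explicit barrier form rather than a black-box barrier.

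\textbf{Main obstacle.} I expect the main technical obstacle to be item~\ref{item:polybounded} of \cref{assump} — ensuring that a bounded barrier value implies quasipolynomially bounded Hessian — since this is the one condition that genuinely depends on the specific barrier chosen and is not automatic from abstract self-concordance. The resolution is to write down the barrier explicitly for each power term and bound $\g^2\psi_e$ directly in terms of the distance to the boundary, which is controlled by the barrier value; because $p_i = \O(1)$ all the relevant exponents are $\O(1)$ and the bound goes through. The rest of the proof is bookkeeping: assemble the per-term barriers, invoke the composition/sum rules for self-concordant barriers to get $\nu = \O(1)$, verify the remaining (easy) items of \cref{assump}, and apply \cref{thm:general}.
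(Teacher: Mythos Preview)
Your overall plan is right and matches the paper: instantiate \cref{thm:general}, use the explicit power-cone barrier $\psi(x,y)=-2\log y-\log(y^{2/p}-x^2)$ for each term, and check \cref{assump} with the emphasis on item~\ref{item:polybounded} via a direct Hessian computation. That part is essentially identical to the paper's proof.

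Where you diverge is in handling a sum $h_e(x)=\sum_i w_i|x|^{p_i}$ of several power terms. You propose introducing auxiliary epigraph variables $z_{e,i}$ and then either (a) projecting them out to recover a 2D barrier, or (b) working in the lifted space. The paper does something simpler and cleaner: it replaces the edge $e$ by a path of $c_e$ edges in series, so the same flow value passes through each, and assigns the single term $w_i|x|^{p_i}$ to the $i$th edge. This keeps every barrier explicitly 2D and explicit, so the gradient/Hessian access and the quasipolynomial Hessian bound (item~\ref{item:polybounded}) follow from the single closed-form computation already done. Your route is workable in principle, but note that option (b) does not literally fit the framework of \cref{sec:general}, which is set up for 2D sets $\mathcal{X}_e\subset\R^2$ per edge (``extra coordinates per edge'' is not the same as ``extra edges''), and option (a) leaves you with an implicitly defined barrier whose Hessian bound you would still have to verify through the formulas of \cref{lemma:deriv}. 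The edge-splitting trick sidesteps all of this.
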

\begin{proof}
By splitting up an edge $e$ into $c_e$ edges in a path we can assume that each $h_e(x) = w|x|^p$ for some $p \le \O(1)$. It is known that the function $\psi(x,y) \defeq -2\log y-\log(y^{2/p}-x^2)$ is $4$-self-concordant for the region $\{(x,y) : y \ge |x|^p\}$ \cite[Example 9.2.1]{Nem04}.
For this barrier, all items in \cref{assump} hold by observation, except those involving $\g^2\psi(x,y)$ which we now calculate.
\[ \g^2\psi(x,y) = \begin{bmatrix} \frac{4x^2}{(y^{2/p}-x^2)^2} + \frac{2}{y^{2/p} - x^2} & -\frac{4xy^{2/p-1}}{p(y^{2/p}-x^2)^2} \\ -\frac{4xy^{2/p-1}}{p(y^{2/p}-x^2)^2} & \frac{2}{y^2} + \frac{4y^{4/p-2}}{p^2(y^{2/p}-x^2)^2} + \frac{(2p-4)y^{2/p-2}}{p^2(y^{2/p}-x^2)} \\ \end{bmatrix}. \]
Clearly, if $-\log y, -\log(y^{2/p}-x^2) \le \O(1)$, and $|y|, |x| \le m^{\O(1)}$, then all terms of $\g^2\psi(x,y)$ are bounded by $\exp(\log^{O(1)}m)$ as desired, which verifies item \ref{item:polybounded} of \cref{assump}. Thus all assumptions are satisfied, so the result follows by \cref{thm:general}.
\end{proof}
The same barriers allow us to solve the problem of $\ell_p$ isotonic regression \cite{KRS15}. Given a directed acyclic graph $G = (V, E)$ and a vector $\by \in \R^V$, the $\ell_p$ isotonic regression problems asks to return a vector $\bx$ that satisfies $\bx_u \le \bx_v$ for all directed arcs $(u, v) \in E$, and minimizes $\|\mW(\bx - \by)\|_p$ for a weight vector $\mW \ge 0$. Linear algebraically, this is $\min_{\bx \in \R^V, \mB\bx \ge \vec{0}} \|\mW(\bx - \by)\|_p$. We show that the dual of this problem is a flow problem. Let $q$ be the dual norm of $p$. Then
\begin{align*}
    \min_{\bx \in \R^V, \mB\bx \ge \vec{0}} \|\mW(\bx - \by)\|_p &= \min_{\bx \in \R^V} \max_{\bf \ge \vec{0}, \|\bz\|_q \le 1} \bz^\top\mW(\bx - \by) - \bf^\top\mB\bx \\
    &= \max_{\bf \ge \vec{0}, \|\bz\|_q \le 1} \min_{\bx \in \R^V} \bz^\top\mW(\bx - \by) - \bf^\top\mB\bx \\ &= \max_{\bf \ge \vec{0}, \|\mW^{-1}\mB^\top\bf\|_q \le 1} -\bf^\top(\mB\by).
\end{align*}
Let $\bc = \mB\by$. By rescaling, the objective becomes computing \[ \min_{\bf \ge \vec{0}} \bc^\top\bf + \|\mW^{-1}\mB^\top\bf\|_q^q. \]
Given a high-accuracy solution to this objective, we can extract the desired original potentials $\bx$ by taking a gradient of the objective. To turn this objective into the $q$-norm of a flow, add a few vertex $v^*$ to the graph $G$, and an undirected edge between $(v, v^*)$ for all $v \in V$. Assign this edge the convex function $\bw_v|x|^q$, and for every other original edge $e \in E$, assign it the convex function $\bc_e\bf_e$, and restrict $\bf \ge 0$ (eg. using a logarithmic barrier). Finally, force $\bf$ to to have $0$ demand on the graph $G$ with the extra vertex $v^*$. This is now a clearly equivalent flow problem. As we have already described the self-concordant barriers for linear objectives, $\bf_e \ge 0$, and $q$-norms in the proof of \cref{thm:pnorm}, we get:
\begin{theorem}
\label{thm:isotonic}
Given a directed acyclic graph $G$, vector $\by \in \R^V$, and $p \in [1,\infty]$, we can compute in $m^{1+o(1)}$ time vertex potentials $\bx$ with $\mB\bx \ge 0$ and for any constant $C>0$
\[ \|\bx - \by\|_p \le \min_{\mB\bx^*\ge0} \|\bx^* - \by\|_p + \exp(-\log^C m). \]
\end{theorem}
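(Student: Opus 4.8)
The plan is to reduce $\ell_p$ isotonic regression to an instance of general convex flow and invoke \cref{thm:general}. First I would recall the Lagrangian duality computation sketched in the excerpt just above the statement: for $\bx \in \R^V$ subject to $\mB\bx \ge \vec 0$, minimizing $\|\mW(\bx - \by)\|_p$ has dual $\max_{\bf \ge \vec 0,\ \|\mW^{-1}\mB^\top \bf\|_q \le 1} -\bf^\top(\mB\by)$, where $q$ is the dual norm of $p$. Rescaling the norm constraint into the objective (using that $\min_t (at + t^q)$ is, up to constants, comparable to $-a^{p}$ for the relevant range) turns this into computing $\min_{\bf \ge \vec 0} \bc^\top \bf + \|\mW^{-1}\mB^\top \bf\|_q^q$ where $\bc = \mB\by$. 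I would then make this a genuine single-commodity convex flow problem on an augmented graph: add one new vertex $v^*$, an undirected edge $(v,v^*)$ for every $v \in V$ carrying cost function $\bw_v|x|^q$, keep every original arc $e$ with cost function $\bc_e \bf_e$ together with the constraint $\bf_e \ge 0$ (encoded via a logarithmic-barrier-style convex function that is $+\infty$ on $x<0$), and impose zero net demand at every vertex of the augmented graph. The flux on the edges $(v,v^*)$ exactly realizes the vector $\mW^{-1}\mB^\top\bf$ coordinatewise, so the augmented flow objective equals the dual objective.

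Next I would verify that the edge cost functions $h_e$ meet the hypotheses needed to apply \cref{thm:general}, namely \cref{assump}. For the terms $\bw_v|x|^q$ with $q = O(1)$, the $4$-self-concordant barrier $\psi(x,y) = -2\log y - \log(y^{2/q} - x^2)$ on $\{(x,y): y \ge |x|^q\}$ from \cite[Example 9.2.1]{Nem04} is used; for the linear terms $\bc_e \bf_e$ on $\bf_e \ge 0$ one uses the barrier $-\log \bf_e$ (a $1$-self-concordant barrier for $\R_{>0}$, combined with the trivial barrier for $y$ bounding the linear function value). All the boundedness conditions of \cref{assump} — polynomially bounded demands, initial feasible point, shifted barriers, and crucially item~\ref{item:polybounded} (Hessians quasipolynomially bounded when the barrier value is $O(1)$) — were already checked for exactly these barriers in the proof of \cref{thm:pnorm}, so I would simply cite that verification. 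The feasibility of an initial point is easy: route nothing on the original arcs, and on the $(v,v^*)$ edges route a flow consistent with, e.g., $\bx = \bar{\by}$ shifted to be monotone, giving an interior point with barrier value $O(1)$.

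Then I would apply \cref{thm:general} to obtain, in $m^{1+o(1)}$ time, a flow $\bf$ on the augmented graph that routes the demand and satisfies (augmented cost)$(\bf) \le \mathrm{OPT} + \exp(-\log^{C'}m)$ for any constant $C'$. Finally, to recover vertex potentials $\bx$ with $\mB\bx \ge 0$, I would extract them from the approximately optimal primal-dual pair: a near-optimal solution to $\min_{\bf\ge 0} \bc^\top\bf + \|\mW^{-1}\mB^\top\bf\|_q^q$ determines, via the gradient / KKT stationarity, the corresponding $\bx$ (the dual multipliers of the flow-conservation constraints); the monotonicity $\mB\bx \ge 0$ follows from the nonnegativity constraints $\bf \ge 0$ being the dual of those inequalities. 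A standard rounding/projection step onto the polyhedron $\{\mB\bx \ge 0\}$ (which costs $m^{1+o(1)}$ and changes the objective by at most $\exp(-\log^{C}m)$ by strong convexity / Lipschitz arguments, possibly after a mild rescaling of $C'$ relative to $C$) yields an exactly feasible $\bx$ with $\|\bx - \by\|_p \le \min_{\mB\bx^*\ge 0}\|\bx^*-\by\|_p + \exp(-\log^C m)$.

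The main obstacle I anticipate is not the flow machinery — that is a black-box call to \cref{thm:general} — but the dual-to-primal recovery and the handling of the $p=1$ and $p=\infty$ endpoints. When $p=\infty$ we have $q=1$ and the $q$-norm barrier degenerates (the set $\{y \ge |x|\}$ is polyhedral), so one must instead use a linear-program encoding with logarithmic barriers, and symmetrically when $p=1$; I would treat these as separate but routine cases. The more delicate quantitative point is ensuring that a solution of the \emph{flow} objective accurate to $\exp(-\log^{C'}m)$ translates into vertex potentials accurate to $\exp(-\log^C m)$ after projection onto $\{\mB\bx\ge 0\}$ — this needs a bound on how sensitively $\bx$ depends on $\bf$ near optimum (controlled via the quasipolynomial bounds on the Hessians from \cref{assump}, item~\ref{item:polybounded}, and the quasipolynomial range of all quantities), plus a bound on the Lipschitz constant of the projection. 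Choosing $C'$ a suitable constant multiple of $C$ absorbs these losses, so the final accuracy statement holds for every constant $C$ as claimed.
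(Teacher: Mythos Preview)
Your proposal is correct and follows essentially the same approach as the paper: dualize to obtain $\min_{\bf \ge \vec 0} \bc^\top\bf + \|\mW^{-1}\mB^\top\bf\|_q^q$, realize this as a convex flow on the augmented graph with the extra vertex $v^*$, reuse the self-concordant barriers already verified in the proof of \cref{thm:pnorm}, and recover $\bx$ from the gradient. The paper's treatment is terser---it simply says ``we can extract the desired original potentials $\bx$ by taking a gradient of the objective'' and does not separately discuss the $p\in\{1,\infty\}$ endpoints or the accuracy loss in the dual-to-primal step---so your added discussion of those points is more detail than the paper provides, not a departure from it.
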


Next we discuss the pair of problems of entropy-regularized optimal transport/min-cost flow and matrix scaling, which are duals. The former problem is
\[ \min_{\mB^\top\bf=\bd} \sum_{e \in E} \bc_e\bf_e + \bf_e \log\bf_e. \] The matrix scaling problem asks to given a matrix $\mA \in \R^{n \times n}_{\ge0}$ with non-negative entries to compute positive diagonal matrices $\mX, \mY$ so that $\mX\mA\mY$ is doubly stochastic, i.e. all row and column sums are exactly $1$. By \cite[Theorem 4.6]{CMTV17}, it suffices to minimize the objective
\[ \sum_{(i,j): \mA_{ij} \neq 0} \mA_{ij}\exp(\bx_i - \by_j) - \sum_{i=1}^n \bx_i - \sum_{i=1}^n \by_i \]
to high accuracy.
Consider turning the pair $(i,j)$ with $\mA_{ij} \neq 0$ into an edge $e = (i,j+n)$ in a graph $G$ with $2n$ vertices with weight $\bw_e \defeq \mA_{ij}$. Let $\bz = \begin{bmatrix} \bx \\ \by \end{bmatrix}$ be the concatenation of the $x, y$ vectors. Then the above problem becomes $\sum_{e \in E(G)} \bw_e\exp(\bz_i - \bz_j) - \sum_{i=1}^{2n} \bz_i.$ The optimality conditions for this objective are $\mB^\top\mW\exp(\mB \bz) = \bd$, where $\bd$ is $+1$ on the vertices $\{1, \dots, n\}$ and $-1$ on $\{n+1, \dots, 2n\}$.
Rearranging this gives us $\mB^\top\bf=\bd$ for $\bf = \mW\exp(\mB\bz)$, or $\mB\bz = \log(\mW^{-1}\bf)$. This is the exact optimality condition for the flow problem
\[ \min_{\mB^\top\bf=\bd} \sum_{e \in E} (-1-\log \bw_e)\bf_e + \bf_e \log(\bf_e), \] which is exactly entropy-regularized optimal transport for $\bc_e \defeq (-1-\log \bw_e)$.
If the entries of $\mA$ are polynomially lower and upper bounded, then given an (almost) optimal flow $\bf$, we know by KKT conditions that $\bf = \mW\exp(\mB\bz)$ for some dual variable $\bz$ which we can then efficiently recover. So it suffices to give high-accuracy algorithms for the entropy regularized OT problem.
\begin{theorem}
\label{thm:entropyot}
Given a graph $G = (V,E)$, demands $\bd$, costs $\bc \in \R^E$, and weights $\bw_e \in \R_{\ge0}$, all bounded by $\exp(\log^{O(1)}m)$, let $h(\bf) \defeq \sum_{e \in E} \bc_e\bf_e + \bw_e\bf_e\log\bf_e$. Then in $m^{1+o(1)}$ time we can find a flow $\bf$ with $\mB^\top\bf=\bd$ and for any constant $C>0$ \[ h(\bf) \le \min_{\mB^\top\bf^*=\bd} h(\bf^*) + \exp(-\log^Cm). \]
\end{theorem}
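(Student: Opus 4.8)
The plan is to derive Theorem~\ref{thm:entropyot} as a direct corollary of Theorem~\ref{thm:general} by exhibiting, for each edge $e$, an $O(1)$-self-concordant barrier for the epigraph region $\mathcal{X}_e = \{(f,y) : y \le h_e(f)\}$ where $h_e(f) = \bc_e f + \bw_e f\log f$, and then checking that all six items of \cref{assump} hold. First I would handle the degenerate case $\bw_e = 0$ separately: there $h_e$ is linear, the region is a halfspace, and we use the standard logarithmic barrier $-\log(\bc_e f - y)$ together with a barrier $-\log f$ enforcing $f \ge 0$ (the entropy term implicitly requires $f\ge 0$ anyway); these are $1$-self-concordant and composition/addition preserves self-concordance with additive $\nu$. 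For $\bw_e > 0$, after rescaling we may assume $\bw_e = 1$, and the region $\{(f,y): y \le \bc_e f + f\log f,\ f > 0\}$ is (up to the affine shift by $\bc_e f$, which preserves self-concordance) the hypograph of the concave function $-f\log f$; I would cite the known fact that the set $\{(f,y) : y \le f\log(1/f)\}$ (equivalently the relative entropy / exponential cone after a linear change of variables) admits an $O(1)$-self-concordant barrier --- the relevant barrier is $-\log(\bc_e f + f\log f - y) - \log f$ or the standard $3$-self-concordant barrier for the exponential cone; this is classical (see e.g. \cite{N04:book,Nem04}, analogous to the $p$-norm cone barrier used in \cref{thm:pnorm}).

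Second, I would verify \cref{assump}. Items 1--5 are routine given polynomially bounded $\bc,\bw,\bd$: gradients and Hessians of these explicit barriers are computable in $O(1)$ time; all data is $\mathrm{poly}(m)$-bounded so $|h_e(x)| \le O(m^K + |x|^K)$; the barriers can be shifted to be nonnegative on the polynomial-size domain; a feasible starting flow with bounded barrier value exists by the same construction as in \cref{lemma:initialpoint} (route $\bd$ through a spanning tree / auxiliary gadget and pad $\by^{(0)}$ below $h_e$); and $\alpha,\eps,\kappa$ are taken small enough. The only item requiring a short computation is item~\ref{item:polybounded}: I would write down $\g^2\psi_e$ explicitly (as done for the $p$-norm barrier in the proof of \cref{thm:pnorm}) and observe that when $\psi_e(\bf_e,\by_e) = O(1)$ --- which forces $\bc_e\bf_e + \bw_e\bf_e\log\bf_e - \by_e \ge \exp(-\log^{O(1)}m)$ and $\bf_e \ge \exp(-\log^{O(1)}m)$ --- every entry of the Hessian is a ratio of $\mathrm{poly}(m)$-bounded quantities over quasipolynomially-bounded-below denominators, hence is at most $\exp(\log^{O(1)}m)$. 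Applying \cref{thm:general} then yields a flow $\bf$ with $\mB^\top\bf=\bd$ and $h(\bf) \le \min_{\mB^\top\bf^*=\bd} h(\bf^*) + \exp(-\log^C m)$ in $m^{1+o(1)}$ time.

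Finally, for the splitting step: $h_e$ here is a single entropy-plus-linear term per edge (no sum of $O(1)$ pieces as in \cref{thm:pnorm}), so no edge subdivision is needed; but if one prefers, the linear part $\bc_e\bf_e$ and the entropy part $\bf_e\log\bf_e$ can be put on a two-edge path, each with its own barrier, which is cleaner and still $O(1)$ edges. The main obstacle --- really the only nontrivial point --- is pinning down a clean reference or self-contained derivation that the entropy epigraph admits an $O(1)$-self-concordant barrier \emph{and} that this barrier satisfies the quasipolynomial Hessian bound of item~\ref{item:polybounded}; everything else is bookkeeping identical in spirit to the $p$-norm and isotonic regression arguments already given. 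I would resolve this by using the exponential-cone barrier $\psi(f,y) = -\log(\log f + \bc_e - y/f) - \log f$ after the substitution making $y/f$ the epigraph variable, or equivalently citing \cite[Chapter~9]{N04:book}, and then the Hessian bound follows by the same elementary estimate sketched above.
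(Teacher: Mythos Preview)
Your proposal is correct and takes essentially the same approach as the paper: split each edge in two (the paper does this; you list it as the cleaner option), use $-\log(y-\bc_e x)$ for the linear piece and the $2$-self-concordant barrier $\psi(x,y)=-\log x - \log(y-x\log x)$ from \cite[Example~9.2.4]{Nem04} for the entropy piece, then write out the Hessian explicitly to verify item~\ref{item:polybounded} of \cref{assump} and invoke \cref{thm:general}. One cosmetic point: the relevant region is the epigraph $\{y>h_e(f)\}$, so the barrier argument should be $y-h_e(f)$ rather than $h_e(f)-y$ as you wrote.
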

\begin{proof}
By splitting the edge $e$ into two edges, we can handle the terms $\bc_e\bf_e$ and $\bw_e\bf\log\bf_e$ separately. For the $\bc_e\bf_e$ term, we can handle it using the self-concordant barrier $\psi(x,y) \defeq -\log(y-\bc_e x)$. For the term $\bw_e\bf\log\bf_e$, we use the $2$-self-concordant barrier $\psi(x,y) \defeq -\log x - \log(y - x \log x)$ \cite[Example 9.2.4]{Nem04}. As in the proof of \cref{thm:pnorm}, all items of \cref{assump} hold directly, except that we must compute $\g^2\psi(x,y)$.
\[ \g^2\psi(x,y) = \begin{bmatrix}\frac{1}{x^2} + \frac{(1 + \log x)^2}{(y - x \log x)^2} + \frac{1}{x(y-x\log x)} & -\frac{1+\log x}{(y-x\log x)^2} \\ -\frac{1+\log x}{(y-x\log x)^2} & \frac{1}{(y-x \log x)^2} \end{bmatrix}. \]
Indeed, if $-\log x, -\log(y - x \log x) \le \O(1)$, and $|x|, |y| \le \exp(\log^{O(1)}m)$, then all terms in the Hessian are quasipolynomially bounded. This verifies item \ref{item:polybounded} of \cref{assump}, so the result follows by \cref{thm:general}.
\end{proof}

\begin{corollary}[Matrix scaling]
\label{cor:matrixscaling}
Given a matrix $\mA \in \R^{n\times n}_{\ge0}$ whose nonzero entries are in $[n^{-O(1)}, n^{O(1)}]$, we can find in time $\mathsf{nnz}(\mA)^{1+o(1)}$ positive diagonal matrices $\mX, \mY$ such that all row and column sums of $\mX\mA\mY$ are within $\exp(-\log^Cn)$ of $1$ for any constant $C>0$.
\end{corollary}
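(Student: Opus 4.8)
The plan is to reduce \Cref{cor:matrixscaling} to \Cref{thm:entropyot} exactly along the lines sketched in the paragraph preceding the corollary statement, filling in the details of the reduction and the recovery of the scaling matrices. First I would set up the bipartite graph $G = (V, E)$ with $|V| = 2n$: one vertex for each row index $i \in [n]$ and one vertex for each column index $j \in [n]$ (relabeled $j+n$), and an edge $e = (i, j+n)$ whenever $\mA_{ij} \neq 0$, so $m = |E| = \mathsf{nnz}(\mA)$. Set the demand $\bd$ to be $+1$ on $\{1,\dots,n\}$ and $-1$ on $\{n+1,\dots,2n\}$ (this is feasible as long as every row and column of $\mA$ has a nonzero entry; if some row/column is entirely zero, no scaling exists and we can detect this in $O(\mathsf{nnz}(\mA))$ time). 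Following \cite[Theorem 4.6]{CMTV17}, the matrix scaling problem reduces to minimizing $\sum_{e \in E} \bw_e \exp(\bz_i - \bz_j) - \sum_{v} \bz_v$ over $\bz \in \R^{V}$ where $\bw_e = \mA_{ij}$, and by the KKT/duality computation in the excerpt this is the dual of the entropy-regularized optimal transport instance with costs $\bc_e \defeq -1 - \log \bw_e$ and weights $\bw_e \defeq 1$ (or one may keep general $\bw_e$; either choice works).

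Next I would invoke \Cref{thm:entropyot} on this instance to obtain, in $m^{1+o(1)} = \mathsf{nnz}(\mA)^{1+o(1)}$ time, a flow $\bf$ with $\mB^\top \bf = \bd$ and $h(\bf) \le \min_{\mB^\top \bf^* = \bd} h(\bf^*) + \exp(-\log^{C'} n)$ for any desired constant $C'$, where $h(\bf) = \sum_e \bc_e \bf_e + \bf_e \log \bf_e$. The hypotheses of \Cref{thm:entropyot} require $\bd$, $\bc$, $\bw$ to be quasipolynomially bounded: $\|\bd\|_\infty = 1$, and since the nonzero entries of $\mA$ lie in $[n^{-O(1)}, n^{O(1)}]$ we get $|\bc_e| = |1 + \log \bw_e| = O(\log n)$, so all assumptions hold. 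The remaining step is to recover $\mX, \mY$ from $\bf$. By the optimality conditions for the entropy-OT objective, the exact optimum $\bf^*$ satisfies $\bf^*_e = \bw_e \exp(\bz^*_i - \bz^*_j)$ for a dual potential $\bz^*$; since our $\bf$ is only approximately optimal I would instead directly define $\bz$ by taking any spanning tree of $G$, fixing $\bz$ at a root, and propagating $\bz_i - \bz_j = \log(\bf_e / \bw_e)$ along tree edges, then setting $\bx_i = \bz_i$ for $i \le n$ and $\by_j = -\bz_{j+n}$, and finally $\mX = \diag(\exp(\bx))$, $\mY = \diag(\exp(\by))$. A short calculation bounds the row/column sum errors of $\mX \mA \mY$ by the gradient of the objective at $\bf$, which is $\exp(-\log^{C} n)$-small for suitable $C'$ chosen as a function of $C$; here one uses that $\bf$ is coordinate-wise quasipolynomially bounded away from $0$ and $\infty$ (this follows from $Z_t(\bf) \le \tilde O(m)$ in the proof of \Cref{thm:general}, which forces each barrier value to be $O(1)$, hence each $\bf_e$ to be quasipolynomially bounded), so that small objective suboptimality translates to small gradient.

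I expect the main obstacle to be making precise the passage from ``approximately optimal flow $\bf$'' to ``row/column sums of $\mX\mA\mY$ within $\exp(-\log^C n)$ of $1$'': one must argue that $\exp(-\log^{C'} n)$ additive error in the convex objective $h$ implies an $\exp(-\log^C n)$ bound on $\|\nabla h(\bf) - \nabla h(\bf^*)\|$ in an appropriate norm, which requires a lower bound on the strong convexity of $h$ near the optimum — equivalently, upper bounds on $\bf_e$ and lower bounds on $\bf_e$ that are at worst quasipolynomial, together with the fact that $\bf \mapsto \bf \log \bf$ has second derivative $1/\bf$. These bounds are available from \Cref{assump} and the potential bound $Z_t(\bf) = \tilde O(m)$ established inside the proof of \Cref{thm:general}, but threading them through cleanly (and handling the fact that the duality/recovery argument of \cite{CMTV17} is stated for exact optima) is the part that needs genuine care rather than routine bookkeeping. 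Everything else — the graph construction, the reduction to entropy-OT, checking the boundedness hypotheses, and the spanning-tree recovery of $\bz$ — is direct.
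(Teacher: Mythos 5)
Your proposal follows exactly the reduction that the paper sketches in the paragraph preceding \cref{thm:entropyot}, and in fact spells out more of the recovery step than the paper does: the paper simply states that given an (almost) optimal flow $\bf$ one can ``efficiently recover'' the dual potential $\bz$, without addressing the point you correctly flag, namely that an approximately optimal $\bf$ does not satisfy $\bf = \mW\exp(\mB\bz)$ exactly for any $\bz$. Your spanning-tree propagation of $\log(\bf_e/\bw_e)$ is a sensible concrete instantiation of that recovery (with the obvious amendment that one does it per connected component of the bipartite graph, handling infeasibility when a component is unbalanced). The error analysis you outline is the right one: strong convexity of $h$ on the relevant region (using that both $\bf_e$ and $\bf^*_e$ are quasipolynomially bounded above by $\|\bd\|_1$ plus the algorithm's maintained potential bound, and quasipolynomially bounded below via $Z_t(\bf)\le\O(m)$ and $\|\bf-\bf^*\|$ small) yields $\|\bf-\bf^*\|_\infty \le \exp(-\log^{C''}n)$, hence $\|\log\bf_e - \log\bf^*_e\|_\infty$ small; propagating along tree paths of length at most $n$ costs a factor $n$ which is absorbed by taking $C'$ large relative to $C$. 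So the proposal is correct and essentially coincides with the paper's intended argument, filling in details the paper leaves implicit rather than diverging from it.
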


\printbibliography[heading=bibintoc]

\appendix

\section{Previous Works}
\label{sec:previous}
\label{sec:maxflowprevious}

We give a brief overview of the many approaches toward
the max-flow and min-cost flow problems.
A more detailed description of many of these approaches,
and more, can be found the CACM article by Goldberg and Tarjan~\cite{GT14}.
As there is a vast literature on flow algorithms,
this list is by no means complete:
we plan to update this section in subsequent works,
and would greatly appreciate any pointers.

\subsection{Maximum Flow}

The max-flow problem, and its dual, the min-cut problem
were first studied by Dantzig~\cite{D51},
who gave an $O(mn^2U)$ time algorithm.
Ford and Fulkerson introduced the notion of residual
graphs and augmenting paths, and showed the convergence
of the successive augmentation algorithm via the max-flow
min-cut theorem~\cite{FF54}.

Proving faster convergence of flow augmentations has
received much attention since the 1970s due to weighted
network flow being a special case of linear programs.
Works by Edmonds-Karp~\cite{EK73} and Karzanov~\cite{K73}
gave weakly, as well as strongly polynomial time algorithms
for finding maximum flows on capacitated graphs.

Partly due to the connection with linear programming,
the strongly polynomial case, as well as its generalizations
to min-cost flows and lossy generalized flows, subsequently received significantly more
attention.

To date, there have been three main
approaches for solving max-flow in the strongly-polynomial setting:
\begin{enumerate}
    \item Augmenting paths~\cite{EK73, K73, D70, D73,GG88,BK04}.
    \item Push-relabel~\cite{GT88,G08,GHKKTW15,OG21}.
    \item Pseudo-flows~\cite{H08,CH09,FHM10:arxiv}.
\end{enumerate}

These flow algorithms in turn motivated the study of dynamic
tree data structures~\cite{GN80}, which allows for the quick
identification of bottleneck edges in dynamically changing trees.
Suitably applying these dynamic trees gives a max-flow algorithm in the strongly-polynomial setting with runtime $\O(nm)$, which is within
polylog factors of the flow decomposition barrier.
This barrier lower bounds the combinatorial complexity
of representing the final flow as a collection of paths.

Obtaining faster algorithms hinge strongly upon handling paths
using data structures and measuring progress more numerically~\cite{EK73,G85,GR98,DS08}.
Such views date back to the Edmonds-Karp~\cite{EK73} weakly
polynomial algorithm based on finding bottleneck shortest paths
which takes $\O(m^2\log{U})$ time.
Karzanov \cite{K73} and independently Even-Tarjan \cite{ET75} further showed that in unit capacity graphs, maximum flow can be solved in time $O(m\min(\sqrt{m},n^{2/3}))$ by combining a fast bottleneck finding approach with a dual-based convergence argument. A related algorithm by Hopcroft-Karp \cite{HK73} showed that maximum bipartite matching can be solved in $O(m\sqrt{n})$ time.

Our algorithm in some sense can be viewed as implementing a data structure that identifies approximate bottlenecks
in $n^{o(1)}$ time per update, except we use a much more
complicated definition of `bottleneck' motivated by interior
point methods.
Subquadratic running times using numerical methods started
with the study of scaling algorithms for weighted matchings~\cite{G85}
and negative length shortest paths/negative cycle detection~\cite{G95}.
In these directions, Goldberg and Rao~\cite{GR98} used binary blocking flows
to obtain a runtime of $O(m\min(\sqrt{m},n^{2/3}\log{U})$ for max-flow.

More systematic studies of numerical approaches to network
flows took place via the Laplacian paradigm~\cite{ST04}.
Daitch and Spielman~\cite{DS08} made the critical observation
that when interior point methods are applied to single commodity
flow problems, the linear systems that arise are graph Laplacians, which can be solved in nearly-linear time \cite{ST04}.
This immediately implied $\O(m^{1.5}\log{U})$ time algorithms
for min-cost flow problems with integral costs/capacities,
and provided the foundations for further improvements.
Christiano, Kelner, Madry, Spielman, and Teng then gave the
first exponent beyond $1.5$ for max-flow: an algorithm that computes
$(1+\eps)$-approximate max-flows in undirected graphs
in time $\O(m n^{1/3} \eps^{-8/3})$~\cite{CKMST11}.
This motivated substantial progress on numerically driven
flow algorithms, which broadly fall into two categories:
\begin{itemize}
    \item Obtaining faster approximation algorithms for
    max-flow and its multi-commodity generalizations in undirected
    graphs through first-order methods~\cite{KMP12,S13,KLOS14,P16,S17},
    leading to a runtime of $\O(m \epsilon^{-1})$ for $(1+\eps)$-approximate max-flow.
    \item Reducing the iteration complexity of high accuracy
    methods such as interior point methods: from $m^{1/2}$
to $n^{1/2}$, or $m^{1/3+o(1)}$ for unit capacity max-flow~\cite{CKMST11, M13, LS19:arxiv, M16,LS19:arxiv,KLS20}.
\end{itemize}

Over the past two years, further progress took place via
data structured tailored to electrical flows arising in interior point methods.
These led to near-optimal runtimes for max-flow and min-cost flow on
dense graphs~\cite{BLNPSSW20, BLLSSSW21}
as well as improvements over $m^{1.5}$ in
sparse capacitated settings~\cite{GLP21:arxiv,BGJLLPS21:arxiv}.
Our approach broadly falls into this category,
except we use dynamic tree-like data structures as the starting point as opposed to electrical flow data structures,
and modify our interior point methods towards them.
Notably, we use interior point methods based on undirected min-ratio cycles instead of electrical flows. Hence, our methods use $\Omega(m)$ iterations instead of $m^{1/2}$ or $n^{1/2}$ which is common to
all algorithms subsequent to \cite{DS08}.

\subsection{Minimum-Cost Flows}

Work on the minimum cost flow problem can be traced back
to the Hungarian algorithm for the assignment problem~\cite{K12}.
This problem is a special case of minimum cost flow on
bipartite graphs with unit capacity edges.
When generalized to graphs with arbitrary integer capcities,
the algorithm runs in $\O((n + F)m)$ time where $F$
is the total units of flow sent.
Algorithms with similar running time guarantees include
many variants of network simplex~\cite{AGOT92},
and the out-of-kilter algorithm~\cite{F61}.

Strongly polynomial time algorithms for
min-cost flows have been extensively
studied~\cite{T85,GT88b,OPT93,O93,O96},
with the fastest runtime also about $\O(nm)$.
Many these algorithms are also based on augmenting
minimum mean cycles, which are closely related to
our undirected minimum-ratio cycles.
However, the admissible cycles in these algorithms
are directed, and their analysis are with obtaining
strongly polynomial time as goal.

The assignment problem has been a focal point for studying
scaling algorithms that obtain high accuracy solutions 
numerically~\cite{GT87,GT89b,DPS18}.
This is partly due to the negative-length shortest path
problem also reducing to it~\cite{G85,G95}.
These scaling algorithms obtain runtimes of the form
of $\O(m^{1.5} \log{U})$, but also extend to matching
problems on non-bipartite graphs~\cite{G85,DPS18}.
However, to date scaling arguments tend to work on only
one of capacities or costs (similar to the reductions
in Appendix~\ref{appendix:scaling}), and all previous
runtimes beyond the $\Theta(nm)$ flow decomposition barrier
for computing minimum-cost flows have been via interior point 
methods~\cite{DS08,LS19:arxiv,BLLSSSW21,AMV21:arxiv,BGJLLPS21:arxiv}.

\section{Omitted Proofs}
\label{sec:proofs}

\subsection{Proof of \texorpdfstring{\cref{lemma:initialpoint}}{initialpoint}}
\label{app:initialpoint}

\begin{proof}
The graph $\wt{G}$ will have one more vertex than $G$, denoted by $v^*$. Additionally, we will add a directed edge between $v^*$ and $v$ for all $v \in V(G)$, where the direction will be decided later. Thus, $\wt{G}$ will have at most $m + n$ edges.

Initially define $\bf^\init_e = (\bu^-_e + \bu^+_e)/2 \forall e \in E(G)$. However, $\bf^\init$ will not route the demand $\bd$, and we denote the demand it routes by $\barbd \defeq \mB^\top \bf^\init$. Now, we will describe how to generate the edge between $v^*$ and $v$. If $\barbd_v = \bd_v$, then we do not add any edge between $v$ and $v^*$. If $\barbd_v > \bd_v$, then add an edge $e_v = (v \to v^*)$ with upper capacity $2(\barbd_v - \bd_v)$ (and lower capacity $0$), and set $\bf^\init_{e_v} = \barbd_v - \bd_v$. If $\barbd_v < \bd_v$ then add an edge $e_v = (v^* \to v)$ of upper capacity $2(\bd_v - \barbd_v)$ (and lower capacity $0$), and set $\bf^\init_{e_v} = \bd_v - \barbd_v$.
Finally, set $\wt{\bd}_{v^*} = 0$ and $\wt{\bd}_v = \bd_v \forall v \in V(G)$, and $\wt{\bc}_{e_v} = 4mU^2$ for the new edges $e_v$, and $\wt{\bc}_e = \bc_e \forall e \in E(G)$.

It is direct to check that all capacities of edges in $\wt{G}$ are integral and bounded by $2mU$, and costs are bounded by $4mU^2$.
It suffices then to prove that if $G$ supports a feasible flow, then the optimal flow in $G$ must put $0$ units on any of the $e_v$ edges. Indeed, note that the $e_v$ edges only can contribute nonnegative cost as $\wt{\bf}_{e_v} \ge 0$ by definition, and if any of them supports one unit of flow (i.e. $\wt{\bf}_{e_v} \ge 1$), then that contributes $4mU^2$ to the cost. The maximum possible cost of edges $e \in E$ is bounded by $mU^2$, as there are $m$ edges of capacity at most $U$, each with cost at most $U$, and the minimum is at least $-mU^2$. Hence if $G$ supports a feasible flow, $\wt{\bf}_{e_v} = 0$ for all the $e_v$ edges.

We conclude by bounding $\Phi(\bf^\init)$. Note that $\wt{\bf}_e$ are all half-integral, and $\wt{\bf}_e \le mU \forall e \in E(\wt{G})$. Also, $F^* \ge -mU^2$ by the above discussion, and because all costs are bounded by $4mU^2$, $\wt{\bc}^\top \bf^\init \le 2m \cdot 4mU^2 \cdot mU = 8m^3U^3$. Thus
\begin{align*}
    \Phi(\bf^\init) \overset{(i)}{\le} 20m \log(8m^3U^3 + mU^2) + \sum_{e \in E} \left((1/2)^{-\alpha} + (1/2)^{-\alpha} \right) \le 200m \log mU,
\end{align*}
where $(i)$ used the above bounds and that $\bu_e^+ - \bf^\init_e, \bf^\init_e - \bu^-_e$ are all half-integral.
\end{proof}

\subsection{Proof of \texorpdfstring{\Cref{thm:expanderStatement}}{expanderStatement}}
\label{app:expanderStatement}

The goal of this section is to show \Cref{thm:expanderStatement}.
To obtain our result, we require the following result from  \cite{SW19}.

\begin{theorem}[see {\cite[Theorem 1.2]{SW19}}] \label{thm:getExpander}
Given an unweighted, undirected $m$-edge graph $G$, there is an algorithm that finds a partition of $V(G)$ into sets $V_1, V_2, \ldots, V_k$ such that for each $1 \leq j \leq k$, $G[V_j]$ is a $\psi$-expander for  $\psi = \Omega(1/\log^3(m))$ and there are at most $m/4$ edges that are not contained in one of the expander graphs. The algorithm runs in time $O(m\log^7(m))$ and succeeds with probability at least $1-n^{-C}$ for any constant $C$ fixed before the start of the algorithm.
\end{theorem}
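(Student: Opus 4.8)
The plan is to prove this via the recursive expander-decomposition framework underlying \cite{SW19}: fix a target conductance $\psi = c/\log^3 m$ for a sufficiently small constant $c>0$, and show that every graph admits a partition of its vertices into sets inducing $\psi$-expanders while cutting at most $m/4$ edges, computable in time $O(m\log^7 m)$. The core primitive is a ``cut-or-certify'' subroutine that, given a subgraph $H$, either certifies that $H$ is a $\psi'$-expander for $\psi' = \Theta(\psi\log^2 m)$ --- by embedding a constant-degree expander into $H$ with congestion $O(1/\psi')$, which by a Cheeger-type argument forces $H$ to have conductance at least $\psi$ in the sense of \cref{def:expander} --- or returns a cut $(A,\bar A)$ with $|E_H(A,\bar A)| \le \psi'\min\{\vol(A),\vol(\bar A)\}$. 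I would implement this via $O(\log^2 m)$ rounds of the Khandekar-Rao-Vazirani cut-matching game, where each round solves one (approximate) single-commodity max-flow instance on $H$ to route a matching across a balanced bipartition of $H$; the random choice of the bipartition vectors in the cut-matching game is the only source of randomness, and taking $\Theta(\log^2 m)$ rounds drives the per-call failure probability below $n^{-C'}$, which then union-bounds over the $O(m\log m)$ calls made across the whole recursion.

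The decomposition algorithm then proceeds as follows. On input $H$ (initially $H=G$), call cut-or-certify. If $H$ is certified, output $V(H)$ as one cluster. If a \emph{balanced} cut $(A,\bar A)$ is returned --- both sides of volume at least $\vol(H)/10$ --- recurse independently on $H[A]$ and $H[\bar A]$. If the cut is \emph{unbalanced}, say $\vol(A)\le \vol(H)/10$, we cannot afford to recurse on the large side naively; instead we peel off $A$ and then \emph{trim} the large side $\bar A$: after removing $A$, vertices of $\bar A$ near the old boundary may violate expansion, so we iteratively delete vertices whose degree into the current set drops below a $\psi$-fraction of their original degree. A local unit-flow argument (the pruning lemma of \cite{SW19}) shows this removes a further set $A'$ with $\vol(A') = O(\vol(A))$ and that $\bar A\setminus A'$ is a genuine $\psi$-expander, so it is emitted as a cluster; we then recurse only on the small piece $A\cup A'$.

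For correctness the key accounting is on cut edges. Balanced recursion has depth $O(\log\vol(G)) = O(\log m)$, and at each level a cut-or-certify step cuts at most $\psi'$ times the volume of its piece, so balanced cuts contribute $O(\psi' m\log m) = O(\psi m\log^3 m)$ inter-cluster edges in total. Each unbalanced peel-plus-trim removes a set of volume $O(\vol(A))$ and cuts $O(\psi'\vol(A\cup A'))$ edges, which we charge to the removed vertices; this again sums to $O(\psi m\log^3 m)$. Choosing $c$ small makes the grand total at most $m/4$, exactly as required in \cref{thm:getExpander}. The conductance guarantee is immediate for certified clusters and follows from the trimming lemma for trimmed clusters. (The $|E(G_i)|\le 2^i n$ and $\deg_{G_i}(x)\ge\psi 2^i$ statements appearing in \cref{thm:expanderStatement} are a separate degree-bucketing postprocessing of this output and are not part of the SW19 statement we are asked to establish here.)

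For the running time, the dominant cost is the cut-matching game: at each of the $O(\log m)$ balanced-recursion levels, the flow subroutines collectively touch each edge of that level's pieces $O(\log^2 m)$ times, and each approximate max-flow on an $m'$-edge piece costs $\widetilde{O}(m')$; trimming costs only local flow work proportional to the boundary being processed. Bounding per-level work by $O(m\log^4 m)$ and summing over $O(\log m)$ levels, with the $O(\log^2 m)$ overhead from the flow solvers' polylog factors, yields $O(m\log^7 m)$. I expect the main obstacle to be precisely this runtime bound: a naive variant that re-solves exact max-flow on \emph{both} sides of every cut, or that recurses on the large side of an unbalanced cut instead of trimming, degrades to $\Omega(m^{1.5})$ or worse. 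The two delicate points are (i) that the cut-matching game needs only $\widetilde{O}(m)$ \emph{total} flow work per level rather than $\widetilde{O}(m)$ per round, which requires amortizing flow work across rounds and using an approximate rather than exact flow routine, and (ii) that trimming must certify expansion of the large side without re-decomposing it, which rests on the local/unit-flow analysis and a careful charge of trimmed volume against the originally peeled unbalanced side.
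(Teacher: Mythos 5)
The paper does not prove this statement itself---it is imported verbatim (after instantiating $\phi = \Theta(1/\log^3 m)$) from \cite[Theorem 1.2]{SW19}, so there is no in-paper proof to compare against. Your sketch is a faithful outline of the Saranurak--Wang argument (cut-matching game as the cut-or-certify primitive, trimming the large side of unbalanced cuts via local/unit flow so that one recurses only on the small side, and the $O(\phi m \log^3 m)$ cut-edge accounting, which gives at most $m/4$ crossing edges and running time $O(m\log^4 m/\phi)=O(m\log^7 m)$ at $\phi=c/\log^3 m$), so it is correct as a reconstruction of the cited proof.
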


We run \Cref{alg:decompose} (given below) to obtain the graphs $G_i$ as desired in \cref{thm:expanderStatement}.

\begin{algorithm}[!ht]
$\ell \gets \lceil\log_2 \Delta_{max}(G) \rceil + 1; G_{\ell} = G$\label{lne:decompFirstLine}\;
\For{$i = \ell, \ell - 1, \ldots, 1$}{
    Let $G_i^{\rcirclearrow}$ denote the graph $G_i$ after adding $2^i$ self-loops to each vertex.\label{lne:GiLoopy}\;
    Compute an expander decomposition $V_0, V_1, \ldots, V_k$ of $G_i^{\rcirclearrow}$ as described in \Cref{thm:getExpander}.\;
    $G_{i-1} \gets \left(\bigcup_{0 \leq j \leq k} E_{G_i}(V_j, V \setminus V_j)\right)$.\label{lne:initGi}\;
    $G_{i} \gets G_i \setminus G_{i-1}$.

}
\caption{$\textsc{Decompose}(G)$}
\label{alg:decompose}
\end{algorithm}

\begin{claim}\label{clm:initializdGiIsGood}
For each $i$, the graph $G_i$ has at initialization in \Cref{lne:decompFirstLine} or \Cref{lne:initGi} at most $2^i n$ edges.
\end{claim}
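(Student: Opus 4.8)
\textbf{Proof plan for \Cref{clm:initializdGiIsGood}.}

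The plan is to prove the claim by downward induction on $i$, tracking the edge count of $G_i$ at the two moments it can be (re)initialized: once at the top in \Cref{lne:decompFirstLine} for $i = \ell$, and once at the bottom of each loop iteration via \Cref{lne:initGi} for $i < \ell$. The base case is $i = \ell$: here $G_\ell = G$, and since every vertex has degree at most $\Delta_{\max}(G) \le 2^{\ell-1}$ by the choice $\ell = \lceil \log_2 \Delta_{\max}(G)\rceil + 1$, we have $|E(G)| \le \tfrac{1}{2} n \cdot 2^{\ell-1} \le 2^{\ell} n$. So the base case holds with room to spare.

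For the inductive step, suppose $G_i$ has at most $2^i n$ edges at the moment it is initialized (either at the top, or by a previous iteration's \Cref{lne:initGi}). In iteration $i$ we form $G_i^{\rcirclearrow}$ by adding $2^i$ self-loops to each vertex; this adds $2^i n$ to the edge count, so $|E(G_i^{\rcirclearrow})| \le 2^{i+1} n$. Now apply \Cref{thm:getExpander} to $G_i^{\rcirclearrow}$: it produces an expander decomposition where the number of \emph{inter-cluster} edges — precisely the edges placed into $G_{i-1}$ in \Cref{lne:initGi} — is at most $|E(G_i^{\rcirclearrow})|/4 \le 2^{i-1} n$. The self-loops are never inter-cluster edges, so $G_{i-1}$ consists only of original edges of $G_i$, and we conclude $|E(G_{i-1})| \le 2^{i-1} n$, which is exactly the bound needed to continue the induction.

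The one subtlety to get right — and the main thing to be careful about rather than a genuine obstacle — is making sure the self-loop bookkeeping is airtight: the self-loops inflate $G_i^{\rcirclearrow}$ by exactly $2^i n$ edges (not $2^i m$ or some other quantity), and since self-loops have both endpoints in the same cluster $V_j$, they contribute zero to $E_{G_i^{\rcirclearrow}}(V_j, V\setminus V_j)$ and hence zero to $G_{i-1}$. Combined with $G_{i-1} \subseteq G_i$ (we only move original edges of $G_i$ into $G_{i-1}$), this also implicitly confirms that the ``peeling'' is consistent with the edge-disjoint partition claimed in \Cref{thm:expanderStatement}. I would also note in passing that, since $\ell = O(\log n)$ as $\Delta_{\max}(G) \le n$, this bound on $|E(G_i)|$ immediately gives the claim ``$|E(G_i)| \le 2^i n$'' stated in \Cref{thm:expanderStatement} for all surviving $G_i$, because an edge never leaves $G_i$ after it is placed there except by being moved to $G_{i-1}$, which only decreases the count.
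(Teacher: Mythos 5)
Your proof is correct and follows essentially the same route as the paper's: downward induction on $i$ with base case $i=\ell$ via the handshake/degree bound, then at each step observing that $|E(G_i^{\rcirclearrow})| \le 2^{i+1}n$ (original edges plus $2^i n$ self-loops) so \Cref{thm:getExpander} leaves at most $2^{i+1}n/4 = 2^{i-1}n$ inter-cluster edges, and that self-loops never cross cuts so $G_{i-1}$ inherits only original edges of $G_i$. The one point the paper flags explicitly that you leave implicit is that $G_i$ remains unchanged between its initialization (in iteration $i+1$, or at \Cref{lne:decompFirstLine}) and the moment $G_i^{\rcirclearrow}$ is formed at iteration $i$ — the line $G_i \gets G_i\setminus G_{i-1}$ only runs after $G_{i-1}$ is defined, so the induction hypothesis on $|E(G_i)|$ applies at the right moment; worth making this timing explicit, but it does not affect correctness.
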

\begin{proof}
We prove by induction on $i$. For the base case, $i = \ell$, observe that $2^{\ell} \geq \Delta_{max}(G)$ and since $G_{\ell}$ is a subgraph of $G$, we have $|E(G_{\ell})| \leq 2^{\ell} n$.

For $i \mapsto i-1$, we observe that $G_i$ is unchanged since its initialization until at least after $G_{i-1}$ was defined in \Cref{lne:initGi}. Thus, using the induction hypothesis and the fact above, we can conclude that $G_i^{\rcirclearrow}$  (defined in \Cref{lne:GiLoopy}) consists of at most $2^i n$ edges from $G_i$ plus $2^i n$ edges from all self-loops. But by \Cref{thm:getExpander}, this implies that $|\bigcup_{0 \leq j \leq k} E_{G_i}(V_j, V \setminus V_j)| = |\bigcup_{0 \leq j \leq k} E_{G_i^{\rcirclearrow}}(V_j, V \setminus V_j)| \leq 2^{i+1}n/4 = 2^{i-1}n$, and since this is exactly the edge set of $G_{i-1}$, the claim follows.
\end{proof}

\begin{proof}[Proof of  \Cref{thm:expanderStatement}]
Using \Cref{clm:initializdGiIsGood} and the insight that each graph $G_i$, after initialization, can only have edges deleted from it, we conclude that $|E(G_i)| \leq 2^i n$ for each $i$. 

For the minimum degree property of each $G_i$ with $i > 0$, we observe by \Cref{thm:getExpander}, that for $G_i^{\rcirclearrow}$ and vertex $v$ in expander $V_j$, $\deg_{G_i}(v) = |E_{G_i}(v , V_j \setminus \{v\})| = |E_{G_i^{\rcirclearrow}}(v , V_j \setminus \{v\})| \geq \psi 2^{i}$.
\end{proof}

\subsection{Proof of \texorpdfstring{\cref{lemma:globalstretch}}{globalstretch}}
\label{app:globalstretch}

We show \cref{lemma:globalstretch} using the following steps. First, we assume for the majority of the section that the weights $\bv = \mathbf{1}$, i.e. the all ones vector. We explain later a standard reduction to this case.
Given a low stretch tree $T$ on a graph with lengths $\bell$, and a target set of roots $R$, we explain how to find a forest $F$ (depending on $R$) that has low total stretch (\cref{def:stretchf}). This involves defining a notion of congestion on edges $e \in E(T)$. Then we explain how to handle dynamic edge insertions and deletions by adding new roots to the tree, and decrementally maintain the forest. The trickiest part is to explain how to add roots so that we can return valid stretch \emph{overestimates}. At a high level, this is done by computing a heavy-light decomposition of $T$, and using it to inform our root insertions.

It is useful to maintain the invariant that our set of roots is \emph{branch-free} at all times, i.e. that the lowest common ancestor (LCA) of any two roots $r_1, r_2 \in R$ is also in $R$. This is necessary to make it easier to construct the forest $F$ given $R$. Intuitively, forcing our set of roots to be branch-free is not a big restriction, as any set of roots can be made branch free by at most doubling its size.
\begin{definition}[Branch-free sets]
\label{def:branchfree}
For a rooted tree $T$ on vertices $V$, we say that a set $R \subseteq V$ is \emph{branch-free} if the LCA of any vertices $r_1, r_2 \in R$ is also in $R$.
\end{definition}
Given a branch-free set of roots $R$, we build a forest $F$ in the following way. We start with some total ordering/permutation $\pi$ on the tree edges $E(T)$, and for any two ``adjacent'' roots $r_1, r_2 \in R$, we delete the smallest edge with respect to $\pi$ from $T$. Here, adjacent means that no root is on the path between $r_1, r_2$. It is crucial in this construction that $R$ is branch-free, so that there are exactly $|R|-1$ adjacent pairs of roots.
\begin{definition}[Forest given roots]
\label{def:forestroots}
Given a rooted tree $T$, a branch-free set of roots $R \subseteq V$, and a total ordering $\pi$ on $E(T)$, define $F_T(R,\pi)$ as the forest obtained by removing the smallest tree edge with respect to $\pi$ from every path between adjacent roots in $T$.
\end{definition}
It is direct to verify that $F_T(R,\pi)$ has exactly $|R|$ connected components, each of which contains a unique vertex in $R$. We now explain how to construct $\pi$. $\pi$ sorts the edges by their \emph{congestions}.
\begin{definition}[Congestion]
\label{def:congestion}
Given a graph $G=(V,E)$ with lengths $\bell$, tree $T$ we define the \emph{congestions} of edges $e \in E(T)$ as
\[ \Cong^{T,\bell}_e \defeq \sum_{\substack{e' = (u, v) \in E(G) \\ \text{ s.t. } e \in T[u,v]}} 1/\bell_{e'}. \]
\end{definition}
We show that if $\pi$ is ordered by increasing congestions, then $F = F_T(R,\pi)$ has low total stretch.
\begin{lemma}[Valid $\pi$]
\label{lemma:validpi}
For a graph $G=(V,E)$ with lengths $\bell$, a rooted tree $T$, and a branch-free set of roots $R$, let $\pi$ be a total ordering on $E(T)$ sorted by increasing $\Cong^{T,\bell}_e$ (\cref{def:congestion}). Then for $F = F_T(R,\pi)$, we have $\sum_{e \in E} \str^{F,\bell}_e \le 2\sum_{e \in E} \str^{T,\bell}_e$.
\end{lemma}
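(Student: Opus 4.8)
The plan is to compare the stretch of an edge $e=(u,v)\in E$ with respect to $F$ against its stretch with respect to $T$, and then sum. The key point is that $F=F_T(R,\pi)$ is obtained from $T$ by deleting exactly $|R|-1$ tree edges, one smallest-congestion edge from each path between adjacent roots. So for a fixed edge $e$, the path $T[u,v]$ either survives in $F$ (if $u,v$ lie in the same component of $F$), in which case $\str^{F,\bell}_e = \str^{T,\bell}_e$ exactly and there is nothing to prove; or the path $T[u,v]$ crosses one or more deleted tree edges, in which case $u$ and $v$ end up in different components with roots $\root^F_u, \root^F_v$, and $\str^{F,\bell}_e = 1 + \left(\langle \bell, |\bp(F[u,\root^F_u])|\rangle + \langle \bell, |\bp(F[v,\root^F_v])|\rangle\right)/\bell_e$. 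So the whole argument reduces to controlling, for such crossing edges $e$, the total length of the two forest paths from $u$ and $v$ up to their respective roots.

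First I would set up the charging. Let $D \subseteq E(T)$ be the set of deleted tree edges, $|D| = |R|-1$. For a crossing edge $e = (u,v)$, walk along $T[u,v]$: it is subdivided by the deleted edges it meets into sub-segments, and the sub-segment containing $u$ is exactly $F[u, \root^F_u]$ while the sub-segment containing $v$ is exactly $F[v,\root^F_v]$ (this uses that $R$ is branch-free, so that the components of $F$ are precisely the maximal subtrees of $T$ cut off by $D$, and that consecutive roots along $T[u,v]$ are ``adjacent'' in the sense of \cref{def:forestroots}). Crucially, each of these two sub-segments is a subpath of $T[u,v]$ that ends (on one side) at a deleted edge $f \in D$; and by the choice of $\pi$ as sorted by increasing congestion, $f$ was the \emph{minimum-congestion} edge on the path between the two roots flanking it, hence in particular $\Cong^{T,\bell}_f \le \Cong^{T,\bell}_{e'}$ for every edge $e'$ lying on $F[u,\root^F_u]$ (resp. $F[v,\root^F_v]$), since all such $e'$ lie on that same root-to-root path. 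This is the monotonicity fact I would extract as the main lemma step.

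Next I would convert this into a length bound. Since $F[u,\root^F_u]$ is a subpath of $T[u,v]$ all of whose edges have congestion at least $\Cong^{T,\bell}_f \ge 1/\bell_e$ (because $e$ itself contributes $1/\bell_e$ to the congestion of $f$, as $f \in T[u,v]$), we get $\bell_{e'} \le \bell_e \cdot (\bell_{e'}\Cong^{T,\bell}_{e'})$ for each $e'$ on that subpath... wait, more directly: $\langle \bell, |\bp(F[u,\root^F_u])|\rangle = \sum_{e' \in F[u,\root^F_u]} \bell_{e'} \le \bell_e \sum_{e' \in F[u,\root^F_u]} \bell_{e'}\Cong^{T,\bell}_{e'}$ using $1/\bell_e \le \Cong^{T,\bell}_f \le \Cong^{T,\bell}_{e'}$, i.e. $\bell_{e'}/\bell_e \le \bell_{e'}\Cong^{T,\bell}_{e'}$. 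Summing this bound over all crossing edges $e \in E$, and observing that $\sum_{e'} \bell_{e'}\Cong^{T,\bell}_{e'} = \sum_{e'\in E(T)} \sum_{e''\in E: e'\in T[e'']} \bell_{e'}/\bell_{e''} = \sum_{e''\in E}\langle \bell, |\bp(T[e''])|\rangle/\bell_{e''} \le \sum_{e''\in E}\str^{T,\bell}_{e''}$, I would try to bound the total extra forest-path length by $\sum_{e}\str^{T,\bell}_e$. Adding the ``$+1$'' and the $e$-edge itself contributes another $\sum_e \str^{T,\bell}_e$ (since $\str^{T,\bell}_e \ge 1$), giving the claimed factor $2$. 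The main obstacle I anticipate is precisely this double-counting control: I must ensure that when I charge the length of $F[u,\root^F_u]$ for many different crossing edges $e$ to the quantities $\bell_{e'}\Cong^{T,\bell}_{e'}$, I am not over-counting — i.e., I need the inequality $\sum_{e \text{ crossing}} \langle \bell, |\bp(F[u_e,\root^F_{u_e}])|\rangle \le \bell_e \cdot (\text{something summing to } \sum\str^T)$ to hold with the right constant, which requires carefully tracking that each forest sub-segment is bounded per-edge-$e$ rather than globally, and then that the per-$e$ bounds sum correctly. Getting the constant exactly $2$ (rather than a larger $O(1)$) is where the congestion-sorted choice of $\pi$ does the real work, and I would spend most of the effort there; the reduction from general weights $\bv$ to $\bv = \mathbf{1}$ at the end is the routine trick of scaling/duplicating (or reweighting the congestion sum) and I would only sketch it.
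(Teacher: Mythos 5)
Your starting structural claim is wrong, and the later accounting step that you flag as the "main obstacle" is in fact a genuine gap that your method cannot close in the form written.

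\textbf{The structural error.} You assert that the sub-segment of $T[u,v]$ containing $u$ is exactly $F[u,\root^F_u]$, using branch-freeness. This is false: if $e_1$ is the first deleted edge on $T[u,v]$ from $u$'s side with endpoint $a$ on $u$'s side, then $\root^F_u = \root^F_a$ is one of the two roots $r_1(e_1),r_2(e_1)$ flanking $e_1$, which generally does \emph{not} lie on $T[u,v]$. So $F[u,\root^F_u] = T[u,a] \oplus T[a,\root^F_a]$ consists of a piece on $T[u,v]$ plus a detour $T[a,\root^F_a]$ along the root-to-root path of $e_1$ that leaves $T[u,v]$. (Concretely: take a path $r - r_1 - x - r_2$ with a pendant edge $(x,u)$, $R=\{r_1,r_2\}$, $(r_1,x)$ deleted; for $e'=(u,r)$, the sub-segment of $T[u,r]$ containing $u$ is just the edge $(u,x)$, but $\root^F_u = r_2$ and $F[u,r_2]$ contains the edge $(x,r_2)\notin T[u,r]$.) Your subsequent claim that all edges on $F[u,\root^F_u]$ lie on the root-to-root path is also false — the piece $T[u,a]$ generally does not. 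It so happens that the congestion lower bound $\Cong^{T,\bell}_{e''}\ge 1/\bell_{e'}$ still holds for every $e''\in F[u,\root^F_u]$ (edges on $T[u,a]$ because $e'$ routes over them; edges on $T[a,\root^F_a]$ because $e_1$ is the min-congestion edge on that root-to-root path and $e'$ routes over $e_1$), but your stated reasoning for this is incorrect.

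\textbf{The charging does not close.} Even granting the pointwise bound $\str^{F,\bell}_{e'}-1 \le \sum_{e''\in F[u,\root^F_u]\cup F[v,\root^F_v]} \bell_{e''}\Cong^{T,\bell}_{e''}$, summing it over crossing edges $e'$ is not bounded by $\sum_{e''\in E(T)}\bell_{e''}\Cong^{T,\bell}_{e''}$: a single tree edge $e''$ (especially one on a root-to-root detour) can appear in $F[u,\root^F_u]\cup F[v,\root^F_v]$ for many different $e'$, so the right-hand side over-counts without bound. Replacing $\bell_{e''}/\bell_{e'}$ by $\bell_{e''}\Cong^{T,\bell}_{e''}$ throws away the dependence on $e'$, which is exactly what you would need to amortize. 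The paper's proof avoids this by \emph{keeping} the $1/\bell_{e'}$ factor: it first bounds $\str^{F,\bell}_{e'} \le \str^{T,\bell}_{e'} + \sum_{e\in\hat E\cap T[u,v]}L_e/\bell_{e'}$ where $L_e$ is the length of the full root-to-root path through $e$, then interchanges the double sum so that $\sum_{e'}\sum_{e\in\hat E\cap T[e']}1/\bell_{e'}=\Cong^{T,\bell}_e$ appears naturally, and only \emph{then} invokes congestion-minimality of $e$ on its root-to-root path together with the \emph{edge-disjointness} of the root-to-root paths to conclude $\sum_{e\in\hat E}L_e\Cong^{T,\bell}_e\le\sum_{f\in E(T)}\bell_f\Cong^{T,\bell}_f=\sum_{e'}(\str^{T,\bell}_{e'}-1)$. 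The disjointness of root-to-root paths is precisely the missing amortization ingredient; your argument has no analogue of it. (Also, the weight reduction $\bv\to\mathbf{1}$ you mention at the end is a red herring here: \cref{lemma:validpi} is stated without weights, and the reduction occurs one level up, in the proof of \cref{lemma:globalstretch}.)
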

\begin{proof}
Let $\hat{E}$ be the set of edges deleted from $E(T)$ to get $F_T(R,\pi)$. For an edge $e \in \hat{E}$ on a path between adjacent roots $r_1(e), r_2(e) \in R$, define $L_e = \langle \bell, |\bp(T[r_1(e),r_2(e)])|\rangle$ as the length of the path between $r_1(e), r_2(e)$ in $T$. First note by \cref{def:stretchf} that for an edge $e' = (u, v) \in E(G)$
\begin{align}
    \str^{F,\bell}_{e'} \le \str^{T,\bell}_{e'} + \sum_{e \in \hat{E} \cap T[u,v]} L_e/\bell_{e'}. \label{eq:routingbound}
\end{align}
Thus, we can bound
\begin{align*}
    \sum_{e' \in E(G)} \str^{F,\bell}_{e'} &\overset{(i)}{\le} \sum_{e' \in E(G)} \str^{T,\bell}_{e'} + \sum_{e'=(u,v) \in E(G)} \sum_{e\in \hat{E}\cap T[u,v]} L_e/\bell_{e'} \\
    &= \sum_{e' \in E(G)} \str^{T,\bell}_{e'} + \sum_{e \in \hat{E}} L_e \Cong^{T,\bell}_e = \sum_{e' \in E(G)} \str^{T,\bell}_{e'} + \sum_{e \in \hat{E}} \sum_{f \in T[r_1(e),r_2(e)]} \bell_f \Cong^{T,\bell}_e \\
    &\overset{(ii)}{\le} \sum_{e' \in E(G)} \str^{T,\bell}_{e'} + \sum_{e \in \hat{E}} \sum_{f \in T[r_1(e),r_2(e)]} \bell_f \Cong^{T,\bell}_f \overset{(iii)}{\le} \sum_{e' \in E(G)} \str^{T,\bell}_{e'} + \sum_{e \in E(T)} \bell_e \Cong^{T,\bell}_e \\
    &= 2\sum_{e' \in E(G)} \str^{T,\bell}_{e'}.
\end{align*}
Here $(i)$ follows from \eqref{eq:routingbound}, $(ii)$ follows from the fact that $\pi$ is sorted by increasing $\Cong^{T,\bell}_e$ so $\Cong^{T,\bell}_e \le \Cong^{T,\bell}_f$ for all $f \in T[r_1(e),r_2(e)]$, and $(iii)$ follows as $T[r_1(e),r_2(e)]$ are disjoint paths.
\end{proof}
To handle item \ref{item:degbound} of \cref{lemma:globalstretch}, we initialize the set of roots $R$ to have size $O(m/k)$ to already satisfy item \ref{item:degbound}. This set of roots $R$ exists by a standard decomposition result due to \cite{ST04}.
\begin{lemma}[Tree Decomposition, \cite{ST03,ST04}]
  \label{lemma:treeDecomp}
  There is a deterministic linear-time algorithm that on a graph $G=(V, E)$ with weights $\bw \in \R^E_{> 0}$, a rooted spanning tree $T$, and a reduction parameter $k$, outputs a decomposition $\cW$ of $T$ into edge-disjoint sub-trees such that:
  \begin{enumerate}
  \item $\Abs{\cW} = O(m/k).$ \label{item:sizeBound}
  \item  \label{item:branchFreeBoundary} $R \defeq \partial \cW \subseteq V$, defined as the subset of vertices appear in multiple components, is branch-free.
  \item For every component $C \subseteq V$ of $\cW$, the total weight of edges adjacent to non-boundary vertices of $C$ is at most $O(\norm{\bw}_1 \cdot k / m)$, i.e. $\sum_{e: e \cap C \not\subseteq \partial \cW} \bw_e \le 40 \cdot \norm{\bw}_1 \cdot k / m.$ \label{item:totalWgtBound}
  \end{enumerate}
\end{lemma}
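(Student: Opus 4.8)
The statement to prove is \cref{lemma:treeDecomp}, the Spielman--Teng style tree decomposition. Here is how I would go about it.

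\textbf{Reduction to unit weights.} First I would reduce to the case $\bw = \mathbf{1}$ by the standard trick of replacing each edge $e$ by $\lceil m \bw_e / \|\bw\|_1 \rceil$ parallel copies (conceptually subdividing $e$), so that the total number of edges at most doubles, each copy has unit weight, and the weight bound in item \ref{item:totalWgtBound} becomes a bound of the form $O(k)$ on the number of unit-weight edges touching non-boundary vertices of a component. The rooted spanning tree $T$ is kept on the original vertex set; parallel copies of tree edges are irrelevant to the combinatorial structure. I would note carefully that this reduction is purely for the analysis of the size/weight bounds and does not change the component structure, and that running the algorithm below on the multigraph takes linear time in the original size since the number of copies is $O(m)$.

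\textbf{The decomposition algorithm.} The plan is the classical greedy ``peel from the leaves'' procedure. Root $T$ and process its vertices in a post-order (children before parents). Maintain for each vertex $v$ a count $\mathsf{load}(v)$ of the number of (unit-weight) edges incident to the not-yet-assigned subtree hanging below $v$. Walking up from the leaves, whenever the accumulated $\mathsf{load}$ at some vertex $v$ first reaches the threshold $k$ (more precisely a constant times $k$, say $20k$), I would ``cut off'' the subtree below $v$: declare it (together with $v$) a new component $C \in \cW$, mark $v$ as a boundary vertex (it also remains in its parent's component as a leaf of that component), and reset the load passed upward from $v$ to zero. Since each cut removes at least $k$ edges from further consideration and there are $O(m)$ edges total, at most $O(m/k)$ cuts occur, giving item \ref{item:sizeBound}. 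The leftover at the root forms the last component. Each component $C$ created this way has the property that the edges incident to $C \setminus \partial\cW$ number at most $O(k)$: the load was below $20k$ when we decided to cut (otherwise we would have cut a descendant first), and adding the edges at $v$ itself and at the at most $O(k)$-load of the child subtrees we are merging contributes only $O(k)$ more. This needs a short but careful accounting — the natural statement is that every component's ``interior degree'' is $O(k)$ because we cut precisely when a threshold is exceeded, and thresholds are crossed by increments bounded by the interior degree of a single vertex plus previously-uncut mass, all $O(k)$. Finally, boundary-freeness (item \ref{item:branchFreeBoundary}): the cut vertices $v$ are each the root of a component; the LCA of two cut vertices $r_1, r_2$ is some vertex $w$ on the tree; $w$ is a boundary vertex because the two disjoint subtrees below its two relevant children were both cut off, so the subtree below $w$ contains at least two components, forcing $w \in \partial \cW$ as well. (If one cut vertex is an ancestor of the other the LCA is the ancestor, which is itself a boundary vertex.) I would phrase this as: $\partial\cW$ equals the set of vertices that are roots of components or ancestors-of-two-distinct-component-roots, and this set is visibly closed under LCA.

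\textbf{Main obstacle.} The delicate point is \emph{simultaneously} getting the $O(m/k)$ component count (item \ref{item:sizeBound}) and the $O(\|\bw\|_1 k/m)$ per-component interior-weight bound (item \ref{item:totalWgtBound}) with a \emph{single} threshold, while also ensuring $\partial\cW$ is branch-free — a naive recursive split can blow up the number of components by a $\log$ factor or create a component whose interior degree is much larger than $k$ when a single vertex has very high degree. The fix, which I would make explicit, is to allow a component to consist of a single high-degree vertex together with its already-finalized children-components attached as boundary leaves; since that vertex is then a boundary vertex of its own component, it contributes $0$ to the ``interior weight'' of that component, so the $O(k)$ bound is not violated even for arbitrarily high-degree vertices. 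Combined with the threshold rule this keeps the number of cuts at $O(m/k)$ (each cut still discharges $\Omega(k)$ edges, counting the high-degree vertex's edges toward its parent's accounting exactly once). Everything else — the post-order traversal, maintaining loads, emitting the decomposition, undoing the unit-weight reduction — is linear-time bookkeeping, so the overall running time is $O(m)$. I would close by remarking that this is exactly the decomposition of \cite{ST03,ST04} and cite it, presenting the above only as a self-contained sketch for completeness.
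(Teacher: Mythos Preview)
The paper does not prove this lemma; it is stated with a citation to \cite{ST03,ST04}. Your bottom-up greedy is the right construction, and the unit-weight reduction, the $O(m/k)$ component count via charging each cut $\Omega(k)$ units of load, and the linear running time are all fine. Two points need repair.

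Your branch-freeness argument is incorrect. You claim that if $r_1,r_2\in\partial\cW$ then their LCA $w$ lies in $\partial\cW$ ``because the subtree below $w$ contains at least two components,'' but containing multiple components in one's subtree does not mean belonging to multiple components. Concretely: take a root $r$ with children $a,b$, each above a heavy subtree that triggers cuts, and $\deg_G(r)$ small; then $a,b\in\partial\cW$ yet $r$ lies only in the single top component holding the edges $(r,a),(r,b)$, so $r=\mathrm{LCA}(a,b)\notin\partial\cW$. Your alternative phrasing---adding all ``ancestors of two distinct component roots''---read literally includes the entire root-path above every such LCA and can have size $\Theta(n)$. The correct fix is to close the cut-vertex set under pairwise LCA, which adds at most $|\partial\cW|-1$ vertices (the internal nodes of the auxiliary tree on the cut vertices); splitting components at these new boundaries only refines the decomposition, so interior weights do not grow and the count stays $O(m/k)$.

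Separately, the interior-weight bound is not established by ``cut off the subtree below $v$'' as written: if $v$ has many \emph{tree}-children each passing up load just below threshold, their sum at $v$ can be arbitrarily large, and the resulting component has interior weight $\gg k$. Your ``main obstacle'' paragraph treats a single vertex with large $\deg_G$, which is a different failure mode and does not cover this. The remedy is to scan $v$'s children incrementally and emit a component rooted at $v$ each time the running sum crosses the threshold, so every emitted component has interior weight in $[k,2k)$; this may be what ``first reaches'' was gesturing at, but it must be made explicit, since the per-vertex ``sum all children, then check'' version genuinely fails.
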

Item \ref{item:degbound} of \cref{lemma:globalstretch} then follows from \cref{lemma:treeDecomp} by taking $\bw_e = 1$ for all $e \in E$.

We now explain how to add roots to $R$ under insertions and deletions of edges $e = (u, v)$. While a na\"{i}ve approach is to add both $u, v$ to the set $R$, i.e. $R \assign R \cup \{u,v\}$ (and then add more roots to make it branch free), this does not work because $\str^{T,\bell}_e$ might fluctuate significantly, and not be globally upper bounded as we want. To fix this, we introduce a more complex procedure that adds $\O(1)$ additional roots to $R$ to control the number of potential roots that an edge $e$ is assigned too.

Formally, we will construct an auxiliary tree with the same root and vertex set as $T$.
This tree is constructed via a heavy-light decomposition on $T$.
Then we replace each heavy chain with a balanced binary tree.
Thus, this auxiliary tree has height $O(\log^2 n)$.
When a vertex $u$ is added to $R$, we walk up the tree induced by the heavy-light decomposition and add all the ancestors of $u$.
Thus, at most $O(\log^2 n)$ additional vertices will be added to $R$, but each edge will only be assigned to $O(\log^2 n)$ distinct roots.

We introduce one additional piece of notation.
Given a rooted tree $T_H$ (the tree defined by the heavy-light decomposition on $T$) and vertex $u$, define $u^{\uparrow T_H}$ as the set of its ancestors in $T_H$ plus itself.
We extend the notation to any subset of vertices by defining $R^{\uparrow T_H} = \bigcup_{u \in R} u^{\uparrow T_H}.$

\begin{lemma}[Heavy-Light Decomposition of Trees, \cite{ST83}]
  \label{lemma:HLD}
  There is a linear-time algorithm that given a rooted tree $T$ with $n$ vertices outputs a collection of vertex disjoint tree paths $\{P_1, \ldots, P_t\}$ (called \emph{heavy chains}), such that the following hold for every vertex $u$:
  \begin{enumerate}
  \item There is exactly one heavy chain $P_i$ containing $u.$
  \item If $P_i$ is the heavy chain containing $u,$ at most one child of $u$ is in $P_i.$ 
  \item There are at most $O(\log n)$ heavy chains that intersect with $u^{\uparrow T}.$
  \end{enumerate}
  In addition, edges that are not covered by any heavy chain are called \emph{light edges}.
\end{lemma}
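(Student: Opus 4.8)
The plan is to use the classical construction of Sleator and Tarjan. First I would compute, by a single post-order traversal of $T$, the subtree size $\mathrm{size}(u)$ of every vertex $u$, i.e.\ the number of vertices in the subtree of $T$ rooted at $u$. For every non-leaf vertex $u$ I then designate as its \emph{heavy child} some child $v$ maximizing $\mathrm{size}(v)$ (ties broken arbitrarily) and declare the edge $(u,v)$ \emph{heavy}; every other tree edge is \emph{light}. The heavy chains $P_1,\dots,P_t$ are the connected components of the subgraph of $T$ consisting of the heavy edges, with the convention that a vertex incident to no heavy edge forms its own trivial one-vertex chain. All of this is computable in $O(n)$ time by one DFS plus linear bookkeeping.

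Next I would check the three stated properties. For property~1, each vertex $u$ is incident to at most one heavy edge directed downward (the edge to its heavy child) and at most one heavy edge directed upward (the edge from its parent $p$, which is heavy exactly when $u$ is $p$'s heavy child); hence the heavy edges form a vertex-disjoint union of downward paths, so every vertex lies on exactly one chain. Property~2 is immediate from the definition: a child of $u$ other than its heavy child is joined to $u$ by a light edge, which lies on no chain, so the only child of $u$ that can share $u$'s chain is the heavy child.

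For property~3 I would run the standard doubling argument. Write the root-path $u^{\uparrow T}$ as $u=u_0,u_1,\dots,u_k$, where $u_{i+1}$ is the parent of $u_i$ and $u_k$ is the root, and split this path into maximal runs of consecutive vertices lying on a common heavy chain. Between two consecutive runs there is a light edge $(u_j,u_{j+1})$, so $u_j$ is a light child of $u_{j+1}$, and its heavy child $w$ satisfies $\mathrm{size}(w)\ge \mathrm{size}(u_j)$, whence $\mathrm{size}(u_{j+1})\ge \mathrm{size}(u_j)+\mathrm{size}(w)\ge 2\,\mathrm{size}(u_j)$. Thus each light edge crossed along $u^{\uparrow T}$ at least doubles the subtree size; since subtree sizes lie in $[1,n]$, the path $u^{\uparrow T}$ contains at most $\log_2 n$ light edges and therefore meets at most $\log_2 n+1=O(\log n)$ distinct heavy chains. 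The result is classical, so I do not expect a genuine obstacle; the only points needing a little care are the bookkeeping in property~1 (so that the chains partition $V(T)$ once trivial chains are counted) and stating the doubling bound cleanly for property~3.
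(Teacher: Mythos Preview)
Your proposal is correct and is exactly the classical Sleator--Tarjan construction and argument. The paper does not give its own proof of this lemma; it is stated with a citation to \cite{ST83} and used as a black box, so your write-up simply supplies the standard details the paper omits.
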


\begin{lemma}
  \label{lemma:HLDBranchFree}
  There is a linear-time algorithm that given a tree $T$ rooted at $r$ with $n$ vertices outputs a rooted tree $T_H$ supported on the same vertex set such that
  \begin{enumerate}
  \item\label{cond1:HLDBranchFree}
    The height of $T_H$ is $O(\log^2 n).$
  \item\label{cond2:HLDBranchFree} For any subset of vertices $R$ in $T$, $R^{\uparrow T_H}$ is branch-free in $T$.
  \item\label{cond3:HLDBranchFree} Given any total ordering on tree edges $\pi$ and nonempty vertex subset $R \subseteq V(T),$ $\root^F_u \in u^{\uparrow T_H}$ for every vertex $u$ where $F = F_T(R^{\uparrow T_H}, \pi).$ 
  \item\label{cond4:HLDBranchFree} Given any total ordering on tree edges $\pi$ and nonempty vertex subsets $R_1, R_2 \subseteq V(T)$, $\root^{F_1}_u = \root^{F_2}_u$ if $R_1^{\uparrow T_H} \cap u^{\uparrow T_H} = R_2^{\uparrow T_H} \cap u^{\uparrow T_H}$ where $F_i = F_T(R_i^{\uparrow T_H}, \pi), i = 1, 2.$
  That is, the root of $u$ in any rooted spanning forest of the form $F_T(R^{\uparrow T_H}, \pi)$ is determined by the intersection of $u$'s ancestors and forest roots.
  \end{enumerate}
\end{lemma}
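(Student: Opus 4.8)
\textbf{Proof plan for \cref{lemma:HLDBranchFree}.}
The plan is to build $T_H$ by a two-stage construction: first apply the heavy-light decomposition of \cref{lemma:HLD} to $T$ to obtain the heavy chains $\{P_1,\dots,P_t\}$ together with the light edges, and then, within each heavy chain, replace the path structure by a balanced binary search tree whose in-order traversal respects the order of vertices along the chain. Concretely, for each heavy chain $P_i = (v_0,v_1,\dots,v_\ell)$ where $v_0$ is the topmost vertex, we build a balanced binary tree $B_i$ on the vertex set $\{v_0,\dots,v_\ell\}$ of depth $O(\log n)$ such that the ancestor-in-$B_i$ relation refines the ancestor-in-$P_i$ relation: if $v_a$ is an ancestor of $v_b$ in $B_i$ then $a \le b$. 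The tree $T_H$ is then assembled by hanging each $B_i$ off of its parent via the unique light edge that enters the topmost vertex $v_0$ of $P_i$ (the topmost vertex of the chain containing $r$ is the root of $T_H$). Since a root-to-leaf path in $T$ crosses $O(\log n)$ heavy chains by \cref{lemma:HLD}, and each $B_i$ contributes depth $O(\log n)$, the height of $T_H$ is $O(\log^2 n)$, giving Condition~\ref{cond1:HLDBranchFree}.

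The crucial structural property I would establish first is the following invariant on $T_H$: \emph{for every vertex $u$, the set $u^{\uparrow T_H}$ of $T_H$-ancestors of $u$ (including $u$) forms a subtree of $T$ rooted at $r$ — in fact it is a ``downward-closed along some prefix'' set}, and more usefully, for any two vertices $u, u'$, if $w = \mathrm{lca}_T(u,u')$ then $w \in u^{\uparrow T_H} \cup u'^{\uparrow T_H}$; indeed $w$ lies on the heavy chain containing it and both $u^{\uparrow T_H}$ and $u'^{\uparrow T_H}$ must pass through that chain above (or at) $w$, and the balanced-binary-tree refinement was chosen precisely so that the $T_H$-ancestors within a chain always include a vertex that is at-or-above the relevant point. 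Making this precise is the main technical content. Once it is in place, Condition~\ref{cond2:HLDBranchFree} follows: given $R$, for $r_1, r_2 \in R$ with $T$-lca $w$, we have $w \in r_1^{\uparrow T_H}$ or $w \in r_2^{\uparrow T_H}$, hence $w \in R^{\uparrow T_H}$, and more generally the $T$-lca of any two vertices of $R^{\uparrow T_H}$ is again in $R^{\uparrow T_H}$ because $R^{\uparrow T_H}$ is a union of $T_H$-root-paths, which is closed under $T_H$-lca, and one checks the $T_H$-lca dominates the $T$-lca in the needed sense.

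For Conditions~\ref{cond3:HLDBranchFree} and \ref{cond4:HLDBranchFree}, I would argue as follows. Fix $\pi$ and a nonempty $R$, set $R' = R^{\uparrow T_H}$ (branch-free in $T$ by Condition~\ref{cond2:HLDBranchFree}, so $F = F_T(R',\pi)$ is well-defined per \cref{def:forestroots}) and consider a vertex $u$. Walking from $u$ up the tree $T$ toward $r$, the first vertex of $R'$ we meet is exactly $\root^F_u$, because $F_T(R',\pi)$ only deletes edges strictly between adjacent roots, so the component of $u$ contains $u$ and extends up to — but not past — the nearest ancestor root. Now the key point is that this nearest $T$-ancestor of $u$ in $R' = R^{\uparrow T_H}$ lies in $u^{\uparrow T_H}$: every element of $R'$ that is a $T$-ancestor of $u$ is either a $T_H$-ancestor of $u$ itself, or lies on a heavy chain strictly above $u$'s chain, in which case its $T_H$-ancestors (also in $R'$) include a vertex on that chain that is a $T_H$-ancestor of $u$ and is still a $T$-ancestor of $u$. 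Taking the nearest such, we get $\root^F_u \in u^{\uparrow T_H}$, which is Condition~\ref{cond3:HLDBranchFree}. Condition~\ref{cond4:HLDBranchFree} is then immediate: $\root^F_u$ depends only on which $T$-ancestors of $u$ lie in $R'$, i.e. on $R' \cap (\text{$T$-ancestors of }u)$; but by the argument just given this equals $R' \cap u^{\uparrow T_H}$ up to elements that are $T$-ancestors but not $T_H$-ancestors of $u$, each of which is ``shadowed'' by a genuine $T_H$-ancestor in $R'$ — so $\root^F_u$ is in fact determined by $R^{\uparrow T_H} \cap u^{\uparrow T_H}$ alone, and if $R_1^{\uparrow T_H} \cap u^{\uparrow T_H} = R_2^{\uparrow T_H} \cap u^{\uparrow T_H}$ then $\root^{F_1}_u = \root^{F_2}_u$. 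The main obstacle I anticipate is pinning down the ``shadowing'' claim cleanly — that is, proving rigorously that the nearest-$T$-ancestor-in-$R'$ of any vertex is always a $T_H$-ancestor of that vertex — since it requires carefully combining the chain-crossing structure of the heavy-light decomposition with the prefix-monotonicity of the balanced binary trees $B_i$; everything else is bookkeeping about $F_T(\cdot,\pi)$ and linear-time construction.
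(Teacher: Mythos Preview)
There are two genuine gaps in your plan.

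\emph{The construction is not the right one.} Your $B_i$ is heap-ordered (parent has smaller chain index than child), whereas the paper replaces each heavy chain by a balanced binary \emph{search} tree keyed by depth. The difference is essential for Condition~\ref{cond3:HLDBranchFree}. Take $T$ to be the path $a{-}b{-}c{-}d$ rooted at $a$ (a single heavy chain). One valid heap-ordered balanced tree on $\{a,b,c,d\}$ is: $a$ the root, $b$ and $c$ its children, and $d$ a child of $b$. Then $c^{\uparrow T_H}=\{c,a\}$ and $d^{\uparrow T_H}=\{d,b,a\}$. With $R=\{d\}$ we get $R^{\uparrow T_H}=\{a,b,d\}$; the adjacent-root pairs in $T$ are $(a,b)$ and $(b,d)$, and on the path $b{-}c{-}d$ either edge may be the $\pi$-minimum. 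If $\pi$ deletes $(c,d)$, the component of $c$ in $F$ is $\{b,c\}$ with root $b\notin c^{\uparrow T_H}$; if $\pi$ deletes $(b,c)$, the component of $c$ is $\{c,d\}$ with root $d\notin c^{\uparrow T_H}$. Either way Condition~\ref{cond3:HLDBranchFree} fails. (Under the paper's BST, $d^{\uparrow T_H}=\{d,c,a\}$ because the search-tree path from $d$ up to the root passes through the median $c$; that single change makes the example work.)

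\emph{The argument for Condition~\ref{cond3:HLDBranchFree} rests on a false premise.} You assert that $\root^F_u$ is the first vertex of $R'$ encountered when walking from $u$ toward $r$ in $T$, because $F_T(R',\pi)$ ``only deletes edges strictly between adjacent roots''. But the deleted edge on a root-to-root path can lie \emph{above} $u$, in which case $u$'s component does not contain its nearest $T$-ancestor root at all --- the unique root in $u$'s component is then a $T$-\emph{descendant} of $u$. The example above already exhibits this: $\root^F_c\in\{b,d\}$, neither of which is a $T$-ancestor of $c$. Consequently the ``shadowing'' reduction (first identify $\root^F_u$ as the nearest $T$-ancestor root, then argue that vertex is a $T_H$-ancestor of $u$) never gets started, and your derivation of Condition~\ref{cond4:HLDBranchFree} from it collapses as well.

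What the paper does instead is an induction on $T_H$-depth together with the standard BST fact that any contiguous key-range is closed under taking BST-lca. The vertices of $u$'s chain that lie in $u$'s $F$-component form such a range; their BST-lca $w$ therefore lies in that range, is a $T_H$-ancestor of every element of it (in particular of the root it contains, hence $w\in R^{\uparrow T_H}$), and is a $T_H$-ancestor of $u$. This is precisely what forces $\root^F_u\in u^{\uparrow T_H}$ even when $\root^F_u$ is a $T$-descendant of $u$, and the heap-order property does not provide any analogue of it.
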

\begin{proof}
  We first present the construction of the rooted tree $T_H.$ We root $T_H$ at the root $r$ of $T$ and compute its heavy-light decomposition in linear-time via \Cref{lemma:HLD}. Let $\{P_1, \ldots, P_t\}$ be the resulting decomposition. For every path $P_i$, we build a balanced binary search tree (BST) $T_i$ over its vertices, $V(P_i)$, with respect to their depth in $T.$ The depth of a vertex in $T$ is defined as the distance to the root $r.$ In addition, we make the vertex with minimum depth the root of the BST $T_i.$ $T_H$ is then obtained from $T$ by replacing every path $P_i$ by BST $T_i.$

  To show condition~\ref{cond1:HLDBranchFree}, observe that the path $T_H[u, r]$ consists of $O(\log n)$ node-to-root paths in some balanced BSTs and $O(\log n)$ light edges. Each node-to-root path in some balanced BST has length at most $O(\log n).$ Therefore, the $T_H[u, r]$-path has length at most $O(\log^2 n)$.

  Next, we prove condition~\ref{cond2:HLDBranchFree}. For any two vertices $u$ and $v$ in $R^{\uparrow T_H}$, let $w$ be their lowest common ancestor in $T.$ Let $P_w$ be the heavy chain containing $w.$ 
  Thus, for at least one of  $T[u, r]$ and  $T[v, r],$  $w$ must be the first vertex of $P_w$ that appears on that path or else $w$ has two distinct children that belong to $P_w.$
  Then, $w$ appears in either $T_H[u, r]$ or $T_H[v, r]$ as well and thus is included in $R^{\uparrow T_H}.$

  We prove Condition~\ref{cond3:HLDBranchFree} by induction on the depth of $u$ in $T_H$, i.e. the size of $u^{\uparrow T_H}.$
  If $|u^{\uparrow T_H}| = 1$, $u$ is the root of $T_H$ and $R^{\uparrow T_H}$ contains $u$ for any nonempty $R.$
  Thus, $\root^F_u$ is $u$ itself.
  Next, we consider the case where $|u^{\uparrow T_H}| = k+1.$
  Let $v$ be the first vertex in $R^{\uparrow T_H}$ on the path $T_H[u, r].$
  Let $P$ be the heavy chain containing $v$ and $b$ be the first vertex in $P$ on the path $T_H[u, r].$

  If $P$ does not contain $u$, let $a$ be the vertex before $b$ on the path $T_H[u, r].$
  The sequence $u, a, b, v$ shows up in the same order as in the path $T_H[u, r].$
  Since $R^{\uparrow T_H}$ does not contain $a$, $R^{\uparrow T_H}$ does not contain any vertex in the subtree of $T_H$ rooted at $a$ as well as the subtree of $T$ rooted at $a.$
  Thus, $u$, $a$, and $b$ are connected in the forest $F_T(R^{\uparrow T_H}, \pi)$ and share the same root.
  The size of $b^{\uparrow T_H}$ is less than the size of $u^{\uparrow T_H}$ and we can apply induction hypothesis to argue that $\root^F_u = \root^F_b \in b^{\uparrow T_H} \subsetneq u^{\uparrow T_H}.$

  If $P$ contains $u$, we will show that $u$ is connected to some other vertex $w \in P \cap u^{\uparrow T_H}$ in the rooted forest $F_T(R^{\uparrow T_H}, \pi).$
  Let $C$ be the set of vertices in $P$ connected to $u.$
  Observe that $C$ forms a contiguous subpath of $P$ and contains one root from $R^{\uparrow T_H}$.
  Recall that the subtree in $T_H$ corresponding to $P$ is a balanced binary search tree keyed by depth in $T$.
  Let $B$ be such binary search tree.
  It is known that given a binary search tree and a range on keys, the set of nodes in the tree within the range is closed under taking lowest common ancestor.
  Let $w$ be the lowest common ancestor of all vertices in $C$ in the BST $B.$
  $w$ must be an element of $R^{\uparrow T_H}$ and therefore $\root^F_u = w.$
  This concludes the proof of Condition~\ref{cond3:HLDBranchFree}.

  To prove Condition~\ref{cond4:HLDBranchFree}, it suffices to argue the case where $R_2 = R_1 \cup \{r\}$ and $R_1^{\uparrow T_H}$ contains every ancestor of $r.$
  Specifically, we prove that $\root^{F_2}_u = \root^{F_1}_u$ for every vertex $u$ which does not have $r$ as its ancestor.
  Let $C$ be the component of $F_1$ in which $r$ lives and $w$ be the root of $C.$
  Adding $r$ as a new root removes some edge between $r$ and $w$ and divides $C$ into two components $C_1$ and $C_2.$
  Suppose that $r \in C_1$ and $w \in C_2.$
  Condition~\ref{cond3:HLDBranchFree} says that $r$ is ancestor w.r.t. $T_H$ to every vertex in $C_1.$
  However, only vertices in $C_1$ have their root changed.
  This concludes the proof of Condition~\ref{cond4:HLDBranchFree}.
\end{proof}
We now provide an algorithm for \cref{lemma:globalstretch} and prove that it works. At a high level, the algorithm will first compute an LSST. Then it will compute global stretch overestimates based on the tree $T_H$ from \cref{lemma:HLDBranchFree}. Then, it will initialize a set of roots of size $O(m/k)$ by adding all endpoints of large stretch edges as terminals, and by calling \cref{lemma:treeDecomp} to bound the degree of each component of $F$ as needed in item \ref{item:degbound} of \cref{lemma:globalstretch}.
\begin{proof}[Proof of \cref{lemma:globalstretch}]
For the weights $\bv$, we first construct a graph $G_{\bv}$ that has $\lceil m\bv_e/\|\bv\|_1 \rceil$ unweighted copies of the edge $e$ for each $e \in E(G)$. Note that $G_{\bv}$ has at most $2m$ edges:
\[ \sum_{e \in E} \left\lceil \frac{m\bv_e}{\|\bv\|_1} \right\rceil \le \sum_{e \in E} \left(1 + \frac{m\bv_e}{\|\bv\|_1}\right) = 2m. \] Let $T$ be a LSST on $G_{\bv}$ computed using \cref{thm:an}, with an arbitrarily chosen root $r$, and let $T_H$ be the tree in \cref{lemma:HLDBranchFree}. Let $\pi$ be the permutation sorted by increasing congestions (\cref{lemma:validpi}).

Notice that $G \subseteq G_{\bv}$ and thus every spanning tree/forest of $G_{\bv}$ is also a spanning tree/forest of $G.$
Furthermore, either the tree stretch or forest stretch of edge $e \in E(G)$ is equal to the one of any of $e$'s copy in $G_{\bv}.$

We now explain how to compute stretch overestimates $\wstr_e$. For $i \ge 0$, let $B_i$ be the set of vertices within distance $i$ of the root $r$ in $T_H$. Let $D = O(\log^2 n)$ be the height of $T_H$. We define
\begin{align} \wstr_e \defeq 2\sum_{i=0}^D \str^{F_T(B_i,\pi),\bell}_e. \label{eq:defupperbounds} \end{align}
In this definition, $\wstr_e$ take identical values among copies of $e$ in $G_{\bv}.$
By \cref{lemma:validpi} we know that
\begin{align} \sum_{e \in E(G_{\bv})} \wstr_e = 2\sum_{i=0}^D \sum_{e \in E(G_{\bv})} \str^{F_T(B_i,\pi),\bell}_e \le O(D \cdot m\gamma_{LSST}) = O(m\gamma_{LSST}\log^2n). \label{eq:totalstretchbound} \end{align}
Thus the total stretch bound is fine. Shortly, we will explain the full algorithm for maintaining the set of roots and why $\wstr_e$ are valid stretch overestimates for our algorithm.

To explain how we maintain the set of roots, we first explain how to initialize a set of roots. First run \cref{lemma:treeDecomp} on $T$ with uniform weights $\bw_e = 1$ for all $e \in E(G)$ to output a set $|\cW^T| = O(m/(k\log^2 n))$, and such that each component has total adjacent weight $k$ (minus the boundaries), i.e. at most $k$ adjacent edges (as $\bw_e = 1$ for all $e$). Start by defining $R_0 \assign \partial \cW^T$. So far, $|R_0| = O(m/(k\log^2 n))$.

Also for any edge $e \in E(G)$ with $\wstr_e \ge O(k\gamma_{LSST}\log^4 n)$, add both endpoints of $e$ to $R_0$.
As
\[
\sum_{e \in E(G)} \wstr_e
\le \sum_{e \in E(G)} \left\lceil \frac{m\bv_e}{\|\bv\|_1} \right\rceil \wstr_e
= \sum_{e \in E(G_{\bv})} \wstr_e
\le O(m\gamma_{LSST}\log^2 n),
\] Markov's inequality tells us that the number of edges $e \in E(G)$ with $\wstr_e \ge O(k\gamma_{LSST}\log^4 n)$ is bounded by $O(m/(k\log^2 n))$. Thus, overall $|R_0| = O(m/(k\log^2 n))$. Now, define our initial sets of branch-free roots as $R \defeq R_0^{\uparrow T_H}$. Because the height of $T_H$ is $O(\log^2 n)$ (\cref{lemma:HLDBranchFree}), we know $|R| = O(m/k)$.

We now handle edge insertions and deletions. When an edge $e = (u, v)$ is inserted or deleted, we add $u^{\uparrow T_H} \cup v^{\uparrow T_H}$ to $R$, i.e. $R \assign R \cup (u^{\uparrow T_H} \cup v^{\uparrow T_H})$. This is branch free by \cref{lemma:HLDBranchFree}. If $e$ was inserted, assign it to have $\wstr_e = 1$, as both endpoints are roots in $R$. We also update the forest $F \defeq F_T(R,\pi)$. Because $R$ is incremental and $\pi$ is a total ordering, $F$ is decremental.

We now verify all items of \cref{lemma:globalstretch}. Item \ref{item:cccount} follows because initially $|R| = O(m/k)$, and the height of $T_H$ is $O(\log^2 n)$, so each edge insertion/deletion increases the size of $R$ by $O(\log^2 n)$. Item \ref{item:avgstretchbound} follows because
\[ \sum_{e \in E(G)} \bv_e \wstr_e \le \frac{\|\bv\|_1}{m} \sum_{e \in E(G)} \left\lceil \frac{m\bv_e}{\|\bv\|_1} \right\rceil \wstr_e \le \frac{\|\bv\|_1}{m} \sum_{e \in E(G_{\bv})} \wstr_e = O(\|\bv\|_1\gamma_{LSST}\log^2n), \]
where the last inequality follows from \eqref{eq:totalstretchbound}. 

Let $F_0 \defeq F_T(R, \pi)$ be the initial rooted spanning forest to be output.
Let $\cW$ be a refinement of $\cW^T$ induced by the connectivity in $F_0.$
We output $\cW$ as the desired edge-disjoint partition of $F_0$ into $O(m/k)$ subtrees.
$\cW$ contains at most $O(m/k)$ subtrees because $F_0$ is obtained from $T$ by removing $|R| - 1$ edges and $\cW^T$ is a edge-disjoint partition of $T.$
Item \ref{item:degbound} follows because $R \supseteq \partial \cW$, where $\cW$ was a partition of $G$ into pieces of total degree $O(k)$.

We conclude by checking item \ref{item:stretchbound}, i.e. that $\wstr_e$ upper-bounds $\str^{F, \bell}_e$ at any moment for every edge $e=(u, v)$. If $u, v$ are in the same connected component of $F$, then $\str^{F,\bell}_e = \str^{T,\bell}_e \le \wstr_e$ by noting that $T = F_T(B_0^{\uparrow T_H}, \pi)$. In the other case where $u$ and $v$ are disconnected, let $R^{\uparrow T_H}$ be the current set of roots of $F$. There must be some non-negative integer $i$ (and $j$) such that $u^{\uparrow T_H} \cap R^{\uparrow T_H} = u^{\uparrow T_H} \cap B_i$ (and $v^{\uparrow T_H} \cap R^{\uparrow T_H} = v^{\uparrow T_H} \cap B_j$ respectively). To finish, note that item \ref{cond4:HLDBranchFree} of \Cref{lemma:HLDBranchFree} ensures that
  \begin{align*}
    \root^F_u &= \root^{F_i}_u, \root^{F}_v = \root^{F_j}_v, \text{ and therefore} \\
    \str^{F, \bell}_e &\le \str^{F_i, \bell}_e + \str^{F_j, \bell}_e \le \wstr_e. \qedhere
  \end{align*}
Finally, it can be checked that the total runtime is $\O(m)$, as every operation can be implemented efficiently.
\end{proof}

\subsection{Proof of \texorpdfstring{\Cref{lemma:strMWU}}{strMWU}}
\label{app:mwu}

\begin{proof}
  Let $W = O(\gamma_{LSST}\log^2 n)$ be such that items \ref{item:stretchbound}, \ref{item:avgstretchbound} of \cref{lemma:globalstretch} imply
  \begin{align*}
    \sum_{e \in E} \bv_e \wstr_e &\le W \norm{\bv}_1, \text{ and} \\
    \max_{e \in E} \wstr_e &\le k W\log^2 n.
  \end{align*}
  Let $t = 10 k W\log^2 n = \O(k)$.
  The algorithm sequentially constructs edge weights $\bv_1, \ldots, \bv_t$ in a multiplicative weight update fashion and trees $T_1, \dots, T_t$, forests $F_1, \dots, F_t$, and stretch overestimates $\wstr^1, \dots, \wstr^t$ via \cref{lemma:globalstretch}.

  Initially, $\bv_1 \defeq \mathbf{1}$ is an the all 1's vector.
  After computing $T_i$, $\bv_{i+1}$ is defined as
  \begin{align*}
    \bv_{i+1, e} \defeq \bv_{i, e} \exp\left(\frac{\wstr^{i}_e}{t}\right) = \exp\left(\frac{1}{t} \sum_{j=1}^i \wstr^{j}_e\right) \forall e \in E.
  \end{align*}
  Finally we define the distribution $\blambda$ to be uniform over the set $\{1, \dots, t\}.$

  To show the desired bound \eqref{eq:strMWU}, we first relate it with $\norm{\bv_{t+1}}$ using the following:
  \begin{align*}
    \max_{e \in E} \frac{1}{t} \sum_{i=1}^t \wstr^{i}_e \le \log \left(\sum_e \exp\left(\frac{1}{t} \sum_{i=1}^t \wstr^{i}_e\right)\right) = \log \norm{\bv_{t+1}}_1,
  \end{align*}
  where $\bv_{t+1}$ is defined similarly even though it is never used in the algorithm.

  Next, we upper bounds $\norm{\bv_{i}}_1$ inductively for every $i=1, \dots, t+1$.
  Initially, $\bv_{1} = \mathbf{1}$ and we have $\norm{\bv_{1}}_1 = m.$
  To bound $\norm{\bv_{i+1}}$, we plug in the definition and have the following:
  \begin{align*}
    \norm{\bv_{i+1}}_1
    &= \sum_e \bv_{i, e} \exp\left(\frac{\wstr^{i}_e}{t}\right) \le \sum_e \bv_{i, e} \left(1 + 2 \cdot \frac{\wstr^{i}_e}{t}\right) \\
    &= \norm{\bv_{i}}_1 + \frac{2}{t} \sum_{e} \bv_{i, e} \wstr^{i}_e \le \norm{\bv_{i}}_1 + \frac{2}{t} W \norm{\bv_{i}}_1 = \left(1 + \frac{2W}{t}\right) \norm{\bv_{i}}_1,
  \end{align*}
  where the first inequality comes from the bound $\wstr^{i}_e \le k W = 0.1 t$ and $e^x \le 1 + 2x$ for $0 \le x \le 0.1.$
  Applying the inequality iteratively yields
  \begin{align*}
    \exp\left(\max_{e \in E} \frac1t\sum_{i=1}^t\wstr^{i}_e\right) = \bv_{t+1,e} \le \norm{\bv_{t+1}}_1 \le \left(1 + \frac{2W}{t}\right)^t \norm{\bv_{1}}_1 \le \exp(2W) m.
  \end{align*}
  The desired bound \eqref{eq:strMWU} now follows by taking the logarithm of both sides.
\end{proof}

\section{Cost and Capacity Scaling for Min-Cost Flows}
\label{appendix:scaling}

In this section, we describe a cost and capacity scaling scheme \cite{G85, GT89, AGOT92} that reduces the min-cost flow problem to $O(\log mU \log C)$ instances with polynomially bounded cost and capacity.
We prove the following lemma:
\begin{lemma}
\label{lemma:polyScaling}
Suppose there is an algorithm $\cA$ that solves \eqref{eq:mincostopt} on any $m$-edge graph and $\poly(m)$-bounded integral demands, costs, and lower/upper capacities in $T_{\cA}(m)$ time.
There is an algorithm that on a graph $G=(V, E)$ and a min-cost flow instance $\cI = (G, \bd, \bc, \bu^-, \bu^+)$ with integral demands $\bd$,  integral lower/upper capacities $\bu^-, \bu^+ \in \{-U, \dots, U\}^E$, and integral costs $\bc \in \{-C, \ldots, C\}^E,$ solves $\cI$ exactly in $O(T_{\cA}(m) \log m  \log mU \log C)$-time.
\end{lemma}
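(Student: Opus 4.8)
\textbf{Proof plan for \cref{lemma:polyScaling}.}

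The plan is to reduce the general min-cost flow instance to polynomially bounded instances through two nested scaling loops: an outer loop on costs and an inner loop on capacities. First I would observe that the lower capacities $\bu^-$ can be eliminated by the standard substitution $\bf_e \mapsto \bf_e - \bu^-_e$, which shifts the demands by $\mB^\top \bu^-$ and leaves an instance with zero lower capacities and upper capacities $\bu^+ - \bu^- \in \{0,\dots,2U\}^E$. (One must also absorb the constant $\bc^\top \bu^-$ into the objective value, which does not change the minimizer.) So it suffices to handle instances with $\bu^- = \vec 0$, $\bu^+ \le 2U$, and $\bc \in \{-C,\dots,C\}^E$. I would also note that if the demands $\bd$ are not bounded by $\poly(m) \cdot U$, the instance is infeasible (any feasible flow has congestion at most $U$ on each of $m$ edges, so $\|\bd\|_\infty \le mU$), hence we may assume $\|\bd\|_\infty \le mU$ throughout.

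Next I would carry out \emph{capacity scaling} to make the capacities polynomially bounded, given that costs are already polynomially bounded (this is the inner loop, invoked as a black box by the outer loop). Write $U' = $ the current capacity bound. Process bits of the capacities from most significant to least significant: maintain a flow $\bf^{(i)}$ that is optimal for the instance $\cI^{(i)}$ with capacities $\lfloor \bu^+ / 2^{i} \rfloor$ and demands obtained by rounding $\bd$ compatibly (formally, demands $\bd^{(i)}$ so that $\cI^{(i)}$ is feasible and $\bf^{(i)}$ scaled up by $2$ is nearly feasible for $\cI^{(i-1)}$). When passing from bit $i$ to bit $i-1$, set $\bf' \leftarrow 2\bf^{(i)} + \{0,1\}$-corrections on each edge, which violates capacities and demands by only $O(1)$ per edge; the residual problem of routing this $O(1)$-per-edge discrepancy and fixing the at-most-$1$ capacity violations is again a min-cost flow instance with $O(m)$-bounded capacities and the same $\poly(m)$-bounded costs, solvable by one call to $\cA$. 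After $O(\log U')$ such refinements we recover an exact optimum for the original capacities. Each level costs $T_{\cA}(m)$, giving $O(T_{\cA}(m)\log mU)$ for one capacity-scaling pass. The subtlety here is bookkeeping the demand-rounding so that feasibility is preserved at every level and the corrections stay $O(1)$ per edge; this is standard (cf. \cite{G85,AGOT92}) but must be stated carefully.

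Finally the outer loop is \emph{cost scaling}. Starting from $\bc \in \{-C,\dots,C\}^E$, I would process cost bits from most to least significant. At level $j$ we have an optimal flow for the instance with costs $\lfloor \bc/2^j \rfloor$; to refine to level $j-1$, note that the new costs are $2\lfloor\bc/2^j\rfloor + \{0,1\}^E$, and a flow optimal at level $j$ is ``$\varepsilon$-optimal'' at level $j-1$ in the sense that its cost exceeds the optimum by at most $O(m)$ (the total cost change from the least significant bit, since each edge has capacity $\le 2U$ but after capacity scaling we work with $\le \poly(m)$ capacities; more precisely the optimum can move by at most $m \cdot \|\bu^+\|_\infty \cdot 1$). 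One then has to convert an $O(m)$-approximate solution into an exact one; the clean way is the residual-graph viewpoint: the optimality gap at level $j-1$ is witnessed by negative-cost cycles in the residual graph, and cancelling them is itself a min-cost flow computation on the residual graph with $\poly(m)$-bounded costs and capacities. Invoking the capacity-scaling routine (itself $O(T_{\cA}(m)\log mU)$) once per cost-bit, over $O(\log C)$ cost bits, and with the extra $O(\log m)$ factor coming from handling the residual-graph reductions where capacities can temporarily grow, yields the claimed total running time $O(T_{\cA}(m)\log m \log mU \log C)$.

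\textbf{Main obstacle.} The genuinely delicate part is not any single reduction but ensuring that the \emph{residual-graph instances created during refinement always have polynomially bounded capacities} even though the original capacities are only $2U$: naively the residual capacities inherit the bound $U$, so one must interleave the cost-scaling and capacity-scaling loops correctly (capacity-scale \emph{inside} each cost-scaling step, on the current residual graph) rather than running them sequentially. I would structure the proof so that every call to $\cA$ is on a graph with both costs and capacities in $\poly(m)$, tracking the invariant explicitly at each level; the $\log m$ factor in the final bound is precisely the price of one extra capacity-scaling-style pass needed to tame the residual capacities each time.
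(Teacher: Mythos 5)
Your proposal takes a genuinely different route from the paper: you propose classical Gabow-style \emph{bit} scaling on both costs and capacities, while the paper reduces to min-cost \emph{circulation} first, then does cost scaling via an $\eps$-optimality potential that decays geometrically (rounding residual costs to multiples of $\eps/m^{8}$, not truncating bits), and capacity scaling by first computing a $\poly(m)$-approximate solution (a maximum-bottleneck negative cycle found by binary search) and rounding capacities relative to that value. Bit-scaling could plausibly be made to work, but your writeup has concrete gaps that need to be filled before it would.

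The central gap is in your capacity-scaling inner loop. You assert that ``the residual problem of routing this $O(1)$-per-edge discrepancy ... is again a min-cost flow instance with $O(m)$-bounded capacities,'' but this is not automatic: after doubling $\bf^{(i)}$ and updating to the level-$(i-1)$ capacities, the residual graph inherits capacities as large as $\lfloor \bu^+_e/2^{i-1}\rfloor$, which can be $\Theta(U)$. To justify capping them at $\poly(m)$ you must argue structurally that the optimal correction circulation has small per-edge magnitude --- e.g., by observing that the only residual arcs with negative reduced cost under the level-$i$ potentials are the $\{0,1\}$ slack arcs created by the truncation, each of residual capacity at most $1$, so the correction decomposes into at most $m$ cycles of unit magnitude. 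This argument is the heart of the reduction and is entirely missing. Second, you work with demands $\bd$ directly rather than first reducing to circulation as the paper does (a simple and lossless reduction, e.g.\ attach a super-source/sink), and you flag the ``demand-rounding bookkeeping'' as delicate but never resolve it --- this is exactly the kind of spot where a sketch can silently fail. Third, the $\log m$ multiplicative factor in the claimed bound appears in the paper from an explicit binary search over bottleneck values; your explanation (``residual-graph reductions where capacities can temporarily grow'') is too vague to audit, and if your plan were carried out correctly with the capping argument above, it is not clear an extra $\log m$ factor is even needed, which suggests the accounting has not actually been done.
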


Instead of \eqref{eq:mincostopt}, we consider
the equivalent \emph{min-cost circulation} problem:
\begin{align}
  \label{eq:MinCostCirculation}
  \min_{\substack{\mB^\top\bf=0 \\ 0 \le \bf_e \le \bu_e \forall e \in E}} \bc^\top \bf,
\end{align}
where cost $\bc \in \{-C, \dots, C\}^E$ and capacity $\bu \in \{1, \dots, U\}^E.$
It satisfies strong duality with dual problem:
\begin{align}
  \label{eq:MinCostCirculationDual}
  \max_{\substack{\mB \by + \bss^- - \bss^+ = \bc \\ \bss^-, \bss^+ \in \R^E_{\ge 0}}} -\bss^{+ \top} \bu.
\end{align}

Given a (directed) graph $G=(V, E)$ with costs $\bc$, capacities $\bu$, and some feasible circulation $\bf$ to \eqref{eq:MinCostCirculation}, we can define its residual graph $G(\bf) = (V, E(\bf))$ with costs $\bc(\bf)$, and capacities $\bu(\bf)$ as follows.
For any arc (directed edge) $e=(u, v) \in E$, we include $e$ with cost $\bc_e$ and capacity $\bu_e - \bf_e$ if it's not saturated, i.e. $\bf_e < \bu_e.$
We also include its reverse arc $\rev(e) = (v, u)$ with cost $-\bc_e$ and capacity $\bf_e$ if $\bf_e > 0.$
Given any directed graph $G$, we use $\mB(G) \in \{-1, 0, 1\}^{E \times V}$ to denote its edge-vertex incidence matrix that respects the edge orientation.

Given some positive integers $m, C, U$, we define $T_{MCC}(m, C, U)$ to be the time for exactly solving \eqref{eq:MinCostCirculation} on a graph of at most $m$ arcs with costs $\bc \in \{-C, \dots, C\}^E$, capacities $\bu \in \{1, \dots, U\}^E$ w.h.p..
A direct implication of \cref{thm:main} shows that
\begin{corollary}
\label{coro:polyMCF}
$T_{MCC}(m, \poly(m), \poly(m)) = m^{1+o(1)}.$
\end{corollary}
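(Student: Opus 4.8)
\textbf{Proof proposal for \Cref{coro:polyMCF}.}
The plan is to observe that \Cref{coro:polyMCF} is an immediate consequence of \Cref{thm:main}, once we note that the min-cost circulation problem \eqref{eq:MinCostCirculation} is a special case of the min-cost flow problem \eqref{eq:mincostopt}. Concretely, given an instance of \eqref{eq:MinCostCirculation} on a graph with at most $m$ arcs, costs $\bc \in \{-C,\dots,C\}^E$, and capacities $\bu \in \{1,\dots,U\}^E$, we form the min-cost flow instance $\cI = (G, \bd, \bc, \bu^-, \bu^+)$ with $\bd \defeq \vec 0 \in \Z^V$, lower capacities $\bu^- \defeq \vec 0 \in \Z^E$, and upper capacities $\bu^+ \defeq \bu$. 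The constraint $\mB^\top \bf = \bd = \vec 0$ is exactly the circulation constraint $\mB^\top\bf = 0$, and the box constraint $0 = \bu^-_e \le \bf_e \le \bu^+_e = \bu_e$ matches the capacity constraints of \eqref{eq:MinCostCirculation}, so an exact optimal solution to $\cI$ is an exact optimal solution to the min-cost circulation instance.

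Next, I would apply \Cref{thm:main} to this instance. All of $\bd$, $\bu^-$, $\bu^+$, and $\bc$ are integral, the capacities are bounded by $U$, and the costs are bounded by $C$, so the hypotheses of \Cref{thm:main} are met. \Cref{thm:main} then produces an exact min-cost flow in time $m^{1+o(1)} \log U \log C$ with high probability.

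Finally, I would specialize to the polynomially bounded regime $C = \poly(m)$ and $U = \poly(m)$, where $\log U = O(\log m)$ and $\log C = O(\log m)$. Substituting gives running time $m^{1+o(1)} \cdot O(\log m) \cdot O(\log m) = m^{1+o(1)} \cdot \mathrm{polylog}(m) = m^{1+o(1)}$, since a $\mathrm{polylog}(m)$ factor is absorbed into the $m^{o(1)}$ term. The high-probability guarantee carries over directly from \Cref{thm:main}. Hence $T_{MCC}(m,\poly(m),\poly(m)) = m^{1+o(1)}$, as claimed.

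There is no substantive obstacle here: the only things to be careful about are (i) checking that the reduction from \eqref{eq:MinCostCirculation} to \eqref{eq:mincostopt} preserves the number of arcs/edges up to a constant (it is the identity on the edge set, so this is trivial), and (ii) confirming that the polylogarithmic overhead in \Cref{thm:main} is genuinely dominated by $m^{o(1)}$ when $U$ and $C$ are polynomially bounded, which it is. The corollary is purely a matter of instantiating \Cref{thm:main}'s parameters.
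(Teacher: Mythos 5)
Your proof is correct and matches the paper's intent exactly: the paper gives no separate argument for \Cref{coro:polyMCF} beyond calling it "a direct implication of \Cref{thm:main}," and your instantiation ($\bd = \vec 0$, $\bu^- = \vec 0$, $\bu^+ = \bu$, followed by absorbing the $\log U \log C = \mathrm{polylog}(m)$ factor into $m^{o(1)}$) is precisely that implication spelled out.
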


\subsection{Reduction to Polynomially Bounded Cost Instances}
\label{sec:costScaling}

In this section, we present a cost scaling scheme (\cref{algo:costscaling}) for reducing to $O(\log C)$ instances with polynomially bounded cost.
\begin{lemma}
\label{lemma:costScaling}
Suppose there is an algorithm $\cA$ that gives an integral exact minimizer to \eqref{eq:MinCostCirculation} on any $m$-edge graph and $m^{10}$-bounded integral costs, $U$-bounded integral capacities in $T_{\cA}(m, U)$ time.
\cref{algo:costscaling} takes as input a graph $G = (V, E)$ and a instance of \eqref{eq:MinCostCirculation} $\cI = (G, \bc, \bu)$ with costs $\bc \in \{-C, \dots, C\}^E$ and capacities $\bu \in \{1, \dots, U\}^E$, solves $\cI$ exactly in $O(T_{\cA}(m, U) \log C + m \log C)$-time.
In other words, $T_{MCC}(m, C, U) = O((T_{MCC}(m, m^{10}, U) + m)\log C).$\footnote{In the proof, we do not make any effort on reducing the exponent of the polynomial bound on costs.}
\end{lemma}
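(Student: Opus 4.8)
The plan is to implement a standard \emph{cost scaling} (bit-by-bit refinement) argument for the min-cost circulation problem, using the polynomially-bounded-cost oracle $\cA$ as the inner solver. The idea is to process the bits of $\bc$ from most significant to least significant, maintaining at each scale an optimal circulation together with a near-optimal dual potential vector $\by$, and then argue that moving to the next (finer) scale only requires solving a min-cost circulation instance on the residual graph whose \emph{reduced costs} are polynomially bounded. Concretely, I would write $\bc$ in binary with $L = O(\log C)$ bits, and for $k = 0,1,\dots,L$ let $\bc^{(k)} \defeq \lceil \bc / 2^{L-k} \rceil$ (or $\lfloor \cdot \rfloor$), so $\bc^{(0)}$ is trivially solvable and $\bc^{(L)} = \bc$. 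At stage $k$ I would have on hand an optimal circulation $\bf^{(k)}$ and an optimal dual $\by^{(k)}$ for the instance $(G, \bc^{(k)}, \bu)$, certified via complementary slackness (the reduced cost $\bc^{(k)}_e + \by^{(k)}_u - \by^{(k)}_v$ has the correct sign pattern relative to whether $e$ is saturated/empty).

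The key step is the refinement from scale $k$ to scale $k+1$. Since $\bc^{(k+1)}_e \in \{2\bc^{(k)}_e - 1, 2\bc^{(k)}_e, 2\bc^{(k)}_e+1\}$, after doubling $\by^{(k)}$ the reduced costs $\bc^{(k+1)}_e + 2\by^{(k)}_u - 2\by^{(k)}_v$ of the scale-$(k+1)$ instance differ from the (sign-correct) doubled scale-$k$ reduced costs by only $\pm 1$; hence in the residual graph $G(\bf^{(k)})$, the reduced costs are bounded in absolute value by a small constant — in particular they are $\poly(m)$-bounded (indeed $O(1)$, though I only need $m^{10}$). Then finding an optimal circulation for $(G, \bc^{(k+1)}, \bu)$ reduces to finding a min-cost circulation \emph{in the residual graph} $G(\bf^{(k)})$ with these bounded reduced costs and residual capacities $\bu(\bf^{(k)}) \in \{0,\dots,U\}$; adding the resulting residual circulation to $\bf^{(k)}$ yields $\bf^{(k+1)}$, and I also recover an updated optimal potential $\by^{(k+1)}$ from the dual of this residual solve (using that reduced-cost optimality is invariant under the potential transformation). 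Since reduced costs in the residual graph correspond to reduced costs in the original graph up to the fixed potential shift $2\by^{(k)}$, and the oracle $\cA$ returns an integral optimum, correctness follows from LP duality / complementary slackness for min-cost circulation (Equations \eqref{eq:MinCostCirculation}--\eqref{eq:MinCostCirculationDual}). Each refinement is one call to $\cA$ on an $m$-arc graph with $m^{10}$-bounded costs plus $O(m)$ bookkeeping, for a total of $O((T_{\cA}(m,U)+m)\log C)$ time, which is exactly the claimed bound $T_{MCC}(m,C,U) = O((T_{MCC}(m,m^{10},U)+m)\log C)$.

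I would also need to handle two small technical points: first, the base case $k=0$ where $\bc^{(0)}$ has entries in $\{-1,0,1\}$ (or even all zero if we round appropriately) is solved directly by one oracle call; second, I must track that the potentials $\by^{(k)}$ stay integral and that the ``doubling'' step preserves the complementary-slackness certificate exactly, so that the residual instance's reduced costs are genuinely $O(1)$-bounded rather than merely bounded by something growing with $k$. This last invariant — that at every scale we carry a \emph{valid optimal dual certificate} and not just an optimal primal solution — is the crux of the argument and the part I expect to require the most care to state precisely; once it is set up, the bit-complexity bookkeeping and the telescoping $\bf^{(k)} \to \bf^{(k+1)}$ update are routine. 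Combining \cref{lemma:costScaling} with the analogous capacity-scaling reduction (which reduces $U$ to $\poly(m)$ at the cost of an $O(\log m \log mU)$ factor) and \cref{coro:polyMCF} then gives \cref{lemma:polyScaling}.
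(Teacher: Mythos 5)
Your plan is the right general shape---a cost-scaling loop with $O(\log C)$ rounds, each calling the polynomially-bounded-cost oracle on a residual problem---but there is a real gap you have correctly flagged and then not actually closed: the oracle $\cA$ returns only a primal optimal circulation, not a dual certificate. Your refinement step assumes that at the end of round $k$ you have both $\bf^{(k)}$ and an optimal potential $\by^{(k)}$, and you say you would ``recover an updated optimal potential $\by^{(k+1)}$ from the dual of this residual solve,'' but no such dual is provided. This is not an afterthought; it is load-bearing, because without $\by^{(k)}$ you cannot form the reduced residual costs whose lower bound $-1$ is the whole point of the refinement. The paper closes this gap with \cref{lemma:computeDual}: given a primal optimum $\bf$, one computes optimal potentials by attaching a dummy source $s$ with zero-cost arcs to every vertex of the residual graph and running a single-source shortest-path computation from $s$, which is itself reduced to one more min-cost circulation solve (and the distances are integral since the costs are). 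That costs one extra $T_{\cA}$ per round, still within the $O((T_{\cA}+m)\log C)$ budget. Your proposal needs an analogous subroutine spelled out, or it does not compile into a proof.

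Two smaller issues. First, the scheme you use differs from the paper's: you binary-truncate the cost vector and maintain exact optimality for each $\bc^{(k)}$, whereas the paper maintains $\eps$-optimality (\cref{def:epsOpt}), halves $\eps$ each round, and rounds reduced residual costs to multiples of $\eps/m^8$ (\cref{def:costRoundedResG}). Both are standard cost-scaling variants and both deliver the claimed bound; yours is arguably cleaner to state since the invariant is exact optimality rather than approximate optimality, while the paper's rounding-based version is robust to a non-integral potential and does not need the ``$\bc^{(k+1)}\in\{2\bc^{(k)}-1,2\bc^{(k)}\}$'' arithmetic. Second, you claim the residual reduced costs are ``bounded in absolute value by a small constant.'' They are not: they are bounded \emph{below} by $-1$, but residual arcs can have arbitrarily large positive reduced cost. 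You must explicitly drop (or cap) residual arcs with reduced cost larger than, say, $m$, noting that such arcs can never lie on a negative simple cycle and hence never appear in an optimal residual circulation; this is exactly what the paper does in \cref{rem:costRoundedResG}. With that pruning, the inner instance has costs in $\{-1,\dots,m\}\subseteq\{-m^{10},\dots,m^{10}\}$ and the oracle applies.
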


\begin{algorithm}[!ht]
  \caption{Cost Scaling Scheme for Solving \eqref{eq:MinCostCirculation} \label{algo:costscaling}}
  \SetKwProg{Proc}{procedure}{}{}
  \Proc{$\textsc{CostScaling}(G = (V, E), \bc \in \{-C, \dots, C\}^E, \bu \in \{1, \dots, U\}^E)$}{
    $\bf^{(0)} \gets 0.$ \\
    $T \gets O(\log C)$ \\
    \For{$t = 0, \dots, T - 1$}{
        Let $\wt{G}(\bf^{(t)}), \wt{\bc}(\bf^{(t)}), \bu(\bf^{(t)})$ be the cost rounded residual graph (\cref{def:costRoundedResG}) of $\bf^{(t)}.$ \\
        Solve \eqref{eq:MinCostCirculation} on $\wt{G}(\bf^{(t)}), \wt{\bc}(\bf^{(t)}), \bu(\bf^{(t)})$ \\
        Let $\bDelta_{\bf}$ be the primal optimal. \\
        Extract dual optimal $\bDelta_{\by}$ via \cref{lemma:computeDual}. \\
        $\bf^{(t+1)} \gets \bf^{(t)} + \bDelta_{\bf}$ \\
        $\by^{(t+1)} \gets \by^{(t)} + \bDelta_{\by}$
    }
    Output $\bf^{(T)}$
  }
\end{algorithm}

In \eqref{eq:MinCostCirculation}, the problem is equivalent under any perturbation to the cost with $\mB\by$ for any real vector $\by \in \R^V.$
To see this, given any circulation $\bf$ (not even feasible), the cost $\bc^\top \bf$ is equal to $(\bc - \mB \by)^\top \bf$ for any $\by$ because $\mB^\top \bf = 0.$
Given such $\by$, we define the \emph{reduced cost} of $\bc$ w.r.t. $\by$ as $\bc - \mB \by.$

Here we introduce the idea of \emph{$\eps$-optimality} which will be used to characterize exact minimizers to integral instance of \eqref{eq:MinCostCirculation}.
\begin{definition}
\label{def:epsOpt}
Given a parameter $\eps > 0$, and a feasible circulation $\bf$ to \eqref{eq:MinCostCirculation}, we say $\bf$ is \emph{$\eps$-optimal} if there is some vertex potential $\by \in \R^V$ such that $\min_e (\bc(\bf) - \mB(\bf)\by)_e > -\eps,$ where $G(\bf)$ is the residual graph, $\bc(\bf)$ is the residual cost w.r.t. $\bf$, and $\mB(\bf)$ is the edge-vertex incidence matrix of $G(\bf).$
\end{definition}

From \cref{def:epsOpt}, the $0$ circulation is $C$-optimal as initial cost of any edge is at least $-C.$
In the integral case, a flow $\bf$ is an exact minimizer if it is $1 / (n+1)$-optimal:
\begin{lemma}
\label{lemma:epsOpt}
In the case where the costs $\bc$ in \eqref{eq:MinCostCirculation} is from $\{-C, \dots, C\}^E$, a feasible integral circulation $\bf$ is an exact minimizer if it is $1 / (n+1)$-optimal.
\end{lemma}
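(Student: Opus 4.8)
\textbf{Proof proposal for \cref{lemma:epsOpt}.} The plan is to use the standard duality-plus-integrality argument that underlies cost-scaling algorithms. The claim is that if an integral feasible circulation $\bf$ is $1/(n+1)$-optimal, then it is in fact optimal for \eqref{eq:MinCostCirculation} with integral costs $\bc \in \{-C,\dots,C\}^E$. The key structural fact I would invoke first is the classical characterization: a feasible circulation $\bf$ is optimal for \eqref{eq:MinCostCirculation} if and only if the residual graph $G(\bf)$ contains no negative-cost cycle (with respect to the residual costs $\bc(\bf)$). This is just the optimality condition for \eqref{eq:MinCostCirculation} combined with the fact that any two feasible circulations differ by a circulation, which decomposes into cycles in the residual graph; if all residual cycles are nonnegative, no improving direction exists. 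So it suffices to show that $1/(n+1)$-optimality rules out a negative residual cycle when costs are integral.

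First I would unpack the $1/(n+1)$-optimality hypothesis: there exists a potential $\by \in \R^V$ with $(\bc(\bf) - \mB(\bf)\by)_e > -1/(n+1)$ for every residual arc $e$. Now consider any cycle $W$ in the residual graph $G(\bf)$; since it is a cycle, the potential terms telescope and cancel, so the residual cost of $W$ equals $\sum_{e \in W} (\bc(\bf) - \mB(\bf)\by)_e$. A residual cycle has at most $n$ arcs (a simple cycle visits each vertex at most once; and it suffices to consider simple cycles when checking for negative cycles), so this sum is strictly greater than $n \cdot (-1/(n+1)) = -n/(n+1) > -1$. On the other hand, each residual cost $\bc(\bf)_e$ is $\pm\bc_{e'}$ for some original edge $e'$, hence an integer, so the residual cost of $W$ is an integer. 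An integer strictly greater than $-1$ is at least $0$. Therefore every residual cycle has nonnegative cost, so $\bf$ is optimal.

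The argument has essentially no hard step — it is a two-line telescoping-plus-integrality observation — but the one place to be careful is making precise why it suffices to consider \emph{simple} residual cycles (bounded by $n$ arcs) rather than arbitrary residual circulations, and why "no negative residual cycle'' is equivalent to optimality of $\bf$. Both are standard: a residual circulation that improves the objective decomposes into a sum of (simple) residual cycles with nonnegative coefficients, and at least one of these cycles must itself have negative cost, giving the desired contradiction; conversely a single negative residual cycle, pushed by a small integral amount, strictly improves the objective. I would state these as the two facts being used and then give the three-line telescoping computation above. No calculation beyond the inequality $n/(n+1) < 1$ is needed.
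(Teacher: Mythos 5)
Your proof is correct and matches the paper's argument in all essentials: both reduce optimality to the absence of negative residual cycles, use the telescoping of the potential term along a cycle (the paper phrases this as cycle costs being invariant under the shift $\bc(\bf)\mapsto\bc(\bf)-\mB(\bf)\by$), bound a simple residual cycle by $n$ arcs each of reduced cost strictly greater than $-1/(n+1)$, and conclude via integrality that the cycle cost is a nonnegative integer. Your treatment is slightly more explicit about why it suffices to consider simple cycles, but the route is the same.
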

\begin{proof}
Let $\wh{\bc} = \bc(\bf) - \mB(\bf) \by$ be the witness of $1/(n+1)$-optimality of $\bf.$
The cost of every circulation in $G(\bf)$ is identical between $\bc(\bf)$ and $\wh{\bc}.$
In the residual network of $\bf$, every simple cycle has cost at least $-n / (n+1)$ due to $\min_e \wh{\bc}_e > -1 / (n+1).$
However, since $\bf$ is integral, so do its residual graph, costs, and capacities.
Every negative cycle in $G(\bf)$ should have cost at most $-1.$
The fact that every simple cycle in $G(\bf)$ has cost at least $-n / (n+1) > -1$ denies the existence of negative cycles in $G(\bf)$.
Thus, $0$ is the exact minimizer to the residual problem w.r.t. $\bf$ and $\bf$ is an optimal solution to \eqref{eq:MinCostCirculation}.
\end{proof}

Intuitively, \cref{algo:costscaling} works by computing a augmenting flow $\bDelta$ given a $\eps$-optimal solution $\bf$ such that $\bf + \bDelta$ is $(\eps / 2)$-optimal.
Thus, the algorithm runs for $O(\log C + \log n)$ iterations until it reaches a $1 / (n+1)$-optimal solution.
$\bDelta$ is computed via solving \eqref{eq:MinCostCirculation} with costs rounded to polynomial size.
That is, given an integral feasible circulation $\bf$, we define a \emph{rounded residual graph} $\wt{G}_{\bf}$ as follows:
\begin{definition}
\label{def:costRoundedResG}
Given a $\eps$-optimal integral circulation $\bf$ w.r.t. a vertex potential $\by \in \R^V$, we define its \emph{cost rounded residual graph} $\wt{G}(\bf) = (V, \wt{E}(\bf))$ with costs $\wt{\bc}(\bf)$ and capacities $\bu(\bf)$ as follows:
Let $\wh{\bc}(\bf)$ be the reduced cost of $\bc(\bf)$ w.r.t. $\by$, i.e. $\wh{\bc}(\bf) = \bc(\bf) - \mB(\bf) \by$.
For any arc $e=(u, v) \in G(\bf)$, include $e$ in $\wt{G}(\bf)$ with the same residual capacity $\bu(\bf)_{e}$ and cost $\wt{\bc}(\bf)_e$ obtained by rounding $\wh{\bc}(\bf)_{e}$ to the nearest integral multiple of $\eps / m^8.$
\end{definition}

\begin{rem}
\label{rem:costRoundedResG}
When solving \eqref{eq:MinCostCirculation} on $\wt{G}(\bf)$, we can always ignore edges whose cost is more than $\eps m.$
Any simple cycle containing such edge has non-negative cost because costs are at least $-\eps.$
There will be an optimal solution that does not use any of such edge.

Thus, every edge we care about is an integral multiple of $\eps / m^8$ within the range $[-\eps, \eps m]$.
Via dividing the costs by $\eps / m^8$, all the costs are integers within $\{-m^{10}, \dots, m^{10}\}.$
\end{rem}

Given a current $\eps$-optimal integral circulation $\bf$ w.r.t. $\by$, \cref{algo:costscaling} finds a augmenting flow $\bDelta_{\bf}$ via solving \eqref{eq:MinCostCirculation} on the instance $\cI = (\wt{G}(\bf), \wt{\bc}(\bf), \bu(\bf))$ with polynomially bounded costs (\cref{rem:costRoundedResG}).
Let $\bDelta_{\by}$ be the corresponding dual \eqref{eq:MinCostCirculationDual} optimal to the instance $\cI$.
We show that $\bf + \bDelta_{\bf}$ is $\eps/2$-optimal w.r.t. $\by + \bDelta_{\by}.$
This is formulated as the following lemma:
\begin{lemma}
\label{lemma:costRoundedProgress}
Given an $\eps$-optimal integral circulation $\bf$ w.r.t. $\by$, let $\bDelta_{\bf}$ and $\bDelta_{\by}$ be the optimal primal dual solution to \eqref{eq:MinCostCirculation} on the instance $\cI = (\wt{G}(\bf), \wt{\bc}(\bf), \bu(\bf))$.
$\bf + \bDelta_{\bf}$ is $\eps/2$-optimal w.r.t. $\by + \bDelta_{\by}.$
\end{lemma}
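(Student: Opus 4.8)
\textbf{Proof proposal for \cref{lemma:costRoundedProgress}.}

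The plan is to track the reduced cost before and after adding the augmenting circulation, using both the $\eps$-optimality of $\bf$ and the complementary slackness / dual feasibility of the polynomially-bounded subproblem. Let $\wh{\bc}(\bf) = \bc(\bf) - \mB(\bf)\by$ be the reduced costs on the residual graph $G(\bf)$, so $\eps$-optimality says $\wh{\bc}(\bf)_e > -\eps$ for all $e \in E(\bf)$. Let $\wt{\bc}(\bf)$ be the rounded costs on $\wt{G}(\bf)$ from \cref{def:costRoundedResG}, so $|\wt{\bc}(\bf)_e - \wh{\bc}(\bf)_e| \le \eps/m^8$ for every $e$. Writing $\mB' \defeq \mB(\bf)$ for the incidence matrix of $G(\bf)$, the key object is the reduced cost on $G(\bf + \bDelta_{\bf})$ with respect to the \emph{new} potential $\by + \bDelta_{\by}$. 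I would first establish that on edges that are \emph{not} newly created in $G(\bf+\bDelta_{\bf})$ (i.e. edges whose residual capacity was not just opened up by $\bDelta_{\bf}$), the new reduced cost equals $\wh{\bc}(\bf)_e - (\mB' \bDelta_{\by})_e$, and this is at least $\wt{\bc}(\bf)_e - (\mB'\bDelta_{\by})_e - \eps/m^8$, which by dual feasibility of $\bDelta_{\by}$ for the rounded problem is $\ge -\eps/m^8 \ge -\eps/2$. For edges newly created in $G(\bf+\bDelta_{\bf})$, these are reverses of edges on which $\bDelta_{\bf}$ pushed flow, i.e. edges $e$ with $(\bDelta_{\bf})_{e} > 0$ in $\wt G(\bf)$; by complementary slackness for the optimal primal/dual pair of the rounded problem, such edges have $\wt{\bc}(\bf)_e = (\mB'\bDelta_{\by})_e$, so the reversed edge $\rev(e)$ has rounded reduced cost exactly $0$ with respect to $\by+\bDelta_{\by}$, hence true reduced cost $\ge -\eps/m^8 \ge -\eps/2$. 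Combining the two cases gives $\min_e (\bc(\bf+\bDelta_{\bf}) - \mB(\bf+\bDelta_{\bf})(\by+\bDelta_{\by}))_e \ge -\eps/2$, which is exactly $\eps/2$-optimality of $\bf + \bDelta_{\bf}$.

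The steps in order: (1) unpack the definitions of residual graph, reduced cost, and $\eps$-optimality; (2) record the rounding bound $|\wt{\bc}(\bf)_e - \wh{\bc}(\bf)_e| \le \eps/m^8$ and the fact (\cref{rem:costRoundedResG}) that edges of rounded cost $> \eps m$ can be ignored since they cannot lie on any negative cycle; (3) write down the optimality conditions (dual feasibility and complementary slackness) for the primal--dual pair $(\bDelta_{\bf}, \bDelta_{\by})$ on the polynomially-bounded instance $\cI$; (4) do the case split on edges of $G(\bf+\bDelta_{\bf})$ — surviving-from-$G(\bf)$ edges vs.\ freshly-created reverse edges — and bound the reduced cost with respect to $\by+\bDelta_{\by}$ in each case; (5) conclude $\eps/2$-optimality. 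I would also note in passing (or defer to the proof of \cref{lemma:costScaling}) that $\bf+\bDelta_{\bf}$ remains integral and feasible, since $\bDelta_{\bf}$ is an integral circulation respecting residual capacities.

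The main obstacle I expect is step (4), specifically handling the edges that are newly present in the residual graph $G(\bf+\bDelta_{\bf})$ but absent from $G(\bf)$: one must argue carefully that every such edge is the reverse of an edge of $\wt G(\bf)$ on which $\bDelta_{\bf}$ carried strictly positive flow, and that complementary slackness for the \emph{rounded} instance pins its reduced cost to $0$ (up to the $\eps/m^8$ rounding slack) — there is a sign bookkeeping issue here because reversing an edge negates both its cost and the contribution of $\mB'$, and one must make sure the capacity on the reverse edge is actually positive (which holds precisely because $\bDelta_{\bf}$ pushed flow there). A secondary subtlety is making sure the ignored high-cost edges (rounded cost $> \eps m$) do not cause problems: they automatically satisfy the reduced-cost lower bound since their true cost is $\ge -\eps$ and $|\mB'\bDelta_{\by}|$ is controlled, or more simply one restricts attention to the subgraph on which the min-cost circulation is actually solved. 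Everything else is routine triangle-inequality arithmetic with the $\eps/m^8$ rounding error, which is far smaller than the target slack $\eps/2$.
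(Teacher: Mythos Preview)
Your proposal is correct and follows essentially the same route as the paper: split the arcs of $G(\bf+\bDelta_{\bf})$ into those that survive from $G(\bf)$ with remaining residual capacity and the newly-opened reverse arcs, then use complementary slackness on the rounded instance (\,$\bss^{+}_e=0$ when $(\bDelta_{\bf})_e<\bu(\bf)_e$, $\bss^{-}_e=0$ when $(\bDelta_{\bf})_e>0$\,) together with dual feasibility $\mB'\bDelta_{\by}+\bss^{-}-\bss^{+}=\wt{\bc}(\bf)$ to bound the new reduced cost by $\ge -\eps/m^{8}>-\eps/2$ in both cases. Two small precisions: what you call ``dual feasibility'' for the surviving-edge case is really complementary slackness (surviving $\Rightarrow$ not saturated $\Rightarrow \bss^{+}_e=0$) combined with feasibility; and for the newly-created reverse arcs you only get the inequality $(\mB'\bDelta_{\by})_e \ge \wt{\bc}(\bf)_e$, not equality, but that is exactly the direction you need.
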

\begin{proof}
Clearly, $\bf + \bDelta_{\bf}$ is an integral feasible circulation.
Let $\bDelta_{\by}, \bss^-, \bss^+$ be the corresponding dual solution to \eqref{eq:MinCostCirculationDual}.
We have that $\mB(\bf) \bDelta_{\by} + \bss^- - \bss^+ = \wt{\bc}(\bf).$
By complementary slackness, we know that for any arc $e = (u, v)$
\begin{enumerate}
    \item If $\Delta_{\bf, e} < \bu(\bf)_e$, $\bss^+_e = 0$, and
    \item If $\Delta_{\bf, e} > 0$, $\bss^-_e = 0$.
\end{enumerate}

For any arc $e = (u, v)$, it is included in the residual graph $G(\bf + \bDelta_{\bf})$ if $\Delta_{\bf, e} < \bu(\bf)_e$.
Therefore, we have $\bss^+_e = 0$ and $(\mB(\bf) \bDelta_{\by})_e \le \wt{\bc}(\bf)_e.$
The reduced residual cost on $e$ w.r.t. $\by + \bDelta_{\by}$ is 
\begin{align*}
    \wh{\bc}(\bf)_e - (\mB(\bf) \bDelta_{\by})_e \ge \wh{\bc}(\bf)_e - \wt{\bc}(\bf)_e \ge \frac{-\eps}{m^8.}
\end{align*}
Its reverse, $\rev(e) = (v, u)$, is included in the residual graph $G(\bf + \bDelta_{\bf})$ if $\Delta_{\bf, e} > 0$.
Therefore, we have $\bss^- = 0$ and $(\mB(\bf) \bDelta_{\by})_{\rev(e)} = -(\mB(\bf) \bDelta_{\by})_{e} \le -\wt{\bc}(\bf)_e = \wt{\bc}(\bf)_{\rev(e)}.$
The reduced residual cost on $\rev(e)$ w.r.t. $\by + \bDelta_{\by}$ is 
\begin{align*}
    \wh{\bc}(\bf)_{\rev(e)} - (\mB(\bf) \bDelta_{\by})_{\rev(e)} \ge \wh{\bc}(\bf)_{\rev(e)} - \wt{\bc}(\bf)_{\rev(e)} \ge \frac{-\eps}{m^8.}
\end{align*}
\end{proof}

However, the algorithm implementing \cref{thm:main} only gives primal optimal solution.
We need a separate routine for extracting the dual solution from the primal one.

\begin{lemma}
\label{lemma:computeDual}
There is an algorithm that given an instance $\cI = (G = (V, E), \bc, \bu)$ of \eqref{eq:MinCostCirculation} where $G$ has $m$ edges, costs $\bc \in \{-C, \dots, C\}^E$, and capacities $\bu \in \{-U, \dots, U\}$, computes an optimal primal and dual solution $\bf, \by, \bss^-, \bss^+$ to \eqref{eq:MinCostCirculation} and \eqref{eq:MinCostCirculationDual} in $O(T_{MCC}(m, C, U))$-time.
\end{lemma}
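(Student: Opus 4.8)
\textbf{Proof plan for \cref{lemma:computeDual}.}

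The plan is to compute the primal optimal circulation $\bf$ by a single call to the min-cost circulation algorithm (costing $O(T_{MCC}(m,C,U))$), and then extract the dual variables $\by,\bss^-,\bss^+$ by a shortest-path computation on the residual graph $G(\bf)$. The key observation is complementary slackness: once $\bf$ is an exact minimizer, its residual graph $G(\bf)$ has no negative cycle, and the standard LP duality for min-cost circulation tells us that there is a potential vector $\by\in\R^V$ with $\bc(\bf)_e - (\mB(\bf)\by)_e \ge 0$ for every residual arc $e$ (reduced residual costs are nonnegative). Such a $\by$ is exactly a feasible potential for the shortest-path problem on $G(\bf)$ with arc weights $\bc(\bf)$, so we can obtain it by adding an auxiliary source $s$ with zero-cost arcs to all vertices and running Bellman--Ford (or any single-source shortest path on graphs with no negative cycle) from $s$; set $\by_v$ to be the shortest distance from $s$ to $v$. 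Since $G(\bf)$ has no negative cycle, these distances are finite and satisfy $\by_v \le \by_u + \bc(\bf)_{(u,v)}$, i.e.\ $\bc(\bf)_{(u,v)} - \by_v + \by_u \ge 0$, which is the required nonnegativity of reduced residual costs.

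Next I would translate $\by$ back into dual variables $\bss^-,\bss^+$ for the original formulation \eqref{eq:MinCostCirculationDual}. For each original arc $e=(u,v)\in E$, define the reduced cost $\bar{\bc}_e \defeq \bc_e - \by_v + \by_u$. I claim $\bar{\bc}_e \ge 0$ whenever $\bf_e < \bu_e$ (the forward arc $e$ is residual) and $\bar{\bc}_e \le 0$ whenever $\bf_e > 0$ (the reverse arc is residual, and its reduced cost $-\bar{\bc}_e \ge 0$). Hence set $\bss^-_e \defeq \max(\bar{\bc}_e,0)$ and $\bss^+_e \defeq \max(-\bar{\bc}_e,0)$, so that $\mB\by + \bss^- - \bss^+ = \bc$ holds coordinate-wise and $\bss^-,\bss^+\ge 0$; this is dual feasible. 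For complementary slackness / optimality we check: if $0 < \bf_e < \bu_e$ then both residual arcs exist so $\bar{\bc}_e = 0$ and both $\bss^-_e = \bss^+_e = 0$; if $\bf_e = 0$ then $\bss^+_e = \max(-\bar{\bc}_e,0) = 0$ (since $\bar{\bc}_e\ge 0$), consistent with $\bf_e(\cdot) $; if $\bf_e = \bu_e$ then $\bar{\bc}_e \le 0$ so $\bss^-_e = 0$. Finally one verifies the objective values match: $\bc^\top\bf = \bar{\bc}^\top\bf = -\bss^{+\top}\bu$ using $\bar{\bc}_e = -\bss^+_e$ exactly on the arcs where $\bf_e = \bu_e$ and $\bar{\bc}_e\bf_e = 0$ otherwise (this uses $\bar{\bc}^\top\bf = (\bc-\mB\by)^\top\bf = \bc^\top\bf$ since $\mB^\top\bf=0$). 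This establishes that $(\bf,\by,\bss^-,\bss^+)$ is a primal-dual optimal pair.

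The running time is $O(T_{MCC}(m,C,U))$ for the circulation call plus $O(mn)$ for Bellman--Ford plus $O(m)$ for forming $\bss^\pm$; since $T_{MCC}(m,C,U) = \Omega(m)$ and in fact (via \cref{coro:polyMCF} and the scaling lemmas) dominates $mn$ in the polynomially-bounded regime we actually apply it in, the total is $O(T_{MCC}(m,C,U))$ as claimed. The main obstacle I anticipate is mostly bookkeeping rather than conceptual: one must be careful that $G(\bf)$ genuinely contains \emph{no} negative cycle (which follows because $\bf$ is an \emph{exact} minimizer, not merely an $\eps$-optimal one — so this lemma should be invoked only with an exact solver as in \cref{algo:costscaling}'s inner call), and that the sign conventions for reverse arcs in \cref{def:costRoundedResG}-style residual graphs line up with the incidence matrix $\mB(\bf)$ and with the dual constraint $\mB\by+\bss^--\bss^+=\bc$. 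A secondary subtlety is that if we only have a shortest-path oracle that requires no negative \emph{edges} (rather than no negative cycle), we would instead run the potential computation on the original graph via the Bellman--Ford-from-super-source trick on $G(\bf)$ directly; since $G(\bf)$ may have negative arcs but no negative cycle, Bellman--Ford still terminates correctly, so no extra reduction is needed.
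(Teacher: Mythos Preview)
Your overall approach matches the paper's: compute the primal optimum $\bf$, observe that the residual graph $G(\bf)$ has no negative cycle, compute shortest-path potentials $\by$ from an auxiliary super-source, and then read off $\bss^-,\bss^+$ via complementary slackness. The construction and verification of $\bss^-,\bss^+$ are essentially identical to the paper's.

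The genuine gap is in the runtime. You claim the $O(mn)$ cost of Bellman--Ford is absorbed into $O(T_{MCC}(m,C,U))$ because ``$T_{MCC}$ dominates $mn$ in the polynomially-bounded regime we actually apply it in.'' This is backwards: the whole point of the paper is that $T_{MCC}(m,\poly(m),\poly(m)) = m^{1+o(1)}$ (\cref{coro:polyMCF}), which is \emph{much smaller} than $mn$ on sparse graphs. So Bellman--Ford does not fit in the stated budget, and your proof as written does not establish the $O(T_{MCC}(m,C,U))$ bound.

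The paper resolves this by computing the shortest-path tree itself via a second call to the min-cost circulation solver: add the super-source $s$ with zero-cost arcs to every vertex, set demands $\bd_s = n$ and $\bd_v = -1$ for $v\in V$, and solve the resulting uncapacitated min-cost flow instance. Since $G(\bf)$ has no negative cycle, this is a valid shortest-path instance, and the standard reduction to min-cost flow solves it in $O(T_{MCC}(m,C,U))$ time. Swapping your Bellman--Ford step for this reduction closes the gap.
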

\begin{proof}
First, we compute $\bf$, the optimal primal solution to \eqref{eq:MinCostCirculation}, in $T_{MCC}(m, C, U)$-time.
Due to the optimality of $\bf$, the residual graph $G(\bf)$ has no negative cycles.
Then, we can compute a distance label on $G(\bf)$ as follows:
Add a supervertex $s$ to $G(\bf)$ with arcs toward every vertex in $G(\bf)$ of $0$ costs.
Then, we can compute a shortest path tree rooted at $s$ by solving an un-capacitated min-cost flow with demands $\bd_s = n$, $\bd_u = -1, u \in V$ in $O(T_{MCC}(m, C, U))$-time using standard reduction.

Let $\by_u, u \in V$ be the distance from $s$ to $u.$
Since $\by$ is a valid distance label on $G(\bf)$, we have $\mB(\bf)\by \le \bc(\bf)$ where $\mB(\bf)$ is the edge-vertex incidence matrix of the residual graph $G(\bf).$
Then, we will construct $\bss^-, \bss^+ \ge 0$ such that $(\by, \bss^-, \bss^+)$ is the optimal dual solution.

For any arc $e=(u, v) \in G$, if $0 < \bf_e < \bu_e$, we set both $\bss^-_e = \bss^+_e = 0.$
If $\bf_e = 0$, we set $\bss^-_e = \bc_e - (\by_v - \by_u)$ and $\bss^+_e = 0.$
If $\bf_e = \bu_e$, we set $\bss^- = 0$ and $\bss^+ = \by_v - \by_u - \bc_e.$

Next, we check that $(\by, \bss^-, \bss^+)$ is a feasible dual solution.
For any arc $e=(u, v) \in G$, if $0 < \bf_e < \bu_e$, both $e$ and $\rev(e)$ appears in $G(\bf)$.
Thus, we have both $\by_v - \by_u \le \bc_e$ and $\by_u - \by_v \le -\bc_e$ and therefore $\by_v - \by_u = \bc_e.$
Otherwise, $\by_v - \by_u + \bss^-_e - \bss^+_e = \bc_e$ holds by the definitions of $\bss^-_e$ and $\bss^+_e.$

Finally, we check that $\bc^\top \bf = -\bss^{+ \top} \bu.$
Complementary slackness and the fact that $\bf$ is a circulation yield
\begin{align*}
    \by^\top \mB^\top \bf + \bss^{-\top} \bf + \bss^{+ \top} (\bu - \bf) = 0.
\end{align*}
Rearrangement yields
\begin{align*}
    -\bss^{+ \top} \bu = (\mB \by - \bss^- + \bss^+)^\top \bf = \bc^\top \bf,
\end{align*}
where the last equality comes from dual feasibility of $(\by, \bss^-, \bss^+).$
\end{proof}

\begin{proof}[Proof of \cref{lemma:costScaling}]
Initially, we start with a $C$-optimal flow $\bf^{(0)} = 0 \in R^E$ w.r.t. potential $0 \in \R^V.$
At any iteration $t+1$, $\bf^{(t+1)}$ is an $\eps/2$-optimal flow w.r.t. $\by^{(t+1)}$ if $\bf^{(t)}$ is an $\eps$-optimal flow w.r.t. $\by^{(t)}$ (\cref{lemma:costRoundedProgress}).
Thus, after $T$ iterations, $\bf^{(T)}$ is an $C / 2^T$-optimal flow w.r.t. $\by^{(T)}.$
Taking $T = O(\log C)$, we have $C / 2^T < 1 / (n+1)$ and hence $\bf^{(T)}$ is an optimal solution to \eqref{eq:MinCostCirculation} due to \cref{lemma:epsOpt}.

Every iteration, we solve for an optimal primal dual solution to \eqref{eq:MinCostCirculation} on a cost-rounded residual graph using \cref{lemma:computeDual}.
Such instance has $m$ edges, polynomially bounded costs (\cref{rem:costRoundedResG}), $U$-bounded capacities and thus takes $O(T_{\cA}(m, U))$-time given an algorithm $\cA$ for solving \eqref{eq:MinCostCirculation}.
There is also an $O(m)$-overhead for constructing cost-rounded residual graph and updating $\bf^{(t+1)}$ and $\by^{(t+1)}$ in each iteration.
Overall, the runtime is $O(T_{\cA}(m, U) \log C + m \log C)$ since $C = \Omega(\poly(m)).$
\end{proof}

\subsection{Reduction to Polynomially Bounded Capacity Instances}
\label{sec:capScaling}

It remains to address the case of min-cost flows
with polynomially bounded costs,
but possibly large capacity.
Here we use capacity scaling.

\begin{lemma}
\label{lemma:capscaling}
Suppose there is an algorithm $\cA$ that gives an integral exact minimizer to \eqref{eq:MinCostCirculation} on any $m$-edge graph and $m^{10}$-bounded integral costs, $m^{40}$-bounded integral capacities in $T_{\cA}(m)$ time.
\cref{algo:capscaling} takes as input a graph $G = (V, E)$ and an instance of \eqref{eq:MinCostCirculation} $\cI = (G, \bc, \bu)$ with costs $\bc \in \{-m^{10}, \dots, m^{10}\}^E$ and capacities $\bu \in \{1, \dots, U\}^E$, solves $\cI$ exactly in $O(T_{\cA}(m) \log m\log U + m\log m \log U)$-time.
In other word, $T_{MCC}(m, m^{10}, U) = O(T_{MCC}(m, m^{10}, m^{40}) \log m \log mU).$
\end{lemma}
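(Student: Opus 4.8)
The plan is to mimic the cost-scaling reduction of \cref{lemma:costScaling}, but now scaling the \emph{capacities} bit by bit from the most significant bit downward. First I would reduce \eqref{eq:mincostopt} to the min-cost circulation form \eqref{eq:MinCostCirculation}, so that we only deal with capacities $\bu \in \{1,\dots,U\}^E$ and costs that are already polynomially bounded by hypothesis. Write $\Lambda \defeq \lceil \log_2 U \rceil$, so each capacity has a binary expansion with $\Lambda+1$ bits. The algorithm processes the instance in $\Lambda+1$ phases: in phase $i$ we consider the ``truncated'' capacity vector $\bu^{(i)}_e \defeq \lfloor \bu_e / 2^{\Lambda - i}\rfloor$, which is the leading $i+1$ bits of $\bu_e$. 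Observe that $\bu^{(0)}_e \in \{0,1\}$ is polynomially bounded, and $\bu^{(i+1)}_e = 2\bu^{(i)}_e + b^{(i+1)}_e$ where $b^{(i+1)}_e \in \{0,1\}$ is the $(i+1)$-th bit of $\bu_e$.

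The key invariant I would maintain is: at the end of phase $i$ we hold an integral optimal circulation $\bf^{(i)}$ for the instance $\cI^{(i)} = (G, \bc, \bu^{(i)})$. To go from phase $i$ to phase $i+1$, set the ``scaled-up'' circulation $2\bf^{(i)}$, which is feasible for $\cI^{(i+1)}$ since $2\bf^{(i)}_e \le 2\bu^{(i)}_e \le \bu^{(i+1)}_e$, and consider the residual graph $G(2\bf^{(i)})$ with residual capacities $\bu^{(i+1)}(2\bf^{(i)})$. The crucial point is that these residual capacities are polynomially bounded: on a non-saturated arc $e$ the residual capacity is $\bu^{(i+1)}_e - 2\bf^{(i)}_e = 2(\bu^{(i)}_e - \bf^{(i)}_e) + b^{(i+1)}_e$, and on a reverse arc it is $2\bf^{(i)}_e$, and neither of these is obviously small — so I would instead argue that the \emph{optimal augmenting circulation} in the residual graph can be taken to have value at most $O(m)$ on every arc. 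This follows because $2\bf^{(i)}$ is optimal for the instance with capacities $2\bu^{(i)}$ (scaling preserves optimality since $2\bf^{(i)}$ routes a circulation with the same reduced costs), so the only way $2\bf^{(i)}$ fails to be optimal for $\cI^{(i+1)}$ is through the extra $b^{(i+1)}_e \le 1$ units of capacity per edge; by a flow-decomposition / LP-sensitivity argument, the augmenting circulation $\bDelta^{(i+1)} \defeq \bf^{(i+1)} - 2\bf^{(i)}$ supported on negative cycles in $G(2\bf^{(i)})$ can be chosen so that $\|\bDelta^{(i+1)}\|_\infty \le m$ (each unit of residual-capacity increase can push at most $m$ edge-units around, since a minimal negative cycle has at most $m$ edges and we need at most $\sum_e b^{(i+1)}_e \le m$ such cycles). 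Hence we may cap the residual capacities at $m$, obtaining a polynomially-bounded instance, solve it with $\cA$, and add the result to $2\bf^{(i)}$ to obtain $\bf^{(i+1)}$, which is then optimal for $\cI^{(i+1)}$ by standard residual-graph optimality (no negative cycle in $G(\bf^{(i+1)})$). After $\Lambda+1 = O(\log U)$ phases $\bf^{(\Lambda)}$ is optimal for the original instance. Combining with \cref{lemma:costScaling} and \cref{coro:polyMCF} then yields \cref{lemma:polyScaling}: $T_{MCC}(m,C,U) = O((T_{MCC}(m,m^{10},m^{40}) + m)\log m \log mU)$, which together with the fact that $\cA$ runs in $m^{1+o(1)}$ time (via \cref{thm:main}, using the $O(m)$ initial/final point reductions of \cref{lemma:initialpoint,lemma:finalpoint}) gives the claimed running time.

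I expect the main obstacle to be rigorously justifying the bound $\|\bDelta^{(i+1)}\|_\infty \le m$ (or some $\poly(m)$ bound), i.e.\ that after doubling an optimal solution, restoring optimality only requires routing a polynomially-bounded augmenting circulation. The clean way is an LP-duality / complementary-slackness argument: $2\bf^{(i)}$ together with the doubled potentials $2\by^{(i)}$ is an optimal primal-dual pair for capacities $2\bu^{(i)}$; the only dual-infeasibility or primal-infeasibility introduced by moving to $\bu^{(i+1)}$ comes from the $b^{(i+1)}_e \in \{0,1\}$ perturbation, so one shows there is an optimal $\bf^{(i+1)}$ with $\bf^{(i+1)}_e \in [2\bf^{(i)}_e - m, 2\bf^{(i)}_e + m]$ for all $e$ by analysing the structure of negative cycles in $G(2\bf^{(i)})$ with at most $\sum_e b^{(i+1)}_e \le m$ units of new residual capacity. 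One must also be slightly careful that capping residual capacities at $m$ does not destroy optimality, which is exactly where this bound is used. The remaining steps — the capacity-rounding bookkeeping, verifying the invariant inductively, and composing the cost- and capacity-scaling reductions with the polynomial-time oracle — are routine and analogous to \cref{sec:costScaling}. I would also note, as in \cref{lemma:computeDual}, that $\cA$ as provided by \cref{thm:main} returns only a primal optimum, so one runs the dual-extraction routine once at the end (or per phase if potentials are needed) at no asymptotic cost.

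\begin{algorithm}[!ht]
  \caption{Capacity Scaling Scheme for Solving \eqref{eq:MinCostCirculation} \label{algo:capscaling}}
  \SetKwProg{Proc}{procedure}{}{}
  \Proc{$\textsc{CapacityScaling}(G = (V, E), \bc, \bu \in \{1, \dots, U\}^E)$}{
    $\Lambda \gets \lceil \log_2 U \rceil$ \\
    $\bf^{(-1)} \gets 0 \in \R^E$ \\
    \For{$i = 0, \dots, \Lambda$}{
      $\bu^{(i)}_e \gets \lfloor \bu_e / 2^{\Lambda - i} \rfloor$ for all $e \in E$ \\
      Let $G(2\bf^{(i-1)})$ be the residual graph with capacities $\min\{\bu^{(i)}(2\bf^{(i-1)}), m\}$ (cap at $m$). \\
      Solve \eqref{eq:MinCostCirculation} on this residual instance using $\cA$; let $\bDelta^{(i)}$ be the optimum. \\
      $\bf^{(i)} \gets 2\bf^{(i-1)} + \bDelta^{(i)}$
    }
    Output $\bf^{(\Lambda)}$
  }
\end{algorithm}
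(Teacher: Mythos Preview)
Your approach is correct and is a genuinely different capacity-scaling scheme from the paper's.

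The paper does not do bit-by-bit scaling. Instead, in each of $O(\log mU)$ rounds it (i) computes a $\poly(m)$-approximate solution value $-x$ by binary-searching over bottleneck capacities for a largest-bottleneck negative cycle (this is \cref{lemma:polyApproxMCC} and is where the extra $\log m$ factor in the running time comes from), (ii) rounds residual capacities to multiples of $x/\poly(m)$ and caps them at $x\cdot\poly(m)$ (\cref{def:capRoundedG}), and (iii) solves the rounded instance exactly with $\cA$, which \cref{lemma:capRoundedGConstApprox} shows yields a constant-factor improvement in the residual cost. So the paper adapts the rounding scale to the current optimality gap, whereas you fix the scale a priori by bit position.

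Your key step --- the bound $\|\bDelta^{(i+1)}\|_\infty \le m$ --- is correct, and the clean justification is exactly the complementary-slackness argument you gesture toward. Since $2\bf^{(i)}$ is optimal for capacities $2\bu^{(i)}$, there are potentials $\by$ under which every arc in the residual graph $G_{2\bu^{(i)}}(2\bf^{(i)})$ has reduced cost $\ge 0$. Passing to capacities $\bu^{(i+1)}=2\bu^{(i)}+b^{(i+1)}$ only introduces \emph{new} forward arcs (one for each previously saturated $e$ with $b^{(i+1)}_e=1$), each of capacity $1$ and reduced cost $\le 0$; all previously present arcs retain nonnegative reduced cost. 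Decompose an optimal residual circulation into cycles and discard zero-reduced-cost cycles; each surviving cycle must use at least one new arc, so the sum of cycle weights is at most the total new-arc capacity $\le m$, which gives $\|\bDelta\|_\infty \le m$. Hence capping residual capacities at $m$ loses nothing. Your cycle-counting heuristic (``at most $m$ edges per cycle, at most $m$ cycles'') is not quite the right bookkeeping, but the conclusion stands. In fact your route shaves the $\log m$ factor from the paper's bound, since you never need the bottleneck binary search of \cref{lemma:polyApproxMCC}. One cosmetic point: the lemma is solely about capacity scaling, so the detour through \cref{lemma:costScaling} and \cref{lemma:polyScaling} at the end of your proposal belongs elsewhere.
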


In each iteration, \cref{algo:capscaling} augments the current integral circulation $\bf$ with $\bDelta$, a constant approximate integral solution to \eqref{eq:MinCostCirculation} on the residual graph.
After $O(\log(C m U))$ iterations, the optimal objective value on the residual graph is at most $-0.1.$
This indicates that the value is $0$ and we have reached an optimal solution because the residual graph is always an integral instance and has optimal value either $0$ or at most $-1.$

To find a constant approximate solution, \cref{algo:capscaling} first finds a $\poly(m)$-approximate solution of value $-x, x > 0.$
Then, one can round the residual capacities down to integral multiples of $x / \poly(m)$ and show that the optimal solution to the rounded residual instance is a constant approximation.
Solving \eqref{eq:MinCostCirculation} on the rounded residual instance is equivalent to solving with polynomially bounded capacities, which can be done using \cref{coro:polyMCF}.

The first component is an algorithm that computes a $\poly(m)$-approximate solution to \eqref{eq:MinCostCirculation}.
In particular, we find a $Cm$-approximate solution when the costs $\bc \in \{-C, \dots, C\}^E.$
Given an instance $\cI = (G, \bc, \bu)$,
Roughly speaking, the algorithm finds a negative weight cycle in $G$ with largest bottleneck.
This is done via performing binary search over the bottleneck, which has $m$ different values, and then detecting negative cycle by solving an unit capacity version of \eqref{eq:MinCostCirculation}.

\begin{lemma}
\label{lemma:polyApproxMCC}
Suppose there is an algorithm $\cA$ that gives an integral exact minimizer to \eqref{eq:MinCostCirculation} on any $m$-edge graph and $m^{10}$-bounded integral costs, $m^{40}$-bounded integral capacities in $T_{\cA}(m)$ time.
There is an algorithm that takes as input a graph $G = (V, E)$ and a instance of \eqref{eq:MinCostCirculation} $\cI = (G, \bc, \bu)$ with costs $\bc \in \{-m^{10}, \dots, m^{10}\}^E$ and capacities $\bu \in \{1, \dots, U\}^E$, outputs an $m^{12}$-approximate solution $\bf$ such that
\begin{align*}
    \frac{1}{m^{12}} \bc^\top \bf^* \ge \bc^\top \bf,
\end{align*}
where $\bf^*$ is the optimal solution to \eqref{eq:MinCostCirculation}.
The algorithm runs in $O(T_{\cA}(m)\log m)$-time.
\end{lemma}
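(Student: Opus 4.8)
\textbf{Proof proposal for \cref{lemma:polyApproxMCC}.}

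The plan is to reduce the problem of finding a constant-factor-of-polynomial approximate min-cost circulation to a sequence of $O(\log m)$ \emph{unit-capacity} min-cost circulation instances, via a binary search over the bottleneck (minimum residual capacity) of a best negative-cost cycle. First I would observe that since costs are bounded by $m^{10}$ and capacities by $U$, any optimal circulation has objective value that is either $0$ or lies in $[-m^{11}U, -1]$ (integrality). If the optimum is $0$ we are done, so assume it is negative. By flow decomposition, an optimal circulation decomposes into at most $m$ simple cycles, and a straightforward averaging argument shows there is a single simple negative-cost cycle $\Gamma$ whose contribution $\bc^\top (\kappa \cdot \bp(\Gamma)) \le \frac{1}{m}\bc^\top\bf^*$, where $\kappa = \min_{e \in \Gamma}\bu_e$ is its bottleneck. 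Thus routing $\kappa$ units along the best such cycle already gives an $m$-approximation; the remaining factor $m^{11}$ will be lost in the rounding/detection steps below, yielding the claimed $m^{12}$.

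Next I would show how to \emph{detect} whether, for a given threshold $\theta$, there exists a negative-cost cycle all of whose edges have residual capacity $\ge \theta$: delete all edges with $\bu_e < \theta$, assign every remaining edge unit capacity, and solve \eqref{eq:MinCostCirculation} on this unit-capacity instance using the assumed algorithm $\cA$ (its capacities are $1 \le m^{40}$ and costs are unchanged, hence $\le m^{10}$). The instance has value $<0$ iff such a cycle exists, and in that case the optimal unit-capacity circulation has cost at least $-m^{11}$ (at most $m$ edges, each of cost $\ge -m^{10}$), so scaling it up by $\theta$ gives a circulation in the original graph of cost at least $-m^{11}\theta$ that respects capacities (all used edges have capacity $\ge \theta$, and the unit-capacity solution decomposes into cycles each carrying $\le 1$ unit, so scaling by $\theta$ stays feasible). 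I would binary-search over the $m+1$ distinct values of $\{\bu_e : e \in E\} \cup \{0\}$ to find the largest $\theta = \theta^*$ for which a negative cycle with bottleneck $\ge \theta^*$ exists; this uses $O(\log m)$ calls to $\cA$. The output $\bf$ is $\theta^*$ times the unit-capacity optimal circulation at threshold $\theta^*$.

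For the approximation guarantee, let $\Gamma$ be the best bottleneck negative cycle guaranteed by the decomposition argument, with bottleneck $\kappa$ and $\bc^\top(\kappa\bp(\Gamma)) \le \frac{1}{m}\bc^\top\bf^*$. Since $\kappa$ is one of the searched thresholds and $\Gamma$ certifies a negative cycle with bottleneck $\ge \kappa$, we have $\theta^* \ge \kappa$. The unit-capacity optimal circulation at threshold $\theta^*$ has cost at most that of routing one unit along any single negative cycle with bottleneck $\ge \theta^* \ge \kappa$ — in particular at most $\bc^\top \bp(\Gamma) \le \frac{1}{m\kappa}\bc^\top\bf^*$ — and scaling by $\theta^* \ge \kappa$ gives $\bc^\top\bf \le \frac{\theta^*}{m\kappa}\bc^\top\bf^* \le \frac{1}{m}\bc^\top\bf^*$ (using $\bc^\top\bf^* \le 0$ and $\theta^*$ is only as large as allowed by feasibility). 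A more careful accounting — noting that the unit-capacity optimum at threshold $\theta^*$ might itself decompose into up to $m$ cycles and that the ``$\kappa$-units-at-once'' feasibility requires the scaled flow to respect capacities, which it does since each decomposed cycle carries $\le 1$ unit scaled to $\le \theta^* \le \bu_e$ on its edges — loses at most another $\poly(m)$ factor, comfortably within $m^{12}$. I expect the main obstacle to be exactly this bookkeeping: proving that the scaled-up unit-capacity solution is genuinely feasible (capacity-respecting) in the original graph and simultaneously has cost within the promised polynomial factor of $\bc^\top\bf^*$, since the unit-capacity optimum need not be a single cycle. The clean way to handle this is to argue via flow decomposition on \emph{both} $\bf^*$ and the unit-capacity solution, comparing cycle-by-cycle; the running time is $O(T_{\cA}(m)\log m)$ since we make $O(\log m)$ calls to $\cA$ plus $O(m\log m)$ overhead for building the thresholded instances, which is absorbed into $T_{\cA}(m)\log m$ as $T_{\cA}(m) = \Omega(m)$.
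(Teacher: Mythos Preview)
Your algorithm is essentially the paper's: binary-search over bottleneck thresholds, and at each threshold use a unit-capacity min-cost circulation call to detect a negative cycle. The runtime analysis and the feasibility of the scaled output are fine. However, your approximation argument has a genuine gap.

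You let $\Gamma$ be a best-contribution cycle in a decomposition of $\bf^*$, with bottleneck $\kappa$, so $\kappa\,\bc^\top\bp(\Gamma)\le \tfrac{1}{m}\bc^\top\bf^*$. You then assert that the unit-capacity optimum at threshold $\theta^*$ is at most $\bc^\top\bp(\Gamma)$. But $\Gamma$ has bottleneck exactly $\kappa$, and you only know $\theta^* \ge \kappa$; if $\theta^* > \kappa$ then $\Gamma$ is \emph{not} present in $G_{\theta^*}$ and cannot serve as a feasible candidate there. The subsequent inequality chain collapses, and ``cycle-by-cycle'' comparison of the two decompositions (which you suggest as the fix) does not recover it either, because the cycles surviving at $\theta^*$ need have nothing to do with $\Gamma$.

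The paper's argument does not use averaging at all. It uses $\theta^*$ in the opposite direction: by maximality of $\theta^*$, \emph{every} negative-cost cycle $C_i$ in a decomposition of $\bf^*$ has bottleneck $u(C_i)\le \theta^*$, and each has cost $\ge -m^{11}$. Summing over at most $m$ cycles gives $\bc^\top\bf^* \ge -m\cdot\theta^*\cdot m^{11} = -m^{12}\theta^*$. Since any negative cycle present at threshold $\theta^*$ has integer cost $\le -1$, the output has cost $\le -\theta^* \le \tfrac{1}{m^{12}}\bc^\top\bf^*$. (Your averaging line can also be salvaged, but differently from what you wrote: combine $\kappa\,\bc^\top\bp(\Gamma)\le \tfrac{1}{m}\bc^\top\bf^*$ with the crude bound $\bc^\top\bp(\Gamma)\ge -m^{11}$ to get $\theta^*\ge\kappa\ge -\bc^\top\bf^*/m^{12}$ directly; there is no need to compare the unit-capacity optimum at $\theta^*$ to $\Gamma$ at all.)
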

\begin{proof}
For any directed cycle $C$ in $G$, we define its bottleneck as $\bu(C) \defeq \min_{e \in C} \bu_e.$
The algorithm finds a negative weighted cycle $C^*$ in $G$ with maximum bottleneck.
This is done by first performing binary search over all possible bottleneck capacities, which has $m$ of them.
Let $u$ be the bottleneck capacity we want to check.
We construct graph $G_u$ from $G$ by removing all edges with capacities smaller than $u.$
Via a standard reduction to unit capacity min-cost circulation on the instance $\cI' = (G_u, \bc, 1)$, we can either find a negative cost cycle in $G_u$ or determine there's none in  $O(T_{\cA}(m))$-time.
There are $O(\log m)$ stages in binary search and each stage is done in $O(T_{\cA}(m))$-time using the given min-cost circulation algorithm $\cA$.
Overall, we can find a negative cost cycle $C^*$ with maximum bottleneck in $O(T_{\cA}(m)\log m)$-time.

Let $\bp(C^*)$ be the flow vector corresponding to $C^*$.
We show that $u(C^*) \bp(C^*)$ is a $m^{12}$-approximate solution to \eqref{eq:MinCostCirculation} on instance $\cI = (G, \bc, \bu).$
Let $\bf^*$ be the optimal solution to \eqref{eq:MinCostCirculation}.
Decompose $\bf^*$ as a non-negative linear combination of edge-disjoint directed cycles in $G$, i.e.
\begin{align*}
    \bf^* = \sum_{i=1}^m a^*_i \bp(C_i),
\end{align*}
where $\{C_1, \dots, C_m\}$ is a edge disjoint collection of cycles in $G$, $a^*_i$ is an non-negative coefficient, and $\bp(C_i)$ denotes the flow vector corresponding to cycle $C_i$ for $i = 1, \dots, m$.
Due to optimality of $\bf^*$, we can assume that $a^*_i > 0$ only if $C_i$ is a negative cost cycle.

Observe that $a^*_i$ is at most $u(C_i)$, the bottleneck of $C_i.$
And the cost of each cycle is at least $-m^{11}.$
Thus, we can bound the cost of $\bf^*$ by
\begin{align*}
    \bc^\top \bf^* = \sum_{i: \bc^\top \bp(C_i) < 0} a^*_i \bc^\top \bp(C_i) \ge -\sum_{i: \bc^\top \bp(C_i) < 0} u(C_i) m^{11} \ge -m^{12} u(C^*),
\end{align*}
where the last inequality comes from the definition of $C^*$ being the maximum bottleneck of all negative cost cycles in $G.$

On the other hand, the weight of $C^*$ is at most $-1$ because costs are integers.
Thus, the cost of $u(C^*) \bp(C^*)$ is at most $-u(C^*)$ and the proof concludes.
\end{proof}

Given any $B > 0$, we can round edge capacities down to integral multiples of $B / m^{20}$ and remove edges of capacity over $m B.$
The optimal circulation for the rounded instance is feasible in the original instance.
Furthermore, it is a constant approximation.
First, let us define the rounded instance formally.

\begin{definition}
\label{def:capRoundedG}
Given a graph $G = (V, E)$ with costs $\bc \in \{-m^{10}, \dots, m^{10}\}^E$ and capacities $\bu \in \{1, \dots, U\}^E$ and a positive value $B > 0.$
We define its \emph{capacity rounded graph} $G^B = (V, E^B)$ with costs $\bc$ and capacities $\bu^B$ as follows:
We include every arc $e \in E$ to $G^B$ and assign its capacity $\bu^B_e$ to be the nearest integral multiple of $\lceil B / m^{20} \rceil$ below $\bu_e$ or $\lceil Bm^{20} \rceil$ if $\bu_e > Bm^{20}.$
\end{definition}
\begin{rem}
\label{rem:capRoundedG}
By scaling down the rounded capacities by $\lceil B / m^{20} \rceil$, the capacity of every edge is a positive integer at most $m^{40}.$
Thus, solving \eqref{eq:MinCostCirculation} on the capacity rounded instance $\cI^B = (G^B, \bc, \bu^B)$ is equivalent to solving on an instance with $m^{40}$-bounded capacities.
In addition, one can recover the optimal solution for the capacity rounded instance by scaling up with $\lceil B / m^{20} \rceil$, which is still integral.
\end{rem}

\begin{lemma}
\label{lemma:capRoundedGConstApprox}
Given a graph $G = (V, E)$ with costs $\bc \in \{-m^{10}, \dots, m^{10}\}^E$ and capacities $\bu \in \{1, \dots, U\}^E$ and a positive value $B > 0.$
Suppose that $-B$ is $m^{12}$-approximation to the optimal value of \eqref{eq:MinCostCirculation} on the instance $\cI = (G, \bc, \bu)$.
Let $\bf^B$ be the optimal solution for \eqref{eq:MinCostCirculation} on the capacity rounded instance $\cI^B = (G^B, \bc, \bu^B)$.
$\bf^B$ is an integral $1.1$-approximate solution for the original instance $\cI = (G, \bc, \bu)$.
\end{lemma}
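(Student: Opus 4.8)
The plan is to show two things: first, that $\bf^B$ is feasible for the original instance $\cI$, and second, that its cost is within a $1.1$ factor of the optimal cost $\bc^\top\bf^*$ of $\cI$. Feasibility is essentially immediate: by \cref{def:capRoundedG} we have $\bu^B_e \le \bu_e$ for every edge $e$, since $\bu^B_e$ is obtained either by rounding $\bu_e$ down or by capping at $\lceil Bm^{20}\rceil$ (and we should note $Bm^{20} \le U$ is not needed — capping only decreases the capacity). So any $\bf$ with $0 \le \bf_e \le \bu^B_e$ also satisfies $0 \le \bf_e \le \bu_e$, and circulation constraints are unaffected. Integrality of $\bf^B$ follows from \cref{rem:capRoundedG}: the rounded instance has capacities that are integral multiples of $\lceil B/m^{20}\rceil$, so its optimal solution (which we may take integral since it is a min-cost circulation instance with integral data after scaling) scales back to an integral flow.

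The substance is the approximation bound. The natural approach is to take the true optimal circulation $\bf^*$ for $\cI$, with $\bc^\top\bf^* = -\mathrm{OPT}$ where by hypothesis $\mathrm{OPT}/m^{12} \le B \le \mathrm{OPT}$, and exhibit a feasible circulation $\wt{\bf}$ for the rounded instance $\cI^B$ with $\bc^\top\wt{\bf}$ close to $\bc^\top\bf^*$; then $\bc^\top\bf^B \le \bc^\top\wt{\bf}$ by optimality of $\bf^B$ on $\cI^B$, and combined with feasibility of $\bf^B$ for $\cI$ this sandwiches everything. To build $\wt{\bf}$, I would first discard from $\bf^*$ all flow on edges with $\bu_e > Bm^{20}$: decompose $\bf^*$ into at most $m$ edge-disjoint directed cycles $C_i$ with coefficients $a_i^* \le \bu(C_i)$, and observe that any cycle using such a high-capacity edge can be scaled or... actually a cleaner route: scale $\bf^*$ down by a factor $(1-m^{-5})$, say, so each $\bf^*_e$ drops below $\bu^B_e$. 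We need $(1-m^{-5})\bf^*_e \le \bu^B_e$. On an edge with $\bu_e \le Bm^{20}$, we have $\bu_e - \bu^B_e \le \lceil B/m^{20}\rceil$, and since $\bf^*$ routes at most $\mathrm{OPT}$ total cost with costs of magnitude $\ge 1$, the flow value $\bf^*_e$ on any cycle-decomposed edge is bounded; more carefully, $\bc^\top\bf^* = -\mathrm{OPT} \le -B$, and each cycle contributes at least $-m^{11}\bu(C_i)$, so $\sum_i \bu(C_i)$ over negative cycles is at least $B/m^{11}$, giving no direct per-edge bound — so instead I'd argue cycle by cycle.

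Concretely: write $\bf^* = \sum_i a_i^* \bp(C_i)$. For each cycle $C_i$ entirely within the low-capacity edges, replace $a_i^*$ by $\lfloor a_i^* \cdot \bu^B(C_i)/\bu(C_i)\rfloor$ where $\bu^B(C_i) = \min_{e\in C_i}\bu^B_e$; since $\bu^B(C_i) \ge \bu(C_i) - \lceil B/m^{20}\rceil$ and $a_i^* \le \bu(C_i)$, the loss in flow on $C_i$ is at most $\lceil B/m^{20}\rceil + 1$ units, contributing at most $m^{11}(\lceil B/m^{20}\rceil+1) \le 2m^{-9}B$ to the cost (for $m$ large). For cycles using a high-capacity edge $e$ with $\bu_e > Bm^{20}$: here $\bu^B_e = \lceil Bm^{20}\rceil \ge Bm^{20} \ge a_i^*$ is impossible to violate only if $a_i^* \le Bm^{20}$ — but $a_i^* \le \bu(C_i)$ could be huge. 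However, the \emph{cost} such a cycle contributes is $a_i^* \bc^\top\bp(C_i) \ge -\mathrm{OPT} \ge -m^{12}B$, and there are at most $m$ cycles, and... this is where I expect the main obstacle: I must be careful that dropping high-capacity-edge cycles does not lose too much, which requires that their total cost contribution is small relative to $B$ — but it need not be. The resolution I would pursue is that we only need $\bu^B_e = \lceil Bm^{20}\rceil$ to be at least the flow that an \emph{optimal} rounded solution wants, not what $\bf^*$ wants; so instead of modifying $\bf^*$ globally, on a high-capacity edge just cap the cycle coefficient at $\lceil Bm^{20}\rceil$, losing flow $(a_i^* - Bm^{20})_+$ on $C_i$; but since replacing $\bf^*$ by $\min(1, Bm^{20}/\|\bf^*\|_\infty)\bf^*$... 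Summing, total cost loss across all $m$ cycles is at most $m \cdot 2m^{-9}B \le 2m^{-8}B \le 0.01 B \le 0.01\,\mathrm{OPT} = 0.01\,|\bc^\top\bf^*|$. Hence $\bc^\top\wt{\bf} \le (1 - 0.01)\bc^\top\bf^* \cdot$ (sign bookkeeping) giving $\bc^\top\bf^B \le \bc^\top\wt{\bf} \le 0.99\,\bc^\top\bf^*$, and since $\bf^B$ is feasible for $\cI$ also $\bc^\top\bf^* \le \bc^\top\bf^B$; together $\bc^\top\bf^* \le \bc^\top\bf^B \le 0.99\,\bc^\top\bf^* \le 1.1\,\bc^\top\bf^*$ in absolute value, which is the claimed $1.1$-approximation. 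The main obstacle, as flagged, is handling the high-capacity edges cleanly — the fix is to first observe $\bf^*$ may be taken with $\|\bf^*\|_\infty \le Bm^{12}$ (since a single cycle of flow $> Bm^{12}$ through a negative edge would already beat the $m^{12}$-approximation guarantee on $B$), so the cap at $\lceil Bm^{20}\rceil$ is never active, eliminating that case entirely.
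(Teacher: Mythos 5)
Your proposal arrives, after some wandering, at essentially the same approach as the paper's proof: the key observation is that the cap $\lceil Bm^{20}\rceil$ in $\bu^B$ is never active, because in the (edge-disjoint) cycle decomposition $\bf^* = \sum_i a_i^* \bp(C_i)$ every coefficient satisfies $a_i^* \le m^{12}B$ --- otherwise the single negative cycle $a_i^*\bp(C_i)$ would already have cost below $-m^{12}B$, contradicting the $m^{12}$-approximation guarantee $\bc^\top\bf^* \ge -m^{12}B$. One then rounds the cycle coefficients down to get a feasible $\wt{\bf}$ for $\cI^B$, bounds the cost loss, and sandwiches $\bc^\top\bf^B$ between $\bc^\top\bf^*$ and $\bc^\top\wt{\bf}$. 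This is exactly the paper's structure.

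However, the quantitative step has a gap that the paper closes with an explicit case split you omit. You assert $m^{11}(\lceil B/m^{20}\rceil+1) \le 2m^{-9}B$, but whenever $B < m^{20}$ the left side equals $2m^{11}$ while the right side is strictly less than $2m^{11}$, so the inequality is false on an admissible range of $B$; the hedge ``for $m$ large'' does not help, since for a fixed $B$ the gap only widens as $m$ grows. The paper handles this by noting that when $\lceil B/m^{20}\rceil = 1$ the rounding of low-capacity edges is the identity and (by the first claim) the cap is never hit, so $\bu^B = \bu$, $\bf^B = \bf^*$, and there is nothing to prove; the loss estimate is only invoked once $B \ge 2m^{20}$, where $\lceil B/m^{20}\rceil \le 2B/m^{20}$ makes the arithmetic go through. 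You could equally well rescue your version by observing that your floor rounding $\lfloor a_i^*\bu^B(C_i)/\bu(C_i)\rfloor$ returns $a_i^*$ unchanged when $\bu^B = \bu$, so the ``$+1$'' slack you charge is spurious in exactly the regime where the claimed inequality fails --- but as written, the inequality you invoke is false and must be repaired. (Separately, your final chain ``$0.99\,\bc^\top\bf^* \le 1.1\,\bc^\top\bf^*$'' is backwards since both sides are negative; you flag the sign bookkeeping yourself, and the intended conclusion does follow, but it is cleaner to state the $1.1$-approximation as a ratio of absolute values.)
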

\begin{proof}
Clearly, $\bf^B$ is a feasible solution for $\cI$ because $\bf^B \le \bu^B \le \bu$ and $G^B$ is a subgraph of $G.$
Let $\bf^*$ be the optimal solution for \eqref{eq:MinCostCirculation} on the instance $\cI = (G, \bc, \bu)$.

One can decompose $\bf^*$ as a non-negative linear combination of edge-disjoint directed cycles in $G$, i.e.
\begin{align*}
    \bf^* = \sum_{i=1}^m a^*_i \bp(C_i),
\end{align*}
where $\{C_1, \dots, C_m\}$ is a edge disjoint collection of cycles in $G$, $a^*_i$ is an non-negative coefficient, and $\bp(C_i)$ denotes the flow vector corresponding to cycle $C_i$ for $i = 1, \dots, m$.
Due to optimality of $\bf^*$, we can assume that $a^*_i > 0$ only if $C_i$ is a negative cost cycle.
Also, $a^*_i$ is at most be bottleneck capacity of the cycle $C_i$, i.e. $a^*_i \le \min_{e \in C_i} \bu_e.$

First, we claim that $\bf^*_e \le \lceil Bm^{20} \rceil$ for any arc $e.$
Otherwise, there is a negative cost cycle $C_i$ in the decomposition with $a^*_i > \lceil Bm^{20} \rceil$.
The cost of $C_i$ is at least $-1$ due to integral costs.
In this case, we use the fact that $-m^{12}B \le \bc^\top \bf^*$ and deduce
\begin{align*}
    \bc^\top \bf^* < -\lceil Bm^{20} \rceil \le -Bm^{20} \le \frac{\bc^\top \bf^*}{m^{12}} m^{20} = m^8 \bc^\top \bf^* < 0,
\end{align*}
which leads to a contradiction.

If $B < 2 m^{20}$, we have $\bu^B = \bu$ because $\lceil B / m^{20}\rceil = 1$.
In this case, $\bf^B$ is exactly $\bf^*.$
Otherwise, round down the cycle decomposition of $\bf^*$ to integral multiples of $\lceil B / m^{20} \rceil.$
That is, we define
\begin{align*}
    \wt{\bf} = \sum_{i=1}^m \wt{a}_i \bp(C_i),
\end{align*}
where $\wt{a}_i$ is the nearest integral multiple of $\lceil B / m^{20} \rceil$ at most $a^*_i.$
We have that $\wt{a}_i \le \min_{e \in C_i} \bu^B_e$ and hence $\wt{\bf}$ is a feasible solution for the capacity rounded instance.
In addition, we have $\wt{a}_i \ge a^*_i - \lceil B / m^{20} \rceil$ for any $i.$
Using these facts, we have
\begin{align*}
    \bc^\top \wt{\bf}
    = \sum_{i=1}^m \wt{a}_i \bc^\top \bp(C_i) 
    &\overset{(i)}{\le} \sum_{i=1}^m a^*_i \bc^\top \bp(C_i) - \sum_{i=1}^m \lceil \frac{B}{m^{20}} \rceil \bc^\top \bp(C_i) \\
    &\overset{(ii)}{\le} \sum_{i=1}^m a^*_i \bc^\top \bp(C_i) + \sum_{i=1}^m \lceil \frac{B}{m^{20}} \rceil m^{11} \\
    &\overset{(iii)}{\le} \sum_{i=1}^m a^*_i \bc^\top \bp(C_i) + \sum_{i=1}^m 2 \frac{B}{m^{20}} m^{11} \\
    &= \bc^\top \bf^* + 2 \frac{B}{m^8} \\
    &\overset{(iv)}{\le} \bc^\top \bf^* + \frac{-2\bc^\top \bf^*}{m^8} \\
    &= (1 - 2m^{-8}) \bc^\top \bf^*,
\end{align*}
where (i) comes from $\wt{a}_i \ge a^*_i - \lceil B / m^{20} \rceil$, (ii) comes from that any simple cycle has cost at least $-m^{11}$, (iii) comes from $B \ge 2m^{20}$ and $\lceil B / m^{20} \rceil \le 2B / m^{20}$, and (iv) comes from $\bc^\top \bf^* \le -B$ as $-B$ is the value of a $m^{12}$-approximate solution.
We conclude the proof by observing that
\begin{align*}
    \bc^\top \bf^* \le \bc^\top \bf^B \le \bc^\top \wt{\bf} \le (1 - 2m^{-8}) \bc^\top \bf^*.
\end{align*}
\end{proof}

\begin{algorithm}[!ht]
  \caption{Capacity Scaling Scheme for Solving \eqref{eq:MinCostCirculation} \label{algo:capscaling}}
  \SetKwProg{Proc}{procedure}{}{}
  \Proc{$\textsc{CapacityScaling}(G = (V, E), \bc \in \{-m^{10}, \dots, m^{10}\}^E, \bu \in \{1, \dots, U\}^E)$}{
    $\bf^{(0)} \gets 0.$ \\
    $T \gets O(\log U)$ \\
    \For{$t = 0, \dots, T - 1$}{
        Compute $-x \le 0$ to be the value of an $m^{12}$-approximate solution to \eqref{eq:MinCostCirculation} via \cref{lemma:polyApproxMCC}. \tcp{$x \ge 0$.}
        \If{$x = 0$}{
            $\bf^{(t)}$ is an optimal solution and we end the for loop here.
        }
        Let $G^{x}(\bf^{(t)}), \bc(\bf^{(t)}), \bu^{x}(\bf^{(t)})$ be the capacity rounded graph (\cref{def:capRoundedG}) of the residual graph $G(\bf^{(t)})$ with costs $\bc(\bf^{(t)})$and capacities $\bu(\bf^{(t)})$. \\
        Solve \eqref{eq:MinCostCirculation} on $G^{x}(\bf^{(t)}), \bc(\bf^{(t)}), \bu^{x}(\bf^{(t)})$ \\
        Let $\bDelta_{\bf}$ be the primal optimal. \\
        $\bf^{(t+1)} \gets \bf^{(t)} + \bDelta_{\bf}$
    }
    Output $\bf^{(T)}$
  }
\end{algorithm}

\begin{proof}[Proof of \cref{lemma:capscaling}]

Let $\bf^*$ be the optimal solution.
For any $t$, $\bf^{(t)}$ is integral since the augmenting circulation $\bDelta_{\bf}$ is always integral (\cref{rem:capRoundedG}).
Therefore, the optimal solution $\bf^* - \bf^{(t)}$ to the residual instance w.r.t. $\bf^{(t)}$ is integral and have cost at most $-1$ or $0.$

\cref{lemma:capRoundedGConstApprox} states that the augmenting circulation $\bDelta_{\bf}$ is always a $2$-approximation to the residual instance.
Thus, at any iteration $t$, we have
\begin{align*}
    \bc^\top (\bf^* - \bf^{(t)}) \le \bc^\top \bDelta_{\bf} \le \frac{1}{2}\bc^\top (\bf^* - \bf^{(t)}) \le 0.
\end{align*}
Using the definition of $\bf^{(t+1)}$ and induction yield
\begin{align*}
    0 \ge \bc^\top (\bf^* - \bf^{(t + 1)}) = \bc^\top (\bf^* - \bf^{(t)}) - \bc^\top \bDelta_{\bf} \ge \frac{1}{2}\bc^\top (\bf^* - \bf^{(t)}) \ge \frac{1}{2^{t+1}} \bc^\top \bf^*.
\end{align*}
Since the costs and capacities are bounded by $m^{10}$ and $U$, $\bc^{\top}\bf^*$ is at least $-m^{11} U.$
After $T = O(\log m + \log U) = O(\log mU)$ iterations, we have
\begin{align*}
    0 \ge \bc^\top (\bf^* - \bf^{(T)}) \ge \frac{1}{2^{T}} \bc^\top \bf^* \ge \frac{-m^{11}U}{2^T} \ge \frac{-1}{2}.
\end{align*}
Combining with the previous observation that $\bc^\top (\bf^* - \bf^{(T)})$ is either $0$ or at most $-1$, we have that $\bc^\top (\bf^* - \bf^{(T)}) = 0$ and hence $\bf^{(T)}$ is an optimal solution.

Each iteration spends $O(T_{\cA}(m)\log m)$-time for computing an $m^{12}$-approximate solution, plus $O(T_{\cA}(m))$-time for computing $\bDelta_{\bf}$ on a capacity rounded instance, and $O(m)$ for constructing instances and computing $\bf^{(t+1)}$.
Overall, the runtime is $O(T_{\cA}(m)\log m \log mU + m \log m \log mU).$
\end{proof}

Now, we can prove \cref{lemma:polyScaling} by combining \cref{lemma:costScaling} and \cref{lemma:capscaling}.
\begin{proof}[Proof of \cref{lemma:polyScaling}]
As any instance of \eqref{eq:mincostopt} can be reduced to \eqref{eq:MinCostCirculation} with linear overhead in the number of edges and $\poly(m)$ scaling on costs and capacities, the lemma follows directly from \cref{lemma:costScaling} and \cref{lemma:capscaling}.
\end{proof}

\section{Applications}
\label{sec:related}\label{sec:applications}

Our results directly imply faster running times for algorithms that invoke network flow primitives.

\paragraph{Extensions of \Cref{thm:main}.} Our main result can be generalized to take vertex capacities and costs by standard transformations (for lower capacities on vertices being zero, one can simple split each vertex $v$ into $v_{in}$ and $v_{out}$ such that all in-going edges to $v$ are incident to $v_{in}$ and all out-going edges to $v_{out}$ after the split, and then insert and edge $(v_{in}, v_{out})$ with the desired capacity and cost). Further, we can generalize our algorithm to handle the \emph{flow diffusion} problems \cite{WFHMR17, FWY20, CPW21:arxiv} where $\bd \in \mathbb{R}^V, \bd^{\top} \mathbf{1} \geq \mathbf{0}$, is considered a vertex capacity vector instead of a demand and one wants to find a flow $\bf$ that satisfies $\mB^{\top} \bf \leq \bd$ while minimizing over a cost function on $\bf$. This can be realized by adding special vertices $s,t$ and an edge $(s,v)$ (resp. $(v,t)$) for each vertex $v \in V$ where $\bb_v \leq 0$ (resp. $\bb_v > 0$), with lower capacity $0$, upper capacity $|\bb_v|$ and cost $0$.

Previously, considerable effort 
\cite{CK19,C21, BGS21} was directed towards obtaining approximate max-flow algorithms that can handle vertex-capacities in undirected graphs where the above mentioned transformations do not translate. Diffusion has been considered for the cost function taken to be the $\ell_2$-norm ~\cite{HRW20,CPW21:arxiv}. We recover using simple reductions a simple almost linear time algorithm that can handle a wide range of cost function.

We can also obtain an algorithm that runs in near-linear time to compute $p$-norm flows, i.e. flow problems where one is given a weight matrix $\mW$ and solves the problem $\min_{\mB^{\top} \bf = \bd } \|\mW \bf \|_p^p$ up to a polynomially small error. An even more general problem is considered in \Cref{thm:pnorm}. Previous work, either achieved super-linear run-time \cite{adil2019iterative,adil2021almost} or was only able to solve the problem when $\mW$ was taken to be the identity matrix \cite{KPSW19, AS20}.

\paragraph{Bipartite Matching \& Optimal Transport.} Many popular variations of matching problems are well-known to be reducible to min-cost flow in bipartite graphs, i.e. graphs $G = ( V, E)$ where there is a partition $V_1, V_2$ of $V$ such that each edge has exactly one endpoint in $V_1$ and one in $V_2$.

In the standard matching problem, one is given the task of maximizing the number of edges without common vertex in an undirected graph. In the \emph{perfect} matching problem, the algorithm has to output a matching of size $|V|/2$ or conclude that such a matching does not exist. A substantial generalization of perfect matching problems is the \emph{worker assignment} problem: given upper capacities $\bu^+ \in \mathbb{R}^E_{\geq 0}$ and costs $\bc \in \mathbb{R}^E$ over the edges and has $\bb \in \mathbb{N}_{\geq 0}^V$, the goal is to either compute a weight $\bw \in \mathbb{N}^{E}$ such that each vertex $v \in V$ has edges of total weight $\bb_v$ incident and where $\bc^{\top} \bw$ is minimized over all such choices, or decide that no such weight $\bw$ exists. Our result implies that the the worker assignment problem can be solved in time $m^{1+o(1)} \log^2 U$ in bipartite graphs. We refer the reader to \cite{GT89} for an in-depth description of the reduction to min-cost flow.

Our result can further also be used to solve the optimal transportation problem, even with entropic regularization (see \cite{dvurechensky2018computational, guo2020fast}), which is crucial for applications in machine learning. In this problem, one is given a bipartite graph $G = (V_1 \cup V_2, E)$, demand $\bd$, where $\bd$ is non-negative on $V_1$ and non-positive on $V_2$, costs $\bc$, and the goal is to find a flow $\bf$ that satisfies $\mB^{\top}\bf = \bd$ and minimizes $\bc^{\top} \bf + H(\bf)$ where $H(\bf) = \sum_{e \in E} \bf_e \log(\bf_e)$. We can use our result in \cref{thm:entropyot} to obtain the first almost-linear time algorithm to obtain an optimal flow $\bf$ to high accuracy (also called transportation plan). This improves even over the run-time of $\tilde{O}(n^2)$ taken by current state-of-the-art low accuracy solvers \cite{cuturi2013sinkhorn,benamou2015iterative, altschuler2017near, dvurechensky2018computational}. Without the entropic regularization the problem is reducible directly to the worker assignment problem.

The matrix scaling problem \cite{ALOW17,CMTV17} asks: given a matrix $\mA \in \R^{n\times n}_{\ge0}$ with non-negative polynomially bounded entries, to compute positive diagonal matrices $\mX, \mY$ such that all row and column sums of $\mX\mA\mY$ are $1$. As shown in \cref{sec:applicationsgeneral}, the dual of the matrix scaling problem is optimal transport with entropic regularization. Hence we achieve an algorithm for solving matrix scaling to high accuracy in almost-linear time even when the entries of the matrices $\mX, \mY$ may be exponentially large.
    
\paragraph{Negative Shortest-Paths and Cycle Detection.} We obtain a almost linear time algorithm to compute the Single-Source Shortest Paths from a dedicated source vertex $s$ in a directed, possibly negatively weighted graph by invoking \Cref{cor:main} with costs set to edge weights, $\bu^- = \mathbf{0}, \bu^+ = n \cdot \mathbf{1}$ and $\bd_s = n$ and $\bd_v = -1$ for all $v \in V$. For a graph with weights bounded by $W$ in absolute value, this gives an algorithm with running time $m^{1+o(1)}\log W.$ Further, we can find a negative directed cycle in a graph by choosing $\bu^- = \bd = \mathbf{0}, \bu^+ = \mathbf{1}$, letting the cost vector equal the weights and check whether the computed flow $\bf$ is non-zero. If it is not then $\bf$ is a negative cost circulation and using Cut-Link Trees \cite{ST83} on can recover a negative cycle. For both problems, we give the first almost linear time algorithm. 

\paragraph{Connectivity \& Gomory-Hu Trees.} Another family of classic combinatorial problems are connectivity problems where many reductions to maximum flow have been found during the last years. It is well-known that from a $(s,t)$ maximum flow, i.e. the maximum amount of flow that can be sent in a unit-weighted graph from a vertex $s$ to vertex $t$, one can find an $(s,t)$ min-cut in almost linear time, that is a bipartition $V_1, V_2$ of the vertex set $V$ of the graph with $s \in V_1, t \in V_2$ such that the number of edges with tail in $V_1$ and head in $V_2$ is minimized. 

Our algorithm implies an algorithm that finds the \emph{global} min-cut obtained by miminizing over $(s,t)$ cuts for all pairs $s,t \in V$, in time $mn^{1/2+o(1)}$ time in directed graphs~\cite{CLNPQS21:arxiv}.
For undirected graphs, using a reduction from \cite{li2020deterministic}, we obtain the first almost linear algorithm to compute a Steiner min-cut which is the minimum $(s,t)$-cut for $s,t \in S$ for a fixed input set $S \subseteq V$. 

Our result also implies the first $m^{1+o(1)}$ time algorithm to compute a  global vertex min-cut in undirected graphs via \cite{LNPSY21}, i.e. a tripartition $A,B,S$ of $V$ such that there is no edge from $A$ to $B$ where the size of $S$ is minimized. It further gives $m^{1+o(1)} \poly(k)$ time algorithm to construct a $k$-vertex connectivity oracle (see \cite{pettie2022optimal}).

Finally, we consider algorithms to compute Gomory-Hu trees that is a weighted tree $T$ over the vertex set of $G$ such that for any two vertices $s,t \in V$, the $(s,t)$ min-cut in $G$ has the same value as in $T$. Our result gives the first $m^{1+o(1)}$ time algorithm to compute Gomory-Hu trees in unweighted graphs (via \cite{AKLPST21:arxiv,Z21:arxiv}), or to a $(1+\epsilon)$-approximation in weighted graphs (via \cite{LP21}) for arbitrarily small constant $\epsilon$.

We point out that we improve for all cited problems the run-time by polynomial factors (in $m$).

\paragraph{Directed Expanders.} We say a cut $(S, V \setminus S)$ in a digraph $G$ is $\phi$-out-sparse if $\frac{|E(S, V \setminus S)|}{\min\{\vol(S), \vol(V \setminus S)\}} < \phi$ where $E(S, V \setminus S)$ is the set of edges with tail in $S$ and head in $V \setminus S$ and $\vol(X)$ is the sum of degrees of vertices in $X$ in $G$. A graph $G$ is called a $\phi$-expander if $G$ allows no $\phi$-out-sparse cut.

Applying our max-flow algorithm to a straightforward extension of the cut-matching game \cite{khandekar2009graph, louis2010cut} gives a $m^{1+o(1)}$ time algorithm that given any graph $G$ and parameter $\phi \in (0,1/O(\log^2m)]$, either outputs a $O(\phi \log^2m)$-out-sparse cut or certifies that $G$ is a $\phi$-expander. The algorithm also works when a $\phi$-out-sparse cut is redefined to be a cut $(S,V \setminus S)$ with $\frac{|E(S, V \setminus S)|}{\min\{|S|, |V \setminus S|\}} < \phi$. This improves over the previously best run-time of $\tilde{O}(m/\phi)$ for sparse graphs for a wide range of values for $\phi$.

As a concrete application, we obtain a $mn^{0.5+o(1)}$ total time algorithm for the problem of maintaining strongly-connected graphs in a graph undergoing edge deletions that works against an adaptive adversary (via \cite{bernstein2020deterministic}), improving on the previously best time of $mn^{2/3+o(1)}$.

\paragraph{Isotonic Regression.}
Isotonic regression is a classic shape-constrained nonparametric regression method.
The problem is formulated as follows: we are given a DAG (Directed Acyclic Graph) $G=(V,E)$ and a vector $\by \in \R^V.$ The goal is to find project $\by$ on to the space of vectors that are \emph{isotonic} with respect to $G.$
A vector $\bx \in \R^V$ is said to be isotonic with respect to $G$ if the embedding of $V$ into $\R$ given by $\bx$ is weakly order-preserving with respect to the partial order described by $G.$ The projection is usually computed using a weighted $\ell_p$ norm.
This can be captured as the following convex program, $\min_{\bx} \norm{\mW(\bx-\by)}_p$ subject to the constraints $\bx_i \le \bx_j$ for all $(i,j) \in E.$

We give an almost linear time algorithm for computing a $1/\poly(n)$ additive approximate solution to Isotonic regression for all $p \in [1, \infty).$
 The previous best time bounds were $\O(m^{1.5})$ for $p \in [1,\infty)$~\cite{KRS15}, $O(nm\log \frac{n^2}{m})$ for
$p \in (1,\infty)$~\cite{HQ03},  and $O(nm + n^{2} \log n)$ for
$p=1$~\cite{Stout13}. We stress that this running time is almost-linear in the number of edges in the underlying DAG, which could be significantly smaller than the number of edges in the transitive closure, which determines the running time of some algorithms~\cite{Stout21}.

\end{document}